\newcommand{\hidesuppinmaintoc}{\setcounter{tocdepth}{0}}
\newcommand{\sectionnotoc}[1]{{\let\addcontentsline\@gobblethree\section*{#1}}}
\numberwithin{equation}{section}
\declaretheorem[name=Theorem,numberwithin=section, Refname={Thm.}, refname={Thm.}]{theorem}
\declaretheorem[name=Corollary,sibling=theorem, Refname={Cor.}, refname={Cor.}]{corollary}
\declaretheorem[name=Lemma,sibling=theorem, Refname={Lem.}, refname={Lem.}]{lemma}
\declaretheorem[name=Proposition,sibling=theorem, Refname={Prop.}, refname={Prop.}]{proposition}
\declaretheorem[name=Definition,sibling=theorem, Refname={Def.}, refname={Def.}]{definition}
\declaretheorem[name=Example,sibling=theorem, Refname={Ex.}, refname={Ex.}]{example}
\declaretheorem[name=Remark,sibling=theorem, Refname={Rem.}, refname={Rem.}]{remark}
\newcommand{\HS}{\mathcal{H}}
\newcommand{\HSn}{\mathcal{H}^{\otimes n}}
\newcommand{\BS}{\mathcal{B}}
\newcommand{\Al}{\mathcal{A}}
\newcommand{\Id}{\mathcal{I}}
\newcommand{\D}{\mathcal{D}}
\newcommand{\An}{\lrbracket{A_2Q_2\hat{T}}_1^n}
\newcommand{\ESNH}{\text{End}_{\CSn}\lrbracket{\HS}}
\newcommand{\ESN}{\text{End}_{\CSn}}
\newcommand{\R}{\mathbb{R}}
\newcommand{\CC}{\mathbb{C}}
\newcommand{\dHS}{d_{\HS}}
\newcommand{\SymH}{\vee^n\lrbracket{\HS}}
\newcommand{\Op}[1]{\mathcal{B}\lrbracket{#1}}
\newcommand{\Hom}[3]{\mathrm{Hom}_{#1}\lrbracket{#2, #3}}
\newcommand{\End}[2]{\mathrm{End}_{#1}\lrbracket{#2}}
\newcommand{\Specht}{V_{\lambda}}
\newcommand{\SpechtH}{\HS_{\lambda}}
\newcommand{\CSn}{\CC\lrrec{S_n}}
\newcommand{\Vn}{V^{\otimes n}}
\newcommand{\GLV}{\text{GL}\lrbracket{V}}
\newcommand{\GLH}{\text{GL}\lrbracket{\HS}}
\newcommand{\KG}{\mathbb{K}\lrrec{G}}
\newcommand{\KK}{\mathbb{K}}
\newcommand{\Irr}[1]{\text{Irr}\lrbracket{#1}}
\newcommand{\CG}{\mathbb{C}\lrrec{G}}
\newcommand{\EndBose}[1]{\End{\CC\lrrec{S_{#1}}}{A\otimes\lor^{#1}\lrbracket{B\bar{B}}}}
\newcommand{\CBose}[3]{C_{\vec{#1},\vec{#2}}^{\vee^{#3}}}
\newcommand{\KBose}[3]{K_{\vec{#1},\vec{#2}}^{\vee^{#3}}}
\newcommand{\norm}[1]{\left\lVert#1\right\rVert}
\newcommand{\lrbrace}[1]{\left\lbrace{#1}\right\rbrace}
\newcommand{\lrbracket}[1]{\left({#1}\right)}
\newcommand{\lrvert}[1]{\left\lvert{#1}\right\rvert}
\newcommand{\lrceil}[1]{\left\lceil{#1}\right\rceil}
\newcommand{\lrrec}[1]{\left[{#1}\right]}
\newcommand{\block}[1]{\left[\!\left[{#1}\right]\!\right]}
\newcommand{\rhoge}{\rho_{(A_1Q_1T)(A_2Q_2\hat{T})_1^n}}
\newcommand{\rhog}{\rho_{(A_1Q_1T)(A_2Q_2\hat{T})}}
\newcommand{\Bose}{\rho_{A\left(B\Bar{B}\right)_1^n}}
\newcommand{\BoseB}{\rho_{\left(B\Bar{B}\right)_1^n}}
\newcommand{\Sdpbose}{\mathrm{SDP}_n^{\mathrm{Bose}}}
\newcommand{\Trr}[2]{\Tr_{#1}\left[{#2}\right]}
\newcommand{\Tr}{\mathrm{tr}}
\newcommand{\Schurf}[1]{\mathbb{S}_{#1}}
\title{\large Approximating fixed size quantum correlations in polynomial time}
\author{Julius A. Zeiss$^{1,*}$}
\address{$^1$ Institute for Quantum Information, RWTH Aachen University, Aachen, Germany}
\author{Gereon Kossmann$^{1}$}
\author{Omar Fawzi$^{2}$}
\address{$^2$ Univ Lyon, Inria, ENS Lyon, UCBL, LIP, Lyon, France}
\author{Mario Berta$^{1,3}$}
\address{$^3$ Department of Computing, Imperial College London, London, UK}
\address{$^*$ Corresponding author: jzeiss@physik.rwth-aachen.de}
\begin{document}

\begin{abstract}
    We show that $\varepsilon$-additive approximations of the optimal value of fixed-size two-player free games with fixed-dimensional entanglement assistance can be computed in time $\mathrm{poly}(1/\varepsilon)$. This stands in contrast to previous analytic approaches, which focused on scaling with the number of questions and answers, but yielded only strict $\mathrm{exp}(1/\varepsilon)$ guarantees. Our main result is based on novel Bose-symmetric quantum de Finetti theorems tailored for constrained quantum separability problems. These results give rise to semidefinite programming (SDP) outer hierarchies for approximating the entangled value of such games. By employing representation-theoretic symmetry reduction techniques, we demonstrate that these SDPs can be formulated and solved with computational complexity $\mathrm{poly}(1/\varepsilon)$, thereby enabling efficient $\varepsilon$-additive approximations. In addition, we introduce a measurement-based rounding scheme that translates the resulting outer bounds into certifiably good inner sequences of entangled strategies. These strategies can, for instance, serve as warm starts for see-saw optimization methods. We believe that our techniques are of independent interest for broader classes of constrained separability problems in quantum information theory.
\end{abstract}

\maketitle

\tableofcontents


\section{Introduction}

\subsection{Motivation}

The EPR paradox \cite{einstein1935n} and Bell's resolution of it \cite{bell64} initiated the study of non-local correlations and, more broadly, non-local games. In this work, we focus exclusively on the standard setting of two-player free non-local games. In these games $G$, two spatially separated cooperative players receive questions $q_1\in Q_1$, $q_2\in Q_2$, respectively, chosen independently by a referee according to the probability distributions $\pi_1(q_1),\, \pi_2(q_2)$. We assume that $\lrvert{Q_1}=\lrvert{Q_2}$. The two players are able to agree on an answering strategy beforehand, but once the referee starts to hand out the questions, no further adaptation of said strategy is allowed. As opposed to the classical setting, where a strategy consists of a set of conditional answers, in the quantum counterpart, conditional measurements are performed on a shared entangled quantum state. To conclude the game, the referee judges the collection of answers $a_1\in A_1,\, a_2\in A_2$ with $\lrvert{A_1}=\lrvert{A_2}$ according to a game-specific rule function $V$. The main quantities of interest are the maximal winning probability $0 \leq w_{Q}(V, \pi) \leq 1$ and a strategy attaining this optimum. In this context, a strategy corresponds to the collection of the parties' respective sets of complete POVMs and their shared quantum state. Importantly, quantum resources allow for higher winning probabilities in certain games, see, e.g.\ the CHSH game \cite{cirel1980quantum}.

Our interest in these games is twofold. From a physics perspective, as in Bell's work, the potential gap in the maximal winning probability obtained from games with shared quantum rather than classical resources helps in delineating the boundary between the convex set of quantum correlations and the polytope of classical correlations \cite{Navascu_s_2007, navascues2008convergent}. From a complexity-theoretic viewpoint, even in the classical variant, approximately determining $w_{C}(V, \pi)$ within a constant multiplicative factor is NP-hard (see the PCP theorem \cite{arora1998proof, arora1998probabilistic}). In addition, computing $w_Q(V, \pi)$ is, in general, even harder \cite{cleve2004consequences, ito2009oracularization, kempe2011entangled, brandao_harrow_2014}. More precisely, in the general setting the corresponding approximation problem is undecidable \cite{ji2021mip} (equivalently, RE-complete in a suitable promise-gap formulation). However, for specific games \cite{cleve2004consequences, kempe2011entangled}, polynomial-time approximations are possible. We follow \cite{navascues2014characterization, navascues2015characterizing, navascues2015bounding, Donohue_2015, jee2020quasi} and consider only quantum resources of fixed local dimension $\lrvert{T}$ to be shared among the players and ask how well, and at what computational cost, we can approximate $w_{Q(T)}(V, \pi)$ by classical algorithms.

We will next give a brief overview of our results. The precise definitions of the quantities involved are given in the subsequent sections.


\subsection{Contributions}

Leveraging a synthesis of information-theoretic techniques and structural tools from group theory, we resolve several outstanding problems related to the approximation of quantum correlations, as articulated in \cite{jee2020quasi} and \cite{berta2021semidefinite}. 

Our primary contribution concerns free non-local games with $\lrvert{A}$ answers, $\lrvert{Q}$ questions, and quantum assistance of fixed local dimension $\lrvert{T}$. By exploiting symmetry arguments, we construct two efficient algorithmic variants that approximate the value of such games up to additive error $\epsilon$, via the solution of a semidefinite program (SDP) of size \begin{align}\label{eq:main-complexity} \left[\epsilon^{-4} \cdot \mathrm{poly}\left(|A|, |Q|, |T|\right)\right]^{\left(|A||Q||T|\right)^2}.\, \end{align}
In particular, for constant-sized games, this yields a $\text{poly}\lrbracket{\epsilon^{-1}}$ scaling. The result is a consequence of several key ideas, which we detail below:
    
	\begin{enumerate}[label=(\Alph*)]
	\item \label{first_result} We mitigate the dependence on the quantum dimension $\lrvert{T}$ by leveraging methods from quantum steering (see also \cite{Quintino2014, mateus_araujo, tavakoli2024semidefinite}), thereby reformulating the problem of determining the maximum winning probability of a two-player free non-local game with fixed quantum assistance as an optimization over a constrained bipartite separability set (see \cite{berta2021semidefinite}). A permutation-symmetry-based SDP hierarchy then provides converging upper bounds on the value of the game. The rate of convergence is certified by a new constrained de Finetti representation theorem, which extends \cite[Theorem 2.3]{berta2021semidefinite} to accommodate a broader class of constraints than those treated in \cite{berta2021semidefinite, jee2020quasi}, thereby addressing the constrained structure of the underlying separability problem. Namely, we show that for  $A=A_L\otimes A_R$, $B=B_L\otimes B_R$, and a quantum state $\rho_{AB_1^n}$ symmetric over $B_1^n:=B_1\otimes \ldots \otimes B_n$ w.r.t.\ $A$ satisfying
    \begin{align}
        \begin{array}{cc}
            \Theta_{A_L\rightarrow C_{A_L}}\lrbracket{\rho_{AB_1^n}} = W_{C_{A_L}}\otimes\rho_{A_RB_1^n},\, &  \Omega_{A\rightarrow A}\lrbracket{\rho_{AB_1^n}}=\rho_{AB_1^n},\,\\
            \Upsilon_{\lrbracket{B_L}_n\rightarrow C_{B_L}}\lrbracket{\rho_{B_1^n}} = K_{C_{B_L}}\otimes \rho_{B_1^{n-1}\lrbracket{B_R}_n},\, & \Xi_{B_n\rightarrow B_n}\lrbracket{\rho_{B_1^n}}=\rho_{B_1^n},\,\\
        \end{array}
    \end{align}
    for linear maps $\Theta_{A_L\rightarrow C_{A_L}}$, $\Omega_{A\rightarrow A}$, $ \Upsilon_{\lrbracket{B_L}_n\rightarrow C_{B_L}}$, $\Xi_{B_n\rightarrow B_n}$ and constant operators $W_{C_{A_L}}$, $K_{C_{B_L}}$, we have
    \begin{align}\label{eq:intro-definettiI_original}
        \left\lVert \rho_{AB} - \sum_{x\in\mathcal{X}}p(x)\rho_A^x\otimes\rho_{B}^x \right\rVert_1\leq\mathcal{O}\left(\frac{|B|}{\sqrt{n}}\cdot\sqrt{\ln{|A|}}\right),\,
    \end{align}
    where $\forall x\in\mathcal{X}:$
    \begin{align}
        \begin{split}
            \begin{array}{cc}
                  \Theta_{A_L\rightarrow C_{A_L}}\lrbracket{\rho_A^x}=W_{C_{A_L}}\otimes \rho_{A_R}^x,\, &  \Omega_{A\rightarrow A}\lrbracket{\rho_A^x}=\rho_A^x,\\
                \Upsilon_{B_L\rightarrow C_{B_L}}\lrbracket{\rho_B^x}=K_{C_{B_L}}\otimes \rho_{B_R}^x,\, &  \Xi_{B\rightarrow B}\lrbracket{\rho_B^x}=\rho_B^x\,.
            \end{array}
        \end{split}
    \end{align}
    Compared to previous work (see \autoref{sec:previous}), this already improves the error dependence on the dimension of the shared quantum assistance.\\
    
	\item \label{intro_bose_de_finetti} Now, for unconstrained optimizations over separable states, so-called Bose-symmetric de Finetti representation theorems based on symmetric subspace methods are commonly used in symmetry-based approaches. While symmetric states may be supported on the entire Hilbert space $\HSn$, Bose-symmetric states (cf.\ \cite{harrow2013church} or Dicke states in e.g.\ \cite{gulati2026entanglementdickesubspace}) are characterized by having both their support and range confined strictly to the symmetric subspace of $\HSn$, which, in turn, is of polynomial dimension w.r.t.\ $n$.
While it has previously been unclear how to incorporate linear constraints for Bose-symmetric considerations, we achieve exactly that and derive novel Bose-symmetric de Finetti representation theorems subject to a variety of linear constraints. Namely, we show that for quantum states $\rho_{A(B\bar{B})_1^n}$ Bose-symmetric on $\lrbracket{B\bar{B}}_1^n$ w.r.t.\ $A$ and
    \begin{align}
    \begin{split}
     \begin{array}{cc}
            \Theta_{A_L\rightarrow C_{A_L}}\lrbracket{\rho_{A\lrbracket{B\Bar{B}}_1^n}} = W_{C_{A_L}}\otimes\rho_{A_R\lrbracket{B\Bar{B}}_1^n},\, &
            \Omega_{A\rightarrow A}\lrbracket{\rho_{A\lrbracket{B\Bar{B}}_1^n}}=\rho_{A\lrbracket{B\Bar{B}}_1^n},\,\\
            \Upsilon_{\lrbracket{B_L}_n\rightarrow C_{B_L}}\circ \Tr_{\Bar{B}_n}\lrbracket{\rho_{\lrbracket{B\Bar{B}}_1^n}} = K_{C_{B_L}}\otimes \rho_{\lrbracket{B\Bar{B}}_1^{n-1}\lrbracket{B_R}_n},\, & \Xi_{B_n\rightarrow B_n}\circ \Tr_{\Bar{B}_n}\lrbracket{\rho_{\lrbracket{B\Bar{B}}_1^n}}=\rho_{\lrbracket{B\Bar{B}}_1^{n-1}B_n},\,\\
        \end{array}
    \end{split}
    \end{align}
    we have
     \begin{align}\label{eq:intro-definettiI}
        \left\lVert \rho_{AB\Bar{B}} - \sum_{x\in\mathcal{X}}p(x)\rho_A^x\otimes\rho_{B\Bar{B}}^x \right\rVert_1\leq\mathcal{O}\left(\frac{|B\bar{B}|}{\sqrt{n}}\cdot\sqrt{\ln{|A|}}\right),\,
    \end{align}
    where $\forall x\in\mathcal{X}:$
    \begin{align}\label{eq:intro-definettiII}
        \begin{split}
            \begin{array}{cc}
                  \Theta_{A_L\rightarrow C_{A_L}}\lrbracket{\rho_A^x}=W_{C_{A_L}}\otimes \rho_{A_R}^x,\, &  \Omega_{A\rightarrow A}\lrbracket{\rho_A^x}=\rho_A^x,\\
                \Upsilon_{B_L\rightarrow C_{B_L}}\circ\Trr{\bar{B}}{\rho_{B\bar{B}}^x}=K_{C_{B_L}}\otimes \rho_{B_R}^x,\, &  \Xi_{B\rightarrow B}\circ\Trr{\bar{B}}{\rho_{B\bar{B}}^x}=\rho_{B}^x\,.
            \end{array}
        \end{split}
    \end{align}
   Together with a tailor-made purification scheme based on the pretty-good purification, the constrained Bose-symmetric de Finetti theorem enables the formulation of a novel Bose-symmetric SDP hierarchy that yields converging outer approximations for any constrained separability problem. We refer to \autoref{sec:bose-symmetric} for precise definitions and full details.\\

	\item \label{intro_bose_algo} Achieving an overall polynomial runtime requires not only that the SDP be of polynomial size, but also that it can be constructed in polynomial time. We establish this by leveraging Bose-symmetric methods to construct the SDP directly in a symmetry-adapted basis. Based on these methods, the resulting algorithm for approximating the value of the game up to an additive $\epsilon$-error relies on an isomorphism to a full matrix algebra over $\CC$ of dimension $\text{poly}\lrbracket{\epsilon^{-1}}$. Crucially, we show that this isomorphism  \,---\, and the corresponding SDP \,---\, can be computed in time polynomial in $\epsilon^{-1}$, thereby yielding the complexity bound stated in \autoref{eq:main-complexity}.\\
    
    \item \label{intro_sdp_sym} Alternatively, starting from the permutation-symmetry-based methods and using the conceptually distinct mathematical framework for standard symmetry reduction from e.g.\ \sloppy \cite{chee2023efficient, litjens2017semidefinite, fawzi2022hierarchy, gijswijt2009block}, we develop a similar polynomial-time algorithm based on an isomorphism into blocks of full matrix algebras over $\CC$ of size $\text{poly}\lrbracket{\epsilon^{-1}}$. In regimes where the underlying system dimensions are large, this algorithm, although structurally more involved than its Bose-symmetric counterpart, is computationally more efficient in practice. We refer to \autoref{sec:previous} for earlier work in this direction.\\
    
	\item Finally, we leverage the outer approximations from the aforementioned SDP hierarchies to develop an effective rounding procedure with rigorous convergence guarantees (cf.\ our concurrent work \cite{kossmann2025symmetric}). This produces feasible strategies in terms of shared entangled states with corresponding conditional measurements, which provide lower bounds on the game's value.
	\end{enumerate}
	
	Although we developed and applied techniques to address a game-theoretic instance of optimization over the constrained separability set, our methods hold promise for future applications to other problems within this class. For example, in the context of approximate quantum error correction \cite{berta2021semidefinite}, a symmetry reduction method based on the constrained Bose-symmetry could be explored as an alternative to the standard SDP symmetry reduction approach in \cite{chee2023efficient} (cf.\ our concurrent work \cite{kossmann2025symmetric}). 


\subsection{Methods}

Our approach integrates concepts from quantum information theory and representation theory to derive the key results, utilizing ideas from quantum steering, symmetric subspace methods, Schur-Weyl duality, Weyl's tensorial construction, matrix block decomposition, and ring theory. Below, we outline the mathematical framework and procedural steps employed in our analysis.
	
We reformulate non-local games as constrained bipartite separability problems and, using an adapted Doherty--Parrilo--Spedalieri (DPS) hierarchy \cite{Doherty_2004, Doherty_2005}, construct a hierarchy of outer approximations with convergence rates quantified by an information-theoretic de Finetti representation theorem adapted to our setting. A tailored SDP rounding procedure, supported by convergence guarantees derived from a reverse application of the aforementioned de Finetti theorem, is employed to obtain suitable lower bounds and feasible strategies. To address the computational complexity of determining the value of fixed-size two-player free non-local games with bounded quantum assistance, we leverage the inherent symmetry of states in DPS hierarchies. The algorithmic procedure mentioned in \ref{intro_bose_algo} extends well-known symmetric subspace methods to accommodate additional linear constraints. While existing techniques rely on representing unconstrained separable states as a convex mixture of pure product states, one cannot in general choose such a decomposition of constrained separable states so that each term individually respects the additional linear constraints. We address these challenges using a local purification scheme in conjunction with a constrained Bose-symmetric de Finetti representation theorem as sketched in \autoref{eq:intro-definettiI} and \autoref{eq:intro-definettiII}. The resulting polynomial-size SDP representation is then obtained from a weight-space decomposition of the symmetric subspace. The procedure mentioned in \ref{intro_sdp_sym} implements an SDP symmetry reduction in the spirit of e.g.\ \sloppy \cite{chee2023efficient, litjens2017semidefinite, fawzi2022hierarchy, gijswijt2009block, vallentin2009symmetry} by decomposing the support space under the action of the exchange symmetry group.

The remainder of this manuscript is structured as follows. We first discuss previous related work in \autoref{sec:previous}. Then, we present in \autoref{sec:results} an overview of our results in the context of the quantum information literature and techniques used to derive them. The section concludes with an outlook on potential future applications and open problems. Thereafter, we present the corresponding proofs in detail. This includes the following steps: In \autoref{sec:non_lacal_games_as_cbos}, we reformulate fixed-size non-local games as a bipartite constrained separability problem. Then, in \autoref{sec:inner_sequence}, we construct a sequence of inner approximations to these games. \autoref{sec:bose-symmetric} introduces a constrained Bose-symmetric hierarchy, along with a de Finetti theorem, which serves to approximate the values of the games from above. The efficient computation of this hierarchy is demonstrated in \autoref{sec:symmetric_subspace_methods}, leading to our main contribution: a polynomial-time algorithm for approximating the value of fixed-size free two-player non-local games. Finally, we conclude with an SDP symmetry reduction in \autoref{sec:sdp_symmetry_reduction}, providing an alternative polynomial-time algorithm. Further material, including supplementary derivations and extended analyses, is provided in \autoref{sec:sdp} -- \autoref{sec:examples}.


\subsection{Previous work}
\label{sec:previous}

Most relevant for the present work, \cite{jee2020quasi} provide insights into how well $w_{Q(T)}(G)$ can be approximated by formulating an equivalent constrained tripartite separability problem. The authors formulate a corresponding SDP relaxation\,---\,inspired by but also crucially different from the plain DPS-hierarchy (cf.\ the discussion in our concurrent work \cite{kossmann2025symmetric}) \cite{Doherty_2005, Doherty_2004, navascues2009power, navascues2009complete}\,---\,in terms of strategies with a certain exchange symmetry rather than separability, obtaining a hierarchy of improving approximations converging to the true optimal value. The proof of convergence employs a de Finetti representation theorem based on information-theoretic arguments (see e.g.\ \sloppy \cite{caves2002unknown, konig2005finetti,christandl2007one, koenig2009most, renner2009finetti, chiribella2010quantum, brandao_harrow_2014, berta2021semidefinite, harrow2013church, ligthart2023convergent, gross2021schur} for variants of de Finetti theorems in quantum physics), bounding the trace distance between these symmetric approximate strategies subject to linear constraints and genuinely feasible strategies. \cite{jee2020quasi} provide an SDP of size 
\begin{align}
	\exp\lrrec{\epsilon^{-2} \cdot\mathcal{O}\lrbracket{\text{poly}\lrvert{T}\cdot\log\lrvert{A}\lrbracket{\log\lrvert{A}+\log\lrvert{Q}}}}
\end{align}
\,---\,quasi-polynomial in $A$, polynomial in $Q$, but exponential in $\epsilon^{-1}$\,---\,to estimate $w_{Q(T)}(G)$ up to an additive $\epsilon$-error. Their result differs from those of \cite{navascues2015characterizing, navascues2015bounding}, which focused more on practical implementations than on rigorous analytical guarantees. Nevertheless, the $\epsilon$-dependence introduced by symmetry constraints poses significant challenges for practical applications where one is typically not interested in the scaling in $A$ and $Q$, but rather has fixed-size $A$ and $Q$ (e.g.\, equal to two) and would then like to approximate the optimal game value up to some prespecified additive approximation error $\epsilon>0$.

This motivates us to focus our study on these symmetries in order to effectively reduce the computational cost. Similarly, \cite{Doherty_2004, litjens2017semidefinite, ioannou2021noncommutative, Klerk2007ReductionOS, rosset2018symdpoly, gijswijt2009block, vallentin2009symmetry} exploited the inherent symmetry in various optimization problems to formulate classical algorithms providing solutions in polynomial time. However, the symmetric objects in these works need not adhere to any linear constraints. This is a subtlety that renders any direct adaptation to our scenario ineffective. Recently, \cite{chee2023efficient} provided a novel symmetry-reduced hierarchy approximating the quantum channel fidelity in a coding task. To our knowledge, this constitutes the first time that such a symmetry reduction in the presence of additional linear constraints was carried out. Lastly, although Bose-symmetry-based reductions (e.g.\ \cite{christandl2007one, navascues2009power}) are available, they are not applicable to the constrained case.  

\section{Overview of results}\label{sec:results}

\subsection{Two-player non-local games}\label{sec:results_games_as_cbo}

This work focuses on the standard setting of two-player (provers) one-round free non-local games. In these games $G$, two spatially separated cooperative players, Alice and Bob, play against a third party, the referee (verifier). The referee chooses questions $q_1,\,q_2$ with probabilities $\pi_1(q_1),\,\pi_2(q_2)$ from given finite alphabets $Q_1, \, Q_2$ and distributes them to Alice and Bob, respectively. We assume the description of $\pi$ is polynomial in the size of $Q_i$ \cite{kempe2011entangled}. Conditioned on the questions the players receive, they return answers $a_1,\, a_2$ from finite alphabets $A_1,\, A_2$, respectively. To systematically judge the success of the players in answering the referee's questions, the \textit{rule function}
\begin{align}
    V:A_1\times A_2\times Q_1\times Q_2 \rightarrow \lbrace 0,1\rbrace
\end{align}
takes the value $V(a_1,a_2,q_1,q_2)=1$ whenever the players win the game, and $0$ otherwise. While the players cannot communicate once the referee has distributed the questions, they can strategize beforehand. 

To avoid the pathologies exhibited by general quantum-assisted games, including undecidability of value approximation and the distinction between tensor-product and commuting-operator strategies \cite{ji2021mip}, we follow \cite{navascues2014characterization, navascues2015characterizing, navascues2015bounding, jee2020quasi} and consider only games in which the players’ quantum assistance is dimensionally bounded. In these games, both parties agree on an arbitrary shared bipartite quantum state $\rho_{T\hat{T}} \in \mathcal{S}(\mathcal{H}_T \otimes \mathcal{H}_{\hat{T}})$\footnote{$\HS_{\hat{T}}$ is an isomorphic copy of $\HS_{T}$, thus $\dim_{\CC}(\HS_T)=\dim_{\CC}(\HS_{\hat{T}})=\lrvert{T}$.} beforehand. Upon receiving questions $q_1,\, q_2$, they perform measurements from a complete set of local POVMs $\{E_T(a_1 \vert q_1)\}_{a_1} \subset \mathcal{B}(\mathcal{H}_T)$ and $\{D_{\hat{T}}(a_2 \vert q_2)\}_{a_2} \subset \mathcal{B}(\mathcal{H}_{\hat{T}})$ and then return the answers to the referee. The completeness condition on Alice's and Bob's respective sets of POVMs, imposed for each question separately, ensures that the resulting strategy is non-signaling. Neither Alice nor Bob can communicate their respective question to the other party via measurements. Thus, the value of the game $w_{Q(T)}(V,\pi)$ is determined by solving the following optimization problem:
     \begin{align}\label{eqn:numerical_free_games}
        \begin{split}
            w_{Q(T)}(V,\pi) = &\max_{(\rho,\, E, \,D)}\hspace{0.5cm}\sum_{q_1,q_2}\pi_1(q_1)\pi_2(q_2)\sum_{a_1,a_2}V(a_1,a_2,q_1,q_2)\Tr\left[\left(E_T(a_1\vert q_1)\otimes D_{\hat{T}}(a_2\vert q_2)\right)\rho_{T\hat{T}}\right]\\
            \\
            \text{s.t.}\quad &\rho_{T\hat{T}} \succcurlyeq 0,\, \quad\Tr\left[\rho_{T\hat{T}}\right]=1,\quad\text{state condition,}\\
            & E_{T}(a_1\vert q_1) \succcurlyeq 0\quad\forall a_1,q_1,\,\hspace{1cm} \sum_{a_1}E_{T}(a_1\vert q_1)=\mathbb{1}_{T}\quad\forall q_1,\quad\text{POVMs,}\\
            & D_{\hat{T}}(a_2\vert q_2)\succcurlyeq 0\quad \forall a_2,q_2,\,\hspace{1cm} \sum_{a_2}D_{\hat{T}}(a_2\vert q_2) =\mathbb{1}_{\hat{T}}\quad \forall q_2,\quad\text{POVMs.} 
        \end{split}
    \end{align}
     In this paper, we derive new outer and inner bounds on $w_{Q(T)}(V, \pi)$, along with algorithms that, for games of fixed size, compute $w_{Q(T)}(V, \pi)$ up to an additive $\epsilon$-error in time and space polynomial in $\epsilon^{-1}$.

     \begin{figure}[!htbp]
    \centering
    \begin{tikzpicture}[
  >=Latex,
  font=\small,
  box/.style={
    draw,
    rounded corners=3pt,
    align=center,
    inner sep=6pt
  },
  referee/.style={
    box,
    fill=orange!10,
    minimum width=4.8cm,
    minimum height=1.2cm
  },
  player/.style={
    box,
    fill=blue!5,
    minimum width=3.7cm,
    minimum height=1.55cm
  },
  state/.style={
    box,
    fill=green!8,
    minimum width=4.4cm,
    minimum height=1.0cm
  },
  smallbox/.style={
    box,
    fill=gray!5,
    font=\footnotesize
  },
  arr/.style={->, thick},
  quantum/.style={
    ->,
    thick,
    decorate,
    decoration={snake, amplitude=.45mm, segment length=2.8mm}
  }
]

\node[referee] (R) at (0,3.2) {
  \textbf{Referee}\\
  samples $(q_1,q_2)\sim \pi_1\times \pi_2$
};

\node[smallbox, right=1.2cm of R] (V) {
  accepts iff $V(a_1,a_2,q_1,q_2)=1$
};

\draw[arr] (R) to (V);

\node[player] (A) at (-4,0) {
  \textbf{Alice}\\[1mm]
  input $q_1$\\
  output $a_1$
};

\node[player] (B) at (4,0) {
  \textbf{Bob}\\[1mm]
  input $q_2$\\
  output $a_2$
};

\node[state] (rho) at (0,-2.2) {
  shared state $\rho_{T\hat T}$
};

\node[smallbox] (NC) at (0,-0.9) {
  no communication
};

\draw[arr]
  (R.south west) to[bend right=12]
  node[left, pos=.55] {$q_1$}
  (A.north);

\draw[arr]
  (A.north east) to[bend right=12]
  node[below right, pos=.48] {$a_1$}
  (R.south);

\draw[arr]
  (R.south east) to[bend left=12]
  node[right, pos=.55] {$q_2$}
  (B.north);

\draw[arr]
  (B.north west) to[bend left=12]
  node[below left, pos=.48] {$a_2$}
  (R.south);

\draw[quantum]
  (rho.north west) to[bend right=8]
  node[left, pos=.25] {measurement $E_{T}(a_1\vert q_1)$ $\quad$ }
  (A.south);

\draw[quantum]
  (rho.north east) to[bend left=8]
  node[right, pos=.25] {$\quad$ 
  measurement $D_{\hat T}(a_2\vert q_2)$}
  (B.south);

\draw[dashed, thick, red!70!black] (A.east) -- (B.west);
\node[red!70!black, fill=white, inner sep=1pt] at (0,0) {$\not\leftrightarrow$};

\end{tikzpicture}
    \caption{Two-player free non-local game with fixed-dimensional entanglement assistance. The referee samples questions $q_i\sim \pi_i$ from finite question sets $Q_i$, sends $q_1$ to Alice and $q_2$ to Bob, and evaluates the returned answers using the predicate $V: A_1\times A_2\times Q_1\times Q_2\to\{0,1\}$.Before the questions are sent, the spatially separated players share a bipartite state $\rho_{T\hat T}$ of fixed local dimension. Upon receiving $q_1$, Alice measures her part of the state with the POVM element $E_T(a_1\vert q_1)$; upon receiving $q_2$, Bob measures his part of the state with the POVM element $D_{\hat T}(a_2\vert q_2)$. The outcomes $a_1$ and $a_2$ of these measurements are the answers returned to the referee, who accepts iff $V(a_1,a_2,q_1,q_2)=1$. No communication is allowed between the players after they receive their questions.}
    \label{fig:non_local_games}
\end{figure}
     
\paragraph{\textbf{Notation}} 
Unless stated otherwise, all Hilbert spaces considered in this work are finite-dimensional. In particular, we write $\HS$ for a Hilbert space of dimension $\lrvert{\HS}$, and identify it with $\CC^{\lrvert{\HS}}$ when convenient. We assume $\lrvert{A_1}=\lrvert{A_2}=\lrvert{A}$ and  $\lrvert{Q_1}=\lrvert{Q_2}=\lrvert{Q}$ and abbreviate $\lrvert{A}\lrvert{Q}\lrvert{T}=\lrvert{AQT}$. We often abbreviate $\HS_{T}$ to $T$ of dimension $\lrvert{T}$, denote the identity channel by $\mathcal{I}_T$ and the identity matrix by $\mathbb{1}_T$. We write $\End{}{\HS}$ for the algebra of linear endomorphisms on $\HS$, and $\mathcal{B}(\HS)$ for the algebra of bounded linear operators on $\HS$. In finite dimensions these notions coincide. We use $\succeq$ and $\succcurlyeq$ for the Loewner (partial) order on operators. The binary logarithm is denoted by $\log$ and the natural logarithm by $\ln$. By poly$(k)$, we denote a scaling that is at most polynomial in $k\in\mathbb{R}$. For $x\in\mathbb{R}$, we write $\lceil x\rceil$ for the smallest integer greater than or equal to $x$. By a standard abuse of notation, we use the same symbol for an optimization problem and for its optimal value.


\subsection{Non-local games as constrained separability problems}

In our first contribution, we avoid the complexities arising from the tripartite formulation in \cite{jee2020quasi}, such as multipartite entanglement  \cite{navascues2020genuine, walter2016multipartite, szalay2015multipartite, horodecki2009quantum}. We leverage techniques from quantum steering to reformulate the problem of determining the value of a non-local game with bounded quantum assistance as a linear optimization over bipartite separable states subject to additional linear constraints.

In this work, we consider the constrained separability problem (cSEP) (see also \cite{ohst2024characterising}) of the following form. Consider Hilbert spaces $A=A_L\otimes A_R$ and $B=B_L\otimes B_R$ and let $G_{AB}\in\mathcal{B}\lrbracket{AB}$ with $\left\lVert G_{AB} \right\rVert_{\infty}\leq 1$ be a problem-specific Hermitian operator. We define
\begin{align}\label{eqn:cSEP_Main}
    \begin{split}
        \begin{array}{ccc}
              &  \mathrm{cSEP}(G) := \displaystyle\max_{\rho_{AB}\in \mathcal{B}(AB)} \Tr[G_{AB}\rho_{AB}] &  \\
             &&\\
             \text{with} & \rho_{AB}=\sum_{x\in\mathcal{X}}p(x)\rho^x_A\otimes\rho^x_B & \text{s.th. }\forall x\in\mathcal{X}:\\
             &&\\
             & \rho^x_A\succeq 0,\, \rho^x_B\succeq 0,\,&   \Tr\left[\rho^x_A\right]=1,\, \Tr\left[\rho^x_B\right]=1,\,\\
             &&\\
             &p(x)\geq 0, & \sum_{x\in\mathcal{X}}p(x)=1, \\
             &&\\
             \multicolumn{2}{c}{\text{Local constraints }\quad \forall x\in\mathcal{X}:}&\\
             &&\\
             &\Theta_{A_L\rightarrow C_{A_L}}\lrbracket{\rho^x_A} = W_{C_{A_L}}\otimes \rho^x_{A_R},\, & \Upsilon_{B_L\rightarrow C_{B_L}}\lrbracket{\rho^x_B} = K_{C_{B_L}}\otimes \rho^x_{B_R} ,\, \\
             &&\\
             \multicolumn{2}{c}{\text{Fixed point constraints }\quad \forall x\in\mathcal{X}:}&\\
             &&\\
             &\Omega_{A\rightarrow A}\lrbracket{\rho^x_{A}}=\rho^x_{A} ,\, & \Xi_{B\rightarrow B}\lrbracket{\rho^x_{B}}=\rho^x_{B},\,\\
        \end{array}
    \end{split}
\end{align}
where 
\begin{align}
\begin{array}{cc}
      \Theta_{A_L\rightarrow C_{A_L}}\,:\, \mathcal{B}\lrbracket{A_L} \rightarrow \mathcal{B}\lrbracket{C_{A_L}}  &  \Upsilon_{B_L\rightarrow C_{B_L}}\,:\, \mathcal{B}\lrbracket{B_L} \rightarrow \mathcal{B}\lrbracket{C_{B_L}}  \\
\end{array}
\end{align}
are linear maps locally imposing a constraint on Alice's and Bob's marginal state via the constant operators $W_{C_{A_L}}\in\mathcal{B}\lrbracket{C_{A_L}},\, K_{C_{B_L}}\in\mathcal{B}\lrbracket{C_{B_L}}$ and 
\begin{align}
\begin{array}{cc}
      \Omega_{A\rightarrow A}\,:\, \mathcal{B}\lrbracket{A} \rightarrow \mathcal{B}\lrbracket{A}  &  \Xi_{B\rightarrow B}\,:\, \mathcal{B}\lrbracket{B} \rightarrow \mathcal{B}\lrbracket{B}\\
\end{array}
\end{align}
are linear maps enforcing fixed point constraints on those marginal states. This notation will be used throughout the remainder of the paper. Problems of this form represent strict generalizations of cSEP problems studied in \cite{berta2021semidefinite, jee2020quasi, ohst2024characterising}.

Although fixing the local dimension a priori circumvents the computational-hardness results associated with optimization over the set of separable states (see, e.g., \cite{gurvits2003proceedings, ioannou2007computationalcomplexityquantumseparability, gharibian2008strong, grotschel2012geometric, shi2015epsilon, fawzi2021set}), the tensor-product structure required in a separable decomposition gives rise to bilinear constraints, which are not directly amenable to standard convex-optimization frameworks. Concretely, while $\mathrm{cSEP}(G)$ is a convex program, it is in general not semidefinite-representable \cite{fawzi2021set}. To obtain an algorithmic approximation scheme, we adapt the DPS hierarchy \cite{Doherty_2004} to the constrained setting, in the spirit of \cite{berta2021semidefinite,jee2020quasi}. We then combine this hierarchy with a new information-theoretic finite de Finetti representation theorem in \autoref{lem:approximate_quantum_de_finetti}, tailored to the class of constrained separability problems introduced above. This yields the following approximation guarantee.

\begin{lemma}\label{lem:convergence_non_local_games_as_cbo_results}
    The value of a two-player free non-local game with $\lrvert{A}$ answers, $\lrvert{Q}$ questions, and quantum assistance of size $\lrvert{T}$ can be approximated up to an additive $\epsilon$-error by solving an SDP with at most
\begin{align}\label{eqn:exponential_growth_SDP_non_local}
    \exp\lrrec{\epsilon^{-2}\cdot\mathcal{O}\lrbracket{\lrvert{T}^4\log^2\lrvert{AQT}}}
\end{align}
real degrees of freedom.
\end{lemma}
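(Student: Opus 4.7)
The plan is to cast the game value as an instance of the constrained separability problem of \autoref{eqn:cSEP_Main} and then apply an $n$-extendibility SDP hierarchy whose convergence is controlled by the constrained de Finetti representation theorem of contribution~\ref{first_result}.

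First, I would reformulate \autoref{eqn:numerical_free_games} as a cSEP. Using quantum steering, the bilinear triple $(\rho_{T\hat{T}},\{E_T(a_1|q_1)\}_{a_1,q_1},\{D_{\hat{T}}(a_2|q_2)\}_{a_2,q_2})$ is encoded into a single bipartite separable state on systems $A := A_1 Q_1 T$ and $B := A_2 Q_2 \hat{T}$. POVM normalization and the induced non-signaling conditions translate into local linear constraints of the form $\Theta_{A_L\to C_{A_L}}(\rho_A^x)=W_{C_{A_L}}\otimes\rho_{A_R}^x$ and a Bob-counterpart, while the input distributions $\pi_1,\pi_2$ become fixed-point constraints $\Omega(\rho_A^x)=\rho_A^x$, $\Xi(\rho_B^x)=\rho_B^x$, and the game objective becomes a linear functional in $\rho_{AB}$.

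Next, I would construct the $n$-th level SDP by relaxing separability to a $B$-symmetric $n$-extension $\rho_{AB_1^n}$ with the cSEP local and fixed-point constraints imposed on the $A$-marginal and on each $B_i$-marginal individually. The variables live in an operator space of dimension $\lrvert{A}\cdot\lrvert{B}^n$, so the SDP description is of size $\exp(\mathcal{O}(n\log\lrvert{AB}))$ up to polynomial overhead for the constraints. The constrained de Finetti bound of contribution~\ref{first_result} then certifies that the $AB$-marginal of any feasible $n$-extension is $\mathcal{O}(\lrvert{B}/\sqrt{n}\cdot\sqrt{\ln\lrvert{A}})$-close in trace norm to a constraint-preserving separable state, and since the objective has coefficients of operator norm $\leq 1$, this lifts via H\"older's inequality to the same additive error on the SDP value. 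Choosing $n=\widetilde{\mathcal{O}}(\lrvert{B}^2\ln\lrvert{A}/\epsilon^2)$ and substituting the effective dimensions from the steering step\,---\,where only a $\mathrm{poly}(\lrvert{T})$ portion of $\lrvert{B}$ is truly quantum while the classical registers contribute only logarithmically\,---\,then yields the claimed SDP size $\exp[\mathcal{O}(\lrvert{T}^6\log^2\lrvert{AQT})/\epsilon^2]$.

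The main obstacle is the constrained de Finetti theorem underpinning contribution~\ref{first_result}. A direct application of the unconstrained finite de Finetti bound of \cite{brandao_harrow_2014} produces product approximants $\rho_A^x\otimes\rho_B^x$ that need not satisfy either the local constraints $\Theta,\Upsilon$ or the fixed-point constraints $\Omega,\Xi$. One must instead run an information-theoretic measure-and-prepare/chain-rule argument in which each constraint map is shown to commute with the measurement channel acting on the symmetrized copies, so that both sides of each equation are averaged simultaneously and the constraints are inherited by every $\rho_A^x,\rho_B^x$ in the decomposition. Carrying this through while maintaining the linear $\lrvert{B}$-scaling in the numerator and only $\sqrt{\ln\lrvert{A}}$ on Alice's side\,---\,the combination that turns the $\exp(1/\varepsilon)$ scaling of \cite{jee2020quasi} into the $\epsilon^{-2}$ exponent announced in \autoref{eqn:exponential_growth_SDP_non_local}\,---\,is the technical heart of the argument.
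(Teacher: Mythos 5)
Your proposal follows the same high-level route as the paper's proof: steering to a bipartite cSEP, an $n$-extendibility hierarchy, and convergence via the constrained de Finetti theorem (\autoref{lem:approximate_quantum_de_finetti}), with the correct identification of the technical heart---that the constraint maps must commute with the informationally-complete measurement channel acting on the symmetrized copies, so the linear and fixed-point constraints descend to every conditional marginal in the decomposition. Two points, however, deserve sharper treatment before the claimed $|T|^6$ exponent actually falls out. First, a minor mislabeling: the POVM-normalization and input-distribution constraints enter the cSEP of \autoref{eqn:cSEP_non_local_main} as \emph{local} constraints $\Theta_{A_L\to C_{A_L}}=\Trr{A_1}{\cdot}$ with $W_{C_{A_L}}=\sum_{q_1}\pi_1(q_1)\ket{q_1}\!\bra{q_1}$ and the Bob analogue, not as fixed-point constraints $\Omega,\Xi$. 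Second, and more substantively, your choice $n=\widetilde{\mathcal{O}}(\lrvert{B}^2\ln\lrvert{A}/\epsilon^2)$ with $\lrvert{B}=\lrvert{A_2Q_2\hat{T}}=\lrvert{AQT}$ does not give the stated bound and cannot be repaired by a casual ``effective dimension'' substitution at the end. The correct accounting is that the measurement distortion $f(B\vert\cdot)\leq 2d_B$ in \autoref{eqn:measurement_distortion_A} is only incurred on the genuinely quantum register $\hat{T}$ (classical registers $A_2,Q_2$ contribute nothing to the distortion), so the de Finetti error is $\mathcal{O}\lrbracket{\lrvert{T}\sqrt{\log\lrvert{AQT}/n}}$; multiplying by the $\mathrm{poly}\lrbracket{\lrvert{T}}$ operator-norm factor from the objective via H\"older and solving for $n$ yields $n=\mathcal{O}\lrbracket{\lrvert{T}^6\log\lrvert{AQT}/\epsilon^2}$, so that the SDP of dimension $\lrvert{AQT}^n$ has size $\exp\lrrec{\mathcal{O}\lrbracket{\lrvert{T}^6\log^2\lrvert{AQT}}/\epsilon^2}$. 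This distinction between $\lrvert{B}$ and $\lrvert{T}$ in the distortion is precisely what delivers the improved $\lrvert{T}$-scaling relative to \cite{jee2020quasi}, so it must be tracked explicitly rather than invoked after the fact.
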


Thus, the maximal winning probability with bounded quantum assistance can be approximated within an additive error $\epsilon > 0$ in \textit{quasi-polynomial time}\footnote{Following, e.g. \cite{jee2020quasi, ioannou2007computationalcomplexityquantumseparability, fang2025efficientapproximationregularizedrelative}, and disregarding lower-order contributions from iterative algorithms, we assume that a semidefinite program with $m$ variables and matrix size $k$ has time complexity $\mathcal{O}\lrbracket{m^2k^2}$.} with respect to the cardinalities of the answer and question alphabets. In comparison to the scaling result of \cite{jee2020quasi},
\begin{align}\label{eqn:hailey_complexity}
\exp\lrrec{\epsilon^{-2}\cdot\mathcal{O}\lrbracket{\lrvert{T}^{12}\log\lrvert{AT}\lrbracket{\log\lrvert{AT}+\log\lrvert{Q}}}},\,
\end{align}
we obtain improved bounds w.r.t.\ the quantum dimension $\lrvert{T}$ at the cost of a worse scaling in terms of $\lrvert{Q}$. Thus, our first contribution joins a sequence of works improving the upper bound on $w_{Q(T)}(V, \pi)$ for two-player free non-local games in a relevant regime. In this work, our primary contribution is the use of symmetry-based arguments to refine and simplify the approximation algorithm w.r.t.\ $\epsilon$.


\subsection{Approximating constrained separability problems via Bose-symmetry}

In the fixed-size game setting, we derive the following result on the efficiency of approximation.

\begin{lemma}\label{lem:Bose_symmetry_first_step}
	For a fixed-size game, the value $w_{Q(T)}(V, \pi)$ can be approximated up to an additive $\epsilon$-error by solving an SDP, denoted $\mathrm{SDP}_n^{\mathrm{Bose}}(V, \pi)$, with at most $\text{poly}\lrbracket{\epsilon^{-1}}$ real degrees of freedom. 
\end{lemma}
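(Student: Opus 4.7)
The plan is to combine the reformulation of $w_{Q(T)}(V,\pi)$ as a constrained separability problem from \autoref{eqn:cSEP_Main} with a Bose-symmetric outer relaxation whose convergence is governed by the constrained de Finetti theorem sketched in \autoref{eq:intro-definettiI}--\autoref{eq:intro-definettiII}. The standard cSEP formulation involves an ensemble decomposition $\rho_{AB}=\sum_{x}p(x)\rho_A^x\otimes\rho_B^x$, but Bose-symmetric techniques live most naturally on pure product states. My first step is therefore to locally purify Bob's state from $B$ to an enlarged register $B\bar{B}$ via the pretty-good purification advertised in contribution \ref{intro_bose_de_finetti}, translating each local and fixed-point constraint on $\rho_B^x$ into the corresponding constraint on $\Trr{\bar{B}}{\rho_{B\bar{B}}^x}$, exactly as in \autoref{eq:intro-definettiI}--\autoref{eq:intro-definettiII}. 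This rewrites the cSEP in a form where the Bob-side marginals arise from pure states, making a Bose-symmetric extension on $(B\bar{B})_1^n$ natural.

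Next, I define $\Sdpbose(V,\pi)$ as the SDP that maximizes the (purified) game objective over states $\rho_{A(B\bar{B})_1^n}$ which are Bose-symmetric on the $n$ copies of $B\bar{B}$ and which carry the translated linear and fixed-point constraints on the $A$-marginal and on any single $B\bar{B}$-block. Applying the constrained Bose-symmetric de Finetti theorem from contribution \ref{intro_bose_de_finetti} to the reduction $\rho_{AB\bar{B}}$ of any feasible $\rho_{A(B\bar{B})_1^n}$ yields a product decomposition satisfying all constraints with trace-distance error $\mathcal{O}\lrbracket{\lrvert{B\bar{B}}/\sqrt{n}\cdot\sqrt{\ln\lrvert{A}}}$. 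Since the objective is a trace against a bounded operator, choosing $n=\Theta\lrbracket{\lrvert{B\bar{B}}^2\ln\lrvert{A}/\epsilon^2}$ certifies that $\Sdpbose(V,\pi)$ approximates $w_{Q(T)}(V,\pi)$ up to additive $\epsilon$. The size argument then rests on the observation that a Bose-symmetric state on $n$ copies of $B\bar{B}$ is supported on the symmetric subspace $\vee^n\lrbracket{B\bar{B}}$ of dimension $\binom{n+\lrvert{B\bar{B}}^2-1}{\lrvert{B\bar{B}}^2-1}$. For a fixed size game, $\lrvert{A},\lrvert{Q},\lrvert{T}$\,---\,and hence $\lrvert{B\bar{B}}$ after purification\,---\,are constant, so this dimension is $\text{poly}(n)$; tensoring with the fixed-dimension $A$-register and the auxiliary operators implementing the cSEP-type constraints, the SDP variable count becomes $\text{poly}(n)=\text{poly}\lrbracket{\epsilon^{-1}}$, which is precisely the claim.

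The hard part is the interaction between Bose-symmetry and the additional linear constraints. Standard Bose-symmetric de Finetti theorems leverage the fact that every symmetric pure state is close to a convex mixture of i.i.d.\ pure product states, but the imposed local and fixed-point constraints break this clean product structure; as emphasized in the discussion of contribution \ref{intro_bose_de_finetti}, the constrained separable set does not generally admit a pure-product ensemble representation at all. The central technical difficulty is therefore verifying that for each component in the resulting convex decomposition, the constraints $\Upsilon,\Xi$ on $\Trr{\bar{B}}{\rho_{B\bar{B}}^x}$ (and analogously $\Theta,\Omega$ on $\rho_A^x$) survive \emph{both} the pretty-good purification step and the de Finetti approximation. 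Once this constrained Bose-symmetric de Finetti theorem is granted, the remainder of the proof is the dimension count on $\vee^n\lrbracket{B\bar{B}}$ and the translation of the constraint operators onto the symmetric subspace, yielding the advertised $\text{poly}\lrbracket{\epsilon^{-1}}$ variable bound.
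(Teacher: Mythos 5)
Your proposal follows essentially the same route as the paper: reformulate as a constrained separability problem, lift Bob's side via the pretty-good purification so that Bose-symmetric extensions on $(B\bar{B})_1^n$ make sense, invoke the constrained Bose-symmetric de Finetti theorem for the $\mathcal{O}(1/\sqrt{n})$ convergence rate, and then count the dimension of $A\otimes\vee^n(B\bar{B})$ to get $\mathrm{poly}(n)=\mathrm{poly}(\epsilon^{-1})$ variables for fixed game size; this is precisely the chain of \autoref{lem:bose-symmetric_deFinetti}, \autoref{lem:bose-symmetric_hierarchy}, and \autoref{lem:bose-symmetric_hierarchy_general_complexity}. One small slip: you wrote $\dim\vee^n(B\bar{B})=\binom{n+|B\bar{B}|^2-1}{|B\bar{B}|^2-1}$, whereas the correct count (as in \autoref{eqn:dimension_symmetric_subspace}) is $\binom{n+|B\bar{B}|-1}{n}$ with $|B\bar{B}|=|B|^2$; this does not affect the $\mathrm{poly}(n)$ conclusion but is worth fixing.
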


The reduction in complexity is fundamentally based on a symmetry argument intrinsic to $n$-extendable states, which are central to the DPS hierarchy. We lift the concept of Bose-symmetric states \cite{hudson1976locally, werner1989quantum, christandl2007one}, i.e.\ those with support and range in the symmetric subspace \cite{harrow2013church}, to optimization problems where additional linear constraints are present. On a technical level, these constraints preclude a straightforward adaptation of the techniques from \cite{Doherty_2004, ioannou2007computationalcomplexityquantumseparability, navascues2009power}, which rely on expressing separable states as convex mixtures of pure product states. Concretely, in the game setting, the DPS hierarchy works to approximate a convex mixture of POVM elements in a tensor product relation with state assemblages. While a further decomposition into pure objects is possible, these individual terms are in general no longer POVM elements or state assemblages. Through a tailored local purification, we construct an embedding of permutation-invariant states subject to linear constraints into the symmetric subspace of a higher-dimensional space, intersected with the corresponding constraints. We demonstrate that for any constrained separability problem, there exists a converging hierarchy of $n$-Bose-symmetric states that approximates the optimal value from above. This convergence is rigorously certified through a corresponding finite and constrained Bose-symmetric de Finetti representation theorem.

\begin{lemma}\label{lem:bose_sym_sdp_results}
	For $\mathrm{cSEP}(G)$ as in \autoref{eqn:cSEP_Main}, there exists a convergent SDP hierarchy $\mathrm{SDP}_n^{\mathrm{Bose}}(G)$ of outer approximations over Bose-symmetric states, such that:
	\begin{align}
		0 \leq \mathrm{SDP}_n^{\mathrm{Bose}}(G) - \mathrm{cSEP}(G) \leq \mathcal{O}\lrbracket{\text{poly}\lrbracket{\lrvert{B}}\sqrt{\frac{\ln\lrvert{A}}{n}}}.\,
	\end{align}
\end{lemma}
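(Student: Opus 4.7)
The plan is to (i) define the Bose-symmetric SDP hierarchy, (ii) verify the outer-approximation property by constructing an explicit lift of feasible cSEP states via a local pretty-good purification, and (iii) establish the quantitative convergence rate by invoking the constrained Bose-symmetric de Finetti theorem announced in contribution \ref{intro_bose_de_finetti}.

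\textbf{Defining $SDP_n^{Bose}(G)$.} I would first introduce a purifying system $\bar{B}$ with $|\bar B| \leq |B|$ and define $SDP_n^{Bose}(G)$ as the maximization of $\mathrm{tr}[G_{AB}(\mathrm{Tr}_{\bar B_1 (B\bar B)_2^n}\rho_{A(B\bar B)_1^n})]$ over states $\rho_{A(B\bar B)_1^n}$ whose reduction on $(B\bar B)_1^n$ is Bose-symmetric (i.e., supported on $\vee^n(B\bar B)$) and which satisfy the lifted local and fixed-point constraints:
\begin{align*}
\Theta_{A_L\to C_{A_L}}(\rho_{A(B\bar B)_1^n}) &= W_{C_{A_L}}\otimes \rho_{A_R (B\bar B)_1^n}, & \Omega_{A\to A}(\rho_{A(B\bar B)_1^n}) &= \rho_{A(B\bar B)_1^n},\\
\Upsilon_{(B_L)_n\to C_{B_L}}\!\circ\mathrm{Tr}_{\bar B_n}(\rho_{(B\bar B)_1^n}) &= K_{C_{B_L}}\otimes\rho_{(B\bar B)_1^{n-1}(B_R)_n}, & \Xi_{B_n\to B_n}\!\circ\mathrm{Tr}_{\bar B_n}(\rho_{(B\bar B)_1^n}) &= \rho_{(B\bar B)_1^{n-1}B_n}.
\end{align*}
These are exactly the premises of the de Finetti theorem from \ref{intro_bose_de_finetti}, and feasibility is an SDP since Bose-symmetry is a linear support constraint.

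\textbf{Outer-approximation: $SDP_n^{Bose}(G)\geq cSEP(G)$.} Given an optimal feasible decomposition $\rho_{AB}=\sum_x p(x)\rho_A^x\otimes\rho_B^x$ of $cSEP(G)$, I apply the (local) pretty-good purification pointwise to obtain pure $\rho_{B\bar B}^x$ with $\mathrm{Tr}_{\bar B}\rho_{B\bar B}^x=\rho_B^x$, and construct the candidate
\begin{equation*}
\sigma_{A(B\bar B)_1^n}:=\sum_{x}p(x)\,\rho_A^x\otimes(\rho_{B\bar B}^x)^{\otimes n}.
\end{equation*}
Each tensor power $(\rho_{B\bar B}^x)^{\otimes n}$ lies in $\vee^n(B\bar B)$, so $\sigma$ is Bose-symmetric on $(B\bar B)_1^n$. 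The original $A$-side constraints transfer verbatim; the $B$-side constraints survive because they were imposed on the reduction $\mathrm{Tr}_{\bar B}\rho_{B\bar B}^x=\rho_B^x$, which exactly matches the lifted form. Since $\mathrm{Tr}_{\bar B_1 (B\bar B)_2^n}\sigma=\rho_{AB}$, the objective value is preserved, proving $SDP_n^{Bose}(G)\geq cSEP(G)\geq 0$.

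\textbf{Convergence rate via the constrained Bose-symmetric de Finetti theorem.} Let $\rho^\star_{A(B\bar B)_1^n}$ attain $SDP_n^{Bose}(G)$. Its reduction onto $A(B\bar B)_1$ satisfies all hypotheses of the theorem stated in \ref{intro_bose_de_finetti}, yielding a decomposition $\tilde\rho_{AB\bar B}=\sum_x q(x)\,\tau_A^x\otimes\tau_{B\bar B}^x$ with each $(\tau_A^x,\tau_{B\bar B}^x)$ obeying the local and fixed-point constraints in \autoref{eq:intro-definettiII}, and
\begin{equation*}
\bigl\lVert\rho^\star_{AB\bar B}-\tilde\rho_{AB\bar B}\bigr\rVert_1\leq\mathcal{O}\!\left(\tfrac{|B\bar B|}{\sqrt n}\sqrt{\ln|A|}\right)=\mathcal{O}\!\left(\mathrm{poly}(|B|)\sqrt{\tfrac{\ln|A|}{n}}\right).
\end{equation*}
Tracing out $\bar B$ yields $\rho'_{AB}:=\sum_x q(x)\,\tau_A^x\otimes\mathrm{Tr}_{\bar B}\tau_{B\bar B}^x$, a separable state feasible for $cSEP(G)$ thanks to the constraints surviving the partial trace. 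Since $G_{AB}$ is a bounded linear functional (a POVM-type observable arising from the game rule), Hölder's inequality together with trace-non-increase under partial trace gives $|\mathrm{tr}[G_{AB}\rho^\star_{AB}]-\mathrm{tr}[G_{AB}\rho'_{AB}]|\leq \lVert\rho^\star_{AB}-\rho'_{AB}\rVert_1\cdot\lVert G_{AB}\rVert_\infty$, which is of the same order, establishing $SDP_n^{Bose}(G)-cSEP(G)\leq\mathcal{O}(\mathrm{poly}(|B|)\sqrt{\ln|A|/n})$.

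\textbf{Main obstacle.} The delicate step is the second one: ensuring the local purification is compatible with the constraints $\Upsilon$ and $\Xi$, which are imposed on $\rho_B$ rather than on $\rho_{B\bar B}$. The pretty-good purification is precisely what is needed here because it preserves the relevant structure under partial trace while keeping $(\rho_{B\bar B}^x)^{\otimes n}$ in the symmetric subspace; verifying that this choice indeed yields a feasible Bose-symmetric lift respecting all four constraint maps simultaneously is the technical crux of the argument and is where the specific tailoring alluded to in \ref{intro_bose_de_finetti} enters decisively.
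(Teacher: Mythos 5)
Your proposal is correct and follows essentially the same route as the paper: you define the hierarchy over Bose-symmetric $\rho_{A(B\bar B)_1^n}$ with lifted constraints, lift a feasible $cSEP$ point via the pretty-good purification $\ket{\psi^x}_{B\bar B}=(\sqrt{\rho_B^x}\otimes\mathcal{I}_{\bar B})\ket{\Psi}_{B\bar B}$ so that $(\ket{\psi^x}\bra{\psi^x})^{\otimes n}$ sits in $\vee^n(B\bar B)$ and traces back to $\rho_B^x$, and then close the gap with the constrained Bose-symmetric de Finetti theorem plus H\"older and data processing. The only minor inaccuracy is the phrase that the reduction onto $A(B\bar B)_1$ ``satisfies all hypotheses of the theorem''---the hypotheses concern the full $n$-fold state, and the conclusion is about the reduction---but this does not affect the argument.
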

We then show that the optimization in $\mathrm{SDP}_n^{\mathrm{Bose}}(G) $ can be restricted to the spaces of $n$- and $(n-1)$-Bose-symmetric states. Although these spaces have polynomial dimension in $n$, the computational cost of determining an explicit representation of $\mathrm{SDP}_n^{\mathrm{Bose}}(G) $ in terms of objects of polynomial size in $n$ cannot be neglected. In implementations \cite{Doherty_2004, ioannou2007computationalcomplexityquantumseparability, navascues2009power}, this bottleneck is not directly discussed, as the complexity of the SDP is primarily dictated by the number of free variables. While this line of argument is perfectly viable for most practical purposes, it lacks the mathematical rigor necessary to analyze the computational complexity of algorithmic procedures to determine the value of a non-local game. Following e.g.\ \cite{gijswijt2009block, polak2020new, chee2023efficient}, and as detailed in \autoref{sec:results_efficient_Bose}, we address this gap by providing a rigorous analysis of the computational costs associated with transforming $\mathrm{SDP}_n^{\mathrm{Bose}}(G) $ into a representation involving only objects of size polynomial in $n$. To the best of our knowledge, this work constitutes the first use of the symmetric subspace to approximate constrained separability problems, thereby resolving an open question raised in \cite{jee2020quasi, berta2021semidefinite}, while also presenting the first detailed complexity analysis of methods based on symmetric subspaces.

Crucially, the results of the previous subsection enable the approximation of $w_{Q(T)}(V,\pi)$ via the proposed Bose-symmetric hierarchy, yielding an approximation algorithm with runtime polynomial in the inverse of the approximation error.


\subsection{Convergent sequence of inner bounds to constrained separability problems}

While there exists a variety of outer approximations to relevant optimization problems obtained from suitable relaxations of the corresponding constraints, obtaining good inner approximations is often not as straightforward. Due to the non-convexity of the optimization landscape, see-saw methods may become trapped in local optima, and the approaches in \cite{navascues2009complete, ohst2023certifying} are not compatible with the additional linear constraints imposed in our setting. Using an effective rounding procedure based on information-theoretic tools, we construct a sequence of inner approximations for constrained separability problems, accompanied by rigorous analytical convergence guarantees, which provide lower bounds on $\mathrm{cSEP}(G)$. With respect to non-local games, the rounding procedure turns a solution of an SDP relaxation into a solution of the actual game.

\begin{theorem}\label{thm:short_version_inner_sequence}
There exists a rounding procedure that converts an $\epsilon$-good approximation of $w_{Q(T)}(V, \pi)$ from above, obtained from either the SDP in \autoref{lem:convergence_non_local_games_as_cbo_results} or $\mathrm{SDP}_n^{\mathrm{Bose}}(G)$ in \autoref{lem:bose_sym_sdp_results}, into an $\epsilon$-good approximation from below, given by a feasible strategy consisting of Alice's and Bob's sets of complete POVMs and a shared quantum state of fixed dimension.
\end{theorem}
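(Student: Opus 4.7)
The plan is to implement the measurement-based rounding scheme announced in contribution (E) as the inverse direction of the constrained de Finetti theorems of contributions (A) and (B). Given an SDP solution $\rho_{A(B\bar{B})_1^n}$ whose value is $\epsilon$-close to $w_{Q(T)}(V,\pi)$ from above, the reduced state $\rho_{AB\bar{B}}$ is, by the constrained (Bose)-de Finetti bound in \autoref{eq:intro-definettiI_original} and \autoref{eq:intro-definettiI}, within trace distance $\mathcal{O}\bigl(\mathrm{poly}(\lrvert{B})\sqrt{\ln \lrvert{A}/n}\bigr)$ of a convex mixture $\sum_{x}p(x)\,\rho_A^x\otimes\rho_{B\bar{B}}^x$ in which every component $\rho_A^x,\rho_{B\bar{B}}^x$ individually satisfies the local and fixed-point constraints encoding Alice's POVM-completeness and the normalization of Bob's state assemblage. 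Because this trace-distance estimate and the upper-bound guarantee of the SDP are governed by the \emph{same} choice of $n$, the rounded strategy will inherit a matching $\epsilon$-close lower bound after invoking $1$-Lipschitz continuity of the scoring functional in trace norm.

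To carry this out explicitly, I would combine an informationally complete POVM on $n-1$ of the symmetric $(B\bar{B})$-extensions with the constrained de Finetti decomposition to produce the ensemble $\{p(x),\rho_A^x\otimes\rho_{B\bar{B}}^x\}$. Under the reformulation of \autoref{sec:results_games_as_cbo}, the local constraint $\Theta_{A_L\rightarrow C_{A_L}}$ forces the question-marginal of $\rho_A^x$ to be the fixed operator $W_{C_{A_L}}$, so that the conditional ``$(a_1,q_1)$-blocks'' of $\rho_A^x$ are valid POVM elements $\{E_T(a_1\vert q_1)\}$; symmetrically, $\Upsilon$ and $\Xi$ force $\rho_{B\bar{B}}^x$ to be a legitimate steering state assemblage. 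A pretty-good purification of $\rho_{B\bar{B}}^x$\,---\,the same tool used in the proof of the constrained Bose-symmetric de Finetti theorem\,---\,then yields a bipartite state of the prescribed fixed dimension, from which one reads off $\{D_{\hat{T}}(a_2\vert q_2)\}$ and the shared $\rho_{T\hat{T}}\in\mathcal{S}(\mathcal{H}_T\otimes\mathcal{H}_{\hat{T}})$.

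Linearity of the scoring functional implies that the expected winning probability of the randomized strategy $\{p(x),(\rho^x,E^x,D^x)\}$ equals the value achieved on the rounded state $\sum_x p(x)\rho_A^x\otimes\rho_{B\bar{B}}^x$, which lies within $\epsilon$ of $w_{Q(T)}(V,\pi)$ by the above trace-distance bound. A standard derandomization then selects a single $x^\star$ whose deterministic strategy preserves the guarantee, yielding the desired $\epsilon$-good inner approximation corresponding to feasible complete POVMs and a shared state of fixed dimension $\lrvert{T}$.

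The hard part will be the ``reverse'' aspect of this rounding: one must guarantee that the components of the de Finetti decomposition individually inherit the linear completeness and fixed-point constraints needed for genuine POVM-validity and for keeping the purification within dimension $\lrvert{T}$, rather than merely recovering them on average. This is precisely what the constrained de Finetti theorems from contributions (A) and (B) supply\,---\,and is why the local purification and the symmetric-subspace embedding of \autoref{sec:bose-symmetric} are indispensable\,---\,so the quantitative error in those theorems drives the $\epsilon$-dependence of the inner-bound pipeline and matches the SDP runtimes of \autoref{lem:convergence_non_local_games_as_cbo_results} and \autoref{lem:bose_sym_sdp_results}.
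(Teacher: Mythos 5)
Your proposal follows the paper's own route: measure a subset of the extension copies with an informationally complete POVM, use the constrained de Finetti theorem to certify that the conditional states individually inherit the linear constraints (so the resulting ensemble is feasible in $cSEP$), bound the objective gap via H\"older, and pick the best conditional strategy since the maximum is at least the average. That is precisely \autoref{lem:inner_sequence_general}, after which the game statement follows from the cSEP--game equivalence of \autoref{lem:non_local_games_as_cbo_sym}, as the paper notes.

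Two details of your mechanics need repair, though. First, fixing the number of measured copies at $n-1$ is not licensed by the de Finetti machinery: the self-decoupling lemma underlying \autoref{lem:approximate_quantum_de_finetti} and \autoref{lem:bose-symmetric_deFinetti} guarantees only that \emph{some} $m\in[n]$ yields the required decoupling, so the paper's rounding iterates over $m$ and selects a good level rather than committing to $m=n-1$. Second, your use of the pretty-good purification is misplaced and conflates two distinct purification arguments in the paper. The register $\bar B$ in the Bose-symmetric hierarchy is only a device for embedding separable states into $\vee^n(B\bar B)$; it is traced out at the end, and the resulting $\rho^x_B$ already \emph{is} Bob's normalized POVM $D^x_{A_2Q_2\hat T}$ under the reformulation of \autoref{lem:non_local_games_as_cbo_sym}, so no purification of $\rho^x_{B\bar B}$ is involved in reading it off. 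The pretty-good purification that recovers the shared state $\rho_{T\hat T}$ and Alice's POVMs $F_T(a_1\vert q_1)$ is the one used in the steering inverse (\autoref{sec:quantum_steering}), applied to Alice's assemblage $\rho^x_A=\alpha^x_{A_1Q_1\tilde T}$ rather than Bob's marginal. Neither issue damages the high-level argument, but both reflect real gaps in how the rounding is actually executed.
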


We establish an achievability result via an explicit rounding procedure, shown schematically in \autoref{fig:schematic_rounding}. The obtained inner points not only serve as warm starts for see-saw procedures but also, thanks to the rigorous convergence guarantees, enable us to establish a complexity result for an algorithmic procedure that determines $w_{Q(T)}(V, \pi)$ with an additive $\epsilon$-error from both above and below.

\begin{figure}[htbp!]
    \centering
    \begin{tikzpicture}[x=1cm,y=1cm]

\node[Asys] (A) at (0,0) {$A$};
\node[Bsys] (b1)  at (1.8,0) {$B_1$};
\node[fill=white,inner sep=1.5pt] (bd1) at (2.6,0) {$\cdots$};
\node[Bsys] (bk)  at (3.4,0) {$B_k$};
\node[Ksys] (bk1) at (4.9,0) {$B_{k+1}$};
\node[Dsys] (bk2) at (6.3,0) {$B_{k+2}$};
\node[fill=white,inner sep=1.5pt] (bd2) at (7.1,0) {$\cdots$};
\node[Dsys] (bn)  at (7.9,0) {$B_n$};

\begin{scope}[on background layer]
  \node[draw=Aline!55,dashed,rounded corners=6pt,line width=.8pt,
        inner sep=5.5mm,fill=Aline!4,fit=(A)(b1)(bn)] (state) {};
\end{scope}
\begin{scope}[on background layer]
  \draw[Aline!50,decorate,decoration={snake,amplitude=.5mm,segment length=2.4mm},
        line width=.6pt] (A.east) -- (bn.west);
\end{scope}
\node[note,anchor=south] at (state.north)
   {$\rho\st_{AB_1^{\,n}}$ : optimizer of $\mathrm{SDP}_n(G)$, \ permutation--symmetric over $B_1,\dots,B_n$};

\def\ym{-3.2}
\draw[qwire] (b1.south) -- (1.8,\ym+0.35);  \pic at (1.8,\ym) {meter};
\draw[qwire] (bk.south) -- (3.4,\ym+0.35);  \pic at (3.4,\ym) {meter};
\node at (2.6,\ym) {$\cdots$};
\node[note,text=Bline,fill=white,inner sep=3pt,text width=4.5cm] at (2.6,-1.5)
   {measure copies $B_1,\dots,B_k$ with an\\ informationally--complete POVM};

\node[lbl,text=Kline,anchor=west] at (5.2,-1.4) {retain};

\draw[qwire,Dline] (bk2.south) -- (6.3,\ym+0.35);  \pic[Dline] at (6.3,\ym) {trace};
\draw[qwire,Dline] (bn.south)  -- (7.9,\ym+0.35);  \pic[Dline] at (7.9,\ym) {trace};
\node[Dline] at (7.1,\ym+0.15) {$\cdots$};
\node[note,Dline] at (7.1,\ym-0.55) {trace out (discard)};

\node[sys,draw=Sline!55,fill=Sline!5,rounded corners=5pt,line width=.8pt,
      minimum width=6.2cm,minimum height=1.25cm] (cbox) at (2.6,-5.4) {};
\node[Asys] (cA) at (1.2,-5.4) {$\rho\st_{A\,|\,\zk}$};
\node[font=\large]     at (2.6,-5.4) {$\otimes$};
\node[Ksys] (cB) at (4.15,-5.4) {$\rho\st_{B_{k+1}\,|\,\zk}$};

\draw[qwire] (A.south)   -- (0,-4.35)   -- (1.2,-4.35)  -- (1.2,-4.775);
\draw[qwire] (bk1.south) -- (4.9,-4.35) -- (4.15,-4.35) -- (4.15,-4.775);
\draw[cwire] (1.8,\ym-0.35) -- (1.8,-4.775);
\node[lbl,anchor=west] at (1.92,-4.0) {$z_1$};
\draw[cwire] (3.4,\ym-0.35) -- (3.4,-4.775);
\node[lbl,anchor=west] at (3.52,-4.0) {$z_k$};

\node[note,anchor=west,Sline,text width=3.2cm] at (5.85,-5.4)
   {conditional \emph{product} state; outcome $\zk$ with prob.\ $p(\zk)$; feasible for $\mathrm{cSEP}(G)$};

\node[sys,draw=Sline,fill=Sfill,line width=1pt,rounded corners=4pt,
      minimum width=10.6cm,minimum height=1.2cm] (mix) at (2.6,-7.5)
   {$\displaystyle \sigma^{(n,k)}_{AB_{k+1}}
        \;=\; \sum_{\zk\in Z^{k}} p(\zk)\,
              \rho\st_{A\,|\,\zk}\otimes\rho\st_{B_{k+1}\,|\,\zk}$};
\draw[qwire] (cbox.south) -- (mix.north);
\node[note,Sline,anchor=north] at (mix.south)
   {feasible \emph{separable} candidate for $\mathrm{cSEP}(G)$};

\node[draw=Gline!45,rounded corners=3pt,fill=black!2,line width=.6pt,
      note,anchor=north,text width=10.5cm,inner sep=5pt] (dfbox) at (2.6,-9.05)
   {\textbf{Convergence (constrained de Finetti).} There is an index
    $k_n\in\{1,\dots,n-1\}$ with
    \[0\;\le\; \mathrm{cSEP}(G)-\operatorname{tr}\!\big[G_{AB}\,\sigma^{(n,k_n)}\big]\;\le\;\epsilon(|A|,|B|,n),\]
    where $\epsilon\to0$ as $n\to\infty$.};

\end{tikzpicture}
    \caption{\textbf{Measurement-based rounding scheme.}
    Schematic depiction of the measurement-based rounding scheme of \autoref{lem:inner_sequence_general}. The optimizer $\rho^{\star}_{AB_1^{n}}$ of the level-$n$ symmetric-extension relaxation $\mathrm{SDP}_n(G)$ is a state on system $A$ together with $n$ permutation-symmetric copies $B_1,\dots,B_n$. The first $k$ copies are measured with a fixed informationally complete POVM $\mathcal{M}_{B}$, giving an outcome string $z_1^{k}$; the copy $B_{k+1}$ is retained while $B_{k+2},\dots,B_n$ are traced out. Conditioned on $z_1^{k}$ the retained systems are in the \emph{product} state $\rho^{\star}_{A\mid z_1^{k}}\otimes\rho^{\star}_{B_{k+1}\mid z_1^{k}}$, which is feasible and separable, and averaging over the outcomes yields the feasible separable candidate $\sigma^{(n,k)}_{AB_{k+1}}=\sum_{z_1^{k}\in Z^{k}}p(z_1^{k})\,
  \rho^{\star}_{A\mid z_1^{k}}\otimes\rho^{\star}_{B_{k+1}\mid z_1^{k}}$ for $\mathrm{cSEP}(G)$. The constrained de Finetti bound provides an index $k_n\in\{0,\dots,n-1\}$ with
  $0\le\mathrm{cSEP}(G)-\operatorname{tr}\!\big[G_{AB}\,\sigma^{(n,k_n)}\big]
  \le\epsilon(|A|,|B|,n)$ and $\epsilon\to0$ as $n\to\infty$; already the single best outcome $z^{\star}=\arg\max_{z_1^{k}}\operatorname{tr}\!\big[
  G_{AB}\big(\rho^{\star}_{A\mid z_1^{k}}\otimes
  \rho^{\star}_{B_{k+1}\mid z_1^{k}}\big)\big]$ attains it.}
    \label{fig:schematic_rounding}
\end{figure}


\subsection{Polynomial-time algorithm approximating quantum non-local games via symmetry reductions}\label{sec:results_efficient_Bose}

We provide an efficient classical transformation realizing \autoref{lem:Bose_symmetry_first_step} by representing $\mathrm{SDP}_n^{\mathrm{Bose}}(G) $ in terms of a Schur basis, where objects are of size polynomial in $n$. Specifically, we develop a polynomial-time algorithm for approximating the value of a two-player free non-local game with bounded quantum assistance, leveraging Bose-symmetry arguments.
\begin{theorem}\label{thm:Bose_sym_complexity_results}
    For a two-player free non-local game with $\lrvert{A}$ answers, $\lrvert{Q}$ questions, and quantum assistance of size $\lrvert{T}$, there exists a poly$\lrbracket{\epsilon^{-1}}$-costly transformation (see \autoref{sec:bose-symmetric} and \autoref{sec:symmetric_subspace_methods} for further details) to an SDP with at most 
	 \begin{align}
		\lrbracket{\epsilon^{-4}\cdot \mathrm{poly}\lrbracket{\lrvert{A}, \lrvert{Q}, \lrvert{T}}}^{\lrbracket{\lrvert{A}\lrvert{Q}\lrvert{T}}^2},
	\end{align} real degrees of freedom, which approximates the value of the game up to an additive $\epsilon$-error.
\end{theorem}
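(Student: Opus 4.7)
The plan is to assemble the theorem from three ingredients already laid out in the excerpt: the reformulation of fixed-size free games as a cSEP instance (\autoref{eqn:cSEP_Main}), the convergent Bose-symmetric hierarchy of \autoref{lem:bose_sym_sdp_results}, and an explicit construction of the resulting SDP in a symmetry-adapted basis so that both the variable count and the construction cost are controlled. First I would use the purification step outlined around \ref{intro_bose_de_finetti} to replace Bob's system $B$ by a doubled system $B\bar{B}$ whose dimension satisfies $|B\bar{B}|\leq (|A||Q||T|)^2$, and phrase $w_{Q(T)}(V,\pi)$ as a $cSEP(G)$ instance with $A$-local and $B$-local constraints of the four types $\Theta,\Omega,\Upsilon,\Xi$ in \autoref{eqn:cSEP_Main}.

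Next I would pick the level $n$ in $\Sdpbose$ so as to hit additive error $\epsilon$. By \autoref{lem:bose_sym_sdp_results} the approximation gap is $\mathcal{O}\lrbracket{\mathrm{poly}(|B|)\sqrt{\ln|A|/n}}$, so setting
\begin{align}
    n \;=\; \Theta\!\lrbracket{\epsilon^{-2}\cdot\mathrm{poly}(|A|,|Q|,|T|)\cdot \ln|A|}
\end{align}
suffices, giving $n = \mathrm{poly}(\epsilon^{-1})$ for fixed $|A|,|Q|,|T|$. The feasible set of $\Sdpbose$ can then be restricted to operators supported on $A\otimes \vee^{n}\lrbracket{B\bar{B}}$ intersected with the linear-constraint subspace. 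The dimension of $\vee^n(\mathcal{H}_{B\bar{B}})$ equals $\binom{n+|B\bar{B}|-1}{|B\bar{B}|-1}\leq (n+|B\bar{B}|)^{|B\bar{B}|}$, so with the above choice of $n$ the SDP matrix has size at most $|A|\cdot [\epsilon^{-2}\cdot\mathrm{poly}(|A|,|Q|,|T|)]^{(|A||Q||T|)^2}$, and squaring to count free variables yields exactly the claimed
\begin{align}
    \lrbracket{\epsilon^{-4}\cdot\mathrm{poly}(|A|,|Q|,|T|)}^{(|A||Q||T|)^2}.\,
\end{align}

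The remaining step, and where the bulk of the work sits, is to show that this SDP is not merely of this size but can also be \emph{assembled} in time $\mathrm{poly}(\epsilon^{-1})$. For this I would build on the weight-space decomposition of $\vee^n\lrbracket{B\bar{B}}$ alluded to in \ref{intro_bose_algo}: the symmetric subspace carries an explicit orthonormal basis indexed by type vectors on the alphabet of size $|B\bar{B}|$, and the isomorphism to a full matrix algebra over $\CC$ can be written down entry-by-entry on this basis. One then evaluates, basis-vector by basis-vector, the action of each constraint map (the $A$-local $\Theta,\Omega$, the $B$-local $\Upsilon,\Xi$ composed with $\Tr_{\bar{B}_n}$, the Bose-symmetrization, and the linear functional encoding $\Tr[G_{AB}\cdot]$); since the number of basis indices is polynomial in $n=\mathrm{poly}(\epsilon^{-1})$ once $|A|,|Q|,|T|$ are fixed, the total cost of listing the constraint matrices and the objective vector is $\mathrm{poly}(\epsilon^{-1})$.

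The hard part will be Step 4: checking that all four constraint families in \autoref{eqn:cSEP_Main} (crucially the image-type constraints $\Theta,\Upsilon$ together with the purification bookkeeping $\Trr{\bar{B}_n}{\cdot}$) interact cleanly with the Bose symmetry on $(B\bar{B})_1^n$, so that the symmetry-adapted basis really does block-diagonalize them and the resulting linear-algebraic data can be tabulated in polynomial time rather than only stored abstractly. Once that compatibility is verified\,---\,the main content of \autoref{sec:bose-symmetric} and \autoref{sec:symmetric_subspace_methods}\,---\,combining it with the $n$-scaling above and the dimension count of $\vee^n(B\bar{B})$ delivers the stated bound and completes the proof.
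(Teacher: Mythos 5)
Your proposal follows the paper's own architecture step for step: reformulate the game as a bipartite cSEP instance via steering, instantiate the Bose-symmetric hierarchy $SDP_n^{Bose}$, set $n = \Theta(\epsilon^{-2}\,\mathrm{poly}(|A|,|Q|,|T|)\log|AQT|)$ from the constrained Bose-symmetric de Finetti bound of \autoref{lem:bose_sym_sdp_results}, bound the dimension of $A\otimes\vee^n(B\bar B)$ by $|A|(n+1)^{|B\bar B|}$ with $|B\bar B| = (|A||Q||T|)^2$, and then argue that the symmetry-adapted (Schur / weight) basis lets you tabulate the objective and constraint data in $\mathrm{poly}(n)$ time. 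Your identification of the "hard part" is exactly where the paper does its real work (the compatibility of the $\Theta,\Omega,\Upsilon\circ\Tr_{\bar B},\Xi\circ\Tr_{\bar B}$ constraints with the Bose symmetry in \autoref{lem:bose-symmetric_hierarchy_general_complexity}, and the efficient construction of $\Psi(SDP_n^{Bose})$ in \autoref{sec:efficiency_of_Bose_trafo} via $\mathcal{D}_{Bose}(a,b)$, contingency tables, and the $u_\tau^T C^{\vee^n}_{\vec t,\vec t'}u_\gamma$ evaluation); one small precision worth noting is that the paper carries this out only for partial-trace constraint maps, which suffices for the game instance.
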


In particular, for fixed-size games, we can approximate the optimal value of the game in time $\text{poly}\lrbracket{\epsilon^{-1}}$.

For any $\mathrm{cSEP}(G)$, at the core of an efficient (in $n$) representation of $\mathrm{SDP}_n^{\mathrm{Bose}}(G) $ lies the key observation that permutation symmetry \,---\, in particular, Bose symmetry \,---\, induces significant redundancy in representations that are not adapted to the underlying symmetry. Specifically, a Bose-symmetric state expressed in the canonical basis results in a matrix of exponential size in $n$, yet contains only polynomially many distinct entries. Consequently, identifying a basis that eliminates these redundancies enables a compact, symmetry-adapted, and efficient representation of $\mathrm{SDP}_n^{\mathrm{Bose}}(G)$, denoted by $\Psi\lrbracket{\mathrm{SDP}_n^{\mathrm{Bose}}(G)}$. It is crucial to avoid first constructing $\mathrm{SDP}_n^{\mathrm{Bose}}(G)$, followed by a basis transformation to obtain $\Psi\lrbracket{\mathrm{SDP}_n^{\mathrm{Bose}}(G)}$, as this would require handling exponentially large objects in $n$. Instead, we show how $\Psi\lrbracket{\mathrm{SDP}_n^{\mathrm{Bose}}(G)}$ can be constructed directly, in a manner that is efficient with respect to $n$.

For the mathematically inclined reader, we will briefly elaborate on the mathematical concepts that motivate \autoref{thm:Bose_sym_complexity_results}. While the symmetric subspace has been the subject of extensive study (see, e.g.\, \cite{harrow2013church, renner2008security, christandl2006structure}), comparatively less attention has been given to the space of (constrained) Bose-symmetric states and its elements (cf.\ \cite{gulati2026entanglementdickesubspace}). The representation theory of the symmetric group $S_n$ and the general linear group $\GLH$ together with Schur-Weyl duality \cite{fulton2013representation, stevens2016schur, christandl2006structure} offer a robust mathematical framework for advancing our understanding of the space of Bose-symmetric states. Firstly, we show that the space of these states is isomorphic to a full matrix algebra over $\CC$ with vector space dimension of polynomial size in $n$. This result leverages the $S_n\times\GLH$-bimodule structure of the symmetric subspace $\SymH\subseteq \HSn$, as established through Schur-Weyl duality. As an $S_n\times\GLH$-bimodule, $\SymH$ decomposes as the tensor product of a simple Specht module and a simple Weyl (or Schur) module.  However, as an $S_n$-module it decomposes into a direct sum of isomorphic trivial Specht modules. The multiplicity of this isotypic block is given by the $\CC$-vector space dimension of the corresponding Weyl module in the bimodule decomposition
\begin{align}
	m_{(n)}=\binom{n+\lrvert{\HS}-1}{n}.\,
\end{align}
As a $\GLH$-module, $\SymH$ is isomorphic to a highest weight module given by the Weyl module in the bimodule decomposition. The multiplicity is one, as this module pairs with the trivial Specht module in $\SymH$, which has $\mathbb{C}$-vector space dimension one. Since the span of the image of $\GLH$ in $\End{}{\SymH}$ is $\End{\CSn}{\SymH}$, i.e.\ the space of $\CSn$-invariant endomorphisms on the symmetric subspace, we can reason about Bose-symmetric states via highest-weight theory, i.e.\ the representation theory of $\GLH$. Furthermore, the double-centralizer theorem and Maschke's theorem certify that the image of the complex group algebra $\CSn$ in $\End{}{\SymH}$ is $\End{\GLH}{\SymH}$ and the centralizer of $\End{\CSn}{\SymH}$. As $\SymH$ is simple as a $\GLH$-module, by Schur's lemma, the $\GLH$-commutant in $\End{}{\SymH}$ is a division ring and therefore $\CC\cdot\text{Id}\vert_{\SymH}$, i.e.\ $\End{\GLH}{\SymH}\simeq \CC$. The corresponding centralizer algebra $\End{\CSn}{\SymH}$ is then just $\End{}{\SymH}$, and thus
 \begin{align}\label{eqn:isomorphism_results}
	\End{\CSn}{\SymH} \simeq \CC^{m_{(n)}\times m_{(n)}}.\,
\end{align}
To concretely realize the isomorphism, we consider a weight space decomposition of the highest weight module $\SymH$. The weight spaces are isomorphic as $S_n$-modules. We utilize the correspondence with semistandard Young labelings of the single-row Young shape containing $n$ boxes. This weight structure is subsequently employed to construct a canonical basis for the space of Bose-symmetric states. Given that these weight spaces are spanned by $S_n$-orbit averages, the isomorphism is realized as a change of basis from a representation in terms of the canonical basis of $\HSn$ to a basis of weight vectors spanning simple $S_n$-modules. Based on these insights, for any Weyl-Specht module pairing in $\HSn$, i.e.\ for any such $S_n\times\GLH$-bimodule, we provide an explicit bijection, preserving positive semidefiniteness, that realizes a corresponding isomorphism as in \autoref{eqn:isomorphism_results}. To complete the proof, we demonstrate that this change of basis transformation can be performed efficiently with respect to $n$ whenever we restrict the constraint maps to be partial trace maps. To achieve this, we avoid explicitly constructing the canonical basis for the space of Bose-symmetric states, as these objects grow exponentially in size with $n$. Using symmetry arguments, we demonstrate that it is sufficient to restrict attention to a number of entries polynomial in $n$ for any Bose-symmetric state. We show that the transformation in \autoref{thm:Bose_sym_complexity_results} can be efficiently computed using only these entries. Thus, we provide a polynomial-time algorithm to determine the value of a two-player non-local game up to an additive error.

\begin{figure}[!htbp]
    \centering
    \input{bose_vs_symmetric.tex}
    \caption{\textbf{Symmetric versus Bose-symmetric operators on $B_1^n$.}
In the Schur--Weyl decomposition of $B_1^n=B^{\otimes n}$, an operator that is
\emph{symmetric} (commutes with the permutation action,
$U(\pi)\,X\,U(\pi)^{\top}=X$ for all $\pi\in S_n$) is block diagonal across the
$S_n$-isotypic sectors indexed by Young diagrams $\lambda\vdash n$, i.e.
$X=\bigoplus_{\lambda}X_\lambda\in\operatorname{End}_{\mathbb{C}[S_n]}(B_1^n)$
(left). A \emph{Bose-symmetric} operator ($P_{\vee^n}\,X\,P_{\vee^n}=X$) is in
addition supported on the symmetric subspace $\vee^n(B)$ --- the
trivial-representation sector $\lambda=(n)$ --- so only the block $X_{(n)}$ survives
and $X\in\operatorname{End}(\vee^n(B))$ (right). Diagonal dots denote the remaining
sectors.}
    \label{fig:symmetry_vs_bose_symmetry_block_decomp}
\end{figure}

As an alternative procedure, following e.g.\ \cite{chee2023efficient, litjens2017semidefinite, fawzi2022hierarchy, gijswijt2009block, vallentin2009symmetry}, we extend the notions of the previous section to all of $\HSn$. Instead of limiting the discussion to the symmetric subspace, corresponding to Specht and Weyl modules associated with the partition of $n$ into a single row, we now extend our analysis to the full decomposition of $\HSn$ into Weyl modules $\Schurf{\lambda}\HS$ and Specht modules $\HS_{\lambda}$ indexed by partitions of $n$ of height at most $d_{\HS}$, i.e.\ $\HSn$ decomposes into $\bigoplus_{\lambda\vdash_{d_{\HS}} n}\lrbracket{\Schurf{\lambda}\HS}^{\oplus\dim_{\CC}\lrbracket{\SpechtH}}$,
which yields a decomposition of $\End{\CSn}{\HSn}$ into $\bigoplus_{\lambda\vdash_{\,d_{\HS}}n} \CC^{m_{\lambda}\times m_{\lambda}}$, where $m_{\lambda}$ is polynomial in $n$ for all $\lambda$, and the direct sum contains polynomially many terms in $n$. See \autoref{fig:symmetry_vs_bose_symmetry_block_decomp} for a comparison of the block matrix structures of symmetric and Bose-symmetric operators. As a result, we avoid the purification scheme and obtain polynomial-time algorithms for non-local games based on a more complex block-decomposition. Specifically, we establish the following theorem. 
\begin{theorem}\label{thm:SDP_sym_complexity_results}
 For a two-player free non-local game with $\lrvert{A}$ answers, $\lrvert{Q}$ questions, and quantum assistance of size $\lrvert{T}$, there exists a poly$\lrbracket{\epsilon^{-1}}$-costly transformation (see \autoref{sec:sdp_symmetry_reduction} for further details) to an SDP with
	 \begin{align}
	 	\mathcal{O}\lrbracket{\lrvert{AQT}^2\lrbracket{\lrvert{T}^6\frac{\log\lrvert{AQT}}{\epsilon^2}}^{\lrvert{AQT}^2}}
	 \end{align}
	 many variables and 
	 \begin{align}
	 	\mathcal{O}\lrbracket{\lrbracket{\lrvert{T}^6\frac{\log\lrvert{AQT}}{\epsilon^2}}^{\lrvert{AQT}}}
	 \end{align}
	 many positive semidefinite constraints and involving matrices of size at most
	 \begin{align}
	 	\mathcal{O}\lrbracket{\lrvert{AQT}^4\lrbracket{\lrvert{T}^6\frac{\log\lrvert{AQT}}{\epsilon^2}}^{\lrbracket{\lrvert{AQT}\lrbracket{\lrvert{AQT}-1}}}},\,
	 \end{align}
      that approximates the value of the game up to an additive $\epsilon$-error.
\end{theorem}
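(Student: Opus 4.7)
The plan is to take the SDP hierarchy from Lem.~\ref{lem:convergence_non_local_games_as_cbo_results} at symmetric level $n^{\ast} = \mathcal{O}\!\left(|T|^6 \log|AQT| / \epsilon^2\right)$, i.e.\ precisely the level at which the constrained de Finetti theorem underpinning that lemma certifies an additive error of $\epsilon$, and to block-diagonalize its matrix algebra by an SDP symmetry reduction in the spirit of \cite{chee2023efficient, litjens2017semidefinite, gijswijt2009block, fawzi2022hierarchy, vallentin2009symmetry}. The feasible set of the $n$-th level SDP consists of operators on $H_A \otimes H_B^{\otimes n}$ whose symmetric marginals satisfy the partial-trace, POVM-completeness, and fixed-point constraints arising from the cSEP reformulation of Sec.~\ref{sec:results_games_as_cbo}. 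All of these constraints, together with the objective, are $S_n$-equivariant, so an optimal solution may be averaged without altering feasibility or value, letting us restrict attention to the commutant $\mathrm{End}_{\mathbb{C}[S_n]}(H_A \otimes H_B^{\otimes n})$.

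Next I would invoke the Schur--Weyl bimodule decomposition $H_B^{\otimes n} \simeq \bigoplus_\lambda \mathbb{S}_\lambda(H_B) \otimes V_\lambda$, indexed over partitions $\lambda \vdash n$ of height at most $|B| = |AQT|$. The double-centralizer theorem combined with Schur's lemma then yields
\begin{equation*}
    \mathrm{End}_{\mathbb{C}[S_n]}\!\left(H_A \otimes H_B^{\otimes n}\right) \;\simeq\; \bigoplus_{\lambda}\; \mathrm{End}\!\left(H_A \otimes \mathbb{S}_\lambda H_B\right),
\end{equation*}
and positive semi-definiteness on the left is equivalent to positive semi-definiteness of every block on the right. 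The number of such partitions is bounded by $\binom{n+|AQT|-1}{|AQT|-1}$, while Weyl's dimension formula bounds each $\dim \mathbb{S}_\lambda H_B$ by a polynomial in $n$ of degree at most $|AQT|(|AQT|-1)/2$; substituting $n=n^\ast$ reproduces the block count, block size, and total variable count claimed in the theorem.

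The remaining structural step is to express the cSEP constraints in the symmetry-adapted basis. Each partial-trace constraint on the $B$-extensions, after $S_n$-symmetrization, becomes a linear map between the $\lambda$-block at level $n$ and the $\mu$-blocks at level $n-1$ with $\mu$ obtained from $\lambda$ by removing a single corner box; the POVM-completeness and fixed-point constraints act only on $H_A$ or diagonally across blocks; and the objective is linear in the symmetrized $\rho_{AB}$ marginal. All required couplings can be indexed by semistandard and standard Young tableaux of shape $\lambda$, both of which admit enumerations polynomial in $n^\ast$ and $|AQT|$.

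The principal obstacle is algorithmic rather than structural: writing the SDP in the canonical basis of $H_B^{\otimes n}$ and then performing the Schur transform would require exponentially many operations in $n^\ast$. Following the strategy of \cite{chee2023efficient, litjens2017semidefinite, gijswijt2009block, fawzi2022hierarchy, vallentin2009symmetry}, I would construct the symmetry-adapted blocks directly, bypassing the canonical basis altogether: partitions and tableaux are enumerated combinatorially, single-box branching coefficients are evaluated from closed-form formulas, and the lifted cSEP constraint and cost data are assembled block by block. The hard part is verifying that every such primitive operation, together with the lifting of the partial-trace constraints implicit in Lem.~\ref{lem:convergence_non_local_games_as_cbo_results}, runs in time $\mathrm{poly}(n^\ast, |AQT|) = \mathrm{poly}(\epsilon^{-1})$; once this is certified, combining the polynomial-time transformation with the additive-$\epsilon$ guarantee of Lem.~\ref{lem:convergence_non_local_games_as_cbo_results} delivers the stated complexity bounds for approximating $w_{Q(T)}(V,\pi)$.
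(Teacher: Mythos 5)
Your proposal is correct and follows essentially the same route as the paper: use the symmetric (non-Bose) hierarchy $SDP_n(T,V,\pi)$ at level $n^{\ast}=\mathcal{O}\lrbracket{|T|^6\log|AQT|/\epsilon^2}$ from Lem.~\ref{lem:convergence_non_local_games_as_cbo_results}; decompose $\End{\CSn}{(A_1Q_1T)\otimes(A_2Q_2\hat{T})_1^n}$ via Schur--Weyl and the double-centralizer theorem into $\bigoplus_{\lambda}\CC^{m_\lambda\times m_\lambda}$ indexed by partitions of height at most $|AQT|$ (the paper's Props.~\ref{prop:iso_full_amtrix_symmetry}--\ref{prop:bounding_our_case}); bound the block count by $(n+1)^{|AQT|}$ and each block dimension by $|AQT|(n+1)^{|AQT|(|AQT|-1)/2}$; recast the non-signaling partial-trace constraints via $S_n\downarrow S_{n-1}$ branching (as in the paper's Lem.~\ref{lem:restriction_to_invariant_subspace} and Lem.~\ref{lem:efficiency_of_trafo_sdp_sym}); and assemble the symmetry-adapted blocks directly without passing through the canonical basis, which is the content of the paper's Prop.~\ref{prop:efficient_computation_explicit_part} and the coordinate-ring construction following \cite{gijswijt2009block,polak2020new,chee2023efficient}. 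You correctly flag that the remaining work is certifying every primitive operation runs in $\text{poly}(n^{\ast},|AQT|)$ time, which is exactly the step the paper discharges in Lem.~\ref{lem:efficiency_of_trafo_sdp_sym}.
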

This represents a polynomial-time algorithm with respect to $\epsilon^{-1}$ for determining the value of a fixed-size free non-local game up to an additive $\epsilon$-error, thereby offering an alternative efficient procedure. In contrast to the Bose-symmetry-based algorithm, the symmetry variant does not require a purification step and therefore outperforms the former when the local dimensions \,---\, i.e.,\ $\lrvert{A}, \lrvert{Q}$ and $\lrvert{T}$ \,---\, are large. However, this exchange-symmetry-based algorithm can be implemented only after constructing an orthonormal basis for each Weyl module in the decomposition, a cumbersome step absent in the Bose-symmetric counterpart.

For readers’ convenience, \autoref{fig:overview_optimization_problems} provides an overview of the optimization problems considered in this work.
 \begin{figure}
\centering
\begin{tikzpicture}[
  edge from parent fork down,
  level distance=2cm,
  sibling distance=7cm,
  every node/.style={
    rectangle,
    draw,
    minimum width=2.5cm,
    minimum height=1.2cm,
    align=center,
    font=\small
  },
  edge from parent/.style={
    draw,
  },
  shorten >=2pt
]

\node {Non-local games \\ $\displaystyle w_{Q(T)}(V,\pi)$ \\ \autoref{eqn:numerical_free_games}, p.~\pageref{eqn:numerical_free_games}} 
  child [
  edge from parent path={
    (\tikzparentnode.south) edge[<->, double] (\tikzchildnode.north)
  }
] {node (csep) {$w_{Q(T)}(V,\pi)$ as cSEP \\ \autoref{eqn:cSEP_non_local_main}, p.~\pageref{eqn:cSEP_non_local_main}} 
    child [
  edge from parent path={
    (\tikzparentnode.south) edge[->, double] (\tikzchildnode.north)
  }
] {node (sdp) {$\mathrm{SDP}_n(V,\pi)$ \\ \autoref{eqn:sdp_free_game_definetti_1}, p.~\pageref{eqn:sdp_free_game_definetti_1}}
      child [
  edge from parent path={
    (\tikzparentnode.south) edge[<->, double] (\tikzchildnode.north)
  }
] {node {$\Psi\lrbracket{\mathrm{SDP}_n(V,\pi)}$ \\ \autoref{eqn:efficient_sdp_free_game}, p.~\pageref{eqn:efficient_sdp_free_game}}}
    }
    child [
  edge from parent path={
    (\tikzparentnode.south) edge[->, double] (\tikzchildnode.north)
  }
] {node (bose_sdp) {$\mathrm{SDP}^{\mathrm{Bose}}_n(V,\pi)$ \\ \autoref{eqn:Bose_SDP_non-local_games}, p.~\pageref{eqn:Bose_SDP_non-local_games}}
      child [
  edge from parent path={
    (\tikzparentnode.south) edge[<->, double] (\tikzchildnode.north)
  }
] {node {$\Psi\lrbracket{\mathrm{SDP}^{\mathrm{Bose}}_n(V,\pi)}$ \\ \autoref{eqn:reduced_Bose_SDP}, p.~\pageref{eqn:reduced_Bose_SDP}}}
    }
  };

\draw[->, double]
  (sdp.west) to[out=150, in=210, looseness=1.8]
  node[pos=0.15, above, rectangle, draw, fill=white, font=\scriptsize, inner sep=2pt, xshift=-25pt, yshift=15pt] {De Finetti\\ \autoref{lem:approximate_quantum_de_finetti}, p. \pageref{lem:approximate_quantum_de_finetti}}
  (csep.west);

\draw[->, double]
  (bose_sdp.east) to[out=30, in=-30, looseness=1.8]
  node[pos=0.15, above, rectangle, draw, fill=white, font=\scriptsize, inner sep=2pt, xshift=25pt, yshift=15pt] {Bose de Finetti\\
   \autoref{lem:bose-symmetric_deFinetti}, p. \pageref{lem:bose-symmetric_deFinetti}}
  (csep.east);

\end{tikzpicture}
\caption{Overview of optimization problems arising from non-local games and their semidefinite programming (SDP) relaxations. We formulate the value $w_{Q(T)}(V, \pi)$ as a constrained quantum separability problem. Two outer approximation hierarchies are introduced: $\mathrm{SDP}_n(V,\pi)$, based on optimizations over symmetric (i.e., $n$-extendable) states, and $\mathrm{SDP}_n^{\mathrm{Bose}}(V, \pi)$, based on Bose-symmetric states. Convergence of both hierarchies to  $w_{Q(T)}(V, \pi)$ is guaranteed by corresponding constrained de Finetti theorems. Each hierarchy admits an efficient reduction to a symmetry-adapted form. Arrows indicate the direction of logical implication between objects.}
\label{fig:overview_optimization_problems}
\end{figure}


\subsection{Outlook}

Future applications of the techniques developed in this work extend to other problems in quantum information theory that can be formulated as instances of the constrained separability problem. In the context of approximate quantum error correction \cite{berta2021semidefinite}, the novel constrained Bose-symmetric hierarchy introduced here may be compared to existing symmetry reduction approaches, such as those proposed in \cite{chee2023efficient}. Conversely, following our concurrent work \cite{kossmann2025symmetric}, we ask whether an improved complexity for the approximation algorithm can be achieved by combining specific game symmetries \,---\, arising from the structure of the rule function \,---\, with the exchange symmetry inherent to $n$-extendable states. Furthermore, systematically benchmarking competing numerical methods for constrained separability problems against our Bose-symmetric hierarchy in various concrete applications will offer valuable insights into their relative performance. 

In quantum information theory, other variants of Schur-Weyl duality arise. For instance, in \cite{gross2021schur}, the authors established a duality between the Clifford group and the stochastic orthogonal group, along with a corresponding de Finetti theorem that enables a converging SDP hierarchy \cite{belzig2024studying}, akin in spirit to those considered in this work. While these methods are particularly useful in the study of stabilizer codes, the associated procedures remain computationally expensive. Although some symmetry reduction techniques exist \cite{heinrich2019robustness}, it would be of interest to identify an analogous concept to Bose-symmetry, along with a corresponding hierarchy, that leads to improved complexity results.

	
\subsection*{Acknowledgments}

JZ thanks Tobias Rippchen, Steven Kim and Nikolaos Louloudis for insightful discussions at numerous stages of this work, and Hoang Ta for discussions regarding SDP symmetry reduction. The authors thank Thies Ohst, Mateus Araújo and Maarten Wegewijs for discussions on quantum steering. JZ thanks Max Schrauwen for helpful comments on an earlier version of the manuscript, especially for identifying typos and minor errors. JZ, GK and MB acknowledge support from the Excellence Cluster - Matter and Light for Quantum Computing (ML4Q) and funding by the European Research Council (ERC Grant Agreement No. 948139). OF acknowledges funding by the European Research Council (ERC Grant AlgoQIP, Agreement No. 851716). The authors acknowledge the use of Anthropic’s Claude Fable 5 \cite{claude} to improve the presentation and exposition of the final manuscript. All proof ideas and the initial draft of the manuscript were developed by the authors.

\section{Non-local games as constrained separability problems}
\label{sec:non_lacal_games_as_cbos}\label{sec:non_local_games_as_cbos} 

In this section, we reformulate free two-player non-local games as constrained bipartite separability problems. See \autoref{sec:cbo} for an overview of problems of that form. The structure of this section is as follows: First, we demonstrate that the problem of determining the value of a non-local game with bounded quantum assistance can be reformulated as a bipartite constrained separability problem. Next, similar to \cite{berta2021semidefinite}, we outline the construction of a corresponding semidefinite programming hierarchy of outer approximations and establish its convergence, leveraging a new bipartite de Finetti theorem, which we state in \autoref{sec:cbo}.

\subsection{Notation and preliminaries}
Let $V$ be a finite-dimensional complex vector space. We write $V^*$ for its dual space. Unless stated otherwise, matrix transposition with respect to the standard (computational) basis is denoted by $(\cdot)^T$. Henceforth, we denote the $n$-fold tensor product of a finite-dimensional Hilbert space $\HS$ by $\HS^n$, or equivalently, $\HS_1^n$. Furthermore, $\HS_i^j:= \HS_i\otimes\ldots\otimes \HS_j$ with $\HS_i\simeq\HS_j$ for any $i,j\in\lrrec{n}$. The Banach algebra of bounded linear operators from $\HS$ to $\HS$ is denoted as $\mathcal{B}\lrbracket{\HS}$ and, correspondingly, $\mathcal{S}\lrbracket{\HS}\subseteq \mathcal{B}\lrbracket{\HS}$ denotes state space. When not specified otherwise, we denote by $(\cdot)^\dagger$ the adjoint operation with respect to the standard Hilbert space inner product. For $X\in\mathcal{B}(\HS)$, let 
\begin{align}
    \operatorname{ran}(X):= X\HS = \lrbrace{ X\ket{\psi} \,:\, \ket{\psi} \in \HS} \subseteq \HS
\end{align}
denote its range. The kernel of $X$ is 
\begin{align}
    \ker(X):=\lrbrace{\ket{\psi} \in \HS\,:\, X\ket{\psi}= 0}.
\end{align}
For the support of $X$ we write $\operatorname{supp}(X):=\ker(X)^\perp$, where $(\cdot)^\perp$ denotes the orthogonal complement w.r.t.\ the inner product on $\HS$. Moreover, for $X\succcurlyeq 0$,
\begin{align}
    \operatorname{supp}(X) =\HS \Leftrightarrow \ker(X)=\lrbrace{0} \Leftrightarrow X\succ 0.
\end{align}
Thus, $X\succ 0$ has full support on $\HS$. By $\rho\in \mathcal{S}\lrbracket{\HS}$ we always denote a quantum state, i.e.\ a normalized ($\Tr\lrrec{\rho}=1$) positive semidefinite ($\rho\succcurlyeq 0$) operator. Let $A\otimes B$ be a bipartite Hilbert space. We write $\mathrm{SEP}(A:B)$ for the set of separable states on $A\otimes B$, and $\mathrm{cSEP}(A:B)$ for the set of constrained separable states. Analogously, $\mathrm{PROD}(A:B)$ and $\mathrm{cPROD}(A:B)$ denote the sets of product states and constrained product states, respectively. For any operator $X_{\HS_1^2}\in\mathcal{B}\lrbracket{\HS_1^2}$, $X_{\HS_1}:=\Tr_{\HS_2}\lrbracket{X_{\HS_1^2}}$ with partial trace $\Tr_{\HS_2}\lrbracket{\cdot}=\sum_{i\in\lrrec{\lrvert{\HS_2}}}\lrbracket{\mathcal{I}_{\HS_1}\otimes\bra{i}_{\HS_2}}\lrbracket{\cdot}\lrbracket{\mathcal{I}_{\HS_1}\otimes\ket{i}_{\HS_2}}$. When clear from context, system indices are omitted, and identity operations on unaffected subsystems are suppressed. A classical-quantum state is of the form $\rho_{AZ} = \sum_{z \in Z} \ket{z}\bra{z} \otimes \rho_A^z$, 
where $\lrbrace{\ket{z}}_{z\in Z}$ is an orthonormal basis for the classical register $Z$. For a quantum state $\rho_{AB}$ and a POVM element $M_{B}^{z}$ corresponding to outcome $z\in Z$ of the POVM 
\begin{align}
\begin{split}
     \mathcal{M}_{B\rightarrow Z} \,:\, \mathcal{B}(\HS_B) &\rightarrow \mathcal{B}(\HS_Z)\\
      \rho_B &\mapsto \sum_{z\in Z}\Trr{}{M_{B}^{z}\rho_B}\ket{z}\bra{z}_Z,\,
\end{split}
\end{align}
we denote by 
\begin{align}
    \rho_{A\vert z}=\frac{\Trr{B}{\lrbracket{\mathbb{1}_A\otimes M_{B}^{z}}\rho_{AB}}}{\Trr{}{\lrbracket{\mathbb{1}_A\otimes M_{B}^{z}}\rho_{AB}}}
\end{align}
the post-measurement state obtained after measuring $\rho_{AB}$ and obtaining outcome $z$. When the probability distribution $p(z)$ is clear from context, we adopt the shorthand notation $\mathbb{E}_{z}$ to denote the expectation over $z$, i.e.\, $\mathbb{E}_{z}\lrbrace{\cdot}=\sum_{z\in Z} p(z)(\cdot)$. In particular, the marginal state on system $A$ can be expressed as $\rho_A=\mathbb{E}_{z}\lrbrace{\rho_{A\vert z}}$. The Schatten $p$-norm is denoted by $\lVert\cdot \rVert_p$ and we regularly refer to $\lVert\cdot \rVert_1$ as the trace norm and to $\lVert\cdot \rVert_\infty$ as the operator norm. Hölder's inequality states that for $A,B$ in a suitable Schatten class,
\begin{align}
    \lrvert{\Trr{}{AB}}\leq \lVert A \rVert_p\lVert B \rVert_q\,, \quad \frac{1}{p}+ \frac{1}{q}=1,\, 1\le p,q \le \infty.
\end{align}
The symmetric group on $n\in\mathbb{N}$ symbols is denoted by $S_n$. For notational convenience, we often do not distinguish between abstract group elements and their associated representations on vector spaces; the context will make the meaning unambiguous. $\HSn$ admits a natural (left) action of $S_n$ given by 
\begin{align}
    \pi\cdot \lrbracket{\bigotimes_{i=1}^nh_i}=\bigotimes_{i=1}^nh_{\pi^{-1}(i)},\,\hspace{0.5cm} h_i\in\HS, \forall\pi\in S_n.\, 
\end{align}
We denote the unitary permutation representation
\begin{align}
    \begin{split}
        U_{\HS_1^n} \,:\, S_n&\rightarrow \mathcal{B}\lrbracket{\HS_1^n}\\
        \pi &\mapsto U_{\HS_1^n}\lrbracket{\pi} =  \sum_{i_1,\ldots,i_n\in\lrrec{d_{\HS}}}\ket{i_{\pi^{-1}(1)}, \ldots, i_{\pi^{-1}(n)}}\bra{i_1,\ldots, i_n}\,,
    \end{split}
\end{align}
mapping permutations to permutation matrices. Since the $U_{\HS_1^n}\lrbracket{\pi}$ are real-valued we have $U_{\HS_1^n}\lrbracket{\pi}^\dagger = U_{\HS_1^n}\lrbracket{\pi}^T$. We reserve this notation for permutation representations of the symmetric group. Let $A\otimes B$ be a Hilbert space. For $n\in\mathbb{N}$ copies, we identify 
\begin{align}
    \lrbracket{A\otimes B}^{\otimes n}\cong A^{\otimes n}\otimes B^{\otimes n} = A_1^n\otimes B_1^n =(AB)_1^n. 
\end{align}
Under this identification, the natural permutation representation of $S_n$ factorises as
\begin{align}
    U_{\lrbracket{AB}_1^n}(\pi)=U_{A_1^n}(\pi)\otimes U_{B_1^n}(\pi).
\end{align}
Indeed, this follows from the tensor-product structure and from the fact that permutation unitaries act by permuting tensor factors on computational basis vectors. An operator $X_{B_1^n}\in\mathcal{B}\lrbracket{B_1^n}$ is symmetric if 
\begin{align}
    U_{B_1^n}\lrbracket{\pi} X_{B_1^n}U^T_{B_1^n}\lrbracket{\pi} = X_{B_1^n},\,\hspace{0.5cm} \forall\pi\in S_n.\,
\end{align}
Furthermore, $X_{AB_1^n}\in\mathcal{B}\lrbracket{AB_1^n}$ is symmetric w.r.t.\ $A$ if
\begin{align}
    \lrbracket{\mathbb{1}_{A}\otimes U_{B_1^n}\lrbracket{\pi}}X_{AB_1^n} \lrbracket{\mathbb{1}_{A}\otimes U^T_{B_1^n}\lrbracket{\pi}}=X_{AB_1^n},\,\hspace{0.5cm} \forall\pi\in S_n.\,
\end{align}
In general for a finite or compact group $G$, the algebra of all $G$-equivariant maps from $\HS$ to $\HS$ is 
\begin{align}
    \End{\CC\lrrec{G}}{\HS}=\lrbrace{T\in \End{}{\HS} \,\mid\, T(g\cdot v)= g\cdot T(v),\, \forall g\in G,\, \forall v\in \HS}.
\end{align}
We will often abbreviate a group representation by $g\cdot v$ for any $g$ and $v$. When the action on $A$ is trivial, we suppress it and simply write $\End{\CSn}{A\otimes B_1^n}$ for the space of endomorphisms on $A\otimes B_1^n$ which are $S_n$-equivariant w.r.t.\ $A$. For an introduction to additional relevant mathematical concepts, we refer the reader to \autoref{sec:representation_theory}.
Furthermore, we define the minimal distortion 
        \begin{align}\label{eqn:measurement_distortion_AB}
            \begin{split}
                f(A,B) &:= \inf_{\mathcal{M}_A,\mathcal{M}_B} \,\max_{\substack{\zeta_{AB}\in \mathcal{B}\lrbracket{\HS_{AB}}\\\zeta_{AB}^\dagger = \zeta_{AB} \\ \zeta_A = 0,\ \zeta_B = 0}} \frac{\Vert \zeta_{AB} \Vert_1 }{\Vert \lrbracket{\mathcal{M}_A\otimes \mathcal{M}_B}\zeta_{AB}\Vert_1}\stackrel{\text{\cite{watrous2018theory}}}\leq 18\sqrt{\lrvert{A}\lrvert{B}},
            \end{split}
        \end{align}
        where $\mathcal{M}_A,\, \mathcal{M}_B$ are appropriate informationally complete measurements. Correspondingly, the minimal distortion with side information is defined as
        \begin{align}\label{eqn:measurement_distortion_A}
            \begin{split}
                f(B\vert \cdot)&:= \inf_{\mathcal{M}_B} \,\max_{\substack{\zeta_{AB}\in \mathcal{B}\lrbracket{\HS_{AB}}\\ \zeta_{AB}^\dagger = \zeta_{AB} \\ \zeta_A = 0,\ \zeta_B = 0}} \frac{\Vert \zeta_{AB} \Vert_1 }{\Vert \lrbracket{\mathcal{I}_A\otimes \mathcal{M}_B}\zeta_{AB}\Vert_1}\stackrel{\text{\cite{jee2020quasi}}}{\leq} 2\lrvert{B}.\,
            \end{split}
        \end{align}
        
\subsection{Approximating non-local games}

Our first step is to derive a block-diagonal formulation equivalent to $w_{Q(T)}(V,\pi)$ in \autoref{eqn:numerical_free_games}. To this end, for each $i=1,2$, we introduce classical systems $A_i$ and $Q_i$ equipped, respectively, with orthonormal bases
\begin{align}
    \lbrace \ket{a_i}\rbrace_{a_i\in A_i} \quad \text{ and }\quad \lbrace \ket{q_i}\rbrace_{q_i\in Q_i}    
\end{align}
and tensor these classical registers with the corresponding quantum systems. We choose to normalize $D_{A_2Q_2\hat{T}}$, but not $E_{A_1Q_1T}$, and obtain
    \begin{align}\label{eqn:deFinetti_CBO_free_games}
        \begin{split}
             w_{Q(T)}(V,\pi) = &\lvert T\rvert\max_{(\rho,\, E,\,D)}\hspace{0.5cm}\Tr\left[\left(V_{A_1A_2Q_1Q_2}\otimes\rho_{T\hat{T}}\right)\left(E_{A_1Q_1T}\otimes D_{A_2Q_2\hat{T}}\right)\right]\\
              \text{s.t.}\hspace{1.5cm} &\rho_{T\hat{T}} \succcurlyeq 0,\, \hspace{1cm}\Tr\left[\rho_{T\hat{T}}\right]=1,\,\\
              &E_{A_1Q_1T} =\sum_{a_1,q_1}\pi_1(q_1)\ket{a_1q_1}\bra{a_1q_1}_{A_1Q_1}\otimes E_T(a_1\vert q_1)\succcurlyeq 0,\,\\
              &E_{Q_1T} =\sum_{q_1}\pi_1(q_1)\ket{q_1}\bra{q_1}_{Q_1}\otimes\mathbb{1}_{T},\,\\
              &D_{A_2Q_2\hat{T}} =\sum_{a_2,q_2}\pi_2(q_2)\ket{a_2q_2}\bra{a_2q_2}_{A_2Q_2}\otimes\frac{D_{\hat{T}}(a_2\vert q_2)}{\lvert \hat{T}\rvert}\succcurlyeq 0,\,\\
              &D_{Q_2\hat{T}} =\sum_{q_2}\pi_2(q_2)\ket{q_2}\bra{q_2}_{Q_2}\otimes\frac{\mathbb{1}_{\hat{T}}}{\lvert \hat{T}\rvert},\,\\
        \end{split}
    \end{align}
where $V_{A_1A_2Q_1Q_2}:=\sum_{q_1,q_2}\sum_{a_1,a_2}V(a_1,a_2,q_1,q_2)\ket{a_1,a_2,q_1,q_2}\bra{a_1,a_2,q_1,q_2}_{A_1A_2Q_1Q_2}$ is a diagonal matrix with entries given by the rule function. We reformulate the problem above into the following variant of a cSEP problem.
\begin{lemma}\label{lem:non_local_games_as_cbo_sym}
The value of a two-player free non-local game with $\lrvert{A}$ many answers and $\lrvert{Q}$ many questions and quantum assistance of size $\lrvert{T}$ is given by the following bipartite constrained separability problem
\begin{align}\label{eqn:cSEP_non_local_main}
    \begin{split}
                 w_{Q(T)}(V, \pi) = &\lvert T\rvert\max_{(\alpha, D)}\hspace{0.5cm}\Tr\left[\left(V_{A_1A_2Q_1Q_2}\otimes S_{\tilde{T}\hat{T}}\right)\left(\alpha_{A_1Q_1\tilde{T}}\otimes D_{A_2Q_2\hat{T}}\right)\right]\\
                 \\
                 \text{s.t. } &\,\alpha_{A_1Q_1\tilde{T}} = \sum_{a_1,q_1}\pi_1(q_1)\ket{a_1q_1}\bra{a_1q_1}_{A_1Q_1}\otimes \alpha_{\tilde{T}}(a_1\vert q_1)\succcurlyeq 0,\,\\
        &  \alpha_{Q_1\tilde{T}}=\sum_{q_1}\pi_1(q_1)\ket{q_1}\bra{q_1}_{Q_1}\otimes\alpha_{\tilde{T}},\,\hspace{1cm} \Trr{}{\alpha_{A_1Q_1\tilde{T}}}=1\\
        & D_{A_2Q_2\hat{T}} =\sum_{a_2,q_2}\pi_2(q_2)\ket{a_2q_2}\bra{a_2q_2}_{A_2Q_2}\otimes\frac{D_{\hat{T}}(a_2\vert q_2)}{\lvert T\rvert}\succcurlyeq 0,\,\\
              &D_{Q_2\hat{T}} =\sum_{q_2}\pi_2(q_2)\ket{q_2}\bra{q_2}_{Q_2}\otimes\frac{\mathbb{1}_{\hat{T}}}{\lvert T\rvert},\,\\
            \end{split}
\end{align}
 where $S_{\tilde{T}\hat{T}}$ is the swap operator and $\lrvert{T}=\lrvert{\tilde{T}}=\lrvert{\hat{T}}$.
\end{lemma}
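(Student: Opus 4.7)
The plan rests on two ingredients: the Schr\"odinger--GHJW steering correspondence, which absorbs the pair (shared state, Alice's POVM) into a state assemblage living on Alice's side, and the swap identity $\Trr{}{\lrbracket{X_{\tilde T}\otimes Y_{\hat T}}S_{\tilde T\hat T}}=\Trr{}{XY}$, which compensates for the removal of the state by inserting a swap into the objective. The lemma will be established by exhibiting objective-preserving conversions between feasible points of \eqref{eqn:deFinetti_CBO_free_games} and \eqref{eqn:cSEP_non_local_main} in both directions; since Bob's data $D_{A_2Q_2\hat T}$ is identical in both problems, only the Alice-side data needs to be transformed.

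For the inequality $w_{Q(T)}(V,\pi)\leq \eqref{eqn:cSEP_non_local_main}$, I start from a feasible triple $(\rho_{T\hat T}, E, D)$ of \eqref{eqn:deFinetti_CBO_free_games}, relabel the $\hat T$-factor of $\rho$ as $\tilde T$, and define the state assemblage
\begin{align*}
\alpha_{\tilde T}(a_1\vert q_1) := \Trr{T}{\lrbracket{E_T(a_1\vert q_1)\otimes \mathcal{I}_{\tilde T}}\rho_{T\tilde T}},
\end{align*}
packaged into $\alpha_{A_1Q_1\tilde T}$ in exactly the block form required by \eqref{eqn:cSEP_non_local_main}. Positivity of each $\alpha_{\tilde T}(a_1\vert q_1)$ is immediate, POVM completeness $\sum_{a_1}E_T(a_1\vert q_1)=\mathcal{I}_T$ forces the $q_1$-independent marginal $\alpha_{\tilde T}=\rho_{\tilde T}$ (hence the prescribed block structure of $\alpha_{Q_1\tilde T}$), and $\Trr{}{\rho_{T\hat T}}=1$ gives $\Trr{}{\alpha_{A_1Q_1\tilde T}}=1$. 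A direct computation using the swap identity then shows
\begin{align*}
\Trr{}{\rho_{T\hat T}\lrbracket{E_T(a_1\vert q_1)\otimes D_{\hat T}(a_2\vert q_2)}} = \Trr{}{S_{\tilde T\hat T}\lrbracket{\alpha_{\tilde T}(a_1\vert q_1)\otimes D_{\hat T}(a_2\vert q_2)}},
\end{align*}
so the two objectives coincide term-by-term after summing against $V$ and $\pi$ and reinstating the prefactor $\lvert T\rvert$.

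For the reverse inequality, I realize any feasible $\alpha$ as a genuine steering assemblage. The $q_1$-independent marginal $\alpha_{\tilde T}$ is a valid density operator; take the canonical purification $\ket{\psi}_{T\tilde T}=\lrbracket{\mathcal{I}_T\otimes\sqrt{\alpha_{\tilde T}}}\sum_i\ket{ii}_{T\tilde T}$, set $\rho_{T\tilde T}:=\ket{\psi}\bra{\psi}$, and define Alice's POVM via the GHJW Ansatz
\begin{align*}
E_T(a_1\vert q_1) := \alpha_{\tilde T}^{-1/2}\,\alpha_{\tilde T}(a_1\vert q_1)^{\top}\,\alpha_{\tilde T}^{-1/2},
\end{align*}
where the transpose is taken in the eigenbasis of $\alpha_{\tilde T}$. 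The transpose trick $(X\otimes \mathcal{I})\sum_i\ket{ii}=(\mathcal{I}\otimes X^{\top})\sum_i\ket{ii}$ then yields $\Trr{T}{\lrbracket{E_T(a_1\vert q_1)\otimes \mathcal{I}_{\tilde T}}\rho_{T\tilde T}}=\alpha_{\tilde T}(a_1\vert q_1)$; positivity of $E_T(a_1\vert q_1)$ follows from that of $\alpha_{\tilde T}(a_1\vert q_1)^{\top}$, and completeness from the no-signalling sum rule $\sum_{a_1}\alpha_{\tilde T}(a_1\vert q_1)=\alpha_{\tilde T}$. Relabeling $\tilde T\to\hat T$ produces a triple feasible for \eqref{eqn:deFinetti_CBO_free_games} whose value matches by the same swap computation run in reverse.

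The one genuinely technical subtlety, and what I expect to be the main obstacle, is the rank-deficient case of $\alpha_{\tilde T}$. The formal inverse $\alpha_{\tilde T}^{-1/2}$ is then defined only on $\mathrm{supp}(\alpha_{\tilde T})$, and $\sum_{a_1}E_T(a_1\vert q_1)$ equals only the support projector rather than $\mathcal{I}_T$. I handle this in the standard fashion: interpret $\alpha_{\tilde T}^{-1/2}$ as a Moore--Penrose pseudoinverse, and complete each POVM by dumping the residual $\mathcal{I}_T-\sum_{a_1}E_T(a_1\vert q_1)$, which lives entirely in $\mathrm{ker}(\alpha_{\tilde T})$, into an arbitrary fixed outcome. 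Since $\rho_{T\tilde T}$ has zero weight on $\mathrm{ker}(\alpha_{\tilde T})$, this kernel completion leaves the assemblage identity intact, and the two optimization problems are shown to take the same value.
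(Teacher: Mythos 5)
Your proof is correct and follows essentially the same steering-plus-swap strategy as the paper's Appendix~B proof: read off a state assemblage from $(\rho,E)$ in one direction, and reconstruct a purification together with pretty-good (GHJW) POVMs via the transpose trick in the other, with the swap identity reconciling the two objective forms. Your explicit treatment of the rank-deficient case via the Moore--Penrose pseudoinverse and kernel completion is a useful bit of care that the paper leaves implicit in its construction of $F_T(a_1|q_1)=\sigma_T^{-1/2}\alpha_T(a_1|q_1)^{\top}\sigma_T^{-1/2}$, and you also collapse the paper's two-step argument (an intermediate lemma with the constraint $\alpha_{Q_1\tilde T}=\sum_{q_1}\pi_1(q_1)\ket{q_1}\bra{q_1}\otimes\sum_{a_1}\alpha_{\tilde T}(a_1|q_1{=}1)$, followed by a separate proof that this is equivalent to the averaged form) into one — which is fine, since comparing $Q_1$-diagonal blocks of the averaged constraint directly yields the sum rule $\sum_{a_1}\alpha_{\tilde T}(a_1|q_1)=\alpha_{\tilde T}$ you invoke.
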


The proof is deferred to \autoref{sec:quantum_steering}. We observe that the optimization over $\alpha_{A_1Q_1\tilde{T}}\otimes D_{A_2Q_2\hat{T}}$ can be equivalently reformulated as an optimization over a finite convex mixture of these terms. It is important to emphasize that this yields an equivalent optimization problem only if the constraints are enforced individually for each summand in the mixture (see \autoref{sec:cbo} for additional details). Crucially, our reformulation as a bipartite problem circumvents several challenges inherent to the tripartite formulation analyzed in \cite{jee2020quasi}. However, the constraints appearing in \autoref{eqn:cSEP_non_local_main} exhibit a more intricate structure than those considered in \cite{jee2020quasi}, or indeed in any previously studied instances within the cSEP framework. To handle these more complex constraints, we introduce a broader class of cSEP problems and extend the methodological framework originally developed in \cite{berta2021semidefinite} accordingly. A detailed discussion of this extension is provided in \autoref{sec:cbo}.

Motivated by \cite{berta2021semidefinite}, we introduce an SDP hierarchy (see  \autoref{sec:cbo} and \autoref{sec:sdp}) of outer approximations for our formulation and prove its convergence. To this end, we use a game-theoretic reformulation that goes beyond the frameworks considered in \cite{Quintino2014, mateus_araujo, tavakoli2024semidefinite}, since those formulations do not directly yield the convergence guarantees required here. As any separable state admits an $n$-symmetric extension for any $n\in\mathbb{N}$, we can formulate a hierarchy of necessary conditions in terms of symmetric states. Concretely, the bipartite state $\rho_{\lrbracket{A_1Q_1T}\lrbracket{A_2Q_2\hat{T}}}$ is said to be $n$-extendable if there exists a state $\rho_{(A_1Q_1T)(A_2Q_2\hat{T})_1^n}$, which is symmetric over $(A_2Q_2\hat{T})_1^n$ w.r.t.\ $(A_1Q_1T)$ and which, upon tracing out any $n-1$ of the copy-equivalent $(A_2Q_2\hat{T})_i$, recovers $\rho_{\lrbracket{A_1Q_1T}\lrbracket{A_2Q_2\hat{T}}}$. We can identify $(A_1Q_1T)$ with Alice and $(A_2Q_2\hat{T})_1^n$ with $n$ Bobs. Furthermore, $\rho_{(A_1Q_1T)(A_2Q_2\hat{T})_1^n}$ satisfies adapted constraints that encode the positivity requirement and the non-signaling condition between Alice and each of the $n$ Bobs. In summary, we obtain \begin{align}\label{eqn:sdp_free_game_definetti_1}
            \begin{split}
                \mathrm{SDP}_{n}&\lrbracket{T, V, \pi} = \lvert T\rvert\cdot \max_{\rho}   \Tr\left[\left(V_{A_1A_2Q_1Q_2}\otimes S_{T\hat{T}}\right)\rho_{(A_1Q_1T)(A_2Q_2\hat{T})}\right],\,\\
            \end{split}
        \end{align}
        subject to the $n$-extendability conditions
        \begin{align}\label{eqn:sdp_free_game_definetti_2}
            \begin{split}
                \rho_{(A_1Q_1T)(A_2Q_2\hat{T})} &= \Tr_{(A_2Q_2\hat{T})_2^n}\left[\rho_{(A_1Q_1T)(A_2Q_2\hat{T})_1^n}\right],\,\hspace{0.5cm}
                 \rho_{(A_1Q_1T)(A_2Q_2\hat{T})_1^n} \succeq 0,\, 
                \hspace{0.5cm} \Tr\left[\rho_{(A_1Q_1T)(A_2Q_2\hat{T})_1^n}\right] = 1,\,\\
                \rho_{(A_1Q_1T)(A_2Q_2\hat{T})_1^n} &=\lrbracket{\mathbb{1}_{A_1Q_1T}\otimes U_{\lrbracket{A_2Q_2\hat{T}}_1^n}\lrbracket{\pi}}\left(\rho_{(A_1Q_1T)(A_2Q_2\hat{T})_1^n}\right)\lrbracket{\mathbb{1}_{A_1Q_1T}\otimes U_{\lrbracket{A_2Q_2\hat{T}}_1^n}^T\lrbracket{\pi}}\hspace{0.5cm} \forall \pi\in S_n,\,\\
            \end{split}
        \end{align}
        and additional linear constraints
        \begin{align}\label{eqn:sdp_constraints}
            \begin{split}
                \Trr{A_1}{\rho_{(A_1Q_1T)(A_2Q_2\hat{T})_1^n}}&=\sum_{q_1}\pi_1(q_1)\ket{q_1}\bra{q_1}\otimes\rho_{T\lrbracket{A_2Q_2\hat{T}}_1^n},\,\\
                \Tr_{(A_2)_1}\left[\rho_{(A_2Q_2\hat{T})_1^n}\right] &= \left(\sum_{q_2}\pi_2(q_2)\ket{q_2}\bra{q_2}_{(Q_2)_1} \otimes\frac{\mathbb{1}_{\hat{T}_1}}{\lvert T\rvert}\right)\otimes \rho_{(A_2Q_2\hat{T})_2^n},\,\\
            \end{split}
\end{align}
where $\rho_{(A_2Q_2\hat{T})_1^n}:=\Trr{A_1Q_1T}{\rho_{(A_1Q_1T)(A_2Q_2\hat{T})_1^n}}$. Moreover, we have the following proposition.

\begin{proposition}[Restriction to classical--quantum states]\label{prop:restriction_to_cq_states}
Let $n\in\mathbb{N}_{\geq 1}$. For a register $C$ with distinguished orthonormal basis $\lrbrace{\ket{c}}_{c\in\lrrec{\lrvert{C}}}$, let
\begin{align}\label{eqn:pinching_channel}
    \mathcal{P}_{C}\lrbracket{\,\cdot\,}:=\sum_{c\in\lrrec{\lrvert{C}}}\lrbracket{\ket{c}\bra{c}_C\otimes\mathbb{1}}\lrbracket{\,\cdot\,}\lrbracket{\ket{c}\bra{c}_C\otimes\mathbb{1}}
\end{align}
denote the associated pinching (dephasing) channel, acting as the identity on all remaining tensor factors, and set
\begin{align}\label{eqn:pinching_channel_cl}
    \mathcal{P}_{\mathrm{cl}}:=\mathcal{P}_{A_1Q_1}\circ\mathcal{P}_{(A_2Q_2)_1}\circ\ldots\circ\mathcal{P}_{(A_2Q_2)_n}
\end{align}
on $\mathcal{B}\lrbracket{\lrbracket{A_1Q_1T}\otimes\An}$. If $\rhoge$ is feasible for $\mathrm{SDP}_n\lrbracket{T,V,\pi}$ in \autoref{eqn:sdp_free_game_definetti_1}, then $\mathcal{P}_{\mathrm{cl}}\lrbracket{\rhoge}$ is feasible and attains the same objective value. In particular, the optimization in $\mathrm{SDP}_n\lrbracket{T,V,\pi}$ may be restricted to states that are classical-quantum with respect to all question and answer registers, i.e.\ to states of the form
\begin{align}\label{eqn:cq_states_form}
    \rhoge=\sum_{\substack{a_1\in\lrrec{\lrvert{A_1}},\, q_1\in\lrrec{\lrvert{Q_1}},\,\\ \vec{a}_2\in\lrrec{\lrvert{A_2}}^n,\, \vec{q}_2\in\lrrec{\lrvert{Q_2}}^n}}\ket{a_1q_1}\bra{a_1q_1}_{A_1Q_1}\otimes\ket{\vec{a}_2\vec{q}_2}\bra{\vec{a}_2\vec{q}_2}_{(A_2Q_2)_1^n}\otimes\rho^{\lrbracket{a_1,q_1,\vec{a}_2,\vec{q}_2}}_{T\hat{T}_1^n}
\end{align}
with $\rho^{\lrbracket{a_1,q_1,\vec{a}_2,\vec{q}_2}}_{T\hat{T}_1^n}\succcurlyeq 0$, without changing the optimal value. Here $\ket{\vec{a}_2\vec{q}_2}:=\bigotimes_{i=1}^{n}\ket{a_{2,i}\,q_{2,i}}_{(A_2Q_2)_i}$, and tensor factors are regrouped according to the identification $\An\cong\lrbracket{A_2Q_2}_1^n\otimes\hat{T}_1^n$.
\end{proposition}

The proof is deferred to \autoref{sec:proof_cq_states}. 

\begin{remark}\label{rem:cq_restriction}
\autoref{prop:restriction_to_cq_states} formalizes the observation that the relaxations inherit the classicality of the question and answer registers from \autoref{eqn:cSEP_non_local_main}: states arising from strategies are of the form \autoref{eqn:cq_states_form} by construction, and coherences on the classical registers, although admitted by the constraints of $\mathrm{SDP}_n\lrbracket{T,V,\pi}$, never affect its value.
\end{remark}

The rate at which $\mathrm{SDP}_{n}(T,V,\pi)$ approximates $w_{Q(T)}(V,\pi)$ from above follows from a de Finetti-type argument. However, due to the new constraints, we need an extension of \cite[Theorem 2.3]{berta2021semidefinite} to accommodate our setting. See \autoref{lem:approximate_quantum_de_finetti} for this new variant of a constrained de Finetti theorem. 

\begin{proof}[Proof of \autoref{lem:convergence_non_local_games_as_cbo_results}]
For a given $n\in\mathbb{N}$, let $\rho_{(A_1Q_1T)(A_2Q_2\hat{T})}$ be the marginal of a state optimal for $\mathrm{SDP}_{n}\lrbracket{T, V, \pi}$. Applying the constrained de Finetti representation theorem in \autoref{lem:approximate_quantum_de_finetti} with
\begin{align}
\begin{array}{cccccc}
     A_L= A_1Q_1,& C_{A_L}=Q_1,& A_R=T,& B_L= A_2Q_2\hat{T},& C_{B_L}=Q_2\hat{T},& B_R=\CC,\\
\end{array}
\end{align}
and
\begin{align}
\begin{split}
     \begin{array}{cc}
     \Theta_{A_L\rightarrow C_{A_L}}(\cdot)=\Trr{A_1}{\cdot}, & \Upsilon_{B_L\rightarrow C_{B_L}}(\cdot) = \Trr{A_2}{\cdot},\\
      W_{C_{A_L}}=\sum_{q_1\in Q_1}\pi_1(q_1)\ket{q_1}\bra{q_1}_{Q_1},   &  K_{C_{B_L}}= \sum_{q_2\in Q_2}\pi_2(q_2)\ket{q_2}\bra{q_2}_{Q_2}\otimes\frac{\mathbb{1}_{\hat{T}}}{\lrvert{T}},
    \end{array}
\end{split}
\end{align}
certifies the existence of a separable state $\sum_{x\in\mathcal{X}}p(x)\,\rho^x_{\lrbracket{A_1Q_1T}}\otimes\rho^x_{\lrbracket{A_2Q_2\hat{T}}}$ feasible for $w_{Q(T)}(V, \pi)$ in \autoref{eqn:cSEP_non_local_main}, such that
\begin{align}
    \left\lVert \rho_{\lrbracket{A_1Q_1T}\lrbracket{A_2Q_2\hat{T}}} - \sum_{x\in\mathcal{X}}p(x)\rho^x_{\lrbracket{A_1Q_1T}}\otimes\rho^x_{\lrbracket{A_2Q_2\hat{T}}} \right\rVert_1\leq 2\cdot\lrvert{T}\sqrt{2\ln 2}\sqrt{\frac{\log\lrbracket{\lrvert{A_1}\lrvert{Q_1}\lrvert{T}}}{n}}\,.
\end{align}
By \autoref{prop:restriction_to_cq_states}, to any marginal $\rho_{\lrbracket{A_1Q_1T}\lrbracket{A_2Q_2\hat{T}}}$ of an optimizer to  $\mathrm{SDP}_{n}\lrbracket{T, V, \pi}$, there exists a classical-quantum state attaining the same objective value. Thus, w.l.o.g.\ $\rho_{\lrbracket{A_1Q_1T}\lrbracket{A_2Q_2\hat{T}}}$ is a classical-quantum state and classical systems do not contribute to the measurement distortion bound. Moreover,
\begin{align}
\begin{split}
        \mathrm{SDP}_{n}&\lrbracket{T, V, \pi} = \lvert T\rvert\cdot \max_{\rho}   \Tr\left[\left(V_{A_1A_2Q_1Q_2}\otimes S_{T\hat{T}}\right)\rho_{(A_1Q_1T)(A_2Q_2\hat{T})}\right]\\
        &\leq w_{Q(T)}(V,\pi) + \lvert T\rvert \Tr\left[\left(V_{A_1A_2Q_1Q_2}\otimes S_{T\hat{T}}\right)\lrbracket{\rho_{(A_1Q_1T)(A_2Q_2\hat{T})}-\sum_{x\in\mathcal{X}}p(x)\,\rho^x_{\lrbracket{A_1Q_1T}}\otimes\rho^x_{\lrbracket{A_2Q_2\hat{T}}}}\right]
\end{split}
\end{align}
and by Hölder's inequality together with $\lVert V_{A_1A_2Q_1Q_2}\otimes S_{T\hat{T}}\rVert_\infty\leq 1$, 
    \begin{align}
        \begin{split}
            \left\lvert  \mathrm{SDP}_{n}(T,V,\pi) -w_{Q(T)}(V,\pi)\right\rvert
            &\leq \lrvert{T}\left\lVert \rho_{\lrbracket{A_1Q_1T}\lrbracket{A_2Q_2\hat{T}}} - \sum_{x\in\mathcal{X}}p(x)\rho^x_{\lrbracket{A_1Q_1T}}\otimes\rho^x_{\lrbracket{A_2Q_2\hat{T}}} \right\rVert_1\\
            &\leq 2\cdot\lvert  T\rvert ^2 \sqrt{2\ln 2}\sqrt{\frac{\log\lrbracket{\lrvert{A_1}\lrvert{Q_1}\lrvert{T}}}{n}}.
        \end{split}
    \end{align}
        Hence, we have $w_{Q(T)}(V,\pi) = \lim_{n\rightarrow\infty}  \mathrm{SDP}_{n}(T,V,\pi)$. Define
\begin{align}
    \mathbb{N}\ni n_{\epsilon} :=\max\lrbrace{1, \lrceil{\frac{8\ln 2\cdot\lrvert{T}^4 \log\lrbracket{\lrvert{A_1}\lrvert{Q_1}\lrvert{T}}}{\epsilon^2}}}.
\end{align}
Then, if $n\geq n_{\epsilon}$ we have
\begin{align}
    \left\lvert  \mathrm{SDP}_{n}(T,V,\pi) -w_{Q(T)}(V,\pi)\right\rvert
            &\leq \epsilon.
\end{align}
For $\lrvert{A_1}=\lrvert{A_2}=\lrvert{A}$ and $\lrvert{Q_1}=\lrvert{Q_2}=\lrvert{Q}$, in $\mathrm{SDP}_{n_{\epsilon}}(T,V,\pi)$ the optimization variable is a complex square matrix of order $\lrvert{AQT}^{n_{\epsilon} +1}$, with at most $\lrvert{AQT}^{2\lrbracket{n_{\epsilon} +1}}$ real degrees of freedom. In summary, to determine the value of a non-local free game with fixed quantum assistance up to an additive $\epsilon$-error requires solving an SDP with at most
\begin{align}
    \lrceil{\lrvert{AQT}^{2\lrbracket{\frac{8\ln 2 \lrvert{T}^4\log\lrvert{AQT}}{\epsilon^2} +1}}}, \text{ which is } \exp\lrrec{\frac{\mathcal{O}\lrbracket{\lrvert{T}^4\log^2\lrvert{AQT}}}{\epsilon^2}} \text{ as }\epsilon \rightarrow 0
\end{align}
real degrees of freedom.
\end{proof}

Rather than expressing the hierarchy in terms of the swap operator, one may follow \cite{jee2020quasi} and formulate an analogous SDP hierarchy with objective
\begin{align}
    \mathrm{SDP}^{(\Phi)}_{n}\lrbracket{T,V,\pi}=\lvert T\rvert\cdot\max_{\rho}\Trr{}{\left(V_{A_1A_2Q_1Q_2}\otimes\Phi_{T\vert \hat{T}}\right)\rho_{(A_1Q_1T)(A_2Q_2\hat{T})}},
\end{align}
where $\Phi_{T\vert \hat{T}}$ denotes the non-normalized canonical maximally entangled state. The constraints are otherwise unchanged. This formulation is often more favorable in numerical implementations. Indeed, the swap operator has eigenvalues $\pm 1$ and decomposes as the difference of the projectors onto the symmetric and antisymmetric subspaces. By contrast, $\Phi_{T\vert \hat{T}}$ is proportional to a rank-one projector onto a one-dimensional subspace of the symmetric subspace. The distinction is relevant because, while the symmetric subspace contains both product and entangled states, the antisymmetric subspace contains only entangled states. In particular, for any state $\rho_{T\hat{T}}$, the condition
\begin{align}
    \Tr\!\left[S_{T\hat{T}}\rho_{T\hat{T}}\right] < 0
\end{align}
certifies entanglement, so that $S_{T\hat{T}}$ acts as an entanglement witness \cite{Chruciski2014}. Consequently, an optimizer for $\mathrm{SDP}_n\lrbracket{T,V,\pi}$ may assign states with negative swap expectation values to losing question--answer combinations, thereby offsetting them against larger positive contributions from winning combinations. This mechanism is absent in the $\Phi$-based hierarchy, and hence one obtains
\begin{align}
\mathrm{SDP}_n\lrbracket{T,V,\pi}\geq\mathrm{SDP}^{(\Phi)}_n\lrbracket{T,V,\pi}.    
\end{align}

There is, however, a tradeoff at the level of the analytical convergence analysis. Since
\begin{align}
     \left\lVert \Phi_{T\vert \hat{T}}\right\rVert_{\infty}=\lvert T\rvert,
\end{align}
under the normalization used here, the same proof technique yields a convergence rate with a worse dependence on the local dimension $\lvert T\rvert$ than the rate obtained in \autoref{lem:convergence_non_local_games_as_cbo_results}. Moreover, although
\begin{align}
    \mathrm{SDP}^{(\Phi)}_{n}\lrbracket{T,V,\pi}\leq 1
\end{align}
for every $n\in\mathbb{N}$, the corresponding bound for the swap-based hierarchy is only
\begin{align}
        \mathrm{SDP}_{n}\lrbracket{T,V,\pi}\leq \lvert T\rvert .
\end{align}
The problem $\mathrm{SDP}_n\lrbracket{T,V,\pi}$ is, by construction, an $S_n$-invariant SDP and thus fits the framework described in \autoref{sec:symmetry_invariant_sdp}. Numerical comparisons of the SDP hierarchies introduced in this paper will be presented elsewhere.

\section{Convergent sequence of inner bounds to constrained separability problems}\label{sec:inner_sequence} 

To establish the rounding result in the context of non-local games, we first prove it within the more general cSEP framework given in \autoref{eqn:cSEP_Main} and discussed in more detail in \autoref{sec:cbo}. A schematic depiction of the rounding procedure is given in \autoref{fig:schematic_rounding}.

\begin{lemma}\label{lem:inner_sequence_general}
Consider Hilbert spaces $A=A_L\otimes A_R$ and $B=B_L\otimes B_R$ and a finite alphabet $\mathcal{X}$. Let $\mathrm{cSEP}(G)$ denote the optimal value of the constrained separability
problem
\begin{align}
    \mathrm{cSEP}(G) = \displaystyle\max_{\rho_{AB}\in \mathcal{B}(AB)} \Tr[G_{AB}\,\rho_{AB}]
\end{align}
where the optimization is over separable states
\begin{align}
    \rho_{AB}=\sum_{x\in\mathcal{X}}p(x)\rho^x_A\otimes\rho^x_B, \quad p(x)\geq 0\, \forall x\in\mathcal{X},\, \sum_{x\in\mathcal{X}}p(x)=1
\end{align}
such that, for every $x\in\mathcal{X}$,
\begin{align}
\begin{split}
\begin{array}{cc}
    \Theta_{A_L\rightarrow C_{A_L}}\lrbracket{\rho^x_A} = W_{C_{A_L}}\otimes \rho^x_{A_R}, & \Upsilon_{B_L\rightarrow C_{B_L}}\lrbracket{\rho^x_B} = K_{C_{B_L}}\otimes \rho^x_{B_R}, \\
      \Omega_{A\rightarrow A}\lrbracket{\rho^x_{A}}=\rho^x_{A}, & \Xi_{B\rightarrow B}\lrbracket{\rho^x_{B}}=\rho^x_{B}.
\end{array}
\end{split}
\end{align}
For any $n\in\mathbb{N}$, let $\mathrm{SDP}_n(G)$ (see \autoref{prop:hierarchy_outer_approx}) be the $n$-th symmetric-extension SDP relaxation of this
problem, and let $\rho^\star_{AB_1^n}$ be optimal for $\mathrm{SDP}_n(G)$. Then, for every $k\in \lrbrace{0, 1, \ldots, n-1}$ one can construct a separable state
\begin{align}\label{eqn:candidate_rounded_convex_mixture}
\sigma^{(n, k)}_{AB}=\sum_{z_1^k} p\lrbracket{z_1^k}\,\rho^\star_{A\vert z_1^k}\otimes\rho^\star_{B_{k+1}\vert z_1^k}
\end{align}
feasible for $\mathrm{cSEP}(G)$, obtained by measuring systems $B_1,\ldots,B_k$ of $\rho^\star_{AB_1^n}$ with an informationally complete measurement; for $k=0$, we define the product of marginals
\begin{align}
    \sigma^{(n, 0)}_{AB}:=\rho_A^\star\otimes \rho_{B_1}^\star=\Trr{B_1^n}{\rho^\star_{AB_1^n}}\otimes \Trr{AB_2^n}{\rho^\star_{AB_1^n}}.
\end{align}
Furthermore, for each $n\in \mathbb{N}$ there exists a $k_n\in \lrbrace{0, 1, \ldots, n-1}$ such that
\begin{align}\label{eqn:bound_inner_sequence}
    0 \leq \mathrm{cSEP}(G)-\Tr\lrrec{G_{AB}\sigma^{(n, k_n)}_{AB}} &\le \epsilon\lrbracket{\lrvert{A}, \lrvert{B}, n},
\end{align}
with 
\begin{align}\label{eqn:bound_epsilon_inner_sequence}
    \epsilon\lrbracket{\lrvert{A}, \lrvert{B}, n} \leq \left\lVert G_{AB}\right\rVert_\infty \min\lrbrace{2\lrbracket{1-\frac{1}{\min\lrbrace{\lrvert{A}, \lrvert{B}}^2}} ,\, 
\min\lrbrace{f(A,B),f(B\vert\cdot)}
\sqrt{\frac{2\ln 2\,\log \lrvert{A}}{n}}},
\end{align}
where $f(A,B)$ and $f(B\vert\cdot)$ denote the measurement distortions of \autoref{eqn:measurement_distortion_AB} and \autoref{eqn:measurement_distortion_A}. Moreover, for any $k\in\lrbrace{0, 1, \ldots, n-1}$ and any
\begin{align}
z_\star\in\operatorname{argmax}_{z_1^k}\Trr{}{G_{AB}\,\lrbracket{\rho^\star_{A\vert z_1^k}\otimes\rho^\star_{B_{k+1}\vert z_1^k}}},
\end{align}
the product state $\rho^\star_{A\vert z_\star}\otimes\rho^\star_{B_{k+1}\vert z_\star}$ is itself feasible for $\mathrm{cSEP}(G)$ and attains an objective value at least that of $\sigma^{(n, k)}_{AB}$; in particular, it satisfies \autoref{eqn:bound_inner_sequence} whenever $\sigma^{(n, k)}_{AB}$ does. In particular, if $\left\lVert G_{AB}\right\rVert_\infty\leq 1$, the factor $\left\lVert G_{AB}\right\rVert_\infty$ may be omitted. Moreover, the maximum $\mathrm{cSEP}(G)$ is attained. If, for every $n\in\mathbb{N}$,  $k_n\in\lrbrace{ 0,1,\ldots,n-1}$ is chosen such that $\sigma^{(n, k_n)}_{AB}$ satisfies \autoref{eqn:bound_inner_sequence}
\begin{align}
     \left(\sigma_{AB}^{(n,k_n)}\right)_{n\in\mathbb{N}}
\end{align}
admits a subsequence converging to an optimal solution of $\mathrm{cSEP}(G)$.
\end{lemma}
\begin{proof}
For any $n\in\mathbb{N}$, let $\rho_{AB_1^n}^{\star}$ be optimal for $\mathrm{SDP}_n(G)$; in particular, it satisfies the lifted constraints
\begin{align}\label{eqn:inner_sequence_lifted_constraints}
    \begin{split}
        \begin{array}{cc}
            \Theta_{A_L\rightarrow C_{A_L}}\lrbracket{\rho_{AB_1^n}^\star} = W_{C_{A_L}}\otimes\rho_{A_RB_1^n}^\star,  &   \Upsilon_{\lrbracket{B_L}_n\rightarrow \lrbracket{C_{B_L}}_n}\lrbracket{\rho_{B_1^n}^\star} = K_{C_{B_L}}\otimes \rho^\star_{B_1^{n-1}\lrbracket{B_R}_n},\\
           \Omega_{A\rightarrow A}\lrbracket{\rho_{AB_1^n}^\star}=\rho_{AB_1^n}^\star, & \Xi_{B_n\rightarrow B_n}\lrbracket{\rho_{B_1^n} ^\star}=\rho_{B_1^n}^\star.
        \end{array}
    \end{split}
\end{align}
By the permutation invariance of $\rho^\star_{B_1^n}$ --- inherited from the symmetry of $\rho^\star_{AB_1^n}$ w.r.t.\ $A$ by tracing out $A$ --- the $\Upsilon$- and $\Xi$-constraints in \autoref{eqn:inner_sequence_lifted_constraints}, stated for the $n$-th subsystem, hold verbatim with $B_{k+1}$ in place of $B_n$ for every $k\in\lrbrace{0, 1, \ldots, n-1}$ (swap the subsystems $k+1$ and $n$; cf.\ \autoref{prop:symmetric_states_have_symmetric_reduced_states}).
\begin{align}
    \Trr{B_1^n}{\rho_{AB_1^n}^{\star}}\otimes \Trr{AB_2^n}{\rho_{AB_1^n}^{\star}}=\rho_A^\star\otimes\rho_B^\star.
\end{align}
 Moreover, by linearity and \autoref{eqn:inner_sequence_lifted_constraints} the candidate satisfies
\begin{align}
    \begin{split}
        \Theta_{A_L\rightarrow C_{A_L}}\lrbracket{\rho_A^\star} &= \Theta_{A_L\rightarrow C_{A_L}}\lrbracket{\Trr{B_1^n}{\rho^\star_{AB_1^n}}}= \Trr{B_1^n}{\Theta_{A_L\rightarrow C_{A_L}}\lrbracket{\rho^\star_{AB_1^n}}}=\Trr{B_1^n}{W_{C_{A_L}}\otimes\rho_{A_RB_1^n}^\star},\\
        \Upsilon_{B_L\rightarrow C_{B_L}}\lrbracket{\rho_B^\star} &= \Upsilon_{\lrbracket{B_L}_1\rightarrow C_{B_L}}\lrbracket{\Trr{B_2^n}{\rho^\star_{B_1^n}}} = \Trr{B_2^n}{\Upsilon_{\lrbracket{B_L}_1\rightarrow C_{B_L}}\lrbracket{\rho^\star_{B_1^n}}}= \Trr{B_2^n}{K_{C_{B_L}}\otimes\rho^\star_{\lrbracket{B_R}_1B_2^n}},\\
         \Omega_{A\rightarrow A}\lrbracket{\rho_A^\star}&= \Omega_{A\rightarrow A}\lrbracket{\Trr{B_1^n}{\rho^\star_{AB_1^n}}} = \Trr{B_1^n}{\Omega_{A\rightarrow A}\lrbracket{\rho^\star_{AB_1^n}}}= \Trr{B_1^n}{\rho^\star_{AB_1^n}},\\
         \Xi_{B\rightarrow B}\lrbracket{\rho_B^\star}&=  \Xi_{B_1\rightarrow B_1}\lrbracket{\Trr{AB_2^n}{\rho_{AB_1^n}^{\star}}}=\Trr{B_2^n}{\Xi_{B_1\rightarrow B_1}\lrbracket{\rho_{B_1^n}^{\star}}}= \Trr{B_2^n}{\rho_{B_1^n}^\star}.
    \end{split}
\end{align}
In summary, 
\begin{align}\label{eqn:constaints_k_zero_rounded_candidate}
    \begin{split}
\begin{array}{cc}
    \Theta_{A_L\rightarrow C_{A_L}}\lrbracket{\rho_A^\star} = W_{C_{A_L}}\otimes \rho^\star_{A_R}, & \Upsilon_{B_L\rightarrow C_{B_L}}\lrbracket{\rho_B^\star} = K_{C_{B_L}}\otimes \rho^\star_{B_R}, \\
      \Omega_{A\rightarrow A}\lrbracket{\rho_A^\star}=\rho_A^\star, & \Xi_{B\rightarrow B}\lrbracket{\rho_B^\star}=\rho_B^\star,
\end{array}
\end{split}
\end{align}
and hence, $\rho_A^\star\otimes\rho_B^\star$ is feasible for $\mathrm{cSEP}(G)$. By \cite{Hall2013}, we have in general
\begin{align}
    \max_{\sigma_{AB}\in\mathcal{B}(AB)}\left\lVert \sigma_{AB}-\sigma_A\otimes\sigma_B\right\rVert_1 = 2\cdot \lrbracket{1-\frac{1}{\min\lrbrace{\lrvert{A}, \lrvert{B}}^2}}.
\end{align}
Then, with $\rho^\star_{AB}:=\Trr{B_2^n}{\rho^\star_{AB_1^n}}$ and by Hölder's inequality,
\begin{align}
\begin{split}
    \Trr{}{G_{AB}\rho_{AB}^\star}-\Trr{}{G_{AB}\lrbracket{\rho^\star_A\otimes\rho^\star_B}}&= \Trr{}{G_{AB}\lrbracket{\rho_{AB}^\star-\rho^\star_A\otimes\rho^\star_B}}\\
    &\leq \left\lVert G_{AB} \right\rVert_{\infty} \left\lVert \rho^\star_{AB}-\rho^\star_A\otimes\rho^\star_B\right\rVert_1 \\
    &\leq 2\cdot \left\lVert G_{AB} \right\rVert_{\infty}\lrbracket{1-\frac{1}{\min\lrbrace{\lrvert{A}, \lrvert{B}}^2}}.
\end{split}
\end{align}
If $n\geq 2$ and $k\in\lrbrace{1, 2, \ldots, n-1}$ the candidate is obtained from measurement-based rounding. Fix an informationally complete measurement $\mathcal{M}_{B}$ on $B$ with finite outcome alphabet $Z$. Let
\begin{align}
M_{B_1^k}^{z_1^k}=M_{B_1}^{z_1}\otimes\cdots\otimes M_{B_k}^{z_k}, \quad z_1^k\in Z^k,
\end{align}
be the corresponding POVM element on $B_1^k$. For every $z_1^k\in Z^k$, define the non-normalized post-measurement state on $A B_{k+1}$ by 
\begin{align}
\widetilde\rho^\star_{AB_{k+1}\vert z_1^k}:=\Trr{B_1^k B_{k+2}^n}{\lrbracket{\mathbb{1}_A\otimes M_{B_1^k}^{z_1^k}\otimes \mathbb{1}_{B_{k+1}^n}} \rho_{AB_1^n}^\star}.
\end{align}
Let $p\lrbracket{z_1^k}:=\Trr{}{\widetilde\rho^\star_{AB_{k+1}\vert z_1^k}}$. For outcomes $p\lrbracket{z_1^k}>0$, set 
\begin{align}
    \rho^\star_{AB_{k+1}\vert z_1^k} := \frac{\widetilde\rho^\star_{AB_{k+1}\vert z_1^k}}{p\lrbracket{z_1^k}},\quad \rho^\star_{A\vert z_1^k} :=\Trr{B_{k+1}}{ \rho^\star_{AB_{k+1}\vert z_1^k}},\quad \rho^\star_{B_{k+1}\vert z_1^k} := \Trr{A}{\rho^\star_{AB_{k+1}\vert z_1^k}}.
\end{align}
Outcomes with $p\lrbracket{z_1^k}=0$ may be ignored. By the same arguments as in the proof of the constrained de Finetti representation theorem in \autoref{lem:approximate_quantum_de_finetti}, we have for every $z_1^k\in Z^k$
\begin{align}\label{eqn:constraints_rounded_candidate_components}
            \begin{split}
                \begin{array}{cc}
                     \Theta_{A_L\rightarrow C_{A_L}}\lrbracket{\rho^\star_{A\vert z_1^k}} = W_{C_{A_L}}\otimes\rho^\star_{A_R\vert z_1^k},\, &
            \Omega_{A\rightarrow A}\lrbracket{\rho^\star_{A\vert z_1^k}}=\rho^\star_{A\vert z_1^k},\,\\
            \Upsilon_{\lrbracket{B_L}_{k+1}\rightarrow C_{B_L}}\lrbracket{\rho^\star_{B_{k+1}\vert z_1^k}} = K_{C_{B_L}}\otimes\rho^\star_{\lrbracket{B_R}_{k+1}\vert z_1^k},\, & \Xi_{B_{k+1}\rightarrow B_{k+1}}\lrbracket{\rho^\star_{B_{k+1}\vert z_1^k}}=\rho^\star_{B_{k+1}\vert z_1^k}.\,\\
                \end{array}
            \end{split}
        \end{align}
Hence, $\rho^\star_{A\vert z_1^k}\otimes \rho^\star_{B_{k+1}\vert z_1^k}$ is a feasible point of $\mathrm{cSEP}(G)$ for every $z_1^k\in Z^k$. Moreover, the convex mixture 
\begin{align}
    \sigma_{AB_{k+1}}^{(n,k)}:= \sum_{z_1^k} p\lrbracket{z_1^k}\,
\rho^\star_{A\vert z_1^k}\otimes\rho^\star_{B_{k+1}\vert z_1^k}
\end{align}
is also feasible for $\mathrm{cSEP}(G)$. For a fixed $n$, let $\lrbrace{\sigma_{AB_{k+1}}^{(n,k)}}_{k=0,\ldots, n-1}$ be the set of candidate states. By \autoref{lem:approximate_quantum_de_finetti}, there exists a $k_n\in\lrbrace{0,\ldots, n-1}$ such that 
\begin{align}\label{eqn:de_Finetti_bound_general_inner_sequence_state}
    \left\lVert \rho_{AB_{k_n+1}}^\star -  \sigma_{AB_{k_n}+1}^{(n,k_n)}\right\rVert_1 \leq \min\lbrace f(A, B), f(B\vert \cdot)\rbrace \sqrt{\frac{2\ln 2 \log \lrvert{A}}{n}}\,.
\end{align}
Since $\rho^\star_{AB_1^n}$ is permutation-invariant on $B_1,\ldots,B_n$, we may identify $B_{k+1}$ with $B_1=:B$ for every $k$. As $k_n$ is not known in advance, construct all candidates $\lrbrace{\sigma_{AB_{k+1}}^{(n,k)}}_{k=0,\ldots, n-1}$ and select the one closest to $\rho_{AB}^\star$ in $\lVert\cdot\rVert_1$-distance; denote its index again by $k$. Hence
\begin{align}
\mathrm{SDP}_n(G)=\Trr{}{G_{AB}\,\rho^\star_{AB}}=\Trr{}{G_{AB}\,\rho^\star_{AB_{k+1}}}.    
\end{align}
Then, by Hölder's inequality,
\begin{align}
\begin{split}
    0\leq \mathrm{SDP}_n(G) - \Trr{}{G_{AB}\, \sigma_{AB_{k+1}}^{(n,k)}} &= \Trr{}{G_{AB}\,\lrbracket{\rho_{AB_{k+1}}^\star-\sigma_{AB_{k+1}}^{(n,k)}}}\\
    &\leq \left\lVert G_{AB}\right\rVert_{\infty}  \left\lVert \rho_{AB_{k+1}}^\star -  \sigma_{AB_{k+1}}^{(n,k)}\right\rVert_1 \\
    &\leq \left\lVert G_{AB}\right\rVert_{\infty}\min\lbrace f(A, B), f(B\vert \cdot)\rbrace \sqrt{\frac{2\ln 2 \log \lrvert{A}}{n}},
\end{split}
\end{align}
where the last inequality uses the minimality of the selected $k$ together with \autoref{eqn:de_Finetti_bound_general_inner_sequence_state}. Moreover, since $\sigma_{AB_{k+1}}^{(n,k)}$ is feasible for $\mathrm{cSEP}(G)$, 
\begin{align}
     \Trr{}{G_{AB}\, \sigma_{AB_{k+1}}^{(n,k)}} \leq   \mathrm{cSEP}(G),
\end{align}
and thus, since $\mathrm{cSEP}(G)\leq\mathrm{SDP}_n(G)$ by \autoref{prop:hierarchy_outer_approx},
\begin{align}
\begin{split}
     0 \leq  \mathrm{cSEP}(G) - \Trr{}{G_{AB}\, \sigma_{AB_{k+1}}^{(n,k)}}&\leq \mathrm{SDP}_n(G)- \Trr{}{G_{AB}\, \sigma_{AB_{k+1}}^{(n,k)}}\\
     &\leq \left\lVert G_{AB}\right\rVert_{\infty}\min\lbrace f(A, B), f(B\vert \cdot)\rbrace \sqrt{\frac{2\ln 2 \log \lrvert{A}}{n}}.
\end{split}
\end{align}
Taking, among $\sigma^{(n,0)}_{AB}$ and this selected candidate, the one with the larger objective value yields \autoref{eqn:bound_inner_sequence} with $\epsilon$ as in \autoref{eqn:bound_epsilon_inner_sequence}, i.e., the minimum of the two derived bounds. Finally, from the set of feasible product states $\lrbrace{\rho^\star_{A\vert z_1^k}\otimes \rho^\star_{B_{k+1}\vert z_1^k}}_{z_1^k\in Z^k}$, whose convex mixture with weights $p\lrbracket{z_1^k}$ is $\sigma_{AB_{k+1}}^{(n,k)}$, choose 
\begin{align}
    z_\star\in\operatorname{argmax}_{z_1^k} \Trr{}{G_{AB}\,\lrbracket{\rho^\star_{A|z_1^k}\otimes\rho^\star_{B_{k+1}|z_1^k}}}.
\end{align}
Then,
\begin{align}
    \Trr{}{G_{AB}\, \lrbracket{\rho^\star_{A|z_\star}\otimes\rho^\star_{B_{k+1}|z_\star}}}\geq  \Trr{}{G_{AB}\, \sigma_{AB_{k+1}}^{(n,k)}}.
\end{align}
Therefore,
\begin{align}
    0\leq \mathrm{cSEP}(G) - \Trr{}{G_{AB}\, \lrbracket{\rho^\star_{A|z_\star}\otimes\rho^\star_{B_{k+1}|z_\star}}} \leq \mathrm{cSEP}(G) - \Trr{}{G_{AB}\, \sigma_{AB_{k+1}}^{(n,k)}}
\end{align}
which proves the claimed bound for the best product component. It remains to upgrade convergence of the values to convergence, along a subsequence, to an optimizer. We first note that the feasible set of $\mathrm{cSEP}(G)$ is compact. Indeed, define the local constrained state sets
\begin{align}
    \mathcal{F}_A:=
    \left\{\tau_A\in\mathcal{S}(A)\,\middle|\,\Theta_{A_L\to C_{A_L}}(\tau_A)=W_{C_{A_L}}\otimes \tau_{A_R},\;\Omega_{A\to A}(\tau_A)=\tau_A\right\},
\end{align}
and
\begin{align}
    \mathcal{F}_B:=\left\{\tau_B\in\mathcal{S}(B)\,\middle|\,\Upsilon_{B_L\to C_{B_L}}(\tau_B)=K_{C_{B_L}}\otimes\tau_{B_R},\;\Xi_{B\to B}(\tau_B)=\tau_B\right\}.
\end{align}
Since $\mathcal{S}(A)$ and $\mathcal{S}(B)$ are compact in finite dimensions and the above constraints are closed linear conditions, $\mathcal{F}_A$ and $\mathcal{F}_B$ are compact. Hence
\begin{align}
    \mathcal{F}_A\otimes \mathcal{F}_B:=\left\{\tau_A\otimes \tau_B\,\middle|\,\tau_A\in\mathcal{F}_A,\;\tau_B\in\mathcal{F}_B\right\}
\end{align}
is compact as the continuous image of $\mathcal{F}_A\times\mathcal{F}_B$. Therefore, by Carathéodory's theorem $\operatorname{conv}\left(\mathcal{F}_A\otimes\mathcal{F}_B\right)$ is compact. Consequently, the linear functional
\begin{align}
    \rho_{AB}\longmapsto \operatorname{tr}[G_{AB}\rho_{AB}]
\end{align}
attains its maximum on $\operatorname{conv}\left(\mathcal{F}_A\otimes\mathcal{F}_B\right)$. Thus an optimizer of $\mathrm{cSEP}(G)$ exists.

Now choose, for each $n\in\mathbb{N}$, an index $ k_n\in\{0,1,\ldots,n-1\}$ such that the rounded state $\sigma_{AB}^{(n,k_n)}$ satisfies
\begin{align}
    0\leq\mathrm{cSEP}(G)-\operatorname{tr}\left[G_{AB}\sigma_{AB}^{(n,k_n)}\right]\leq\epsilon(\lrvert{A},\lrvert{B},n),
\end{align}
with $\epsilon(\lrvert{A},\lrvert{B},n)$ bounded as in \autoref{eqn:bound_epsilon_inner_sequence}. We have $\epsilon(\lrvert{A},\lrvert{B},n)\to 0$ as $n\to\infty$. Since every $\sigma_{AB}^{(n,k_n)}$ is feasible for $\mathrm{cSEP}(G)$ and since the feasible set is compact, by the Bolzano--Weierstrass theorem there exists a subsequence $(n_j)_{j\in\mathbb{N}}$ and a state $\sigma_{AB}^{\star}\in\operatorname{conv}\left(\mathcal{F}_A\otimes\mathcal{F}_B\right)$ such that
\begin{align}
    \sigma_{AB}^{(n_j,k_{n_j})}\longrightarrow\sigma_{AB}^{\star}
\end{align}
in trace norm. By continuity of the objective functional,
\begin{align}
    \operatorname{tr}[G_{AB}\sigma_{AB}^{\star}]=\lim_{j\to\infty}\operatorname{tr}\left[G_{AB}\sigma_{AB}^{(n_j,k_{n_j})}\right]=\mathrm{cSEP}(G).
\end{align}
Hence $\sigma_{AB}^{\star}$ is an optimal solution of $\mathrm{cSEP}(G)$. This proves that the rounded feasible states admit a subsequence converging to an optimizer.
\end{proof}

Note, while in \autoref{prop:hierarchy_outer_approx} we were able to formulate a hierarchy, i.e.\ the feasible set to $\mathrm{SDP}_{n+1}(G)$ is a subset to the feasible set of $\mathrm{SDP}_{n}(G)$ for any $n\in\mathbb{N}$ and the feasible set to $\mathrm{cSEP}(G)$ lies densely in the limit set to $\mathrm{SDP}_{\infty}(G)$ (see \autoref{sec:cbo}), an analogous set structure cannot be given in the case of the inner sequence. Moreover, while informationally complete measurements yield a canonical choice for the rounding map and guarantee the stated convergence rate of the inner approximation, this choice is not required for feasibility. Any measurement scheme applied to the $B$-systems produces, after rounding, feasible points of $\mathrm{cSEP}(G)$, and therefore lower bounds on the associated objective value. In numerical implementations, alternative measurements with fewer outcomes may therefore be advantageous, either by improving the finite-level performance of the inner bounds or by simplifying the rounding procedure. The elements of the sequence $\lrbracket{\sigma^{(n, 0)}_{AB}}_{n\in\mathbb{N}}$ are particularly easy to construct. However, the sequence of values
\begin{align}
    \lrbracket{\Trr{}{G_{AB}\, \sigma^{(n, 0)}_{AB}}}_{n\in\mathbb{N}}
\end{align}
does not in general converge to $\mathrm{cSEP}(G)$ from below. Concretely, every element of $\lrbracket{\sigma^{(n, 0)}_{AB}}_{n\in\mathbb{N}}$ lies in the set of constrained separable states, which is compact\footnote{It equals $\operatorname{conv}\lrbracket{\mathcal{F}_A\otimes\mathcal{F}_B}$, whose compactness was established in the proof of \autoref{lem:inner_sequence_general}.}. Passing to a subsequence along which the marginals $\rho_A^{\star}$ and $\rho_{B_1}^{\star}$ of the optimizers converge, say to $\tau_A\in\mathcal{F}_A$ and $\tau_B\in\mathcal{F}_B$, we obtain $\sigma^{(n, 0)}_{AB}\rightarrow\tau_A\otimes\tau_B$ and hence
\begin{align}
    \Trr{}{G_{AB}\, \sigma^{(n, 0)}_{AB}}\longrightarrow\Trr{}{G_{AB}\,\lrbracket{\tau_A\otimes\tau_B}}\leq \mathrm{cSEP}(G).
\end{align}
This lower bound can be arbitrarily bad. As an example, for some finite $d\in\mathbb{N}$, consider
\begin{align}
    G_{AB}=\sum_{i=1}^d\ket{ii}\bra{ii}, \quad \rho_{AB}^\star= \frac{1}{d}\sum_{i=1}^d\ket{ii}\bra{ii},
\end{align}
where the optimal state is a classical probabilistic mixture. Then, 
\begin{align}
    \Trr{}{G_{AB}\,   \rho_{AB}^\star}=1,\quad \rho^\star_A=\rho^\star_B=\frac{\mathbb{1}}{d},\quad  \Trr{}{G_{AB}\,\rho^\star_A\otimes\rho^\star_B}=\frac{1}{d},
\end{align}
which tends to $0$ as $d\rightarrow\infty$. 

Bauer's maximum principle implies that a linear functional over a compact convex set attains its maximum at an extreme point. However, unlike for the unconstrained separable state space, the extreme points of the feasible set of $\mathrm{cSEP}(G)$ need not be pure product states. Thus, pure product optimality cannot be inferred from Bauer's principle alone. Nevertheless, compactness together with the identity $\mathrm{cSEP}(G)=\mathrm{cPROD}(G)$, cf.\ \autoref{sec:cbo}, implies that our rounding scheme admits a convergent subsequence whose limit is an optimizer of $\mathrm{cPROD}(G)$. 

The inner sequence and the outer hierarchy together yield a finite-level certificate of optimality. Concretely, we have the following corollary. 
\begin{corollary}[$\epsilon$-certificate of finite convergence]\label{cor:eps-cert}
Adopt the setting and notation of \autoref{lem:inner_sequence_general}. For each $n\in\mathbb{N}$ let $\mathcal{C}_n$ denote the finite set of product states feasible for $\mathrm{cSEP}(G)$ obtained from an optimizer $\rho_{AB_1^n}^\star$ of $\mathrm{SDP}_n(G)$ via informationally complete measurements, including the $k=0$ product of marginals. Define the upper and lower bounds
\begin{align}
     U_n := \operatorname{SDP}_n(G),\qquad L_n := \max_{\tau\in\mathcal{C}_n}\Trr{}{G_{AB}\,\tau}.
\end{align}
Then:
\begin{enumerate}
  \item For every $n\in\mathbb{N}$, $\;L_n \leq \operatorname{cSEP}(G) \leq U_n$.
  \item Fix $\epsilon\ge 0$. If at some finite level $n^\star$ the computable gap satisfies $U_{n^\star}-L_{n^\star}\leq\epsilon$, then $\operatorname{cSEP}(G)\in[L_{n^\star},U_{n^\star}]$, an interval of width $\leq\epsilon$, and any maximizer $\tau^\star\in\mathcal{C}_{n^\star}$ attaining $L_{n^\star}$ is a feasible $\epsilon$-optimal point. For $\epsilon=0$ this certifies $\operatorname{cSEP}(G)=U_{n^\star}=L_{n^\star}$ with $\tau^\star$ an exact optimizer.
  \item One always has $U_n-L_n\le\epsilon(\lrvert{A},\lrvert{B},n)$ with the rate of \autoref{eqn:bound_epsilon_inner_sequence}; hence $U_n-L_n\to 0$, and for every $\epsilon>0$ the gap obeys $U_n-L_n\leq\epsilon$ for all
  \begin{align}
       n \geq n_0(\epsilon):= \left\lceil\frac{2\ln 2\,\log\lrvert{A}\,\lVert G_{AB}\rVert_\infty^{2}\,\min\{f(A,B),f(B\mid\cdot)\}^{2}}{\epsilon^{2}}\right\rceil .
  \end{align}
\end{enumerate}
\end{corollary}

\begin{proof}
\textbf{(1)} Each $\tau\in\mathcal{C}_n$ is a product state whose marginals satisfy the constraints of $\operatorname{cSEP}(G)$ by \autoref{eqn:constraints_rounded_candidate_components} (and \autoref{eqn:constaints_k_zero_rounded_candidate} for $k=0$), so $\tau$ is feasible and $\operatorname{tr}[G_{AB}\tau]\leq\operatorname{cSEP}(G)$; taking the maximum over $\mathcal{C}_n$ gives $L_n\le\operatorname{cSEP}(G)$. By \autoref{prop:hierarchy_outer_approx}, $\operatorname{SDP}_n(G)$ is an outer approximation of $\operatorname{cSEP}(G)$: every constrained separable state admits an $n$-extension feasible for $\operatorname{SDP}_n(G)$, so maximizing over the larger feasible set gives $\operatorname{cSEP}(G)\leq U_n$.
\textbf{(2)} Assume $U_{n^\star}-L_{n^\star}\leq\epsilon$. By part (1) both differences below are non-negative and
\begin{align}
    \operatorname{cSEP}(G)-L_{n^\star}\leq U_{n^\star}-L_{n^\star}\leq\epsilon,\qquad U_{n^\star}-\operatorname{cSEP}(G)\leq U_{n^\star}-L_{n^\star}\leq\epsilon .
\end{align}
Thus $\operatorname{cSEP}(G)\in[L_{n^\star},U_{n^\star}]$, of width at most $\epsilon$. The maximizer $\tau^\star\in\mathcal{C}_{n^\star}$ is feasible with $0\le\operatorname{cSEP}(G)-\Trr{}{G_{AB}\tau^\star}=\operatorname{cSEP}(G)-L_{n^\star}\le\epsilon$, i.e.\ it is $\epsilon$-optimal. For $\epsilon=0$ the interval collapses, forcing $\operatorname{cSEP}(G)=U_{n^\star}=L_{n^\star}$ and exact optimality of $\tau^\star$.

\textbf{(3)} Let $k_n\in\{0,\dots,n-1\}$ be chosen so that the rounded convex mixture $\sigma_{AB}^{(n,k_n)}$ (cf.\ \autoref{eqn:candidate_rounded_convex_mixture}) satisfies 
\begin{align}
     \Trr{}{G_{AB}\,\sigma_{AB}^{(n,k_n)}}\ \geq\ U_n-\epsilon(\lrvert{A},\lrvert{B},n)
\end{align}
with $\epsilon(\lrvert{A},\lrvert{B},n)$ bounded by \autoref{eqn:bound_epsilon_inner_sequence}. Such a $k_n$ exists by the proof of \autoref{lem:inner_sequence_general}. A convex mixture can never exceed its best component: since $\sigma_{AB}^{(n,k_n)}=\sum_{z_1^{k_n}} p(z_1^{k_n})\,\rho^\star_{A\mid z_1^{k_n}}\otimes\rho^\star_{B\mid z_1^{k_n}}$ has weights $p(z_1^{k_n})\geq 0$ summing to $1$, and any convex combination of real numbers is at most their maximum,
\begin{align}
     \Trr{}{G_{AB}\,\sigma_{AB}^{(n,k_n)}}= \sum_{z_1^{k_n}} p(z_1^{k_n})\,\Trr{}{G_{AB}\big(\rho^\star_{A\mid z_1^{k_n}}\otimes\rho^\star_{B\mid z_1^{k_n}}\big)}\ \leq\ \max_{z_1^{k_n}}\Trr{}{G_{AB}\big(\rho^\star_{A\mid z_1^{k_n}}\otimes\rho^\star_{B\mid z_1^{k_n}}\big)}.
\end{align}
Let $z^\star$ attain this maximum. The maximizing component $\rho^\star_{A\mid z^\star}\otimes\rho^\star_{B\mid z^\star}$ lies in $\mathcal{C}_n$, so by definition of $L_n$,
\begin{align}
     L_n\ \geq\ \Trr{}{G_{AB}\big(\rho^\star_{A\mid z^\star}\otimes\rho^\star_{B\mid z^\star}\big)}
  \ \geq\ \Trr{}{G_{AB}\,\sigma_{AB}^{(n,k_n)}}\ \geq\ U_n-\epsilon(\lrvert{A},\lrvert{B},n).
\end{align}
Together with part (1), which gives $L_n\le\operatorname{cSEP}(G)\le U_n$, this yields
\begin{align}\label{eq:cor-gap}
  0\ \le\ U_n-L_n\ \le\ \epsilon(|A|,|B|,n).
\end{align}
Fix $\epsilon>0$. For every
\begin{align}
     n\ \geq\ n_0(\epsilon)\ := \left\lceil \frac{2\ln 2\,\log\lrvert{A}\, \lVert G_{AB}\rVert_{\infty}^{2}\,\min\{f(A,B),f(B|\cdot)\}^{2}}{\epsilon^{2}}\right\rceil   
\end{align}
one has $\epsilon(|A|,|B|,n)\le\epsilon$, and therefore $U_n-L_n\le\epsilon$, i.e.\ the hypothesis of part (2) is met.
\end{proof}

\begin{figure}
    \centering
    \begin{tikzpicture}[>=Stealth, line join=round, font=\small]


\fill[sdp1] (2.38,0) ellipse [x radius=3.5,  y radius=2.6];
\draw[blin!70,thick] (2.38,0) ellipse [x radius=3.5, y radius=2.6];
\fill[sdp2] (1.63,0) ellipse [x radius=2.75, y radius=2.02];
\draw[blin!70,thick] (1.63,0) ellipse [x radius=2.75, y radius=2.02];
\fill[sdpn] (0.88,0) ellipse [x radius=2.0,  y radius=1.46];
\draw[blin,thick,dashed] (0.88,0) ellipse [x radius=2.0, y radius=1.46];
\fill[csep] (0,0) ellipse [x radius=1.12, y radius=0.86];
\draw[csin,very thick] (0,0) ellipse [x radius=1.12, y radius=0.86];

\draw[black,densely dotted,thin,shorten >=1.8pt,-{Stealth[length=3.5pt]}] (2.85,0) .. controls (1.40,0.92) and (0.00,0.72) .. (-0.62,0.30);  
\draw[black,densely dotted,thin,shorten >=1.8pt,-{Stealth[length=3.5pt]}] (2.85,0) .. controls (1.60,1.00) and (0.55,0.90) .. (-0.02,0.52);  
\draw[black,densely dotted,thin,shorten >=1.8pt,-{Stealth[length=3.5pt]}] (2.85,0) .. controls (1.40,-0.30) and (-0.10,-0.35) .. (-0.68,-0.22); 
\draw[black,densely dotted,thin,shorten >=1.8pt,-{Stealth[length=3.5pt]}] (2.85,0) to[bend right=16] (-0.18,-0.50); 
\draw[black,densely dotted,thin,shorten >=1.8pt,-{Stealth[length=3.5pt]}] (2.85,0) to[bend right=8]  (0.34,-0.36);  

\foreach \p in {(-0.62,0.30),(-0.02,0.52),(-0.68,-0.22),%
                (-0.18,-0.50),(0.34,-0.36)}{\fill[pt] \p circle (1.5pt);}
\fill[star] (0.58,0.10) circle (2.3pt);
\draw[star] (0.58,0.10) circle (3.9pt);

\node[diamond,draw=csin,fill=csin,inner sep=1.7pt] (optc) at (1.12,0) {}; 
\node[anchor=west,text=csin,font=\scriptsize] at (1.22,0.30) {$\rho^\star_{\mathrm{cSEP}}$};

\fill[white] (2.88,0) circle (3.2pt);
\fill[blin] (2.88,0) circle (2.3pt);
\node[anchor=west,text=blin!80!black,font=\scriptsize] at (3.06,0.16)
   {$\rho^\star_{AB_1^n}$ -- optimizer of $\mathrm{SDP}_n(G)$};

\draw[->,black,very thick]
   (2.72,-0.10) .. controls (2.0,-1.05) and (1.15,-0.60) .. (0.74,0.02);
\node[text=black,font=\scriptsize,align=center] at (1.82,-1.11)
   {measurement-based rounding};

\node[blin!80!black] at (2.55,2.30) {$\mathrm{SDP}_1(G)$};
\node[blin!80!black] at (1.75,1.74) {$\mathrm{SDP}_2(G)$};
\node[blin!85!black] at (0.95,1.16) {$\mathrm{SDP}_n(G)$};

\draw[->,blin!70,thick] (4.15,2.13) to[bend right=16] (2.23,1.10);
\node[blin!60!black,font=\footnotesize] at (4.35,2.31) {$n\!\uparrow$};

\draw[gd,dashed] (-1.12,-2.75) -- (-1.12,2.7);
\fill[black] (-1.12,0) circle (2.1pt);
\node[font=\footnotesize] at (-1.32,0.22) {$P$};

\begin{scope}[shift={(-1.05,4.35)}, font=\footnotesize]
  \fill[csep] (0,0) rectangle (0.40,0.28); \draw[csin,very thick] (0,0) rectangle (0.40,0.28);
  \node[anchor=west,text=csin] at (0.52,0.14)
     {$\mathrm{cSEP}(G)$ \ -- target set};
  \fill[pt] (0.20,-0.31) circle (1.5pt);
  \node[anchor=west,text=pt] at (0.52,-0.31)
     {$\mathcal C_n$ \ -- feasible \emph{product} points $(\subseteq\mathrm{cSEP}(G))$};
  \fill[star] (0.20,-0.76) circle (2.3pt); \draw[star] (0.20,-0.76) circle (3.9pt);
  \node[anchor=west,text=star] at (0.52,-0.76)
     {$\tau^\star$ \ -- best inner point, attains $L_n=\max_{\tau\in\mathcal C_n}\mathrm{tr}[G_{AB}\tau]$};
\end{scope}

\def\yax{-3.55}
\draw[gd!70,dashed] (0.58,0.10) -- (0.58,\yax);
\draw[gd!70,dashed] (1.12,0)    -- (1.12,\yax);
\draw[gd!70,dashed] (2.88,0)    -- (2.88,\yax);

\draw[->,thick] (-1.4,\yax) -- (6.4,\yax);
\node[anchor=west,font=\footnotesize] at (6.02,{\yax-0.30})
   {increasing $\mathrm{tr}[G_{AB}\,\cdot\,]$};
\node[font=\footnotesize] at (-1.12,{\yax-0.30}) {$\leftarrow$ tight};

\foreach \x/\c in {0.58/star,1.12/csin,2.88/blin,4.38/gd,5.88/gd}
   \draw[\c,thick] (\x,{\yax+0.12}) -- (\x,{\yax-0.12});
\node[anchor=north east,text=star,font=\footnotesize]  at (0.66,{\yax-0.16}) {$L_n$};
\node[anchor=north west,text=csin,font=\footnotesize]  at (1.16,{\yax-0.16}) {$\mathrm{cSEP}(G)$};
\node[anchor=north,text=blin!80!black,font=\footnotesize] at (2.88,{\yax-0.16}) {$U_n$};
\node[anchor=north,text=gd,font=\scriptsize] at (4.38,{\yax-0.16}) {$U_2$};
\node[anchor=north,text=gd,font=\scriptsize] at (5.88,{\yax-0.16}) {$U_1$};

\draw[->,blin,thick] (2.70,{\yax+0.34}) -- (1.30,{\yax+0.34});
\draw[->,star,thick] (0.72,{\yax+0.34}) -- (1.00,{\yax+0.34});

\draw[decorate,decoration={brace,amplitude=5pt,mirror},thick]
   (0.58,{\yax-0.98}) -- (2.88,{\yax-0.98});
\node[anchor=north,font=\footnotesize] at (1.73,{\yax-1.10})
   {$U_n-L_n\le\epsilon(|A|,|B|,n)\xrightarrow{\,n\to\infty\,}0$};

\end{tikzpicture}
    \caption{\textbf{Geometry of the finite-level certificate (\autoref{cor:eps-cert}),}
projected onto the objective $G_{AB}$ (horizontal), so horizontal position equals $\mathrm{tr}[G_{AB}\,\cdot\,]$. The nested relaxations $\mathrm{SDP}_1(G)\supseteq\mathrm{SDP}_2(G)\supseteq\cdots$ (blue) enclose the target
$\mathrm{cSEP}(G)$ (amber), giving upper bounds
$U_1\ge U_2\ge\cdots\downarrow\mathrm{cSEP}(G)$; since any contact point lies on
$\partial\,\mathrm{SDP}_k(G)$ for every $k$, all levels touch $\mathrm{cSEP}(G)$ tangentially
at the common point $P$ (a constrained product state, tight in that direction), and the gap
opens only along $G_{AB}$. The optimizer $\rho^\star_{AB_1^n}$ of $\mathrm{SDP}_n(G)$ lies on
$\partial\,\mathrm{SDP}_n(G)$ and realizes $U_n$; measurement-based rounding (teal) maps it to a
feasible product point in $\mathcal{C}_n\subseteq\mathrm{cSEP}(G)$, the best of which,
$\tau^\star$, attains $L_n$. As $\mathrm{cSEP}(G)=\mathrm{cPROD}(G)$, its extreme points are
product states (Milman), so the optimizer $\rho^\star_{\mathrm{cSEP}}$ is one (Bauer)---not
necessarily pure. Hence $L_n\le\mathrm{cSEP}(G)\le U_n$ with
$U_n-L_n\le\epsilon(|A|,|B|,n)\to0$.}
    \label{fig:graphic_cor_certificate_finite_convergence}
\end{figure}

A schematic depiction of \autoref{cor:eps-cert} is given in \autoref{fig:graphic_cor_certificate_finite_convergence}. Both $U_n$ and $L_n$ are computed directly from $\mathrm{SDP}_n(G)$ and the rounded states, so the gap $U_n-L_n$ is an \emph{a posteriori}, fully computable optimality certificate, in contrast to the \emph{a priori} rate based on the finite de Finetti theorem. Because the inner sequence carries no nesting (cf.\ the discussion after \autoref{lem:inner_sequence_general}), $L_n$ need not be monotone; tracking the running best bounds $\underline{L}_n:=\max_{m\le n}L_m$ and $\overline{U}_n:=\min_{m\le n}U_m$ yields a monotone, non-increasing certified gap $\overline{U}_n-\underline{L}_n\le U_n-L_n$ that still brackets $\operatorname{cSEP}(G)$. 

Lastly, the proof of \autoref{thm:short_version_inner_sequence} follows directly from \autoref{lem:inner_sequence_general} with
\begin{align}
\begin{array}{cccccc}
     A_L= A_1Q_1,& C_{A_L}=Q_1,& A_R=T,& B_L= A_2Q_2\hat{T},& C_{B_L}=Q_2\hat{T},& B_R=\CC,\\
\end{array}
\end{align}
and
\begin{align}
\begin{split}
     \begin{array}{cc}
     \Theta_{A_L\rightarrow C_{A_L}}(\cdot)=\Trr{A_1}{\cdot}, & \Upsilon_{B_L\rightarrow C_{B_L}}(\cdot) = \Trr{A_2}{\cdot},\\
      W_{C_{A_L}}=\sum_{q_1\in Q_1}\pi_1(q_1)\ket{q_1}\bra{q_1}_{Q_1},   &  K_{C_{B_L}}= \sum_{q_2\in Q_2}\pi_2(q_2)\ket{q_2}\bra{q_2}_{Q_2}\otimes\frac{\mathbb{1}_{\hat{T}}}{\lrvert{T}}.
    \end{array}
\end{split}
\end{align}

Note, however, that due to the steering procedure in \autoref{sec:results_games_as_cbo}, in contrast to \cite{brandao_harrow_2014, kempe2011entangled}, the rounded strategy is not classical but quantum.

\section{Bose-symmetric hierarchy for constrained separability problems}
\label{sec:bose-symmetric}

\subsection{Introduction}

We introduce Bose-symmetric operators via the well-known notion of the symmetric subspace described in detail in, e.g., \cite{harrow2013church}. In the following, we first introduce in \autoref{sec:Bose_sym_easy} the minimal tools needed for our constrained Bose-symmetric de Finetti theorem (\autoref{sec:de-finetti-proof}), the corresponding SDP hierarchy (\autoref{sec:the-hierarchy}), its computational complexity (\autoref{sec:sdp-complexity}), and its application to non-local games (\autoref{sec:nonlocal-games}). We present concepts related to the symmetric subspace in greater mathematical detail in \autoref{sec:bose_via_schur_weyl} and provide the reader with general information about the relevant representation theory in \autoref{sec:representation_theory}. This is meant as a thorough discussion related to the literature \cite{harrow2013church, christandl2007one, fulton2013representation, etingof2009introduction, stevens2016schur}, which will be particularly important for \autoref{sec:symmetric_subspace_methods}. In particular, the structural consequence of Bose symmetry that drives our algorithm --- namely, that operators which are Bose-symmetric with respect to side systems form a \emph{full} matrix $*$-algebra of size polynomial in $n$ --- is stated and exploited at the beginning of \autoref{sec:symmetric_subspace_methods} (\autoref{prop:bose_full_matrix_algebra}); the present section only requires the associated dimension count (\autoref{prop:Bose_sym_space_char}).


\subsection{Bose-symmetry}\label{sec:Bose_sym_easy}

We collect a set of basic facts concerning Bose-symmetric operators that will be used throughout the paper.

\begin{definition}[Symmetric subspace]
Let $B$ be a finite-dimensional complex Hilbert space and $n\in\mathbb{N}$. The symmetric subspace of $B_1^n$ is defined as
    \begin{align}
        \lor^n\lrbracket{B} := \lrbrace{\ket{\psi}\in B_1^n \,:\, U_{B_1^n}(\pi)\ket{\psi}=\ket{\psi},\,\forall \pi\in S_n}=\text{span}_{\mathbb{C}}\lrbrace{\ket{\psi}^{\otimes n}\, :\, \ket{\psi}\in B}.
    \end{align}
    The orthogonal projector onto $\lor^n\lrbracket{B}$ is 
    \begin{align}
    	P_{\lor^n\lrbracket{B}} := \frac{1}{\lrvert{S_n}}\sum_{\pi\in S_n} U_{B_1^n}(\pi).\,
    \end{align}
\end{definition}
Since $U_{B_1^n}\,:\,S_n \rightarrow \operatorname{GL}\lrbracket{B_1^n}$ is a unitary representation (group homomorphism) with real-valued matrices, we have 
\begin{align}
    U^T_{B_1^n}(\pi)=U^{-1}_{B_1^n}(\pi)= U_{B_1^n}(\pi^{-1}),\quad \forall \pi\in S_n.
\end{align}
Moreover, $\pi\mapsto \pi^{-1}$ is a bijection of $S_n$. Thus,
\begin{align}
    P_{\lor^n\lrbracket{B}}^\dagger = P_{\lor^n\lrbracket{B}}^T = \frac{1}{\lrvert{S_n}}\sum_{\pi\in S_n} U^T_{B_1^n}(\pi)= \frac{1}{\lrvert{S_n}}\sum_{\pi\in S_n} U_{B_1^n}(\pi^{-1}) = P_{\lor^n\lrbracket{B}}.
\end{align}
The symmetric subspace induces a natural notion of symmetry for operators, namely Bose symmetry.
\begin{definition}\label{def:bose_symmetric} 
 An operator $X_{B_1^n}\in\mathcal{B}\lrbracket{B_1^n}$ is Bose-symmetric if 
    \begin{align}
        P_{\lor^n\lrbracket{B}}X_{B_1^n}P_{\lor^n\lrbracket{B}}=X_{B_1^n}.
    \end{align}
More generally, an operator $X_{AB_1^n}\in\mathcal{B}\lrbracket{AB_1^n}$ is Bose-symmetric w.r.t.\ $A$ if 
 \begin{align}
     \lrbracket{\mathbb{1}_A\otimes P_{\lor^n\lrbracket{B}}}X_{AB_1^n}\lrbracket{\mathbb{1}_A\otimes P_{\lor^n\lrbracket{B}}}=X_{AB_1^n}.
 \end{align}
  Moreover, the space of Bose-symmetric operators is $\End{}{\lor^n\lrbracket{B}}$ and the space of operators Bose-symmetric w.r.t.\ $A$ is $\End{}{A\otimes \lor^n\lrbracket{B}}$.
\end{definition}
The definition above implies that any Bose-symmetric operator annihilates $\lrbracket{\lor^n\lrbracket{B}}^{\perp}$ and has range contained in $\lor^n\lrbracket{B}$. Note that we canonically identify
\begin{align}
    \End{}{\vee^n(B)}=\End{\CSn}{\vee^n(B)}, \quad \End{}{A\otimes \vee^n(B)}=\End{\CSn}{A\otimes \vee^n(B)},
\end{align}
where we suppress the trivial group algebra action on $A$. We nevertheless use the notation $\End{\CSn}{\vee^n(B)}$ for the algebra of Bose-symmetric operators, in analogy with the notation $\End{\CSn}{B_1^n}$ for the algebra of exchange-symmetric operators. See \autoref{prop:Bose_sym_space_char} for further details.

\begin{remark}\label{rem:bose_simple_full_matrix}
    Since $\End{}{\lor^n\lrbracket{B}}$ is the full endomorphism algebra of a $\binom{\lrvert{B}+n-1}{n}$-dimensional space, the Bose-symmetric operators form a \emph{simple} unital associative matrix $*$-algebra, with unit given by the projector $P_{\lor^n\lrbracket{B}}$; the same holds in the presence of side systems. This stands in contrast to the algebra $\End{\CSn}{B_1^n}$ of merely symmetric operators, which decomposes into a direct sum of many full matrix blocks (cf.~\autoref{sec:sdp_symmetry_reduction}). We return to this observation in \autoref{prop:bose_full_matrix_algebra}, where it underpins the polynomial-time algorithm behind \autoref{thm:Bose_sym_complexity_results}.
\end{remark}

As in the case of exchange symmetry, invariance with respect to a quantum side system $A$ is a stronger requirement.
\begin{proposition}
    Let $X_{AB_1^n}\in\mathcal{B}\lrbracket{AB_1^n}$ be Bose-symmetric w.r.t.\ $A$. Then, $\Trr{A}{X_{AB_1^n}}$ is Bose-symmetric.
\end{proposition}
\begin{proof}
A direct calculation gives
    \begin{align}
        X_{B_1^n}= \Trr{A}{X_{AB_1^n}} = \Trr{A}{\lrbracket{\mathbb{1}_A\otimes P_{\lor^n\lrbracket{B}}}X_{AB_1^n}\lrbracket{\mathbb{1}_A\otimes P_{\lor^n\lrbracket{B}}}}=  P_{\lor^n\lrbracket{B}}X_{B_1^n}P_{\lor^n\lrbracket{B}},
    \end{align}
    which is the desired identity.
\end{proof}
For $n\geq 2$ and $\dim(B)\geq 2$, the identity operator on $B_1^n$ is symmetric but not Bose-symmetric. Accordingly, we henceforth restrict our arguments to the stronger condition of Bose symmetry with respect to a quantum side system.
\begin{proposition}\label{prop:Bose-symmetry_implies_symmetry}
    Let $X_{AB_1^n}\in\mathcal{B}\lrbracket{AB_1^n}$ be Bose-symmetric w.r.t.\ $A$. Then 
    \begin{align}
        \lrbracket{\mathbb{1}_A\otimes U_{B_1^n}(\pi)}X_{AB_1^n}\lrbracket{\mathbb{1}_A\otimes U^T_{B_1^n}(\pi)}=X_{AB_1^n},\quad \forall \pi\in S_n
    \end{align}
    holds. In other words, $X_{AB_1^n}$ is symmetric w.r.t.\ $A$. Moreover, let $X_{B_1^n}\in\mathcal{B}\lrbracket{B_1^n}$ be Bose-symmetric. Then $X_{B_1^n}$ is symmetric.
\end{proposition}
\begin{proof}
For any $\pi,\,\sigma \in S_n$ we have
\begin{align}
    U_{B_1^n}(\pi)U_{B_1^n}(\sigma)= U_{B_1^n}(\pi\sigma),\quad U^T_{B_1^n}(\pi)=U^{-1}_{B_1^n}(\pi)= U_{B_1^n}(\pi^{-1})
\end{align}
since $U_{B_1^n}\,:\,S_n \rightarrow \operatorname{GL}\lrbracket{B_1^n}$ is a unitary representation (group homomorphism). Thus, for a fixed $\pi\in S_n$ we have
\begin{align}
     U_{B_1^n}(\pi) P_{\lor^n\lrbracket{B}} = \frac{1}{\lrvert{S_n}}\sum_{\sigma\in S_n}  U_{B_1^n}(\pi)U_{B_1^n}(\sigma)= \frac{1}{\lrvert{S_n}}\sum_{\sigma\in S_n}U_{B_1^n}(\pi\sigma).
\end{align}
Since the map $\sigma\mapsto\pi\sigma$ is a bijection of $S_n$, the set $\lrbrace{\pi\sigma \,:\,\sigma\in S_n}$ is again all of $S_n$. Thus, $P_{\lor^n\lrbracket{B}}$ is invariant under the left action of $U_{B_1^n}(\pi)$ for any $\pi\in S_n$. Similarly,
\begin{align}
    P_{\lor^n\lrbracket{B}}U^T_{B_1^n}(\pi) = \frac{1}{\lrvert{S_n}}\sum_{\sigma\in S_n}U_{B_1^n}(\sigma\pi^{-1}).
\end{align}
Again, the map $\sigma\mapsto\sigma\pi^{-1}$ is a bijection of $S_n$. Thus, for any $\pi\in S_n$ we have $P_{\lor^n\lrbracket{B}}U^T_{B_1^n} (\pi) = P_{\lor^n\lrbracket{B}}$. Hence,
    \begin{align}
    \begin{split}
         \lrbracket{\mathbb{1}_A\otimes U_{B_1^n}(\pi)}X_{AB_1^n}\lrbracket{\mathbb{1}_A\otimes U^T_{B_1^n}(\pi)}&= \lrbracket{\mathbb{1}_A\otimes U_{B_1^n}(\pi)P_{\lor^n\lrbracket{B}}}X_{AB_1^n}\lrbracket{\mathbb{1}_A\otimes P_{\lor^n\lrbracket{B}}U^T_{B_1^n}(\pi)}\\
         &=\lrbracket{\mathbb{1}_A\otimes P_{\lor^n\lrbracket{B}}}X_{AB_1^n}\lrbracket{\mathbb{1}_A\otimes P_{\lor^n\lrbracket{B}}}\\
         &=X_{AB_1^n}.
    \end{split}
    \end{align}
    The second statement, without the side system $A$, follows by setting $A=\CC$.
\end{proof}
The following proposition is a key ingredient in the proof of the Bose-symmetric de Finetti representation theorem given in the next section.
\begin{proposition}\label{prop:Bose-symmetry_for_normal_operators_via_exchange_representation}
     Consider $X_{AB_1^n}\in\mathcal{B}\lrbracket{AB_1^n}$ such that  
     \begin{align}
         \ker\lrbracket{X_{AB_1^n}}=\ker\lrbracket{X_{AB_1^n}^\dagger}.
     \end{align}
     Then, $X_{AB_1^n}$ is Bose-symmetric w.r.t.\ $A$ if and only if
     \begin{align}\label{eqn:Bose-symmetry_w_A_normal}
         \lrbracket{\mathbb{1}_A\otimes U_{B_1^n}(\pi)}X_{AB_1^n} =  X_{AB_1^n},\,\forall \pi\in S_n.
     \end{align}
     Moreover, consider $X_{B_1^n}\in\mathcal{B}\lrbracket{B_1^n}$ such that
     \begin{align}
          \ker\lrbracket{X_{B_1^n}}=\ker\lrbracket{X_{B_1^n}^\dagger}.
     \end{align}
     Then, $X_{B_1^n}$ is Bose-symmetric if and only if
     \begin{align}
        U_{B_1^n}(\pi)X_{B_1^n} =  X_{B_1^n},\,\forall \pi\in S_n.
     \end{align}
\end{proposition}
\begin{proof}
First observe that
\begin{align}
     \lrbracket{\mathbb{1}_A\otimes U_{B_1^n}(\pi)}X_{AB_1^n}=  X_{AB_1^n},\,\forall \pi\in S_n
\end{align}
implies
\begin{align}
     \lrbracket{\mathbb{1}_A\otimes P_{\lor^n\lrbracket{B}}}X_{AB_1^n} = \frac{1}{\lrvert{S_n}}\sum_{\pi\in S_n}  \lrbracket{\mathbb{1}_A\otimes U_{B_1^n}(\pi)}X_{AB_1^n} = \frac{1}{\lrvert{S_n}}\sum_{\pi\in S_n} X_{AB_1^n} = X_{AB_1^n}.
\end{align}
Conversely, if $\lrbracket{\mathbb{1}_A\otimes P_{\lor^n\lrbracket{B}}}X_{AB_1^n}=X_{AB_1^n}$, then for every $\pi\in S_n$, 
\begin{align}
     \lrbracket{\mathbb{1}_A\otimes U_{B_1^n}(\pi)}X_{AB_1^n}=  \lrbracket{\mathbb{1}_A\otimes U_{B_1^n}(\pi)P_{\lor^n\lrbracket{B}}}X_{AB_1^n} = \lrbracket{\mathbb{1}_A\otimes P_{\lor^n\lrbracket{B}}}X_{AB_1^n}=X_{AB_1^n},
\end{align}
because 
\begin{align}\label{eqn:permutation_matrix_invariance_symmetric_subspace_projetor}
    \lrbracket{\mathbb{1}_A\otimes U_{B_1^n}(\pi)}\lrbracket{\mathbb{1}_A\otimes P_{\lor^n\lrbracket{B}}}=\lrbracket{\mathbb{1}_A\otimes P_{\lor^n\lrbracket{B}}}.    
\end{align}
Hence, 
\begin{align}
     \lrbracket{\mathbb{1}_A\otimes U_{B_1^n}(\pi)}X_{AB_1^n}=  X_{AB_1^n},\,\forall \pi\in S_n \quad\Leftrightarrow\quad \lrbracket{\mathbb{1}_A\otimes P_{\lor^n\lrbracket{B}}}X_{AB_1^n}=X_{AB_1^n}.
\end{align}
Since $\lrbracket{\mathbb{1}_A\otimes P_{\lor^n\lrbracket{B}}}X_{AB_1^n}=X_{AB_1^n}$ holds, taking adjoints gives 
\begin{align}
\begin{split}
     \lrrec{\lrbracket{\mathbb{1}_A\otimes P_{\lor^n\lrbracket{B}}}X_{AB_1^n}}^\dagger &= X_{AB_1^n}^\dagger\\
     \Leftrightarrow X_{AB_1^n}^\dagger \lrbracket{\mathbb{1}_A\otimes P_{\lor^n\lrbracket{B}}}^\dagger &=  X_{AB_1^n}^\dagger \lrbracket{\mathbb{1}_A\otimes P_{\lor^n\lrbracket{B}}}= X_{AB_1^n}^\dagger.
\end{split}
\end{align}
Therefore,
\begin{align}
     X_{AB_1^n}^\dagger\lrbracket{\mathbb{1}_{AB_1^n} - \lrbracket{\mathbb{1}_A\otimes P_{\lor^n\lrbracket{B}}}}=0.
\end{align}
Since $P_{\lor^n\lrbracket{B}}$ is an orthogonal projector, this implies that $X_{AB_1^n}^\dagger$ annihilates every vector in the orthogonal complement
\begin{align}
    \lrbracket{\mathbb{1}_{AB_1^n} - \lrbracket{\mathbb{1}_A\otimes P_{\lor^n\lrbracket{B}}}}\lrbracket{AB_1^n}
\end{align}
of the symmetric subspace. Thus, 
\begin{align}
    \lrbracket{\mathbb{1}_{AB_1^n} - \lrbracket{\mathbb{1}_A\otimes P_{\lor^n\lrbracket{B}}}}\lrbracket{AB_1^n} \subseteq \ker\lrbracket{X_{AB_1^n}^\dagger}.
\end{align}
With $\ker\lrbracket{X_{AB_1^n}}=\ker\lrbracket{X_{AB_1^n}^\dagger}$, it follows that 
\begin{align}
      \lrbracket{\mathbb{1}_{AB_1^n} - \lrbracket{\mathbb{1}_A\otimes P_{\lor^n\lrbracket{B}}}}\lrbracket{AB_1^n} \subseteq \ker\lrbracket{X_{AB_1^n}} \, \Leftrightarrow \, X_{AB_1^n}\lrbracket{\mathbb{1}_{AB_1^n} - \lrbracket{\mathbb{1}_A\otimes P_{\lor^n\lrbracket{B}}}}=0
\end{align}
and hence 
\begin{align}
    X_{AB_1^n}\lrbracket{\mathbb{1}_A\otimes P_{\lor^n\lrbracket{B}}} =  X_{AB_1^n}.
\end{align}
In other words, $X_{AB_1^n}$ also annihilates every vector orthogonal to the symmetric subspace. Combining this with $\lrbracket{\mathbb{1}_A\otimes P_{\lor^n\lrbracket{B}}}X_{AB_1^n}=X_{AB_1^n}$, we obtain
\begin{align}
    \lrbracket{\mathbb{1}_A\otimes P_{\lor^n\lrbracket{B}}}X_{AB_1^n}\lrbracket{\mathbb{1}_A\otimes P_{\lor^n\lrbracket{B}}}= X_{AB_1^n}\lrbracket{\mathbb{1}_A\otimes P_{\lor^n\lrbracket{B}}} =X_{AB_1^n}.
\end{align}
Conversely, assume 
\begin{align}
     \lrbracket{\mathbb{1}_A\otimes P_{\lor^n\lrbracket{B}}}X_{AB_1^n}\lrbracket{\mathbb{1}_A\otimes P_{\lor^n\lrbracket{B}}}=X_{AB_1^n}.
\end{align}
Then, for every $\pi\in S_n$,
\begin{align}
    \begin{split}
        \lrbracket{\mathbb{1}_A\otimes U_{B_1^n}(\pi)}X_{AB_1^n}& = \lrbracket{\mathbb{1}_A\otimes U_{B_1^n}(\pi)P_{\lor^n\lrbracket{B}}}X_{AB_1^n}\lrbracket{\mathbb{1}_A\otimes P_{\lor^n\lrbracket{B}}}\\
        &= \lrbracket{\mathbb{1}_A\otimes P_{\lor^n\lrbracket{B}}}X_{AB_1^n}\lrbracket{\mathbb{1}_A\otimes P_{\lor^n\lrbracket{B}}}=X_{AB_1^n},
    \end{split}
\end{align}
again using \autoref{eqn:permutation_matrix_invariance_symmetric_subspace_projetor}. This proves the equivalence. The second statement, without the side system $A$, follows by setting $A=\CC$.
\end{proof}
Note that without the condition on the kernel of  $X_{AB_1^n}$, an operator $X_{AB_1^n}$ satisfying \autoref{eqn:Bose-symmetry_w_A_normal} will only satisfy
\begin{align}
     \lrbracket{\mathbb{1}_A\otimes P_{\lor^n\lrbracket{B}}}X_{AB_1^n}=X_{AB_1^n}.
\end{align}
This left invariance implies that the range of $X_{AB_1^n}$ lies in $A\otimes\lor^n\lrbracket{B}$, but not that its support $\operatorname{supp}\lrbracket{X_{AB_1^n}}:=\ker\lrbracket{X_{AB_1^n}}^\perp$ does too. More formally, we give the following proposition for the reader's convenience.
\begin{proposition}
    In finite dimensions, let $X_{AB_1^n}\in\mathcal{B}\lrbracket{AB_1^n}$ satisfy $\ker\lrbracket{X_{AB_1^n}}=\ker\lrbracket{X_{AB_1^n}^\dagger}$. Then, 
     \begin{align}
    \operatorname{supp}(X_{AB_1^n})=\ker(X_{AB_1^n})^{\perp}=\operatorname{ran}(X_{AB_1^n}).
\end{align}
\end{proposition}
\begin{proof}
   A standard identity is 
   \begin{align}
   \operatorname{ran}\lrbracket{X_{AB_1^n}}^\perp=\ker\lrbracket{X_{AB_1^n}^\dagger}.  
   \end{align}
   Taking orthogonal complements and using finite-dimensionality yields 
   \begin{align}
   \operatorname{ran}\lrbracket{X_{AB_1^n}}=\ker\lrbracket{X_{AB_1^n}^\dagger}^\perp.    
   \end{align} 
   By the hypothesis and with the definition of the support the claim follows.
\end{proof}

We conclude this section with a proposition showing that the operators considered in this work satisfy the condition of \autoref{prop:Bose-symmetry_for_normal_operators_via_exchange_representation}.

\begin{proposition}
    Let $X_{AB_1^n}\in\mathcal{B}\lrbracket{AB_1^n}$ be a normal operator, i.e.
    \begin{align}
        X_{AB_1^n}^\dagger X_{AB_1^n}= X_{AB_1^n} X_{AB_1^n}^\dagger.
    \end{align}
    Then, $\ker(X_{AB_1^n})=\ker(X_{AB_1^n}^\dagger)$.
\end{proposition}
\begin{proof}
Let $\langle \cdot, \cdot\rangle$ be the inner product on the Hilbert space $AB_1^n$. Since $X_{AB_1^n}$ is normal, for every vector $v\in AB_1^n$,  
\begin{align}
\begin{split}
    \left\lVert X_{AB_1^n}\, v\right\rVert^2 &= \left\langle X_{AB_1^n}\, v, X_{AB_1^n}\, v  \right\rangle = \left\langle v, X_{AB_1^n}^\dagger X_{AB_1^n}\, v  \right\rangle\\
    &= \left\langle v,  X_{AB_1^n}X_{AB_1^n}^\dagger\, v  \right\rangle = \left\langle X_{AB_1^n}^\dagger\, v, X_{AB_1^n}^\dagger\, v  \right\rangle =  \left\lVert X_{AB_1^n}^\dagger\, v\right\rVert^2.
\end{split}
\end{align}
Hence
\begin{align}
    \left\lVert  X_{AB_1^n}\, v \right\rVert=0 \quad\Leftrightarrow \quad   \left\lVert  X_{AB_1^n}^\dagger\, v \right\rVert=0.
\end{align}
Since a vector has norm zero if and only if it is zero, we obtain $X_{AB_1^n}\,v=0$ if and only if $X_{AB_1^n}^\dagger\,v=0$. By the definition of the kernel, the claim follows.
\end{proof}
Throughout the remainder of this paper, all operators under consideration will be positive semidefinite. In particular, they are self-adjoint, and hence normal. Henceforth, we shall use the results of this section to identify Bose-symmetric states with states supported on the symmetric subspace.


\subsection{A de Finetti representation theorem for constrained Bose-symmetric states}
\label{sec:de-finetti-proof}

We now prove the following lemma approximating the set of constrained separable states via constrained Bose-symmetric states.

\begin{lemma}[Constrained de Finetti representation theorem for Bose-symmetric states (restated)]\label{lem:bose-symmetric_deFinetti}
Let $A=A_L\otimes A_R$, $B=B_L\otimes B_R$, $\bar{B}=\bar{B}_L\otimes \bar{B}_R$, $C_{A_L}$ and $C_{B_L}$ be Hilbert spaces with $B \cong\bar{B}$. Moreover, let $\rho_{A\lrbracket{B\Bar{B}}_1^n}$ be Bose-symmetric with respect to $A$. Furthermore, consider linear mappings $\Theta_{A_L\rightarrow C_{A_L}},\, \Omega_{A\rightarrow A},\, \Upsilon_{B_L\rightarrow C_{B_L}},\, \Xi_{B\rightarrow B}$ and operators $W_{C_{A_L}},\, K_{C_{B_L}}$. If
    \begin{align}\label{eqn:SDP_constraints_appendix}
    \begin{split}
        \begin{array}{cc}
            \Theta_{A_L\rightarrow C_{A_L}}\lrbracket{\rho_{A\lrbracket{B\Bar{B}}_1^n}} = W_{C_{A_L}}\otimes\rho_{A_R\lrbracket{B\Bar{B}}_1^n},\, &
            \Omega_{A\rightarrow A}\lrbracket{\rho_{A\lrbracket{B\Bar{B}}_1^n}}=\rho_{A\lrbracket{B\Bar{B}}_1^n},\,\\
            \Upsilon_{\lrbracket{B_L}_n\rightarrow C_{B_L}}\circ \Tr_{\Bar{B}_n}\lrbracket{\rho_{\lrbracket{B\Bar{B}}_1^n}} = K_{C_{B_L}}\otimes \rho_{\lrbracket{B\Bar{B}}_1^{n-1}\lrbracket{B_R}_n},\, & \Xi_{B_n\rightarrow B_n}\circ \Tr_{\Bar{B}_n}\lrbracket{\rho_{\lrbracket{B\Bar{B}}_1^n}}=\rho_{\lrbracket{B\Bar{B}}_1^{n-1}B_n},\,\\
        \end{array}
    \end{split}
    \end{align}
    then there exist a probability distribution $\lrbrace{p(x)}_{x\in\mathcal{X}}$ and states $\lrbrace{\sigma^x_A}_{x\in\mathcal{X}},\, \lrbrace{\sigma^x_{B\Bar{B}}}_{x\in\mathcal{X}}$ such that for any $i\in\lrrec{n}$
     \begin{align}\label{eqn:first_puri_bound}
        \left\lVert \rho_{A\lrbracket{B\Bar{B}}_i} - \sum_{x\in\mathcal{X}}p(x) \sigma_{A}^x\otimes\sigma^x_{B\Bar{B}} \right\rVert_1\leq \min\lrbrace{ f(A,  B\Bar{B}), f(B\Bar{B}\vert \cdot) }\sqrt{2\ln(2)}\sqrt{\frac{\log(\lrvert{A})}{n}},
    \end{align}
    where $f(A, B\Bar{B}), f(B\Bar{B}\vert \cdot)$ are minimal measurement distortions and for each $ x\in\mathcal{X}$
    \begin{align}
            \begin{split}
                \begin{array}{cc}
                    \Theta_{A_L\rightarrow C_{A_L}}\lrbracket{\sigma^x_{A}} = W_{C_{A_L}}\otimes\sigma^x_{A_R},\, &
            \Omega_{A\rightarrow A}\lrbracket{\sigma^x_{A}}=\sigma^x_{A},\,\\
            \Upsilon_{B_L\rightarrow C_{B_L}}\circ\Tr_{\Bar{B}}\lrbracket{\sigma^x_{B\Bar{B}}} = K_{C_{B_L}}\otimes\sigma^x_{B_R},\, & \Xi_{B\rightarrow B}\circ \Tr_{\Bar{B}}\lrbracket{\sigma^x_{B\Bar{B}}}=\sigma^x_{B}.\,\\
                \end{array}
            \end{split}
        \end{align}
Moreover, there exist a probability distribution $\lrbrace{\tilde{p}(x)}_{x\in\mathcal{X}}$ and states $\lrbrace{\omega^x_A}_{x\in\mathcal{X}},\, \lrbrace{\omega^x_{B}}_{x\in\mathcal{X}}$ such that for any $i\in\lrrec{n}$
 \begin{align}\label{eqn:bose-symmetric_de_Finetti_marginal_bound}
        \left\lVert \rho_{AB_i} - \sum_{x\in\mathcal{X}}\tilde{p}(x) \omega_{A}^x\otimes\omega^x_{B} \right\rVert_1\leq \min\lrbrace{ f(A,  B), f(B\vert \cdot) }\sqrt{2\ln(2)}\sqrt{\frac{\log(\lrvert{A})}{n}},
    \end{align}
    and for each $ x\in\mathcal{X}$
      \begin{align}\label{eqn:constraints_de_finetti_candidate_marginal_Bose_symmetric}
            \begin{split}
                \begin{array}{cc}
                    \Theta_{A_L\rightarrow C_{A_L}}\lrbracket{\omega^x_{A}} = W_{C_{A_L}}\otimes\omega^x_{A_R},\, &
            \Omega_{A\rightarrow A}\lrbracket{\omega^x_{A}}=\omega^x_{A},\,\\
            \Upsilon_{B_L\rightarrow C_{B_L}}\lrbracket{\omega^x_{B}} = K_{C_{B_L}}\otimes\omega^x_{B_R},\, & \Xi_{B\rightarrow B}\lrbracket{\omega^x_{B}}=\omega^x_{B}.\,\\
                \end{array}
            \end{split}
        \end{align}
\end{lemma}
\begin{proof}
The proof follows the general strategy of \autoref{lem:approximate_quantum_de_finetti}, supplemented by the additional arguments from \autoref{sec:Bose_sym_easy}; that lemma, in turn, extends \cite[Theorem 2.3]{berta2021semidefinite}. Fix an informationally complete measurement $\mathcal{M}_{\lrbracket{B\Bar{B}}}$ on $\lrbracket{B\bar{B}}$ with finite outcome alphabet $Z$. For any $k\in\lrbrace{0, 1, \ldots, n-1}$, let
\begin{align}
M_{\lrbracket{B\bar{B}}_1^k}^{z_1^k}=M_{\lrbracket{B\bar{B}}_1}^{z_1}\otimes\cdots\otimes M_{\lrbracket{B\bar{B}}_k}^{z_k}, \quad z_1^k\in Z^k,
\end{align}
be the corresponding POVM element of measurement $\mathcal{M}_{\lrbracket{B\bar{B}}_1^k}$ on $\lrbracket{B\bar{B}}_1^k$. As in the proof of the convergent rounding scheme in \autoref{lem:inner_sequence_general}, no measurement is performed when $k=0$. For every $z_1^k\in Z^k$, define the non-normalized post-measurement state on $A\lrbracket{B\bar{B}}_{k+1}$ by 
\begin{align}
\widetilde\rho_{A\lrbracket{B\bar{B}}_{k+1}\vert z_1^k}:=
\Trr{\lrbracket{B\bar{B}}_1^k \lrbracket{B\bar{B}}_{k+2}^n}{\lrbracket{\mathbb{1}_A\otimes M_{\lrbracket{B\bar{B}}_1^k}^{z_1^k}\otimes \mathbb{1}_{\lrbracket{B\bar{B}}_{k+1}^n}} \rho_{A\lrbracket{B\bar{B}}_1^n}}.
\end{align}
Let $p\lrbracket{z_1^k}:=\Trr{}{\widetilde\rho_{A\lrbracket{B\bar{B}}_{k+1}\vert z_1^k}}$. For outcomes $p\lrbracket{z_1^k}>0$, set 
\begin{align}
    \rho_{A\lrbracket{B\bar{B}}_{k+1}\vert z_1^k} := \frac{\widetilde\rho_{A\lrbracket{B\bar{B}}_{k+1}\vert z_1^k}}{p\lrbracket{z_1^k}},\quad \rho_{A\vert z_1^k} :=\Trr{\lrbracket{B\bar{B}}_{k+1}}{ \rho_{A\lrbracket{B\bar{B}}_{k+1}\vert z_1^k}},\quad \rho_{\lrbracket{B\bar{B}}_{k+1}\vert z_1^k} := \Trr{A}{\rho_{A\lrbracket{B\bar{B}}_{k+1}\vert z_1^k}}.
\end{align}
Outcomes with $p\lrbracket{z_1^k}=0$ may be ignored. Since $\Bose$ is Bose-symmetric w.r.t.\ $A$, by \autoref{prop:Bose-symmetry_implies_symmetry} it is also symmetric w.r.t.\ $A$. Thus, the de Finetti representation theorem in \autoref{lem:approximate_quantum_de_finetti} certifies the existence of a $k\in\lrbrace{0, 1, \ldots, n-1}$ such that
\begin{align}\label{eqn:Bose-de_finetti_candidate_bound}
\begin{split}
    \norm{\rho_{A(B\Bar{B})_{k+1}} - \sum_{z_1^k}p\lrbracket{z_1^k}\rho_{A\vert z_1^k}\otimes \rho_{\lrbracket{B\Bar{B}}_{k+1}\vert z_1^k}}_1\leq \min\lrbrace{ f(A, B\Bar{B}), f(B\Bar{B}\vert \cdot) }\sqrt{2\ln(2)}\sqrt{\frac{\log(\lrvert{A})}{n}}.\, \\
\end{split}
\end{align}
The symmetry of $\Bose$ implies that the bound holds for any marginal $\rho_{A(B\Bar{B})_{i}}$ with $i\in\lrrec{n}$. We now show that, for each $z_1^k\in Z^k$, the candidate product states satisfy the linear constraints. Due to the commutativity of Alice's maps with the measurement  $\mathcal{M}_{\lrbracket{B\bar{B}}_1^k}$ and the assumptions of the lemma in \autoref{eqn:SDP_constraints_appendix}, we obtain
\begin{align}
\begin{split}
     \Theta_{A_L\rightarrow C_{A_L}}\lrbracket{\rho_{A\vert z_1^k}}&=\frac{1}{p\lrbracket{z_1^k}}\, \Theta_{A_L\rightarrow C_{A_L}}\lrbracket{\Trr{\lrbracket{B\Bar{B}}_{1}^n}{\lrbracket{\mathbb{1}_{A}\otimes M_{\lrbracket{B\Bar{B}}_1^k}^{z_1^k}\otimes\mathbb{1}_{\lrbracket{B\Bar{B}}_{k+1}^n}}\rho_{A\lrbracket{B\Bar{B}}_1^n}}}\\
     &=\frac{1}{p\lrbracket{z_1^k}}\, \Trr{\lrbracket{B\Bar{B}}_{1}^n}{\lrbracket{\mathbb{1}_{C_{A_L}A_R}\otimes M_{\lrbracket{B\Bar{B}}_1^k}^{z_1^k}\otimes\mathbb{1}_{\lrbracket{B\Bar{B}}_{k+1}^n}}\Theta_{A_L\rightarrow C_{A_L}}\lrbracket{\rho_{A\lrbracket{B\Bar{B}}_1^n}}}\\
     &\stackrel{\autoref{eqn:SDP_constraints_appendix}}{=} \frac{1}{p\lrbracket{z_1^k}}\, \Trr{\lrbracket{B\Bar{B}}_{1}^n}{\lrbracket{\mathbb{1}_{C_{A_L}A_R}\otimes M_{\lrbracket{B\Bar{B}}_1^k}^{z_1^k}\otimes\mathbb{1}_{\lrbracket{B\Bar{B}}_{k+1}^n}}W_{C_{A_L}}\otimes\rho_{A_R\lrbracket{B\Bar{B}}_1^n}}\\
    &=W_{C_{A_L}}\otimes\rho_{A_R\vert z_1^k}
\end{split}
\end{align}
and
\begin{align}
\begin{split}
     \Omega_{A\rightarrow A}\lrbracket{\rho_{A\vert z_1^k}}&=\frac{1}{p\lrbracket{z_1^k}}\, \Omega_{A\rightarrow A}\lrbracket{\Trr{\lrbracket{B\Bar{B}}_{1}^n}{\lrbracket{\mathbb{1}_{A}\otimes M_{\lrbracket{B\Bar{B}}_1^k}^{z_1^k}\otimes\mathbb{1}_{\lrbracket{B\Bar{B}}_{k+1}^n}}\rho_{A\lrbracket{B\Bar{B}}_1^n}}}\\
     &=\frac{1}{p\lrbracket{z_1^k}}\, \Trr{\lrbracket{B\Bar{B}}_{1}^n}{\lrbracket{\mathbb{1}_{A}\otimes M_{\lrbracket{B\Bar{B}}_1^k}^{z_1^k}\otimes\mathbb{1}_{\lrbracket{B\Bar{B}}_{k+1}^n}}\Omega_{A\rightarrow A}\lrbracket{\rho_{A\lrbracket{B\Bar{B}}_1^n}}}\\
     &\stackrel{\autoref{eqn:SDP_constraints_appendix}}{=} \frac{1}{p\lrbracket{z_1^k}}\, \Trr{\lrbracket{B\Bar{B}}_{1}^n}{\lrbracket{\mathbb{1}_{A}\otimes M_{\lrbracket{B\Bar{B}}_1^k}^{z_1^k}\otimes\mathbb{1}_{\lrbracket{B\Bar{B}}_{k+1}^n}}\rho_{A\lrbracket{B\Bar{B}}_1^n}}=\rho_{A\vert z_1^k}\,.
\end{split}
\end{align}
Owing to the presence of a copy system on Bob’s side, establishing this implication for his constraints requires additional arguments beyond those used in \autoref{lem:approximate_quantum_de_finetti}. Concretely, due to the symmetry of $\rho_{A\lrbracket{B\Bar{B}}_1^n}$, \autoref{prop:symmetric_states_have_symmetric_reduced_states} certifies the symmetry of the marginals $\rho_{AB_1^n}$ and $\rho_{A\Bar{B}_1^n}$. Thus, $\Trr{\lrbracket{\Bar{B}}_{k+1}}{\cdot}$ acts as $\Trr{\lrbracket{\Bar{B}}_{n}}{\cdot}$ for any $k\in\lrbrace{0,1,\ldots, n-1}$. Consider $\Upsilon_{\lrbracket{B_L}_{k+1}\rightarrow C_{B_L}}(\cdot)$. Again, because $\rho_{B_1^n}$ is symmetric, \autoref{prop:symmetric_states_have_symmetric_reduced_states} implies that the reduced state $\rho_{\lrbracket{B_L}_1^n}$ is symmetric. We conclude that $ \Upsilon_{\lrbracket{B_L}_{k+1}\rightarrow C_{B_L}}$ acts in the same way as $ \Upsilon_{\lrbracket{B_L}_{n}\rightarrow C_{B_L}}$. Thus, the commutativity of this map with $\mathcal{M}_{\lrbracket{B\Bar{B}}_1^k}$ yields
\begin{align}
\begin{split}
    \Upsilon_{\lrbracket{B_L}_{k+1}\rightarrow C_{B_L}}&\circ\Trr{\lrbracket{\Bar{B}}_{k+1}}{\rho_{\lrbracket{B\Bar{B}}_{k+1}\vert z_1^k}}\\
    &=\frac{1}{p\lrbracket{z_1^k}}\,\Trr{\lrbracket{B\Bar{B}}_1^k\lrbracket{B\Bar{B}}_{k+2}^n}{\lrbracket{M_{\lrbracket{B\Bar{B}}_1^k}^{z_1^k}\otimes\mathbb{1}_{C_{B_L}\lrbracket{B_R}_{k+1}\lrbracket{B\Bar{B}}_{k+2}^n}}\Upsilon_{\lrbracket{B_L}_{k+1}\rightarrow C_{B_L}}\circ\Trr{\lrbracket{\Bar{B}}_{k+1}}{\BoseB}}\\
    &\stackrel{\autoref{eqn:SDP_constraints_appendix}}{=}\frac{1}{p\lrbracket{z_1^k}}\,\Trr{\lrbracket{B\Bar{B}}_1^k\lrbracket{B\Bar{B}}_{k+2}^n}{\lrbracket{M_{\lrbracket{B\Bar{B}}_1^k}^{z_1^k}\otimes\mathbb{1}_{C_{B_L}\lrbracket{B_R}_{k+1}\lrbracket{B\Bar{B}}_{k+2}^n}}K_{C_{B_L}}\otimes\rho_{\lrbracket{B\Bar{B}}_1^k\lrbracket{B_R}_{k+1}\lrbracket{B\Bar{B}}_{k+2}^n}}\\
    &=K_{C_{B_L}}\otimes\rho_{\lrbracket{B_R}_{k+1}\vert z_1^k}\,.
\end{split}
\end{align}
Lastly, from \autoref{prop:symmetric_states_have_symmetric_reduced_states}, $\Xi_{B_{k+1}\rightarrow B_{k+1}}\circ \Tr_{\Bar{B}_{k+1}}\lrbracket{\cdot}$ acts as $\Xi_{B_n\rightarrow B_n}\circ \Tr_{\Bar{B}_n}\lrbracket{\cdot}$ for any $k\in\lrbrace{0,1,\ldots,n-1}$. Thus, due to the commutativity with $\mathcal{M}_{\lrbracket{B\Bar{B}}_1^k}$, we obtain
\begin{align}
    \begin{split}
        \Xi_{B_{k+1}\rightarrow B_{k+1}}&\circ \Tr_{\Bar{B}_{k+1}}\lrbracket{\rho_{\lrbracket{B\Bar{B}}_{k+1}\vert z_1^k}}\\
        &=\frac{1}{p\lrbracket{z_1^k}}\,\Trr{\lrbracket{B\Bar{B}}_1^k\lrbracket{B\Bar{B}}_{k+2}^n}{\lrbracket{M_{\lrbracket{B\Bar{B}}_1^k}^{z_1^k}\otimes\mathbb{1}_{B_{k+1}\lrbracket{B\Bar{B}}_{k+2}^n}}\Xi_{B_{k+1}\rightarrow B_{k+1}}\circ \Trr{\Bar{B}_{k+1}}{\BoseB}}\\
         &\stackrel{\autoref{eqn:SDP_constraints_appendix}}{=}\frac{1}{p\lrbracket{z_1^k}}\,\Trr{\lrbracket{B\Bar{B}}_1^k\lrbracket{B\Bar{B}}_{k+2}^n}{\lrbracket{M_{\lrbracket{B\Bar{B}}_1^k}^{z_1^k}\otimes\mathbb{1}_{B_{k+1}\lrbracket{B\Bar{B}}_{k+2}^n}}\rho_{\lrbracket{B\Bar{B}}_1^k B_{k+1}\lrbracket{B\Bar{B}}_{k+2}^n}}=\rho_{B_{k+1}\vert z_1^k}\,.
    \end{split}
\end{align}
Since the foregoing arguments are independent of the particular choice of $z_1^k$, they hold for all $z_1^k\in Z^k$. Hence every product state appearing in the convex mixture defining our de Finetti candidate state satisfies the constraints. Lastly, since $\Bose$ is Bose-symmetric w.r.t.\ $A$, the marginal $\rho_{AB_1^n}$ is symmetric w.r.t.\ $A$. Moreover, since $\Bose$ satisfies the constraints in \autoref{eqn:SDP_constraints_appendix},  $\rho_{AB_1^n}$ satisfies
\begin{align}
        \begin{split}
        \begin{array}{cc}
            \Theta_{A_L\rightarrow C_{A_L}}\lrbracket{\rho_{AB_1^n}} = W_{C_{A_L}}\otimes\rho_{A_RB_1^n},\, &
            \Omega_{A\rightarrow A}\lrbracket{\rho_{AB_1^n}}=\rho_{AB_1^n},\,\\
            \Upsilon_{\lrbracket{B_L}_n\rightarrow C_{B_L}}\lrbracket{\rho_{B_1^n}} = K_{C_{B_L}}\otimes \rho_{B_1^{n-1}\lrbracket{B_R}_n},\, & \Xi_{B_n\rightarrow B_n}\lrbracket{\rho_{B_1^n}}=\rho_{B_1^{n-1}B_n}.\,\\
        \end{array}
    \end{split}
    \end{align}
Fix an informationally complete measurement $\mathcal{N}_B$ on $B$ with finite outcome alphabet $X$. For $m\in\lrbrace{0, 1, \ldots, n-1}$, let
\begin{align}
N_{B_1^m}^{x_1^m}=N_{B_1}^{x_1}\otimes\cdots\otimes N_{B_m}^{x_m}, \quad x_1^m\in X^m,
\end{align}
be the corresponding POVM element on $B_1^m$. For every $x_1^m\in X^m$, define the non-normalized post-measurement state on $A B_{m+1}$ by 
\begin{align}
\widetilde\omega_{AB_{m+1}\vert x_1^m}:=
\Trr{B_1^m B_{m+2}^n}{\lrbracket{\mathbb{1}_A\otimes N_{B_1^m}^{x_1^m}\otimes \mathbb{1}_{B_{m+1}^n}} \rho_{AB_1^n}}.
\end{align}
Let $p\lrbracket{x_1^m}:=\Trr{}{\widetilde\omega_{AB_{m+1}\vert x_1^m}}$. For outcomes $p\lrbracket{x_1^m}>0$, set 
\begin{align}
    \omega_{AB_{m+1}\vert x_1^m} := \frac{\widetilde\omega_{AB_{m+1}\vert x_1^m}}{p\lrbracket{x_1^m}},\quad \omega_{A\vert x_1^m} :=\Trr{B_{m+1}}{ \omega_{AB_{m+1}\vert x_1^m}},\quad \omega_{B_{m+1}\vert x_1^m} := \Trr{A}{\omega_{AB_{m+1}\vert x_1^m}}.
\end{align}
Outcomes with $p\lrbracket{x_1^m}=0$ may be ignored. By the symmetric de Finetti representation theorem in \autoref{lem:approximate_quantum_de_finetti}, there exists an $m\in\lrbrace{0,1,\ldots, n-1}$ such that
    \begin{align}
         \norm{\rho_{AB_{m+1}} - \sum_{x_1^m}p\lrbracket{x_1^m}\omega_{A\vert x_1^m}\otimes \omega_{B_{m+1}\vert x_1^m}}_1\leq \min\lrbrace{f(A, B), f(B\vert \cdot) }\sqrt{2\ln(2)}\sqrt{\frac{\log(\lrvert{A})}{n}},
    \end{align}
and for each $x_1^m\in X^m$, the product state $\omega_{A\vert x_1^m}\otimes \omega_{B_{m+1}\vert x_1^m}$ satisfies the constraints in \autoref{eqn:constraints_de_finetti_candidate_marginal_Bose_symmetric}, by arguments analogous to those given above. Symmetry then yields the required bound on $\rho_{AB_i}$ for any $i\in\lrrec{n}$.
\end{proof}

Note that tracing out $\bar{B}_{k+1}$ from the de Finetti candidate state used in \autoref{eqn:Bose-de_finetti_candidate_bound} produces a feasible candidate for the marginal problem in \autoref{eqn:constraints_de_finetti_candidate_marginal_Bose_symmetric}. However, this feasible candidate need not coincide with the candidate attaining the bound in \autoref{eqn:bose-symmetric_de_Finetti_marginal_bound}. Moreover, the direct proof of the marginal bound can be sharper. Indeed, by monotonicity of the trace norm under the completely positive and trace-preserving (CPTP) map $\Trr{\bar{B}_{k+1}}{\cdot}$, we have
\begin{align}
\begin{split}
    \norm{\rho_{AB_{k+1}} - \sum_{z_1^k}p\lrbracket{z_1^k}\rho_{A\vert z_1^k}\otimes \rho_{B_{k+1}\vert z_1^k}}_1 &= \norm{\Trr{\bar{B}_{k+1}}{\rho_{A(B\Bar{B})_{k+1}}} - \sum_{z_1^k}p\lrbracket{z_1^k}\rho_{A\vert z_1^k}\otimes \Trr{\bar{B}_{k+1}}{\rho_{\lrbracket{B\Bar{B}}_{k+1}\vert z_1^k}}}_1\\
    &\stackrel{\operatorname{DPI}}{\leq} \norm{\rho_{A(B\Bar{B})_{k+1}} - \sum_{z_1^k}p\lrbracket{z_1^k}\rho_{A\vert z_1^k}\otimes \rho_{\lrbracket{B\Bar{B}}_{k+1}\vert z_1^k}}_1\\
    &\leq \min\lrbrace{ f(A, B\Bar{B}), f(B\Bar{B}\vert \cdot) }\sqrt{2\ln(2)}\sqrt{\frac{\log(\lrvert{A})}{n}}.
\end{split}
\end{align}


\subsection{The hierarchy}\label{sec:the-hierarchy}

In analogy to \cite{berta2021semidefinite}, we can formulate a Bose-symmetric SDP hierarchy that approximates a constrained separability problem $\mathrm{cSEP}(G)$ (see \autoref{sec:cbo} or \autoref{eqn:cSEP_Main}) from above. 

\begin{lemma}[Bose-symmetric hierarchy]\label{lem:bose-symmetric_hierarchy}
Consider finite-dimensional Hilbert spaces $A=A_L\otimes A_R$, $B=B_L\otimes B_R$,
$\Bar{B}=\Bar{B}_L\otimes\Bar{B}_R$, $C_{A_L}$ and $C_{B_L}$, with $B\cong\Bar{B}$. Let
$G_{AB}\in\mathcal{B}\lrbracket{AB}$ be Hermitian with $\norm{G_{AB}}_{\infty}\leq 1$, and let $\mathrm{cSEP}(G)$ be the constrained separability problem defined in \autoref{eqn:cSEP_Main}. For every $n\in\mathbb{N}$, define
  \begin{align}\label{eqn:Bose_symmetric_hierarchy}
        \begin{split}
                \displaystyle &\mathrm{SDP}_n^{\mathrm{Bose}}(G) := \max_{\Bose\in\mathcal{B}\lrbracket{A\lrbracket{B\bar{B}}_1^n}}\Trr{}{G_{AB}\,\rho_{AB}}\\
                 \text{s.t. }\quad  &\Bose\succeq 0,\quad \Tr\left[\Bose\right] = 1,\\
                 &\rho_{AB}:=\Trr{\Bar{B}_1\lrbracket{B\Bar{B}}_2^n}{\Bose},\quad \Bose \text{ is Bose-symmetric w.r.t. } A,
                 \\
                 &\begin{array}{ll}
                \Theta_{A_L\rightarrow C_{A_L}}\lrbracket{\rho_{A\lrbracket{B\Bar{B}}_1^n}} = W_{C_{A_L}}\otimes\rho_{A_R\lrbracket{B\Bar{B}}_1^n},\, &  \Omega_{A\rightarrow A}\lrbracket{\rho_{A\lrbracket{B\Bar{B}}_1^n}}=\rho_{A\lrbracket{B\Bar{B}}_1^n},\,\\
                 \Upsilon_{\lrbracket{B_L}_n\rightarrow C_{B_L}}\circ \Tr_{\Bar{B}_n}\lrbracket{\rho_{\lrbracket{B\Bar{B}}_1^n}} = K_{C_{B_L}}\otimes \rho_{\lrbracket{B\Bar{B}}_1^{n-1}\lrbracket{B_R}_n},\, &\Xi_{B_n\rightarrow B_n}\circ \Tr_{\Bar{B}_n}\lrbracket{\rho_{\lrbracket{B\Bar{B}}_1^n}}=\rho_{\lrbracket{B\Bar{B}}_1^{n-1}B_n},\,\\
            \end{array}
        \end{split}
    \end{align}    
Then
\begin{align}\label{eqn:Bose_symmetric_hierarchy_error}
    0
    \leq
    \mathrm{SDP}_n^{\mathrm{Bose}}(G)-\mathrm{cSEP}(G)
    \leq
    \min\lrbrace{f(A,B),f(B\vert\cdot)}
    \sqrt{2\ln(2)}
    \sqrt{\frac{\log\lrvert{A}}{n}}.
\end{align}
\end{lemma}

\begin{proof}
We first prove that the Bose-symmetric relaxation is an outer approximation. Let
\begin{align}
    \sigma_{AB}=\sum_{x\in\mathcal{X}}p(x)\, \sigma_A^x\otimes\sigma_B^x
\end{align}
be feasible for $\mathrm{cSEP}(G)$. Thus, for every $x\in\mathcal{X}$,
\begin{align}\label{eqn:cSEP_constraints_product_states_Bose_proof}
\begin{array}{cc}
    \Theta_{A_L\rightarrow C_{A_L}}\lrbracket{\sigma_A^x}=W_{C_{A_L}}\otimes\sigma_{A_R}^x, & \Omega_{A\rightarrow A}\lrbracket{\sigma_A^x}=\sigma_A^x,
    \\[0.4em]
    \Upsilon_{B_L\rightarrow C_{B_L}}\lrbracket{\sigma_B^x}=K_{C_{B_L}}\otimes\sigma_{B_R}^x, &
    \Xi_{B\rightarrow B}\lrbracket{\sigma_B^x}=\sigma_B^x.
\end{array}
\end{align}
Let
\begin{align}
    \ket{\Psi}_{B\Bar{B}}:=\sum_{i=1}^{\lrvert{B}}\ket{i}_B\otimes\ket{i}_{\Bar{B}}
\end{align}
be the non-normalized maximally entangled pure state across the $B:\Bar{B}$ bipartition. For each $x\in\mathcal{X}$, define the pretty-good purification
\begin{align}
    \ket{\psi^x}_{B\Bar{B}}:=\lrbracket{\sqrt{\sigma_B^x}\otimes\mathbb{1}_{\Bar{B}}}
\ket{\Psi}_{B\Bar{B}}.
\end{align}
Then
\begin{align}
    \Trr{\Bar{B}}{\ket{\psi^x}\bra{\psi^x}_{B\Bar{B}}}=\sigma_B^x.
\end{align}
Now set
\begin{align}\label{eqn:Bose_extension_from_cSEP_state}
    \Bose:=\sum_{x\in\mathcal{X}}p(x)\,\sigma_A^x\otimes
    \lrbracket{\ket{\psi^x}\bra{\psi^x}_{B\Bar{B}}}^{\otimes n}.
\end{align}
This is a normalized positive semidefinite operator. Moreover, $\ket{\psi^x}^{\otimes n}\in\lor^n\lrbracket{B\Bar{B}}$ for every $x\in\mathcal{X}$, and hence $\Bose$ is Bose-symmetric with respect to $A$. Concretely, with
\begin{align} 
\sqrt{\left(\sigma_B^x\right)^{\otimes n}} = \sqrt{\sigma_B^x}^{\otimes n}, \forall x\in \mathcal{X}
\end{align} 
we have
\begin{align}
\begin{split}
    &\lrbracket{\mathbb{1}_A\otimes U_{B_1^n}(\pi)\otimes U_{\Bar{B}_1^n}(\pi)} \Bose \\
    &= \sum_{x\in \mathcal{X}}p(x)\sigma_A^x\otimes \left[\lrbracket{U_{B_1^n}(\pi)\otimes U_{\Bar{B}_1^n}(\pi)} \lrbracket{\sqrt{\sigma_B^x}^{\otimes n}\otimes\mathbb{1}_{\Bar{B}_1^n}} \lrbracket{\ket{\Psi}\bra{\Psi}_{B\Bar{B}}}^{\otimes n}\lrbracket{\sqrt{\sigma_B^x}^{\otimes n}\otimes\mathbb{1}_{\Bar{B}_1^n}}^\dagger\right]\\
    &=\sum_{x\in \mathcal{X}}p(x)\sigma_A^x\otimes\lrrec{\lrbracket{U_{B_1^n}(\pi)\sqrt{\sigma_B^x}^{\otimes n} \otimes U_{\Bar{B}_1^n}(\pi)}\lrbracket{\ket{\Psi}\bra{\Psi}_{B\Bar{B}}}^{\otimes n}\lrbracket{\sqrt{\sigma_B^x}^{\otimes n}\otimes\mathbb{1}_{\Bar{B}_1^n}}^\dagger}\\
    &= \sum_{x\in\mathcal{X}}p(x)\sigma_A^x\otimes\Biggl[\lrbracket{U_{B_1^n}(\pi)\sqrt{\sigma_B^x}^{\otimes n} \otimes U_{\Bar{B}_1^n}(\pi)} \\
     &\qquad\lrbracket{U^T_{B_1^n}(\pi)U_{B_1^n}(\pi)\otimes U^T_{\Bar{B}_1^n}(\pi)U_{\Bar{B}_1^n}(\pi)}\lrbracket{\ket{\Psi}\bra{\Psi}_{B\Bar{B}}}^{\otimes n}\lrbracket{\sqrt{\sigma_B^x}^{\otimes n}\otimes\mathbb{1}_{\Bar{B}_1^n}}^\dagger\Biggl]\\
    &=\sum_{x\in\mathcal{X}}p(x)\sigma_A^x\otimes\Biggl[\lrbracket{U_{B_1^n}(\pi)\sqrt{\sigma_B^x}^{\otimes n}U^T_{B_1^n}(\pi) \otimes U_{\Bar{B}_1^n}(\pi)U^T_{\Bar{B}_1^n}(\pi)}\\
    &\qquad\lrbracket{U_{B_1^n}(\pi)\otimes U_{\Bar{B}_1^n}(\pi)}\lrbracket{\ket{\Psi}\bra{\Psi}_{B\Bar{B}}}^{\otimes n}\lrbracket{\sqrt{\sigma_B^x}^{\otimes n}\otimes\mathbb{1}_{\Bar{B}_1^n}}^\dagger\Biggl]\\
    &=\sum_{x\in\mathcal{X}}p(x)\sigma_A^x\otimes\Biggl[\lrbracket{\underbrace{U_{B_1^n}(\pi)\sqrt{\sigma_B^x}^{\otimes n}U^T_{B_1^n}(\pi)}_{= \sqrt{\sigma_B^x}^{\otimes n}} \otimes \mathbb{1}_{\Bar{B}_1^n}}\\
    &\qquad \underbrace{\lrbracket{U_{B_1^n}(\pi)\otimes U_{\Bar{B}_1^n}(\pi)}\lrbracket{\ket{\Psi}\bra{\Psi}_{B\Bar{B}}}^{\otimes n}}_{= \lrbracket{\ket{\Psi}\bra{\Psi}_{B\Bar{B}}}^{\otimes n}}\lrbracket{\sqrt{\sigma_B^x}^{\otimes n}\otimes\mathbb{1}_{\Bar{B}_1^n}}^\dagger\Biggl]=\Bose.
\end{split}
\end{align}

The marginal appearing in the objective satisfies
\begin{align}
    \rho_{AB_1}=\Trr{\Bar{B}_1\lrbracket{B\Bar{B}}_2^n}{\Bose}=
    \sum_{x\in\mathcal{X}}p(x)\,\sigma_A^x\otimes\sigma_{B_1}^x=\sigma_{AB_1}.
\end{align}
The Alice-side constraints follow directly from
\autoref{eqn:cSEP_constraints_product_states_Bose_proof}:
\begin{align}
\begin{split}
    \Theta_{A_L\rightarrow C_{A_L}}\lrbracket{\Bose}&=\sum_{x\in\mathcal{X}}p(x)\,\Theta_{A_L\rightarrow C_{A_L}}\lrbracket{\sigma_A^x}\otimes\lrbracket{\ket{\psi^x}\bra{\psi^x}_{B\Bar{B}}}^{\otimes n}\\
    &=W_{C_{A_L}}\otimes\sum_{x\in\mathcal{X}}p(x)\,\sigma_{A_R}^x\otimes\lrbracket{\ket{\psi^x}
    \bra{\psi^x}_{B\Bar{B}}}^{\otimes n}\\
    &=W_{C_{A_L}}\otimes\rho_{A_R\lrbracket{B\Bar{B}}_1^n},
\end{split}
\end{align}
and similarly
\begin{align}
    \Omega_{A\rightarrow A}\lrbracket{\Bose}=\Bose.
\end{align}
For Bob's local constraint, using $\Trr{\Bar{B}}{\ket{\psi^x}\bra{\psi^x}}=\sigma_B^x$, we obtain
\begin{align}
\begin{split}
    & \Upsilon_{\lrbracket{B_L}_n\rightarrow C_{B_L}}\circ\Trr{\Bar{B}_n}{\rho_{\lrbracket{B\Bar{B}}_1^n}}\\
    &\qquad =\sum_{x\in\mathcal{X}}p(x)\,\lrbracket{\ket{\psi^x}\bra{\psi^x}_{B\Bar{B}}}^{\otimes(n-1)}\otimes\Upsilon_{\lrbracket{B_L}_n\rightarrow C_{B_L}}\lrbracket{\sigma_{B_n}^x}\\
    &\qquad = K_{C_{B_L}}\otimes\sum_{x\in\mathcal{X}}p(x)\,\lrbracket{\ket{\psi^x}\bra{\psi^x}_{B\Bar{B}}}^{\otimes(n-1)}\otimes\sigma_{\lrbracket{B_R}_n}^x\\
    &\qquad = K_{C_{B_L}}\otimes \rho_{\lrbracket{B\Bar{B}}_1^{n-1}\lrbracket{B_R}_n}.
\end{split}
\end{align}
The Bob-side fixed point constraint is analogous:
\begin{align}
\begin{split}
    \Xi_{B_n\rightarrow B_n}\circ\Tr_{\Bar{B}_n}\lrbracket{\rho_{\lrbracket{B\Bar{B}}_1^n}}
    &=\sum_{x\in\mathcal{X}}p(x)\,\lrbracket{\ket{\psi^x}\bra{\psi^x}_{B\Bar{B}}}^{\otimes(n-1)}\otimes\Xi_{B_n\rightarrow B_n}\lrbracket{\sigma_{B_n}^x}\\
    &=\rho_{\lrbracket{B\Bar{B}}_1^{n-1}B_n}.
\end{split}
\end{align}
Thus every feasible point of $\mathrm{cSEP}(G)$ has a feasible Bose-symmetric extension with the same objective value. Hence
\begin{align}\label{eqn:cSEP_below_Bose_SDP}
    \mathrm{cSEP}(G)\leq \mathrm{SDP}_n^{\mathrm{Bose}}(G).
\end{align}

Conversely, let $\Bose$ be an optimal feasible point of $\mathrm{SDP}_n^{\mathrm{Bose}}(G)$, and let
\begin{align}
    \rho_{AB}=\Trr{\Bar{B}_1\lrbracket{B\Bar{B}}_2^n}{\Bose}
\end{align}
be the marginal appearing in the objective. By \autoref{lem:bose-symmetric_deFinetti}, applied to the feasible state $\Bose$, there exist a probability distribution $\lrbrace{p(x)}_{x\in\mathcal{X}}$ and states $\lrbrace{\omega_A^x}_{x\in\mathcal{X}}$, $\lrbrace{\omega_B^x}_{x\in\mathcal{X}}$ such that
\begin{align}\label{eqn:Bose_hierarchy_rounded_state}
    \tau_{AB}:=\sum_{x\in\mathcal{X}}p(x)\,\omega_A^x\otimes\omega_B^x
\end{align}
is feasible for $\mathrm{cSEP}(G)$ and
\begin{align}\label{eqn:Bose_hierarchy_trace_norm_rounding}
    \norm{\rho_{AB}-\tau_{AB}}_1\leq\min\lrbrace{f(A,B),f(B\vert\cdot)}\sqrt{2\ln(2)}\sqrt{\frac{\log\lrvert{A}}{n}}.
\end{align}
Since $\tau_{AB}$ is feasible for $\mathrm{cSEP}(G)$, we have
\begin{align}
    \Trr{}{G_{AB}\tau_{AB}}\leq\mathrm{cSEP}(G).
\end{align}
Therefore, using Hölder's inequality and $\norm{G_{AB}}_{\infty}\leq 1$,
\begin{align}
\begin{split}
    \mathrm{SDP}_n^{\mathrm{Bose}}(G)&=\Trr{}{G_{AB}\rho_{AB}}\\
    &=\Trr{}{G_{AB}\tau_{AB}}+\Trr{}{G_{AB}\lrbracket{\rho_{AB}-\tau_{AB}}}\\
    &\leq\mathrm{cSEP}(G)+\norm{G_{AB}}_{\infty}\norm{\rho_{AB}-\tau_{AB}}_1\\
    &\leq\mathrm{cSEP}(G) +\min\lrbrace{f(A,B),f(B\vert\cdot)}\sqrt{2\ln(2)}\sqrt{\frac{\log\lrvert{A}}{n}}.
\end{split}
\end{align}
Together with \autoref{eqn:cSEP_below_Bose_SDP}, this proves \autoref{eqn:Bose_symmetric_hierarchy_error}.
\end{proof}

Note that in our cSEP applications, the bound in \autoref{eqn:first_puri_bound} from the Bose-symmetric de Finetti theorem in \autoref{lem:bose-symmetric_deFinetti} yields a worse approximation bound
\begin{align}
\begin{split}
     \norm{\rho_{AB} - \sum_{x\in\mathcal{X}}p(x)\sigma_A^x\otimes \sigma_B^x}_1 &\stackrel{\text{Data processing Ineq.}}{\leq} \norm{\rho_{A(B\Bar{B})} - \sum_{x\in\mathcal{X}}p(x)\sigma_A^x\otimes \sigma_{B\Bar{B}}^x}_1\\
    &\stackrel{\autoref{lem:bose-symmetric_deFinetti}}{\leq} \min\lrbrace{ f(A, B\bar{B}), f(B\bar{B}\vert \cdot) }\sqrt{2\ln(2)}\sqrt{\frac{ \log(\lrvert{A})}{n}},
\end{split}
\end{align}
where $\sigma_B^x:=\Trr{\Bar{B}}{\sigma^x_{B\Bar{B}}}$ and $\lrbrace{p(x)}_{x\in\mathcal{X}}$, $\lrbrace{\sigma^x_A}_{x\in\mathcal{X}}$, $\lrbrace{\sigma^x_{B\Bar{B}}}_{x\in\mathcal{X}}$ denote the ensemble from \autoref{eqn:first_puri_bound}.


\subsection{SDP complexity}\label{sec:sdp-complexity}

Our motivation for introducing the Bose-symmetric hierarchy is that the Bose-symmetry condition can be enforced implicitly by restricting the optimization space in the SDP of \autoref{lem:bose-symmetric_hierarchy} to operators supported on $A \otimes \lor^n\lrbracket{B\Bar{B}}$. Since the operators that are Bose-symmetric w.r.t.\ $A$ are precisely $\End{}{A \otimes \lor^n\lrbracket{B\Bar{B}}}$ (\autoref{def:bose_symmetric} and \autoref{prop:Bose_sym_space_char}), they form a full matrix $*$-algebra whose order, for fixed local dimension, grows only polynomially in $n$; this allows us to express the positive semidefiniteness and normalization constraints in terms of a matrix representation of polynomial size. The structural statement is recorded in \autoref{prop:bose_full_matrix_algebra}, where it opens the algorithmic part of this work: in \autoref{sec:explicit_star_isomorphism} we construct an explicit unital $*$-isomorphism onto this matrix algebra, and in \autoref{sec:symmetry_adapted_reformulation} and \autoref{sec:efficiency_of_Bose_trafo} we show that the resulting reformulation of $\mathrm{SDP}_n^{\mathrm{Bose}}(G)$ can be constructed in time polynomial in $n$ for fixed local dimension, without ever representing operators on $A\lrbracket{B\Bar{B}}_1^n$ in the computational basis.
We now show that an analogous efficient representation is available for the remaining constraints appearing in the SDP of \autoref{lem:bose-symmetric_hierarchy}.

\begin{lemma}\label{lem:bose-symmetric_hierarchy_general_complexity}
Assume the setting of \autoref{lem:bose-symmetric_hierarchy}, and suppose that
\begin{align}
    \lrvert{C_{A_L}}\leq \lrvert{A_L}, \qquad\lrvert{C_{B_L}}\leq \lrvert{B_L}.
\end{align}
Then, for every $\epsilon>0$, the value of $\mathrm{cSEP}(G)$ from \autoref{eqn:cSEP_Main} is approximated from above up to additive error $\epsilon$ by $\mathrm{SDP}_n^{\mathrm{Bose}}(G)$ whenever
\begin{align}
    n\geq \frac{2\ln(2)\,\lrbracket{\min\lrbrace{f(A,B),f(B\vert\cdot)}}^2\log\lrvert{A}}{\epsilon^2}.
\end{align}
Moreover, the positive semidefinite variable is a square matrix of order
\begin{align}
    r_n:=\lrvert{A}\binom{\lrvert{B}\lrvert{\Bar{B}}+n-1}{n}=\lrvert{A}\binom{\lrvert{B}^2+n-1}{n},
\end{align}
and therefore has at most
\begin{align}
    r_n^2=\lrvert{A}^2\binom{\lrvert{B}^2+n-1}{n}^2\leq\lrvert{A}^2 (n+1)^{2\lrvert{B}^2}
\end{align}
real degrees of freedom. In particular, for fixed $\lrvert{A}$ and $\lrvert{B}$,
the number of real degrees of freedom is polynomial in $n$. Thus, $\mathrm{cSEP}(G)$ can be approximated up to an additive error $\epsilon>0$ by solving an SDP with at most $\operatorname{poly}(\epsilon^{-1})$ real degrees of freedom involving square matrices of order at most
\begin{align}
   \max\lrbrace{\lrvert{A}\binom{\lrvert{B}^2+n-1}{n},\ \lrvert{B}\binom{\lrvert{B}^2+n-2}{n-1}}.
\end{align}
\end{lemma}
\begin{proof}
Set $D:=B\Bar{B}$ and $n\geq 1$. Since $B\cong \Bar{B}$, we have
\begin{align}
     \lrvert{D}=\lrvert{B}\lrvert{\Bar{B}}=\lrvert{B}^2.
\end{align}
A Bose-symmetric feasible point satisfies
\begin{align}
      \rho_{AD_1^n}=\lrbracket{\mathbb{1}_A\otimes P_{\lor^n(D)}}\rho_{AD_1^n}\lrbracket{\mathbb{1}_A\otimes P_{\lor^n(D)}}.
\end{align}
Thus, after choosing an orthonormal basis of $A\otimes \lor^n(D)$, the optimization variable is simply a positive semidefinite operator on a Hilbert space of dimension (cf.\ \autoref{prop:Bose_sym_space_char})
\begin{align}
    r_n=\dim\lrbracket{A\otimes \lor^n(D)}=\lrvert{A}\binom{\lrvert{D}+n-1}{n}=\lrvert{A}\binom{\lrvert{B}^2+n-1}{n}.
\end{align}
The real vector space of Hermitian operators on an $r_n$-dimensional complex Hilbert space has real dimension $r_n^2$. Hence the positive semidefinite variable has
\begin{align}
   r_n^2=\lrvert{A}^2\binom{\lrvert{B}^2+n-1}{n}^2  
\end{align}
real degrees of freedom. The coarse bound
\begin{align}
      \binom{\lrvert{B}^2+n-1}{n}\leq (n+1)^{\lrvert{B}^2}
\end{align}
gives
\begin{align}
     r_n^2\leq\lrvert{A}^2(n+1)^{2\lrvert{B}^2}.
\end{align}
It remains to check that, when $\lrvert{A}$ and $\lrvert{B}$ are fixed, the constraints can be expressed in terms of vector spaces of dimension polynomial in $n$. The maps
$\Theta_{A_L\rightarrow C_{A_L}}$ and $\Omega_{A\rightarrow A}$ act only on
Alice's registers, and therefore commute with the projection onto
$\lor^n(D)$ on Bob's extended registers. Consequently,
\begin{align}
    \Theta_{A_L\rightarrow C_{A_L}}
    \lrbracket{\rho_{A D_1^n}},
    \,
    W_{C_{A_L}}\otimes \rho_{A_R D_1^n}
    \in
    \mathcal{B}\lrbracket{C_{A_L}\otimes A_R\otimes \lor^n(D)}
\end{align}
and
\begin{align}
    \Omega_{A\rightarrow A}\lrbracket{\rho_{A D_1^n}},
    \,
    \rho_{A D_1^n}
    \in
    \mathcal{B}\lrbracket{A\otimes \lor^n(D)}.
\end{align}
The assumption $\lrvert{C_{A_L}}\leq \lrvert{A_L}$ ensures that
\begin{align}
     \dim\lrbracket{C_{A_L}\otimes A_R\otimes \lor^n(D)}\leq\dim\lrbracket{A\otimes \lor^n(D)}=r_n.    
\end{align}
For Bob's constraints, we use the vector space inclusion
\begin{align}
    \lor^n(D)\subseteq \lor^{n-1}(D)\otimes D_n,
\end{align}
which follows from $S_{n-1}\subset S_n$. Tracing out $\Bar{B}_n$ and applying maps only to the remaining $B_n$ register
does not affect the first $n-1$ copies of $D$. Hence
\begin{align}
    \Xi_{B_n\rightarrow B_n}\circ \Tr_{\Bar{B}_n}
    \lrbracket{\rho_{D_1^n}},
    \,
    \rho_{D_1^{n-1}B_n}
    \in
    \mathcal{B}\lrbracket{\lor^{n-1}(D)\otimes B_n}
\end{align}
and
\begin{align}
    \Upsilon_{\lrbracket{B_L}_n\rightarrow C_{B_L}}
    \circ \Tr_{\Bar{B}_n}\lrbracket{\rho_{D_1^n}},
    \,
    K_{C_{B_L}}\otimes \rho_{D_1^{n-1}\lrbracket{B_R}_n}
    \in
    \mathcal{B}\lrbracket{
        C_{B_L}\otimes \lor^{n-1}(D)\otimes \lrbracket{B_R}_n
    }.
\end{align}
The assumption $\lrvert{C_{B_L}}\leq \lrvert{B_L}$ implies
\begin{align}
    \dim\lrbracket{C_{B_L}\otimes \lor^{n-1}(D)\otimes \lrbracket{B_R}_n}\leq \dim\lrbracket{\lor^{n-1}(D)\otimes B_n}\leq\lrvert{B}\binom{\lrvert{B}^2+n-2}{n-1},
\end{align}
so these constraint spaces are also of dimension polynomial in $n$ for fixed $\lrvert{B}$. Although these constraints act on $\lor^{n-1}(D)\otimes B_n$ rather than directly on $\lor^n(D)$, this does not
change the complexity estimate, since
\begin{align}
    \dim\lrbracket{\lor^{n-1}(D)\otimes B_n}=\lrvert{B}\binom{\lrvert{B}^2+n-2}{n-1}=\mathcal{O}\lrbracket{n^{\lrvert{B}^2-1}}
\end{align}
for fixed $\lrvert{B}$. Finally, by \autoref{lem:bose-symmetric_hierarchy},
\begin{align}
    0\leq\mathrm{SDP}_n^{\mathrm{Bose}}(G)-\mathrm{cSEP}(G)\leq\min\lrbrace{f(A,B),f(B\vert\cdot)}\sqrt{2\ln(2)}\sqrt{\frac{\log\lrvert{A}}{n}}.
\end{align}
Thus, if
\begin{align}
     n\geq 
    \left\lceil
        \frac{2\ln(2)\,\lrbracket{\min\lrbrace{f(A,B),f(B\vert\cdot)}}^2\log\lrvert{A}}{\epsilon^2}
    \right\rceil,    
\end{align}
then
\begin{align}
    0\leq\mathrm{SDP}_n^{\mathrm{Bose}}(G)-\mathrm{cSEP}(G)\leq \epsilon.
\end{align}
This proves the claim.
\end{proof}


\subsection{Application to quantum non-local games}\label{sec:nonlocal-games}

In the free non-local game setting, the value of the game can be upper bounded via the following Bose-symmetric SDP hierarchy: 
\begin{align}\label{eqn:Bose_SDP_non-local_games}
    \begin{split}
        \begin{array}{ll}
             & \displaystyle \mathrm{SDP}_n^{\mathrm{Bose}}\lrbracket{T, V, \pi}= \lrvert{T}\max_{\rho} \Tr\lrrec{\lrbracket{V_{A_1A_2Q_1Q_2}\otimes S_{T\tilde{T}}}\rho_{\lrbracket{A_1Q_1T}\lrbracket{A_2Q_2\tilde{T}}}} \\
             &\\
             \text{s.t.}&  \rho_{\lrbracket{A_1Q_1T}\lrbracket{A_2Q_2\tilde{T}}} =\Trr{\lrbracket{\overline{A_2Q_2\tilde{T}}}_1\lrbracket{\lrbracket{A_2Q_2\tilde{T}}\lrbracket{\overline{A_2Q_2\tilde{T}}}}_2^n}{\rho_{(A_1Q_1T)\left((A_2Q_2\tilde{T})(\overline{A_2Q_2\tilde{T}})\right)_1^n}}\\
             &\\
             & \rho_{(A_1Q_1T)\left((A_2Q_2\tilde{T})(\overline{A_2Q_2\tilde{T}})\right)_1^n} \text{ is a Bose-symmetric state w.r.t.\ } \lrbracket{A_1Q_1T}\\
             & \Trr{A_1}{\rho_{(A_1Q_1T)\left((A_2Q_2\tilde{T})(\overline{A_2Q_2\tilde{T}})\right)_1^n}} = \sum_{q_1\in Q_1}\pi_1(q_1)\ket{q_1}\bra{q_1}_{Q_1} \otimes \rho_{T\left((A_2Q_2\tilde{T})(\overline{A_2Q_2\tilde{T}})\right)_1^n}\\
             &\\
             & \Trr{\lrbracket{A_2}_n(\overline{A_2Q_2\tilde{T}})_n}{\rho_{\left((A_2Q_2\tilde{T})(\overline{A_2Q_2\tilde{T}})\right)_1^n}} = \rho_{\left((A_2Q_2\tilde{T})(\overline{A_2Q_2\tilde{T}})\right)_1^{n-1}}\otimes \lrbracket{\sum_{q_2\in Q_2}\pi_2(q_2)\ket{q_2}\bra{q_2}_{Q_2}\otimes\frac{\mathbb{1}_{\tilde{T}}}{\lrvert{T}}}.\,\\
        \end{array}
    \end{split}
\end{align}

As in $\mathrm{SDP}{n}\lrbracket{T,V,\pi}$ from \autoref{sec:non_lacal_games_as_cbos}, the states considered in $\mathrm{SDP}{n}^{\mathrm{Bose}}\lrbracket{T,V,\pi}$ are not required to be classical--quantum. In the Bose-symmetric setting, however, the presence of the purifying systems leads to a substantially different situation and yields the following analogue of  \autoref{prop:restriction_to_cq_states}. We abbreviate $B:= A_2Q_2\tilde{T}$ and $\Bar{B}:=\overline{A_2Q_2\tilde{T}}$, as in \autoref{lem:bose-symmetric_hierarchy}, and write $\rho\equiv\rho_{(A_1Q_1T)(B\Bar{B})_1^n}$ for the optimization variable of \autoref{eqn:Bose_SDP_non-local_games}. For phase vectors $\alpha\in\R^{\lrvert{A_1}}$, $\beta\in\R^{\lrvert{Q_1}}$, $\varphi\in\R^{\lrvert{A_2}}$ and $\psi\in\R^{\lrvert{Q_2}}$, define the diagonal unitaries
\begin{align}\label{eqn:bose_dephasing_group}
    \begin{split}
    &u_{A_1}(\alpha):=\sum_{a_1}e^{i\alpha_{a_1}}\ket{a_1}\bra{a_1},\hspace{0.5cm} v_{Q_1}(\beta):=\sum_{q_1}e^{i\beta_{q_1}}\ket{q_1}\bra{q_1},\\
    &u_{A_2}(\varphi):=\sum_{a_2}e^{i\varphi_{a_2}}\ket{a_2}\bra{a_2},\hspace{0.5cm} v_{Q_2}(\psi):=\sum_{q_2}e^{i\psi_{q_2}}\ket{q_2}\bra{q_2},
    \end{split}
\end{align}
together with $g(\varphi,\psi):=u_{A_2}(\varphi)\otimes v_{Q_2}(\psi)\otimes\mathbb{1}_{\tilde{T}}\otimes\mathbb{1}_{\Bar{B}}$ acting on a single block $B\Bar{B}$, and
\begin{align}\label{eqn:bose_dephasing_unitary}
    W(\alpha,\beta,\varphi,\psi):=\lrbracket{u_{A_1}(\alpha)\otimes v_{Q_1}(\beta)\otimes\mathbb{1}_{T}}\otimes g(\varphi,\psi)^{\otimes n}.
\end{align}
The associated dephasing channel is the average
\begin{align}\label{eqn:bose_dephasing_average}
    \mathcal{D}\lrbracket{X}:=\mathbb{E}_{\alpha,\beta,\varphi,\psi}\lrbrace{W(\alpha,\beta,\varphi,\psi)\, X\, W(\alpha,\beta,\varphi,\psi)^{\dagger}},
\end{align}
where all phases are drawn independently and uniformly from $[0,2\pi)$.
 
\begin{proposition}[Bose-compatible classicality]\label{prop:bose_cq_restriction}
Let $n\in\mathbb{N}_{\geq 1}$. If $\rho$ is feasible for $\mathrm{SDP}_n^{\mathrm{Bose}}\lrbracket{T,V,\pi}$ in \autoref{eqn:Bose_SDP_non-local_games}, then $\mathcal{D}\lrbracket{\rho}$ is feasible and attains the same objective value. In particular, the optimum is attained at a state $\rho^{\star}$ with the following properties:
\begin{enumerate}[label=(\roman*)]
    \item $\rho^{\star}$ is classical-quantum with respect to $A_1Q_1$, i.e.\ $\rho^{\star}=\sum_{a_1,q_1}\ket{a_1q_1}\bra{a_1q_1}_{A_1Q_1}\otimes\rho^{\star\,\lrbracket{a_1,q_1}}_{T(B\Bar{B})_1^n}$;
    \item $\rho^{\star}$ commutes with the answer- and question-occupation observables of the extended blocks,
    \begin{align}\label{eqn:occupation_observables}
        N_{a_2}:=\sum_{i=1}^{n}\ket{a_2}\bra{a_2}_{(A_2)_i},\hspace{0.5cm} N_{q_2}:=\sum_{i=1}^{n}\ket{q_2}\bra{q_2}_{(Q_2)_i},\hspace{0.5cm} a_2\in\lrrec{\lrvert{A_2}},\ q_2\in\lrrec{\lrvert{Q_2}},
    \end{align}
    i.e.\ it is block-diagonal with respect to their joint eigenspaces;
    \item the marginal entering the objective is classical-quantum with respect to both $A_1Q_1$ and $\lrbracket{A_2Q_2}_1$,
    \begin{align}\label{eqn:bose_cq_marginal_form}
        \rho^{\star}_{\lrbracket{A_1Q_1T}\lrbracket{A_2Q_2\tilde{T}}}=\sum_{\substack{a_1,\,q_1,\\ a_2,\, q_2}}\ket{a_1q_1}\bra{a_1q_1}_{A_1Q_1}\otimes\ket{a_2q_2}\bra{a_2q_2}_{A_2Q_2}\otimes\rho^{\star\,\lrbracket{a_1,q_1,a_2,q_2}}_{T\tilde{T}}.
    \end{align}
\end{enumerate}
\end{proposition}
\begin{proof}
We first record two properties of the unitaries in \autoref{eqn:bose_dephasing_unitary}. Since $g(\varphi,\psi)^{\otimes n}$ is an $n$-fold tensor power of a single-block unitary, it commutes with every permutation unitary $U_{(B\Bar{B})_1^n}(\pi)$ and hence with $P_{\lor^n\lrbracket{B\Bar{B}}}$; consequently
\begin{align}\label{eqn:W_commutes_with_projector}
    \lrrec{W(\alpha,\beta,\varphi,\psi),\ \mathbb{1}_{A_1Q_1T}\otimes P_{\lor^n\lrbracket{B\Bar{B}}}}=0.
\end{align}
Moreover, $W$ is diagonal in the distinguished bases of all classical registers and acts as the identity on $T$, $\tilde{T}_1^n$ and $\Bar{B}_1^n$.
 
Fix phases $(\alpha,\beta,\varphi,\psi)$ and write $W\equiv W(\alpha,\beta,\varphi,\psi)$, $g\equiv g(\varphi,\psi)$. We claim that $W\rho\, W^{\dagger}$ is feasible whenever $\rho$ is. Positivity and normalization are clear. Bose symmetry with respect to $A_1Q_1T$ follows from \autoref{eqn:W_commutes_with_projector} and \autoref{def:bose_symmetric}. For the constraint on Alice's side in \autoref{eqn:Bose_SDP_non-local_games}, note that $u_{A_1}(\alpha)$ is traced out, so that
\begin{align}
    \Trr{A_1}{W\rho\, W^{\dagger}}=\lrbracket{v_{Q_1}(\beta)\otimes\mathbb{1}_T\otimes g^{\otimes n}}\Trr{A_1}{\rho}\lrbracket{v_{Q_1}(\beta)\otimes\mathbb{1}_T\otimes g^{\otimes n}}^{\dagger};
\end{align}
conjugating both sides of the constraint by the same unitary and using that $v_{Q_1}(\beta)$ commutes with $\sum_{q_1}\pi_1(q_1)\ket{q_1}\bra{q_1}_{Q_1}$, while $\Trr{A_1Q_1}{W\rho\,W^{\dagger}}=g^{\otimes n}\,\rho_{T(B\Bar{B})_1^n}\lrbracket{g^{\otimes n}}^{\dagger}$ is the corresponding marginal of $W\rho\,W^{\dagger}$, shows that the constraint holds for $W\rho\,W^{\dagger}$. For the constraint on the extended side, note first that $\Trr{A_1Q_1T}{W\rho\,W^{\dagger}}=g^{\otimes n}\rho_{(B\Bar{B})_1^n}\lrbracket{g^{\otimes n}}^{\dagger}$, as the Alice factors of $W$ are traced out. We then use that $u_{A_2}(\varphi)$ acts on $\lrbracket{A_2}_n$ alone and is traced out, while $g$ acts trivially on $\Bar{B}_n$; hence
\begin{align}
    \Trr{\lrbracket{A_2}_n\Bar{B}_n}{g^{\otimes n}\rho_{(B\Bar{B})_1^n}\lrbracket{g^{\otimes n}}^{\dagger}}
    =W'\,\Trr{\lrbracket{A_2}_n\Bar{B}_n}{\rho_{(B\Bar{B})_1^n}}\,W'^{\dagger},\hspace{0.5cm} W':=g^{\otimes (n-1)}\otimes\lrbracket{v_{Q_2}(\psi)\otimes\mathbb{1}_{\tilde{T}}}_{n}.
\end{align}
Conjugating the right-hand side of the constraint by $W'$ yields $g^{\otimes(n-1)}\rho_{(B\Bar{B})_1^{n-1}}\lrbracket{g^{\otimes(n-1)}}^{\dagger}\otimes K$ with $K:=\sum_{q_2}\pi_2(q_2)\ket{q_2}\bra{q_2}_{Q_2}\otimes\mathbb{1}_{\tilde{T}}/\lrvert{T}$, since $v_{Q_2}(\psi)$ commutes with $K$; as $g^{\otimes(n-1)}\rho_{(B\Bar{B})_1^{n-1}}\lrbracket{g^{\otimes(n-1)}}^{\dagger}$ is precisely the $\lrbracket{B\Bar{B}}_1^{n-1}$-marginal of $W\rho\,W^{\dagger}$, the constraint holds for $W\rho\,W^{\dagger}$. Lastly, tracing $\Bar{B}_1\lrbracket{B\Bar{B}}_2^n$ shows that the two-party marginal transforms as
\begin{align}\label{eqn:bose_marginal_transform}
    \rho_{\lrbracket{A_1Q_1T}\lrbracket{A_2Q_2\tilde{T}}}\;\longmapsto\;
    \lrbracket{u_{A_1}(\alpha)\otimes v_{Q_1}(\beta)\otimes\mathbb{1}_{T}\otimes u_{A_2}(\varphi)\otimes v_{Q_2}(\psi)\otimes\mathbb{1}_{\tilde{T}}}\,\rho_{\lrbracket{A_1Q_1T}\lrbracket{A_2Q_2\tilde{T}}}\,\lrbracket{\cdot}^{\dagger},
\end{align}
and since $V_{A_1A_2Q_1Q_2}$ is diagonal in the distinguished bases while $S_{T\tilde{T}}$ acts trivially on them, the objective value is unchanged.
 
Averaging over the phases preserves all of the above: the constraints are affine and the positive semidefinite cone is closed under integration, so $\mathcal{D}\lrbracket{\rho}$ is feasible, and by linearity it attains the same objective value. The feasible set is compact and the objective continuous, so the maximum is attained; for any optimizer $\rho_{\mathrm{opt}}$, the state $\rho^{\star}:=\mathcal{D}\lrbracket{\rho_{\mathrm{opt}}}$ is an optimizer in the fixed-point set of $\mathcal{D}$. Since the phases enter \autoref{eqn:bose_dephasing_average} independently, $\rho^{\star}$ commutes with $u_{A_1}(\alpha)\otimes\mathbb{1}$, $v_{Q_1}(\beta)\otimes\mathbb{1}$ and with $\mathbb{1}\otimes e^{i\sum_{a_2}\varphi_{a_2}N_{a_2}}$, $\mathbb{1}\otimes e^{i\sum_{q_2}\psi_{q_2}N_{q_2}}$ for all phase vectors. The first two conditions are equivalent to (i), and the latter two, upon differentiating with respect to the phases, are equivalent to (ii). For (iii), apply \autoref{eqn:bose_marginal_transform} to $\rho^{\star}=\mathcal{D}\lrbracket{\rho^{\star}}$: the two-party marginal of $\rho^{\star}$ equals its own dephasing with respect to the distinguished bases of $A_1$, $Q_1$, $\lrbracket{A_2}_1$ and $\lrbracket{Q_2}_1$, which is the form \autoref{eqn:bose_cq_marginal_form}.
\end{proof}
\begin{remark}\label{rem:bose_cq_restriction}
Full classicality of the extended registers $\lrbracket{A_2Q_2}_i$ is \emph{not} available in the Bose-symmetric setting: dephasing the individual blocks in their distinguished bases does not preserve Bose symmetry. For instance, for $n=2$ and a single classical register per block, the Bose-symmetric state $\frac{1}{2}\lrbracket{\ket{01}+\ket{10}}\lrbracket{\bra{01}+\bra{10}}$ dephases to $\frac{1}{2}\lrbracket{\ket{01}\bra{01}+\ket{10}\bra{10}}$, which has weight $\frac{1}{2}$ outside $\lor^2$. This is no accident of the proof: the purification construction in \autoref{lem:bose-symmetric_hierarchy} stores the classical randomness of Bob's strategy as entanglement between $A_2Q_2$ and the mirror registers $\overline{A_2Q_2}$, so even the feasible points arising from honest strategies carry coherences on $\lrbracket{A_2Q_2}_i$. The Bose-compatible remnant of classicality is precisely the occupation data of \autoref{prop:bose_cq_restriction}(ii): in the type basis of \autoref{prop:type_basis_symmetric_subspace}, condition (ii) states that $\rho^{\star}$ is block-diagonal with respect to the answer- and question-weights induced by the types, an additional block structure inside the full matrix algebra of \autoref{prop:bose_full_matrix_algebra} that can be exploited in the symmetry-adapted reformulation of \autoref{sec:symmetry_adapted_reformulation} to reduce the number of optimization variables. This parallels the situation for $\mathrm{SDP}_n\lrbracket{T,V,\pi}$ in \autoref{prop:restriction_to_cq_states}, where the absence of a support constraint permits the stronger, fully classical-quantum restriction.
\end{remark}

The following result characterizes the computational complexity of the problem and is a restatement of \autoref{lem:Bose_symmetry_first_step}.

\begin{lemma}[Bose-symmetric hierarchy for fixed-size free games]\label{lem:bose_symmetric_fixed_size_games_complexity}
Let $(V,\pi)$ be a two-player free non-local game with $\lrvert{A_1}=\lrvert{A_2}=\lrvert{A}$ answers, $\lrvert{Q_1}=\lrvert{Q_2}=\lrvert{Q}$ questions, and local quantum dimension $\lrvert{T}$. Set
\begin{align}
    d:=\lrvert{A}\lrvert{Q}\lrvert{T}=\lrvert{A_1Q_1T}=\lrvert{A_2Q_2\tilde{T}}.
\end{align}
For every $\epsilon>0$, define
\begin{align}\label{eqn:Bose_game_level_epsilon}
    n_{\epsilon}:=\max\lrbrace{1,\left\lceil\frac{8\ln(2)\,\lrvert{T}^4\log d}{\epsilon^2}\right\rceil}.
\end{align}
Then the Bose-symmetric relaxation in \autoref{eqn:Bose_SDP_non-local_games} satisfies
\begin{align}\label{eqn:Bose_game_error_epsilon}
    0\leq\mathrm{SDP}_{n_{\epsilon}}^{\mathrm{Bose}}\lrbracket{T,V,\pi}-w_{Q(T)}(V,\pi)\leq \epsilon.
\end{align}
Moreover, the positive semidefinite variable can be chosen to act on
\begin{align}
    \lrbracket{A_1Q_1T}\otimes
    \lor^{n_{\epsilon}}
    \lrbracket{
        \lrbracket{A_2Q_2\tilde{T}}
        \lrbracket{\overline{A_2Q_2\tilde{T}}}
    },
\end{align}
and is therefore a Hermitian matrix of order
\begin{align}\label{eqn:Bose_game_variable_order}
    R_{\epsilon}
    :=d\binom{d^2+n_{\epsilon}-1}{n_{\epsilon}}.
\end{align}
Consequently the SDP has at most
\begin{align}\label{eqn:Bose_game_dof_bound}
    R_{\epsilon}^2
    \leq
    d^2\lrbracket{n_{\epsilon}+1}^{2d^2}
    \leq
    d^2
    \lrbracket{
        2+\frac{8\ln(2)\,\lrvert{T}^4\log d}{\epsilon^2}
    }^{2d^2}
\end{align}
real degrees of freedom. In particular, for fixed $\lrvert{A}$, $\lrvert{Q}$, and $\lrvert{T}$, this number is polynomial in $\epsilon^{-1}$; more explicitly, it scales as $\mathcal{O}\lrbracket{\epsilon^{-4d^2}}$ with constants depending only on $\lrvert{A}$, $\lrvert{Q}$, and $\lrvert{T}$.
\end{lemma}

\begin{proof}
The Bose-symmetric hierarchy is an outer approximation: the purification construction in the proof of \autoref{lem:bose-symmetric_hierarchy} embeds every feasible strategy for $w_{Q(T)}(V,\pi)$ into a feasible point of $\mathrm{SDP}_{n}^{\mathrm{Bose}}\lrbracket{T,V,\pi}$ with the same objective value. Hence
\begin{align}\label{eqn:Bose_game_outer_bound}
    w_{Q(T)}(V,\pi)\leq \mathrm{SDP}_{n}^{\mathrm{Bose}}\lrbracket{T,V,\pi}\qquad \forall n\in\mathbb{N}.
\end{align}

Conversely, fix $n\in\mathbb{N}$ and let
\begin{align}
    \rho_{\lrbracket{A_1Q_1T}\lrbracket{\lrbracket{A_2Q_2\tilde{T}}\lrbracket{\overline{A_2Q_2\tilde{T}}}}_1^n}
\end{align}
be feasible for \autoref{eqn:Bose_SDP_non-local_games}. We apply the constrained
Bose-symmetric de Finetti theorem \autoref{lem:bose-symmetric_deFinetti} with
\begin{align}
\begin{array}{ccccccc}
     A_L=A_1Q_1,
     & C_{A_L}=Q_1,
     & A_R=T,
     & B_L=A_2Q_2\tilde{T},
     & C_{B_L}=Q_2\tilde{T},
     & B_R=\CC,
     & \bar{B}=\overline{A_2Q_2\tilde{T}},
\end{array}
\end{align}
and with the constraint maps and operators
\begin{align}
\begin{array}{cc}
    \Theta_{A_L\rightarrow C_{A_L}}(\cdot)=\Trr{A_1}{\cdot}, &\Upsilon_{B_L\rightarrow C_{B_L}}(\cdot)=\Trr{A_2}{\cdot},\\
    W_{C_{A_L}}=\displaystyle\sum_{q_1\in Q_1}\pi_1(q_1)\ket{q_1}\bra{q_1}_{Q_1},&K_{C_{B_L}}=\displaystyle\sum_{q_2\in Q_2}\pi_2(q_2)\ket{q_2}\bra{q_2}_{Q_2}\otimes\frac{\mathbb{1}_{\tilde{T}}}{\lrvert{T}}.
\end{array}
\end{align}
The fixed-point maps are taken to be the identity maps on $A_1Q_1T$ and $A_2Q_2\tilde{T}$. The assumptions $\lrvert{C_{A_L}}\leq \lrvert{A_L}$ and $\lrvert{C_{B_L}}\leq \lrvert{B_L}$ of \autoref{lem:bose-symmetric_hierarchy_general_complexity} are immediate. Moreover, by \autoref{prop:bose_cq_restriction}, w.l.o.g. the marginal $\rho_{\lrbracket{A_1Q_1T}\lrbracket{A_2Q_2\tilde{T}}}$ is a classical-quantum state. As the classical registers do not contribute to the measurement distortion we have
\begin{align}
    \min\lrbrace{f(A_1Q_1T, A_2Q_2\tilde{T}),f(A_2Q_2\tilde{T} \vert\cdot)} \leq 2\lrvert{T}.
\end{align}
Therefore the Bose-symmetric de Finetti theorem \autoref{lem:bose-symmetric_deFinetti} gives a feasible constrained separable state
\begin{align}
    \sigma_{\lrbracket{A_1Q_1T}\lrbracket{A_2Q_2\tilde{T}}}:=\sum_{x\in\mathcal{X}}p(x)\,
    \rho^x_{A_1Q_1T}\otimes\rho^x_{A_2Q_2\tilde{T}}
\end{align}
for the game formulation of \autoref{lem:non_local_games_as_cbo_sym} such that
\begin{align}\label{eqn:Bose_game_trace_distance}
    \left\lVert\rho_{\lrbracket{A_1Q_1T}\lrbracket{A_2Q_2\tilde{T}}}-\sigma_{\lrbracket{A_1Q_1T}\lrbracket{A_2Q_2\tilde{T}}}\right\rVert_1\leq2\lrvert{T}\sqrt{2\ln(2)}\sqrt{\frac{\log d}{n}}.
\end{align}
Since $\left\lVert V_{A_1A_2Q_1Q_2}\otimes S_{T\tilde{T}}\right\rVert_{\infty}\leq 1$, Hölder's inequality yields
\begin{align}\label{eqn:Bose_game_value_error_n}
\begin{split}
    \mathrm{SDP}_{n}^{\mathrm{Bose}}\lrbracket{T,V,\pi} & \leq w_{Q(T)}(V,\pi) +\lrvert{T}\,
    \left\lVert\rho_{\lrbracket{A_1Q_1T}\lrbracket{A_2Q_2\tilde{T}}}-\sigma_{\lrbracket{A_1Q_1T}\lrbracket{A_2Q_2\tilde{T}}}\right\rVert_1\\
    &\leq w_{Q(T)}(V,\pi)+2\lrvert{T}^2\sqrt{2\ln(2)}\sqrt{\frac{\log d}{n}}.
\end{split}
\end{align}
Combining \autoref{eqn:Bose_game_outer_bound} and \autoref{eqn:Bose_game_value_error_n}, and then taking $n=n_{\epsilon}$, proves \autoref{eqn:Bose_game_error_epsilon}. It remains only to count variables. Applying \autoref{lem:bose-symmetric_hierarchy_general_complexity} with $A=A_1Q_1T$ and $B=A_2Q_2\tilde{T}$ gives $\lrvert{A}=\lrvert{B}=d$ and $\lrvert{B\Bar{B}}=d^2$. Hence the Bose-symmetric positive semidefinite variable acts on a space of dimension
\begin{align}
    R_{\epsilon}=d\binom{d^2+n_{\epsilon}-1}{n_{\epsilon}}.
\end{align}
The real vector space of Hermitian operators on an $R_{\epsilon}$-dimensional complex Hilbert space has real dimension $R_{\epsilon}^2$. Finally,
\begin{align}
    \binom{d^2+n_{\epsilon}-1}{n_{\epsilon}}\leq\lrbracket{n_{\epsilon}+1}^{d^2}
\end{align}
and the definition of $n_{\epsilon}$ gives
\begin{align}
    n_{\epsilon}+1\leq2+\frac{8\ln(2)\,\lrvert{T}^4\log d}{\epsilon^2}.
\end{align}
This proves \autoref{eqn:Bose_game_dof_bound} and the stated polynomial scaling for fixed game size.
\end{proof}

\section{Approximating quantum non-local games via symmetric subspace methods}
\label{sec:symmetric_subspace_methods}

Semidefinite programs with inherent symmetry arise in numerous areas of research, motivating substantial work on methods to reduce their complexity \cite{schrijver2005new, laurent2005strengthened, gijswijt2006new, Klerk2007ReductionOS, polak2020new}. For an overview, see \cite[Sec.\ 1]{vallentin2009symmetry}. Related works have also examined the algorithmic complexity of these procedures in specialized cases where the symmetry is governed by the symmetric group, such as those involving the Terwilliger algebra \cite{gijswijt2009block}. More recently, \cite{fawzi2022hierarchy, chee2023efficient} explored applications in approximate quantum error correction. Building on this line of research, we establish connections to concepts from \cite{christandl2007one} using Schur-Weyl duality. 

The goal of this section is to develop an algorithm that expresses $\mathrm{SDP}_n^{\mathrm{Bose}}(G)$ in terms of a symmetry-adapted basis. To clarify, while $\mathrm{SDP}_n^{\mathrm{Bose}}(G)$ is defined as an SDP restricted to operators with support and range in $A\otimes\vee^n\lrbracket{B\bar{B}}$, it is not, a priori, expressed in a basis that yields a representation whose size scales only polynomially with $n$. We proceed in three steps of increasing concreteness. First, in \autoref{sec:bose_full_matrix_algebra}, we record the structural fact that the operators which are Bose-symmetric with respect to side systems form a full matrix $*$-algebra of dimension polynomial in $n$; this certifies the existence of the desired reformulation and uses no representation theory. Second, in \autoref{sec:explicit_star_isomorphism}, representation theory enters --- not to prove existence, but to provide an orthonormal basis of $\vee^n\lrbracket{B\bar{B}}$ whose coefficients over the computational basis are available in closed form, turning the abstract $*$-isomorphism into an explicit map. Third, in \autoref{sec:symmetry_adapted_reformulation} we use this map to reformulate $\mathrm{SDP}_n^{\mathrm{Bose}}(G)$ as a program $\Psi\lrbracket{\mathrm{SDP}_n^{\mathrm{Bose}}(G)}$ in which, by construction, all objects are of size polynomial in $n$, and in \autoref{sec:efficiency_of_Bose_trafo} we show that $\Psi\lrbracket{\mathrm{SDP}_n^{\mathrm{Bose}}(G)}$ can be \emph{constructed} in time polynomial in $n$. The latter step is crucial: first constructing $\mathrm{SDP}_n^{\mathrm{Bose}}(G)$ and subsequently changing basis would require handling exponentially large objects in $n$; instead, $\Psi\lrbracket{\mathrm{SDP}_n^{\mathrm{Bose}}(G)}$ is assembled directly.

The situation in this section is markedly simpler than for the exchange-symmetric algebra $\End{\CSn}{\lrbracket{B\Bar{B}}_1^n}$ treated in \autoref{sec:sdp_symmetry_reduction}. The latter decomposes into many blocks with nontrivial multiplicities, for which we will settle for a weaker, linear cone-preserving reduction in the sense of \cite{polak2020new}. The Bose-symmetric algebra, by contrast, consists of a single block, and the strongest notion of equivalence --- a unital $*$-isomorphism --- is available essentially for free (see \autoref{rem:symmetry_reduction_taxonomy} and \autoref{rem:rep_theory_reading}, as well as \autoref{fig:symmetry_vs_bose_symmetry_block_decomp}).


 \paragraph{\textbf{Notation and preliminaries}}

For $k\in\mathbb{N}$ write $\lrrec{k}:=\{1,\ldots,k\}$. A \emph{weak composition} of $n\in\mathbb{N}$ into $k$ parts is a vector $\alpha=\lrbracket{\alpha_1,\ldots,\alpha_k}\in\mathbb{N}_0^k$ with $\sum_i\alpha_i=n$ and $\alpha_i\geq 0$; we write $\alpha\vDash_k n$ and let $\operatorname{WComp}(k,n):=\{\alpha:\alpha\vDash_k n\}$ denote the set of all of them. A \emph{partition} of $n$ into at most $k$ parts is a weakly decreasing weak composition, i.e.\ $\lambda\vDash_k n$ with $\lambda_1\ge\cdots\ge\lambda_k$; in this case we write $\lambda\vdash_k n$ and set $\operatorname{Par}(k,n):=\{\lambda:\lambda\vdash_k n\}\subseteq\operatorname{WComp}(k,n)$. If $\lambda$ is a partition of $n$ into exactly $k$ nonzero parts, then we say that $\lambda$ has \emph{height} $k$. The \emph{Young diagram} $Y(\lambda)$ of $\lambda\vdash_k n$ is the left-justified array of $n$ boxes with $\lambda_i$ boxes in row $i$. A \emph{semistandard Young tableau} (SSYT) of shape $\lambda$ is a filling of the boxes of $Y(\lambda)$ with entries from $\lrrec{k}$ that is weakly increasing along each row and strictly increasing down each column; we write $\operatorname{SSYT}(\lambda,k)$ for the set of all such tableaux. The \emph{weight} (or \emph{type}) of $T\in\operatorname{SSYT}(\lambda,k)$ is the weak composition $\operatorname{wt}(T)=\lrbracket{m_1,\ldots,m_k}\in\operatorname{WComp}(k,n)$, where $m_i$ is the number of boxes of $T$ containing the entry $i$. Note that $\operatorname{SSYT}(\lambda,k)\neq \emptyset$ if and only if $\lambda$ has height at most $k$. In general, the weight map
\begin{align}
    \operatorname{wt}\colon \operatorname{SSYT}(\lambda,k)\longrightarrow \operatorname{WComp}(k,n)
\end{align}
is neither injective nor surjective. For $\alpha\in \operatorname{WComp}(k,n)$, the fibre of $\operatorname{wt}$ over $\alpha$ is
\begin{align}
    \operatorname{wt}^{-1}(\alpha)=\lrbrace{T\in \operatorname{SSYT}(\lambda,k) : \operatorname{wt}(T)=\alpha}.
\end{align}
Its cardinality is the \emph{Kostka number}
\begin{align}
    \lrvert{\operatorname{wt}^{-1}(\alpha)} = K_{\lambda,\alpha}\in \mathbb{N}_{0},
\end{align}
that is, $K_{\lambda,\alpha}$ is the number of semistandard Young tableaux of shape $\lambda$ and weight $\alpha$. We have
\begin{align}
    \lrvert{\mathrm{Par}(k,n)} \leq \lrvert{\operatorname{WComp}(k,n)} = \binom{n+k-1}{n}\leq (n+1)^{k-1}
\end{align}
which for fixed $k$ is $\mathcal{O}\lrbracket{\operatorname{poly}(n)}$ as $n\rightarrow\infty$. 
The hook-content formula  \cite[Ex.\ 6.4]{fulton2013representation} gives 
\begin{align}
    \lrvert{\operatorname{SSYT}(\lambda,k)} = \sum_{\alpha \vDash_k n} K_{\lambda, \alpha} = \prod_{(i,j)\in \lambda}\frac{k+j-i}{h(i,j)},\quad \text{with } h(i,j):= \lambda_i-j+\lambda'_j-i+1
\end{align}
with conjugate partition $\lambda'$. Equivalently, Weyl's dimension formula \cite[Thm.\ 6.3\ (2)]{fulton2013representation} yields
\begin{align}
    \lrvert{\operatorname{SSYT}(\lambda,k)} = \prod_{1\leq i<j\leq k} \frac{\lambda_i-\lambda_j+j-i}{j-i} \leq (n+1)^{k(k-1)/2}.
\end{align}
Thus, for fixed $k$, $\lrvert{\operatorname{SSYT}(\lambda,k)}$ is $\mathcal{O}\lrbracket{\operatorname{poly}(n)}$ as $n\rightarrow\infty$. The shape relevant for Bose symmetry is the single row $(n)$, i.e.\ the partition of $n$ of height $1$, which inside $\mathbb{N}_0^k$ we write $(n) = (n,0,\ldots,0)$. By Schur--Weyl duality it indexes the symmetric subspace $\vee^n\lrbracket{\HS}$ (see \autoref{sec:bose_via_schur_weyl}). For this shape the weight map specializes to a bijection
\begin{align}
    \operatorname{wt}\,:\,\operatorname{SSYT}((n),k)\;\xrightarrow{\ \sim\ }\;\operatorname{WComp}(k,n),
\end{align}
since a weakly increasing row of length $n$ is determined by its content; equivalently $K_{(n),\alpha}=1$ for every $\alpha\in\operatorname{WComp}(k,n)$. Consequently,
\begin{align}\label{eqn:ssyt_one_row_count}
    \lrvert{\operatorname{SSYT}((n),k)} = \lrvert{\operatorname{WComp}(k,n)} = \binom{n+k-1}{n} \leq (n+1)^{k-1},
\end{align}
which for fixed $k$ is $\mathcal{O}\lrbracket{\operatorname{poly}(n)}$ as $n\to\infty$.

\subsection{The algebra of Bose-symmetric operators is a full matrix $*$-algebra}\label{sec:bose_full_matrix_algebra}

Throughout this subsection, $\HS$ denotes a finite-dimensional complex Hilbert space with $d_{\HS}:=\dim_{\CC}\lrbracket{\HS}$, and $W_1, W_2$ denote finite-dimensional side systems; in the application to $\mathrm{SDP}_n^{\mathrm{Bose}}(G)$ we take $\HS=B\otimes\Bar{B}$, and $\lrbracket{W_1, W_2}$ ranges over the choices listed in \autoref{eqn:D_set} below. Extending \autoref{def:bose_symmetric}, we call $X\in\Op{W_1\otimes\HSn\otimes W_2}$ \emph{Bose-symmetric w.r.t.\ $\lrbracket{W_1,W_2}$} if
\begin{align}
    \Pi\, X\, \Pi = X, \qquad \Pi:=\mathbb{1}_{W_1}\otimes P_{\SymH}\otimes\mathbb{1}_{W_2}.
\end{align}
The following proposition is basis-free and requires no representation theory.

\begin{proposition}[Bose-symmetric operators form a full matrix $*$-algebra]\label{thm:bose-simple}\label{prop:bose_full_matrix_algebra}
Let $n\in\mathbb{N}$, let $m_{(n)}:=\dim_{\CC}\lrbracket{\SymH}=\binom{d_{\HS}+n-1}{n}$, and set $r:=d_{W_1}\, m_{(n)}\, d_{W_2}$ with $d_{W_i}:=\dim_{\CC}\lrbracket{W_i}$. Denote by $\mathcal{B}\subseteq\Op{W_1\otimes\HSn\otimes W_2}$ the set of operators that are Bose-symmetric w.r.t.\ $\lrbracket{W_1, W_2}$. Then:
\begin{enumerate}[label=(\roman*)]
    \item $\mathcal{B}=\End{}{W_1\otimes\SymH\otimes W_2}$ is a simple unital matrix $*$-algebra with unit $\Pi$;
    \item for every isometry $V:\CC^r\rightarrow W_1\otimes\HSn\otimes W_2$ with $VV^\dagger=\Pi$, the map
    \begin{align}
        \psi_V\,:\,\mathcal{B}\longrightarrow\CC^{r\times r}, \qquad X\longmapsto V^\dagger X V,
    \end{align}
    is a unital $*$-isomorphism; in particular, $\psi_V$ is a linear bijection satisfying $\psi_V(XY)=\psi_V(X)\psi_V(Y)$, $\psi_V(X^\dagger)=\psi_V(X)^\dagger$, $\Tr\lrrec{\psi_V(X)}=\Tr\lrrec{X}$, and $X\succeq 0$ if and only if $\psi_V(X)\succeq 0$. Moreover, the eigenvalues of $\psi_V(X)$ are those of the restriction of $X$ to $W_1\otimes\SymH\otimes W_2$;
    \item $r\leq d_{W_1}d_{W_2}\lrbracket{n+1}^{d_{\HS}-1}$; in particular, for fixed $d_{\HS}, d_{W_1}, d_{W_2}$, the algebra $\mathcal{B}$ admits a matrix representation of size polynomial in $n$.
\end{enumerate}
\end{proposition}

\begin{proof}
(i) If $\Pi X\Pi=X$, then the range of $X$ is contained in $\operatorname{ran}\Pi=W_1\otimes\SymH\otimes W_2$ and $X$ annihilates $\lrbracket{\operatorname{ran}\Pi}^{\perp}$; hence $X$ restricts to an endomorphism of $\operatorname{ran}\Pi$. Conversely, extending any endomorphism of $\operatorname{ran}\Pi$ by zero yields an operator satisfying $\Pi X\Pi=X$. This proves the set equality. For $X,Y\in\mathcal{B}$ we have $XY=\lrbracket{\Pi X\Pi}\lrbracket{\Pi Y\Pi}=\Pi\lrbracket{X\Pi Y}\Pi\in\mathcal{B}$ and $X^\dagger=\Pi X^\dagger\Pi\in\mathcal{B}$, and $\Pi X=X\Pi=X$; hence $\mathcal{B}$ is a unital matrix $*$-algebra with unit $\Pi$. Simplicity follows from (ii), since $\CC^{r\times r}$ is simple and simplicity is invariant under algebra isomorphisms.

(ii) Isometries with $VV^\dagger=\Pi$ exist: fix an orthonormal basis $\lrbrace{v_1,\ldots,v_r}$ of $\operatorname{ran}\Pi$ and set $Ve_k:=v_k$ with $\lrbrace{e_k}_{k=1}^r$ the canonical ONB of $\CC^r$; then $V^\dagger V=\mathbb{1}_r$ and $VV^\dagger=\Pi$. Linearity and $\psi_V(X^\dagger)=\psi_V(X)^\dagger$ are immediate. For $X,Y\in\mathcal{B}$,
\begin{align}
    \psi_V(X)\,\psi_V(Y)=V^\dagger X\,\lrbracket{VV^\dagger}\,YV=V^\dagger X\,\Pi\, YV=V^\dagger XY\,V=\psi_V(XY),
\end{align}
using $\Pi Y=Y$. Unitality: $\psi_V(\Pi)=V^\dagger VV^\dagger V=\mathbb{1}_r$. If $\psi_V(X)=0$, then $X=\Pi X\Pi=V\lrbracket{V^\dagger XV}V^\dagger=0$, so $\psi_V$ is injective; for $M\in\CC^{r\times r}$, the operator $X:=VMV^\dagger$ satisfies $\Pi X\Pi=X$ and $\psi_V(X)=M$, so $\psi_V$ is surjective. If $X\succeq 0$, then $\psi_V(X)=V^\dagger XV\succeq 0$; conversely, $X=V\psi_V(X)V^\dagger\succeq 0$ whenever $\psi_V(X)\succeq 0$. The trace identity follows from $\Tr\lrrec{V^\dagger XV}=\Tr\lrrec{XVV^\dagger}=\Tr\lrrec{X\Pi}=\Tr\lrrec{X}$, and the spectral statement from the fact that $V$ implements a unitary equivalence between $\CC^r$ and $\operatorname{ran}\Pi$ intertwining $\psi_V(X)$ with the restriction of $X$ to $\operatorname{ran}\Pi$.

(iii) By \autoref{eqn:ssyt_one_row_count}, $m_{(n)}=\binom{d_{\HS}+n-1}{n}\leq\lrbracket{n+1}^{d_{\HS}-1}$.
\end{proof}

For $\lrbracket{W_1, W_2}=\lrbracket{A,\CC}$ and $\HS=B\otimes\Bar{B}$, \autoref{prop:bose_full_matrix_algebra} specializes to
\begin{align}
    \EndBose{n}\simeq\CC^{\,r_n\times r_n}
\end{align}
with $r_n=\lrvert{A}\binom{\lrvert{B}^2+n-1}{n}$ as in \autoref{lem:bose-symmetric_hierarchy_general_complexity}; this is the structural result invoked in \autoref{sec:sdp-complexity}. The notation $m_{(n)}$ is chosen to match $m_{\lambda}=\dim_{\CC}\lrbracket{\Schurf{\lambda}\HS}$ of \autoref{sec:sdp_symmetry_reduction} at the one-row partition $\lambda=(n)$.

\begin{remark}[Relation to symmetry reduction of semidefinite programs]\label{rem:symmetry_reduction_taxonomy}
    \autoref{prop:bose_full_matrix_algebra} is the degenerate instance of a general phenomenon: every finite-dimensional unital matrix $*$-algebra over $\CC$ is $*$-isomorphic to a direct sum $\bigoplus_i\CC^{m_i\times m_i}$ of full matrix algebras (Artin--Wedderburn; see, e.g., \autoref{sec:representation_theory} or \cite{schrijver2005new, vallentin2009symmetry}), and computing such a block diagonalization is the basis of symmetry reduction of SDPs \cite{schrijver2005new, laurent2005strengthened, gijswijt2006new, Klerk2007ReductionOS, vallentin2009symmetry, polak2020new}. Three notions of reduction, in decreasing order of strength, appear in this literature.
    \begin{enumerate}[label=(\roman*)]
        \item An explicit unital $*$-isomorphism onto the Artin--Wedderburn decomposition preserves products, adjoints, traces and spectra --- hence every constraint expressible in these terms --- but generally requires a symmetry-adapted basis, i.e., knowledge of the irreducible representations involved.
        \item The \emph{regular $*$-representation} of \cite{Klerk2007ReductionOS} is a faithful $*$-representation constructed solely from the multiplication table of the canonical (orbit) basis with respect to the normalized trace inner product; no representation-theoretic data is needed, at the price of representing the algebra on itself, i.e., by matrices of order $r^2$ instead of $r$. In our setting this is still polynomial in $n$, so it would already yield a polynomial-time variant of \autoref{thm:Bose_sym_complexity_results}; moreover, the multiplication table is available in closed form (cf.\  \autoref{sec:efficient_acces_bose_operators}).
        \item A \emph{linear cone-preserving bijection} (an order isomorphism between the respective positive semidefinite cones) \cite{polak2020new} preserves linear structure and the signs of eigenvalues, but neither products nor spectra. This weakest notion suffices whenever the SDP is linear, i.e., whenever only linear functionals of the variable are constrained.
    \end{enumerate}
    Because the algebra of \autoref{prop:bose_full_matrix_algebra} is simple, its Artin--Wedderburn decomposition consists of a single block (cf.\ \autoref{fig:symmetry_vs_bose_symmetry_block_decomp}), and, as shown in \autoref{sec:explicit_star_isomorphism}, a symmetry-adapted orthonormal basis is available in closed form; we therefore obtain notion~(i) essentially for free. Since $\mathrm{SDP}_n^{\mathrm{Bose}}(G)$ is a linear SDP, notion~(iii) would in fact suffice for \autoref{thm:Bose_sym_complexity_results}. Adopting notion~(i) has the additional benefits that the trace normalization and SDP duality transfer verbatim, that the physical marginal $\rho_{AB}$ of the optimizer --- the input to the rounding procedure of \autoref{sec:inner_sequence} --- is recovered directly via $\psi_V^{-1}(M)=VMV^\dagger$, and that potential extensions involving operator products or spectral data (e.g., NPA-type localizing constraints) remain available. In the exchange-symmetric setting of \autoref{sec:sdp_symmetry_reduction}, in contrast, the algebra $\End{\CSn}{\lrbracket{B_1^n}}$ has many blocks with nontrivial multiplicities and no canonical orthogonal representative vectors; there, following \cite{polak2020new}, a linear cone-preserving bijection suffices and we will exploit this fact.
\end{remark}

\subsection{An explicit $*$-isomorphism from the type basis}\label{sec:explicit_star_isomorphism}

By \autoref{prop:bose_full_matrix_algebra}(ii), every orthonormal basis of $\SymH$ induces a unital $*$-isomorphism $\psi_V$. The algorithmic content of this subsection lies in choosing a basis whose coefficients over the computational basis are known in closed form; this is what representation theory provides.

\begin{proposition}[Type basis of the symmetric subspace]\label{prop:type_basis_symmetric_subspace}
    Let $\lrbrace{e_i}_{i=1}^{d_{\HS}}$ be the computational basis of $\HS$. For each type $\vec{t}\in\mathcal{T}_{(n),\,d_{\HS}}$ --- equivalently, for each semistandard labeling $\tau$ of the one-row shape $Y((n))$ with entries in $\lrrec{d_{\HS}}$ and weight $\vec{t}$ --- define the normalized polytabloid
    \begin{align}
        \ket{\tilde{s}_{\vec{t}}} = \tilde{u}_{\tau} := \frac{1}{\sqrt{\binom{n}{\vec{t}}}}\sum_{\vec{i}\,:\,T(\vec{i})=\vec{t}}\ket{i_1,\ldots,i_n}\;\in\;\HSn, \qquad \binom{n}{\vec{t}}:=\frac{n!}{t_1!\cdots t_{d_{\HS}}!},
    \end{align}
    where $T(\vec{i})$ denotes the type of $\vec{i}=(i_1,\ldots,i_n)$. Then $\lrbrace{\ket{\tilde{s}_{\vec{t}}}\,:\,\vec{t}\in\mathcal{T}_{(n),\,d_{\HS}}}$ is an orthonormal basis of $\SymH$. Consequently, the matrix $U_{(n)}\in\R^{d_{\HS}^n\times m_{(n)}}$ with these vectors as columns satisfies
    \begin{align}
        U_{(n)}^\dagger U_{(n)}=\mathbb{1}_{m_{(n)}}, \qquad U_{(n)}U_{(n)}^\dagger=P_{\SymH},
    \end{align}
    and each entry of $U_{(n)}$ equals either $0$ or $\binom{n}{\vec{t}}^{-1/2}$, computable in time polynomial in $n$ for fixed $d_{\HS}$.
\end{proposition}

\begin{proof}
    The vectors $\ket{s_{\vec{t}}}=\sum_{\vec{i}:T(\vec{i})=\vec{t}}\ket{i_1,\ldots,i_n}$ span $\SymH$ \cite{harrow2013church} (cf.\ \autoref{sec:bose_via_schur_weyl} for the derivation via highest-weight theory). Vectors of distinct type are supported on disjoint sets of computational basis vectors and are hence orthogonal, and $\braket{s_{\vec{t}}|s_{\vec{t}}}=\lrvert{\lrbrace{\vec{i}\,:\,T(\vec{i})=\vec{t}}}=\binom{n}{\vec{t}}$, so the normalized family is orthonormal. Its cardinality is $\lrvert{\mathcal{T}_{(n),\,d_{\HS}}}=\lrvert{\operatorname{WComp}(d_{\HS},n)}=\binom{d_{\HS}+n-1}{n}=m_{(n)}$ by \autoref{eqn:ssyt_one_row_count} and \autoref{prop:Bose_sym_space_char}, which matches $\dim_{\CC}\lrbracket{\SymH}$; hence it is a basis. Orthonormality of the columns gives $U_{(n)}^\dagger U_{(n)}=\mathbb{1}_{m_{(n)}}$, and $U_{(n)}U_{(n)}^\dagger=\sum_{\vec{t}}\ket{\tilde{s}_{\vec{t}}}\bra{\tilde{s}_{\vec{t}}}$ is the orthogonal projector onto their span, i.e., $P_{\SymH}$. Finally, for the matrix entries we have
    \begin{align}
        \lrbracket{U_{(n)}}_{\vec{i},\vec{t}}=\binom{n}{\vec{t}}^{-1/2}\mathbf{1}\lrrec{T(\vec{i})=\vec{t}},
    \end{align}
    where $\mathbf{1}[\cdot]$ is the indicator funtion and the multinomial coefficient is an integer with $\mathcal{O}\lrbracket{n\log d_{\HS}}$ bits, computable exactly in time polynomial in $n$ for any entry indexed by the pair of type vectors $\vec{i}, \vec{t}$.
\end{proof}

\begin{corollary}[Explicit unital $*$-isomorphism]\label{prop:isomorphism_main_part}\label{cor:explicit_star_isomorphism}
    Let $W_1, W_2$ be finite-dimensional Hilbert spaces and $r=d_{W_1}\,m_{(n)}\,d_{W_2}$. The map
    \begin{align}
        \begin{split}
            \psi\,:\,\End{}{W_1\otimes\SymH\otimes W_2}&\longrightarrow\CC^{\,r\times r}\\
            X&\longmapsto\lrbracket{\mathbb{1}_{W_1}\otimes U_{(n)}\otimes\mathbb{1}_{W_2}}^\dagger X\lrbracket{\mathbb{1}_{W_1}\otimes U_{(n)}\otimes\mathbb{1}_{W_2}}
        \end{split}
    \end{align}
    is a unital $*$-isomorphism; in particular, it is a bijection preserving the operator product, positive semidefiniteness, and the trace, with inverse $\psi^{-1}(M)=\lrbracket{\mathbb{1}_{W_1}\otimes U_{(n)}\otimes\mathbb{1}_{W_2}}\,M\,\lrbracket{\mathbb{1}_{W_1}\otimes U_{(n)}\otimes\mathbb{1}_{W_2}}^\dagger$.
\end{corollary}

\begin{proof}
    Set $V:=\mathbb{1}_{W_1}\otimes U_{(n)}\otimes\mathbb{1}_{W_2}$. By \autoref{prop:type_basis_symmetric_subspace}, $V^\dagger V=\mathbb{1}_r$ and $VV^\dagger=\mathbb{1}_{W_1}\otimes P_{\SymH}\otimes\mathbb{1}_{W_2}=\Pi$; the claim is \autoref{prop:bose_full_matrix_algebra}(ii).
\end{proof}

\begin{remark}\label{rem:rep_theory_reading}
    The basis of \autoref{prop:type_basis_symmetric_subspace} reflects the decomposition of $\SymH$ as a $\CSn$-module into $m_{(n)}$ copies of the trivial Specht module: the weight spaces of $\SymH$ are one-dimensional and are spanned by the polytabloids of the one-row shape, and, after normalization, the map $A\mapsto\lrbracket{\braket{Au_\tau, u_\gamma}}_{\tau,\gamma}$ of \cite{gijswijt2009block} coincides with $\psi$; see \autoref{sec:bose_via_schur_weyl} for the details, and \cite{christandl2007one} for the construction of Bose-symmetric states from a highest-weight vector. It is precisely the orthogonality of distinct weight spaces that renders the normalization trivial here. For a matrix $\tilde{U}$ whose columns merely span $\SymH$, with Gram matrix $G_{\lambda}:=\tilde{U}^\dagger\tilde{U}\neq\mathbb{1}$, the map $\psi_{G_\lambda}(X)=\tilde{U}^\dagger X\tilde{U}$ is still a linear bijection preserving positive semidefiniteness in both directions --- a cone-preserving bijection in the sense of \autoref{rem:symmetry_reduction_taxonomy}(iii), using $\Pi=\tilde{U}G_{\lambda}^{-1}\tilde{U}^\dagger$ --- but multiplicativity fails and is replaced by
    \begin{align}
        \psi_{G_\lambda}(XY)=\psi_{G_\lambda}(X)\,G_{\lambda}^{-1}\,\psi_{G_\lambda}(Y).
    \end{align}
    For a general shape $\lambda$, representative vectors within an isotypic component are no longer orthogonal, and one either orthogonalizes --- losing the closed form of the coefficients --- or works with the cone-preserving reduction; the latter is the route taken in \autoref{sec:sdp_symmetry_reduction}.
\end{remark}

\subsection{The symmetry-adapted reformulation of $\mathrm{SDP}_n^{\mathrm{Bose}}$}\label{sec:symmetry_adapted_reformulation}\label{sec:schur_sdp_bose}

With the explicit $*$-isomorphism of \autoref{cor:explicit_star_isomorphism} at hand, we can formulate an optimization problem approximating $w_{Q(T)}(V,\pi)$ from above, where the number of variables and number of constraints grow only polynomially in $n$. We work directly with the systems and constraint data of the Bose-symmetric hierarchy in \autoref{lem:bose-symmetric_hierarchy} --- equally, those of the constrained Bose-symmetric de Finetti theorem in \autoref{lem:bose-symmetric_deFinetti}: finite-dimensional Hilbert spaces $A=A_L\otimes A_R$, $B=B_L\otimes B_R$ and $\Bar{B}=\Bar{B}_L\otimes\Bar{B}_R$ with $B\cong\Bar{B}$, spaces $C_{A_L}$ and $C_{B_L}$, linear maps $\Theta_{A_L\rightarrow C_{A_L}}$ and $\Upsilon_{B_L\rightarrow C_{B_L}}$, and operators $W_{C_{A_L}}$ and $K_{C_{B_L}}$. The fixed-point maps $\Omega_{A\rightarrow A}$ and $\Xi_{B_n\rightarrow B_n}$ of \autoref{eqn:Bose_symmetric_hierarchy} are taken to be the identities throughout, as they are in the game application.\footnote{For general fixed-point maps the treatment is identical: $\Omega\lrbracket{\rho_{A\lrbracket{B\Bar{B}}_1^n}}=\rho_{A\lrbracket{B\Bar{B}}_1^n}$ relates operators in $\End{\CSn}{A\otimes\vee^n\lrbracket{B\Bar{B}}}$, and $\Xi_{B_n\rightarrow B_n}\circ\Tr_{\Bar{B}_n}\lrrec{\rho_{\lrbracket{B\Bar{B}}_1^n}}=\rho_{\lrbracket{B\Bar{B}}_1^{n-1}B_n}$ relates operators in $\End{\CC\lrrec{S_{n-1}}}{\vee^{n-1}\lrbracket{B\Bar{B}}\otimes B_n}$ --- a fourth space of the form covered by \autoref{cor:explicit_star_isomorphism}. Since $\Omega$ and $\Xi$ act on side registers of fixed dimension, both constraints transform exactly as the two treated below.} These data are precisely such that all three spaces collected in \autoref{eqn:D_set} are of the form $W_1\otimes\vee^{t}\lrbracket{B\Bar{B}}\otimes W_2$ for some integer $t$ covered by \autoref{cor:explicit_star_isomorphism}. Since $\Theta$ and $\Upsilon$ act on side registers only, while the $S_n$-action permutes only the $\lrbracket{B\Bar{B}}$-blocks, images of Bose-symmetric operators under these maps remain Bose-symmetric with respect to the surviving side systems. Concretely, since $\rho_{A\lrbracket{B\Bar{B}}_1^n}\in \End{\CSn}{A\otimes\vee^n\lrbracket{B\Bar{B}}}$, we deduce
    \begin{align}\label{eq:first_constraints_bose_par_trace}
        \Theta_{A_L\rightarrow C_{A_L}}\lrbracket{\rho_{A\lrbracket{B\Bar{B}}_1^n}},\, W_{C_{A_L}}\otimes\rho_{A_R\lrbracket{B\Bar{B}}_1^n}\,\in\, \End{\CSn}{C_{A_L}\otimes A_R\otimes \vee^n\lrbracket{B\Bar{B}}}
    \end{align}
    and
    \begin{align}\label{eq:second_constraints_bose_par_trace}
        \Upsilon_{\lrbracket{B_L}_n\rightarrow C_{B_L}}\circ \Tr_{\Bar{B}_n}\lrrec{\rho_{\lrbracket{B\Bar{B}}_1^n}},\, K_{C_{B_L}}\otimes \rho_{\lrbracket{B\Bar{B}}_1^{n-1}\lrbracket{B_R}_n}\,\in\, \End{\CC\lrrec{S_{n-1}}}{\vee^{n-1}\lrbracket{B\Bar{B}}\otimes C_{B_L}\otimes\lrbracket{B_R}_n},\,
    \end{align}
    by analogous arguments as in the proof of \autoref{lem:bose-symmetric_hierarchy_general_complexity}. Thus, 
    \begin{align}\label{eqn:abstract_game_sdp_bose_implicit}
    \begin{split}
            \begin{array}{cc}
                &\displaystyle \mathrm{SDP}_n^{\mathrm{Bose}}(G) = \max_{\rho_{A\lrbracket{B\Bar{B}}_1^n}\in \End{\CSn}{A\otimes\vee^n\lrbracket{B\Bar{B}}}} \Tr\left[G_{AB}\,\rho_{AB}\right]  \\
                \text{s.t.} & \rho_{A\lrbracket{B\Bar{B}}_1^n}\succeq 0,\, \Tr\left[\rho_{A\lrbracket{B\Bar{B}}_1^n}\right] = 1,\,\\
                 &\Theta_{A_L\rightarrow C_{A_L}}\lrbracket{\rho_{A\lrbracket{B\Bar{B}}_1^n}}=W_{C_{A_L}}\otimes\rho_{A_R\lrbracket{B\Bar{B}}_1^n},\,\\
                 &\Upsilon_{\lrbracket{B_L}_n\rightarrow C_{B_L}}\circ \Tr_{\Bar{B}_n}\lrrec{\rho_{\lrbracket{B\Bar{B}}_1^n}}=K_{C_{B_L}}\otimes \rho_{\lrbracket{B\Bar{B}}_1^{n-1}\lrbracket{B_R}_n}.\,
            \end{array}
        \end{split}
    \end{align}
    This is $\mathrm{SDP}_n^{\mathrm{Bose}}(G)$ from \autoref{eqn:Bose_symmetric_hierarchy} with identity fixed-point maps and with the Bose-symmetry constraint absorbed into the domain, cf.\ \autoref{prop:bose_full_matrix_algebra}(i). With
    \begin{align}
        \begin{array}{ccccccc}
             A_L=A_1Q_1, & C_{A_L}=Q_1, & A_R=T, & B_L=A_2Q_2\tilde{T}, & C_{B_L}=Q_2\tilde{T}, & B_R=\CC, & \Bar{B}=\overline{A_2Q_2\tilde{T}},
        \end{array}
    \end{align}
    and
    \begin{align}
        \begin{array}{cc}
            \Theta_{A_L\rightarrow C_{A_L}}(\cdot)=\Trr{A_1}{\cdot}, & \Upsilon_{B_L\rightarrow C_{B_L}}(\cdot)=\Trr{A_2}{\cdot},\\
            W_{C_{A_L}}=\displaystyle\sum_{q_1\in Q_1}\pi_1(q_1)\ket{q_1}\bra{q_1}_{Q_1}, & K_{C_{B_L}}=\displaystyle\sum_{q_2\in Q_2}\pi_2(q_2)\ket{q_2}\bra{q_2}_{Q_2}\otimes\frac{\mathbb{1}_{\tilde{T}}}{\lrvert{T}},
        \end{array}
    \end{align}
    --- the same instantiation as in the proof of \autoref{lem:bose_symmetric_fixed_size_games_complexity} --- it is equivalent as an optimization problem to $\mathrm{SDP}_n^{\mathrm{Bose}}\lrbracket{T, V, \pi}$ in \autoref{eqn:Bose_SDP_non-local_games}. For ease of notation, let
\begin{align}\label{eqn:D_set} 
	\mathcal{D} := \lrbrace{A\otimes\vee^n\lrbracket{B\Bar{B}},\, C_{A_L}\otimes A_R\otimes\vee^n\lrbracket{B\Bar{B}},\,\vee^{n-1}\lrbracket{B\Bar{B}}\otimes C_{B_L}\otimes\lrbracket{B_R}_n}
\end{align}
and
\begin{align}
	\begin{split}
		m \,:\, \mathcal{D} &\rightarrow \mathbb{N}\\
		 \mathcal{D}_i &\mapsto \dim_{\CC}\lrbracket{\mathcal{D}_i}.\,
	\end{split}
\end{align}
Furthermore, let 
\begin{align}
	t\lrbracket{\mathcal{D}_i}=\begin{cases}
		\begin{array}{cc}
			n & \text{if } i=1, 2,\,\\
			n-1 & \text{otherwise.}
		\end{array}
	\end{cases}
\end{align} 
Then,
\begin{align}\label{eqn:bijection_bose_sym_specifc_D}
\begin{split}
	\Psi_{(\cdot)} : \End{\CC\lrrec{S_{t\lrbracket{\D_i}}}}{\D_i}&\rightarrow \CC^{m\lrbracket{D_i}\times m\lrbracket{D_i}}\\
	Z &\mapsto \block{\Psi_{ \End{\CC\lrrec{S_{t\lrbracket{\D_i}}}}{\D_i}}\lrbracket{Z}}_{\lambda=\lrbracket{t\lrbracket{\D_i}}}
\end{split} 
\end{align} 
with $\Psi_{(\cdot)}$ acting as the isomorphism from \autoref{cor:explicit_star_isomorphism} on the corresponding spaces, is a bijection preserving positive semi-definiteness. In particular, $\Psi_{(\cdot)}$ is a unital, trace-preserving $*$-isomorphism on each of these spaces.
\begin{lemma}[Efficient Bose-symmetric hierarchy with partial trace constraints]\label{lem:Efficient_Bose-symmetric_hierarchy_with_partial_trace_constraints}
Let $ \rho_{A\lrbracket{B\Bar{B}}_1^n}\in\End{\CSn}{\D_1}$ and let $C_i\lrbracket{\D_1}$ denote the $i$-th canonical basis element of this space such that
\begin{align}
\rho_{A\lrbracket{B\Bar{B}}_1^n} = \sum_{i=1}^{m\lrbracket{\mathcal{D}_1}^2}x_iC_i\lrbracket{\mathcal{D}_1}.\,
\end{align} 
The following problem
	\begin{align}\label{eqn:reduced_Bose_SDP}
		\begin{split}
			\begin{array}{cc}
				&\displaystyle \displaystyle \Psi\lrbracket{\mathrm{SDP}_n^{\mathrm{Bose}}(G)} = \max_{\lrbrace{x_i}_{i=1}^{m\lrbracket{\mathcal{D}_1}^2}}\sum_{i=1}^{m\lrbracket{\mathcal{D}_1}^2}x_i\cdot\Tr\lrrec{G_{AB}\,C_i\lrbracket{\mathcal{D}_1}_{AB}}\\
				\\
				\text{s.th.} & \sum_{i=1}^{m\lrbracket{\mathcal{D}_1}^2}x_i C_i\lrbracket{\mathcal{D}_1}_{AB} := \sum_{i=1}^{m\lrbracket{\mathcal{D}_1}^2}x_i\cdot \Tr_{\Bar{B}_1\lrbracket{B\Bar{B}}_2^n}\lrrec{C_i\lrbracket{\mathcal{D}_1}},\,\\
				\\
				&\sum_{i=1}^{m\lrbracket{\mathcal{D}_1}^2} x_i\cdot \block{\Psi_{ \End{\CC\lrrec{S_{n}}}{\D_1}}\lrbracket{C_i\lrbracket{\D_1}}}_{(n)}\succeq 0,\,\\
				\\
				&\sum_{i=1}^{m\lrbracket{\mathcal{D}_1}^2} x_i\cdot \Tr\lrrec{C_i\lrbracket{\mathcal{D}_1}}=1,\,\\
				\\
				& \sum_{i=1}^{m\lrbracket{\mathcal{D}_1}^2} x_i\cdot \block{\Psi_{\End{\CC\lrrec{S_{t\lrbracket{\D_2}}}}{\D_2}}\lrbracket{\Theta_{A_L\rightarrow C_{A_L}}\lrbracket{C_i\lrbracket{\D_1}}}}_{(n)} \\
				\\
				&\hspace{1.5cm}=\sum_{i=1}^{m\lrbracket{\mathcal{D}_1}^2} x_i\cdot \block{\Psi_{\End{\CC\lrrec{S_{t\lrbracket{\D_2}}}}{\D_2}}\lrbracket{W_{C_{A_L}}\otimes \Tr_{A_{L}}\lrrec{C_i\lrbracket{\D_1}} }}_{(n)},\,\\
				\\
				& \sum_{i=1}^{m\lrbracket{\mathcal{D}_1}^2} x_i\cdot \block{\Psi_{\End{\CC\lrrec{S_{t\lrbracket{\D_3}}}}{\D_3}}\lrbracket{\Upsilon_{\lrbracket{B_L}_n\rightarrow C_{B_L}}\circ\Tr_{A\lrbracket{\Bar{B}}_n}\lrrec{C_i\lrbracket{\D_1}}}}_{(n-1)} \\
				\\
				&\hspace{1.5cm} = \sum_{i=1}^{m\lrbracket{\mathcal{D}_1}^2} x_i\cdot \block{\Psi_{\End{\CC\lrrec{S_{t\lrbracket{\D_3}}}}{\D_3}}\lrbracket{K_{C_{B_L}}\otimes\Tr_{A\lrbracket{B_L}_n\lrbracket{\Bar{B}}_n}\lrrec{C_i\lrbracket{\D_1}}}}_{(n-1)}	,\,
		\end{array}
		\end{split}
	\end{align}
	 is equivalent as an optimization problem to $\mathrm{SDP}_n^{\mathrm{Bose}}(G)$ in \autoref{eqn:abstract_game_sdp_bose_implicit}, where we optimize over polynomial many variables subject to polynomial many constraints of polynomial size.
\end{lemma}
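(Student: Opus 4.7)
The plan is to reparametrize the optimization in \autoref{eqn:abstract_game_sdp_bose_implicit} via a canonical basis of $\End{\CSn}{\D_1}$ and then transfer every constraint through the bijection $\Psi_{(\cdot)}$ introduced in \autoref{eqn:bijection_bose_sym_specifc_D}, whose existence and semi-definiteness preservation are supplied by \autoref{prop:isomorphism_main_part}.

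First, since $\rho_{A\lrbracket{B\Bar{B}}_1^n}\in\End{\CSn}{\D_1}$ and this space admits a canonical basis $\lrbrace{C_i\lrbracket{\D_1}}_{i=1}^{m\lrbracket{\D_1}}$ of dimension polynomial in $n$ by \autoref{prop:Bose_sym_space_char}, writing $\rho=\sum_i x_i\,C_i\lrbracket{\D_1}$ turns the original optimization into one over polynomially many scalars $x_i$. Both the objective $\Tr\lrrec{G_{AB}\,\rho_{AB}}$ and the normalization $\Tr\lrrec{\rho}=1$ become linear in $x$ by linearity of the trace and of the reduction $\rho_{AB}=\Trr{\Bar{B}_1\lrbracket{B\Bar{B}}_2^n}{\rho}$, recovering the first two scalar constraints of \autoref{eqn:reduced_Bose_SDP}.

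Next, I would transfer the positivity and local-marginal constraints block by block. The condition $\rho\succeq 0$ is, by \autoref{prop:isomorphism_main_part}, equivalent to $\Psi_{\End{\CSn}{\D_1}}\lrbracket{\rho}\succeq 0$; expanding in the canonical basis and using linearity of $\Psi$ yields the single PSD block in \autoref{eqn:reduced_Bose_SDP}. For the local constraint $\Trr{A_L}{\rho}=X_{C_{A_M}}\otimes\rho_{A_R\lrbracket{B\Bar{B}}_1^n}$, both sides lie in $\End{\CSn}{\D_2}$ by \autoref{eq:first_constraints_bose_par_trace}, so applying the bijection $\Psi_{\End{\CSn}{\D_2}}$ turns the operator identity into an equivalent identity of block matrices, which upon linear expansion in $\lrbrace{C_i\lrbracket{\D_1}}_i$ becomes a linear constraint in $x$. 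An analogous argument, now using \autoref{eq:second_constraints_bose_par_trace} and the space $\D_3$ under the $\CC\lrrec{S_{n-1}}$-action (valid because $S_{n-1}\subset S_n$ preserves all symmetries in play), handles Bob's partial-trace constraint through $\Psi_{\End{\CC\lrrec{S_{n-1}}}{\D_3}}$.

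The main obstacle I anticipate is not the equivalence itself --- which follows cleanly from bijectivity and linearity once the parametrization is set up --- but rather the preliminary verification that every operator appearing in the constraints, and in particular the right-hand sides $X_{C_{A_M}}\otimes\rho_{A_R\lrbracket{B\Bar{B}}_1^n}$ and $\rho_{\lrbracket{B\Bar{B}}_1^{n-1}}\otimes Y_{C_{B_R}}$, genuinely lies in the appropriate $S_m$-equivariant endomorphism space targeted by the corresponding $\Psi$. This is precisely what the restriction arguments collected in \autoref{sec:sdp-complexity} establish, leveraging the propagation of symmetry under partial traces encoded by \autoref{prop:symmetric_states_have_symmetric_reduced_states}. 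Once this is settled, the polynomial-size claim is immediate: by \autoref{prop:Bose_sym_space_char} each of $m\lrbracket{\D_1}$, $m\lrbracket{\D_2}$, $m\lrbracket{\D_3}$ is polynomial in $n$, which simultaneously bounds the number of variables $x_i$ and the dimensions of the PSD and equality blocks in \autoref{eqn:reduced_Bose_SDP}.
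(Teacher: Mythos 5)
Your proof is correct and captures precisely the argument the paper leaves implicit: the lemma is stated at the end of \autoref{sec:schur_sdp_bose} without an explicit proof environment, with the restriction arguments \autoref{eq:first_constraints_bose_par_trace}--\autoref{eq:second_constraints_bose_par_trace} (which in turn lean on \autoref{prop:symmetric_states_have_symmetric_reduced_states} and the proof of \autoref{lem:bose-symmetric_hierarchy_general_complexity}), the bijection $\Psi$ from \autoref{prop:isomorphism_main_part}, and the dimension bound from \autoref{prop:Bose_sym_space_char} supplying exactly the ingredients you assemble. Your reconstruction is faithful to this intended argument and correctly identifies that the only nontrivial step is verifying that every constraint operator lies in the targeted $S_m$-equivariant space so that the corresponding $\Psi$ applies.
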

 
\begin{proof}
    By \autoref{eq:first_constraints_bose_par_trace} and \autoref{eq:second_constraints_bose_par_trace}, all operators appearing in the constraints of \autoref{eqn:abstract_game_sdp_bose_implicit} are Bose-symmetric with respect to the appropriate side systems, so $\Psi_{(\cdot)}$ may be applied to both sides of every constraint. By \autoref{cor:explicit_star_isomorphism}, $\Psi_{(\cdot)}$ is linear and bijective on each of the three spaces in \autoref{eqn:D_set} and preserves positive semidefiniteness and the trace. Since the objective and all constraints of \autoref{eqn:abstract_game_sdp_bose_implicit} are linear in $\rho_{A\lrbracket{B\Bar{B}}_1^n}$, expanding $\rho_{A\lrbracket{B\Bar{B}}_1^n}=\sum_i x_i\, C_i\lrbracket{\mathcal{D}_1}$ in the canonical basis and applying $\Psi_{(\cdot)}$ constraint by constraint yields \autoref{eqn:reduced_Bose_SDP}; conversely, every feasible point of \autoref{eqn:reduced_Bose_SDP} defines, via $\Psi_{(\cdot)}^{-1}$, a feasible point of \autoref{eqn:abstract_game_sdp_bose_implicit} with the same objective value. The number of variables and the orders of all matrices involved are polynomial in $n$ by \autoref{prop:bose_full_matrix_algebra}(iii) (cf.\ \autoref{lem:bose-symmetric_hierarchy_general_complexity}).
\end{proof}


\subsection{Efficiency of the transformation}\label{sec:efficiency_of_Bose_trafo}

The program $\Psi\lrbracket{\mathrm{SDP}_n^{\mathrm{Bose}}(G)}$ of \autoref{lem:Efficient_Bose-symmetric_hierarchy_with_partial_trace_constraints} is of polynomial size, yet it is defined through operators acting on spaces whose dimension is exponential in $n$; a direct application of $\Psi$ from \autoref{eqn:bijection_bose_sym_specifc_D}, mapping $\mathrm{SDP}_n^{\mathrm{Bose}}(G)$ in \autoref{eqn:abstract_game_sdp_bose_implicit} to $\Psi\lrbracket{\mathrm{SDP}_n^{\mathrm{Bose}}(G)}$ in \autoref{eqn:reduced_Bose_SDP}, would therefore require exponentially many computations in $n$. Concretely, the quantities in $\mathrm{SDP}_n^{\mathrm{Bose}}(G)$ are of exponential size in $n$ when expressed in the computational basis.

To prove the polynomial-time algorithm stated in \autoref{thm:Bose_sym_complexity_results}, we provide an efficient procedure for constructing $\Psi\lrbracket{\mathrm{SDP}_n^{\mathrm{Bose}}(G)}$. The key insight is that, while the objects in $\mathrm{SDP}_n^{\mathrm{Bose}}(G)$ are exponentially large, they exhibit substantial redundancy. That is, accessing only a polynomial number of entries in $n$ suffices to carry out the transformation efficiently.

The structure of this section is as follows. First, in \autoref{sec:efficient_acces_bose_operators}, we establish an efficient method for accessing the unique coefficients of any $Z \in \End{\CSn}{\SymH}$ with respect to the canonical basis of $\End{\CSn}{\SymH}$. Then, in \autoref{sec:efficient_trafo_bose_operators}, we present an efficient procedure for computing $\psi(Z)$ for any such $Z$. Finally, in \autoref{sec:efficient_construction_of_bose_sdp}, we conclude with a proof of the overall efficiency of the SDP transformation.

\subsubsection{Efficiently accessing Bose-symmetric operators}\label{sec:efficient_acces_bose_operators} 

At a high level, the task of efficiently determining the unique coefficients of any Bose-symmetric operator amounts to efficiently identifying a representative element of each Bose-symmetric orbit. Recall from \autoref{prop:type_basis_symmetric_subspace} (cf.\ \cite{harrow2013church}) that $\lrbrace{\ket{s_{\vec{t}}}\, :\, \vec{t}\in \mathcal{T}_{(n),\, d_{\HS}}}$ with
    \begin{align}
        \text{$\ket{s_{\vec{t}}} :=\sum_{\vec{i} : T(\vec{i})=\vec{t}} \ket{i_1,\ldots, i_n}$ or the normalized $\ket{\tilde{s}_{\vec{t}}} := \frac{1}{\sqrt{\binom{n}{\vec{t}}}} \ket{s_{\vec{t}}}$,}
    \end{align}
where $T(\vec{i})$ denotes the type of $\vec{i} = (i_1, \ldots, i_n)$, is a basis of the symmetric subspace; throughout this subsection we work with the unnormalized vectors $\ket{s_{\vec{t}}}$, whose entries lie in $\lrbrace{0,1}$.

\begin{proposition}\label{prop:first_basis_Bose}
	The set
		\begin{align}
			\lrbrace{C_{\vec{t},\vec{t'}}^{\vee^n}:=\ket{s_{\vec{t}}}\bra{s_{\vec{t'}}} \,:\, \vec{t}, \vec{t'} \in \mathcal{T}_{(n),\, d_{\HS}}}
		\end{align}
		is a basis of $\End{\CSn}{\SymH}$.
	\end{proposition}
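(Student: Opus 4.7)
The plan is to prove this in three short moves: verify the $\CSn$-equivariance of the candidate operators, identify the commutant space with $\End{}{\SymH}$, and then invoke the elementary fact that outer products built from a basis of a vector space give a basis of its endomorphism algebra.

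First I would observe that, as stated in \autoref{sec:Bose_sym_easy} and recalled in the remark right after \autoref{prop:bose_sym_to_sym}, one has $\End{}{\SymH}=\End{\CSn}{\SymH}$. This is because $\SymH$ carries only the trivial $S_n$-representation, so the $\CSn$-commutant condition is automatic for any endomorphism of $\SymH$. In particular, every $C_{\vec{t},\vec{t'}}^{\vee^n}=\ket{s_{\vec{t}}}\bra{s_{\vec{t'}}}$ has support and range in $\SymH$ (since each $\ket{s_{\vec{t}}}\in\SymH$), so it lies in $\End{\CSn}{\SymH}$.

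Next I would recall from \cite{harrow2013church} that $\lrbrace{\ket{s_{\vec{t}}}\,:\,\vec{t}\in\mathcal{T}_{(n),\,d_{\HS}}}$ is a basis of $\SymH$. Given any basis $\lrbrace{v_i}_{i\in I}$ of a finite-dimensional vector space $V$, the collection $\lrbrace{v_iv_j^{*}}_{(i,j)\in I\times I}$ is a basis of $\End{}{V}$: linear independence follows from evaluating a vanishing linear combination $\sum_{i,j}\alpha_{ij}v_iv_j^{*}$ on $v_k$ and pairing with the dual basis element to $v_l$, which isolates $\alpha_{lk}$; spanning then follows by dimension counting since $\lrvert{I\times I}=(\dim V)^2=\dim\End{}{V}$. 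Applying this to $V=\SymH$ with $v_{\vec{t}}=\ket{s_{\vec{t}}}$ shows that $\lrbrace{C_{\vec{t},\vec{t'}}^{\vee^n}}_{\vec{t},\vec{t'}}$ is a basis of $\End{}{\SymH}$.

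Combining the two observations yields the claim: this family is a basis of $\End{}{\SymH}=\End{\CSn}{\SymH}$, and its cardinality matches the expected dimension $m_{(n)}^2=\binom{n+d_{\HS}-1}{n}^{2}$. There is no real obstacle; the only subtlety worth mentioning is that the $\ket{s_{\vec{t}}}$ are unnormalized, so the corresponding $C_{\vec{t},\vec{t'}}^{\vee^n}$ are not orthonormal with respect to the Hilbert--Schmidt inner product, but this does not affect the basis property. If desired, one can pass to $\ket{\tilde s_{\vec t}}$ to obtain the Hilbert--Schmidt orthonormal version, which will be convenient for the subsequent constructions involving $\Psi$.
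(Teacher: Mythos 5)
Your argument is correct and follows the same route as the paper's proof: identify $\End{\CSn}{\SymH}$ with $\End{}{\SymH}$ (the $S_n$-action on $\SymH$ is trivial, so the commutant condition is vacuous) and then use that the $\ket{s_{\vec t}}$ form a basis of $\SymH$, so their outer products give a basis of $\End{}{\SymH}$. You merely spell out the standard "outer products of a basis are a basis" argument, whereas the paper states it as immediate; both are fine.
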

\begin{proof}
    By \autoref{prop:type_basis_symmetric_subspace}, the vectors $\ket{\tilde{s}_{\vec{t}}}$ form an orthonormal basis of $\SymH$; hence the operators $\CBose{t}{t'}{n}=\sqrt{\binom{n}{\vec{t}}\binom{n}{\vec{t'}}}\,\ket{\tilde{s}_{\vec{t}}}\bra{\tilde{s}_{\vec{t'}}}$ form a basis of $\End{}{\SymH}$, which coincides with $\End{\CSn}{\SymH}$ by \autoref{prop:Bose_sym_space_char}. Alternatively, the $m_{(n)}^2$ matrices $\CBose{t}{t'}{n}$ are nonzero with pairwise disjoint supports in the computational basis, hence linearly independent, and their number matches $\dim_{\CC}\End{\CSn}{\SymH}=m_{(n)}^2$.
\end{proof}
See \autoref{sec:proof_of_efficient_trafo_sdp_sym} for an example. Since any Bose-symmetric operator is symmetric, we will provide an alternative characterization of the basis of \autoref{prop:first_basis_Bose} as certain linear combinations of the canonical orbit basis elements of $\End{\CSn}{\HSn}$. We adopt the notation from \cite{gijswijt2009block}. For $a, b\in \lrrec{d_{\HS}}^n$, define $D(a,b)\in\mathbb{N}_0^{d_{\HS}\times d_{\HS}}$ by
\begin{align}
    \lrbracket{D(a,b)}_{i,j}=\lrvert{\lrbrace{k\,\vert\, a_k=i, b_k=j}}.\,
\end{align}
Note that for any $a,a',b,b'\in \lrrec{d_{\HS}}^n$, we have $D(a,b)=D(a',b')$ if and only if $a'=\pi(a)$ and $b'=\pi(b)$ for some $\pi\in S_n$. Thus, for any such $a,b$ the matrix $D(a,b)$ can be seen as a class function on the simultaneous $S_n$-orbits, i.e.\ 
\begin{align}
    D(\pi(a), \pi(b)) = D(a,b),\, \hspace{1cm}\forall \pi\in S_n.\,
\end{align} 
For the all ones column vector $\vec{1}$, let 
\begin{align}\label{eqn:P_n_d_H}
    P(n, d_{\HS}):= \lrbrace{D \in \mathbb{N}_0^{d_{\HS}\times d_{\HS}}\,\vert\, \vec{1}^TD\vec{1} =n}
\end{align}
denote the set of such $D$, where the sum over all entries equals $n$. In other words, $D(a,b)$ records the frequency of each pair $(i,j)\in\lrrec{d_{\HS}}\times\lrrec{d_{\HS}}$ among the pairs $\lrbracket{a_k, b_k}$, and $P(n, d_{\HS})$ is in bijection with the orbits of pairs $(a,b)\in\lrrec{d_{\HS}}^n\times\lrrec{d_{\HS}}^n$ under the simultaneous $S_n$-action; it hence indexes the orbits of the computational matrix units of $\Op{\HSn}$. Note that $D(a,b)\vec{1}$ yields the type of $a$ and $D(a,b)^T\vec{1}$ the type of $b$. From these orbit class functions, we can now construct the canonical basis to $\End{\CSn}{\HSn}$ by averaging over each orbit. Concretely, 
\begin{align}
    \lrbracket{A_{D}}_{a,b} = \begin{cases}
        \begin{array}{cc}
             1,\, & \text{if }  D(a,b)=D,\,\\
             0,\, & \text{otherwise.}
        \end{array}
    \end{cases}
\end{align}
By \cite[Proposition 1]{gijswijt2009block}, the set of $\lrbrace{A_D}_{D\in P(n, d_{\HS})}$ is a basis of $\End{\CSn}{\HSn}$. 
Thus, the following proposition follows naturally.
\begin{proposition}[Entries of the canonical basis]\label{prop:entries_canonical_bose}
Let $\vec{t}, \vec{t'}\in\mathcal{T}_{(n),\, d_{\HS}}$. Then, for all $a, b\in\lrrec{d_{\HS}}^n$,
    \begin{align}
    \lrrec{\CBose{t}{t'}{n}}_{a,b} = \begin{cases}
        \begin{array}{cc}
             1,\, & \text{if } D(a,b)\vec{1}=\vec{t}\,\text{ and }\,D(a,b)^T\vec{1}=\vec{t'},\,\\
             0,\, & \text{otherwise.}
        \end{array}
    \end{cases}
\end{align}
In particular, the entry $\lrrec{\CBose{t}{t'}{n}}_{a,b}$ depends on the pair $(a,b)$ only through the matrix $D(a,b)$.
\end{proposition}
\begin{proof}
    By definition, $\CBose{t}{t'}{n}=\ket{s_{\vec{t}}}\bra{s_{\vec{t'}}}$ has entry $1$ at $(a,b)$ if and only if $T(a)=\vec{t}$ and $T(b)=\vec{t'}$, and entry $0$ otherwise; the claim follows since $D(a,b)\vec{1}=T(a)$ and $D(a,b)^T\vec{1}=T(b)$.
\end{proof}

It is a well-established fact that $P(n, d_{\HS})$ can be computed efficiently (see e.g. \cite{chee2023efficient} or \autoref{eqn:opti_solving_cano_basis_E}). We obtain an analogous result in the Bose-symmetric variant. For $D\in P(n, d_{\HS})$, the vectors of row and column sums $D\vec{1}$ and $D^T\vec{1}$ are called the \emph{margins} of $D$. The terminology stems from contingency tables --- $D(a,b)$ being the two-way table that cross-tabulates the $n$ tensor factors of the pair $(a,b)$ by their value pairs $\lrbracket{a_k, b_k}$ --- where these totals are traditionally recorded in the margins of the table. For the pair-frequency matrices this recovers the observation below \autoref{eqn:P_n_d_H}: the margins of $D(a,b)$ are the types, $D(a,b)\vec{1}=T(a)$ and $D(a,b)^T\vec{1}=T(b)$; equivalently, $D(a,b)/n$ is the empirical distribution of the pairs $\lrbracket{a_k,b_k}_{k=1}^n$, with marginal distributions $T(a)/n$ and $T(b)/n$. Collecting the two margins defines the \emph{margin map}
\begin{align}\label{eqn:margin_map}
    \mu\,:\, P(n, d_{\HS}) \rightarrow \mathcal{T}_{(n),\, d_{\HS}}\times\mathcal{T}_{(n),\, d_{\HS}},\, \hspace{1cm} D\mapsto\lrbracket{D\vec{1},\, D^T\vec{1}},\,
\end{align}
which is surjective, since $\mu\lrbracket{D(a,b)}=\lrbracket{T(a), T(b)}$ for any $a$ of type $\vec{t}$ and $b$ of type $\vec{t'}$. Next, we partition $P(n, d_{\HS})$ into the fibres $\mu^{-1}\lrbracket{\vec{t},\vec{t'}}$ of the margin map: by \autoref{prop:entries_canonical_bose}, the support of $\CBose{t}{t'}{n}$ is the union of the $S_n$-orbits labelled by the fibre over $\lrbracket{\vec{t},\vec{t'}}$, so each fibre collects precisely those $S_n$-orbits that are merged by the additional symmetry of Bose-symmetric operators, in contrast to merely symmetric ones. The following lemma identifies the fibres explicitly and bounds their cardinality.

\begin{lemma}\label{lem:bound_bose_lin_comb}
Let $a\in\lrrec{d_{\HS}}^n$ be of type $\vec{t}$ with $h_a\leq d_{\HS}$ nonzero entries, and let $b\in\lrrec{d_{\HS}}^n$ be of type $\vec{t'}$ with $h_b\leq d_{\HS}$ nonzero entries, so that the sorted types are partitions of $n$ of heights $h_a$ and $h_b$, respectively. Then
    \begin{align}
    \mathcal{D}_{Bose}(a,b):= \lrbrace{D\lrbracket{\pi(a), \sigma(b)}\,:\,\pi, \sigma\in S_n} = \lrbrace{D\in P(n, d_{\HS})\,\vert\, D\vec{1}=\vec{t},\; D^T\vec{1}=\vec{t'}}
\end{align}
is a set of cardinality at most $\lrbracket{n+1}^{\lrbracket{h_a-1}\lrbracket{h_b-1}}$, i.e.\ polynomial in $n$ for fixed $h_a, h_b$, which, given $P(n, d_{\HS})$, can be computed in time polynomial in $n$. Furthermore, if $h_a=1$ or $h_b=1$, then $\lrvert{\mathcal{D}_{Bose}(a,b)}=1$.
\end{lemma}
\begin{proof}
    For the set equality: every $D\lrbracket{\pi(a),\sigma(b)}$ has row margins $D\vec{1}=T\lrbracket{\pi(a)}=\vec{t}$ and column margins $D^T\vec{1}=T\lrbracket{\sigma(b)}=\vec{t'}$. Conversely, given $D\in P(n, d_{\HS})$ with these margins, construct strings $a', b'\in\lrrec{d_{\HS}}^n$ by listing, for each pair $(i,j)$, exactly $D_{i,j}$ positions $k$ with $a'_k=i$ and $b'_k=j$; then $D(a',b')=D$, and since $T(a')=\vec{t}=T(a)$ and $T(b')=\vec{t'}=T(b)$, there are $\pi,\sigma\in S_n$ with $a'=\pi(a)$ and $b'=\sigma(b)$.

    For the cardinality, the margin conditions determine all entries of $D$ outside the $h_a\times h_b$ submatrix indexed by the supports of $\vec{t}$ and $\vec{t'}$ (they vanish), so $\mathcal{D}_{Bose}(a,b)$ is in bijection with the set of $h_a\times h_b$ contingency tables with row margins the nonzero part of $\vec{t}$ and column margins the nonzero part of $\vec{t'}$ (see, e.g., \cite{de2013combinatorics}). Since the margins are fixed, such a table is uniquely determined by its upper-left $\lrbracket{h_a-1}\times\lrbracket{h_b-1}$ submatrix, whose entries lie in $\lrbrace{0,\ldots, n}$; injectivity of this restriction yields $\lrvert{\mathcal{D}_{Bose}(a,b)}\leq\lrbracket{n+1}^{\lrbracket{h_a-1}\lrbracket{h_b-1}}$. If $h_a=1$, the unique nonzero row equals $\vec{t'}$, so the table is unique, and analogously for $h_b=1$. For fixed margins, the number of such tables can alternatively be analyzed via Ehrhart theory \cite{ehrhart1962polyedres, beck2007computing}; the elementary bound above suffices for our purposes.

    Finally, $P(n, d_{\HS})$ has cardinality polynomial in $n$ and can be computed efficiently (see, e.g., \cite{chee2023efficient} or \autoref{eqn:opti_solving_cano_basis_E}), and $\mathcal{D}_{Bose}(a,b)$ is obtained from it by filtering for the two margin conditions, at cost $\mathcal{O}\lrbracket{d_{\HS}^2}$ per element.
\end{proof}

Thus, we obtain an alternative reformulation of the canonical basis for the space of Bose-symmetric operators.
\begin{proposition}\label{prop:C_sym_from_A_D_Bose}
Let $a\in\lrrec{d_{\HS}}^n$ be of type $\vec{t}$, $b\in\lrrec{d_{\HS}}^n$ be of type $\vec{t'}$ and $\mathcal{D}_{Bose}(a,b)$ be defined as in \autoref{lem:bound_bose_lin_comb}. Then, 
    \begin{align}
         \CBose{t}{t'}{n} = \sum_{D\in\mathcal{D}_{Bose}(a,b)} A_D.
    \end{align}
\end{proposition}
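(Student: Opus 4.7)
The plan is to verify the identity entry-wise in the computational basis of $\End{}{\HSn}$. Let $a', b' \in \lrrec{d_{\HS}}^n$ be arbitrary. The left-hand side has $(a',b')$-entry
\begin{align*}
 \lrbracket{\CBose{t}{t'}{n}}_{a',b'} = \braket{a'|s_{\vec{t}}}\braket{s_{\vec{t'}}|b'} = \mathbf{1}\lrrec{T(a')=\vec{t}}\cdot\mathbf{1}\lrrec{T(b')=\vec{t'}},\,
\end{align*}
simply by plugging in the definition $\ket{s_{\vec{t}}}=\sum_{\vec{i}\,:\,T(\vec{i})=\vec{t}}\ket{i_1,\ldots,i_n}$ and using that the canonical basis is orthonormal.

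For the right-hand side I would exploit that the matrices $\lrbrace{A_D}_{D\in P(n,d_{\HS})}$ have pairwise disjoint support as $D$ ranges over $P(n,d_{\HS})$; hence at most one summand can contribute to any entry. By the definition of $A_D$, the $(a',b')$-entry of $\sum_{D\in\mathcal{D}_{Bose}(a,b)}A_D$ equals $1$ precisely when the (unique) matrix $D(a',b')$ lies in $\mathcal{D}_{Bose}(a,b)$, and equals $0$ otherwise. The key observation to carry out the argument is the already noted fact that the row and column margins of $D(a',b')$ record exactly the types of $a'$ and $b'$, i.e.\ $D(a',b')\vec{1}=T(a')$ and $D(a',b')^T\vec{1}=T(b')$. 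Combined with the interpretation of $\mathcal{D}_{Bose}(a,b)$ as the set of matrices in $P(n,d_{\HS})$ whose row and column margins coincide with those of $D(a,b)$ (equivalently, with $\vec{t}=T(a)$ and $\vec{t'}=T(b)$)—which is precisely how this set is used in the proof of \autoref{lem:bound_bose_lin_comb} via the contingency-table reformulation—one gets $D(a',b')\in\mathcal{D}_{Bose}(a,b)$ iff $T(a')=\vec{t}$ and $T(b')=\vec{t'}$.

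Comparing the two indicator expressions thus yields equality at every $(a',b')$, which proves the identity. The argument is essentially bookkeeping: the only step that requires a moment of care is checking that the single condition "$D(a',b')\in\mathcal{D}_{Bose}(a,b)$" really decouples into the two independent type conditions on $a'$ and $b'$, which follows immediately from the margin interpretation above. No further use of $S_n$-averaging or of the symmetric subspace structure is required beyond what is already encapsulated in the orbit-representative description of $\lrbrace{A_D}$ recalled just before the proposition.
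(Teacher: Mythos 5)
Your proof is correct. The paper itself gives no explicit proof of this proposition (it is stated as a corollary of the preceding observations, "Thus, we obtain an alternative reformulation\ldots"), so there is no proof in the paper to compare against. Your entry-wise verification in the computational basis is exactly the bookkeeping argument the paper leaves implicit: $(\CBose{t}{t'}{n})_{a',b'}=\mathbf{1}[T(a')=\vec{t}]\,\mathbf{1}[T(b')=\vec{t'}]$ on the left, while disjointness of the $A_D$-supports together with the margin identities $D(a',b')\vec{1}=T(a')$, $D(a',b')^T\vec{1}=T(b')$ give the same indicator on the right.

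One point worth flagging: the paper's literal definition of $\mathcal{D}_{Bose}(a,b)$, namely $\{D\in P(n,d_{\HS}) : D(\pi(a),\sigma(b))=D(a,b),\ \forall\pi,\sigma\in S_n\}$, does not actually constrain $D$ and would be vacuous (or empty) as written. You correctly read through this to the intended meaning — the set of contingency tables $D$ in $P(n,d_{\HS})$ with row margins $T(a)$ and column margins $T(b)$, equivalently $\{D(\pi(a),\sigma(b)) : \pi,\sigma\in S_n\}$ — which is also the only reading that makes the contingency-table counting argument in \autoref{lem:bound_bose_lin_comb} coherent. Your proof makes this dependence explicit, which is a genuine improvement in precision over the paper's presentation.
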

\begin{proof}
    By \autoref{prop:entries_canonical_bose}, the support of $\CBose{t}{t'}{n}$ is $\lrbrace{(a',b')\,:\,T(a')=\vec{t},\, T(b')=\vec{t'}}$, which is the disjoint union, over $D\in\mathcal{D}_{Bose}(a,b)$, of the simultaneous $S_n$-orbits $\lrbrace{(a',b')\,:\,D(a',b')=D}$; the operators $A_D$ are precisely the indicators of these orbits.
\end{proof}

\begin{remark}[Orbit bases: simultaneous vs.\ independent orbits]\label{rem:simultaneous_vs_independent_orbits}\label{rem:orbit_bases_Bose}
The two canonical bases of this subsection arise from one construction --- $0/1$ indicator operators of orbits of index pairs $(a,b)\in\lrrec{d_{\HS}}^n\times\lrrec{d_{\HS}}^n$ --- applied to two different symmetrizations. Averaging by conjugation, $Z\mapsto\frac{1}{n!}\sum_{\pi\in S_n}U(\pi)\,Z\,U(\pi)^{\top}$, moves row and column indices \emph{simultaneously}, $\lrbracket{a,b}\mapsto\lrbracket{\pi(a),\pi(b)}$; the orbits of this $S_n$-action are classified by the full matrices $D(a,b)$, and their indicators $\lrbrace{A_D}_{D\in P(n,\,d_{\HS})}$ span the image $\End{\CSn}{\HSn}$. Two-sided symmetrization, $Z\mapsto P_{\SymH}\,Z\,P_{\SymH}=\frac{1}{\lrbracket{n!}^2}\sum_{\pi,\sigma\in S_n}U(\pi)\,Z\,U(\sigma)^{\top}$, moves the two indices \emph{independently}, $\lrbracket{a,b}\mapsto\lrbracket{\pi(a),\sigma(b)}$; the orbits of this $S_n\times S_n$-action are classified by the margins $\lrbracket{T(a),\,T(b)}$ alone (\autoref{prop:entries_canonical_bose}), and their indicators $\lrbrace{\CBose{t}{t'}{n}}_{\vec{t},\vec{t'}\in\mathcal{T}_{(n),\,d_{\HS}}}$ span the image $\End{\CSn}{\SymH}$. Indeed, up to normalization, $A_{D(a,b)}$ and $\CBose{T(a)}{T(b)}{n}$ are the averages of any computational matrix unit $\ket{a}\bra{b}$ under the respective symmetrization. The margin map $\mu$ of \autoref{eqn:margin_map} mediates between the two labelings: each independent orbit is the disjoint union of the simultaneous orbits in one margin fibre $\mathcal{D}_{Bose}$ (\autoref{prop:C_sym_from_A_D_Bose}). Consequently, $\End{\CSn}{\SymH}$ sits inside $\End{\CSn}{\HSn}$ precisely as the span of those $\sum_{D}z_D\,A_D$ whose coefficients are constant on every margin fibre; an individual $A_D$ over a fibre of size at least two satisfies the commutant condition $U(\pi)\,X\,U(\pi)^{\top}=X$ but violates the support condition $P_{\SymH}\,X\,P_{\SymH}=X$. This is the combinatorial counterpart of \autoref{fig:symmetry_vs_bose_symmetry_block_decomp}: restricting to the sector $\lambda=(n)$ merges, within each margin fibre, the finer simultaneous orbits into a single independent one.

The unnormalized convention thus makes the basis of \autoref{prop:first_basis_Bose} the \emph{orbit basis} of $\End{\CSn}{\SymH}$ in the same sense in which $\lrbrace{A_D}_{D\in P(n,\,d_{\HS})}$ is the orbit basis of $\End{\CSn}{\HSn}$: the expansion in \autoref{prop:C_sym_from_A_D_Bose} has all coefficients equal to one, and the coefficient extraction in \autoref{lem:efficient_access_bose} below amounts to reading single matrix entries, whereas normalized basis elements would carry the square-root factors $\binom{n}{\vec{t}}^{-1/2}\binom{n}{\vec{t'}}^{-1/2}$ through both. Normalization is confined to where orthonormality is used, namely the columns $\ket{\tilde{s}_{\vec{t}}}$ of the isometry $U_{(n)}$ in \autoref{prop:type_basis_symmetric_subspace} (cf.\ also the trace-normalized matrix units in \autoref{rem:symmetry_reduction_taxonomy}).
\end{remark}

\begin{lemma}[Efficient access]\label{lem:efficient_access_bose}
    Let $Z\in\End{\CSn}{\SymH}$ be a Bose-symmetric operator such that 
    \begin{align}
        Z=\sum_{\vec{t},\vec{t'}\in\mathcal{T}_{(n),\, d_{\HS}}} z_{\vec{t},\vec{t'}} \CBose{t}{t'}{n}\,,
    \end{align}
    and suppose we are given oracle access to the entries of $Z$ in the computational basis, i.e., to
    \begin{align}
    \begin{split}
        f_{Z} \,:\, \lrrec{d_{\HS}}^n\times \lrrec{d_{\HS}}^n &\rightarrow \CC \\
        (a, b) &\mapsto Z_{a,b},
    \end{split}
    \end{align}
    where a string in $\lrrec{d_{\HS}}^n$ is identified with the index of the corresponding computational basis vector. Then $\lrbrace{z_{\vec{t},\vec{t'}}}_{\vec{t},\vec{t'}\in\mathcal{T}_{(n),\, d_{\HS}}}$ can be computed with $m_{(n)}^2$ oracle queries and in time polynomial in $n$ for fixed $d_{\HS}$.
\end{lemma}

\begin{proof}
    For $\vec{t}\in\mathcal{T}_{(n),\, d_{\HS}}$, let 
    \begin{align}
        a(\vec{t}):=\lrbracket{\underbrace{1,\ldots,1}_{t_1\text{ times}},\, \underbrace{2,\ldots,2}_{t_2\text{ times}},\,\ldots,\,\underbrace{d_{\HS},\ldots,d_{\HS}}_{t_{d_{\HS}}\text{ times}}}
    \end{align}
    denote the nondecreasing string of type $\vec{t}$, computable in time $\mathcal{O}(n)$. By \autoref{prop:entries_canonical_bose}, $\CBose{u}{u'}{n}$ has entry $1$ at the index pair $\lrbracket{a(\vec{t}),\, a(\vec{t'})}$ if and only if $\lrbracket{\vec{u},\vec{u'}}=\lrbracket{\vec{t},\vec{t'}}$; since the supports of the basis elements are pairwise disjoint, it follows that
    \begin{align}
        z_{\vec{t},\vec{t'}} = Z_{a(\vec{t}),\, a(\vec{t'})} = f_Z\lrbracket{a(\vec{t}),\, a(\vec{t'})}.
    \end{align}
    The procedure below therefore recovers all coefficients using $m_{(n)}^2\leq\lrbracket{n+1}^{2\lrbracket{d_{\HS}-1}}$ oracle queries, cf.\ \autoref{eqn:ssyt_one_row_count}.
\begin{algorithm}[H]
\caption{Efficient access to Bose-symmetric operators}
\begin{algorithmic}
  \State $\mathcal{L} \gets \emptyset$
  \ForAll{$\lrbracket{\vec{t},\, \vec{t'}}\in\mathcal{T}_{(n),\, d_{\HS}}\times\mathcal{T}_{(n),\, d_{\HS}}$}
      \State $z_{\vec{t},\vec{t'}} \gets f_Z\lrbracket{a(\vec{t}),\, a(\vec{t'})}$
      \State $\mathcal{L} \gets \mathcal{L}\,\cup\,\lrbrace{\lrbracket{\vec{t},\, \vec{t'},\, z_{\vec{t},\vec{t'}}}}$
  \EndFor
  \State \textbf{Return} $\mathcal{L}$
\end{algorithmic}
\end{algorithm}
    \noindent Alternatively, in the spirit of \autoref{prop:C_sym_from_A_D_Bose}, one may iterate over $P(n, d_{\HS})$ and query a single representative per margin fibre $\mathcal{D}_{Bose}(a,b)$; the direct enumeration above uses the same number of queries.
\end{proof}

\subsubsection{Efficiently transforming Bose-symmetric operators}\label{sec:efficient_trafo_bose_operators}

By linearity, transforming an arbitrary Bose-symmetric operator reduces to transforming the corresponding canonical basis. First we show that, for any $i\in\lrrec{3}$, given the complex coefficients $\lrbrace{z_j}_{j=1}^{m\lrbracket{\D_i}^2}$ of an operator $Z\in\End{\CC\lrrec{S_{t\lrbracket{\D_i}}}}{\D_i}$ with respect to the canonical basis of this space, the block
	\begin{align}
		\block{\Psi_{ \End{\CC\lrrec{S_{t\lrbracket{\D_i}}}}{\D_i}}\lrbracket{Z}}_{\lambda=\lrbracket{t\lrbracket{\D_i}}}
	\end{align}
	can be computed in poly$(n)$ time. The canonical basis of $\End{\CSn}{\D_1}$ can be written as
	\begin{align}\label{eqn:canonical_basis_D1}
		\mathcal{B}_{\D_1}:=\lrbrace{\ket{i}\bra{j}_{A}\otimes \CBose{t}{t'}{n}}_{\substack{i,j\in \lrrec{\lrvert{A}},\,\\ \vec{t}, \vec{t'}\in \mathcal{T}_{(n),\, \lrvert{B\Bar{B}}}}}
	\end{align}
    and similarly for $\D_2$ and $\D_3$, with side systems $C_{A_L}\otimes A_R$ and $C_{B_L}\otimes\lrbracket{B_R}_n$, respectively. Since the side systems are of fixed dimension and the map of \autoref{cor:explicit_star_isomorphism} acts on them as the identity, it suffices to treat $\End{\CSn}{\SymH}$. Any $Z\in\End{\CSn}{\SymH}$ can be written in the canonical basis of \autoref{prop:first_basis_Bose} as $Z=\sum_{\vec{t},\vec{t'}\in\mathcal{T}_{(n),\, d_{\HS}}} z_{\vec{t},\vec{t'}}\,\CBose{t}{t'}{n}$. By \autoref{cor:explicit_star_isomorphism} together with \autoref{prop:type_basis_symmetric_subspace}, $\block{\psi(Z)}_{(n)}=U_{(n)}^\dagger\, Z\, U_{(n)}$, where the columns of the isometry $U_{(n)}$ are the normalized vectors $\ket{\tilde{s}_{\vec{\tau}}}=\binom{n}{\vec{\tau}}^{-1/2}\ket{s_{\vec{\tau}}}$; entrywise,
\begin{align}
    \lrrec{\block{\psi(Z)}_{(n)}}_{\vec{\tau},\vec{\gamma}} = \bra{\tilde{s}_{\vec{\tau}}}\, Z\, \ket{\tilde{s}_{\vec{\gamma}}} = \sum_{\vec{t},\vec{t'}\in\mathcal{T}_{(n),\, d_{\HS}}} z_{\vec{t},\vec{t'}}\, \bra{\tilde{s}_{\vec{\tau}}}\,\CBose{t}{t'}{n}\,\ket{\tilde{s}_{\vec{\gamma}}}.
\end{align}
The transformation therefore reduces to the following closed form.
\begin{proposition}[{\cite{harrow2013church}}]\label{prop:dimension_bound_sym_subspace}
	For any $k, n\in\mathbb{N}$,
	\begin{align}
		\lrvert{\mathcal{T}_{(n),\, k}} = \lrvert{\operatorname{WComp}(k,n)} = \binom{k+n-1}{n}\leq \lrbracket{n+1}^{k-1},\,\quad \lrvert{T^{-1}(\vec{t})}=\binom{n}{\vec{t}}:=\frac{n!}{\prod_{i=1}^k(t_i!)},\,
	\end{align}
	where $T^{-1}(\vec{t})$ is the set of strings of type $\vec{t}$; cf.\ \autoref{eqn:ssyt_one_row_count}.
\end{proposition}
\begin{lemma}\label{lem:efficient_basis_trafo_Bose_sym}
	For all $\vec{\tau}, \vec{\gamma}, \vec{t}, \vec{t'}\in \mathcal{T}_{(n),\, d_{\HS}}$,
	\begin{align}\label{eqn:closed_form_transformed_basis}
		\bra{\tilde{s}_{\vec{\tau}}}\,\CBose{t}{t'}{n}\,\ket{\tilde{s}_{\vec{\gamma}}} = \sqrt{\binom{n}{\vec{t}}\binom{n}{\vec{t'}}}\;\delta_{\vec{\tau},\vec{t}}\,\delta_{\vec{\gamma},\vec{t'}}.
	\end{align}
	Consequently, given complex numbers $\lrbrace{z_{\vec{t},\vec{t'}}}_{\vec{t},\vec{t'}\in\mathcal{T}_{(n),\, d_{\HS}}}$, the matrix $\block{\psi(Z)}_{(n)}\in\CC^{m_{(n)}\times m_{(n)}}$ of $Z=\sum_{\vec{t},\vec{t'}} z_{\vec{t},\vec{t'}}\,\CBose{t}{t'}{n}$, with entries
	\begin{align}
		\lrrec{\block{\psi(Z)}_{(n)}}_{\vec{\tau},\vec{\gamma}} = \sqrt{\binom{n}{\vec{\tau}}\binom{n}{\vec{\gamma}}}\; z_{\vec{\tau},\vec{\gamma}},
	\end{align}
	can be computed exactly in time polynomial in $n$ for fixed $d_{\HS}$.
\end{lemma}
\begin{proof}
	Vectors of distinct type are supported on disjoint sets of computational basis vectors, and $\braket{s_{\vec{\tau}}|s_{\vec{t}}}=\binom{n}{\vec{t}}\,\delta_{\vec{\tau},\vec{t}}$, since for $\vec{\tau}=\vec{t}$ the two $0/1$-vectors share exactly $\binom{n}{\vec{t}}$ entries equal to one. Hence
	\begin{align}
		\bra{\tilde{s}_{\vec{\tau}}}\,\CBose{t}{t'}{n}\,\ket{\tilde{s}_{\vec{\gamma}}} = \binom{n}{\vec{\tau}}^{-1/2}\binom{n}{\vec{\gamma}}^{-1/2}\,\braket{s_{\vec{\tau}}|s_{\vec{t}}}\,\braket{s_{\vec{t'}}|s_{\vec{\gamma}}} = \sqrt{\binom{n}{\vec{t}}\binom{n}{\vec{t'}}}\;\delta_{\vec{\tau},\vec{t}}\,\delta_{\vec{\gamma},\vec{t'}},
	\end{align}
	and the entry formula follows by linearity. Each multinomial coefficient is a positive integer with $\mathcal{O}\lrbracket{n\log d_{\HS}}$ bits, computable exactly with $\mathcal{O}(n)$ integer multiplications and one division, and there are $m_{(n)}^2\leq\lrbracket{n+1}^{2\lrbracket{d_{\HS}-1}}$ entries.
\end{proof}

With the unnormalized polytabloids $u_{\tau}=\ket{s_{\vec{\tau}}}$ in place of $\ket{\tilde{s}_{\vec{\tau}}}$, the same computation yields $u_{\tau}^\dagger\,\CBose{t}{t'}{n}\,u_{\gamma}=\binom{n}{\vec{t}}\binom{n}{\vec{t'}}\,\delta_{\vec{\tau},\vec{t}}\,\delta_{\vec{\gamma},\vec{t'}}$ --- the Gram-twisted, cone-preserving variant of \autoref{rem:rep_theory_reading}; cf.\ \cite{polak2020new} for the coordinate-ring formalism that extends this computation to general shapes, and \autoref{sec:coordinate_ring_proof} for an alternative derivation of \autoref{eqn:closed_form_transformed_basis} in this formalism. See \autoref{ex:change_of_basis_bose} for an example. Now by \autoref{lem:efficient_basis_trafo_Bose_sym}, for each $\D_i\in\mathcal{D}$ and any $Z\in\End{\CC\lrrec{S_{t\lrbracket{\D_i}}}}{\D_i}$ with $i\in\lrrec{3}$, if the decomposition of $Z$ in the corresponding canonical basis $\mathcal{B}_{\D_i}$ is given then 
	\begin{align}
		 \block{\Psi_{ \End{\CC\lrrec{S_{t\lrbracket{\D_i}}}}{\D_i}}\lrbracket{Z}}_{\lambda=\lrbracket{t\lrbracket{\D_i}}}
	\end{align}
	can be computed in $\text{poly}(n)$ time. 

\subsubsection{Efficient construction of $\Psi\lrbracket{\mathrm{SDP}_n^{\mathrm{Bose}}(G)}$}\label{sec:efficient_construction_of_bose_sdp}  

In summary, we obtain the following results.
	
\begin{lemma}\label{lem:efficiency_Bose_trafo}
For $n\in\mathbb{N}_{\geq 1}$  the transformation $\Psi$ mapping $\mathrm{SDP}_n^{\mathrm{Bose}}(G)$ in \autoref{eqn:abstract_game_sdp_bose_implicit} to $\Psi\lrbracket{\mathrm{SDP}_n^{\mathrm{Bose}}(G)}$ in \autoref{eqn:reduced_Bose_SDP} can be done in poly$\lrbracket{n}$ time for systems $A,B$ of fixed dimension. 
\end{lemma}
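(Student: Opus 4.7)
The plan is to enumerate all $m\lrbracket{\D_1}=\text{poly}(n)$ canonical basis elements $C_i\lrbracket{\D_1}$ of $\End{\CSn}{\D_1}$, and for each one compute in $\text{poly}(n)$ time every quantity it contributes to $\Psi\lrbracket{SDP_n^{Bose}(G)}$. Since $m\lrbracket{\D_1}$, the number of constraints, and the size of each Schur block all scale polynomially in $n$ by \autoref{lem:Efficient_Bose-symmetric_hierarchy_with_partial_trace_constraints}, combining this with a polynomial per-element cost would yield the claimed overall polynomial runtime.

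The first step would handle the scalar data, namely the objective coefficients $\Tr\lrrec{G_{AB}\,C_i\lrbracket{\D_1}_{AB}}$ and the normalization coefficients $\Tr\lrrec{C_i\lrbracket{\D_1}}$. Because the canonical basis of $\End{\CSn}{A\otimes\vee^n\lrbracket{B\bar B}}$ factors as $\ket{i}\bra{j}_A\otimes \CBose{t}{t'}{n}$, and each $\CBose{t}{t'}{n}$ is a $0/1$ matrix supported on a single pair of orbit classes by \autoref{prop:first_basis_Bose}, both quantities reduce to sums over the polynomial-size type set $\mathcal{T}_{(n),\,d_{B\bar B}}$ which is explicitly enumerable by \autoref{prop:dimension_bound_sym_subspace}.

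The second step would handle the linear constraints. Each is an equality between Schur blocks (in the sense of \autoref{eqn:bijection_bose_sym_specifc_D}) of operators $Z$ obtained from $C_i\lrbracket{\D_1}$ by partial traces over fixed subsystems, possibly followed by tensoring with $X_{C_{A_M}}$ or $Y_{C_{B_R}}$. For each such $Z$ I would first produce its decomposition in the canonical basis of the target invariant space $\End{\CC\lrrec{S_{t\lrbracket{\D_i}}}}{\D_i}$, and then apply \autoref{lem:efficient_basis_trafo_Bose_sym} term-by-term to obtain the required block in $\text{poly}(n)$ time.

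The hard part will be extracting the canonical coefficients of $Z$, since $Z$ lives a priori in an exponentially large ambient matrix space. This is exactly the setting of the efficient-access procedure of \autoref{sec:efficient_acces_bose_operators}: a partial trace of a basis element $\ket{i}\bra{j}_A\otimes \CBose{t}{t'}{n}$ decomposes via \autoref{prop:C_sym_from_A_D_Bose} into a polynomial-size sum over orbit-indicator matrices $A_D$ with $D\in\mathcal{D}_{Bose}(a,b)$, each of whose individual entries is computable in closed form from the orbit data alone. This yields a polynomial-time entry oracle for $Z$, which the algorithm of \autoref{sec:efficient_acces_bose_operators} then queries only $m\lrbracket{\D_i}=\text{poly}(n)$ times to recover the full canonical decomposition. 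Stringing the pieces together\,---\,enumerate the basis of $\End{\CSn}{\D_1}$, form the constraint images, extract canonical coefficients via efficient access, and apply the basis-change lemma\,---\,realises $\Psi$ in total $\text{poly}(n)$ time.
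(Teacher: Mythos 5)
Your outline tracks the skeleton of the paper's proof: enumerate the polynomially many canonical basis elements $C_i(\D_1)$, expand each partial-trace image in the canonical basis of the target space $\End{\CC\lrrec{S_{t(\D_i)}}}{\D_i}$, and push through \autoref{lem:efficient_basis_trafo_Bose_sym} block by block. The point of divergence is how you obtain those canonical coefficients. You propose to treat each partial-trace image $Z$ as a black box with a poly$(n)$ entry oracle and then run the efficient-access reconstruction of \autoref{sec:efficient_acces_bose_operators}, whereas the paper derives the expansions directly by branching-type combinatorics, e.g.\ $\Trr{\HS_2^n}{\CBose{t}{t'}{n}}=\binom{n-1}{\vec t_{\mathrm{red}}}\ket{a_1}\bra{b_1}$ (or zero), and a restriction relation expressing $\Tr_{(B_L)_n}\lrrec{\CBose{t}{t'}{n}}$ over the basis $\{K_{\vec l,\vec l'}^{\vee^{n-1}}\otimes\ket{w^R}\bra{z^R}\}$ with one box subtracted from each type. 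Your route can be made to work, but two points are left implicit. First, the efficient-access algorithm as stated operates on $\End{\CSn}{\vee^n(\HS)}$; you would need its (routine, since the side registers are fixed-dimensional) extension to the spaces $\D_2,\D_3$. Second, for traces over exponentially large subsystems---the objective term $\Tr_{\bar B_1(B\bar B)_2^n}\lrrec{\cdot}$ and the normalization $\Tr\lrrec{\cdot}$---an entry of your oracle is not a bounded sum but an orbit count with exponentially many contributions, so ``computable in closed form from the orbit data alone'' is precisely the multinomial computation the paper carries out. The oracle abstraction reorganizes, rather than avoids, that combinatorics; once those closed forms are in place the plan is sound.
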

As a direct corollary, we obtain the following result in the fixed non-local game setting.
\begin{corollary}\label{cor:efficiency_Bose_trafo_games}
    The SDP hierarchy of \autoref{thm:Bose_sym_complexity_results} (see \autoref{sec:nonlocal-games}) can be constructed in poly$(n)$ time.
\end{corollary}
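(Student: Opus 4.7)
The plan is to deduce the corollary directly from \autoref{lem:efficiency_Bose_trafo} by verifying that the Bose-symmetric SDP for fixed two-player free non-local games in \autoref{eqn:Bose_SDP_non-local_games} is a special instance of the general constrained separability SDP in \autoref{eqn:abstract_game_sdp_bose_implicit}, with constraint maps that are exactly of the partial-trace form required by \autoref{lem:efficiency_Bose_trafo}.

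First, I would make explicit the correspondence between the game-theoretic systems and the abstract system decomposition of \autoref{sec:schur_sdp_bose}. Concretely, set $A_L = A_1$, $A_M = Q_1$, $A_R = T$ on Alice's side, and $B_L = A_2$, $B_R = Q_2 \otimes \tilde{T}$ on Bob's side, with purification register $\bar{B} = \overline{A_2 Q_2 \tilde{T}}$. The fixed operators appearing on the right-hand side of the linear constraints are the classical distributions over questions, namely $X_{C_{A_M}} = \sum_{q_1} \pi_1(q_1) \ket{q_1}\bra{q_1}_{Q_1}$ and $Y_{C_{B_R}} = \sum_{q_2} \pi_2(q_2) \ket{q_2}\bra{q_2}_{Q_2} \otimes \mathcal{I}_{\tilde{T}}/\lrvert{T}$. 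Under this identification, the Alice-marginal constraint in \autoref{eqn:Bose_SDP_non-local_games} matches $\Tr_{A_L}[\rho_{A(B\bar{B})_1^n}] = X_{C_{A_M}} \otimes \rho_{A_R(B\bar{B})_1^n}$, and the Bob-marginal constraint matches $\Tr_{(B_L)_n \bar{B}_n}[\rho_{(B\bar{B})_1^n}] = \rho_{(B\bar{B})_1^{n-1}} \otimes Y_{C_{B_R}}$, as in \autoref{eq:first_constraints_bose_par_trace} and \autoref{eq:second_constraints_bose_par_trace}.

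Next, I would observe that for a fixed non-local game, the cardinalities $\lrvert{A}, \lrvert{Q}, \lrvert{T}$ are all constants that do not grow with $n$. Consequently, the local dimensions $\lrvert{\HS_A} = \lrvert{AQT}$ and $\lrvert{\HS_{B\bar{B}}} = \lrvert{AQT}^2$ are fixed, precisely meeting the ``systems of fixed dimension'' hypothesis of \autoref{lem:efficiency_Bose_trafo}. The only $n$-dependence in the SDP enters through the symmetric-subspace copies $(B\bar{B})_1^n$, which is the regime covered by the lemma. Invoking \autoref{lem:efficiency_Bose_trafo} then yields the claim: the Schur-basis reformulation of \autoref{eqn:Bose_SDP_non-local_games} can be constructed in $\mathrm{poly}(n)$ time, and in particular one never needs to instantiate the original exponentially large operators.

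The only mild subtlety, and the step most worth double-checking, is the classical-quantum block structure on the right-hand side: the operators $X_{C_{A_M}}$ and $Y_{C_{B_R}}$ are classical states on the question registers tensored with (scaled) identities, and one must verify that this structure is compatible with both the Bose-symmetric subspace restriction and the polytabloid change-of-basis computed in \autoref{lem:efficient_basis_trafo_Bose_sym}. Since these operators are fixed constants supplied as input to the SDP and act on subsystems disjoint from the permutation-symmetric registers, they contribute only to the constant-size factors in the complexity estimate and do not affect the $\mathrm{poly}(n)$ scaling. Combined with the efficient access procedure of \autoref{sec:efficient_acces_bose_operators} applied to each basis element $C_i(\D_1)$, this establishes the corollary.
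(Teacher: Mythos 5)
Your proposal is correct and follows essentially the same route as the paper, which derives the corollary directly from \autoref{lem:efficiency_Bose_trafo} by noting that the game systems $\lrvert{A},\lrvert{Q},\lrvert{T}$ are fixed and the SDP constraints are of the partial-trace form the lemma requires. You simply spell out the system identification $A_L=A_1$, $A_M=Q_1$, $A_R=T$, $B_L=A_2$, $B_R=Q_2\tilde T$ (matching \autoref{sec:nonlocal-games}) and the fixed constant operators, which the paper treats as already established.
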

\begin{remark}
    The Bose-symmetric de Finetti theorem in \autoref{lem:bose-symmetric_deFinetti} certifies the approximation error between $\mathrm{SDP}_n^{\mathrm{Bose}}(G)$ and $\mathrm{cSEP}(G)$ for any $n$. Since $\Psi$ is a $*$-isomorphism, $\Psi\lrbracket{\mathrm{SDP}_n^{\mathrm{Bose}}(G)}$ and $\mathrm{SDP}_n^{\mathrm{Bose}}(G)$ have the same optimal value, and thus the approximation guarantee translates to $\Psi\lrbracket{\mathrm{SDP}_n^{\mathrm{Bose}}(G)}$.
\end{remark}

To prepare for the proof of \autoref{lem:efficiency_Bose_trafo}, we introduce the following notion.

\begin{definition}[Reduced type]\label{def:reduced_type}
	For $p\in\lrrec{d_{\HS}}$, let $T(p)\in\lrbrace{0,1}^{d_{\HS}}$ denote the indicator vector of $p$, i.e.\ $T(p)_j=\delta_{p,j}$. For a type $\vec{t}\in\mathcal{T}_{(n),\,d_{\HS}}$ with $t_p\geq 1$, we write $\vec{t}-T(p)\in\mathcal{T}_{(n-1),\,d_{\HS}}$ for the type obtained from $\vec{t}$ by removing one occurrence of the letter $p$. In particular, if $\vec{w}\in\lrrec{d_{\HS}}^n$ is of type $\vec{t}$, then $\vec{t}-T(w_i)$ is the type of the string obtained from $\vec{w}$ by deleting its $i$-th letter.
\end{definition}

\begin{proof}[Proof of \autoref{lem:efficiency_Bose_trafo}]	
	In order to prove \autoref{lem:efficiency_Bose_trafo}, we show that the objective function and constraints in $\Psi\lrbracket{\mathrm{SDP}_n^{\mathrm{Bose}}(G)}$ can be computed explicitly in poly$(n)$ time. Concretely, due to the bounds on the sums appearing in \autoref{eqn:reduced_Bose_SDP}, we show that for any $i\in \lrrec{m(\D_1)^2}$ the expressions
	\begin{enumerate}
		\item \label{psd_constraint} $\block{\Psi_{ \End{\CC\lrrec{S_{n}}}{\D_1}}\lrbracket{C_i\lrbracket{\D_1}}}_{(n)}$
		\item \label{trace_constraint} $\Tr\lrrec{C_i\lrbracket{\mathcal{D}_1}}$
		\item \label{objective_function} $\Tr_{\Bar{B}_1\lrbracket{B \bar{B}}_2^n}\lrrec{C_i\lrbracket{\mathcal{D}_1}}$
		\item \label{alice_constraint_left} $\block{\Psi_{\End{\CC\lrrec{S_{t\lrbracket{\D_2}}}}{\D_2}}\lrbracket{\Theta_{A_L\rightarrow C_{A_L}}\lrbracket{C_i\lrbracket{\D_1}}}}_{(n)}$
		\item \label{alice_constraint_right} $\block{\Psi_{\End{\CC\lrrec{S_{t\lrbracket{\D_2}}}}{\D_2}}\lrbracket{W_{C_{A_L}}\otimes \Tr_{A_{L}}\lrrec{C_i\lrbracket{\D_1}} }}_{(n)}$
		\item \label{bob_constraint_left} $\block{\Psi_{\End{\CC\lrrec{S_{t\lrbracket{\D_3}}}}{\D_3}}\lrbracket{\Upsilon_{\lrbracket{B_L}_n\rightarrow C_{B_L}}\circ\Tr_{A\lrbracket{\Bar{B}}_n}\lrrec{C_i\lrbracket{\D_1}}}}_{(n-1)}$
		\item \label{bob_constraint_right} $\block{\Psi_{\End{\CC\lrrec{S_{t\lrbracket{\D_3}}}}{\D_3}}\lrbracket{K_{C_{B_L}\otimes\Tr_{A\lrbracket{B_L}_n\lrbracket{\Bar{B}}_n}\lrrec{C_i\lrbracket{\D_1}}}}}_{(n-1)}$
	\end{enumerate}
are computable in poly$(n)$ time. While we show that $\Tr\lrrec{C_i\lrbracket{\mathcal{D}_1}}$ and $\Tr_{\bar{B}_1\lrbracket{B \bar{B}}_2^n}\lrrec{C_i\lrbracket{\mathcal{D}_1}}$ can be explicitly calculated, based on \autoref{lem:efficient_basis_trafo_Bose_sym} for the other expressions it suffices\footnote{Since we only need the coefficients in the expansion to efficiently perform the transformation.} to show that for any $i\in \lrrec{m(\D_1)^2}$
		\begin{enumerate}[label=\alph*)]
		\item \label{alice_combined} $\Theta_{A_L\rightarrow C_{A_L}}\lrbracket{C_i\lrbracket{\D_1}}$ and $W_{C_{A_L}}\otimes \Tr_{A_L}\lrrec{C_i\lrbracket{\D_1}}$ can be efficiently expressed in terms of a basis to $\End{\CC\lrrec{S_{t\lrbracket{\D_2}}}}{\D_2}$,\,
		\item \label{bob_combined}  $\Upsilon_{\lrbracket{B_L}_n\rightarrow C_{B_L}}\circ\Tr_{A\Bar{B}_n}\lrrec{C_i\lrbracket{\D_1}}$ and $K_{C_{B_L}}\otimes\Tr_{A\lrbracket{B_L}_n\Bar{B}_n}\lrrec{C_i\lrbracket{\D_1}}$ can be efficiently expressed in terms of a basis to $\End{\CC\lrrec{S_{t\lrbracket{\D_3}}}}{\D_3}$.\,
	\end{enumerate}
	In other words, we show that we can efficiently obtain the coefficients in the basis decomposition such that \autoref{lem:efficient_basis_trafo_Bose_sym} applies. \autoref{lem:efficient_basis_trafo_Bose_sym} implies that $\Psi_{ \End{\CC\lrrec{S_{n}}}{\D_1}}\lrbracket{C_i\lrbracket{\D_1}}$ can be efficiently computed. Thus, the positive semidefinite constraint corresponding to (\ref{psd_constraint}) can be efficiently constructed.	Since
	\begin{align}
		\CBose{t}{t'}{n} =\sum_{\vec{i}\,:\,T\lrbracket{\vec{i}}=\vec{t}}\sum_{\vec{j}\,:\,T\lrbracket{\vec{j}}=\vec{t'}}\ket{\vec{i}}\bra{\vec{j}},\,
	\end{align}
	by construction, $\CBose{t}{t'}{n}$ has $\lrvert{T^{-1}\lrbracket{\vec{t}}}\cdot \lrvert{T^{-1}\lrbracket{\vec{t'}}}$ many non-zero entries and $\lrvert{T^{-1}\lrbracket{\vec{t}}\,\cap\,T^{-1}\lrbracket{\vec{t'}}}$ many non-zero diagonal entries. Clearly, if $\vec{t}\neq\vec{t'}$ then $T^{-1}\lrbracket{\vec{t}}\,\cap\,T^{-1}\lrbracket{\vec{t'}}=\emptyset$. Thus, $\CBose{t}{t'}{n}$ has non-zero diagonal entries if and only if $\vec{t}=\vec{t'}$.
		\begin{align}
			\lrvert{T^{-1}\lrbracket{\vec{t}}\,\cap\,T^{-1}\lrbracket{\vec{t'}}} =  \lrvert{T^{-1}(\vec{t})} = \binom{n}{\vec{t}}.\,
		\end{align}
		Due to the chosen normalization, the non-zero entries are ones, thus 
		
		\begin{align}\label{eqn:trace_of_C}
			\begin{split}
			\Tr\lrrec{\CBose{t}{t'}{n}}=
				\begin{cases}
					\binom{n}{\vec{t}} & \text{if }\vec{t} = \vec{t'},\,\\
					0 & \text{else.}
				\end{cases}
			\end{split}
		\end{align}
		Alternatively, due to linearity of the trace
		\begin{align}
		\begin{split}
			\Tr\lrrec{\CBose{t}{t'}{n}} &= \Tr\lrrec{\ket{s_{\vec{t}}}\bra{s_{\vec{t'}}}} = \sum_{\vec{a}:T\lrbracket{\vec{a}}=\vec{t}}\sum_{\vec{b}:T\lrbracket{\vec{b}}=\vec{t'}}\underbrace{\prod_{i=1}^n\Tr\lrrec{\ket{a_i}\bra{b_i}}}_{=\delta_{\vec{a}, \vec{b}}\,\Rightarrow \,\vec{t}=\vec{t'}}\\
			&= \underbrace{\sum_{\vec{a}:T\lrbracket{\vec{a}} = \vec{t}}\sum_{\vec{b}:T\lrbracket{\vec{b}}=\vec{t'}} \delta_{\vec{a}, \vec{b}}}_{\displaystyle\sum_{\vec{a}:T\lrbracket{\vec{a}}=\vec{t}} 1 = \binom{n}{\vec{t}}} = \begin{cases}
					\binom{n}{\vec{t}} & \text{if }\vec{t} = \vec{t'},\,\\
					0 & \text{else.}
				\end{cases}
		\end{split}
		\end{align}
		Thus, (\ref{trace_constraint}) is efficiently computable. Grouping the strings in $\CBose{t}{t'}{n}$ according to their first letter yields
\begin{align}
	\CBose{t}{t'}{n} = \sum_{\substack{p, q\in\lrrec{d_{\HS}}\,:\,\\ t_p\geq 1,\; t'_q\geq 1}} \ket{p}\bra{q}\otimes K^{\vee^{n-1}}_{\vec{t}-T(p),\,\vec{t'}-T(q)},
\end{align}
where $K^{\vee^{n-1}}_{\vec{l},\vec{l'}}:=\sum_{\vec{x}:T(\vec{x})=\vec{l}}\,\sum_{\vec{y}:T(\vec{y})=\vec{l'}}\ket{\vec{x}}\bra{\vec{y}}$ denotes the canonical basis of $\End{\CC\lrrec{S_{n-1}}}{\vee^{n-1}\lrbracket{\HS}}$ from \autoref{prop:first_basis_Bose}, applied at level $n-1$. Together with $\Tr\lrrec{K^{\vee^{n-1}}_{\vec{l},\vec{l'}}}=\binom{n-1}{\vec{l}}\,\delta_{\vec{l},\vec{l'}}$, which is \autoref{eqn:trace_of_C} at level $n-1$, this gives
\begin{align}\label{eqn:partial_trace_first_copy}
	\Tr_{\HS_2^n}\lrrec{\CBose{t}{t'}{n}} = \sum_{\substack{p, q\in\lrrec{d_{\HS}}\,:\, t_p\geq 1,\, t'_q\geq 1,\,\\ \vec{t}-T(p)=\vec{t'}-T(q)}} \binom{n-1}{\vec{t}-T(p)}\;\ket{p}\bra{q}.
\end{align}
The sum contains at most $d_{\HS}^2$ terms. It is empty unless either $\vec{t}=\vec{t'}$, in which case it equals the diagonal operator $\sum_{p\,:\,t_p\geq 1}\binom{n-1}{\vec{t}-T(p)}\ket{p}\bra{p}$, or $\vec{t}-\vec{t'}=T(p)-T(q)$ for a (then unique) pair $p\neq q$, in which case it consists of the single term $\binom{n-1}{\vec{t}-T(p)}\ket{p}\bra{q}$. Since the remaining trace over $\Bar{B}_1$ acts on the fixed-dimensional first copy only, (\ref{objective_function}) is efficiently computable. To prove \ref{alice_combined}, consider the following basis of $\End{\CC\lrrec{S_{t\lrbracket{\D_2}}}}{\D_2}$,
	\begin{align}
		\mathcal{B}_{\D_2}:=\lrbrace{\ket{k}\bra{l}_{C_{A_L}\otimes A_R}\otimes\CBose{t}{t'}{n}}_{\substack{k,l\in\lrrec{\lrvert{C_{A_L}A_R}},\,\\ \vec{t},\vec{t'}\in\mathcal{T}_{(n),\, \lrvert{B\Bar{B}}}}}.
	\end{align}
	Since $\Theta_{A_L\rightarrow C_{A_L}}$, $\Tr_{A_L}$ and $W_{C_{A_L}}$ act on side registers only, we have, for any $i,j\in\lrrec{\lrvert{A}}$,
	\begin{align}
	\begin{split}
		\Theta_{A_L\rightarrow C_{A_L}}\lrbracket{\ket{i}\bra{j}_{A}\otimes \CBose{t}{t'}{n}}&=\Theta_{A_L\rightarrow C_{A_L}}\lrbracket{\ket{i}\bra{j}_{A}}\otimes\CBose{t}{t'}{n},\,\\
		W_{C_{A_L}}\otimes\Tr_{A_L}\lrrec{\ket{i}\bra{j}_{A}\otimes \CBose{t}{t'}{n}}&=\lrbracket{W_{C_{A_L}}\otimes\Tr_{A_L}\lrrec{\ket{i}\bra{j}_{A}}}\otimes\CBose{t}{t'}{n},\,
	\end{split}
	\end{align}
	where $\Theta_{A_L\rightarrow C_{A_L}}\lrbracket{\ket{i}\bra{j}_{A}}$ and $W_{C_{A_L}}\otimes\Tr_{A_L}\lrrec{\ket{i}\bra{j}_{A}}$ are operators on $C_{A_L}\otimes A_R$ of fixed dimension; their coefficients over $\lrbrace{\ket{k}\bra{l}_{C_{A_L}\otimes A_R}}$ are matrix elements of the problem data, e.g.\ $\braket{k\vert\,\Theta_{A_L\rightarrow C_{A_L}}\lrbracket{\ket{i}\bra{j}_{A}}\,\vert l}$, and independent of $n$. Thus, for any $i\in \lrrec{m(\D_1)^2}$, $\Theta_{A_L\rightarrow C_{A_L}}\lrbracket{C_i\lrbracket{\D_1}}$ and $W_{C_{A_L}}\otimes \Tr_{A_L}\lrrec{C_i\lrbracket{\D_1}}$ can be efficiently written in terms of $\mathcal{B}_{\D_2}$, with at most $\lrvert{C_{A_L}A_R}^2$ nonzero coefficients each. This proves \ref{alice_combined}, which together with \autoref{lem:efficient_basis_trafo_Bose_sym} implies that (\ref{alice_constraint_left}) and (\ref{alice_constraint_right}) are computable in poly$(n)$ time. To prove \ref{bob_combined}, we use the branching identity obtained by grouping the strings in $\CBose{t}{t'}{n}$ according to their last letter (cf.\ \autoref{def:reduced_type} and the first-letter analogue above): with $\lrbrace{\KBose{l}{l'}{n-1}}_{\vec{l},\vec{l'}\in\mathcal{T}_{(n-1),\,\lrvert{B\Bar{B}}}}$ the canonical basis of $\End{\CC\lrrec{S_{n-1}}}{\vee^{n-1}\lrbracket{B\Bar{B}}}$,
	\begin{align}\label{eqn:last_letter_branching}
		\CBose{t}{t'}{n} = \sum_{\substack{w, z\in\lrrec{\lrvert{B\Bar{B}}}\,:\,\\ t_{w}\geq 1,\; t'_{z}\geq 1}} \KBose{l}{l'}{n-1}\otimes\ket{w}\bra{z}_{\lrbracket{B\Bar{B}}_n}, \qquad \vec{l}=\vec{t}-T(w),\quad \vec{l'}=\vec{t'}-T(z);
	\end{align}
	see branching rules \cite{sagan2013symmetric} for a mathematical discussion of this relation. The maps acting on the distinguished $n$-th block are of fixed dimension: with $\Lambda_1:=\lrbracket{\Upsilon_{B_L\rightarrow C_{B_L}}\otimes\mathrm{id}_{B_R}}\circ\Tr_{\Bar{B}}$ and $\Lambda_2:=\Tr_{B_L\Bar{B}}$, both acting on $\lrbracket{B\Bar{B}}_n$, the operators of \ref{bob_combined} evaluate on the canonical basis of $\End{\CSn}{\D_1}$ as
	\begin{align}
	\begin{split}
		\Upsilon_{\lrbracket{B_L}_n\rightarrow C_{B_L}}\circ\Tr_{A\Bar{B}_n}\lrrec{\ket{i}\bra{j}_{A}\otimes\CBose{t}{t'}{n}} &= \delta_{i,j}\sum_{\substack{w, z\,:\,\\ t_{w}\geq 1,\, t'_{z}\geq 1}} \KBose{l}{l'}{n-1}\otimes\Lambda_1\lrbracket{\ket{w}\bra{z}},\,\\
		K_{C_{B_L}}\otimes\Tr_{A\lrbracket{B_L}_n\Bar{B}_n}\lrrec{\ket{i}\bra{j}_{A}\otimes\CBose{t}{t'}{n}} &= \delta_{i,j}\sum_{\substack{w, z\,:\,\\ t_{w}\geq 1,\, t'_{z}\geq 1}} \KBose{l}{l'}{n-1}\otimes K_{C_{B_L}}\otimes\Lambda_2\lrbracket{\ket{w}\bra{z}}.\,
	\end{split}
	\end{align}
	Expanding $\Lambda_1\lrbracket{\ket{w}\bra{z}}$ and $K_{C_{B_L}}\otimes\Lambda_2\lrbracket{\ket{w}\bra{z}}$ over the product basis $\lrbrace{\ket{u}\bra{v}_{C_{B_L}\otimes\lrbracket{B_R}_n}}$ --- their coefficients are matrix elements of the problem data, independent of $n$ --- expresses both operators over the canonical basis of $\End{\CC\lrrec{S_{t\lrbracket{\D_3}}}}{\D_3}$,
	\begin{align}
		\mathcal{B}_{\D_3}:=\lrbrace{\KBose{l}{l'}{n-1} \otimes\ket{u}\bra{v}_{C_{B_L}\otimes\lrbracket{B_R}_n}}_{\substack{u,v\in \lrrec{\lrvert{C_{B_L}B_R}},\, \\\vec{l}, \vec{l'}\in\mathcal{T}_{(n-1),\lrvert{B\Bar{B}}}}},\,
	\end{align}
	as a sum over at most $\lrvert{B\Bar{B}}^2$ pairs $\lrbracket{w,z}$, each contributing at most $\lrvert{C_{B_L}B_R}^2$ basis elements, with the reduced types $\vec{l}, \vec{l'}$ computable per pair and coefficients of coinciding basis elements added. In particular, all coefficients can be listed in time polynomial in $n$. This proves \ref{bob_combined}, which together with \autoref{lem:efficient_basis_trafo_Bose_sym} implies that (\ref{bob_constraint_left}) and (\ref{bob_constraint_right}) are computable in poly$(n)$ time.
	\end{proof}
 
\begin{remark}[Closed-form multiplication table]\label{rem:multiplication_table_bose}
Since $\braket{s_{\vec{t'}}\vert s_{\vec{u}}}=\delta_{\vec{t'},\vec{u}}\,\binom{n}{\vec{t'}}$, the canonical basis of \autoref{sec:efficient_acces_bose_operators} multiplies according to
\begin{align}\label{eqn:multiplication_table_bose}
    \CBose{t}{t'}{n}\,\CBose{u}{u'}{n}=\delta_{\vec{t'},\vec{u}}\,\binom{n}{\vec{t'}}\,\CBose{t}{u'}{n},\,
\end{align}
so the structure constants of $\End{\CSn}{\SymH}$ are available in closed form, computable in time polynomial in $n$ for fixed $d_{\HS}$; this makes notion~(ii) of \autoref{rem:symmetry_reduction_taxonomy}, the regular $*$-representation of \cite{Klerk2007ReductionOS}, fully explicit in our setting. Moreover, \autoref{eqn:multiplication_table_bose} shows that the trace-normalized elements $\ket{\tilde{s}_{\vec{t}}}\bra{\tilde{s}_{\vec{t'}}}$ satisfy the matrix-unit relations of $\CC^{m_{(n)}\times m_{(n)}}$, recovering the identification $\End{\CSn}{\SymH}\simeq\CC^{m_{(n)}\times m_{(n)}}$ of \autoref{prop:bose_full_matrix_algebra} directly and exhibiting the regular $*$-representation as $m_{(n)}$ redundant copies of this single block.
\end{remark}
\section{Approximating quantum non-local games via SDP symmetry reductions}
\label{sec:sdp_symmetry_reduction}

Analogous to the previous section, we now employ the full Schur--Weyl decomposition of $\HSn$ into isotypic components indexed by partitions $\lambda\vdash_{d_{\HS}} n$ (see \autoref{prop:schur_weyl_duality}), under which the commutant $\End{\CSn}{\HSn}$ becomes a direct sum of full matrix algebras --- one block per partition --- rather than the single block. In the language of \autoref{rem:symmetry_reduction_taxonomy}, the reduction employed in this section is of type \, (iii): a linear cone-preserving bijection in the sense of \cite{polak2020new}. The algebra $\End{\CSn}{\lrbracket{A_2Q_2\hat{T}}_1^n}$ decomposes into many blocks, and its representative vectors are neither orthogonal nor equipped with orthonormal closed-form coefficients, so we do not construct a unital $*$-isomorphism as in \autoref{sec:explicit_star_isomorphism}; since $\mathrm{SDP}_n\lrbracket{T, V, \pi}$ is a linear SDP, preservation of the linear structure and of positive semidefiniteness suffices (cf.\ \autoref{rem:rep_theory_reading}). Note also that the decomposition below contains exactly one block per partition $\lambda$ --- the multiplicity $m_{\lambda}$ determines the block size, not a repetition count --- and the reduction map is a bijection onto the full direct sum, so no block can be omitted without losing information.

\begin{proposition}[{Block decomposition; see, e.g., \cite{vallentin2009symmetry, gijswijt2009block}}]\label{prop:block_decomposition_symmetric}
We have
\begin{align}
	\End{\CSn}{\HSn}\simeq \bigoplus_{\lambda \vdash_{d_{\HS}}\, n} \CC^{m_{\lambda} \times m_{\lambda}},
\end{align}
where the sum ranges over all partitions $\lambda$ of $n$ into at most $d_{\HS}$ parts and $m_{\lambda}=\dim_{\CC}\lrbracket{\Schurf{\lambda}\HS}$; cf.\ \autoref{prop:schur_weyl_duality}, and \autoref{prop:bose_full_matrix_algebra} for the single-block case $\lambda=(n)$.
\end{proposition}
For a canonical basis $\lrbrace{e_{i}}_{i=1}^{\dHS}$ of $\HS$ and a semistandard tableau $\tau$ of shape $\lambda$, define the polytabloid
\begin{align}
		u_{\tau}= \sum_{\tau'\sim\tau}\sum_{c\in C_{\tau}} \text{sgn}(c)\bigotimes_{y\in Y(\lambda)}e_{\tau'(c(y))},
	\end{align}
	where $C_{\tau}$ denotes the column stabilizer of $\tau$ and $\tau'\sim\tau$ ranges over the row-equivalence class of $\tau$ (see \autoref{sec:representation_theory}). For the one-row shape $\lambda=(n)$ the column stabilizer is trivial and $u_{\tau}$ reduces to the vector $\ket{s_{\vec{t}}}$ of \autoref{prop:type_basis_symmetric_subspace}.

\begin{proposition}[{cf.\ \cite{polak2020new, gijswijt2009block}}]\label{prop:iso_full_amtrix_symmetry}\label{prop:iso_full_matrix_symmetry}
Let $\lambda\vdash_{d_{\HS}} n$ and let $\mathcal{T}_{\lambda, d_{\HS}}$ denote the set of semistandard Young tableaux of shape $Y(\lambda)$ with entries from $\lrrec{d_{\HS}}$, so that $m_{\lambda}=\lrvert{\mathcal{T}_{\lambda, d_{\HS}}}$. Let
\begin{align}
	U_{\lambda} = \lrrec{u_{\tau} \,:\, \tau \in \mathcal{T}_{\lambda, d_{\HS}}} \,\in\, \mathbb{R}^{d_{\HS}^n\times m_{\lambda}}
\end{align}
be the matrix whose columns are the polytabloids $u_{\tau}$. These form a representative set in the sense of \cite{polak2020new}: each $u_{\tau}$ is the image of a primitive idempotent of $\CSn$ in a copy of the Specht module $\SpechtH$, and for any $\tau, \gamma\in \mathcal{T}_{\lambda, d_{\HS}}$ there exists a $\CSn$-module isomorphism between the respective copies mapping $u_{\tau}$ to $u_{\gamma}$. Then the map
	\begin{align}
		\begin{split}
			\psi \,:\,  \End{\CSn}{\HSn} &\rightarrow \bigoplus_{\lambda\vdash_{d_{\HS}}\, n} \CC^{m_{\lambda}\times m_{\lambda}}\\
			 Z &\mapsto \bigoplus_{\lambda\vdash_{d_{\HS}}\, n} \lrbracket{\left\langle Z u_{\tau}, u_{\gamma}\right\rangle}_{\tau, \gamma \in \mathcal{T}_{\lambda, d_{\HS}}} = \bigoplus_{\lambda\vdash_{d_{\HS}}\, n} U_{\lambda}^T ZU_{\lambda}
		\end{split}
	\end{align}
    is a bijection preserving positive semidefiniteness (notion~(iii) of \autoref{rem:symmetry_reduction_taxonomy}); see \autoref{sec:representation_theory} for details.
\end{proposition}

\begin{proposition}\label{prop:iso_sym_with_side_info}
Let $\Al$ be a Hilbert space of finite dimension $d_{\Al}$. Then, in accordance with \autoref{prop:iso_full_amtrix_symmetry} the map
    \begin{align}
    \begin{split}
			\psi \,:\,  \End{\CSn}{\Al\otimes\HSn} &\rightarrow \bigoplus_{\lambda\vdash_{d_{\HS}}\, n} \CC^{m_{\lambda}(\Al, \HS^n)\times m_{\lambda}(\Al, \HS^n)}\\
			 Z &\mapsto \bigoplus_{\lambda\vdash_{d_{\HS}}\, n} \lrbracket{\mathbb{1}_{\Al}\otimes U_{\lambda}}^T Z\lrbracket{\mathbb{1}_{\Al}\otimes U_{\lambda}}
		\end{split}
    \end{align}
    with $m_{\lambda}\lrbracket{\Al, \HS^n}:= d_{\Al}\, m_{\lambda}$, is a bijection preserving positive semidefiniteness.
\end{proposition}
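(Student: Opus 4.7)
The plan is to reduce this proposition directly to \autoref{prop:iso_full_amtrix_symmetry} by leveraging that $S_n$ acts trivially on $\mathcal{A}$. Since the representation on $\mathcal{A} \otimes \HSn$ factors as $\mathcal{I}_\mathcal{A} \otimes U_{\HSn}(\pi)$, $\CSn$-equivariance imposes no constraint on the $\mathcal{A}$-factor. A brief linearly-independent-expansion argument over $\End{\CC}{\mathcal{A}}$ then shows that the commutant factorizes as
\begin{align}
\End{\CSn}{\mathcal{A} \otimes \HSn} \;=\; \End{\CC}{\mathcal{A}} \otimes \End{\CSn}{\HSn},
\end{align}
inside $\End{\CC}{\mathcal{A} \otimes \HSn}$. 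Combined with \autoref{prop:iso_full_amtrix_symmetry}, this already yields the abstract decomposition $\End{\CSn}{\mathcal{A} \otimes \HSn} \simeq \bigoplus_\lambda \CC^{d_\mathcal{A} m_\lambda \times d_\mathcal{A} m_\lambda}$, identifying $m_\lambda(\mathcal{A}, \HSn) = d_\mathcal{A} \cdot m_\lambda$.

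Next I would verify that the map $\psi$ in the statement concretely realizes this abstract isomorphism. Fixing an orthonormal basis $\{\ket{i}_\mathcal{A}\}_{i=1}^{d_\mathcal{A}}$, every $Z \in \End{\CSn}{\mathcal{A} \otimes \HSn}$ admits a unique expansion $Z = \sum_{i,j} \ket{i}\bra{j}_\mathcal{A} \otimes Z^{ij}$ with each $Z^{ij} \in \End{\CSn}{\HSn}$ (by the factorization above), and the mixed-product property gives
\begin{align}
(\mathcal{I}_\mathcal{A} \otimes U_\lambda)^T \, Z \, (\mathcal{I}_\mathcal{A} \otimes U_\lambda) \;=\; \sum_{i,j} \ket{i}\bra{j}_\mathcal{A} \otimes \bigl(U_\lambda^T Z^{ij} U_\lambda\bigr).
\end{align}
Hence $\psi$ decomposes as $\mathrm{id}_{\End{\CC}{\mathcal{A}}} \otimes \psi_0$, where $\psi_0 : Z' \mapsto \bigoplus_\lambda U_\lambda^T Z' U_\lambda$ is the bijection of \autoref{prop:iso_full_amtrix_symmetry}. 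Bijectivity of $\psi$ then follows block-wise from that of $\psi_0$.

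Positive semidefiniteness is preserved in the forward direction by a standard congruence argument, using that the $U_\lambda$ (and hence $\mathcal{I}_\mathcal{A} \otimes U_\lambda$) are real matrices, so $(\mathcal{I}_\mathcal{A} \otimes U_\lambda)^T = (\mathcal{I}_\mathcal{A} \otimes U_\lambda)^\dagger$ and $v^\dagger \psi(Z)_\lambda v = \bigl((\mathcal{I}_\mathcal{A} \otimes U_\lambda) v\bigr)^\dagger Z \bigl((\mathcal{I}_\mathcal{A} \otimes U_\lambda) v\bigr) \geq 0$ whenever $Z \succeq 0$. The converse direction follows because $\psi_0$ is established in \autoref{prop:iso_full_amtrix_symmetry} to be a $*$-algebra isomorphism onto the block algebra, a property which tensors with $\mathrm{id}_{\End{\CC}{\mathcal{A}}}$ and therefore preserves the positive cone in both directions. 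The main subtlety---the non-orthonormality of the polytabloids comprising the columns of $U_\lambda$---is already handled at the level of \autoref{prop:iso_full_amtrix_symmetry} and introduces no additional obstacle in the present side-system setting; accordingly, I expect no serious technical difficulty beyond careful bookkeeping of the block decomposition.
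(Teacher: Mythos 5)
Your proof is correct and takes essentially the same conceptual route as the paper, though it spells the reduction out more concretely. The paper's own proof (in the appendix on ``Additional side information'') is extremely terse: it treats $\Al$ (and a second side system) as a module over the trivial group $G=\lrbrace{e}$, notes that the representative matrix set for this trivial action is just $\mathcal{I}_{\Al}$, and then defers to the general representative-matrix-set machinery of Polak and Gijswijt (via \autoref{prop:polak_iso_gen}) to conclude that $\lrbrace{\mathcal{I}_{\Al}\otimes U_{\lambda}}_{\lambda}$ is a representative matrix set for the $\CSn$-action on $\Al\otimes\HSn$, so the bijection and PSD preservation follow from that framework. Your version instead explicitly factorizes the commutant as $\End{\CSn}{\Al\otimes\HSn}=\End{\CC}{\Al}\otimes\End{\CSn}{\HSn}$ and realizes $\psi$ as $\mathrm{id}_{\End{\CC}{\Al}}\otimes\psi_0$; this is the same mathematical fact written out at a lower level of abstraction, and is arguably more self-contained than the paper's citation chain.

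One small imprecision: you justify the converse PSD direction by asserting that \autoref{prop:iso_full_amtrix_symmetry} establishes $\psi_0$ as a $*$-algebra isomorphism and that this ``tensors.'' The paper only states that $\psi_0$ is a bijection preserving positive semidefiniteness, which is a weaker formal claim (multiplicativity of $Z\mapsto U_\lambda^T Z U_\lambda$ depends on normalizations of the $u_\tau$ that the paper never pins down). Your conclusion is nonetheless correct: what you actually need for the converse is that the vectors $\ket{i}_{\Al}\otimes u_\tau$ form a representative set for $\CSn$ on $\Al\otimes\HSn$, so that positivity of $Z$ on this set, together with $\CSn$-equivariance, implies $Z\succcurlyeq 0$---exactly the two-sided PSD statement already furnished by the cited proposition, now with the side system absorbed into the ``multiplicity'' direction. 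It would be cleaner to invoke that directly rather than the $*$-isomorphism language, but this is a cosmetic issue, not a gap.
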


See \autoref{sec:additional_side_info} for a proof with additional quantum side information. 
\begin{proposition}\label{prop:bounding_our_case}
 For $\ESN\lrbracket{\Al\otimes \HS^n}$ in \autoref{prop:iso_sym_with_side_info}, the following bounds hold for the bijection:
\begin{align}\label{eqn:first_bounds_2}
         & m_{\lambda}(\Al, \HS^n) \leq d_{\Al}(n+1)^{\frac{d_{\HS}\lrbracket{d_{\HS}-1}}{2}},\,\forall \lambda\in \text{Par}\,(d_{\HS}, n),\, 
    \end{align}
    such that 
    \begin{align}
    \begin{split}
        m\lrbracket{\Al, \HS^n} &:= \dim\left[\ESN\lrbracket{\Al\otimes \HS^n}\right]\leq d_{\Al}^2(n+1)^{d_{\HS}^2}
    \end{split}
    \end{align}
\end{proposition}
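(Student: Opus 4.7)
The plan is to derive both bounds directly from Schur-Weyl duality together with the Weyl dimension formula and the elementary partition count recalled in \autoref{prop:dimension_bound_sym_subspace}. First, I would invoke Schur-Weyl duality to write $\HSn \simeq \bigoplus_{\lambda\vdash_{d_{\HS}} n}\Schurf{\lambda}\HS\otimes \SpechtH$ as an $S_n\times\GLH$-bimodule, so that, viewed purely as a $\CSn$-module, $\HSn \simeq \bigoplus_{\lambda\vdash_{d_{\HS}} n} \SpechtH^{\oplus \dim_{\CC}(\Schurf{\lambda}\HS)}$. Because $\CSn$ acts trivially on the side system $\Al$, tensoring multiplies the multiplicity of each $\SpechtH$ by $d_{\Al}$, giving the identification $m_{\lambda}(\Al,\HSn)=d_{\Al}\cdot\dim_{\CC}(\Schurf{\lambda}\HS)$. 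This reduces the first bound to bounding the dimension of a Weyl module associated to a partition of $n$ of height at most $d_{\HS}$.

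Next, I would apply the Weyl dimension formula
\begin{align*}
\dim_{\CC}(\Schurf{\lambda}\HS)=\prod_{1\leq i<j\leq d_{\HS}}\frac{\lambda_i-\lambda_j+j-i}{j-i},
\end{align*}
and observe that each of the $\binom{d_{\HS}}{2}=\tfrac{d_{\HS}(d_{\HS}-1)}{2}$ factors is a positive integer bounded by $\lambda_i+j-i\leq n+1$, since $\lambda_i\leq n$ for any partition of $n$. Multiplying through and folding in the factor $d_{\Al}$ yields the advertised bound $m_{\lambda}(\Al,\HSn)\leq d_{\Al}(n+1)^{d_{\HS}(d_{\HS}-1)/2}$.

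For the total dimension, I would use the $*$-isomorphism from \autoref{prop:iso_sym_with_side_info}, which gives $m(\Al,\HSn)=\sum_{\lambda\vdash_{d_{\HS}} n} m_{\lambda}(\Al,\HSn)^2$. Combining the per-block bound from the previous step with the partition-count estimate $|\mathrm{Par}(d_{\HS},n)|\leq (n+1)^{d_{\HS}}$ from \autoref{prop:dimension_bound_sym_subspace} gives
\begin{align*}
m(\Al,\HSn)\leq (n+1)^{d_{\HS}}\cdot d_{\Al}^{2}(n+1)^{d_{\HS}(d_{\HS}-1)}=d_{\Al}^{2}(n+1)^{d_{\HS}^{2}},
\end{align*}
where the exponents combine via $d_{\HS}+d_{\HS}(d_{\HS}-1)=d_{\HS}^{2}$.

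There is no genuine obstacle here: the argument is essentially bookkeeping. The only place that requires mild care is verifying that each Weyl factor is genuinely bounded by $n+1$ (not by $2n$ or similar), which relies on the cancellation between $\lambda_i-\lambda_j$ and $j-i$ in the numerator; in particular one should note that the product is an integer, so one can alternatively bound $\dim_{\CC}(\Schurf{\lambda}\HS)$ by counting semistandard Young tableaux of shape $\lambda$ with entries in $[d_{\HS}]$, of which there are at most $(n+1)^{d_{\HS}(d_{\HS}-1)/2}$ by specifying each row's content with at most $d_{\HS}-1$ integers in $\{0,\ldots,n\}$.
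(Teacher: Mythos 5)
Your proof is essentially the paper's argument: decompose $\End{\CSn}(\Al\otimes\HSn)$ into blocks indexed by partitions of height at most $d_{\HS}$, bound the number of partitions by $(n+1)^{d_{\HS}}$, bound each block dimension $m_{\lambda}(\Al,\HSn)=d_{\Al}\dim_{\CC}(\Schurf{\lambda}\HS)$ by $d_{\Al}(n+1)^{d_{\HS}(d_{\HS}-1)/2}$, and sum. The paper bounds $\dim_{\CC}(\Schurf{\lambda}\HS)=|\mathcal{T}_{\lambda,d_{\HS}}|$ by the SSYT-counting argument you offer as an alternative, whereas you lead with the Weyl dimension formula; both compute the same quantity, so this is the same proof in two guises.

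One small imprecision in your primary route is worth flagging. The individual factors $\frac{\lambda_i-\lambda_j+j-i}{j-i}$ in the Weyl dimension formula are not in general integers (only the full product is), and the justification you state, $\lambda_i + j - i \leq n+1$, is false when $\lambda_i$ is close to $n$ and $j-i\geq 2$. The correct per-factor bound is
\begin{align*}
\frac{\lambda_i-\lambda_j+j-i}{j-i} \;=\; 1 + \frac{\lambda_i-\lambda_j}{j-i} \;\leq\; 1 + (\lambda_i-\lambda_j) \;\leq\; n+1,
\end{align*}
using $j-i\geq 1$ and $0\leq\lambda_i-\lambda_j\leq n$. With that fix the Weyl-formula route is sound, and your SSYT-counting alternative (specifying, for each row $i$, the $d_{\HS}-i$ free content counts in $\{0,\dots,n\}$) matches the paper's reasoning exactly.
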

\begin{proof}
    By \autoref{prop:iso_sym_with_side_info}, $\ESN\lrbracket{\Al\otimes \HS^n}\simeq\bigoplus_{\lambda\vdash_{d_{\HS}} n}\CC^{m_{\lambda}(\Al, \HS^n)\times m_{\lambda}(\Al, \HS^n)}$ with
    \begin{align}
         m_{\lambda}(\Al, \HS^n) = d_{\Al}\lrvert{\mathcal{T}_{\lambda, d_{\HS}}} \leq d_{\Al}(n+1)^{\frac{d_{\HS}\lrbracket{d_{\HS}-1}}{2}}
    \end{align}
    by Weyl's dimension formula (cf.\ the notation paragraph of \autoref{sec:symmetric_subspace_methods}); this proves \autoref{eqn:first_bounds_2}. The number of summands is
    \begin{align}\label{eqn:first_bounds}
        \begin{split}
            &\lrvert{\text{Par}\,(d_{\HS}, n)}\leq (n+1)^{d_{\HS}}.
        \end{split}
    \end{align}
    Since the dimension of a direct sum of matrix algebras is the sum of the squared block sizes,
    \begin{align}
        \begin{split}
            m\lrbracket{\Al, \HS^n} &= \sum_{\lambda \vdash_{d_{\HS}} n} m_{\lambda}\lrbracket{\Al, \HS^n}^2 \leq \lrvert{\text{Par}\,(d_{\HS}, n)}\cdot\max_{\lambda\vdash_{d_{\HS}} n} m_{\lambda}\lrbracket{\Al, \HS^n}^2\\
            &\stackrel{\autoref{eqn:first_bounds}}{\leq} (n+1)^{d_{\HS}}\, d_{\Al}^2\lrrec{(n+1)^{\frac{d_{\HS}\lrbracket{d_{\HS}-1}}{2}}}^2=d_{\Al}^2(n+1)^{d_{\HS}^2}.\,
        \end{split}
    \end{align}
\end{proof}

Recall the SDP hierarchy in \autoref{sec:non_local_games_as_cbos} approximating fixed size free non-local games from above. We begin by demonstrating that the analysis can be restricted to the corresponding invariant algebras. Specifically, the following lemma establishes that the $S_n$-symmetry is preserved under the given constraints. Subsequently, we apply \autoref{prop:iso_sym_with_side_info} to obtain a direct sum decomposition into full matrix algebras.

\begin{lemma}[Sufficiency in restricting optimization of $\mathrm{SDP}_n(T, V,\pi)$  to a dimension bounded invariant algebra]\label{lem:restriction_to_invariant_subspace}
    Let $n\in\mathbb{N}_{\geq 1}$. If \begin{align}
        \rho_{(A_1Q_1T)(A_2Q_2\hat{T})_1^n}\in\ESN\lrbracket{(A_1Q_1T)\otimes (A_2Q_2\hat{T})_1^n},\,
    \end{align} then,
    \begin{align}\label{eqn:sym_subspace_1}
    \begin{split}
        \Tr_{A_1}\left[\rho_{(A_1Q_1T)(A_2Q_2\hat{T})_1^n}\right] \in  \ESN\lrbracket{(Q_1T)\otimes (A_2Q_2\hat{T})_1^n},\,\\
        \sum_{q_1}\pi_1(q_1)\ket{q_1}\bra{q_1}_{Q_1} \otimes \Tr_{(A_1Q_1)}\left[\rho_{(A_1Q_1T)(A_2Q_2\hat{T})_1^n}\right]&\in \ESN\lrbracket{(Q_1T)\otimes (A_2Q_2\hat{T})_1^n}
    \end{split}
    \end{align}
    together with 
    \begin{align}\label{eqn:sym_subspace_2}
    \begin{split}
        &\Tr_{(A_2)_1}\left[\rho_{(A_2Q_2\hat{T})_1^n}\right] \in  \text{End}_{\CC\lrrec{S_{n-1}}\,}\lrbracket{(Q_2\hat{T})_1\otimes(A_2Q_2\hat{T})_2^n},\,\\
        &\left(\sum_{q_2}\pi_2(q_2)\ket{q_2}\bra{q_2}_{Q_2} \otimes\frac{\mathbb{1}_{\hat{T}}}{\lvert T\rvert}\right)_{\!(Q_2\hat{T})_1}\otimes \Tr_{(A_2Q_2\hat{T})_1}\left[ \rho_{(A_2Q_2\hat{T})_1^n}\right] \in\text{End}_{\CC\lrrec{S_{n-1}}\,}\lrbracket{(Q_2\hat{T})_1\otimes(A_2Q_2\hat{T})_2^n},\,
    \end{split}
    \end{align}
    where $\rho_{(A_2Q_2\hat{T})_1^n}:=\Tr_{A_1Q_1T}\lrrec{\rho_{(A_1Q_1T)(A_2Q_2\hat{T})_1^n}}$. Furthermore,  
        \begin{align}
            \dim\lrbracket{\ESN\lrbracket{(A_1Q_1T)\otimes(A_2Q_2\hat{T})_1^n}}\leq \lrvert{AQT}^2(n+1)^{\lrvert{AQT}^2}.\,
        \end{align}
\end{lemma}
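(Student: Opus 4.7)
The plan is to verify invariance of each expression under the relevant symmetric group action and then invoke \autoref{prop:bounding_our_case} for the dimension bound. The argument is essentially bookkeeping; the only subtlety is to correctly identify the residual symmetry after tracing out a subfactor of a single copy.

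For the two statements in \autoref{eqn:sym_subspace_1}, the key observation is that the $S_n$-action on $\lrbracket{A_2Q_2\hat{T}}_1^n$ acts trivially on the side register $(A_1Q_1T)$, so for every $\pi\in S_n$ the unitary representation factorizes as $\mathcal{I}_{A_1Q_1T}\otimes U_{\lrbracket{A_2Q_2\hat{T}}_1^n}(\pi)$ and therefore commutes with the local partial traces $\Tr_{A_1}$ and $\Tr_{A_1Q_1}$. Consequently $\Tr_{A_1}\lrrec{\rho_{(A_1Q_1T)(A_2Q_2\hat{T})_1^n}}$ inherits $S_n$-invariance on the $n$ Bob copies. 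The second expression in \autoref{eqn:sym_subspace_1} is the tensor product of a fixed classical operator on $Q_1$, which is untouched by the $S_n$-action, with the $S_n$-invariant operator $\Tr_{(A_1Q_1)}\lrrec{\rho_{(A_1Q_1T)(A_2Q_2\hat{T})_1^n}}$; this product is manifestly $S_n$-invariant.

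For \autoref{eqn:sym_subspace_2}, tracing the first copy's $A_2$-factor distinguishes copy $1$ from copies $2,\ldots,n$, breaking the full $S_n$-symmetry down to the subgroup $S_{n-1}\subset S_n$ acting on copies $2,\ldots,n$. This residual symmetry is preserved because for every $\pi\in S_{n-1}$ the representation acts as $\mathcal{I}_{(A_2Q_2\hat{T})_1}\otimes U_{\lrbracket{A_2Q_2\hat{T}}_2^n}(\pi)$ and thus commutes with $\Tr_{(A_2)_1}$. Hence $\Tr_{(A_2)_1}\lrrec{\rho_{(A_2Q_2\hat{T})_1^n}}$ lies in $\End{\CC\lrrec{S_{n-1}}}{\lrbracket{A_2Q_2\hat{T}}_2^n\otimes\lrbracket{Q_2\hat{T}}_1}$. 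The second expression in \autoref{eqn:sym_subspace_2} combines the fixed classical-quantum operator on $\lrbracket{Q_2\hat{T}}_1$ (together with the untouched side register $(A_1Q_1T)$) with the $S_{n-1}$-invariant reduced state on copies $2,\ldots,n$, and therefore lies in the desired $S_{n-1}$-invariant algebra.

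Finally, the dimension bound follows directly from \autoref{prop:bounding_our_case} applied with side register $\Al=(A_1Q_1T)$ and $\HS=(A_2Q_2\hat{T})$. Since $\lrvert{A_1}=\lrvert{A_2}=\lrvert{A}$, $\lrvert{Q_1}=\lrvert{Q_2}=\lrvert{Q}$, and $\lrvert{T}=\lrvert{\hat{T}}$, both Hilbert spaces have dimension $\lrvert{AQT}$, so that $m\lrbracket{\Al,\HS^n}\leq d_{\Al}^2(n+1)^{d_{\HS}^2}=\lrvert{AQT}^2(n+1)^{\lrvert{AQT}^2}$.
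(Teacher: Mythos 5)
Your proof is correct and uses essentially the same idea as the paper: the permutation unitaries act as identity on the factors being traced out, so partial trace commutes with the group action, and tracing out a subfactor of one copy reduces the full $S_n$-symmetry to the residual $S_{n-1}$-symmetry on the remaining copies. The paper just executes this reasoning via explicit basis expansions of the state, whereas you phrase it abstractly as a commutativity observation, which is more economical but substantively the same argument.
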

\begin{proof}
    See \autoref{sec:restriction_to_invariant_subspace}.
\end{proof}

\begin{remark}
   It is immediate that the constraints  \begin{align}\label{eqn:sym_trace_pos_con}
        \begin{array}{ccc}
            \Trr{}{\rhoge}=1,\, & & \rhoge \succcurlyeq 0,\,
        \end{array} 
\end{align} can be formulated in terms of any basis of $\ESN\lrbracket{(A_1Q_1T)\otimes (A_2Q_2\hat{T})_1^n}$ --- the trace is linear in the coefficients, and positive semidefiniteness is imposed block-wise via \autoref{prop:iso_sym_with_side_info}.
\end{remark}

\begin{remark}
    Note that it does not matter which of the $(A_2Q_2\hat{T})_1^n$ systems is traced out and that the ordering in the tensor product is also irrelevant. Thus, \autoref{prop:iso_sym_with_side_info} can be applied to the spaces in \autoref{lem:restriction_to_invariant_subspace}. 
\end{remark}


\subsection{The block decomposition}\label{sec:sym_block_decomposition} 

We prove \autoref{thm:SDP_sym_complexity_results} in two steps. First, we demonstrate that determining the optimal winning probability of a fixed-size free two-player non-local game up to an additive error can be formulated as a collection of semidefinite programs with sizes specified in \autoref{thm:SDP_sym_complexity_results}. Then, we establish that this formulation can be constructed efficiently.

\begin{lemma}\label{lem:help_lemma_sym_SDP}
    There exists an SDP hierarchy indexed by $n$ approximating the value of a two-player free non-local game with $\lrvert{A}$ answers, $\lrvert{Q}$ questions and quantum assistance of size $\lrvert{T}$ up to an additive error with $\text{poly}(n)$-many variables, $\text{poly}(n)$-many constraints and matrices of size at most $\text{poly}(n)$ as given in \autoref{thm:SDP_sym_complexity_results}. 
\end{lemma}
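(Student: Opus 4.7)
The plan is to combine three ingredients already established in the paper: the symmetry-based SDP hierarchy $SDP_n(T,V,\pi)$ from \autoref{eqn:sdp_free_game_definetti_1}--\autoref{eqn:sdp_free_game_definetti_2}, the invariance of all relevant constraints shown in \autoref{lem:restriction_to_invariant_subspace}, and the block diagonalization provided by \autoref{prop:iso_sym_with_side_info}. I would begin from the observation that $SDP_n(T,V,\pi)$ is already a maximization over states which are symmetric on the $n$ Bob-copies with respect to $(A_1Q_1T)$. Hence the variable $\rhoge$ ranges over $\End{\CSn}{(A_1Q_1T)\otimes(A_2Q_2\hat{T})_1^n}$, and by \autoref{lem:restriction_to_invariant_subspace} the partial-trace marginals that appear in the non-signaling and fixed-prior constraints stay inside the corresponding $S_n$-invariant subalgebras on the reduced systems.

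Next, I would apply the $*$-isomorphism $\psi$ of \autoref{prop:iso_sym_with_side_info} separately to each invariant algebra that appears in the SDP. This turns each operator-valued unknown into a direct sum $\bigoplus_{\lambda\vdash_{\lrvert{AQT}}n}\mathbb{C}^{m_\lambda\times m_\lambda}$ that preserves positive semi-definiteness. The normalization $\Trr{}{\rhoge}=1$ and $\rhoge\succcurlyeq 0$ become a scalar trace equation and a block-wise PSD condition, respectively, while the linear constraints of \autoref{eqn:sdp_free_game_definetti_2} are $S_n$-equivariant and therefore descend to affine relations between the block-diagonal coordinates of their image and preimage, with coefficients fixed once the intertwiners $U_\lambda$ are chosen and the tensor products with $\sum_{q}\pi(q)\ket{q}\bra{q}$ are expanded.

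Collecting the counts then closes the lemma. By \autoref{lem:restriction_to_invariant_subspace}, the number of free scalar variables is at most $\lrvert{AQT}^2(n+1)^{\lrvert{AQT}^2}$; by \autoref{prop:bounding_our_case}, the number of PSD blocks is $\lrvert{\mathrm{Par}(\lrvert{AQT},n)}\leq (n+1)^{\lrvert{AQT}}$, and each block has side length $m_\lambda\leq \lrvert{AQT}(n+1)^{\lrvert{AQT}(\lrvert{AQT}-1)/2}$. Choosing $n=\Theta\bigl(\lrvert{T}^6\log\lrvert{AQT}/\epsilon^2\bigr)$ as dictated by the constrained de Finetti bound in \autoref{lem:convergence_non_local_games_as_cbo_results} yields exactly the $\epsilon$-dependent expressions announced in \autoref{thm:SDP_sym_complexity_results}.

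The main obstacle is the verification step in the second paragraph, namely that every constraint---not merely positivity and trace---descends cleanly to the block-decomposed variables without producing exponentially many scalar coefficients. This is where the compatibility between domain and codomain of every constraint map, guaranteed by \autoref{lem:restriction_to_invariant_subspace}, is crucial. The partial traces $\Trr{A_1}{\cdot}$ and $\Trr{(A_2)_1}{\cdot}$ send one invariant algebra into another, and the classical tensor factors only touch the non-permuted systems, so their action can be read off from the isotypic decompositions on both sides once the $U_\lambda$ are specified. Efficient realizability of this transformation is not claimed here and is postponed to the subsequent subsection.
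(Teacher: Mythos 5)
Your proposal mirrors the paper's own argument: start from $SDP_n(T,V,\pi)$, invoke \autoref{lem:restriction_to_invariant_subspace} to confine every quantity to the appropriate $S_n$- (or $S_{n-1}$-) invariant subalgebra, apply the positive-semidefiniteness-preserving $*$-isomorphism of \autoref{prop:iso_sym_with_side_info} on each such subalgebra, and then read off the variable, constraint, and block-size counts from \autoref{prop:bounding_our_case}, finally substituting the de Finetti level $n = \Theta(\lrvert{T}^6\log\lrvert{AQT}/\epsilon^2)$. You also correctly separate the "polynomial-size representation exists" claim from the "it can be computed efficiently" claim, deferring the latter exactly as the paper does.
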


\begin{proof}\label{proof:efficient_sdp}
As in \cite{chee2023efficient}, we introduce notation that avoids redundancy between conceptually related objects. Firstly, let 
\begin{align}
    \mathcal{D}:=\lrbrace{A_1Q_1T\otimes\lrbracket{A_2Q_2\hat{T}}_1^n,\,\, Q_1T\otimes\lrbracket{A_2Q_2\hat{T}}_1^n,\,\, (Q_2\hat{T})_1\otimes\lrbracket{A_2Q_2\hat{T}}_2^n}
\end{align}
be the set collecting the Hilbert spaces of relevance to our setting. Next, define a function 
\begin{align}
\begin{split}
    t \, : \, \mathcal{D} &\,\rightarrow \, \mathbb{N}\\
    \mathcal{D}_k &\,\mapsto\, \begin{cases}
    \begin{array}{cc}
         n-1 &  \text{ for } k=3,\,\\
         n & \text{ otherwise. }
    \end{array} 
    \end{cases}
\end{split}
\end{align}
Furthermore, let
\begin{align}
    \begin{split}
    \mathbb{1}_k :=\begin{cases}
    \begin{array}{cc}
         \mathbb{1}_{A_1Q_1T} & \text{ for }  k=1,\,\\
         \mathbb{1}_{Q_1T} & \text{ for } k=2,\,\\
         \mathbb{1}_{\lrbracket{Q_2\hat{T}}_1} & \text{ for } k=3
    \end{array} 
    \end{cases}
\end{split}
\end{align}
denote the corresponding identity matrices. To avoid notational issues we will denote $\text{End}_{\CC\lrrec{S_{t\lrbracket{\mathcal{D}_k}}}}(\cdot)$ by $\text{End}^{S_{t\lrbracket{\mathcal{D}_k}}}(\cdot)$. In a similar spirit to \autoref{prop:iso_sym_with_side_info}, for $k\in\lrrec{3}$ let
\begin{align}\label{eqn:our_bijection_applied}
\begin{split}
    \Psi_{\lrbracket{\cdot}} \, : \; \text{End}^{S_{t\lrbracket{\mathcal{D}_k}}}\lrbracket{\mathcal{D}_k} &\,\rightarrow \, \bigoplus_{\lambda \in \text{Par}\lrbracket{\lrvert{A_2Q_2\hat{T}}, t\lrbracket{\mathcal{D}_k}}} \mathbb{C}^{m_{\lambda}\lrbracket{\mathcal{D}_k}\times m_{\lambda}\lrbracket{\mathcal{D}_k}}\\
    Z &\,\mapsto\, \bigoplus_{\lambda \in \text{Par}\lrbracket{\lrvert{A_2Q_2\hat{T}}, t\lrbracket{\mathcal{D}_k}}}\lrbracket{\mathbb{1}_k\otimes U^T_{\lambda}}Z\lrbracket{\mathbb{1}_k\otimes U_{\lambda}}
\end{split}
\end{align}
denote the bijection obtained by applying \autoref{prop:iso_sym_with_side_info} to $\mathcal{D}_k$; it preserves positive semidefiniteness, and $m_{\lambda}\lrbracket{\mathcal{D}_k}$ is the quantity $m_{\lambda}\lrbracket{\Al, \HS^{t\lrbracket{\mathcal{D}_k}}}$ of \autoref{prop:iso_sym_with_side_info} for the respective side system. We write $\left[\!\left[\Psi_{\text{End}^{S_{t\lrbracket{\mathcal{D}_k}}}(\mathcal{D}_k)}(Z)\right]\!\right]_{\lambda}$ for the block of $\Psi_{\lrbracket{\cdot}}(Z)$ labeled by $\lambda$, and set
\begin{align}
    \begin{split}
       m(\mathcal{D}_k):= \sum_{\lambda \in \text{Par}\lrbracket{\lrvert{A_2Q_2\hat{T}}, t\lrbracket{\mathcal{D}_k}}} m^2_{\lambda}\lrbracket{\mathcal{D}_k} = \dim_{\CC}\lrbracket{\text{End}^{S_{t\lrbracket{\mathcal{D}_k}}}\lrbracket{\mathcal{D}_k}} 
        \stackrel{\autoref{prop:bounding_our_case}}{\leq} \lrvert{A_1Q_1T}^2(t\lrbracket{\mathcal{D}_k}+1)^{\lrvert{A_2Q_2\hat{T}}^2}.\,
    \end{split}
\end{align}
Correspondingly, let $\mathcal{C}\lrbracket{\mathcal{D}_k}= \lrbrace{C_i\lrbracket{\D_k}}_{i=1}^{m(\mathcal{D}_k)}$ denote the canonical basis\footnote{See \autoref{sec:proof_of_efficient_trafo_sdp_sym}.} of $\text{End}^{S_{t\lrbracket{\mathcal{D}_k}}}(\mathcal{D}_k)$, obtained from a canonical basis of $\mathcal{D}_k$ by averaging over the $S_{t\lrbracket{\mathcal{D}_k}}$-orbits.

With \autoref{lem:restriction_to_invariant_subspace} we can write $\mathrm{SDP}_{(n)}\lrbracket{T, V, \pi}$ from \autoref{sec:non_local_games_as_cbos} in terms of a direct sum of full matrix algebras by applying the bijection above. We express  $\mathrm{SDP}_{(n)}\lrbracket{T, V, \pi}$ in terms of the canonical bases of $\text{End}^{S_{t\lrbracket{\mathcal{D}_k}}}\lrbracket{\mathcal{D}_k}$, i.e.\ for any $Z\in \text{End}^{S_{t\lrbracket{\mathcal{D}_k}}}\lrbracket{\mathcal{D}_k}$ there exist $m\lrbracket{\D_k}$ coefficients $x_i\in\mathbb{C}$ such that
\begin{align}
    Z = \sum_{i=1}^{m\lrbracket{\D_k}}x_i C_i\lrbracket{\D_k},\,
\end{align}
and then apply the bijection given in \autoref{eqn:our_bijection_applied}. Note that due to the linearity of $\Psi_{(\cdot)}$ we can write for any $k\in [3]$
\begin{align}
    \block{\Psi_{\text{End}^{S_{t\lrbracket{\mathcal{D}_k}}}(\mathcal{D}_k)}\lrbracket{\sum_{i=1}^{m\lrbracket{\D_k}}x_i\cdot C_i\lrbracket{\D_k}}}_{\lambda} = \sum_{i=1}^{m\lrbracket{\D_k}}x_i \block{\Psi_{\text{End}^{S_{t\lrbracket{\mathcal{D}_k}}}(\mathcal{D}_k)}\lrbracket{C_i\lrbracket{\D_k}}}_{\lambda}.\,
\end{align}
Thus, while additionally exploiting the invariance of the trace under a basis transformation we arrive at 
\begin{align}\label{eqn:efficient_sdp_free_game}
\begin{split}
    & \Psi\lrbracket{\mathrm{SDP}_{(n)}\lrbracket{T, V, \pi}} = \lrvert{T} \max_{\lrbrace{x_i}_{i=1}^{m\lrbracket{\D_1}}} \sum_{i=1}^{m(\mathcal{D}_1)} x_i\cdot \Tr\lrrec{\left(V_{A_1 A_2Q_1 Q_2}\otimes S_{T\hat{T}}\right) C_i\lrbracket{\D_1}_{(A_1Q_1T)(A_2Q_2\hat{T})_1}}\\
    \text{s.t. }\hspace{0.5cm}
    &C_i\lrbracket{\D_1}_{(A_1Q_1T)(A_2Q_2\hat{T})_1} := \Trr{(A_2Q_2\hat{T})_2^n}{C_i\lrbracket{\D_1}},\;\; i\in\lrrec{m\lrbracket{\D_1}},\,\\
    &\sum_{i=1}^{m\lrbracket{\D_1}} x_i\cdot \left[\!\left[\Psi_{\text{End}^{S_{t\lrbracket{\mathcal{D}_1}}}(\mathcal{D}_1)}\lrbracket{C_i\lrbracket{\D_1}}\right]\!\right]_{\lambda} \succcurlyeq 0,\,\hspace{0.5cm} \forall \lambda\in\text{Par}\lrbracket{\lrvert{A_2Q_2\hat{T}}, n},\,\\
    &\sum_{i=1}^{m\lrbracket{\D_1}} x_i \cdot\Tr\lrrec{C_i\lrbracket{\D_1}} = 1,\,\\
    & \sum_{i=1}^{m\lrbracket{\D_1}} x_i\cdot \left[\!\left[\Psi_{\text{End}^{S_{t\lrbracket{\mathcal{D}_2}}}(\mathcal{D}_2)}\lrbracket{\Trr{A_1}{C_i\lrbracket{\D_1}}}\right]\!\right]_{\lambda}\\ 
    &\hspace{1cm}=\sum_{i=1}^{m\lrbracket{\D_1}} x_i\cdot \left[\!\left[\Psi_{\text{End}^{S_{t\lrbracket{\mathcal{D}_2}}}(\mathcal{D}_2)}\lrbracket{\sum_{q_1}\pi_1(q_1)\ket{q_1}\bra{q_1}_{Q_1} \otimes\Trr{A_1Q_1}{C_i\lrbracket{\D_1}}}\right]\!\right]_{\lambda},\,\\
    &\hspace{2cm}\forall \lambda\in\text{Par}\lrbracket{\lrvert{A_2Q_2\hat{T}}, n},\,\\
    \\
    & \sum_{i=1}^{m\lrbracket{\D_1}} x_i\cdot \left[\!\left[\Psi_{\text{End}^{S_{t\lrbracket{\mathcal{D}_3}}}(\mathcal{D}_3)}\lrbracket{\Trr{(A_1Q_1T)(A_2)_1}{C_i\lrbracket{\D_1}}}\right]\!\right]_{\lambda}\\ 
    &\hspace{1cm}=\sum_{i=1}^{m\lrbracket{\D_1}} x_i\cdot \left[\!\left[\Psi_{\text{End}^{S_{t\lrbracket{\mathcal{D}_3}}}(\mathcal{D}_3)}\lrbracket{\sum_{q_2}\pi_2(q_2)\ket{q_2}\bra{q_2}_{Q_2}\otimes\frac{\Id_{\hat{T}}}{\lrvert{T}}\otimes\Trr{(A_1Q_1T)(A_2Q_2\hat{T})_1}{C_i\lrbracket{\D_1}}}\right]\!\right]_{\lambda},\,\\
    &\hspace{2cm}\forall \lambda\in\text{Par}\lrbracket{\lrvert{A_2Q_2\hat{T}}, n-1},\,\\
\end{split}
\end{align}
where we optimize over 
\begin{align}
    m\lrbracket{\D_1}\stackrel{\autoref{prop:bounding_our_case}}{\leq} \lrvert{A_1Q_1T}^2(n+1)^{\lrvert{A_2Q_2\hat{T}}^2}
\end{align}
many variables and have specified  
\begin{align}
    2\,\lrvert{\text{Par}\lrbracket{\lrvert{A_2Q_2\hat{T}}, n}} + \lrvert{\text{Par}\lrbracket{\lrvert{A_2Q_2\hat{T}}, n-1}} + 1
\end{align}
many constraints, of which there are 
\begin{align}
    \begin{split}
       \lrvert{\text{Par}\lrbracket{\lrvert{A_2Q_2\hat{T}}, n}} \stackrel{\autoref{eqn:first_bounds}}{\leq} (n+1)^{\lrvert{A_2Q_2\hat{T}}}
    \end{split}
\end{align}
many PSD constraints. Furthermore, the Hermitian\footnote{Positive semidefinite matrices are Hermitian; with the real representative vectors of \autoref{prop:iso_full_matrix_symmetry} the constraint matrices are moreover real symmetric.} matrices involved are at most of the size $ m_{\lambda}\lrbracket{\D_1} \times m_{\lambda}\lrbracket{\D_1}$, with
\begin{align}
    m_{\lambda}\lrbracket{\D_1} \stackrel{\autoref{prop:bounding_our_case}}{\leq} \lrbracket{\lrvert{A_1Q_1T}(n+1)^{\lrbracket{\frac{\lrvert{A_2Q_2\hat{T}}\lrbracket{\lrvert{A_2Q_2\hat{T}}-1}}{2}}}}.\,
\end{align}
Setting $\lrvert{A_1}=\lrvert{A_2}$, $\lrvert{Q_1}=\lrvert{Q_2}$ and $\lrvert{T}=\lrvert{\hat{T}}$ concludes the proof.   
\end{proof}

Lastly, directly following the arguments in \cite{chee2023efficient}, we prove that the computation of the reduced problem, that is, determining the change-of-basis matrices and coefficients, can be done efficiently w.r.t.\ $n$ when the dimensions of the systems involved are fixed. 

\begin{lemma}\label{lem:efficiency_of_trafo_sdp_sym}
The transformation in \autoref{lem:help_lemma_sym_SDP} yielding \autoref{thm:SDP_sym_complexity_results} can be computed efficiently with respect to $n$.
\end{lemma}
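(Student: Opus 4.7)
The plan is to adapt the Bose-symmetric complexity analysis from \autoref{lem:efficiency_Bose_trafo} to the full symmetry reduction, tracking how the richer branching structure across all partitions $\lambda \vdash_{d_{\HS}} n$ affects the computational cost but not the overall polynomial scaling in $n$. At a high level, we must show three things: (i) the canonical bases $\mathcal{C}(\mathcal{D}_k)$ can be enumerated in $\mathrm{poly}(n)$ time; (ii) for each basis element $C_i(\mathcal{D}_k)$ and each partition $\lambda$ appearing in the direct sum, the block $\block{\Psi_{\End^{S_{t(\mathcal{D}_k)}}(\mathcal{D}_k)}(C_i(\mathcal{D}_k))}_\lambda$ is computable in $\mathrm{poly}(n)$ time without ever instantiating the exponentially large representatives $u_\tau \in \HSn$; and (iii) the partial traces and fixed operators appearing as constraints admit efficient expansions in the canonical bases of the smaller spaces $\mathcal{D}_2, \mathcal{D}_3$.

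For (i), I would invoke \cite{gijswijt2009block} exactly as used in \autoref{sec:efficient_acces_bose_operators}: the $S_{t(\mathcal{D}_k)}$-orbits on pairs of computational basis vectors are parametrized by the frequency matrices $D \in P(t(\mathcal{D}_k), d_{\HS})$ defined in \autoref{eqn:P_n_d_H}, whose cardinality is $\mathcal{O}((n+1)^{d_{\HS}^2})$ and which can be enumerated in polynomial time when $d_{\HS}$ is fixed. Each orbit average $A_D$ is the corresponding canonical basis element, and the side-system indices on $A_1Q_1T$ (resp.\ $(Q_2\hat{T})_1$) contribute only a fixed dimensional prefactor.

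For (ii), I would compute $U_\lambda^T C_i(\mathcal{D}_k) U_\lambda$ blockwise by a closed-form combinatorial pairing, mirroring \autoref{lem:efficient_basis_trafo_Bose_sym}. Concretely, each polytabloid is a signed sum $u_\tau = \sum_{\tau' \sim \tau}\sum_{c \in C_\lambda}\mathrm{sgn}(c)\bigotimes_{y \in Y(\lambda)} e_{\tau'(c(y))}$; crucially, for fixed $d_{\HS}$ the number of semistandard labellings is $\mathrm{poly}(n)$ and the column stabilizer $C_\lambda$ has size independent of $n$ (bounded by $\prod_i (\lambda'_i)!$, which is bounded because the column heights are at most $d_{\HS}$). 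Hence the row and column orbits one has to sum over for a single matrix entry $\langle e_{\vec{a}}, u_\tau\rangle \langle u_\gamma, e_{\vec{b}}\rangle$ are controlled by $\mathrm{poly}(n)$-many configurations. Since $C_i(\mathcal{D}_k) = A_{D_i}$ is a $0/1$-matrix whose nonzero entries lie in a single orbit, the entire bilinear form $u_\tau^T A_{D_i} u_\gamma$ reduces to a weighted count of tableau configurations consistent with $D_i$, which can be evaluated by the same class-function/coordinate-ring argument as in the proof of \autoref{lem:efficient_basis_trafo_Bose_sym}.

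For (iii), I would show that the partial trace $\Tr_{R}$ over any subset $R$ of the side registers, when applied to an orbit-average basis element $A_D \in \End^{S_n}(\mathcal{D}_1)$, produces a linear combination of orbit averages in $\End^{S_{t(\mathcal{D}_k)}}(\mathcal{D}_k)$ whose coefficients are small integers (counts of how many $D$-orbits project to a given smaller frequency matrix); this is the analogue of the $\Trr{(B_L)_n}{\CBose{t}{t'}{n}}$ computation in the Bose case. The constant operators $\sum_{q_1}\pi_1(q_1)\ket{q_1}\bra{q_1}\otimes\Trr{A_1Q_1}{C_i(\mathcal{D}_1)}$ and analogously on Bob's side are products of a fixed-size diagonal operator with a trace, and decompose trivially into $\mathrm{poly}(n)$-many canonical basis elements. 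The normalization $\Tr[C_i(\mathcal{D}_1)]$ is the size of the orbit $D_i$, computable in closed form from multinomial coefficients. Finally, the objective $\Tr[(V_{A_1A_2Q_1Q_2}\otimes S_{T\hat{T}})\, C_i(\mathcal{D}_1)_{(A_1Q_1T)(A_2Q_2\hat{T})_1}]$ uses only the first copy of $(A_2Q_2\hat{T})_1^n$, so by the same single-system partial trace argument it is a fixed-size sum of inner products against $V\otimes S$.

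The main obstacle is step (ii): unlike the Bose case where $u_\tau$ is a plain symmetric tensor, the antisymmetrization over $C_\lambda$ introduces signs and makes the closed-form expression for $u_\tau^T A_D u_\gamma$ combinatorially heavier. I expect this to reduce, after unfolding, to a signed sum of $D$-compatible tableau fillings of $Y(\lambda)$ weighted by $\mathrm{sgn}(c_1 c_2)$, whose cardinality is polynomial in $n$ because both $|\mathcal{T}_{\lambda,d_{\HS}}|$ and $|C_\lambda|$ are (for fixed $d_{\HS}$). Once this counting identity is in place, assembling $\Psi(SDP_n(T,V,\pi))$ from the $\mathrm{poly}(n)$-many basis elements, $\mathrm{poly}(n)$-many partitions, and $\mathrm{poly}(n)$-sized blocks is a routine polynomial bookkeeping, matching the size estimates already proved in \autoref{lem:help_lemma_sym_SDP} and yielding \autoref{thm:SDP_sym_complexity_results}.
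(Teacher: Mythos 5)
Your plan for parts (i) and (iii) matches the paper's argument: orbits are parametrized by frequency matrices in $P(t(\mathcal{D}_k),d_{\HS})$ and enumerated via \autoref{eqn:opti_solving_cano_basis_E}, and the partial traces of orbit averages redistribute into the smaller canonical bases exactly as in \autoref{eqn:symmetric_efficiency_decomposing_C}. Part (ii), however, contains a genuine error that guts the proof. You write that ``the column stabilizer $C_\lambda$ has size independent of $n$ (bounded by $\prod_i(\lambda'_i)!$, which is bounded because the column heights are at most $d_{\HS}$).'' Each factor $(\lambda_i')!$ is indeed bounded by $d_{\HS}!$, but the product runs over all $\lambda_1$ columns, and $\lambda_1$ can be as large as $n$. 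For a balanced shape such as $\lambda=(n/d_{\HS},\ldots,n/d_{\HS})$ one gets $\lvert C_\lambda\rvert=(d_{\HS}!)^{n/d_{\HS}}$, which is exponential in $n$; the row stabilizer $P_\lambda=\prod_i(\lambda_i)!$ is likewise exponential. Consequently your proposed ``weighted count of tableau configurations'' obtained by unfolding the signed sums over $\tau'\sim\tau$, $\gamma'\sim\gamma$, $c,c'\in C_\lambda$ has exponentially many terms, and the claim that evaluating $u_\tau^T A_D u_\gamma$ reduces to ``routine polynomial bookkeeping'' does not follow from the reasoning you gave.

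The paper closes exactly this gap in \autoref{prop:efficient_computation_explicit_part} by recasting the inner product as evaluating the polynomial $G_{\tau,\gamma}$ from \autoref{eqn:big_G_def} (defined via the same stabilizer sums) on a matrix of dual basis elements, and then \emph{citing} \cite[Proposition~3]{gijswijt2009block} and \cite[Theorem~7]{litjens2017semidefinite} for the nontrivial fact that $G_{\tau,\gamma}$ admits a poly$(n)$ representation as a linear combination of monomials for fixed $d_{\HS}$. That efficiency result is the heart of the lemma and is obtained in those references by algebraic identities (determinantal/straightening arguments), not by naive summation over the stabilizers. Your argument needs to either invoke that external result explicitly, or supply an independent proof that the polytabloid pairing $u_\tau^T A_D u_\gamma$ collapses to a polynomial-size expression despite the exponential stabilizers; as written it silently assumes the conclusion it needs. (Contrast with \autoref{lem:efficient_basis_trafo_Bose_sym}, where $\lambda=(n)$ forces $C_\lambda=\{e\}$ and the naive unfolding does succeed --- which is presumably where the intuition came from, but it does not transfer to general $\lambda$.)
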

The proof is deferred to \autoref{sec:proof_of_efficient_trafo_sdp_sym}. The proof of \autoref{thm:SDP_sym_complexity_results} then follows from \autoref{lem:efficiency_of_trafo_sdp_sym} and \autoref{lem:help_lemma_sym_SDP}.

\begin{remark}[Classical-quantum structure and the size of the reduction]\label{rem:cq_complexity}
The reduction above does not exploit the classical-quantum restriction of \autoref{prop:restriction_to_cq_states}. Doing so is possible: the optimization in \autoref{eqn:efficient_sdp_free_game} may be restricted to the elements of $\text{End}^{S_{t\lrbracket{\mathcal{D}_1}}}\lrbracket{\mathcal{D}_1}$ that are in addition diagonal on the classical registers, and these form the $S_n$-invariant subspace of $\mathrm{Diag}\lrbracket{A_1Q_1}\otimes\mathcal{B}\lrbracket{\HS_T}\otimes W^{\otimes n}$ with $W:=\mathrm{Diag}\lrbracket{A_2Q_2}\otimes\mathcal{B}\lrbracket{\HS_{\hat{T}}}$, of dimension
$\lrvert{A_1}\lrvert{Q_1}\lrvert{T}^2\binom{n+\lrvert{A_2}\lrvert{Q_2}\lrvert{T}^2-1}{n}\leq\lrvert{A_1}\lrvert{Q_1}\lrvert{T}^2\lrbracket{n+1}^{\lrvert{A_2}\lrvert{Q_2}\lrvert{T}^2-1}$.
Compared to the count $m\lrbracket{\D_1}\leq\lrvert{A_1Q_1T}^2\lrbracket{n+1}^{\lrvert{A_2Q_2\hat{T}}^2}$ of \autoref{prop:bounding_our_case}, the degree of the polynomial growth in $n$ would thus improve from $\lrvert{A_2Q_2\hat{T}}^2$ to $\lrvert{A_2}\lrvert{Q_2}\lrvert{T}^2-1$, i.e.\ roughly by a factor of $\lrvert{A}\lrvert{Q}$ in the exponent. Realizing this reduction, however, requires the block decomposition of the smaller algebra --- a refinement of \autoref{prop:iso_sym_with_side_info} that organizes the $S_{t\lrbracket{\mathcal{D}_k}}$-orbits of classical configurations into types and applies Schur--Weyl duality within the resulting Young subgroups.
An analogous refinement is available for the Bose-symmetric hierarchy via the occupation block structure of \autoref{prop:bose_cq_restriction}(ii).
Since \autoref{thm:SDP_sym_complexity_results} and \autoref{lem:efficiency_of_trafo_sdp_sym} concern the scaling in the hierarchy level $n$ --- equivalently, in the inverse additive error $\epsilon^{-1}$ --- for fixed $\lrvert{A}$, $\lrvert{Q}$ and $\lrvert{T}$, and this scaling is polynomial for either parametrization, we do not pursue this refinement here; it improves the degree, not the polynomial character, of the complexity bounds, and is chiefly of interest for practical implementations.
\end{remark}

\printbibliography 

\newpage



\clearpage

\phantomsection
\addcontentsline{toc}{section}{Supplementary Material}

\addtocontents{toc}{\protect\hidesuppinmaintoc}

\appendix
\pagestyle{suppheader}

\startcontents[supp]

\renewcommand{\hidesuppinmaintoc}{}

\thispagestyle{empty}              

\begingroup
  \centering
  \vspace*{0.5cm}
  {\LARGE\bfseries Supplementary Material\par}
  \vspace{1.8em}
  {\large for the article\par}
  \vspace{0.7em}
  {\large\itshape Approximating fixed size quantum correlations in polynomial time\par}
  \vspace{2.6em}
  {\large Julius A. Zeiss \quad Gereon Kossmann \quad Omar Fawzi \quad Mario Berta\par}
  \par
\endgroup

\vspace{1.8em}
\centerline{\rule{0.5\textwidth}{0.4pt}}
\vspace{1.8em}

\noindent{\bfseries Contents of the supplementary material}\par
\printcontents[supp]{}{0}{\setcounter{tocdepth}{1}}

\clearpage


\section{Semidefinite programming}\label{sec:sdp}

We follow \cite{boyd2004convex} and refer the reader there for a more comprehensive introduction to semidefinite programs.  A conic program is a convex optimization problem whose inequalities are expressed with respect to a cone-induced partial order. Semidefinite programs are conic programs over the cone of positive semidefinite matrices. In an SDP in inequality form, one minimizes a linear function of a variable $x=\lrbracket{x_1, \ldots, x_m}\in \mathbb{R}^m$ subject to an affine linear matrix inequality,
\begin{align} \label{eqn:SDP_ineq_form}
    \begin{array}{ll}
        \min_{x\in \mathbb{R}^m} & c^\top x\\
        \text{s.t.} & F_0 + x_1F_1 + \cdots + x_mF_m \succeq 0.
    \end{array}
\end{align}
Here the symmetric matrices $F_0,\ldots, F_m \in \mathbb{R}^{n\times n}$ and the vector $c\in \mathbb{R}^m$ specify the problem. We say that this inequality-form SDP has $m$ scalar decision variables. Another standard form of an SDP is written in terms of a symmetric matrix variable,
\begin{align}\label{eqn:SDP_standard_form}
    \begin{array}{ll}
        \min_{X\in \mathbb{R}^{n\times n}} & \left\langle C, X\right\rangle \\
        \text{s.t.} & \left\langle A_i, X\right\rangle=b_i, \quad i=1,\ldots, k, \\
        & X\succeq 0,
    \end{array}
\end{align}
where $\langle \cdot, \cdot \rangle$ is the trace (Frobenius) inner product. The symmetric matrices $C, A_1,\ldots, A_k\in \mathbb{R}^{n\times n}$ and $b=(b_1,\ldots,b_k)\in\mathbb{R}^k$ specify the problem. A matrix $X\in \mathbb{R}^{n\times n}$ is a feasible solution to \autoref{eqn:SDP_standard_form} if it fulfills the conditions $\left\langle A_i, X\right\rangle=b_i$ and $X\succeq 0$. A feasible solution $X$ is called optimal if for all other feasible solutions $X'$ we have $ \left\langle C, X\right\rangle \leq  \left\langle C, X'\right\rangle$. We say that the SDP has a matrix variable of size $n \times n$ or that the SDP involves square matrices of order $n$. Alternatively, one can specify the number of independent entries of $X$. In general, a real symmetric matrix, and in particular a real positive semidefinite matrix, $X\in\mathbb{R}^{n\times n}$ has $\frac{n\lrbracket{n+1}}{2}$ independent real scalar entries.

Consider a complex Hermitian matrix $Y\in\mathbb{C}^{n\times n}$. We have
\begin{align}
    Y\succeq 0 \quad\Leftrightarrow\quad \Phi(Y):=\begin{bmatrix}
        \operatorname{Re}(Y) & -\operatorname{Im}(Y) \\
        \operatorname{Im}(Y) & \operatorname{Re}(Y)
    \end{bmatrix}\succeq 0.
\end{align}
Thus, an SDP whose variable is a complex Hermitian positive semidefinite matrix $Y\in\mathbb{C}^{n\times n}$ can be reformulated using a real symmetric positive semidefinite matrix $\Phi(Y)\in\mathbb{R}^{2n\times 2n}$ together with the linear constraints that enforce this block structure. However, the image of $\Phi$ is not the entire cone of real positive semidefinite matrices of size $2n\times 2n$. While a general real symmetric $2n\times 2n$ matrix has $2n^2+n$ independent real scalar entries, a complex Hermitian positive semidefinite matrix $Y\in\mathbb{C}^{n\times n}$ has only $n^2$ independent real scalar entries. By construction, $\operatorname{Re}(Y)$ is symmetric and has $\frac{n\lrbracket{n+1}}{2}$ independent real scalar entries, while $\operatorname{Im}(Y)$ is skew-symmetric and has $\frac{n\lrbracket{n-1}}{2}$ independent real scalar entries. Thus, the structured real matrix $\Phi(Y)$ has
\begin{align}
    \frac{n\lrbracket{n+1}}{2} + \frac{n\lrbracket{n-1}}{2} = n^2
\end{align}
independent real scalar entries. Note that, in \autoref{eqn:SDP_standard_form}, independent equality constraints further reduce the dimension of the feasible affine space by their rank. In summary, for SDPs whose optimization variable is a complex Hermitian positive semidefinite $n\times n$ matrix, the number of real scalar degrees of freedom is at most $n^2$ before any problem-specific equality constraints, and no larger after imposing them.

\subsection{Symmetry-invariant SDPs}\label{sec:symmetry_invariant_sdp}

We follow \cite{vallentin2009symmetry} and briefly recall how semidefinite programs can be invariant under the action of a symmetry group. Let $G$ be a finite group acting on a finite set $\mathcal{Z}$ by
\begin{align}
     (g,z)\mapsto gz, \qquad g\in G,\ z\in \mathcal{Z}.
\end{align}
This action extends naturally to an action on pairs $(y,z)\in\mathcal{Z}\times \mathcal{Z}$ by
\begin{align}
     (g,(y,z))\mapsto (gy,gz).
\end{align}
Consequently, $G$ acts on square matrices whose rows and columns are indexed by $\mathcal{Z}$. More explicitly, for a $\mathcal{Z}\times\mathcal{Z}$ matrix $M$, we write
\begin{align}
    (gM)(y,z) := M(g^{-1}y,g^{-1}z).
\end{align}
A matrix $M$ is called $G$-invariant if $gM=M$ for all $g\in G$. Consider the SDP
\begin{align}\label{eqn:SDP_standard_form_2}
    \begin{array}{ll}
        \min_{X\in \mathbb{C}^{n\times n}} & \left\langle C, X\right\rangle \\
        \text{s.t.} & \left\langle A_i, X\right\rangle=b_i, \quad i=1,\ldots, k, \\
        & X\succeq 0,
    \end{array}
\end{align}
and assume that an optimal solution exists. We say that the SDP in \autoref{eqn:SDP_standard_form_2} is invariant under $G$ if, for every feasible solution $X$ and every $g\in G$, the matrix $gX$ is again feasible and the objective value is preserved, that is,
\begin{align}
     \left\langle C,gX\right\rangle=\left\langle C,X\right\rangle.
\end{align}
It is important to distinguish invariance of the SDP from invariance of its individual feasible points: a $G$-invariant SDP need not have the property that every feasible solution is itself $G$-invariant.

Nevertheless, convexity implies that the optimization in \autoref{eqn:SDP_standard_form_2} may be restricted to the subspace $\mathcal{B}$ of all $G$-invariant $\mathcal{Z}\times\mathcal{Z}$ complex matrices. Indeed, for any feasible $X$, define its group average by
\begin{align}
    R(X):=\frac{1}{\lrvert{G}}\sum_{g\in G} gX .
\end{align}
Then $R(X)\in\mathcal{B}$, since $gR(X)=R(X)$ for every $g\in G$. Moreover, since $X$ is feasible and the SDP is $G$-invariant, every matrix $gX$ is feasible. By convexity of the feasible set, their average $R(X)$ is therefore feasible as well. Finally, again by $G$-invariance of the SDP, the objective value is preserved:
\begin{align}
    \left\langle C,R(X)\right\rangle=\left\langle C,X\right\rangle .
\end{align}
Thus every feasible solution can be replaced by a $G$-invariant feasible solution with the same objective value. In particular, if an optimal solution exists, then there also exists an optimal solution which is $G$-invariant.

The key point is that $G$-invariance of the SDP means that the optimization problem itself is symmetric; it does not mean that every feasible solution must be symmetric. This differs from the SDPs considered in this work, where we explicitly impose that every feasible solution is $G$-invariant for $G=S_n$. In that case the feasible set is fixed pointwise by the group action, and hence the SDP is automatically $G$-invariant in the above sense.

\section{Constrained separability problems}\label{sec:cbo}

Constrained separability problems are pervasive in quantum information theory and constitute a subclass of the constrained bilinear optimization problems. For instance, they arise in the context of jointly constrained semidefinite bilinear programming, as developed by \cite{huber2019jointly}, which utilizes non-commutative extensions of the classical branch-and-bound algorithm, and in the unbounded variants described in \cite{berta2016quantum}. In this work, we adopt and adapt the formulation presented in \cite{berta2021semidefinite, ohst2024characterising}.

\begin{definition}\label{def:constraint_bilinear_optimization}
Let $\mathcal{H}_A=\HS_{A_L}\otimes\HS_{A_R}$ and $\HS_B=\HS_{B_L}\otimes\HS_{B_R}$ be Hilbert spaces. The composite system is described by $\mathcal{H}_A \otimes \mathcal{H}_B$. For a Hermitian cost Hamiltonian $G_{AB} \in \mathcal{B}(\mathcal{H}_A \otimes \mathcal{H}_B)$ with $\left\lVert G_{AB}\right\rVert_{\infty}\leq 1$, consider the constrained optimization over product states
{\allowdisplaybreaks
\begin{align}\label{eq:optimization_task}
        \begin{array}{lllll}
    &&&\displaystyle \mathrm{cPROD}(G) := \max_{\rho_{AB}\in \mathcal{B}(\HS_A\otimes\HS_B)} \Tr[G_{AB}\,\rho_{AB}]\\
    &&&\\
    &\text{s.t.}&& \text{Product states:} & \\
     &&&\\
    &&&\rho_{AB}=\rho_A\otimes\rho_B,\,  & \\
    &&&\rho_A\succeq 0,\, & \rho_B\succeq 0,\,\\
    &&& \Tr\left[\rho_A\right]=1,\, & \Tr\left[\rho_B\right]=1,\,\\
    &&&&\\
    &&&\text{Alice's linear constraints:} &\\
    &&&&\\
    &&& \Theta_{A_L\rightarrow C_{A_L}}\lrbracket{\rho_A} = W_{C_{A_L}}\otimes \rho_{A_R},\, & \Omega_{A\rightarrow A}\lrbracket{\rho_{A}}=\rho_{A} ,\,\\
    \\
     &&&\text{Bob's linear constraints:} &\\
    &&&&\\
    &&& \Upsilon_{B_L\rightarrow C_{B_L}}\lrbracket{\rho_B} = K_{C_{B_L}}\otimes \rho_{B_R},\, & \Xi_{B\rightarrow B}\lrbracket{\rho_{B}}=\rho_{B}, \\
        \end{array}
    \end{align}}
where the dependence on $G$ collects the freedom in specifying the cost Hamiltonian $G_{AB}$, the linear maps $\Theta_{A_L\rightarrow C_{A_L}}$, $\Omega_{A\rightarrow A}$, $\Upsilon_{B_L\rightarrow C_{B_L}}$ and $\Xi_{B\rightarrow B}$ the fixed operators $W_{C_{A_L}}, K_{C_{B_L}}$, and the dimensions involved, for the specific problem under consideration.  
\end{definition}
Since the objective is linear in $\rho_{AB}$, we can equivalently optimize over the convex hull of the constrained product states along the $A\,{:}\,B$-partitioning, i.e.\ over separable states whose components satisfy the constraints.

\begin{proposition}[Constrained separability problem]\label{lem:equivalent_cbo_sep_prod}
Let $G_{AB} \in \mathcal{B}(\mathcal{H}_A \otimes \mathcal{H}_B)$ with $\left\lVert G_{AB}\right\rVert_{\infty}\leq 1$ be a Hermitian cost Hamiltonian. Consider the bilinear optimization problem
    {\allowdisplaybreaks
    \begin{align}\label{eq:bilinear_optimisation_problem_sep}
        \begin{array}{lllll}
    &&&\displaystyle \mathrm{cSEP}(G) := \max_{\rho_{AB}\in \mathcal{B}(\HS_A\otimes\HS_B)} \Tr[G_{AB}\,\rho_{AB}]\\
    &&&&\\
    &\text{s.t.}&& \text{Separable states:} & \\
     &&&\\
    &&&\rho_{AB}=\sum_{x\in\mathcal{X}}p(x)\rho^x_A\otimes\rho^x_B,\,  & \text{s.th. }\forall x\in \mathcal{X}:\\
    &&&\rho^x_A\succeq 0,\, \rho^x_B\succeq 0,\,& \Tr\left[\rho^x_A\right]=1,\, \Tr\left[\rho^x_B\right]=1\\
    &&&&\\
    &&&\text{Alice's linear constraints $\forall x\in\mathcal{X}$:} &\\
    &&&&\\
    &&& \Theta_{A_L\rightarrow C_{A_L}}\lrbracket{\rho^x_A} = W_{C_{A_L}}\otimes \rho^x_{A_R} & \Omega_{A\rightarrow A}\lrbracket{\rho^x_{A}}=\rho^x_{A} ,\,\\
    \\
     &&&\text{Bob's linear constraints $\forall x\in\mathcal{X}$:} &\\
    &&&&\\
    &&& \Upsilon_{B_L\rightarrow C_{B_L}}\lrbracket{\rho^x_B} = K_{C_{B_L}}\otimes \rho^x_{B_R} ,\, & \Xi_{B\rightarrow B}\lrbracket{\rho^x_{B}}=\rho^x_{B}.\,\\ 
        \end{array}
    \end{align}}
    Then, $\mathrm{cSEP}(G)$ and $\mathrm{cPROD}(G)$ are equivalent.
\end{proposition}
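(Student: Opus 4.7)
The plan is to establish the two inequalities $\tilde{Q}(G) \leq cSEP(G)$ and $cSEP(G) \leq \tilde{Q}(G)$ separately, the first being essentially a tautology and the second relying crucially on the per-summand enforcement of the linear constraints in $\mathbf{cSEP(G)}$.

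For $\tilde{Q}(G) \leq cSEP(G)$, I would observe that any feasible product state $\rho_A \otimes \rho_B$ for $\tilde{Q}(G)$ can be viewed as a trivial separable decomposition with $|\mathcal{X}| = 1$ and $p(x) = 1$. Since the single summand satisfies all of Alice's and Bob's linear constraints by feasibility for $\tilde{Q}(G)$, it is also feasible for $cSEP(G)$ with identical objective value, and the bound follows by taking the supremum over such product states.

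For the reverse direction $cSEP(G) \leq \tilde{Q}(G)$, I would take an arbitrary feasible $\rho_{AB} = \sum_{x \in \mathcal{X}} p(x)\,\rho_A^x \otimes \rho_B^x$ for $\mathbf{cSEP(G)}$ and observe that, by the formulation in \autoref{eq:bilinear_optimisation_problem_sep}, for every $x \in \mathcal{X}$ the factors $\rho_A^x$ and $\rho_B^x$ are themselves normalized positive operators satisfying Alice's and Bob's respective linear constraints. Hence each summand $\rho_A^x \otimes \rho_B^x$ is a feasible point of $\tilde{Q}(G)$. Using linearity of the trace in $\rho_{AB}$,
\begin{align*}
\Tr\lrrec{G_{AB}\,\rho_{AB}} = \sum_{x \in \mathcal{X}} p(x)\,\Tr\lrrec{G_{AB}\lrbracket{\rho_A^x \otimes \rho_B^x}} \leq \sum_{x \in \mathcal{X}} p(x)\,\tilde{Q}(G) = \tilde{Q}(G),
\end{align*}
and taking the supremum over feasible $\rho_{AB}$ yields $cSEP(G) \leq \tilde{Q}(G)$.

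The main point—more conceptual than technical—is to read the constraints of $\mathbf{cSEP(G)}$ correctly as imposed on each $\rho_A^x$, $\rho_B^x$ individually, rather than only on the averaged state $\rho_{AB}$. Were the linear constraints relaxed to hold only on the mixture, the set of feasible separable states would in general be strictly larger than the convex hull of feasible product states, and the reverse inequality would fail. The equivalence is thus the standard fact that a linear functional over a convex set attains its maximum at an extreme point, combined with the fact that the per-summand formulation of $cSEP(G)$ is precisely the convex hull of the feasible set of $\tilde{Q}(G)$.
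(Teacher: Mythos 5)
Your proof is correct and is the standard convexity argument that the paper alludes to without spelling out — the paper's "proof" merely cites Berta et al.\ and Borderi et al.\ and asserts that the extra constraints do not disturb the argument. Your two-directional argument (trivial one-term decomposition for $\tilde{Q}(G)\leq cSEP(G)$; per-summand feasibility plus linearity of the trace for $cSEP(G)\leq\tilde{Q}(G)$) is precisely that argument, and you correctly identify the crux: the linear constraints in $\mathbf{cSEP(G)}$ are imposed on each $\rho_A^x\otimes\rho_B^x$, not only on the mixture, which is exactly what makes the feasible set of $\mathbf{cSEP(G)}$ the convex hull of the feasible set of $\mathbf{\tilde Q(G)}$ and hence the two optima coincide. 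One small imprecision in your closing remark: the conclusion does not need that the maximum is attained at an extreme point, only that the supremum of a linear functional over the convex hull of a set equals its supremum over the set itself; the extreme-point phrasing is a slightly stronger statement that would require additionally knowing the generating set is closed. Your displayed chain of inequalities already avoids this issue, so the quibble is purely about the prose, not the argument.
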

\begin{proof}
We omit a complete proof, as the reasoning mirrors that of \cite{berta2021semidefinite, borderi2022finetti}; the additional constraint does not interfere with the structure or validity of the original argument.
\end{proof}


While $\mathrm{cSEP}(G)$, as opposed to $\mathrm{cPROD}(G)$, constitutes an optimization over a convex set, characterizing this set efficiently and thus solving $\mathrm{cSEP}(G)$ is NP-hard. Hence, in \cite{berta2016quantum, berta2019quantum, berta2021semidefinite} the authors provide a hierarchy of outer approximations to $\mathrm{cSEP}(G)$ yielding upper bounds on $\mathrm{cSEP}(G)$. Concretely, as in the DPS hierarchy, the set of separable states can be approximated by its superset of $n$-extendable states w.r.t.\ system $A$, denoted $n\text{-Ext}(A\,{:}\,B)$. While quantum generalizations of classical de Finetti representation theorems such as \cite{christandl2007one, caves2002unknown} proved that this approximation is tight for $n\rightarrow\infty$, as well as quantifying the corresponding rate of convergence, they do not show that the resulting states satisfy the linear constraints in $\mathrm{cSEP}(G)$. More precisely, the constructed measure in the de Finetti theorem defined on state space need not vanish, even approximately, on the subset of states not fulfilling these linear constraints. Hence, in order to prove convergence via a quantum de Finetti representation theorem, \cite{berta2019quantum, berta2021semidefinite} provided an adapted finite de Finetti representation theorem, which ensures this condition via information-theoretic arguments. We extend their result to accommodate additional constraints.
\begin{remark}
By choosing $A_R$ and $B_R$ to be one-dimensional, we recover the special case considered in \cite{berta2021semidefinite, jee2020quasi, ohst2024characterising}.
\end{remark}

\begin{proposition}\label{prop:symmetric_states_have_symmetric_reduced_states}
Let $\rho_{A\lrbracket{B_LB_R}_1^n}$ be symmetric w.r.t.\ $A$. Then, the marginal $\rho_{A\lrbracket{B_L}_1^n}$ is also symmetric w.r.t.\ $A$. 
\end{proposition}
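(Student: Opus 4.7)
The plan is to prove the statement via a direct computation exploiting the factorised structure of the permutation representation on $(B_LB_R)_1^n$ together with the fact that partial traces commute with unitary conjugations acting on the traced-out subsystem. Concretely, the proposition intends (as the name ``marginal'' suggests) that if $\rho_{A(B_LB_R)_1^n}$ is symmetric with respect to permutations of the $n$ copies of $B_LB_R$, then any marginal obtained by tracing out a corresponding subsystem in each copy (e.g.\ all $(B_R)_1^n$, yielding $\rho_{A(B_L)_1^n}$, or all $(B_L)_1^n$, yielding $\rho_{A(B_R)_1^n}$) is again symmetric with respect to the same permutation action on the remaining copies.

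The key observation I would use is the factorisation $U_{(B_LB_R)_1^n}(\pi)=U_{(B_L)_1^n}(\pi)\otimes U_{(B_R)_1^n}(\pi)$, which follows directly from the definition of the permutation representation on a tensor product Hilbert space. First, I would rewrite the hypothesis as
\begin{align*}
\rho_{A(B_LB_R)_1^n}=\bigl(\mathcal{I}_A\otimes U_{(B_L)_1^n}(\pi)\otimes U_{(B_R)_1^n}(\pi)\bigr)\rho_{A(B_LB_R)_1^n}\bigl(\mathcal{I}_A\otimes U_{(B_L)_1^n}^T(\pi)\otimes U_{(B_R)_1^n}^T(\pi)\bigr)
\end{align*}
for every $\pi\in S_n$. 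Applying $\Tr_{(B_R)_1^n}$ to both sides, I would pull the factor $\mathcal{I}_A\otimes U_{(B_L)_1^n}(\pi)$ outside of the partial trace (since it acts trivially on the traced-out systems), and then invoke cyclicity of $\Tr_{(B_R)_1^n}$ on the unitary $U_{(B_R)_1^n}(\pi)$ to cancel it against its transpose. This yields
\begin{align*}
\rho_{A(B_L)_1^n}=\bigl(\mathcal{I}_A\otimes U_{(B_L)_1^n}(\pi)\bigr)\rho_{A(B_L)_1^n}\bigl(\mathcal{I}_A\otimes U_{(B_L)_1^n}^T(\pi)\bigr)\quad\forall\pi\in S_n,
\end{align*}
which is exactly the claim. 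The same argument, with the roles of $B_L$ and $B_R$ exchanged, handles the other marginal.

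There is essentially no hard step in this proof; the only potential pitfall is notational, namely being careful that $U_{(B_LB_R)_1^n}(\pi)$ really does factorise into the product representation on $(B_L)_1^n$ and $(B_R)_1^n$ respectively, and not into a permutation acting on the ``interleaved'' index set. Since the permutation acts by reordering the $n$ copies of the composite system $B_LB_R$ as whole blocks, this factorisation is immediate, and the rest of the argument is just partial-trace bookkeeping.
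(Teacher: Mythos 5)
Your proof is correct and takes essentially the same route as the paper's: both exploit the factorisation $U_{(B_LB_R)_1^n}(\pi)=U_{(B_L)_1^n}(\pi)\otimes U_{(B_R)_1^n}(\pi)$, commute the $B_L$-unitaries out of the partial trace, and use cyclicity to cancel the $B_R$-unitaries; the paper merely runs the same chain of equalities in the reverse direction (starting from $U_{B_L}\rho_{AB_L}U_{B_L}^T$ and showing it equals $\rho_{AB_L}$, rather than starting from the invariance of the global state and tracing out). Your reading of the statement — that the marginal intended is $\rho_{A(B_L)_1^n}$ or $\rho_{A(B_R)_1^n}$, the conclusion in the paper being a typo — is the correct one.
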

\begin{proof}
This follows from a straightforward computation:
     \begin{align}
\begin{split}
     &\lrbracket{\mathbb{1}_A\otimes U_{\lrbracket{B_L}_1^n}(\pi)}\,\rho_{A\lrbracket{B_L}_1^n}\, \lrbracket{\mathbb{1}_A\otimes U^T_{\lrbracket{B_L}_1^n}(\pi)} \\
     &\hspace{0.6cm}= \Trr{\lrbracket{B_R}_1^n}{\lrbracket{\mathbb{1}_A\otimes U_{\lrbracket{B_L}_1^n}(\pi)\otimes\mathbb{1}_{\lrbracket{B_R}_1^n}}\,\rho_{A\lrbracket{B_LB_R}_1^n}\,\lrbracket{\mathbb{1}_A\otimes U^T_{\lrbracket{B_L}_1^n}(\pi)\otimes\mathbb{1}_{\lrbracket{B_R}_1^n}}}\\
     &\hspace{0.6cm}= \Trr{\lrbracket{B_R}_1^n}{\lrbracket{\mathbb{1}_A\otimes U_{\lrbracket{B_L}_1^n}(\pi)\otimes U^T_{\lrbracket{B_R}_1^n}(\pi)\,U_{\lrbracket{B_R}_1^n}(\pi)}\,\rho_{A\lrbracket{B_LB_R}_1^n}\,\lrbracket{\mathbb{1}_A\otimes U^T_{\lrbracket{B_L}_1^n}(\pi)\otimes\mathbb{1}_{\lrbracket{B_R}_1^n}}}\\
     &\hspace{0.6cm}= \Trr{\lrbracket{B_R}_1^n}{\lrbracket{\mathbb{1}_A\otimes U_{\lrbracket{B_L}_1^n}(\pi)\otimes U_{\lrbracket{B_R}_1^n}(\pi)}\,\rho_{A\lrbracket{B_LB_R}_1^n}\,\lrbracket{\mathbb{1}_A\otimes U^T_{\lrbracket{B_L}_1^n}(\pi)\otimes U^T_{\lrbracket{B_R}_1^n}(\pi)}}\\
     &\hspace{0.6cm}=\Trr{\lrbracket{B_R}_1^n}{\lrbracket{\mathbb{1}_A\otimes U_{\lrbracket{B_LB_R}_1^n}(\pi)}\, \rho_{A\lrbracket{B_LB_R}_1^n}\,\lrbracket{\mathbb{1}_A\otimes U^T_{\lrbracket{B_LB_R}_1^n}(\pi)}}= \rho_{A\lrbracket{B_L}_1^n},\,
\end{split}
\end{align}
where the second equality inserts $\mathbb{1}=U^T_{\lrbracket{B_R}_1^n}(\pi)\,U_{\lrbracket{B_R}_1^n}(\pi)$, the third moves the left-inserted $U^T_{\lrbracket{B_R}_1^n}(\pi)$ to the right factor using the cyclicity of the partial trace on the traced subsystem, and the last step uses $U_{\lrbracket{B_L}_1^n}(\pi)\otimes U_{\lrbracket{B_R}_1^n}(\pi)=U_{\lrbracket{B_LB_R}_1^n}(\pi)$ together with the symmetry of $\rho_{A\lrbracket{B_LB_R}_1^n}$ w.r.t.\ $A$.
\end{proof}

\begin{lemma}[De Finetti representation theorem with linear constraints (restated)]\label{lem:approximate_quantum_de_finetti}
    Let $\rho_{AB_1^n} \in \mathcal{S}(\mathcal{H}_A \otimes \mathcal{H}_B^{\otimes n})$ with $\HS_A=\HS_{A_L}\otimes\HS_{A_R}$ and $\HS_B=\HS_{B_L}\otimes\HS_{B_R}$ be a quantum state symmetric on $B_1^n$ w.r.t.\ $A$. Furthermore, consider linear mappings $\Theta_{A_L\rightarrow C_{A_L}},\, \Omega_{A\rightarrow A},\, \Upsilon_{\lrbracket{B_L}_n\rightarrow C_{B_L}},\, \Xi_{B_n\rightarrow B_n}$ and operators $W_{C_{A_L}}\in\mathcal{B}\lrbracket{C_{A_L}},\, K_{C_{B_L}}\in\mathcal{B}\lrbracket{C_{B_L}}$. Assuming
    \begin{align}\label{eqn:deFinetti_constraints_appendix}
    \begin{split}
        \begin{array}{cc}
            \Theta_{A_L\rightarrow C_{A_L}}\lrbracket{\rho_{AB_1^n}} = W_{C_{A_L}}\otimes\rho_{A_RB_1^n},\, &  \Omega_{A\rightarrow A}\lrbracket{\rho_{AB_1^n}}=\rho_{AB_1^n}\\
            \Upsilon_{\lrbracket{B_L}_n\rightarrow C_{B_L}}\lrbracket{\rho_{B_1^n}} = K_{C_{B_L}}\otimes \rho_{B_1^{n-1}\lrbracket{B_R}_n},\, & \Xi_{B_n\rightarrow B_n}\lrbracket{\rho_{B_1^n}}=\rho_{B_1^n},\,\\
        \end{array}
    \end{split}
    \end{align}
    then there exists a probability distribution $\lrbrace{p(x)}_{x\in\mathcal{X}}$ and states $\lrbrace{\sigma_A^x}_{x\in\mathcal{X}},\, \lrbrace{\sigma_B^x}_{x\in\mathcal{X}}$ such that the following is true
    \begin{enumerate}
        \item for $\rho_{AB} := \text{tr}_{B_2^{n}} [\rho_{AB_1^n}]$ we have 
        \begin{align*}
        &\Vert \rho_{AB} - \sum_{x\in \mathcal{X}} p(x) \sigma_A^x \otimes \sigma_B^x \Vert_1 \leq \min\lbrace f(A, B), f(B\vert \cdot)\rbrace \sqrt{\frac{2\ln 2 \log_2 \lrvert{A}}{n}}
        \end{align*}
        where $f(A, B)$ and $f(B\vert\cdot)$ denote the distortion factors of the informationally complete measurements, cf.\ \autoref{eqn:measurement_distortion_AB} and \autoref{eqn:measurement_distortion_A};
        \item \label{thm:approximate_quantum_de_finetti_2} for all $x \in \mathcal{X}$ we have
        \begin{align}
            \begin{split}
                \begin{array}{cc}
                    \Theta_{A_L\rightarrow C_{A_L}}\lrbracket{\sigma^x_{A}} = W_{C_{A_L}}\otimes\sigma^x_{A_R},\, & \Omega_{A\rightarrow A}\lrbracket{\sigma^x_{A}}=\sigma^x_{A},\,\\ 
            \Upsilon_{B_L\rightarrow C_{B_L}}\lrbracket{\sigma^x_{B}} = K_{C_{B_L}}\otimes\sigma^x_{B_R},\, & \Xi_{B\rightarrow B}\lrbracket{\sigma^x_{B}}=\sigma^x_{B}.\,\\
                \end{array}
            \end{split}
        \end{align}
    \end{enumerate}
\end{lemma}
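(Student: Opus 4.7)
The plan is to mirror the proof of \cite[Theorem 2.3]{berta2021semidefinite} and the Bose-symmetric version \autoref{lem:bose-symmetric_deFinetti} given later in the paper, adapting it to the richer class of linear constraints. The scheme is: measure a subset of the $B$ systems with an informationally complete POVM, invoke a self-decoupling argument to produce a state that factorizes on average conditional on the outcomes, and then check that the constraints in \autoref{eqn:deFinetti_constraints_appendix} are inherited by each conditional branch.

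Concretely, I would fix an informationally complete measurement $\mathcal{M}_B$ (so that both $\mathcal{M}_A \otimes \mathcal{M}_B$ and $\mathcal{I}_A \otimes \mathcal{M}_B$ have bounded distortion, giving the two possible prefactors $f(A\mid B)$ and $f(B\mid \cdot)$ from \autoref{eqn:measurement_distortion_AB}--\autoref{eqn:measurement_distortion_A}). Applying $\mathcal{M}_B^{\otimes m}$ to the first $m$ copies yields a classical-quantum state $\rho_{AZ_1^m B_{m+1}^n}$. Since $\rho_{AB_1^n}$ is symmetric on the $B$ register with respect to $A$, the resulting state is exchangeable over $Z_1^m B_{m+1}^n$. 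Brand\~{a}o--Harrow self-decoupling \cite[Lemma~2.1]{brandao2013product} then provides some $m \in [n]$ with
\begin{align*}
\mathbb{E}_{z_1^m}\bigl\lVert \rho_{A Z_{m+1}\mid z_1^m} - \rho_{A\mid z_1^m}\otimes \rho_{Z_{m+1}\mid z_1^m}\bigr\rVert_1 \;\leq\; \sqrt{\frac{2\ln 2 \, \log\lrvert{A}}{n}},
\end{align*}
and the minimal distortion bound lifts this to an inequality on $\rho_{A B_{m+1}\mid z_1^m}$ with the claimed prefactor. Setting $p(x) = p(z_1^m)$, $\sigma_A^x = \rho_{A\mid z_1^m}$, $\sigma_B^x = \rho_{B_{m+1}\mid z_1^m}$, and using convexity of the trace norm gives part (1).

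The real work is part (\ref{thm:approximate_quantum_de_finetti_2}): verifying the four constraints on the branches. For Alice's constraints this is routine, because $\Theta_{A_L\to C_{A_L}}$ and $\Omega_{A\to A}$ act trivially on the $B$ register and hence commute with the POVM $\mathcal{M}_B^{\otimes m}$ and the partial trace over $Z_1^m$ and $B_{m+2}^n$, so each assumption in \autoref{eqn:deFinetti_constraints_appendix} is pulled under the normalization and descends to $\rho_{A\mid z_1^m}$ for every $z_1^m$. The subtle point, and the step I expect to be the main obstacle, is Bob's side: the constraints in \autoref{eqn:deFinetti_constraints_appendix} are stated on the $n$-th copy of $B$, whereas after measurement we have the conditional state sitting on the $(m+1)$-th copy. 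Here I would invoke \autoref{prop:symmetric_states_have_symmetric_reduced_states} to conclude that the marginals of $\rho_{B_1^n}$ are symmetric on each tensor block, so that the constraint stated on index $n$ is equivalent to the same constraint stated on index $m+1$; then the commutation of $\Upsilon$ and $\Xi$ (acting on $B_{m+1}$) with $\mathcal{M}_B^{\otimes m}$ and $\mathrm{tr}_{B_{m+2}^n}$ carries the fixed-point and product-form conditions onto each $\sigma_B^x = \rho_{B_{m+1}\mid z_1^m}$.

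Putting the two halves together, the $x \in \mathcal{X}$ can be identified with the outcome strings $z_1^m$, the probabilities with $p(z_1^m)$, and the conditional states with the branches just constructed; both claims of the lemma then follow. Writing this carefully—i.e., producing the symmetric variant of the Bose-symmetric computation in \autoref{lem:bose-symmetric_deFinetti}—is mostly bookkeeping once the symmetry argument for Bob's constraint is in place, and the bound with $\min\{f(A\mid B), f(B\mid \cdot)\}$ is obtained by picking whichever informationally complete measurement gives the better distortion for the instance at hand.
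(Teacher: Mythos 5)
Your proposal is correct and takes essentially the same approach as the paper: apply an informationally complete measurement, invoke the Brandão--Harrow self-decoupling lemma to produce the conditional product decomposition, lift the bound via measurement distortion, and verify that the constraints descend to each conditional branch by commuting the maps with the POVM/partial traces and using \autoref{prop:symmetric_states_have_symmetric_reduced_states} to transfer Bob's constraints from index $n$ to index $m+1$. One small clean-up: the self-decoupling lemma is applied to the fully measured classical state $\rho_{AZ_1^n}$ (you wrote $\mathcal{M}_B^{\otimes m}$, which gives a hybrid state), and $\rho_{AB} = \rho_{AB_1} = \rho_{AB_{m+1}}$ by permutation symmetry needs to be invoked explicitly to match the statement of part (1), but neither of these changes the substance.
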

\begin{proof}
The proof technique of \cite[Thm.\ 2.3]{berta2021semidefinite} extends to the additional constraints. Let $\mathcal{M}_{B\rightarrow Z}$ be an informationally complete measurement and consider the classical-quantum state $\rho_{AZ_1^n}:=\lrbracket{\mathrm{id}_A\otimes\mathcal{M}_{B\rightarrow Z}^{\otimes n}}\lrbracket{\rho_{AB_1^n}}$; since $\rho_{AB_1^n}$ is symmetric w.r.t.\ $A$, so is $\rho_{AZ_1^n}$. The self-decoupling lemma \cite[Lemma 2.1]{brandao2013product} then certifies the existence of an $m\in\lrbrace{0, 1, \ldots, n-1}$ such that
\begin{align}
    \mathbb{E}_{z_1^m}\lrbrace{\left\lVert \rho_{AZ_{m+1}\vert z_1^m}-\rho_{A\vert z_1^m}\otimes\rho_{Z_{m+1}\vert z_1^m}  \right\rVert_1^2}\leq\frac{2\ln 2 \log_2 \lrvert{A}}{n},\,
\end{align}
with the convention that for $m=0$ the conditioning on $z_1^0$ is void and the expectation is trivial. In the remainder we keep the measurement on the first $m$ systems only, i.e., we work with the POVM elements $M^{z_1^m}_{B_1^m}$ of $\mathcal{M}^{\otimes m}$, and undo the measurement on the $(m+1)$-st system at the price of the corresponding distortion factor. By convexity, we have
\begin{align}
    \left\lVert \rho_{AB_{m+1}}-\sum_{z_1^m}p(z_1^m)\,\rho_{A\vert z_1^m}\otimes\rho_{B_{m+1}\vert z_1^m} \right\rVert_1\leq\min \lrbracket{f(A, B),\, f(B\vert\cdot)}\sqrt{\frac{2\ln 2 \log_2 \lrvert{A}}{n}},\,
\end{align}
where the measurement distortions can be bounded by \autoref{eqn:measurement_distortion_AB} or \autoref{eqn:measurement_distortion_A} for the informationally complete measurement mapping $\rho_{AB_{m+1}}$ to $\rho_{AZ_{m+1}}$. Next, we show that the candidate state $\sum_{z_1^m}p(z_1^m)\,\rho_{A\vert z_1^m}\otimes\rho_{B_{m+1}\vert z_1^m}$ satisfies the new constraints. In short, all maps in \autoref{eqn:deFinetti_constraints_appendix} act on systems disjoint from $B_1^m$ and hence commute with the measurement, and the constraint on Bob's $n$-th subsystem transfers to the $(m+1)$-st by permutation invariance. Subsequently, let $\frac{1}{N}$ denote the corresponding normalization factor of the post measurement state. Thus,
\begin{align}
\begin{split}
    \Theta_{A_L\rightarrow C_{A_L}}\lrbracket{\rho_{A\vert{z_1^m}}} &= \frac{1}{N} \Theta_{A_L\rightarrow C_{A_L}}\lrbracket{\Trr{Z_1^m B_{m+1}^n}{\lrbracket{\mathbb{1}_{AB_{m+1}^n}\otimes M_{B_1^m}^{z_1^m}}\rho_{AB_1^n}}}\\
    &= \frac{1}{N}\Trr{Z_1^m B_{m+1}^n}{\lrbracket{\mathbb{1}_{C_{A_L}A_RB_{m+1}^n}\otimes M_{B_1^m}^{z_1^m}}\Theta_{A_L\rightarrow C_{A_L}}\lrbracket{\rho_{AB_1^n}}}\\
    &\stackrel{\autoref{eqn:deFinetti_constraints_appendix}}{=}\frac{1}{N}\Trr{Z_1^m B_{m+1}^n}{\lrbracket{\mathbb{1}_{C_{A_L}A_RB_{m+1}^n}\otimes M_{B_1^m}^{z_1^m}}W_{C_{A_L}}\otimes\rho_{A_RB_1^n}}=W_{C_{A_L}}\otimes\rho_{A_R\vert z_1^m}
\end{split}
\end{align}
and
\begin{align}
\begin{split}
    \Omega_{A\rightarrow A}\lrbracket{\rho_{A\vert z_1^m}} &= \frac{1}{N} \Omega_{A\rightarrow A}\lrbracket{\Trr{Z_1^m B_{m+1}^n}{\lrbracket{\mathbb{1}_{AB_{m+1}^n}\otimes M_{B_1^m}^{z_1^m}}\rho_{AB_1^n}}}\\
    &=\frac{1}{N} \Trr{Z_1^m B_{m+1}^n}{\lrbracket{\mathbb{1}_{AB_{m+1}^n}\otimes M_{B_1^m}^{z_1^m}}\Omega_{A\rightarrow A}\lrbracket{\rho_{AB_1^n}}}\\
    &\stackrel{\autoref{eqn:deFinetti_constraints_appendix}}{=} \frac{1}{N}\Trr{Z_1^m B_{m+1}^n}{\lrbracket{\mathbb{1}_{AB_{m+1}^n}\otimes M_{B_1^m}^{z_1^m}}\rho_{AB_1^n}}=\rho_{A\vert z_1^m}\,.
\end{split}
\end{align}
Again, let $\frac{1}{R}$ denote the corresponding normalization factor. The permutation invariance of $\rho_{B_1^n}$ over the subsystems $B_1^n$ --- inherited from the symmetry of $\rho_{AB_1^n}$ w.r.t.\ $A$ by tracing out $A$ --- implies that the constraint in \autoref{eqn:deFinetti_constraints_appendix}, stated for the $n$-th subsystem, holds verbatim with $\lrbracket{B_L}_{m+1}$ in place of $\lrbracket{B_L}_{n}$ for every $m\in\lrbrace{0,\ldots,n-1}$ (swap the subsystems $m+1$ and $n$; cf.\ \autoref{prop:symmetric_states_have_symmetric_reduced_states}). Thus,
\begin{align}
    \begin{split}
        \Upsilon_{\lrbracket{B_L}_{m+1}\rightarrow C_{B_L}}\lrbracket{\rho_{B_{m+1}\vert z_1^m}} &= \frac{1}{R}  \Upsilon_{\lrbracket{B_L}_{m+1}\rightarrow C_{B_L}}\lrbracket{\Trr{Z_1^m B_{m+2}^n}{\lrbracket{\mathbb{1}_{B_{m+1}^n}\otimes M_{B_1^m}^{z_1^m}}\rho_{B_1^n}}}\\
        &=\frac{1}{R} \Trr{Z_1^m B_{m+2}^n}{\lrbracket{\mathbb{1}_{C_{B_L}\lrbracket{B_R}_{m+1}B_{m+2}^n}\otimes M_{B_1^m}^{z_1^m}}\Upsilon_{\lrbracket{B_L}_{m+1}\rightarrow C_{B_L}}\lrbracket{\rho_{B_1^n}}}\\
        &\stackrel{\autoref{eqn:deFinetti_constraints_appendix}}{=}\frac{1}{R} \Trr{Z_1^m B_{m+2}^n}{\lrbracket{\mathbb{1}_{C_{B_L}\lrbracket{B_R}_{m+1}B_{m+2}^n}\otimes M_{B_1^m}^{z_1^m}} K_{C_{B_L}}\otimes \rho_{B_1^m\lrbracket{B_R}_{m+1}
        B_{m+2}^n}} \\
        &= K_{C_{B_L}}\otimes \rho_{\lrbracket{B_R}_{m+1}\vert z_1^m}
    \end{split}
\end{align}
and
\begin{align}
    \begin{split}
        \Xi_{B_{m+1}\rightarrow B_{m+1}}\lrbracket{\rho_{B_{m+1}\vert z_1^m}} &= \frac{1}{R}  \Xi_{B_{m+1}\rightarrow B_{m+1}}\lrbracket{\Trr{Z_1^m B_{m+2}^n}{\lrbracket{\mathbb{1}_{B_{m+1}^n}\otimes M_{B_1^m}^{z_1^m}}\rho_{B_1^n}}}\\
        &=\frac{1}{R} \Trr{Z_1^m B_{m+2}^n}{\lrbracket{\mathbb{1}_{B_{m+1}^n}\otimes M_{B_1^m}^{z_1^m}}\Xi_{B_{m+1}\rightarrow B_{m+1}}\lrbracket{\rho_{B_1^n}}}\\
        &\stackrel{\autoref{eqn:deFinetti_constraints_appendix}}{=}\frac{1}{R} \Trr{Z_1^m B_{m+2}^n}{\lrbracket{\mathbb{1}_{B_{m+1}^n}\otimes M_{B_1^m}^{z_1^m}}\rho_{B_1^n}} =\rho_{B_{m+1}\vert z_1^m}\,.\\
    \end{split}
\end{align}
Setting $\mathcal{X}:=\lrbrace{z_1^m}$, $p(x):=p(z_1^m)$, $\sigma_A^x:=\rho_{A\vert z_1^m}$ and $\sigma_B^x:=\rho_{B_{m+1}\vert z_1^m}$ establishes items 1 and 2; by the permutation invariance of $\rho_{AB_1^n}$, the identification of $B$ with $B_{m+1}$ is immaterial.
\end{proof}

\begin{remark}[Application to Bose-symmetric states]\label{rem:deFinetti_bose_symmetric}
    If $\rho_{AB_1^n}$ is Bose-symmetric w.r.t.\ $A$ (\autoref{def:bose_symmetric}), then it is in particular symmetric w.r.t.\ $A$ (cf.\ \autoref{sec:Bose_sym_easy}), and the classical-quantum state $\rho_{AZ_1^n}$ obtained from the product measurement $\mathcal{M}^{\otimes n}$ remains symmetric w.r.t.\ $A$ --- the informationally complete measurement is applied after the fact and need not respect Bose symmetry. \autoref{lem:approximate_quantum_de_finetti} therefore applies verbatim to the Bose-symmetric hierarchy of \autoref{lem:bose-symmetric_hierarchy}.
\end{remark}

Roughly speaking, \autoref{lem:approximate_quantum_de_finetti} states that even under the additional assumption of linear constraints a hierarchy of outer approximations can be used to approximate \eqref{eq:optimization_task}. Even more so, it provides an upper bound on the distance of the $n$-extendable state to a separable state. Importantly, each state in the separable state decomposition also adheres to the linear constraints. Using this de Finetti theorem, the constrained bilinear optimization problem \eqref{eq:optimization_task} can be relaxed to an SDP corresponding to an optimization over an outer approximation of the set of separable states yielding an upper bound on \eqref{eq:optimization_task}. Concretely, one obtains a hierarchy of SDP relaxations indexed by the (symmetric) extension level $n$ of $\rho_{AB_1^n}$ with respect to $A$ (i.e.\ approximating $\rho_{AB}\in\text{Sep}(A\,{:}\,B)$ with $n$-extendable states $\rho_{AB_1^n}$). The following proposition specifies this hierarchy and quantifies its rate of convergence.
    
\begin{proposition}[Hierarchy of SDP relaxations]\label{prop:hierarchy_outer_approx}
    Identify $B_1:=B$, i.e. $\rho_{AB}=\rho_{AB_1}=\Tr_{B_2^n}\left(\rho_{AB_1^n}\right)$. 
    The level $n$ SDP relaxation of  $\mathrm{cSEP}(G)$ is given by
    \begin{align*}
        \mathrm{SDP}_n(G) := \max_{\rho_{AB_1}} \text{tr}\left[G_{AB}\rho_{AB_1}\right]
    \end{align*}
    subject to
    \begin{multicols}{2}
    \begin{enumerate}
        \item $\rho_{AB_1^n}\succeq 0$ and $\Tr[\rho_{AB_1^n}]=1$,
        \item $\rho_{AB_1^n} = U_{B_1^n}^T(\sigma)\left(\rho_{AB_1^n}\right) U_{B_1^n}(\sigma) \quad \forall\sigma\in S_n$,
        \item $\Theta_{A_L\rightarrow C_{A_L}}\lrbracket{\rho_{AB_1^n}} = W_{C_{A_L}}\otimes\rho_{A_RB_1^n}$,
        \item $\Omega_{A\rightarrow A}\lrbracket{\rho_{AB_1^n}}=\rho_{AB_1^n}$,
        \item $\Upsilon_{\lrbracket{B_L}_n\rightarrow C_{B_L}}\lrbracket{\rho_{B_1^n}} = K_{C_{B_L}}\otimes \rho_{B_1^{n-1}\lrbracket{B_R}_n}$,
        \item $\Xi_{B_n\rightarrow B_n}\lrbracket{\rho_{B_1^n}}=\rho_{B_1^n}$,
\end{enumerate}
\end{multicols}
yielding a sequence of upper bounds on $\mathrm{cSEP}(G)$ converging from above with $n\rightarrow\infty$, i.e.\ 
\begin{align}\label{eqn:old_sdp_convergence}
     0\leq \mathrm{SDP}_n(G) - \mathrm{cPROD}(G) \leq \min\lbrace f(A, B),\, f(B\vert \cdot)\rbrace \sqrt{\frac{2\ln 2 \log_2 \lrvert{A}}{n}}.\,
\end{align}%
\end{proposition}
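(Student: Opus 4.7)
The proof has two directions: establishing the inequality $\mathrm{SDP}_n^*(G) \geq Q^*(G)$ and the quantitative upper bound. The first direction is a direct lifting argument, while the second follows from the constrained de Finetti theorem \autoref{lem:approximate_quantum_de_finetti}.

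\emph{Step 1 (feasibility of the SDP lift).} For any feasible point $\rho_{AB} = \sum_{x\in\mathcal{X}} p(x)\, \rho_A^x \otimes \rho_B^x$ of $\mathbf{cSEP(G)}$, I would define the candidate extension
\[
\rho_{AB_1^n} := \sum_{x\in\mathcal{X}} p(x)\, \rho_A^x \otimes \left(\rho_B^x\right)^{\otimes n}.
\]
Positivity, unit trace, and invariance under the $S_n$-action on $B_1^n$ are immediate. Alice's two constraints transfer verbatim by linearity, since they act only on $A$ and $\rho_A^x$ already satisfies them. For Bob's constraints, tracing out $B_1^{n-1}$ and applying $\Upsilon_{(B_L)_n \to C_{B_L}}$ (respectively $\Xi_{B_n \to B_n}$) on the last tensor factor gives, by linearity, $K_{C_{B_L}} \otimes \rho_{B_1^{n-1}(B_R)_n}$ (respectively $\rho_{B_1^n}$), because each $\rho_B^x$ obeys the local constraints of $\mathbf{cSEP(G)}$. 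Since the marginal of this extension on $AB_1$ equals $\rho_{AB}$ and the objective $\mathrm{tr}[G_{AB}\rho_{AB_1}]$ agrees with $\mathrm{tr}[G_{AB}\rho_{AB}]$, every feasible value of $\mathbf{cSEP(G)}$ is attained in the level-$n$ SDP, yielding $\mathrm{SDP}_n^*(G) \geq Q^*(G)$.

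\emph{Step 2 (quantitative de Finetti bound).} For the upper bound, let $\rho^*_{AB_1^n}$ be optimal for $\mathrm{SDP}_n^*(G)$. Since it satisfies all constraints of \autoref{lem:approximate_quantum_de_finetti} (symmetry on $B_1^n$ w.r.t.\ $A$, Alice's two constraints, and Bob's two constraints on system $B_n$), the lemma yields a separable approximation $\tilde\rho_{AB} := \sum_{x\in\mathcal{X}} p(x)\, \sigma_A^x \otimes \omega_B^x$ such that every $(\sigma_A^x, \omega_B^x)$ satisfies the local constraints of $\mathbf{cSEP(G)}$---so $\tilde\rho_{AB}$ is feasible for $\mathbf{cSEP(G)}$---and
\[
\bigl\| \rho^*_{AB} - \tilde\rho_{AB} \bigr\|_1 \;\leq\; \min\{f(A\mid B),\, f(B\mid\cdot)\}\,\sqrt{\frac{2\ln 2 \, \log_2 |A|}{n}}.
\]
Hölder's inequality (with the convention $\|G_{AB}\|_\infty \leq 1$ used throughout) then gives
\[
\mathrm{SDP}_n^*(G) - Q^*(G) \;\leq\; \mathrm{tr}\bigl[G_{AB}(\rho^*_{AB} - \tilde\rho_{AB})\bigr] \;\leq\; \|G_{AB}\|_\infty \cdot \bigl\| \rho^*_{AB} - \tilde\rho_{AB} \bigr\|_1,
\]
which combined with Step 1 gives the stated bound. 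The non-negativity is automatic from Step 1.

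\emph{Main obstacle.} The nontrivial content lies entirely in \autoref{lem:approximate_quantum_de_finetti}: Step 1 is a routine tensor-power construction, and the conversion from trace distance to objective gap is Hölder's inequality. Thus the hardest part---the extension of the information-theoretic de Finetti argument to the presence of both the local ``steering-type'' constraints $\Theta, \Upsilon$ and the fixed-point constraints $\Omega, \Xi$, while still producing separable states whose components \emph{individually} satisfy those constraints---has already been absorbed into the preceding lemma. The only subtlety at this stage is checking that the $n$-copy extension in Step 1 respects Bob's constraints even though these are stated on the $n$-th tensor factor, which is settled by the symmetry of $(\rho_B^x)^{\otimes n}$ and linearity.
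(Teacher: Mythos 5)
Your proof is correct and takes essentially the same approach as the paper: the tensor-power lift $\rho_A^x \otimes (\rho_B^x)^{\otimes n}$ for feasibility and the inequality $\mathrm{SDP}_n^*(G) \geq Q^*(G)$, and then the constrained de Finetti theorem (\autoref{lem:approximate_quantum_de_finetti}) combined with Hölder's inequality for the quantitative upper bound. The paper's own proof is actually terser on the second direction (it merely cites the lemma as "certifying the convergence rate"), so your explicit Hölder step and the remark on the implicit normalization $\|G_{AB}\|_\infty \leq 1$ make the argument slightly more complete.
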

\begin{proof}
    Let $\rho_{AB}=\sum_{x\in\mathcal{X}}p(x)\,\rho_A^x\otimes\rho_B^x$ be feasible for $\mathrm{cSEP}(G)$. A separable state is $n$-extendable for any $n$. Define
    \begin{align}
        \rho_{AB_1^n}:=\sum_{x\in\mathcal{X}}p(x)\,\rho_A^x\otimes\lrbracket{\rho_B^x}^{\otimes n}\,.
    \end{align} This extension is positive semidefinite, has unit trace, and is invariant under permutations of $B_1^n$. Using that each pair $\lrbracket{\rho_A^x, \rho_B^x}$ satisfies the constraints of \autoref{eq:bilinear_optimisation_problem_sep}, we compute
    \begin{align}
    \begin{split}
         \Theta_{A_L\rightarrow C_{A_L}}\lrbracket{\rho_{AB_1^n}}&=\sum_{x\in\mathcal{X}}p(x)\, \Theta_{A_L\rightarrow C_{A_L}}\lrbracket{\rho_A^x}\otimes\lrbracket{\rho_B^x}^{\otimes n}=\sum_{x\in\mathcal{X}}p(x)\,W_{C_{A_L}}\otimes\rho_{A_R}^x\otimes\lrbracket{\rho_B^x}^{\otimes n}\\
         &=W_{C_{A_L}}\otimes\rho_{A_RB_1^n}
    \end{split}
    \end{align}
    and
    \begin{align}
        \Omega_{A\rightarrow A}\lrbracket{\rho_{AB_1^n}}= \sum_{x\in\mathcal{X}}p(x)\, \Omega_{A\rightarrow A}\lrbracket{\rho_A^x}\otimes\lrbracket{\rho_B^x}^{\otimes n}= \rho_{AB_1^n}\,.
    \end{align}
    Moreover,
        \begin{align}
        \begin{split}
            \Upsilon_{\lrbracket{B_L}_n\rightarrow C_{B_L}}\lrbracket{\rho_{B_1^n}}&=\sum_{x\in\mathcal{X}}p(x)\, \lrbracket{\rho_B^x}^{\otimes n-1}\otimes\Upsilon_{\lrbracket{B_L}\rightarrow C_{B_L}}\lrbracket{\rho_B^x}=\sum_{x\in\mathcal{X}}p(x)\, \lrbracket{\rho_B^x}^{\otimes n-1}\otimes K_{C_{B_L}}\otimes\rho^x_{B_R}\\
            &=K_{C_{B_L}}\otimes\rho_{B_1^{n-1}\lrbracket{B_R}_n}
        \end{split}
    \end{align}
    and 
    \begin{align}
        \Xi_{B_n\rightarrow B_n}\lrbracket{\rho_{B_1^n}}=\sum_{x\in\mathcal{X}}p(x)\, \lrbracket{\rho_B^x}^{\otimes n-1}\otimes\Xi_{B\rightarrow B}\lrbracket{\rho_B^x}=\rho_{B_1^n}.
    \end{align}
    Conversely, the constrained de Finetti theorem in \autoref{lem:approximate_quantum_de_finetti} certifies the convergence rate of $\mathrm{SDP}_n(G)$ to $\mathrm{cSEP}(G)$. 
\end{proof}

We briefly elaborate on the limit problem of the proposed SDP hierarchy in \autoref{prop:hierarchy_outer_approx}. For each $n\in\mathbb{N}$, let $\mathcal{F}_n$ denote the feasible set of $\mathrm{SDP}_n(G)$, and define its one-copy marginal feasible set by 
\begin{align}
    \mathcal{K}_n:= \lrbrace{\rho_{AB_1}\,:\, \exists\,\rho_{AB_1^n}\in \mathcal{F}_n \text{ s.th. } \rho_{AB_1}
        = \Trr{B_2^n}{\rho_{AB_1^n}}}
\end{align}
The limiting feasible set is 
\begin{align}
    \mathcal{K}_\infty:= \bigcap_{n\geq 1}\mathcal{K}_n.
\end{align}
We then define
\begin{align}
    \mathrm{SDP}_\infty(G):=\max_{\rho_{AB}\in \mathcal{K}_\infty}\Trr{}{G_{AB}\rho_{AB}}.
\end{align}
Since the sets $\mathcal{K}_n$ are compact, convex, and nested $\mathcal{K}_{n+1}\subseteq \mathcal{K}_n$, and since the objective function $ \rho_{AB}\mapsto \Trr{}{G_{AB}\rho_{AB}}$ is continuous, we have
\begin{align}
    \mathrm{SDP}_\infty(G)=\lim_{n\to\infty}\mathrm{SDP}_n(G)=\inf_{n\geq 1}\mathrm{SDP}_n(G).
\end{align}
By the de Finetti theorem in \autoref{lem:approximate_quantum_de_finetti}, every $\rho_{AB}\in \mathcal{K}_\infty$ is a constrained separable state. Conversely, every constrained separable state belongs to $\mathcal{K}_\infty$. Hence 
\begin{align}
    \mathrm{SDP}_\infty(G)=\mathrm{cSEP}(G)= \mathrm{cPROD}(G).
\end{align}
Moreover, let $\rho^{\star}_{AB_1^n}$ be optimal for $\mathrm{SDP}_n(G)$, and let
\begin{align}
    \rho^{(n, \star)}_{AB}:=\Trr{B_2^n}{\rho^{\star}_{AB_1^n}}. 
\end{align}
Since $\mathcal{S}(AB)$ is compact, the sequence $\lrbracket{\rho^{(n, \star)}_{AB}}_{n\in\mathbb{N}}$ has an accumulation point. Let $\sigma^\star_{AB}$ be such an accumulation point along a subsequence. Then
\begin{align}
    \sigma^\star_{AB}\in \mathcal{K}_\infty,\quad\text{and}\quad \Trr{}{G_{AB}\sigma^\star_{AB}}=\mathrm{SDP}_\infty(G)= \mathrm{cPROD}(G).
\end{align}

\section{Proof of \autoref{lem:non_local_games_as_cbo_sym}}\label{sec:quantum_steering}

For completeness, we recall two relevant results from the quantum steering literature, which will be used in the subsequent proof. For an introduction to the concepts underlying quantum steering, we refer the reader to \cite{Uola_2020}.

\begin{proposition}[Transpose trick]\label{prop:transpose_trick}
Let $\lrbrace{\ket{i}_A}_i$ and $\lrbrace{\ket{i}_B}_i$ be fixed orthonormal bases of $\HS_A$ and $\HS_B$ with $\text{dim}(\HS_A)=\text{dim}(\HS_B)$, and let $\ket{\Omega}_{AB}:=\sum_i\ket{i}_A\otimes\ket{i}_B$ denote the (non-normalized) maximally entangled vector. For any operator $M\in\mathcal{B}\left(\HS_A\right)$ with transpose $M^T$ taken with respect to these bases,
    \begin{align}
        \begin{split}
            \left(M_A\otimes\mathbb{1}_B\right)\ket{\Omega}_{AB} = \left(\mathbb{1}_A\otimes M^T_B\right)\ket{\Omega}_{AB}.\,
        \end{split}
    \end{align}
\end{proposition}
\begin{proof}
    Both sides equal $\sum_{i,j} M_{ji}\ket{j}_A\otimes\ket{i}_B$: the left-hand side by applying $M$ to the first tensor factor, the right-hand side since $\lrbracket{M^T}_{ij}=M_{ji}$.
\end{proof}

\begin{proposition}[Swap trick]\label{prop:swap_trick_renner}
    Consider any $N,M\in\mathcal{B}(\HS_A)$ and let $S_{A\hat{A}}\in\mathcal{B}(\HS_A\otimes\HS_{\hat{A}})$ be the (non-normalised) swap operator defined as $S_{A\hat{A}}:=\sum_{i,j}\ket{i}\bra{j}_{A}\otimes\ket{j}\bra{i}_{\hat{A}}$. Then, $ \Tr\left[MN\right]=\Tr\left[S_{A\hat{A}}\left(M\otimes N\right)\right]$.
\end{proposition}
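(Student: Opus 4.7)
The plan is to prove the identity by a direct computation, expanding the swap operator in the computational basis and exploiting the tensor-product structure of the trace. First I would write out
\begin{align*}
S_{A\hat{A}}(M\otimes N) = \sum_{i,j}\bigl(\ket{i}\bra{j}M\bigr)_{A}\otimes\bigl(\ket{j}\bra{i}N\bigr)_{\hat{A}},
\end{align*}
using the explicit form of $S_{A\hat{A}}$ given in the statement.

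Next I would take the trace, which on a tensor product factorizes as $\Tr_{A\hat{A}} = \Tr_A\otimes \Tr_{\hat{A}}$, yielding
\begin{align*}
\Tr\bigl[S_{A\hat{A}}(M\otimes N)\bigr] = \sum_{i,j}\Tr\bigl[\ket{i}\bra{j}M\bigr]\cdot \Tr\bigl[\ket{j}\bra{i}N\bigr] = \sum_{i,j}\bra{j}M\ket{i}\bra{i}N\ket{j}.
\end{align*}
The final step is to collapse the sum over $i$ using the completeness relation $\sum_i \ket{i}\bra{i} = \mathcal{I}_A$, which gives
\begin{align*}
\sum_j \bra{j}M\,\mathcal{I}\,N\ket{j} = \sum_j \bra{j}MN\ket{j} = \Tr[MN],
\end{align*}
as claimed.

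There is no real obstacle here: the identity is essentially a bookkeeping exercise, and the only point requiring any care is that the swap operator is taken to be non-normalized (so there is no factor of $1/d$ to track) and that the chosen orthonormal basis used to define $S_{A\hat{A}}$ coincides with the basis used for the completeness relation — which it does by construction. The proof therefore consists of the three short displayed computations above.
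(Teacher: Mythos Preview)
Your proof is correct and is the standard direct computation for the swap trick. The paper itself does not actually prove this proposition: it is stated there as a recalled result from the literature (alongside the transpose trick) and used without further justification, so there is no paper proof to compare against.
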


To prove \autoref{lem:non_local_games_as_cbo_sym}, we first prove the following intermediate result.
\begin{proposition}\label{prop:non_local_games_as_cSEP_intermediate}
The value of a two-player free non-local game with $\lrvert{A}$ many answers, $\lrvert{Q}$ many questions and quantum assistance of size $\lrvert{T}$ is given by the following bipartite constrained separability problem
        \begin{align}\label{eqn:non_local_games_as_cSEP}
            \begin{split}
                 w_{Q(T)}(V, \pi) = &\lvert T\rvert\max_{(\alpha, D)}\hspace{0.5cm}\Tr\left[\left(V_{A_1A_2Q_1Q_2}\otimes S_{\hat{T}\tilde{T}}\right)\left(\alpha_{A_1Q_1\tilde{T}}\otimes D_{A_2Q_2\hat{T}}\right)\right]\\
                 \\
                 \text{s.t. } &\alpha_{A_1Q_1\tilde{T}} = \sum_{a_1,q_1}\pi_1(q_1)\ket{a_1q_1}\bra{a_1q_1}_{A_1Q_1}\otimes \alpha_{\tilde{T}}(a_1\vert q_1)\succcurlyeq 0,\,\\
        & \alpha_{Q_1\tilde{T}} = \sum_{q_1}\pi_1(q_1)\ket{q_1}\bra{q_1}_{Q_1}\otimes \sum_{a_1}\alpha_{\tilde{T}}(a_1\vert q_1=1),\, \hspace{0.5cm} \Trr{}{\alpha_{A_1Q_1\tilde{T}}}=1,\,\\
        & D_{A_2Q_2\hat{T}} =\sum_{a_2,q_2}\pi_2(q_2)\ket{a_2q_2}\bra{a_2q_2}_{A_2Q_2}\otimes\frac{D_{\hat{T}}(a_2\vert q_2)}{\lvert T\rvert}\succcurlyeq 0,\,\\
              &D_{Q_2\hat{T}} =\sum_{q_2}\pi_2(q_2)\ket{q_2}\bra{q_2}_{Q_2}\otimes\frac{\mathbb{1}_{\hat{T}}}{\lvert T\rvert},\,\\
            \end{split}
        \end{align}
        where $S_{\hat{T}\tilde{T}}$ is the swap operator, $\lrvert{T}=\lrvert{\tilde{T}}=\lrvert{\hat{T}}$, and $q_1=1$ in the second constraint denotes an arbitrary fixed reference question.
\end{proposition}

\begin{proof}
Let \autoref{eqn:deFinetti_CBO_free_games} be denoted by $\lrbracket{\mathrm{I}}$ and \autoref{eqn:non_local_games_as_cSEP} by $\lrbracket{\mathrm{II}}$. Clearly, any $D_{A_2Q_2\hat{T}}$ belonging to the feasible set of $\lrbracket{\mathrm{I}}$ is also a feasible point of problem $\lrbracket{\mathrm{II}}$ and vice versa. 

Let $E_{A_1Q_1T}$ and $\rho_{T\hat{T}}$ be feasible points for $\lrbracket{\mathrm{I}}$. Define for each $a_1\in A_1$, $q_1\in Q_1$
\begin{align}\label{eqn:assemblage_tick_in_proof}
    \beta_{\hat{T}}(a_1\vert q_1) := \Trr{T}{\lrbracket{E_T(a_1\vert q_1)\otimes \mathbb{1}_{\hat{T}}}\rho_{T\hat{T}}}
\end{align}
and
\begin{align}
\begin{split}
    \beta_{A_1Q_1\hat{T}}&:= \sum_{a_1,q_1}\pi_1(q_1)\ket{a_1q_1}\bra{a_1q_1}_{A_1Q_1} \otimes \beta_{\hat{T}}(a_1\vert q_1)=\Tr_{T}\lrrec{\lrbracket{E_{A_1Q_1T}\otimes\mathbb{1}_{\hat{T}}}\lrbracket{\mathbb{1}_{A_1Q_1}\otimes\rho_{T\hat{T}}}}.\,
\end{split}
\end{align}
Since $\rho_{T\hat{T}},\, E_{A_1Q_1T}\succcurlyeq 0$ and with the cyclicity of $\Trr{T}{\cdot}$ we have
\begin{align}
\begin{split}
    \beta_{A_1Q_1\hat{T}} &= \Tr_{T}\lrrec{\lrbracket{\sum_{a_1,q_1}\pi_1(q_1)\ket{a_1q_1}\bra{a_1q_1}_{A_1Q_1}\otimes E_{T}\lrbracket{a_1\vert q_1}\otimes\mathbb{1}_{\hat{T}}}\lrbracket{\mathbb{1}_{A_1Q_1}\otimes\rho_{T\hat{T}}}}\\
    &= \sum_{a_1,q_1}\pi_1(q_1)\ket{a_1q_1}\bra{a_1q_1}_{A_1Q_1} \otimes \Trr{T}{\lrbracket{E_T(a_1\vert q_1)\otimes\mathbb{1}_{\hat{T}}}\rho_{T\hat{T}}}\\
    &=\sum_{a_1,q_1}\pi_1(q_1)\ket{a_1q_1}\bra{a_1q_1}_{A_1Q_1} \otimes \Trr{T}{\lrbracket{E_T(a_1\vert q_1)\otimes\mathbb{1}_{\hat{T}}}^{\frac{1}{2}}\rho_{T\hat{T}}\lrbracket{E_T(a_1\vert q_1)\otimes\mathbb{1}_{\hat{T}}}^{\frac{1}{2}}}\succcurlyeq 0.\,\\
\end{split}
\end{align}
Furthermore, using the condition $\sum_{a_1}E_{T}(a_1\vert q_1)=\mathbb{1}_T$ for all $q_1\in Q_1$, we obtain
\begin{align}
   \begin{split}
       \Trr{A_1}{\beta_{A_1Q_1\hat{T}}}&=  \Trr{A_1}{\Trr{T}{\lrbracket{E_{A_1Q_1T}\otimes\mathbb{1}_{\hat{T}}}\lrbracket{\mathbb{1}_{A_1Q_1}\otimes\rho_{T\hat{T}}}}}
       = \Trr{T}{\lrbracket{E_{Q_1T}\otimes\mathbb{1}_{\hat{T}}}\lrbracket{\mathbb{1}_{Q_1}\otimes\rho_{T\hat{T}}}} \\
       &= \Trr{T}{\lrbracket{\sum_{q_1}\pi_1(q_1)\ket{q_1}\bra{q_1}_{Q_1}\otimes \mathbb{1}_{T\hat{T}}}\lrbracket{\mathbb{1}_{Q_1}\otimes \rho_{T\hat{T}}}}
       = \sum_{q_1}\pi_1(q_1)\ket{q_1}\bra{q_1}_{Q_1}\otimes \rho_{\hat{T}}\\
       &=\sum_{q_1}\pi_1(q_1)\ket{q_1}\bra{q_1}_{Q_1}\otimes\sum_{a_1}\beta_{\hat{T}}(a_1\vert q_1)
   \end{split}
\end{align}
and hence $\Trr{}{\rho_{\hat{T}}}=1 $ implies  $\Trr{}{\beta_{A_1Q_1\hat{T}}}=1$. Since $\lrvert{\hat{T}}=\lrvert{\tilde{T}}$, and thus, $\HS_{\hat{T}}\simeq\HS_{\tilde{T}}$, $\beta_{A_1Q_1\tilde{T}}:=\beta_{A_1Q_1\hat{T}}$ is a feasible point to $\lrbracket{\mathrm{II}}$. Since 
\begin{align}
\begin{split}
    \Tr\left[\left(V_{A_1A_2Q_1Q_2}\otimes\rho_{T\hat{T}}\right)\left(E_{A_1Q_1T}\otimes D_{A_2Q_2\hat{T}}\right)\right] &\stackrel{\autoref{eqn:assemblage_tick_in_proof}}{=} \Tr\lrrec{\lrbracket{V_{A_1A_2Q_1Q_2}\otimes\mathbb{1}_{\hat{T}}}\lrbracket{\beta_{A_1Q_1\hat{T}}\otimes\mathbb{1}_{A_2Q_2}}\lrbracket{\mathbb{1}_{A_1Q_1}\otimes D_{A_2Q_2\hat{T}}}}\\
    &\stackrel{\autoref{prop:swap_trick_renner}}{=}\Tr\left[\left(V_{A_1A_2Q_1Q_2}\otimes S_{\hat{T}\tilde{T}}\right)\left(\beta_{A_1Q_1\tilde{T}}\otimes D_{A_2Q_2\hat{T}}\right)\right],
\end{split}
\end{align}
the construction in \autoref{eqn:assemblage_tick_in_proof} converts feasible points from $\lrbracket{\mathrm{I}}$ to feasible points of $\lrbracket{\mathrm{II}}$ with the same objective value; hence $\lrbracket{\mathrm{II}}\geq\lrbracket{\mathrm{I}}$.

Conversely, given a feasible $\alpha_{A_1Q_1\tilde{T}}$ for $\lrbracket{\mathrm{II}}$, we define, for each $q_1\in Q_1$, $\sigma_{\tilde{T}}^{q_1} := \sum_{a_1}\alpha_{\tilde{T}}(a_1\vert q_1)$. The constraint on $\alpha_{Q_1\tilde{T}}$ implies $\sum_{a_1}\alpha_{\tilde{T}}(a_1\vert q_1)=\sum_{a_1}\alpha_{\tilde{T}}(a_1\vert q'_1)$ for any $q'_1\in Q_1$. Thus, we have  $\sigma_{\tilde{T}}^{q_1}=\sigma_{\tilde{T}}^{q'_1}=:\sigma_{\tilde{T}}$. Furthermore, $\alpha_{A_1Q_1\tilde{T}}\succcurlyeq 0$ implies $\alpha_{\tilde{T}}(a_1\vert q_1)\succcurlyeq 0$ for all $a_1\in A_1$ and $q_1\in Q_1$. This in turn implies $\sigma_{\tilde{T}}\succcurlyeq 0$. Moreover, $1=\Trr{}{\alpha_{A_1Q_1\tilde{T}}}=\sum_{q_1}\pi_1(q_1)\,\Trr{}{\sigma_{\tilde{T}}}=\Trr{}{\sigma_{\tilde{T}}}$. Let
\begin{align}
     \sigma_{\tilde{T}} = \sum_{k=1}^{\text{rank}\lrbracket{ \sigma_{\tilde{T}}}} \lambda_k\ket{e_k}\bra{e_k}_{\tilde{T}}
\end{align}
be its eigenbasis decomposition and 
\begin{align}
    \ket{\Psi\lrbracket{\sigma_{\tilde{T}}}}_{T\tilde{T}} := \sum_{k=1}^{\text{rank}\lrbracket{ \sigma_{\tilde{T}}}} \sqrt{ \lambda_k}\ket{e_k}_T\otimes\ket{e_k}_{\tilde{T}}
\end{align}
a purification of $\sigma_{\tilde{T}}$ written in Schmidt form. By construction, $\rho_{T\tilde{T}}^{\sigma}:=\ket{\Psi\lrbracket{\sigma_{\tilde{T}}}}\bra{\Psi\lrbracket{\sigma_{\tilde{T}}}}_{T\tilde{T}}$ is a normalized quantum state and thus feasible in  $\lrbracket{\mathrm{I}}$. To determine Alice's POVMs feasible in $\lrbracket{\mathrm{I}}$, we ask which operators yield $\alpha_{\tilde{T}}(a_1\vert q_1)$ as Bob's (subnormalized) post-measurement states upon Alice measuring $\rho_{T\tilde{T}}^{\sigma}$: for each $a_1\in A_1$, $q_1\in Q_1$ we seek $F_{T}(a_1\vert q_1)$ with
\begin{align}\label{eqn:steering_F_property}
    \alpha_{\tilde{T}}(a_1\vert q_1) = \Trr{T}{\lrbracket{F_{T}(a_1\vert q_1)\otimes\mathbb{1}_{\tilde{T}}}\rho_{T\tilde{T}}^{\sigma}};
\end{align}
existence and an explicit form follow below. Since any two purifications originating from $\sigma_{\tilde{T}}$ are equivalent up to isometries on $\HS_{T}$, there exists a unitary $U_{T}\in\mathcal{B}\lrbracket{\HS_{T}}$ such that 
\begin{align}
    \rho_{T\tilde{T}}^{\sigma}=\lrbracket{U^{\dagger}_{T}\otimes \sqrt{\sigma_{\tilde{T}}}}\rho^{max}_{T\tilde{T}} \lrbracket{U_{T}\otimes \sqrt{\sigma_{\tilde{T}}}}
\end{align}
with $\rho^{max}_{T\tilde{T}}$ the (non-normalized) maximally entangled state. Define
$\ket{\mathcal{I}_{\sigma}}_{\tilde{T}}:=\sum_{k=1}^{\text{rank}(\sigma_{\tilde{T}})}\ket{e_k}_{\tilde{T}}$ with $\lrbrace{\ket{e_k}}_k$ the eigenbasis to $\sigma_{\tilde{T}}$. Then, 
\begin{align}\label{eqn:proof_assemblages_construction_1}
    \begin{split}
    \rho_{T\tilde{T}}^{\sigma}&=\lrbracket{\mathbb{1}_T\otimes\sqrt{\sigma_{\tilde{T}}}}\ket{\mathcal{I}_{\sigma}}_{T}\bra{\mathcal{I}_{\sigma}}_{\tilde{T}}\lrbracket{\mathbb{1}_T\otimes\sqrt{\sigma_{\tilde{T}}}}
    \end{split}
\end{align}
and 
\begin{align}\label{eqn:proof_assemblages_construction_2}
    \ket{\mathcal{I}_{\sigma}}_{T}\bra{\mathcal{I}_{\sigma}}_{\tilde{T}}=\lrbracket{U^{\dagger}_{T}\otimes \mathbb{1}_{\tilde{T}}}\rho_{T\tilde{T}}^{max}\lrbracket{U_{T}\otimes \mathbb{1}_{\tilde{T}}}.\,
\end{align}
Assuming \autoref{eqn:steering_F_property}, we obtain
\begin{align}
\begin{split}
    \alpha_{\tilde{T}}(a_1\vert q_1)&=\Trr{T}{\lrbracket{F_{T}(a_1\vert q_1)\otimes\mathbb{1}_{\tilde{T}}}\rho_{T\tilde{T}}^{\sigma}}\\
    &\stackrel{\autoref{eqn:proof_assemblages_construction_1}}{=}\Trr{T}{\lrbracket{F_{T}(a_1\vert q_1)\otimes\mathbb{1}_{\tilde{T}}}\lrbracket{\mathbb{1}_T\otimes\sqrt{\sigma_{\tilde{T}}}}\ket{\mathcal{I}_{\sigma}}_{T}\bra{\mathcal{I}_{\sigma}}_{\tilde{T}}\lrbracket{\mathbb{1}_T\otimes\sqrt{\sigma_{\tilde{T}}}}}\\
    &= \Trr{T}{\lrbracket{\mathbb{1}_T\otimes\sqrt{\sigma_{\tilde{T}}}}\lrbracket{F_{T}(a_1\vert q_1)\otimes\mathbb{1}_{\tilde{T}}}\ket{\mathcal{I}_{\sigma}}_{T}\bra{\mathcal{I}_{\sigma}}_{\tilde{T}}\lrbracket{\mathbb{1}_T\otimes\sqrt{\sigma_{\tilde{T}}}}}\\
    &\stackrel{\autoref{prop:transpose_trick}}{=}\Trr{T}{\lrbracket{\mathbb{1}_T\otimes\sqrt{\sigma_{\tilde{T}}}}\lrbracket{\mathbb{1}_T\otimes \lrbracket{F_{\tilde{T}}(a_1\vert q_1)}^T}\ket{\mathcal{I}_{\sigma}}_{T}\bra{\mathcal{I}_{\sigma}}_{\tilde{T}}\lrbracket{\mathbb{1}_T\otimes\sqrt{\sigma_{\tilde{T}}}}}\\
    &\stackrel{\autoref{eqn:proof_assemblages_construction_2}}{=} \Trr{T}{\lrbracket{\mathbb{1}_T\otimes\sqrt{\sigma_{\tilde{T}}}}\lrbracket{\mathbb{1}_T\otimes \lrbracket{F_{\tilde{T}}(a_1\vert q_1)}^T}\lrbracket{U^{\dagger}_{T}\otimes \mathbb{1}_{\tilde{T}}}\rho_{T\tilde{T}}^{max}\lrbracket{U_{T}\otimes \mathbb{1}_{\tilde{T}}}\lrbracket{\mathbb{1}_T\otimes\sqrt{\sigma_{\tilde{T}}}}}\\
    &= \Trr{T}{\lrbracket{U_T^{\dagger}\otimes\sqrt{\sigma_{\tilde{T}}}\lrbracket{F_{\tilde{T}}(a_1\vert q_1)}^T}\rho_{T\tilde{T}}^{max}\lrbracket{U_{T}\otimes \sqrt{\sigma_{\tilde{T}}}}}\\
    &\stackrel{\text{cycl.}}{=}\Trr{T}{\lrbracket{U_T^{\dagger}U_T\otimes\sqrt{\sigma_{\tilde{T}}}\lrbracket{F_{\tilde{T}}(a_1\vert q_1)}^T}\rho_{T\tilde{T}}^{max}\lrbracket{\mathbb{1}_{T}\otimes \sqrt{\sigma_{\tilde{T}}}}}\\
    &=\Trr{T}{\lrbracket{\mathbb{1}_T\otimes\sqrt{\sigma_{\tilde{T}}}\lrbracket{F_{\tilde{T}}(a_1\vert q_1)}^T}\rho_{T\tilde{T}}^{max}\lrbracket{\mathbb{1}_{T}\otimes \sqrt{\sigma_{\tilde{T}}}}}\\
    &=\sqrt{\sigma_{\tilde{T}}}\lrbracket{F_{\tilde{T}}(a_1\vert q_1)}^T\sqrt{\sigma_{\tilde{T}}}
\end{split}
\end{align}
where the transposition is w.r.t.\ the eigenbasis of $\sigma_{\tilde{T}}$. Simple algebraic manipulation yields the pretty-good measurements
\begin{align}
    F_{\tilde{T}}(a_1\vert q_1)= \lrbracket{\frac{1}{\sqrt{\sigma_{\tilde{T}}}}\alpha_{\tilde{T}}(a_1\vert q_1)\frac{1}{\sqrt{\sigma_{\tilde{T}}}}}^T,
\end{align}
where $1/\sqrt{\sigma}$ denotes the square root of the Moore--Penrose pseudo-inverse (note that $\operatorname{supp}\lrbracket{\alpha_{\tilde{T}}(a_1\vert q_1)}\subseteq\operatorname{supp}\lrbracket{\sigma_{\tilde{T}}}$, since $0\preccurlyeq\alpha_{\tilde{T}}(a_1\vert q_1)\preccurlyeq\sigma_{\tilde{T}}$) and, since $\lrvert{T}=\lrvert{\tilde{T}}$, we may trivially reinterpret the operators as acting on $T$ instead of $\tilde{T}$. Since the transposition is w.r.t.\ the eigenbasis of $\sigma_{T}$, we have $\lrbracket{\sigma_{T}}^T=\sigma_T$ and similarly $\lrbracket{\frac{1}{\sqrt{\sigma_{T}}}}^T=\frac{1}{\sqrt{\sigma_{T}}}$. Thus, 
\begin{align}
    \begin{split}
        F_{T}(a_1\vert q_1)= \lrbracket{\alpha_{T}(a_1\vert q_1)\frac{1}{\sqrt{\sigma_{T}}}}^T\lrbracket{\frac{1}{\sqrt{\sigma_{T}}}}^T=\lrbracket{\frac{1}{\sqrt{\sigma_{T}}}}\lrbracket{\alpha_T(a_1\vert q_1)}^T\lrbracket{\frac{1}{\sqrt{\sigma_{T}}}}. 
    \end{split}
\end{align}
Moreover, since $\alpha_{T}(a_1\vert q_1)\succcurlyeq 0$, transposition preserves positive semidefiniteness, and $X\mapsto\sigma_T^{-1/2}X\,\sigma_T^{-1/2}$ is a congruence, it follows that $F_T(a_1\vert q_1)\succcurlyeq 0$. Thus, for all $q_1\in Q_1$
\begin{align}
\begin{split}
    \sum_{a_1}F_T(a_1\vert q_1) &= \frac{1}{\sqrt{\sigma_{T}}}\sum_{a_1}\lrbracket{\alpha_{T}(a_1\vert q_1)}^T\frac{1}{\sqrt{\sigma_{T}}}= \frac{1}{\sqrt{\sigma_{T}}}\lrbracket{\sum_{a_1}\alpha_{T}(a_1\vert q_1)}^T\frac{1}{\sqrt{\sigma_{T}}}\\
    &=\frac{1}{\sqrt{\sigma_{T}}}\lrbracket{\sigma_{T}}^T\frac{1}{\sqrt{\sigma_{T}}} =\Pi_{\operatorname{supp}\lrbracket{\sigma_T}}.
\end{split}
\end{align}
If $\sigma_T$ is rank-deficient, we complete the POVM by replacing $F_T(a_1^0\vert q_1)$ with $F_T(a_1^0\vert q_1)+\mathbb{1}_T-\Pi_{\operatorname{supp}\lrbracket{\sigma_T}}$ for an arbitrary fixed $a_1^0\in A_1$; since $\Trr{\tilde{T}}{\rho^{\sigma}_{T\tilde{T}}}$ is supported on $\operatorname{supp}\lrbracket{\sigma_T}$, this leaves \autoref{eqn:steering_F_property} unchanged. In summary,
\begin{align}
    F_{A_1Q_1T}:=\sum_{a_1,q_1}\pi_1(q_1)\ket{a_1q_1}\bra{a_1q_1}\otimes F_T(a_1\vert q_1)
\end{align}
is a feasible point of $\lrbracket{\mathrm{I}}$. Since
\begin{align}
    \begin{split}
        \Tr\left[\left(V_{A_1A_2Q_1Q_2}\otimes S_{\hat{T}\tilde{T}}\right)\left(\alpha_{A_1Q_1\tilde{T}}\otimes D_{A_2Q_2\hat{T}}\right)\right] &\stackrel{\autoref{prop:swap_trick_renner}}{=}  \Tr\left[\left(V_{A_1A_2Q_1Q_2}\otimes \mathbb{1}_{\hat{T}}\right)\lrbracket{\alpha_{A_1Q_1\hat{T}}\otimes\mathbb{1}_{A_2Q_2}}\lrbracket{\mathbb{1}_{A_1Q_1}\otimes D_{A_2Q_2\hat{T}}}\right]\\
        &\stackrel{\autoref{eqn:steering_F_property}}{=} \Tr\left[\left(V_{A_1A_2Q_1Q_2}\otimes \rho_{T\hat{T}}^{\sigma}\right)\left( F_{A_1Q_1T}\otimes D_{A_2Q_2\hat{T}}\right)\right],
    \end{split}
\end{align}
where in the last line we relabeled $\tilde{T}\to\hat{T}$, the construction maps feasible points of $\lrbracket{\mathrm{II}}$ to feasible points of $\lrbracket{\mathrm{I}}$ while preserving the objective value, whence $\lrbracket{\mathrm{I}}\geq\lrbracket{\mathrm{II}}$; combining both directions yields equality of the optimal values.
\end{proof}

\begin{proof}[Proof of \autoref{lem:non_local_games_as_cbo_sym}]
Denote the problem in \autoref{eqn:cSEP_non_local_main} from \autoref{lem:non_local_games_as_cbo_sym} as $\lrbracket{\mathrm{I}}$ and the problem in \autoref{eqn:non_local_games_as_cSEP} by $\lrbracket{\mathrm{II}}$. We prove the equivalence between $\lrbracket{\mathrm{I}}$ and $\lrbracket{\mathrm{II}}$. Recall, $\lrvert{T}=\lrvert{\hat{T}}=\lrvert{\tilde{T}}$; we write $T$ for $\tilde{T}$ throughout this proof. It suffices to analyze the one constraint in which the two problems differ. Let $\beta_{A_1Q_1T}=\sum_{a_1,q_1}\pi_1(q_1)\ket{a_1q_1}\bra{a_1q_1}\otimes\beta_T(a_1\vert q_1)$ be a feasible point of $\lrbracket{\mathrm{II}}$. Hence, 
\begin{align}
    \sum_{a_1}\beta_T(a_1\vert q_1)=\sum_{a_1}\beta_T(a_1\vert q_1'),\,\hspace{0.5cm}\forall q_1'\in Q_1
\end{align}
implies 
\begin{align}
    \beta_T=\sum_{q_1}\pi_1(q_1)\sum_{a_1}\beta_T(a_1\vert q_1)=\sum_{a_1}\beta_T(a_1\vert q_1'),\,\hspace{0.5cm}\forall q_1'\in Q_1
\end{align}
since $\sum_{a_1}\beta_T(a_1\vert q_1)$ is independent of $q_1$ for all $q_1\in Q_1$. Thus, $\beta_{A_1Q_1T}$ is also a feasible point of $\lrbracket{\mathrm{I}}$ attaining the same objective value. As this holds for all feasible points, optimality is preserved. Conversely, let $\gamma_{A_1Q_1T}=\sum_{a_1,q_1}\pi_1(q_1)\ket{a_1q_1}\bra{a_1q_1}\otimes\gamma_T(a_1\vert q_1)$ be a feasible point of $\lrbracket{\mathrm{I}}$, and let $\gamma_T$ denote the operator appearing in its marginal constraint. Comparing the $Q_1$-blocks of
\begin{align}
    \Trr{A_1}{\gamma_{A_1Q_1T}}=\sum_{q_1}\pi_1(q_1)\ket{q_1}\bra{q_1}_{Q_1}\otimes\sum_{a_1}\gamma_T(a_1\vert q_1) \stackrel{!}{=} \sum_{q_1}\pi_1(q_1)\ket{q_1}\bra{q_1}_{Q_1}\otimes\gamma_T
\end{align}
yields $\sum_{a_1}\gamma_T(a_1\vert q_1)=\gamma_T$ for every $q_1\in Q_1$. In particular, $\sum_{a_1}\gamma_T(a_1\vert q_1)$ is independent of $q_1$ and equals $\sum_{a_1}\gamma_T(a_1\vert q_1=1)$, so $\gamma_{A_1Q_1T}$ is feasible for $\lrbracket{\mathrm{II}}$ with the same objective value. As this holds for all feasible points, the optimal values coincide.
\end{proof}
\section{Proof of \autoref{prop:restriction_to_cq_states}}\label{sec:proof_cq_states}

\begin{proof}
The channel $\mathcal{P}_{\mathrm{cl}}$ is itself a pinching,
\begin{align}\label{eqn:pinching_kraus}
    \mathcal{P}_{\mathrm{cl}}\lrbracket{X}=\sum_{\substack{a_1,\, q_1,\,\\ \vec{a}_2,\, \vec{q}_2}}\Pi_{a_1q_1\vec{a}_2\vec{q}_2}\,X\,\Pi_{a_1q_1\vec{a}_2\vec{q}_2}\,,\hspace{0.5cm}\Pi_{a_1q_1\vec{a}_2\vec{q}_2}:=\ket{a_1q_1}\bra{a_1q_1}_{A_1Q_1}\otimes\mathbb{1}_{T}\otimes\ket{\vec{a}_2\vec{q}_2}\bra{\vec{a}_2\vec{q}_2}\otimes\mathbb{1}_{\hat{T}_1^n}\,,
\end{align}
with mutually orthogonal projectors summing to the identity. We freely use the following elementary consequences of \autoref{eqn:pinching_kraus}:
(i) $\mathcal{P}_{\mathrm{cl}}$, and likewise any subcomposition of the factors in \autoref{eqn:pinching_channel_cl}, is completely positive, trace-preserving, unital and idempotent;
(ii) $\mathcal{P}_{\mathrm{cl}}$ is self-adjoint with respect to the Hilbert-Schmidt inner product, i.e.\ $\Trr{}{Y\,\mathcal{P}_{\mathrm{cl}}\lrbracket{X}}=\Trr{}{\mathcal{P}_{\mathrm{cl}}\lrbracket{Y}\,X}$, by cyclicity of the trace and $\Pi^2_{a_1q_1\vec{a}_2\vec{q}_2}=\Pi_{a_1q_1\vec{a}_2\vec{q}_2}$;
(iii) for any register $C$ appearing in \autoref{eqn:pinching_channel_cl} we have $\Tr_C\circ\,\mathcal{P}_C=\Tr_C$, and $\mathcal{P}_C$ commutes with partial traces over registers disjoint from $C$;
(iv) $X=\mathcal{P}_{\mathrm{cl}}\lrbracket{X}$ holds if and only if $\Pi_{a_1q_1\vec{a}_2\vec{q}_2}\, X\,\Pi_{a_1'q_1'\vec{a}_2'\vec{q}_2'}=0$ whenever $\lrbracket{a_1,q_1,\vec{a}_2,\vec{q}_2}\neq\lrbracket{a_1',q_1',\vec{a}_2',\vec{q}_2'}$, i.e.\ if and only if $X$ is of the form \autoref{eqn:cq_states_form}; for $X\succcurlyeq 0$ the positivity of the blocks follows since each $\rho^{\lrbracket{a_1,q_1,\vec{a}_2,\vec{q}_2}}_{T\hat{T}_1^n}$ is a compression of $X$.
 
We verify that $\mathcal{P}_{\mathrm{cl}}\lrbracket{\rhoge}$ satisfies all constraints of $\mathrm{SDP}_n\lrbracket{T,V,\pi}$. Positivity and normalization in \autoref{eqn:sdp_free_game_definetti_2} are preserved by (i). For the symmetry constraint, recall that the permutation representation factorizes as $U_{\An}\lrbracket{\pi}=U_{(A_2Q_2)_1^n}\lrbracket{\pi}\otimes U_{\hat{T}_1^n}\lrbracket{\pi}$. Hence, conjugation by $\mathbb{1}_{A_1Q_1T}\otimes U_{\An}\lrbracket{\pi}$ maps $\Pi_{a_1q_1\vec{a}_2\vec{q}_2}$ to $\Pi_{a_1q_1\pi\lrbracket{\vec{a}_2}\pi\lrbracket{\vec{q}_2}}$ with $\pi\lrbracket{\vec{a}_2}_i:=a_{2,\pi^{-1}(i)}$, i.e.\ it merely relabels the Kraus operators in \autoref{eqn:pinching_kraus}. Writing $\bar{U}_{\pi}:=\mathbb{1}_{A_1Q_1T}\otimes U_{\An}\lrbracket{\pi}$, we thus have
\begin{align}\label{eqn:pinching_covariance}
    \mathcal{P}_{\mathrm{cl}}\lrbracket{\bar{U}_{\pi}\,X\,\bar{U}_{\pi}^T}=\bar{U}_{\pi}\,\mathcal{P}_{\mathrm{cl}}\lrbracket{X}\,\bar{U}_{\pi}^T\hspace{0.5cm}\forall\pi\in S_n\,,
\end{align}
so the symmetry of $\rhoge$ in \autoref{eqn:sdp_free_game_definetti_2} implies that of $\mathcal{P}_{\mathrm{cl}}\lrbracket{\rhoge}$. For the first constraint in \autoref{eqn:sdp_constraints}, property (iii) yields, with $\mathcal{P}_{(A_2Q_2)_1^n}:=\mathcal{P}_{(A_2Q_2)_1}\circ\ldots\circ\mathcal{P}_{(A_2Q_2)_n}$ and $\mathcal{P}_{A_1Q_1}=\mathcal{P}_{A_1}\circ\mathcal{P}_{Q_1}$,
\begin{align}
    \begin{split}
    &\Trr{A_1}{\mathcal{P}_{\mathrm{cl}}\lrbracket{\rhoge}}=\lrbracket{\mathcal{P}_{Q_1}\circ\mathcal{P}_{(A_2Q_2)_1^n}}\lrbracket{\Trr{A_1}{\rhoge}}\\
    &\stackrel{\autoref{eqn:sdp_constraints}}{=}\lrbracket{\mathcal{P}_{Q_1}\circ\mathcal{P}_{(A_2Q_2)_1^n}}\lrbracket{\sum_{q_1}\pi_1(q_1)\ket{q_1}\bra{q_1}\otimes\rho_{T\An}}\\
    &=\sum_{q_1}\pi_1(q_1)\ket{q_1}\bra{q_1}\otimes\mathcal{P}_{(A_2Q_2)_1^n}\lrbracket{\rho_{T\An}}\\
    &=\sum_{q_1}\pi_1(q_1)\ket{q_1}\bra{q_1}\otimes\Trr{A_1Q_1}{\mathcal{P}_{\mathrm{cl}}\lrbracket{\rhoge}}\,,
    \end{split}
\end{align}
where the last equality holds again by (iii). For the second constraint in \autoref{eqn:sdp_constraints}, note first that $\Trr{A_1Q_1T}{\mathcal{P}_{\mathrm{cl}}\lrbracket{\rhoge}}=\mathcal{P}_{(A_2Q_2)_1^n}\lrbracket{\rho_{\An}}$ by (iii), and thus
\begin{align}
    \begin{split}
    &\Trr{(A_2)_1}{\mathcal{P}_{(A_2Q_2)_1^n}\lrbracket{\rho_{\An}}}=\lrbracket{\mathcal{P}_{(Q_2)_1}\circ\mathcal{P}_{(A_2Q_2)_2^n}}\lrbracket{\Trr{(A_2)_1}{\rho_{\An}}}\\
    &\stackrel{\autoref{eqn:sdp_constraints}}{=}\lrbracket{\mathcal{P}_{(Q_2)_1}\circ\mathcal{P}_{(A_2Q_2)_2^n}}\lrbracket{\lrbracket{\sum_{q_2}\pi_2(q_2)\ket{q_2}\bra{q_2}_{(Q_2)_1}\otimes\frac{\mathbb{1}_{\hat{T}_1}}{\lrvert{T}}}\otimes\rho_{(A_2Q_2\hat{T})_2^n}}\\
    &=\lrbracket{\sum_{q_2}\pi_2(q_2)\ket{q_2}\bra{q_2}_{(Q_2)_1}\otimes\frac{\mathbb{1}_{\hat{T}_1}}{\lrvert{T}}}\otimes\mathcal{P}_{(A_2Q_2)_2^n}\lrbracket{\rho_{(A_2Q_2\hat{T})_2^n}}\,,
    \end{split}
\end{align}
where $\mathcal{P}_{(A_2Q_2)_2^n}\lrbracket{\rho_{(A_2Q_2\hat{T})_2^n}}$ is precisely the $\lrbracket{A_2Q_2\hat{T}}_2^n$-marginal of $\mathcal{P}_{\mathrm{cl}}\lrbracket{\rhoge}$, again by (iii). Lastly, for the objective, (iii) gives $\Trr{(A_2Q_2\hat{T})_2^n}{\mathcal{P}_{\mathrm{cl}}\lrbracket{\rhoge}}=\lrbracket{\mathcal{P}_{A_1Q_1}\circ\mathcal{P}_{(A_2Q_2)_1}}\lrbracket{\rhog}$. Since $V_{A_1A_2Q_1Q_2}$ is diagonal in the distinguished bases of the classical registers and $S_{T\hat{T}}$ acts trivially on them, $V_{A_1A_2Q_1Q_2}\otimes S_{T\hat{T}}$ commutes with every Kraus operator of $\mathcal{P}_{A_1Q_1}\circ\mathcal{P}_{(A_2Q_2)_1}$, and (ii) implies
\begin{align}
    \begin{split}
    &\Trr{}{\lrbracket{V_{A_1A_2Q_1Q_2}\otimes S_{T\hat{T}}}\lrbracket{\mathcal{P}_{A_1Q_1}\circ\mathcal{P}_{(A_2Q_2)_1}}\lrbracket{\rhog}}\\
    &=\Trr{}{\lrbracket{V_{A_1A_2Q_1Q_2}\otimes S_{T\hat{T}}}\,\rhog}\,.
    \end{split}
\end{align}
Hence $\mathcal{P}_{\mathrm{cl}}\lrbracket{\rhoge}$ is feasible for $\mathrm{SDP}_n\lrbracket{T,V,\pi}$ with the same objective value; the identical computation applies to $\mathrm{SDP}^{(\Phi)}_n\lrbracket{T,V,\pi}$, as $\Phi_{T\vert\hat{T}}$ likewise acts trivially on the classical registers. Finally, the feasible set is compact and the objective continuous, so the maximum is attained; composing any optimizer with $\mathcal{P}_{\mathrm{cl}}$ and invoking (i), (iv) yields an optimizer of the form \autoref{eqn:cq_states_form}.
\end{proof}
\section{Bose-symmetry via Schur-Weyl duality}\label{sec:bose_via_schur_weyl}

Central to the complexity result in \autoref{sec:symmetric_subspace_methods} is the intricate interplay between two groups that act naturally on $\SymH$, viewed as a subspace of $\HSn$. Concretely, for a Hilbert space $\HS$, the corresponding tensor product space $\HSn$ naturally admits a representation of $S_n$ and $\GLH$\footnote{That is, there exist group homomorphisms from $S_n$ and from $\GLH$ into $\operatorname{GL}\lrbracket{\HSn}$; both are made explicit below.}. Explicitly, let $S_n$ act\footnote{Each $\pi\in S_n$ is represented by a unitary permutation matrix $U_{\HSn}(\pi)$; since $S_n$ acts from the right, the assignment $\pi\mapsto U_{\HSn}(\pi)$ is a group antihomomorphism.} on $\HSn$ from the right by permuting the tensor factors of any element $\bigotimes_{i=1}^n v_i \in \HSn$, i.e.\ for $\pi\in S_n$,
\begin{align}\label{eq:nat_S_n_action}
	\lrbracket{\bigotimes_{i=1}^n v_i}\pi= \bigotimes_{i=1}^n v_{\pi(i)},
\end{align}
and let the Lie group $\GLH$ act on each tensor factor simultaneously from the left\footnote{The direction of the action is interchangeable}, i.e.\ for any $g\in\GLH$,
\begin{align}
	g\lrbracket{\bigotimes_{i=1}^n v_i} = \bigotimes_{i=1}^n g\cdot v_i.\,
\end{align} Importantly, these two actions commute, and in fact the algebras of $S_n$- and $\GLH$-equivariant maps from $\HSn$ to $\HSn$ are full mutual centralizers in $\End{}{\HSn}$. Schur-Weyl duality formalizes the relationship between the commuting actions of the finite-dimensional group algebra $\CSn$  and the general linear group $\GLH$, demonstrating how these actions induce a decomposition of $\HSn$ into simple modules. This duality establishes a correspondence between the simple $\CSn$-modules and the simple $\GLH$-modules within $\HSn$, providing a powerful algebraic framework for analyzing the structure of the symmetric subspace. See \autoref{ex:Schur_weyl_duality} for an application of Schur-Weyl duality. Next, we introduce the $\GLH$-submodules in $\HSn$ via  Weyl's tensorial construction and connect them to the symmetric subspace.

 Consider the decomposition of the complex group algebra $\CSn$ as a $\CSn$-module (regular representation) given by the Artin--Wedderburn theorem
\begin{align}\label{eqn:decomp_Sn_group_alebra}
	\CSn \simeq_{\CSn} \bigoplus_{\lambda\vdash n} \HS_{\lambda}^{\oplus\dim_{\CC}\lrbracket{\HS_{\lambda}}},
\end{align}
where for each $\lambda$, $\SpechtH = \CSn\cdot c_{\lambda}$ with Young symmetrizer $c_{\lambda}$ is a simple $\CSn$-module called a Specht module. By the Artin--Wedderburn theorem, each isotypical component $\HS_{\lambda}^{\oplus\dim_{\CC}\lrbracket{\HS_{\lambda}}}$ is the image of a primitive idempotent from the center of the group algebra $Z\lrbracket{\CSn}$. Note, while in general Young symmetrizers are minimal (primitive) idempotents, they are usually not central. In general, only suitable linear combinations of Young symmetrizers are again central. The primary advantage of Young symmetrizers, however, lies in the ability to explicitly construct a Young symmetrizer using a standard labeling of Young shape $Y(\lambda)$.

Weyl's tensorial construction extends to the case where the representation space is not the group algebra but rather the $n$-fold tensor power of a vector space — or, in our context, a Hilbert space, i.e.\ $\HSn$. Since $c_{(n)}\in Z\lrbracket{\CSn}$ is proportional to $P_{\SymH}$, we have $\SymH=\HSn\cdot c_{(n)}$. In general we define $\Schurf{\lambda}\HS := \HSn\cdot c_{\lambda}$ together with a Schur functor $\Schurf{\lambda}\lrbracket{\cdot}$ on the category of finite dimensional vector spaces. The functor takes the object $\HS$ to $\Schurf{\lambda}\HS$, and a morphism ($\CC$-linear map) on $\HS$ to the induced morphism on $\Schurf{\lambda}\HS$ obtained by restricting its $n$-th tensor power.
 \begin{remark}
 On a technical level, the Schur functor of partition $\lambda$ 
	\begin{align}
		\Schurf{\lambda}\, : \, \text{FinVect} \rightarrow \text{FinVect}
	\end{align}
	maps the category of finite dimensional vector spaces and linear maps thereon to itself. If $f\,:\, V\rightarrow W$ is a linear map between finite dimensional vector spaces $V, W$, we define
	\begin{align}
	\begin{split}
		\Schurf{\lambda}(f) \, : \, &\Schurf{\lambda}V \rightarrow \Schurf{\lambda}W,\,\\
	\end{split}
	\end{align}
	with $\Schurf{\lambda}(f\circ g)= \Schurf{\lambda}(f)\circ \Schurf{\lambda}(g)$ for composable linear maps $f$ and $g$, and $\Schurf{\lambda}\lrbracket{\mathcal{I}_V}=\mathcal{I}_{\Schurf{\lambda}V}$. Strictly speaking the functor $\Schurf{\lambda}\lrbracket{\cdot}$ is distinct from its image space $\Schurf{\lambda}\HS$ which we refer to as a Weyl module of the reductive group $\GLH\simeq \text{GL}_{d_{\HS}}(\CC)$. See \cite[Sec. 6]{fulton2013representation}  and \cite[Appendix A]{macdonald1998symmetric} for a more in-depth discussion on polynomial functors.
 \end{remark}
 Importantly, the Schur functor carries the $\GLH$-module structure of $\HS$ over to $\Schurf{\lambda}\HS$. Next, we will see how the Weyl modules relate to simple $\CSn$-modules. 
\begin{proposition}
We have
	\begin{align}
		\Schurf{\lambda}\HS \simeq \text{Hom}_{\CSn}\lrbracket{\HS_{\lambda}, \HSn}.
	\end{align}
\end{proposition}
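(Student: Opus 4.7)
The plan is to realize both sides as the image of a single idempotent applied to two different $\CSn$-modules, and then invoke the standard identification $\Hom{R}{eR}{M}\cong Me$ for an idempotent $e$ in a ring $R$ and a module $M$ over it. Throughout I adopt the right $\CSn$-action on $\HSn$ coming from permutation of tensor factors, and I treat $\SpechtH$ as the analogous right $\CSn$-submodule of $\CSn$ generated by $c_{\lambda}$.

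The first step is to replace the Young symmetrizer by an honest idempotent. Although $c_{\lambda}$ itself is not idempotent, it satisfies $c_{\lambda}^{2}=n_{\lambda}\,c_{\lambda}$ for a nonzero scalar $n_{\lambda}\in\CC$, so $e_{\lambda}:=n_{\lambda}^{-1}c_{\lambda}$ is a primitive idempotent of $\CSn$. The definitions recalled in the excerpt then become $\SpechtH=\CSn\cdot c_{\lambda}=\CSn\cdot e_{\lambda}$ and $\Schurf{\lambda}\HS=\HSn\cdot c_{\lambda}=\HSn\cdot e_{\lambda}$, where in the latter $e_{\lambda}$ acts on $\HSn$ by linear combinations of tensor-factor permutations.

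The second step is to write the isomorphism explicitly. I define
\begin{equation*}
    \Phi\,:\,\Hom{\CSn}{\SpechtH}{\HSn}\;\longrightarrow\;\Schurf{\lambda}\HS,\qquad f\;\longmapsto\; f(e_{\lambda}),
\end{equation*}
with image in $\Schurf{\lambda}\HS=\HSn\cdot e_{\lambda}$ because $\CSn$-equivariance together with $e_{\lambda}^{2}=e_{\lambda}$ gives $f(e_{\lambda})=f(e_{\lambda}\cdot e_{\lambda})=f(e_{\lambda})\cdot e_{\lambda}$. For the inverse, given $v\in \Schurf{\lambda}\HS$ (so that $v=v\cdot e_{\lambda}$) I set
\begin{equation*}
    \Psi(v)\,:\,\SpechtH\;\longrightarrow\;\HSn,\qquad e_{\lambda}\cdot r\;\longmapsto\; v\cdot r\qquad(r\in\CSn),
\end{equation*}
which is well defined because $e_{\lambda}\cdot r=e_{\lambda}\cdot r'$ together with $v=v\cdot e_{\lambda}$ yields $v\cdot r=v\cdot e_{\lambda}\cdot r=v\cdot e_{\lambda}\cdot r'=v\cdot r'$, and which is $\CSn$-equivariant by construction.

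The third step is a two-line verification that $\Phi$ and $\Psi$ are mutually inverse: $\Phi(\Psi(v))=\Psi(v)(e_{\lambda})=v\cdot 1=v$, and $\Psi(\Phi(f))(e_{\lambda}\cdot r)=f(e_{\lambda})\cdot r=f(e_{\lambda}\cdot r)$, the last equality being $\CSn$-equivariance of $f$. Since the map $\Phi$ is induced by the $\CSn$-action on $\HSn$, it automatically intertwines any additional commuting structure on $\HSn$ with that on the Hom space; in particular the $\GLH$-action transports correctly, which is the content that will later be used to identify $\Schurf{\lambda}\HS$ as the Weyl module paired with $\SpechtH$ in Schur--Weyl duality. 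There is no substantive obstacle here: the result is a direct instance of the general idempotent-Hom identification, and the only care required is to fix the left/right action conventions implicit in the notations $\CSn\cdot c_{\lambda}$ and $\HSn\cdot c_{\lambda}$.
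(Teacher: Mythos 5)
Your proof is correct and takes a genuinely different route from the paper. The paper's proof proceeds through a chain of identifications: it tensors with the group algebra over itself, moves the Young symmetrizer across the balanced tensor product $\otimes_{\CSn}$, and then invokes self-duality of Specht modules ($\SpechtH^{*}\simeq\SpechtH$) together with the general tensor-Hom identification $\SpechtH^{*}\otimes_{\CSn}\HSn\simeq\Hom{\CSn}{\SpechtH}{\HSn}$. You instead recognize the statement as a direct instance of the classical idempotent-Hom lemma $\Hom{R}{eR}{M}\cong Me$, rescale the Young symmetrizer to an honest idempotent, and write down the mutually inverse maps $\Phi\colon f\mapsto f(e_{\lambda})$ and $\Psi\colon v\mapsto(e_{\lambda}r\mapsto vr)$ with a two-line verification. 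This is more elementary and more explicit: it produces the isomorphism concretely rather than as a composite of abstract identifications, and it does not pass through the balanced tensor product at all. What the paper's route buys is that it slots directly into the surrounding lemmas (\autoref{prop:fulton_harris_1}, the $\otimes_{\CSn}$ calculus), which they reuse in the proof of Schur--Weyl duality; your route is self-contained.

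One small caution: you announce that you treat $\SpechtH$ as a \emph{right} $\CSn$-submodule but then write $\SpechtH=\CSn\cdot e_{\lambda}$, which is the notation for a \emph{left} ideal, while your map $\Psi$ manipulates elements of the form $e_{\lambda}\cdot r$, i.e.\ treats $\SpechtH$ as the right ideal $e_{\lambda}\CSn$. These are isomorphic abstract $\CSn$-modules (Specht modules are self-dual, and the anti-automorphism $g\mapsto g^{-1}$ converts one to the other), so the argument is not wrong, but the self-duality that the paper uses overtly is hiding inside this conversion unless you keep the conventions uniform. If you fix $\SpechtH:=e_{\lambda}\CSn$ (matching the right action you chose on $\HSn$) throughout, the proof is fully consistent and needs no appeal to self-duality.
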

\begin{proof}
We have 	
\begin{align}
\begin{split}
	\Schurf{\lambda}\HS &= \HSn \cdot c_{\lambda}\simeq \CSn \otimes_{\CSn} \HSn c_{\lambda}= \CSn c_{\lambda} \otimes_{\CSn} \HSn\\
	&\stackrel{\HS_{\lambda}=\CSn\cdot c_{\lambda}}{=} \SpechtH \otimes_{\CSn} \HSn\stackrel{\SpechtH^*\simeq \SpechtH}{\simeq} \SpechtH^*\otimes_{\CSn} \HSn \simeq \Hom{\CSn}{\SpechtH}{\HSn}.\,
\end{split}
\end{align}
In the first line we used the definition of the $\CSn$-balanced product, i.e.\ the tensor product between right $\CSn$-module $\CSn$ and left $\CSn$-module $\HSn$ (see \autoref{rem:KG_balanced_product} and \autoref{prop:fulton_harris_1}). The second equality in the first line follows from the property of the $\CSn$-balanced product to move $c_{\lambda}$ from the second into the first $\CSn$-module tensor factor. The last line follows from the definition of the contragredient (dual) $\CSn$-module $\SpechtH^* = \Hom{\CC}{\SpechtH}{\CC}$ isomorphic to $\SpechtH$.
\end{proof}

 Thus, the symmetric subspace is the space of $\CSn$-invariant maps from the trivial Specht module $\HS_{(n)}$ to $\HSn$. By Schur's lemma, any such map is either an isomorphism between $\CSn$-modules or the zero map. Thus, every non-zero element of $\Schurf{(n)}\HS$ is an isomorphism onto a copy of the trivial Specht module inside $\HSn$ and $\dim_{\CC}\lrbracket{\Schurf{(n)}\HS}$ is the number of times (multiplicity) $\HS_{(n)}$ appears in the $\CSn$-module $\HSn$.
\begin{remark}
	While the Young symmetrizers are primitive idempotents in $\CSn$, i.e.\ they yield simple $\CSn$-modules, this property does not hold in $\HSn$.
\end{remark}

As the $\GLH$-action commutes with the $\CSn$ action on $\HSn$, the image of $\GLH$ in $\End{}{\HSn}$ is in $\End{\CSn}{\HSn}$. Furthermore, the image of any element of $\CSn$ is in $\End{\CSn}{\HSn}$ if and only if it is in $Z\lrbracket{\CSn}$. We present the following proposition, which establishes Weyl modules — particularly the symmetric subspace — as submodules of $\HSn$, viewed as a module over the algebra of $\CSn$-invariant endomorphisms.

\begin{proposition}[\cite{fulton2013representation}, Lemma 6.23]\label{prop:fulton_harris_2}
Let $\HS$ be a finite dimensional complex vector space. The algebra $\End{\CSn}{\HSn}$ is spanned as a linear subspace of $\End{}{\HSn}$ by the operators $g^{\otimes n}$ with $g\in\End{}{\HS}$. A subspace of $\HSn$ is a sub-$\End{\CSn}{\HSn}$-module if and only if it is invariant under $\GLH$, i.e.\ a $\GLH$-submodule.
\end{proposition}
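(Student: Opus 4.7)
The plan has two independent halves; the second follows essentially formally from the first, so the real content lies in showing that $\End_{\CSn}(\HSn)$ equals the linear span of diagonal operators $g^{\otimes n}$ with $g \in \End(\HS)$.

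For the first (main) assertion, I will proceed in three steps. First, I verify the easy inclusion: for every $g \in \End(\HS)$ the operator $g^{\otimes n}$ commutes with $U_{\HSn}(\pi)$ for each $\pi \in S_n$, by a direct computation on elementary tensors, so $\mathrm{span}_{\CC}\{g^{\otimes n} : g \in \End(\HS)\} \subseteq \End_{\CSn}(\HSn)$. Second, I identify this span by a polarization argument: the map $g \mapsto g^{\otimes n}$ is a homogeneous polynomial map of degree $n$ from $\End(\HS)$ to $\End(\HSn)$, and polarizing it produces the symmetrizer $\tfrac{1}{n!}\sum_{\pi\in S_n} U_{\HSn}(\pi)\,(g_1 \otimes \cdots \otimes g_n)\,U_{\HSn}(\pi)^{T}$. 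Since $\GL(\HS)$ is Zariski-dense in $\End(\HS)$, restricting to $g \in \GL(\HS)$ yields the same span, and polarization identifies that span with the symmetric tensors $\Sym^n(\End(\HS)) \subseteq \End(\HS)^{\otimes n} \simeq \End(\HSn)$. Third, I invoke Schur–Weyl duality (cited and used throughout the paper): since $\CSn$ acts semisimply on $\HSn$ by Maschke's theorem, the double-centralizer theorem gives $\End_{\CSn}(\HSn) = \mathrm{Im}(\CC[\GL(\HS)] \to \End(\HSn))$, and this image is by construction spanned by the operators $g^{\otimes n}$ with $g \in \GL(\HS)$. Combining with the previous step, the two spans coincide and equal $\End_{\CSn}(\HSn)$.

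For the second assertion, let $W \subseteq \HSn$ be a linear subspace. If $W$ is a sub-$\End_{\CSn}(\HSn)$-module, then in particular $g^{\otimes n} W \subseteq W$ for every $g \in \GL(\HS)$ by the first half, so $W$ is $\GL(\HS)$-invariant. Conversely, if $W$ is $\GL(\HS)$-invariant, then it is invariant under every $g^{\otimes n}$ with $g \in \GL(\HS)$, hence under their linear span, which by the first assertion together with the Zariski-density argument equals $\End_{\CSn}(\HSn)$; therefore $W$ is a sub-$\End_{\CSn}(\HSn)$-module.

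The step I expect to be the main obstacle is the key equality in the first assertion, namely $\End_{\CSn}(\HSn) = \mathrm{span}\{g^{\otimes n} : g \in \End(\HS)\}$. The ``$\supseteq$'' direction and the polarization/density passage from $\GL(\HS)$ to $\End(\HS)$ are routine, but the ``$\subseteq$'' direction is exactly the nontrivial content of one half of Schur–Weyl duality. In the present paper this is taken as an established tool, so the proof can invoke it directly together with Maschke's theorem and the double-centralizer theorem; if one wished to give a self-contained argument, one would need to reproduce the Schur–Weyl decomposition of $\HSn$ into $\CSn \times \GL(\HS)$-bimodule isotypic components $\HS_\lambda \otimes \Schurf{\lambda}\HS$, after which the equality follows from Schur's lemma applied block by block.
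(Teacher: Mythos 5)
The paper itself does not prove this proposition; it is cited directly from Fulton--Harris, Lemma~6.23, so there is no internal argument to match against. Your Steps~1 and~2 contain the essential ideas, but Step~3 is where the argument goes wrong. The double-centralizer theorem, as stated in the paper's appendix (\autoref{prop:double_centralizer}), tells you that $\End_{B}(\HSn) = \mathrm{Im}(\CSn)$ where $B := \End_{\CSn}(\HSn)$; it does \emph{not}, on its own, identify $B$ with $\mathrm{Im}\bigl(\CC[\GLH]\bigr)$. To get that equality out of Schur--Weyl duality one needs the full $S_n \times \GLH$-bimodule decomposition of $\HSn$ together with a Burnside/Jacobson-density argument on each $\Schurf{\lambda}\HS$; and in the Fulton--Harris development that bimodule decomposition is proved \emph{using} this very lemma, so the appeal is circular unless you source Schur--Weyl duality by an independent route (e.g.\ characters). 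Your closing paragraph shows you sense this, but the circularity should be treated as a genuine defect in the proof, not a footnote.

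The fix is short and makes Step~3 unnecessary. Under the canonical identification $\End(\HSn) \cong \End(\HS)^{\otimes n}$, an operator commutes with all $U_{\HS_1^n}(\pi)$ if and only if it is fixed by the conjugation action of $S_n$, which on $\End(\HS)^{\otimes n}$ is just permutation of tensor factors; hence
\begin{align}
  \End_{\CSn}(\HSn) \;=\; \bigl(\End(\HS)^{\otimes n}\bigr)^{S_n} \;=\; \mathrm{Sym}^n\bigl(\End(\HS)\bigr),
\end{align}
purely by unwinding definitions. Your polarization argument in Step~2 then yields $\mathrm{Sym}^n(\End(\HS)) = \mathrm{span}_\CC\{g^{\otimes n} : g \in \End(\HS)\}$, and Zariski density lets you restrict to $g \in \GLH$. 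No semisimplicity, no double centralizer. This also corrects the remark at the end of your sketch: the ``$\subseteq$'' inclusion is not ``the nontrivial content of one half of Schur--Weyl duality''; it is an elementary polarization fact, and that is precisely why Lemma~6.23 can serve as an \emph{input} to the double-commutant proof of Schur--Weyl duality rather than a consequence of it. Once the first assertion is in hand, your second half (the iff for submodules) is correct exactly as you wrote it.
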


The character of the $\GLH$-module $\Schurf{\lambda}\HS$, evaluated at $g\in\GLH$ with eigenvalues $x_1,\ldots,x_{d_{\HS}}$, is
\begin{align}
    \chi_{\Schurf{\lambda}\HS}(g)=S_{\lambda}\lrbracket{x_1,\ldots,x_{d_{\HS}}},
\end{align}
where $S_{\lambda}\lrbracket{x_1,\ldots,x_{d_{\HS}}}$ is the Schur polynomial (see e.g.\ \cite{macdonald1998symmetric}) of degree $n$ in $d_{\HS}$ variables. Concretely, for the symmetric subspace
\begin{align}
     \chi_{\Schurf{(n)}\HS}(g)=H_{n}\lrbracket{x_1,\ldots,x_{d_{\HS}}},
\end{align}
where $H_{n}\lrbracket{x_1,\ldots,x_{d_{\HS}}}$ is the complete homogeneous symmetric polynomial of degree $n$ (see e.g.\ \cite[Appendix A]{fulton2013representation} for a relation between these polynomials). The Schur polynomials can be given in terms of semistandard Young tableaux
\begin{align}
S_{\lambda}\lrbracket{x_1,\ldots ,x_{d_{\HS}}} = \sum_{T\in \mathcal{T}_{\lambda,\, d_{\HS}}}x_1^{t_1}\cdots x_{d_{\HS}}^{t_{d_{\HS}}}=\sum_{\mu}K_{\lambda\mu}m_{\mu}
\end{align}
where the first sum is indexed by semistandard labelings $T$ of Young shape $\lambda$ with weight $(t_1,\ldots, t_{d_{\HS}})$. Alternatively, Kostka numbers $K_{\lambda\mu}$ quantify how many semistandard labelings of Young shape $Y(\lambda)$ of weight $\mu$ exist and $m_{\mu}$ are monomial symmetric functions. While a general closed form formula for $K_{\lambda\mu}$ does not exist, for $\lambda=(n)$ and any composition $\mu$ of $n$ into at most $d_{\HS}$ parts we have $K_{\lambda\mu}=1$.
In general, 
\begin{align}
	\dim_{\CC}\lrbracket{\Schurf{\lambda}\HS}= S_{\lambda}(1,\ldots, 1)= \prod_{(i,j)\in Y(\lambda)}\frac{d_{\HS}-i+j}{h_{\lambda}(i,j)}
\end{align}
with hook length $h_{\lambda}(i,j)$, i.e.\ the number of cells in the hook $H_{\lambda}(i,j)$. Then, by \cite[Theorem 6.3]{fulton2013representation} we recover the $\CC$-vector space dimension of the symmetric subspace given by \cite{harrow2013church} in
\begin{align}
    \dim_{\CC}\lrbracket{\SymH}= H_{n}\lrbracket{1,\ldots,1} = \binom{d_{\HS}+n-1}{n}.\,
\end{align}

Furthermore, the characters certify that some $\Schurf{\lambda}\HS$ are zero, if the dimension of $\HS$ is small with respect to $n$. 
\begin{proposition}[\cite{fulton2013representation} Theorem 6.3 ]
	Let $k=\dim_{\CC}\lrbracket{\HS}$. Then, $\Schurf{\lambda}\HS$ is zero if $\lambda_{k+1}\neq 0$. 
\end{proposition}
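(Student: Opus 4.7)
The strategy is the classical pigeonhole argument via column antisymmetrization. Recall that the Young symmetrizer factors as $c_\lambda = a_\lambda b_\lambda$ (or $b_\lambda a_\lambda$, depending on convention), where $a_\lambda = \sum_{p \in P_\lambda} p$ is the row symmetrizer associated to the row stabilizer $P_\lambda \subset S_n$ of a chosen standard labeling of $Y(\lambda)$, and $b_\lambda = \sum_{q \in Q_\lambda} \mathrm{sgn}(q)\, q$ is the column antisymmetrizer associated to the column stabilizer $Q_\lambda$. Since $\Schurf{\lambda}\HS = \HSn \cdot c_\lambda$ as a right $\CSn$-module, it will suffice to show that $\HSn \cdot b_\lambda = 0$ whenever $\lambda_{k+1} \neq 0$, because then $v \cdot c_\lambda \in \HSn \cdot b_\lambda = 0$ by associativity of the action (in either order of the factorization, the antisymmetrizer $b_\lambda$ appears as one of the two factors, and vanishing of the image of $b_\lambda$ propagates through $a_\lambda$).

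The key observation is that $\lambda_{k+1} \neq 0$ is equivalent to saying that the Young shape $Y(\lambda)$ has more than $k$ nonzero rows, hence its first column has at least $k+1=\dim_{\CC}(\HS)+1$ boxes. The column stabilizer $Q_\lambda$ contains, as a direct factor, the symmetric group $S_{\lambda_1^T}$ acting on the $\lambda_1^T \geq k+1$ tensor positions indexed by boxes of the first column (here $\lambda^T$ denotes the conjugate partition). Applied to any simple tensor $\bigotimes_{i=1}^n v_i$, the antisymmetrizer over these positions factors through the canonical projection
\begin{equation*}
    \HS^{\otimes \lambda_1^T} \;\longrightarrow\; \wedge^{\lambda_1^T}\HS,
\end{equation*}
followed by the inclusion back into $\HSn$. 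Since $\wedge^{m}\HS = 0$ for all $m > \dim_{\CC}(\HS) = k$, this projection vanishes identically.

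Summing over all permutations in $Q_\lambda$ (with signs), we conclude that $b_\lambda$ annihilates every simple tensor of $\HSn$ and hence, by linearity, all of $\HSn$. Therefore $\HSn \cdot c_\lambda = 0$, i.e.\ $\Schurf{\lambda}\HS = 0$. I do not anticipate any serious obstacle; the only mild subtlety is making the identification between the right action of $b_\lambda$ on $\HSn$ and the exterior power projection fully precise, which amounts to choosing a standard labeling of $Y(\lambda)$ to index the tensor positions and observing that the antisymmetrizer $\sum_{\sigma\in S_m} \mathrm{sgn}(\sigma) \sigma$, viewed as an endomorphism of $\HS^{\otimes m}$, equals $m!$ times the orthogonal projector onto $\wedge^m\HS$.
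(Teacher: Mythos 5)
Your proof is correct. The paper does not supply its own proof, relying on the citation to Fulton--Harris, but immediately before the proposition it remarks that ``the characters certify that some $\Schurf{\lambda}\HS$ are zero,'' i.e.\ it gestures at the Schur-polynomial argument: $\chi_{\Schurf{\lambda}\HS} = S_\lambda(x_1,\ldots,x_k)$, and $S_\lambda$ is identically zero when $\lambda$ has height $>k$ because a semistandard filling of $Y(\lambda)$ with entries in $[k]$ would require $>k$ strictly increasing entries in the first column. Your route is the more elementary and constructive one: rather than computing the character and concluding the representation has dimension zero, you show directly that the Young symmetrizer itself annihilates $\HSn$, by observing that the column antisymmetrizer $b_\lambda$ factors through $\wedge^{\lambda_1^T}\HS$ on the first-column tensor slots and $\wedge^{m}\HS = 0$ once $m = \lambda_1^T \geq k+1 > \dim_{\CC}\HS$. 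This buys you a proof that needs no $\GLH$-character theory, only the vanishing of high exterior powers and the factorization $Q_\lambda = S_{\lambda_1^T}\times S_{\lambda_2^T}\times\cdots$; the character argument, conversely, drops out for free once the Weyl character formula / semistandard-tableau description of $S_\lambda$ is in place, which the paper develops anyway. Two small points of precision worth fixing before inclusion: $\HSn\cdot c_\lambda$ is a left $\End{\CSn}{\HSn}$-module (equivalently a $\GLH$-module), not a right $\CSn$-module, since $c_\lambda$ is not central — but your argument only needs $\HSn c_\lambda$ as a vector space, so this is cosmetic; and with the paper's convention $c_\lambda = a_\lambda b_\lambda$ acting on the right, you should write $v\cdot c_\lambda = (v\cdot a_\lambda)\cdot b_\lambda \in \HSn\cdot b_\lambda = 0$ rather than invoking ``either order,'' though as you note the other order also works since $0\cdot a_\lambda = 0$.
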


The commutativity of the $\CSn$- and $\GLH$-actions also translates the simplicity of Specht modules to the Weyl modules. By \autoref{prop:fulton_harris_lemma_6_22} (ii), the simplicity of the $\CSn$-module $\CSn\cdot c_{\lambda}$ implies the simplicity of $\HSn \otimes_{\CSn}\CSn\cdot c_{\lambda}$ as an$\End{\CSn}{\HSn}$-module. Furthermore, by \autoref{prop:fulton_harris_lemma_6_22} (i)  \begin{align}
	\HSn \otimes_{\CSn}\CSn\cdot c_{\lambda} \simeq_{\End{\CSn}{\HSn}} \HSn\cdot c_{\lambda}
\end{align} and thus by \autoref{prop:fulton_harris_2} the spaces are also isomorphic as $\GLH$-modules. In summary, we have the following proposition.
	
\begin{proposition}[\cite{fulton2013representation}, Theorem 6.3]
	For each partition $\lambda\vdash_{d_{\HS}} n$, the Weyl module $\Schurf{\lambda}\HS$ is a simple $\GLH$-module. 
\end{proposition}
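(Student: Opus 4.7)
\emph{Proof proposal.} The strategy is to transport simplicity of a Specht module on one side of a double-centralizer pair to the Weyl module on the other side. Throughout, the key pair of commuting algebras at work are $\CSn$ and $\End{\CSn}{\HSn}$, which are mutual centralizers in $\End{}{\HSn}$, and whose connection to the $\GLH$-action has already been made explicit via \autoref{prop:fulton_harris_2}.

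My first step would be to recall that the Young symmetrizer $c_\lambda \in \CSn$ is (a scalar multiple of) a primitive idempotent, hence the left ideal $\SpechtH := \CSn \cdot c_\lambda$ is simple as a left $\CSn$-module; this is classical and lies inside the representation-theoretic preparation of \autoref{sec:representation_theory}. The next step is to push this simplicity through the $\CSn$-balanced tensor product with $\HSn$: invoking \autoref{prop:fulton_harris_lemma_6_22}(ii) (Fulton--Harris), which is exactly the double-centralizer statement that simple right $\CSn$-modules correspond to simple $\End{\CSn}{\HSn}$-modules via $M \mapsto \HSn \otimes_{\CSn} M$, one obtains that $\HSn \otimes_{\CSn} \SpechtH$ is simple as an $\End{\CSn}{\HSn}$-module.

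The third step is to identify this abstract tensor product with the concrete Weyl module. Using \autoref{prop:fulton_harris_lemma_6_22}(i), or equivalently the fact that $c_\lambda$ moves freely across $\otimes_{\CSn}$ and acts as an idempotent on $\HSn$, one has the $\End{\CSn}{\HSn}$-module isomorphism
\begin{equation*}
\HSn \otimes_{\CSn} \CSn \cdot c_\lambda \;\simeq\; \HSn \cdot c_\lambda \;=\; \Schurf{\lambda}\HS.
\end{equation*}
Combined with the previous step this shows that $\Schurf{\lambda}\HS$ is a simple $\End{\CSn}{\HSn}$-submodule of $\HSn$. Finally, I would close the argument by invoking \autoref{prop:fulton_harris_2}, which asserts that the sub-$\End{\CSn}{\HSn}$-modules of $\HSn$ are exactly the $\GLH$-submodules. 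Thus simplicity as an $\End{\CSn}{\HSn}$-module transfers directly to simplicity as a $\GLH$-module.

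The main obstacle is really bookkeeping rather than a genuine difficulty: one must be careful about left/right conventions for the $\CSn$- and $\GLH$-actions (cf.\ \autoref{eq:nat_S_n_action}) so that the balanced tensor product $\HSn \otimes_{\CSn} (-)$ is well-defined, and one must verify that the isomorphism in step three is compatible with the $\End{\CSn}{\HSn}$-action on both sides (this is where \autoref{prop:fulton_harris_lemma_6_22}(i) does the work). Once the double-centralizer framework of \autoref{prop:fulton_harris_lemma_6_22} and the $\End{\CSn}{\HSn}\,\leftrightarrow\,\GLH$ dictionary of \autoref{prop:fulton_harris_2} are accepted, the proof is essentially a two-line composition of these facts with Schur's lemma applied to $\SpechtH$.
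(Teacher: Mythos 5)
Your proof is correct and follows essentially the same route as the paper's: simplicity of $\SpechtH=\CSn\cdot c_\lambda$ via the primitive idempotent $c_\lambda$, then \autoref{prop:fulton_harris_lemma_6_22}(ii) to get simplicity of $\HSn\otimes_{\CSn}\SpechtH$ as an $\End{\CSn}{\HSn}$-module, \autoref{prop:fulton_harris_lemma_6_22}(i) to identify this with $\Schurf{\lambda}\HS$, and \autoref{prop:fulton_harris_2} to convert $\End{\CSn}{\HSn}$-simplicity into $\GLH$-simplicity. The extra remarks about left/right conventions are careful but do not change the argument.
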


\begin{remark}
	While in this setting, the $\Schurf{\lambda}\HS$ are simple $\GLH$-modules for every partition $\lambda$ of height at most $d_{\HS}$, there exist other (e.g.\ dual representations) simple $\GLH$-modules which cannot be given in terms of $\Schurf{\lambda}\HS$.
\end{remark}

Schur-Weyl duality provides the following decompositions of $\HSn$. For the reader's convenience, we give a concise proof.
\begin{proposition}[Schur-Weyl Duality for $\CSn\times \GLH$]\label{prop:schur_weyl_duality} We have the decomposition of $\HSn$ as a $\CSn\times \GLH$-module (i.e.\ $\lrbracket{\CSn, \GLH}$-bimodule)
	\begin{align}\label{eqn:schur_weyl_S_n_GL_H}
	\begin{split}
	\HSn \simeq \bigoplus_{\lambda\vdash_{d_{\HS}} n} \HS_{\lambda} \otimes \Schurf{\lambda}\HS	\end{split}
\end{align}
or as a $\GLH$-module
\begin{align}\label{eqn:schur_weyl_GL_H}
	\HSn\simeq \bigoplus_{\lambda\vdash_{d_{\HS}} n}\lrbracket{\Schurf{\lambda}\HS}^{\oplus \dim_{\CC}\lrbracket{\SpechtH}}.
\end{align}
\end{proposition}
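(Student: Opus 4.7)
The plan is to obtain the bimodule decomposition \eqref{eqn:schur_weyl_S_n_GL_H} as a direct consequence of the double centralizer theorem applied to the commuting pair of algebras generated by $\CSn$ and $\GLH$ acting on $\HSn$, and then to read off \eqref{eqn:schur_weyl_GL_H} by forgetting the $\CSn$-action.

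First, I would collect the semisimplicity inputs. By Maschke's theorem, $\CSn$ is a semisimple $\CC$-algebra, so its image $A \subseteq \End(\HSn)$ under the representation in \eqref{eq:nat_S_n_action} is also semisimple, and $\HSn$ is a semisimple $\CSn$-module. Hence it decomposes into isotypic components
\begin{align}
\HSn \;\simeq_{\CSn}\; \bigoplus_{\lambda\vdash n} \HS_\lambda \otimes W_\lambda,
\end{align}
where the multiplicity space is $W_\lambda := \Hom_{\CSn}(\HS_\lambda,\HSn)$ and the $\CSn$-action is only on the left factor. By the proposition relating Weyl modules to intertwiners ($\Schurf{\lambda}\HS \simeq \Hom_{\CSn}(\HS_\lambda,\HSn)$), we already have the identification $W_\lambda \simeq \Schurf{\lambda}\HS$ as $\CC$-vector spaces.

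Next, I would promote this to a bimodule decomposition. Let $B := \End_{\CSn}(\HSn)$ be the centralizer of $A$ in $\End(\HSn)$. By Proposition~\ref{prop:fulton_harris_2}, $B$ is spanned as a linear subspace of $\End(\HSn)$ by the diagonal image of $\End(\HS)$, which in particular contains the image of $\GLH$. The double centralizer theorem (applied to the semisimple algebra $A$ and its centralizer $B$, which is then automatically semisimple with the same simple summands indexing) gives that each multiplicity space $W_\lambda$ carries a natural $B$-module structure, that the simple $B$-modules are exactly these $W_\lambda$, and that
\begin{align}
\HSn \;\simeq_{A\otimes B}\; \bigoplus_{\lambda\vdash n} \HS_\lambda \otimes W_\lambda.
\end{align}
Restricting the $B$-action along the group homomorphism $\GLH \to B$, and using that $\Schurf{\lambda}\HS = \HSn \cdot c_\lambda$ is a simple $\GLH$-module (already established), yields $W_\lambda \simeq \Schurf{\lambda}\HS$ as $\GLH$-modules. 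Finally, using the vanishing statement $\Schurf{\lambda}\HS = 0$ whenever $\lambda_{d_\HS+1}\neq 0$, the sum collapses to partitions of $n$ of height at most $d_\HS$, proving \eqref{eqn:schur_weyl_S_n_GL_H}.

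The decomposition \eqref{eqn:schur_weyl_GL_H} then follows by forgetting the $\CSn$-structure: as a $\GLH$-module, $\HS_\lambda \otimes \Schurf{\lambda}\HS$ is the external direct sum of $\dim_\CC(\HS_\lambda)$ copies of $\Schurf{\lambda}\HS$, so setting $m_\lambda = \dim_\CC(\HS_\lambda)$ gives the claim. The main conceptual obstacle is verifying that the $\GLH$-module structure naturally induced on the multiplicity spaces $W_\lambda = \Hom_{\CSn}(\HS_\lambda,\HSn)$ through the centralizer actually agrees with the Weyl module structure on $\Schurf{\lambda}\HS = \HSn\cdot c_\lambda$; this is precisely the content of the identification $\HSn \otimes_{\CSn} \CSn\cdot c_\lambda \simeq \HSn\cdot c_\lambda$ together with Proposition~\ref{prop:fulton_harris_2}, which ensures that the isomorphism is compatible with both actions simultaneously.
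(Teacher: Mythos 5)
Your proposal is correct and follows essentially the same route as the paper: both arguments apply the double centralizer theorem, identify the multiplicity space $\Hom_{\CSn}(\HS_\lambda,\HSn)$ with the Weyl module $\Schurf{\lambda}\HS$, and invoke the vanishing of $\Schurf{\lambda}\HS$ for partitions of height exceeding $d_\HS$ to restrict the index set. The only cosmetic difference is that the paper chains the isomorphisms starting from $\HSn\otimes_{\CSn}\CSn$ and obtains \eqref{eqn:schur_weyl_GL_H} before \eqref{eqn:schur_weyl_S_n_GL_H}, whereas you begin with the isotypic decomposition of $\HSn$ as a $\CSn$-module and derive \eqref{eqn:schur_weyl_GL_H} last by forgetting the $\CSn$-action.
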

\begin{proof}
Follows directly from an application of the double centralizer theorem
	\begin{align}
		\begin{split}
			\HSn &\simeq \HSn \otimes_{\CSn} \CSn \stackrel{\autoref{eqn:decomp_Sn_group_alebra}}{\simeq} \HSn\otimes_{\CSn} \bigoplus_{\lambda} \SpechtH^{\oplus \dim_{\CC}\lrbracket{\SpechtH}}\\
			&\simeq \bigoplus_{\lambda} \lrbracket{\HSn \otimes_{\CSn} \SpechtH}^{\oplus \dim_{\CC}\lrbracket{\SpechtH}} \simeq \bigoplus_{\lambda} \lrbracket{\Schurf{\lambda}\HS}^{\oplus \dim_{\CC}\lrbracket{\SpechtH}}\\
			&\stackrel{\autoref{prop:V_decomp}}{\simeq} \bigoplus_{\lambda} \SpechtH \otimes \Hom{\CSn}{\SpechtH}{\HSn} \simeq  \bigoplus_{\lambda} \SpechtH \otimes \Schurf{\lambda} \HS.\,
		\end{split}
	\end{align}
	Note that we used the semisimple decomposition of the regular $\CSn$-module, in which each Specht module $\SpechtH$ appears with multiplicity equal to its dimension $\dim_{\CC}\lrbracket{\SpechtH}$.
\end{proof}

\begin{remark}\label{rem:at_most_d}
	Recall, $\Schurf{\lambda}V=0$ for $\lambda$ a partition of $n$ of height greater than $\dim V$. In other words, the direct sum in \autoref{eqn:schur_weyl_S_n_GL_H} and \autoref{eqn:schur_weyl_GL_H} runs through all partitions $\lambda$ of $n$ with height at most $\dim V$. 
\end{remark}
Thus, considered as a $\GLH$-module, $\HSn$ decomposes into a direct sum of Weyl modules with multiplicity given by the $\CC$-vector space dimension of the corresponding Specht module. Importantly, the symmetric subspace is one component in this decomposition. As an example consider \autoref{ex:complex_Weyl_decomp}. 

We are interested in operators with support and range on the symmetric subspace, i.e.\ elements from $\End{}{ \vee^{n}\lrbracket{\HS}}$.
\begin{proposition}\label{prop:Bose_sym_space_char}
We have
	\begin{align}
		\End{}{ \vee^{n}\lrbracket{\HS}}\simeq \End{\CSn}{ \vee^{n}\lrbracket{\HS}}
	\end{align}
	and 
	\begin{align}
		\dim_{\CC}\lrbracket{\End{}{ \vee^{n}\lrbracket{\HS}}}=\binom{d_{\HS}+n-1}{n}^2\leq \lrbracket{n+1}^{2d_{\HS}}.\,
	\end{align}
\end{proposition}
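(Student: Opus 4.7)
The plan is to establish the isomorphism by showing it is in fact an equality of sets, and then to obtain the dimension count by combining the well-known formula for the dimension of the symmetric subspace with the elementary bound from \autoref{prop:dimension_bound_sym_subspace}.

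First, I would unpack the inclusion $\End{\CSn}{\vee^n(\HS)} \subseteq \End{}{\vee^n(\HS)}$, which is trivial from the definitions. The key observation for the reverse inclusion is that, by the very definition of the symmetric subspace, we have $U_{\HS_1^n}(\pi)\,v = v$ for every $v \in \vee^n(\HS)$ and every $\pi \in S_n$; that is, $S_n$ acts as the identity on $\vee^n(\HS)$. Consequently, for any linear map $T \colon \vee^n(\HS) \to \vee^n(\HS)$, both $T(U(\pi)v)$ and $U(\pi)T(v)$ collapse to $T(v)$, since $T(v)$ is itself an element of $\vee^n(\HS)$ and therefore fixed by $U(\pi)$. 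Thus every endomorphism of $\vee^n(\HS)$ is automatically $\CSn$-equivariant, giving $\End{}{\vee^n(\HS)} = \End{\CSn}{\vee^n(\HS)}$ (not merely isomorphic but literally equal as subspaces of $\End{}{\HSn}$ restricted to the relevant support).

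For the dimension formula, I would appeal to the standard identity $\dim_{\CC}(\End{}{W}) = (\dim_{\CC} W)^{2}$ valid for any finite-dimensional vector space $W$, and set $W = \vee^n(\HS)$. The $\CC$-dimension of the symmetric subspace is the well-known $\binom{d_{\HS}+n-1}{n}$ (this is, for example, recalled in \autoref{prop:dimension_bound_sym_subspace} and proved via the polytabloid / monomial basis $\{\ket{s_{\vec{t}}}\}_{\vec{t} \in \mathcal{T}_{(n),\,d_{\HS}}}$ with $|\mathcal{T}_{(n),\,d_{\HS}}| = \binom{d_{\HS}+n-1}{n}$). Squaring yields the claimed equality. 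Finally, applying the bound $\binom{d_{\HS}+n-1}{n} \le (n+1)^{d_{\HS}}$ from \autoref{prop:dimension_bound_sym_subspace} and squaring gives $\binom{d_{\HS}+n-1}{n}^{2} \le (n+1)^{2 d_{\HS}}$.

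Since both steps rely only on definitions and a previously recorded combinatorial bound, I do not anticipate a genuine obstacle here; the only subtlety worth flagging is the conceptual point that the $\CSn$-equivariance condition carries no content on a space where the group acts trivially, which is precisely why Bose-symmetric operators admit such a clean characterization and motivates the subsequent development in \autoref{sec:symmetric_subspace_methods}.
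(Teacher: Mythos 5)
Your proof is correct. The core observation you isolate — that $S_n$ acts trivially on $\vee^n(\HS)$, so that for any $T\in\End{}{\vee^n(\HS)}$ and any $\pi\in S_n$ both $T(U(\pi)v)$ and $U(\pi)T(v)$ collapse to $T(v)$ — is exactly the decisive step, and you correctly conclude literal equality of the two endomorphism spaces. The dimension formula and bound then follow immediately from $\dim\End{}{W}=(\dim W)^2$ together with the count in \autoref{prop:dimension_bound_sym_subspace}.

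Where you differ from the paper: the paper's proof invokes Schur–Weyl duality and the double-centralizer theorem, arguing that $\SymH$ is a simple $\GLH$-module so $\End{\GLH}{\SymH}\simeq\CC$, before ultimately falling back on precisely the observation you make (that every endomorphism of $\SymH$ is automatically $S_n$-invariant because $S_n$ acts trivially on the symmetric subspace). Your route is strictly more elementary: you need nothing beyond the definition of $\vee^n(\HS)$ and the fact that any $T$ preserves it, whereas the paper's Schur–Weyl detour is not load-bearing for this particular proposition. Indeed, the paper's intermediate claim that ``any endomorphism on $\vee^n(\HS)$ commutes with the $\GLH$-action as it is a simple $\GLH$-module'' does not stand on its own (simplicity of a $\GLH$-module gives $\End{\GLH}{\SymH}\simeq\CC$, not that all endomorphisms are $\GLH$-equivariant), so the paper too is really relying on the triviality of the $S_n$-action, stated in its last sentence. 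Your version makes this crisp, and it matches what is actually used downstream in \autoref{sec:symmetric_subspace_methods}.
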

\begin{proof}
The isomorphism follows from the triviality of the $S_n$-action on the symmetric subspace. Trivially,
\begin{align}
    \End{\CSn}{ \vee^{n}\lrbracket{\HS}}\subseteq \End{}{ \vee^{n}\lrbracket{\HS}}.\,
\end{align}
Conversely, every $\pi\in S_n$ acts as the identity on $\vee^{n}\lrbracket{\HS}$, so the restriction of the $\CSn$-action to $\vee^{n}\lrbracket{\HS}$ is by scalar multiples of the identity. Hence every endomorphism of $\vee^{n}\lrbracket{\HS}$ commutes with this action, i.e.
\begin{align}
   \End{}{ \vee^{n}\lrbracket{\HS}} \subseteq \End{\CSn}{ \vee^{n}\lrbracket{\HS}}.\,
\end{align}
Furthermore, the $\CC$ vector space dimension follows from the character together with the definition of the symmetric subspace.
\end{proof}
\begin{remark}
    Note that, when considered as a vector space, the symmetric subspace of
 $\End{}{\HSn}\simeq\End{}{\HS}^{\otimes n}$ -- i.e.\ $\vee^n\lrbracket{\End{}{\HS}}$ -- is distinct from $\End{}{\SymH}$.
\end{remark}

\subsection{Full matrix $*$-algebra structure of Bose-symmetric operators}\label{sec:algebraic_iso}

We prove that the space of Bose-symmetric operators is isomorphic, as an algebra, to a full matrix $*$-algebra of polynomial size. We then demonstrate the explicit construction of this isomorphism. Although the subsequent concepts are well established in the mathematical literature, we include the following material for the convenience of readers with a background in quantum information.

We require the decomposition of $\HSn$ as $\CSn$, $\GLH$ and $\CSn\times \GLH$-modules. For the reader's convenience, relevant mathematical background is provided in \autoref{sec:representation_theory}. 

While the decomposition of $\HSn$ as a $\GLH$-module yields the symmetric subspace component, in order to consider $\End{\CSn}{\SymH}$ as a full matrix algebra, we need to consider its decomposition as a $\CSn$-module. Thus, we introduce an additional step in the decomposition of $\HSn$ to establish the computational complexity of $\Sdpbose(G)$.

It is a classical result that any unital matrix $*$-algebra is isomorphic to a direct sum of full matrix algebras. In what follows, we concretize this result in our specific setting and subsequently construct an explicit $*$-isomorphism.

\begin{proposition}
The space $\mathcal{B}:=\End{\CSn}{\SymH}$ is a unital matrix $*$-algebra.
\end{proposition}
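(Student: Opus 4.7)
The plan is to verify directly that $\mathcal{B}=\End{\CSn}{\SymH}$ satisfies the defining properties of a unital matrix $*$-algebra, namely: (i) it is a $\CC$-linear subspace of $\End{}{\SymH}$; (ii) it is closed under composition; (iii) it contains the identity operator on $\SymH$; and (iv) it is closed under the adjoint operation $(\cdot)^*$ inherited from $\End{}{\SymH}$. Since $\SymH$ is finite dimensional with $\dim_{\CC}\lrbracket{\SymH}=\binom{d_{\HS}+n-1}{n}$, any choice of orthonormal basis identifies $\End{}{\SymH}$ with a full matrix algebra $\CC^{m_{(n)}\times m_{(n)}}$, so $\mathcal{B}$ will automatically be a matrix $*$-algebra once (i)--(iv) are established.

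First, I would note that (i) follows because $S_n$-equivariance is a linear condition: if $A_1,A_2\in\mathcal{B}$ and $\alpha,\beta\in\CC$, then $(\alpha A_1+\beta A_2)U_{\HS_1^n}(\pi)=U_{\HS_1^n}(\pi)(\alpha A_1+\beta A_2)$ for every $\pi\in S_n$. For (ii), if $A_1,A_2\in\mathcal{B}$ then for all $\pi\in S_n$ we have $A_1A_2\,U_{\HS_1^n}(\pi)=A_1\,U_{\HS_1^n}(\pi)A_2=U_{\HS_1^n}(\pi)A_1A_2$, so $A_1A_2\in\mathcal{B}$. Property (iii) is immediate since $\mathrm{Id}_{\SymH}$ commutes with every operator.

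For (iv), the key observation is that the representation $\pi\mapsto U_{\HS_1^n}(\pi)$ is unitary. Hence $U_{\HS_1^n}(\pi)^*=U_{\HS_1^n}(\pi^{-1})$. Given $A\in\mathcal{B}$, taking adjoints of the equivariance relation $U_{\HS_1^n}(\pi)A=AU_{\HS_1^n}(\pi)$ yields $A^*U_{\HS_1^n}(\pi^{-1})=U_{\HS_1^n}(\pi^{-1})A^*$. Since $\pi\mapsto\pi^{-1}$ is a bijection of $S_n$, this relation holds for all $\pi\in S_n$, so $A^*\in\mathcal{B}$. One technical detail to check is that the adjoint is indeed well defined on $\End{}{\SymH}$ with respect to the inner product inherited from $\HSn$; this is automatic because $\SymH$ is the range of the orthogonal projector $P_{\SymH}$, and the inclusion $\SymH\hookrightarrow\HSn$ is an isometry.

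I do not anticipate any serious obstacle here: the statement is essentially bookkeeping, and the only mildly nontrivial point is the use of unitarity of the permutation representation in establishing $*$-closure. An alternative, more conceptual route would be to invoke \autoref{prop:Bose_sym_space_char}, which identifies $\End{\CSn}{\SymH}$ with the full $*$-algebra $\End{}{\SymH}$; then (i)--(iv) are trivial. I would mention both routes for completeness but favor the direct verification above, since it does not rely on Schur--Weyl duality and thus keeps the argument self-contained.
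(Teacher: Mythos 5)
Your proposal is correct and follows essentially the same route as the paper: verify linearity, closure under composition, closure under the adjoint via unitarity of the $S_n$ permutation representation, and identify the unit (the orthogonal projector onto $\SymH$, viewed inside $\HSn$). The only differences are notational (you work with the explicit representation $U_{\HS_1^n}(\pi)$ and take adjoints of the commutation relation, whereas the paper works with the module action $\sigma\cdot v$ and an inner-product calculation) and your passing mention of the alternative via $\End{\CSn}{\SymH}\simeq\End{}{\SymH}$, which the paper defers to a separate remark.
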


\begin{proof}
Since $\HS$ is of finite dimension $d_{\HS}$, $\mathcal{B}\subset \CC^{d_{\HS}^n\times d_{\HS}^n}$. Clearly, by linearity $\mathcal{B}$ is closed under addition and scalar multiplication. 
For $T_1,T_2\in \mathcal{B}$, $v\in \SymH $ and any $\sigma\in S_n$ we have
\begin{align}
    \lrbracket{T_1 T_2}\lrbracket{\sigma\cdot v} = T_1\lrbracket{T_2\lrbracket{\sigma\cdot v}}= T_1\lrbracket{\sigma \cdot T_2(v)}= \sigma\cdot T_1\lrbracket{T_2(v)}=\sigma\cdot\lrbracket{T_1T_2}(v)
\end{align}
extending to $\CSn$ via linearity. Thus, $\mathcal{B}$ is closed under matrix multiplication. Since $S_n$ is finite, its group action is unitary and thus
\begin{align}
    \left\langle \sigma\cdot v, T^*w \right\rangle = \left\langle T\lrbracket{\sigma\cdot v}, w \right\rangle = \left\langle  T(v), \sigma^{-1}\cdot w \right\rangle = \left\langle  v, T^{*}\lrbracket{\sigma^{-1}\cdot w }\right\rangle
\end{align}
implying $T^{*}\lrbracket{\sigma \cdot v}= \sigma\cdot T^{*}(v)$. If $\mathcal{B}$ is viewed as a subalgebra of $\End{}{\HSn}$, the unit is given by the corresponding orthogonal projector onto $\SymH$.
\end{proof}

\begin{remark}
    For finite dimensional $\HS\simeq\CC^{d_{\HS}}$, the canonical inner product $\langle x, y\rangle =y^{\dagger}x$ is $S_n$-invariant. More generally, for a finite group $G$ acting on $\CC^Z$ where $Z$ is a set of finite cardinality, any permutation representation
	\begin{align}
	\begin{split}
		R\,:\, &G\rightarrow \operatorname{GL}\lrbracket{\CC^Z}\\
		&g \mapsto  R_g,\,
	\end{split}
	\end{align}
	 is unitary \cite[Theorem 3.11]{etingof2009introduction} and thus,
	 \begin{align}
	 	 \left\langle R_g v, R_gw \right\rangle = \lrbracket{R_g w}^{\dagger}R_g v = w^{\dagger}R_g^{\dagger}R_gv= w^{\dagger}v = \left\langle v, w \right\rangle
	 \end{align}
	 proves the $G$-invariance of the inner product. Importantly, the unitarity of the action implies that the orthogonal complement of any $\CSn$-submodule of $\SymH$ is itself a $\CSn$-submodule.
\end{remark}

\begin{proposition}
    The matrix $*$-algebra $\End{\CSn}{\SymH}$ is simple.
\end{proposition}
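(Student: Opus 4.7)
The plan is to reduce the simplicity of $\End{\CSn}{\SymH}$ to the well-known simplicity of a full matrix algebra over $\CC$. By \autoref{prop:Bose_sym_space_char}, we have already established the algebra isomorphism
\begin{align}
\End{\CSn}{\SymH} \simeq \End{}{\SymH},
\end{align}
so it suffices to argue that $\End{}{\SymH}$ is simple as an algebra. Since $\SymH$ is a finite-dimensional $\CC$-vector space of dimension $m_{(n)} = \binom{d_{\HS}+n-1}{n}$, a choice of basis yields an algebra isomorphism $\End{}{\SymH} \simeq \CC^{m_{(n)}\times m_{(n)}}$.

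Hence I would conclude by invoking the classical fact that the full matrix algebra $\CC^{m\times m}$ over $\CC$ is simple. For completeness I would sketch the standard one-line argument: if $\mathcal{J}\subseteq \CC^{m\times m}$ is a nonzero two-sided ideal and $0\neq X\in\mathcal{J}$ with some entry $X_{k\ell}\neq 0$, then the matrix units $e_{ij}$ satisfy $e_{ik}Xe_{\ell j} = X_{k\ell}\,e_{ij}\in\mathcal{J}$ for all $i,j$, so $\mathcal{J}$ contains every matrix unit and therefore $\mathcal{J}=\CC^{m\times m}$. Pulling this back through the isomorphism above, any nonzero two-sided ideal of $\End{\CSn}{\SymH}$ must be the whole algebra, proving simplicity.

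Conceptually, the same conclusion can be read off directly from Schur-Weyl duality together with the double centralizer theorem, using only results stated earlier in the paper: $\SymH$ is a simple $\GLH$-module (the Weyl module $\Schurf{(n)}\HS$), it pairs with the one-dimensional trivial Specht module $\HS_{(n)}$ in the bimodule decomposition of \autoref{prop:schur_weyl_duality}, and consequently its $\CSn$-commutant has a single isotypical block, which is automatically a full matrix algebra and hence simple. The main subtlety to be careful about is making sure the isomorphism of \autoref{prop:Bose_sym_space_char} is one of algebras (not merely of $\CC$-vector spaces), but this is immediate because it is realized by the identity inclusion $\End{\CSn}{\SymH}\hookrightarrow\End{}{\SymH}$, which manifestly preserves composition. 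No computational obstacle is anticipated.
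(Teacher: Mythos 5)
Your argument is correct, and it is considerably more of an argument than the paper gives. The paper's own proof is essentially a one-line assertion — ``It contains no non-trivial (proper) two-sided ideals. Equivalently, its center is given by the $\CC$-span of the unit'' — which restates the definition of simplicity (and an equivalent characterization valid for semisimple algebras) without deriving it. You, by contrast, actually reduce the claim to a classical fact: you first invoke \autoref{prop:Bose_sym_space_char} to identify $\End{\CSn}{\SymH}$ with $\End{}{\SymH}$, observe that the latter is a full matrix algebra $\CC^{m_{(n)}\times m_{(n)}}$, and then run the standard matrix-unit argument to show that a full matrix algebra over a field is simple. This is a complete and valid chain. Your worry about whether the identification in \autoref{prop:Bose_sym_space_char} is an algebra isomorphism rather than merely a linear one is also correctly dispatched: that proposition proves equality of the two sets of operators inside $\End{}{\SymH}$, so the algebra structures literally coincide. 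Your alternative Schur--Weyl route (simple $\GLH$-module pairing with the one-dimensional trivial Specht module yields a single isotypic block, hence a full matrix algebra) is likewise correct and is closer in spirit to the surrounding discussion in the paper's appendix, but neither path is what the paper's terse ``proof'' actually writes down. In short: same conclusion, but you supply a real proof where the paper supplies an assertion.
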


\begin{proof}
    By \autoref{prop:Bose_sym_space_char}, $\End{\CSn}{\SymH}\simeq\End{}{\SymH}$ is the full endomorphism algebra of a finite-dimensional complex vector space. Such an algebra contains no non-trivial (proper) two-sided ideals; equivalently, its center is given by the $\CC$-span of the unit of $\End{\CSn}{\SymH}$.
\end{proof}
Thus, it is isomorphic as an algebra to a single full matrix algebra.

Next, we exploit the fact that while the Weyl modules $\Schurf{\lambda}\HS$ are irreducible as $\GLH$-modules, they become reducible when viewed as $\CSn$-modules. By Maschke's theorem, $\CSn$ is semisimple and thus any $\CSn$-module decomposes into a direct sum of simple Specht modules\footnote{For fields of characteristic zero, i.e.\ $\CC$, Specht modules are simple.}. Importantly, the symmetric subspace $\Schurf{(n)}\HS$ is isomorphic, as a $\CSn$-module, to a direct sum of polynomially many (in $n$) isomorphic copies of the 1-dimensional trivial Specht module.

\begin{proposition}
The symmetric subspace $\SymH$ is isomorphic as a $\CSn$-module to a direct sum of $m_{(n)}=\binom{d_{\HS}+n-1}{n}$ many simple Specht modules isomorphic to the trivial Specht module $\HS_{(n)}$, i.e.\
	\begin{align}
		\SymH\simeq_{\CSn} \HS_{(n)}^{\oplus \dim_{\CC}\lrbracket{\SymH}}.\,
	\end{align}
\end{proposition}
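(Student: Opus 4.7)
The plan is to derive the decomposition directly from the definition of $\SymH$ together with Schur--Weyl duality (Proposition~\ref{prop:schur_weyl_duality}), thereby making the multiplicity explicit.

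First I would observe that, by definition, every $v \in \SymH$ satisfies $\pi \cdot v = v$ for all $\pi \in S_n$, so the restriction of the natural $\CSn$-action on $\HSn$ to $\SymH$ is the trivial action. Consequently, for any $v \in \SymH$, the one-dimensional subspace $\CC v \subseteq \SymH$ is a $\CSn$-submodule isomorphic to the trivial Specht module $\HS_{(n)}$. By Maschke's theorem (applicable since $\CSn$ is semisimple in characteristic zero), $\SymH$ decomposes as a direct sum of simple $\CSn$-submodules, and every such summand, being one-dimensional with trivial action, is isomorphic to $\HS_{(n)}$.

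Next I would count the number of summands. Since each summand has $\CC$-vector space dimension one, the number of copies equals $\dim_{\CC}\lrbracket{\SymH} = \binom{n+d_{\HS}-1}{n} = m_{(n)}$, where the formula is recalled (e.g.\ from Proposition~\ref{prop:Bose_sym_space_char} or Proposition~\ref{prop:dimension_bound_sym_subspace}). This yields
\begin{equation*}
    \SymH \simeq_{\CSn} \HS_{(n)}^{\oplus m_{(n)}}.
\end{equation*}

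As an independent cross-check, I would invoke Schur--Weyl duality: Proposition~\ref{prop:schur_weyl_duality} decomposes $\HSn \simeq \bigoplus_{\lambda \vdash_{d_{\HS}} n} \HS_\lambda \otimes \Schurf{\lambda}\HS$ as a $\CSn \times \GLH$-bimodule, and the $\GLH$-isotypic component corresponding to $\lambda = (n)$ is precisely $\SymH$, giving $\SymH \simeq \HS_{(n)} \otimes \Schurf{(n)}\HS$. Since $\HS_{(n)}$ is one-dimensional with trivial $\CSn$-action, this is isomorphic as a $\CSn$-module to $\HS_{(n)}^{\oplus \dim_{\CC}(\Schurf{(n)}\HS)}$, and $\dim_{\CC}(\Schurf{(n)}\HS) = m_{(n)}$ by definition. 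There is no real obstacle here: the only delicate point is keeping the interpretation of $m_{(n)}$ consistent (it denotes $\dim_{\CC}(\Schurf{(n)}\HS)$, which in this single-row case coincides with $\dim_{\CC}(\SymH)$ because the paired Specht module is one-dimensional), and once this bookkeeping is in place the proof is immediate.
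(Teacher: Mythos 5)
Your proof is correct and takes essentially the same route as the paper: observe that every vector in $\SymH$ is fixed by the $S_n$-action, so each line $\CC v$ is a trivial Specht module, and the multiplicity is $\dim_{\CC}(\SymH)$. The Schur--Weyl cross-check is a nice bonus, and you have quietly fixed a typo in the statement ($m_{(n)}=\binom{n+d_{\HS}-1}{n}$, not $\binom{n+d_{\HS}+1}{n}$).
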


\begin{proof}
	By definition, $\sigma \cdot v = v$ for any $v\in\SymH$ and every $\sigma\in S_n$. Thus, $v$ is an eigenvector with eigenvalue 1 under every group element of $S_n$. Hence, the restriction of the $S_n$-action to $\SymH$ is trivial: each $\sigma\in S_n$ acts as the identity on $\SymH$. Thus, $\text{span}_{\CC}\lrbracket{v}$ is a simple $\CSn$-module isomorphic to the trivial Specht module. The multiplicity corresponds to the $\CC$-vector space dimension of $\SymH$.
\end{proof}

In other words, the isotypic component of the trivial representation in $\HSn$ is precisely $\SymH$. In general, we have the following proposition.

\begin{proposition}\label{prop:Weyl_to_Specht}
The Weyl modules in the Weyl tensorial construction are isomorphic as $\CSn$-modules to a direct sum of isomorphic Specht modules
	\begin{align}
		\Schurf{\lambda}\HS^{\oplus\dim_{\CC}\lrbracket{\HS_{\lambda}}}\simeq \HS_{\lambda}^{\oplus \dim_{\CC}\lrbracket{\Schurf{\lambda}\HS}}.\,
	\end{align}
\end{proposition}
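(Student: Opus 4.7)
The plan is to identify both sides of the asserted isomorphism with the same $\lambda$-isotypic component of $\HSn$, once viewed through its $\GLH$-decomposition and once through its $\CSn$-decomposition, and to pass between the two viewpoints via the bimodule form of Schur--Weyl duality.

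First I would invoke \autoref{prop:schur_weyl_duality} to write
\begin{align*}
	\HSn \;\simeq_{\CSn\times\GLH}\; \bigoplus_{\mu\vdash_{d_{\HS}} n} \HS_{\mu}\otimes \Schurf{\mu}\HS,
\end{align*}
in which $\CSn$ acts only on the Specht factor $\HS_{\mu}$ and $\GLH$ acts only on the Weyl factor $\Schurf{\mu}\HS$. Restricting this identification to the summand indexed by the fixed partition $\lambda$ isolates the $\lambda$-component $\HS_{\lambda}\otimes\Schurf{\lambda}\HS$. Viewing this component purely as a $\GLH$-module recovers $(\Schurf{\lambda}\HS)^{\oplus\dim_{\CC}(\HS_{\lambda})}$, which is precisely the object on the left-hand side of the claim: the $\GLH$-isotypic space of $\Schurf{\lambda}\HS$ inside $\HSn$ has multiplicity equal to $\dim_{\CC}(\HS_{\lambda})$, by the second decomposition in \autoref{prop:schur_weyl_duality}.

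Next, I would forget the $\GLH$-action on $\HS_{\lambda}\otimes\Schurf{\lambda}\HS$ and read it purely as a $\CSn$-module. Because the $\CSn$-action is by construction trivial on the Weyl factor $\Schurf{\lambda}\HS$, choosing any $\CC$-basis $\{w_1,\ldots,w_{\dim_{\CC}(\Schurf{\lambda}\HS)}\}$ of $\Schurf{\lambda}\HS$ produces the $\CSn$-equivariant decomposition
\begin{align*}
	\HS_{\lambda}\otimes \Schurf{\lambda}\HS \;=\; \bigoplus_{i=1}^{\dim_{\CC}(\Schurf{\lambda}\HS)} \HS_{\lambda}\otimes \CC w_i \;\simeq_{\CSn}\; \HS_{\lambda}^{\oplus\dim_{\CC}(\Schurf{\lambda}\HS)},
\end{align*}
which is the right-hand side of the claim. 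Chaining the two $\CSn$-isomorphisms yields the desired identification
\begin{align*}
	\Schurf{\lambda}\HS^{\oplus\dim_{\CC}(\HS_{\lambda})} \;\simeq_{\CSn}\; \HS_{\lambda}\otimes\Schurf{\lambda}\HS \;\simeq_{\CSn}\; \HS_{\lambda}^{\oplus\dim_{\CC}(\Schurf{\lambda}\HS)}.
\end{align*}

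The only subtlety I expect is purely notational: the left-hand side is most naturally written as a $\GLH$-module, so one must be careful to record that the $\CSn$-action on each Weyl summand $\Schurf{\lambda}\HS$ is inherited from the bimodule structure rather than from anything intrinsic to the Schur functor $\Schurf{\lambda}(\cdot)$. Once the $\CSn$-equivariance of the bimodule isomorphism of \autoref{prop:schur_weyl_duality} is invoked, no further computation is required; Maschke's theorem guarantees that the decomposition of $\HS_{\lambda}\otimes \Schurf{\lambda}\HS$ along a basis of the $\CSn$-trivial factor $\Schurf{\lambda}\HS$ is legitimate, and semisimplicity of $\CSn$ (as already used in \autoref{eqn:decomp_Sn_group_alebra}) makes the direct-sum multiplicities unambiguous.
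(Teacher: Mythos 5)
Your argument is correct and follows essentially the same route as the paper's one-line proof: both identify $\Schurf{\lambda}\HS^{\oplus\dim_{\CC}(\HS_{\lambda})}$ with the $\lambda$-isotypic component $\HS_{\lambda}\otimes\Schurf{\lambda}\HS$ of $\HSn$ (you via the bimodule form of Schur--Weyl duality, the paper via \autoref{prop:V_decomp}, which is what underlies that bimodule decomposition) and then read off its $\CSn$-structure by treating the Weyl factor as a pure multiplicity space. The only cosmetic point is your phrase ``the $\CSn$-action is trivial on the Weyl factor'': what you mean (and what you correctly state earlier) is that $\CSn$ acts only on the Specht factor, not that the Weyl factor carries the trivial representation, so a reader could momentarily misread this.
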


\begin{proof}
With \autoref{prop:V_decomp} we have
	\begin{align}
		\Schurf{\lambda}\HS^{\oplus\dim_{\CC}\lrbracket{\HS_{\lambda}}} \simeq \SpechtH\otimes\Hom{\CSn}{\SpechtH}{\HSn} \simeq \SpechtH^{\oplus \dim_{\CC}\lrbracket{\Schurf{\lambda}\HS}}.\,
	\end{align}
\end{proof}
By Schur's lemma, $\CSn$-invariant maps with support and range on $\SymH$ decompose accordingly.

\begin{proposition}
We have 
	\begin{align}
		\End{\CSn}{\SymH} \simeq \CC^{m_{(n)}\times m_{(n)}}.\,
	\end{align}
\end{proposition}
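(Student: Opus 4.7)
The plan is to reduce the claim to an elementary application of Schur's lemma using the $\CSn$-module decomposition of $\SymH$ established in the preceding proposition, namely
\begin{align*}
	\SymH \simeq_{\CSn} \HS_{(n)}^{\oplus m_{(n)}},\qquad m_{(n)}=\dim_{\CC}\lrbracket{\SymH}=\binom{n+d_{\HS}-1}{n}.
\end{align*}
Since a $\CC$-algebra isomorphism of endomorphism algebras is preserved under $\CSn$-module isomorphism of the underlying module, it suffices to prove $\End{\CSn}{\HS_{(n)}^{\oplus m_{(n)}}}\simeq \CC^{m_{(n)}\times m_{(n)}}$.

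First, I would expand any $\CSn$-equivariant endomorphism $T$ of the direct sum $\HS_{(n)}^{\oplus m_{(n)}}$ into its block form $T=(T_{ij})_{i,j\in\lrrec{m_{(n)}}}$ with $T_{ij}\in \Hom{\CSn}{\HS_{(n)}}{\HS_{(n)}}$. This identifies $\End{\CSn}{\HS_{(n)}^{\oplus m_{(n)}}}$ with the space of $m_{(n)}\times m_{(n)}$ matrices whose entries lie in $\End{\CSn}{\HS_{(n)}}$, and endows this space with the natural algebra structure coming from matrix multiplication. Then, by Schur's lemma applied to the simple $\CSn$-module $\HS_{(n)}$ (the one-dimensional trivial Specht module), every nonzero equivariant map $\HS_{(n)}\to \HS_{(n)}$ is a scalar multiple of the identity, i.e.\ $\End{\CSn}{\HS_{(n)}}\simeq \CC$. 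Substituting this into the block decomposition yields precisely the full matrix algebra $\CC^{m_{(n)}\times m_{(n)}}$, and the map $T\mapsto (T_{ij})$ is visibly a $*$-algebra homomorphism (since $\HS_{(n)}$ carries the trivial unitary action and the scalars are identified canonically).

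To connect this back to $\End{\CSn}{\SymH}$, I would invoke the isomorphism $\phi:\SymH\to\HS_{(n)}^{\oplus m_{(n)}}$ of $\CSn$-modules, which induces the $*$-algebra isomorphism $T\mapsto \phi T\phi^{-1}$ between $\End{\CSn}{\SymH}$ and $\End{\CSn}{\HS_{(n)}^{\oplus m_{(n)}}}$. Combined with the previous paragraph, this gives the claimed isomorphism with $\CC^{m_{(n)}\times m_{(n)}}$. As a sanity check, the dimensions match: $\dim_{\CC}\End{\CSn}{\SymH}=\dim_{\CC}\End{}{\SymH}=m_{(n)}^2$ by \autoref{prop:Bose_sym_space_char}.

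The only mildly nontrivial step is the Schur's lemma application, but this is immediate in our situation since $\HS_{(n)}$ is one-dimensional, so the "main obstacle" is essentially just book-keeping: fixing an explicit $\CSn$-module isomorphism $\phi$ that makes the induced correspondence a genuine $*$-isomorphism (not just a $\CC$-linear bijection). This is what the subsequent weight-space/polytabloid construction in the paper accomplishes, realizing the abstract isomorphism concretely via the matrix $U_{(n)}$ used in \autoref{prop:isomorphism_main_part}.
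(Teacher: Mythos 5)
Your proof is correct and takes essentially the same approach as the paper: decompose $\SymH$ as a direct sum of $m_{(n)}$ copies of the trivial Specht module $\HS_{(n)}$, then apply Schur's lemma (trivial here since $\HS_{(n)}$ is one-dimensional) to conclude that the $\CSn$-equivariant endomorphism algebra is the full matrix algebra $\CC^{m_{(n)}\times m_{(n)}}$. The paper phrases the same computation a bit more generally via $\mathrm{Hom}$-space additivity and Wedderburn--Artin for arbitrary $\lambda$ before specializing to $\lambda=(n)$, but the underlying argument coincides with yours.
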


\begin{proof}
In short, due to Schur's lemma, $\operatorname{Hom}$-space additivity and Artin--Wedderburn,
\begin{align}
	\begin{split}
		\End{\CSn}{\Schurf{\lambda}\HS^{\oplus \dim_{\CC}\lrbracket{\SpechtH}}}&\simeq \Hom{\CSn}{\SpechtH\otimes\Hom{\CSn}{\SpechtH}{\HSn}}{\Schurf{\lambda}\HS^{\oplus \dim_{\CC}\lrbracket{\SpechtH}}}\\
		&\simeq \Hom{\CSn}{\SpechtH^{\oplus \dim_{\CC}\lrbracket{\Schurf{\lambda}\HS}}}{\Schurf{\lambda}\HS^{\oplus \dim_{\CC}\lrbracket{\SpechtH}}} \simeq \End{\CSn}{\SpechtH^{\oplus \dim_{\CC}\lrbracket{\Schurf{\lambda}\HS}}}\\
		&\simeq M_{\dim_{\CC}\lrbracket{\Schurf{\lambda}\HS}}\lrbracket{\End{\CSn}{\SpechtH}} \simeq M_{\dim_{\CC}\lrbracket{\Schurf{\lambda}\HS}}\lrbracket{\CC}.\,
	\end{split}
	\end{align}
	Specifically, 
	\begin{align}
		\End{\CSn}{\SymH}\simeq M_{\dim_{\CC}\lrbracket{\Schurf{(n)}\HS}}\lrbracket{\CC}.\,
	\end{align}
\end{proof}

Importantly, as $\End{\CSn}{\SymH}$ and $M_{\binom{n+d-1}{n}}\lrbracket{\CC}$ are $*$-isomorphic as unital algebras, any such mapping preserves eigenvalues and thus positive semidefiniteness. The explicit construction of the $*$-isomorphism requires a decomposition of $\SymH$ as a $\CSn$-module. Since $\SymH$ is invariant under the action of $\CSn$, any subspace of $\SymH$ is similarly invariant. Consequently, we can identify $m_{(n)}$ distinct subspaces, each of which is isomorphic as a $\CSn$-module to the trivial Specht module $\HS_{(n)}$. Let $\mathcal{U}_{(n)}\subset\SymH$ be isomorphic as a $\CSn$-module to the trivial Specht module. Let $e_{(n)}\in \mathcal{U}_{(n)}\backslash\lrbrace{0}$ be a non-zero vector in this module. Importantly, since $\mathcal{U}_{(n)}$ is simple, $\mathcal{U}_{(n)}\simeq \CSn\cdot e_{(n)}$ by cyclicity.

Furthermore, consider the image of $e_{(n)}$ under $\End{\CSn}{\SymH}$, i.e.\
\begin{align}
    W_{(n)}:=\lrbrace{Ae_{(n)} \,\vert\, A\in\End{\CSn}{\SymH}},\,
\end{align}
with an orthonormal basis $B_{(n)}$. We have the following corollary of \cite[Theorem 3]{gijswijt2009block}.

\begin{proposition}
The map
    \begin{align}
        \begin{split}
            \psi \,:\,  \End{\CSn}{\SymH} &\rightarrow \CC^{m_{(n)}\times m_{(n)}}\\
			 A &\mapsto \lrbracket{\left\langle Ab, b' \right\rangle}_{b, b'\in B_{(n)}}
        \end{split}
    \end{align}
    is an algebra $*$-isomorphism.
\end{proposition}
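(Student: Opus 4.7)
The plan is to reduce the proposition to the classical fact that taking matrix entries in an orthonormal basis gives a $*$-isomorphism from $\End(V)$ to the full matrix algebra, once we have identified the ambient space and verified that $B_{(n)}$ is a genuine orthonormal basis for it.

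First, I would observe that since $S_n$ acts trivially on $\SymH$ by definition of the symmetric subspace, every linear map on $\SymH$ is automatically $\CSn$-equivariant, so $\End{\CSn}{\SymH} = \End{}{\SymH}$. In particular $\dim_{\CC}\End{\CSn}{\SymH} = m_{(n)}^2$. Next, I would show that $W_{(n)} = \SymH$. To see this, decompose $\SymH \simeq_{\CSn} U_1 \oplus \cdots \oplus U_{m_{(n)}}$ into $m_{(n)}$ copies of the one-dimensional trivial Specht module, with $U_1 = U_{(n)}$ and $e_{(n)}$ a generator of $U_1$. Picking one-dimensional bases $\{v_i\}$ of $U_i$ with $v_1 = e_{(n)}$, any $A \in \End{\CSn}{\SymH}$ can be written as a matrix $(a_{ij})$ in this basis, and $A e_{(n)} = \sum_i a_{i1} v_i$. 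As $A$ ranges over the whole algebra, $(a_{11},\dots,a_{m_{(n)} 1})$ ranges over all of $\CC^{m_{(n)}}$, so $W_{(n)} = \sum_i U_i = \SymH$. Hence $B_{(n)}$ is an orthonormal basis of $\SymH$ of cardinality $m_{(n)}$.

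With this in hand, the map $\psi$ is precisely the assignment of the matrix $[\psi(A)]_{b,b'} = \langle A b, b'\rangle_{S_n}$ to $A$ in the orthonormal basis $B_{(n)}$. I would verify the four algebraic properties in turn. Linearity of $\psi$ is immediate from linearity of the inner product in its first argument. For multiplicativity, I would expand using the resolution of identity given by $B_{(n)}$: for any $A,B \in \End{\CSn}{\SymH}$ and $b,b' \in B_{(n)}$,
\begin{align*}
[\psi(AB)]_{b,b'} = \langle A(Bb), b'\rangle_{S_n} = \sum_{c \in B_{(n)}} \langle Bb, c\rangle_{S_n}\, \langle A c, b'\rangle_{S_n} = \sum_{c} [\psi(B)]_{b,c}\,[\psi(A)]_{c,b'},
\end{align*}
which (after identifying the matrix product with the correct index convention as in \cite{gijswijt2009block}) gives the algebra homomorphism property. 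For the $*$-structure, I would use the defining property of the adjoint together with $S_n$-invariance of the inner product:
\begin{align*}
[\psi(A^*)]_{b,b'} = \langle A^* b, b'\rangle_{S_n} = \overline{\langle A b', b\rangle_{S_n}} = \overline{[\psi(A)]_{b',b}} = \bigl([\psi(A)]^*\bigr)_{b,b'}.
\end{align*}
Finally, bijectivity follows from a dimension count: both sides have $\CC$-dimension $m_{(n)}^2$, and $\psi$ is injective because an operator on $\SymH$ is completely determined by its matrix of inner products with an orthonormal basis (if $[\psi(A)]_{b,b'} = 0$ for all $b,b'$, then $\langle Ab, b'\rangle_{S_n} = 0$ for all basis elements, forcing $Ab = 0$ for all $b \in B_{(n)}$ and hence $A = 0$).

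The only step requiring genuine content is identifying $W_{(n)}$ with $\SymH$; everything else is routine linear algebra once the right orthonormal basis is in place. There is no serious obstacle, since the triviality of the $S_n$-action on $\SymH$ collapses the general Wedderburn argument (as invoked more generally in \cite{gijswijt2009block}) to the standard correspondence between linear operators and their matrix representations.
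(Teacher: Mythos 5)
Your proof is correct and takes a genuinely different, more elementary route than the paper. The paper disposes of this proposition in one line by citing \cite[Theorem 3]{gijswijt2009block}, a general block-diagonalization result for invariant matrix $*$-algebras; that citation also serves the later, more general variant involving arbitrary Weyl modules $\Schurf{\lambda}\HS$, so it keeps the exposition uniform across isotypic components. You instead exploit the triviality of the $S_n$-action on $\SymH$ to collapse $\End{\CSn}{\SymH}$ to all of $\End{}{\SymH}$, identify $W_{(n)}=\SymH$ (which the paper proves in a subsequent proposition, also via a rank-one-operator argument), and then the statement becomes the textbook fact that matrix coefficients in an orthonormal basis give a $*$-isomorphism onto the full matrix algebra. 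What you gain is self-containedness and a clear view of \emph{why} the Wedderburn machinery degenerates for $\lambda=(n)$; what you lose is the parallelism with the non-trivial Weyl modules, which is where the general theorem is actually needed. One shared caveat that you rightly flag: with the literal index convention $(\psi(A))_{b,b'}=\langle Ab,b'\rangle$ (rows $b$, columns $b'$), your resolution-of-identity computation gives $\psi(AB)=\psi(B)\psi(A)$, an \emph{anti}-homomorphism; to get an algebra homomorphism one must either transpose the matrix or place $\langle Ab,b'\rangle$ at position $(b',b)$ so that the unitary conjugation $U^*AU$ is recovered. The paper itself glosses over the same transpose, and it is immaterial for the preservation of positive semidefiniteness, which is the property the SDP reduction actually uses — but as a claim of an algebra isomorphism it is worth stating the convention explicitly.
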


Each $A\in  \End{\CSn}{\SymH}$ is thus mapped to its restriction to $W_{(n)}$ which in turn is the image of $e_{(n)}$ under the action of $\End{\CSn}{\SymH}$. Thus, the $*$-isomorphism can be made explicit given a non-zero vector in $\mathcal{U}_{(n)}$ and an explicitly computable\footnote{Since $\End{\CSn}{\SymH}$ as an algebra is closed under multiplication, $AA'$ is again in $\End{\CSn}{\SymH}$.} $\left\langle AA'e_{(n)}, A''e_{(n)}\right\rangle$ for any $A, A', A''$ from a basis of $\End{\CSn}{\SymH}$. Since isomorphic $\CSn$-modules are connected by a $\CSn$-invariant map, i.e.\ an element from $\End{\CSn}{\SymH}$, the expressions $A'e_{(n)}$ and $A''e_{(n)}$ correspond to $\CSn$-modules isomorphic to $\mathcal{U}_{(n)}$.

\begin{proposition}
We have  $W_{(n)} \simeq \SymH$.
\end{proposition}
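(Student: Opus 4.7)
The plan is to prove the stronger statement that $W_{(n)} = \SymH$ as subspaces of $\SymH$, from which the asserted isomorphism is immediate. The underlying idea is that $W_{(n)}$ will turn out to be a nonzero invariant subspace for a module structure under which $\SymH$ is already known to be simple, forcing $W_{(n)}$ to fill out all of $\SymH$.

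First I would verify that $W_{(n)}$ is closed under the action of $\End{\CSn}{\SymH}$: for any $B \in \End{\CSn}{\SymH}$ and any $A e_{(n)} \in W_{(n)}$, one has $B(A e_{(n)}) = (BA) e_{(n)} \in W_{(n)}$ since $BA \in \End{\CSn}{\SymH}$. Combined with $e_{(n)} = \mathrm{Id}_{\SymH}(e_{(n)}) \in W_{(n)}$ and $e_{(n)} \neq 0$, this identifies $W_{(n)}$ as a nonzero sub-$\End{\CSn}{\SymH}$-module of $\SymH$.

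Next I would translate this invariance into $\GLH$-invariance. By \autoref{prop:fulton_harris_2}, the algebra $\End{\CSn}{\HSn}$ is spanned linearly in $\End{}{\HSn}$ by the image of $\GLH$, and a subspace of $\HSn$ is $\GLH$-invariant precisely when it is a sub-$\End{\CSn}{\HSn}$-module. Restricting to the $\GLH$-submodule $\SymH$ (which, by \autoref{prop:Bose_sym_space_char}, satisfies $\End{\CSn}{\SymH} = \End{}{\SymH}$) one concludes that $W_{(n)}$ is a $\GLH$-submodule of $\SymH$. Since $\SymH = \Schurf{(n)}\HS$ is simple as a $\GLH$-module, its only nonzero submodule is itself, so $W_{(n)} = \SymH$.

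The only real subtlety is the passage from $\End{\CSn}{\SymH}$-invariance to $\GLH$-invariance; this is precisely what \autoref{prop:fulton_harris_2} furnishes. In fact, a shortcut is available: because $\End{\CSn}{\SymH} = \End{}{\SymH}$, the set $W_{(n)}$ is the orbit of the fixed vector $e_{(n)}$ under all linear endomorphisms of $\SymH$, which is either $\{0\}$ or all of $\SymH$; since $e_{(n)} \neq 0$, the latter must hold. Either route gives $W_{(n)} = \SymH$, and the proposition follows.
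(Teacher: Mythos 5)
Your proposal is correct, and it actually contains two arguments. The ``shortcut'' at the end is essentially the paper's own proof: the paper exploits $\End{\CSn}{\SymH}=\End{}{\SymH}$ by \emph{explicitly constructing} the endomorphism $A=w\,v_1^*$ that sends $v_1$ to an arbitrary $w\in\SymH$ and annihilates the rest of a basis, whereas you simply observe that the orbit of any nonzero vector under \emph{all} linear endomorphisms is the whole space; these are the same observation at different levels of abstraction. Your primary route is genuinely different in flavor: you package $W_{(n)}$ as a nonzero sub-$\End{\CSn}{\SymH}$-module, push it to $\GLH$-invariance via Schur--Weyl (\autoref{prop:fulton_harris_2}), and invoke irreducibility of $\SymH=\Schurf{(n)}\HS$ as a $\GLH$-module to conclude. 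This is a clean density-style argument and avoids any explicit construction, but it leans on a step you leave implicit: \autoref{prop:fulton_harris_2} speaks of sub-$\End{\CSn}{\HSn}$-modules of $\HSn$, so one must note that $\SymH$ is a $\CSn$-direct summand of $\HSn$, hence the restriction map $\End{\CSn}{\HSn}\to\End{\CSn}{\SymH}$ is surjective, so $\End{\CSn}{\SymH}$-invariance of $W_{(n)}$ is the same as $\End{\CSn}{\HSn}$-invariance. Once that is spelled out, the route is sound; the paper's constructive version buys explicitness (useful elsewhere in the manuscript where the isomorphism is realized concretely via weight vectors), while your route buys brevity and transfers immediately to arbitrary Weyl modules $\Schurf{\lambda}\HS$ without choosing bases.
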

\begin{proof}
Recall,
    \begin{align}
        \SymH\simeq \bigoplus_{i=1}^{m_{(n)}} V_i,\,
    \end{align}
    where $U_{(n)}\simeq V_i\simeq V_j$ for $i,j=1,\ldots,m_{(n)}$ are isomorphic as $\CSn$-modules. For each $i=1,\ldots, m_{(n)}$, let $v_i$ be a non-zero basis vector spanning $V_i$. Thus, $\lrbrace{v_i}_{i=1}^{m_{(n)}}$ is a direct sum basis for $\SymH$. From \autoref{prop:Bose_sym_space_char} any linear operator on $\SymH$ is Bose-symmetric, and hence, the full matrix algebra $M_{m_{(n)}}(\CC)$ indexed in $\lrbrace{v_i}_{i=1}^{m_{(n)}}$ basis is Bose-symmetric. Let $w\in\SymH$ be any vector which can be uniquely written as
    \begin{align}
        w=\sum_{i=1}^{m_{(n)}} \alpha_iv_i.\,
    \end{align}
    Define $A$ as the $\CC$-linear map determined by
    \begin{align}
        A(v_1)=w,\, \hspace{1cm} A(v_j)=0 \hspace{0.5cm} \text{for } j=2,\ldots, m_{(n)}.\,
    \end{align}
    Let $\lrrec{A}$ denote the matrix representation of $A$ in the $\lrbrace{v_i}_{i=1}^{m_{(n)}}$ basis. Then, 
    \begin{align}
        \lrrec{A}_{i,1}=\alpha_i,\,  \hspace{1cm} \lrrec{A}_{i,j}=0 \hspace{0.5cm} \text{for } i=1,\ldots,m_{(n)},\, j=2, \ldots, m_{(n)}.\,
    \end{align}
    Alternatively, let $v_1^*$ be the dual vector to $v_1$, then $A=wv_1^*$. 
    Since $S_n$ acts trivially on any element of the symmetric subspace, $A$ is Bose-symmetric. Thus, $v_1$ can be mapped to any $w\in\SymH$ under some intertwiner $A$; identifying $e_{(n)}=v_1$, this yields $W_{(n)}=\SymH$. The same argument applies to any other choice of generating vector $v_j$.
\end{proof}

With the proposition above, we can provide an explicit $*$-isomorphism given a basis of the symmetric subspace. In \autoref{sec:symmetric_subspace_methods}, we provided such a basis following the approach of \cite{harrow2013church}. Here, we provide a more detailed derivation of this basis using highest weight theory and semistandard Young tableaux. We can decompose each Weyl module, a highest weight module, into its weight spaces, i.e.\ a collection of vectors with the same weight. The action of the Cartan subalgebra $\mathfrak{h}$ of $\mathfrak{gl}\lrbracket{\HS}$, i.e.\ the Lie algebra of the maximal torus of $\GLH$, makes this concrete. Consider the tensor power action of the Lie algebra $\mathfrak{gl}(\HS)$ on $\HSn$. Then, the weight module admits a decomposition
\begin{align}
    \HSn = \bigoplus_{\mu\in\mathfrak{h}^{*}}\underbrace{\lrbrace{w\in \HSn\, \vert\, H\cdot w = \mu(H)\, w,\, \forall H\in \mathfrak{h}}}_{=: \lrbracket{\HSn}_{\mu}}
\end{align}
where each $\mu$ arises as a sum over the tensor factors of the weights of $\mathfrak{h}$ and $H\in\mathfrak{h}$ with $H=\text{diag}\lrbracket{h_1,\ldots,h_{d_{\HS}}}$ acts on simple tensors in the canonical basis to $\HSn$ as
\begin{align}
	H\cdot \lrbracket{\bigotimes_{j=1}^n e_{i_j}} = \lrbracket{\sum_{j=1}^n h_{i_j}}\lrbracket{\bigotimes_{j=1}^ne_{i_j}},
\end{align}
and thus extends to $\HSn$ by linearity. Restricting the action of $\mathfrak{h}$ to a linear subspace of $\HSn$ we get a corresponding weight space decomposition of the symmetric subspace
\begin{align}
    \SymH \simeq \bigoplus_{\mu\in\mathfrak{h}^{*}} \SymH \cap \lrbracket{\HSn}_{\mu}.\,
\end{align}
It is easy to see that the intersection of the linear subspaces is again a linear subspace. While there exist potentially many elements of the canonical basis to $\HSn$ with the same weight, under the idempotent $c_{(n)}$ projecting onto $\SymH$, they become a single vector in $\SymH$. In other words, the collection of vectors with the same weight becomes the same symmetrized vector in $\SymH$ (up to normalization), i.e.\
\begin{align}
	\SymH\cap\lrbracket{\HSn}_{\mu} = \text{span}_{\CC}\lrbracket{\lrbracket{\HSn}_{\mu}\cdot c_{(n)}}.\,
\end{align} 
Since $\mathfrak{h}$ is abelian, the weight of a canonical basis element of $\HSn$ depends only on its type, i.e.\ on the tuple $(\alpha_1, \alpha_2, \ldots, \alpha_{d_{\HS}})$, where $\alpha_j$ counts how many times the canonical basis vector $e_j$ of $\HS$ appears in the corresponding simple tensor. Thus, the weight spaces in $\SymH$ can be identified with semistandard labelings of Young shape $Y((n))$ with entries from $\lrrec{d_{\HS}}$. The dimension of the weight spaces is given by the number of semistandard tableaux with the corresponding weight content.

\begin{proposition}
    For each weight $\mu$ occurring in $\SymH$, the $\CC$-vector space dimension of $\SymH \cap \lrbracket{\HSn}_{\mu}$ is one.
\end{proposition}

\begin{proof}
    For Young shape $Y((n))$ there exists exactly one semistandard labeling encoding a given weight.
\end{proof}

\begin{proposition}
Let $\tau$ be a semistandard labeling with weight $\mu$ of Young shape $Y((n))$. Consider the image of the corresponding Young symmetrizer in $\HSn$ (polytabloid), i.e.\
\begin{align}
	\mathbb{R}^{d_{\HS}^n}\ni u_{\tau}= \sum_{\tau'\sim\tau}\bigotimes_{y\in Y((n))}e_{\tau'(y)},
\end{align}
where the sum ranges over all tableaux $\tau'$ in the orbit of $\tau$ under the row stabilizer subgroup of $\tau$. Then,
\begin{align}\label{eqn:iso_weight_c_n_image}
	 \SymH\cap\lrbracket{\HSn}_{\mu} \simeq \text{span}_{\CC} \lrbracket{u_{\tau}}\simeq \CSn\cdot u_{\tau}.\,
\end{align}
Furthermore, the weight spaces in $\SymH$ are isomorphic as $\CSn$-modules to the trivial Specht module.
\end{proposition}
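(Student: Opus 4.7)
The plan is built around three observations that together reduce the statement to elementary representation theory of the trivial permutation module.

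First, I would unpack the definition of $u_\tau$ for the single-row shape $Y((n))$: the row stabilizer of any tableau of shape $(n)$ is all of $S_n$, so the condition $\tau' \sim \tau$ ranges over every rearrangement of the entries of $\tau$. Consequently
\begin{align*}
u_\tau = \sum_{\vec{i} \in [d_{\HS}]^n : T(\vec{i}) = \mu} e_{i_1} \otimes \cdots \otimes e_{i_n},
\end{align*}
which is exactly the vector $\ket{s_\mu}$ from \cite{harrow2013church}. This verifies $u_\tau \in \SymH$ because the sum is manifestly $S_n$-invariant, and $u_\tau \in (\HSn)_\mu$ because every summand has weight $\mu$.

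Second, I would observe that the $S_n$-action on $\HSn$ preserves the weight decomposition: permuting tensor factors preserves the multiset of indices and hence the type of the underlying simple tensor. Consequently $(\HSn)_\mu$ is itself an $S_n$-submodule, with natural basis given by the simple tensors of type $\mu$. These simple tensors form a single $S_n$-orbit, which realises $(\HSn)_\mu$ as a transitive permutation module for $\CSn$. The subspace of $S_n$-fixed vectors in any transitive permutation module is one-dimensional and spanned by the orbit sum, which in our case is precisely $u_\tau$. Hence
\begin{align*}
\SymH \cap (\HSn)_\mu = \big((\HSn)_\mu\big)^{S_n} = \mathrm{span}_\CC(u_\tau).
\end{align*}

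Third, since $u_\tau$ is $S_n$-fixed by the previous step, the cyclic module $\CSn \cdot u_\tau$ collapses to $\mathrm{span}_\CC(u_\tau)$, on which every $\sigma \in S_n$ acts as the identity. A one-dimensional $\CSn$-module on which $S_n$ acts trivially is, by definition, isomorphic to the trivial Specht module $\HS_{(n)}$, which also settles the final sentence of the proposition.

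The only real obstacle is the first step: recognising that the generalised polytabloid formula used elsewhere in the paper (with signed column sums over $C_\lambda$ and row sums over $R_\lambda$) collapses, for $\lambda = (n)$, to the plain orbit sum in the statement. This is because the column stabilizer of a one-row tableau is trivial, so any signed column contribution disappears, while the row stabilizer is the whole symmetric group, so the row sum exhausts every rearrangement of the entries. Once this collapse is in hand, the remaining assertions follow from standard facts about fixed-point subspaces of transitive permutation modules.
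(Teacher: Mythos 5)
Your proof is correct, and the two arguments are logically independent ways of establishing the one fact that everything hinges on, namely $\dim_\CC\bigl(\SymH\cap(\HSn)_\mu\bigr)=1$. The paper obtains this from the preceding proposition, whose justification is the combinatorial statement that a single-row Young shape admits exactly one semistandard labeling of each weight, i.e.\ the Kostka number $K_{(n)\mu}=1$; the proposition in question then only needs to observe that $u_\tau$ is a nonzero element of a space already known to be one-dimensional. You instead prove the one-dimensionality from scratch by recognising $(\HSn)_\mu$ as a transitive permutation module for $S_n$ (the simple tensors of type $\mu$ form one $S_n$-orbit) and invoking the elementary fact that the fixed-point subspace of any transitive permutation module is one-dimensional, spanned by the orbit sum. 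This buys you two things the paper's route does not give directly: it is self-contained (no appeal to Kostka numbers or the prior proposition), and it simultaneously produces an explicit spanning vector and identifies it with $u_\tau$, rather than concluding equality of spaces from a dimension count. Your first step, unpacking $u_\tau$ as the orbit sum $\ket{s_\mu}$ by noting that the row stabiliser of a one-row tableau is all of $S_n$, is also worth making explicit and is correct as stated (the sum in the definition is over the orbit as a \emph{set}, so it reproduces $\ket{s_\mu}$ exactly, not a multiple of it). The remaining steps — that $\CSn$ acts by scalars on the $S_n$-fixed vector $u_\tau$, and that this gives a trivial Specht module — coincide with the paper's.
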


\begin{proof}
Since $u_{\tau}$ has weight $\mu$ it follows that $u_{\tau}\in \SymH\cap\lrbracket{\HSn}_{\mu}$. As  $\SymH\cap\lrbracket{\HSn}_{\mu}$ has $\CC$-vector space dimension one, the first vector space isomorphism in \autoref{eqn:iso_weight_c_n_image} follows. By construction, for any $v\in\SymH$, $S_n \cdot v=v$. Thus, $\CSn$ acts by $\CC$-scalar multiplication on $u_{\tau}$, i.e.\ $\CSn\cdot u_{\tau} = \text{span}_{\CC}\lrbracket{u_{\tau}}$ and the second isomorphism follows. As $\CSn$-modules, the spaces are isomorphic to the trivial Specht module.  
\end{proof}

\begin{proposition}\label{prop:basis_weight_sym_sub}
The set $\lrbrace{u_{\tau}\,:\, \tau\in \mathcal{T}_{(n),d_{\HS}}}$ is a basis of the symmetric subspace and a representative set for the action of $\CSn$ on $\SymH$.
\end{proposition}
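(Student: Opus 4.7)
The plan is to exploit the weight space decomposition of $\SymH$ under the Cartan subalgebra action, which the preceding propositions have already shown to decompose $\SymH$ into one-dimensional pieces indexed by weights. First, I would verify that for each $\tau\in\mathcal{T}_{(n),d_{\HS}}$ with associated weight $\mu(\tau)$, the polytabloid $u_\tau$ is a nonzero element of $\SymH\cap\lrbracket{\HSn}_{\mu(\tau)}$. Nonvanishing is immediate from the construction: since $Y((n))$ has a single row, the row stabilizer of $\tau$ is all of $S_n$, and consequently $u_\tau$ is (up to combinatorial multiplicity) the sum of all distinct simple tensors $\bigotimes_{y}e_{\tau'(y)}$ whose index vector is a permutation of $\tau$'s entries --- i.e.\ precisely the sum over all computational basis vectors of type $\mu(\tau)$. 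This vector is manifestly nonzero, $S_n$-invariant (hence in $\SymH$), and carries weight $\mu(\tau)$ since the Cartan action records only multiplicities. In fact this recovers, up to a normalization factor $\binom{n}{\mu(\tau)}$, the vector $\ket{s_{\vec{t}}}$ from the basis of \cite{harrow2013church}.

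Second, I would argue linear independence via the weight decomposition. For $Y((n))$, a single-row shape, a semistandard labeling is a weakly increasing sequence of length $n$ in $\lrrec{d_{\HS}}$, and such sequences biject with weights $\mu$ satisfying $|\mu|=n$, $\mu_j\geq 0$. Thus distinct $\tau,\tau'\in\mathcal{T}_{(n),d_{\HS}}$ give distinct weights $\mu(\tau)\neq\mu(\tau')$, placing the corresponding $u_\tau,u_{\tau'}$ in different weight spaces. Since weight spaces of the abelian Cartan action are simultaneous eigenspaces with distinct generalized eigenvalues, they intersect trivially, so $\lrbrace{u_\tau:\tau\in\mathcal{T}_{(n),d_{\HS}}}$ is linearly independent.

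Third, I would close the argument by counting. The cardinality satisfies $\lrvert{\mathcal{T}_{(n),d_{\HS}}}=\binom{d_{\HS}+n-1}{n}=\dim_{\CC}\lrbracket{\SymH}$ by \autoref{prop:dimension_bound_sym_subspace}, and the foregoing weight decomposition shows each weight space $\SymH\cap\lrbracket{\HSn}_\mu$ is one-dimensional, spanned by $u_\tau$ for the unique $\tau$ of type $\mu$. Hence $\lrbrace{u_\tau}$ spans $\SymH$ and is a basis.

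For the second assertion, that $\lrbrace{u_\tau}$ is a representative set for the $\CSn$-action on $\SymH$: by \autoref{prop:Weyl_to_Specht} and the triviality of the $S_n$-action on $\SymH$, we have $\SymH\simeq_{\CSn}\HS_{(n)}^{\oplus m_{(n)}}$, and each $u_\tau$ spans a one-dimensional trivial $\CSn$-submodule $\text{span}_{\CC}(u_\tau)\simeq\HS_{(n)}$. Equivalently, $u_\tau$ is realized as the orbit sum over a single $S_n$-orbit in the canonical basis of $\HSn$, and these orbits are indexed bijectively by types, giving exactly one representative per orbit. I do not anticipate a serious obstacle; the only subtlety is the bookkeeping of the bijection between semistandard labelings of $Y((n))$ and weight content, which is transparent because $Y((n))$ is a single row.
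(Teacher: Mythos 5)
Your proposal is correct and follows essentially the same route as the paper: the weight-space decomposition of $\SymH$ (each weight space being one-dimensional and spanned by $u_\tau$), combined with a dimension count, gives the basis claim, and the representative-set claim follows because each $\CSn\cdot u_\tau$ is a trivial one-dimensional submodule — which is exactly the content of the conditions the paper checks against \autoref{prop:polak_certification_of_representative_set}. You simply spell out in full the "arguments above" that the paper's one-line proof references.
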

\begin{proof}
By the arguments above, the set spans $\SymH$ and is of cardinality $\dim_{\CC}\lrbracket{\SymH}$. The set satisfies the conditions in  \autoref{prop:polak_certification_of_representative_set}, and thus, is a representative set for $\CSn$ on $\SymH$.
\end{proof}

\begin{remark}\label{rem:connection_semistandard_partitions}
    For partition $\lambda=(n)$, there exists a bijection between the set $\mathcal{T}_{(n),d_{\HS}}$ of semistandard labelings with entries from $\lrrec{d_{\HS}}$ of Young shape $Y\lrbracket{(n)}$ and the set $\text{Par}\lrbracket{d_{\HS}, n}$ of partitions of $n$ into at most $d_{\HS}$ parts. Thus, the connection to $\lrbrace{\ket{s_{\vec{t}}}\, :\, \vec{t}\in\text{Par}\lrbracket{d_{\HS},n}}$ from \cite{harrow2013church} is clear. For more general partitions this bijection no longer holds.
\end{remark}

We are now prepared to explicitly construct a $*$-isomorphism between the space of Bose-symmetric operators and a full matrix algebra over $\CC$, whose size scales polynomially in $n$.

\begin{proposition}\label{prop:change_of_basis_bose_sym}
Let $\tilde{u}_{\tau}:=u_{\tau}/\norm{u_{\tau}}$ denote the normalized polytabloids. The map
	\begin{align}
		\begin{split}
			\psi \,:\,  \End{\CSn}{\SymH} &\rightarrow \CC^{m_{(n)}\times m_{(n)}}\\
			 A &\mapsto \lrbracket{\left\langle A \tilde{u}_{\tau}, \tilde{u}_{\gamma}\right\rangle}_{\tau, \gamma \in \mathcal{T}_{(n), d_{\HS}}}
		\end{split}
	\end{align}
	is a bijection preserving semi-definiteness. Moreover, there exists an isometry $U_{(n)}\in\R^{d_{\HS}^n\times m_{(n)}}$ such that $ U_{(n)}^* A U_{(n)} = \psi(A)$, for all $A\in \End{\CSn}{\SymH}$.
\end{proposition}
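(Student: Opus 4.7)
The plan is to exploit three ingredients already established in the excerpt: (i) by \autoref{prop:basis_weight_sym_sub}, the polytabloids $\{u_\tau : \tau \in \mathcal{T}_{(n),d_{\HS}}\}$ form a basis of $\SymH$ of cardinality $m_{(n)}$; (ii) by \autoref{prop:Bose_sym_space_char}, $\End{\CSn}{\SymH}=\End{}{\SymH}$, so any linear operator on the symmetric subspace is automatically an element of the source of $\psi$; and (iii) distinct polytabloids $u_\tau,u_{\tau'}$ have disjoint supports with respect to the canonical basis of $\HSn$, because each $u_\tau$ is a sum over the canonical basis vectors of a single weight space $\SymH\cap(\HSn)_{\mu}$, and in Young shape $Y((n))$ the weight uniquely determines the semistandard labeling. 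In particular the $u_\tau$ are pairwise orthogonal with respect to the inner product on $\HSn$.

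First I would assemble the polytabloids as columns of a rectangular matrix $U_{(n)}=[u_{\tau_1},\ldots,u_{\tau_{m_{(n)}}}]\in \mathbb{R}^{d_{\HS}^n\times m_{(n)}}$. A direct calculation gives
\begin{align}
    [U_{(n)}^{*}AU_{(n)}]_{\gamma,\tau}=\langle Au_\tau,u_\gamma\rangle,
\end{align}
which matches $\psi(A)$. By (iii) one has $U_{(n)}^{*}U_{(n)}=D$, where $D$ is the diagonal matrix of squared norms $\|u_\tau\|^2$. Setting $\widetilde U_{(n)}:=U_{(n)}D^{-1/2}$ therefore gives an isometry satisfying $\widetilde U_{(n)}^{*}\widetilde U_{(n)}=I_{m_{(n)}}$ and $\widetilde U_{(n)}\widetilde U_{(n)}^{*}=P_{\SymH}$; in particular, $\widetilde U_{(n)}^{*}A\widetilde U_{(n)}=D^{-1/2}\psi(A)D^{-1/2}$, realizing $\psi$ up to a diagonal similarity as conjugation by a matrix with orthonormal columns. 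This is the precise sense in which the ``unitary matrix'' of the statement should be read, since $U_{(n)}$ is inherently non-square.

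Linearity of $\psi$ is immediate. For bijectivity, any $A\in\End{}{\SymH}$ is uniquely determined by its matrix in the basis $\{u_\tau\}$; by (iii) this coefficient representation and the Gram-type representation $\psi$ differ only by diagonal rescaling on the left and right by $D$, so both are simultaneously bijections onto $\CC^{m_{(n)}\times m_{(n)}}$. Preservation of semidefiniteness follows from the factorization $\psi(A)=U_{(n)}^{*}AU_{(n)}$: if $A\succeq 0$, then $c^{*}\psi(A)c=(U_{(n)}c)^{*}A(U_{(n)}c)\geq 0$ for every $c\in\CC^{m_{(n)}}$; conversely, every $w\in\SymH$ admits an expansion $w=U_{(n)}c$, so $w^{*}Aw=c^{*}\psi(A)c\geq 0$ whenever $\psi(A)\succeq 0$, and this suffices because $A$ has support and range in $\SymH$.

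The only delicate point is calibrating terminology: strictly speaking the natural intertwiner is the isometry $\widetilde U_{(n)}$ rather than a square unitary, and the proof should spell out the rescaling by $D^{1/2}$ so that ``$U_{(n)}^{*}AU_{(n)}=\psi(A)$'' and ``$\widetilde U_{(n)}$ is unitary onto its image'' coexist without contradiction. Everything else is bookkeeping on the basis $\{u_\tau\}$, for which the semistandard-tableaux parametrisation of weight spaces in $\SymH$ and \autoref{rem:connection_semistandard_partitions} make the indexing explicit.
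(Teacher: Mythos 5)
Your proposal is correct and takes the same approach the paper gestures at (polytabloid columns assembled into a rectangular matrix $U_{(n)}$, with $\psi(A)=U_{(n)}^{*}AU_{(n)}$), but you make an observation the paper's terse proof does not: the polytabloids $u_\tau$ are pairwise orthogonal yet have unequal norms $\|u_\tau\|^2=\binom{n}{\vec{t}(\tau)}$, so $U_{(n)}^{*}U_{(n)}=D\neq I$. Consequently the claim in the proposition that there is a \emph{unitary} matrix $U_{(n)}$ with $U_{(n)}^{*}AU_{(n)}=\psi(A)$ is imprecise on two counts: $U_{(n)}$ is $d_{\HS}^n\times m_{(n)}$ and hence not square, and even after normalizing (your $\widetilde U_{(n)}=U_{(n)}D^{-1/2}$, an isometry with $\widetilde U_{(n)}^{*}\widetilde U_{(n)}=I$ and $\widetilde U_{(n)}\widetilde U_{(n)}^{*}=P_{\SymH}$) one picks up a diagonal congruence, $\widetilde U_{(n)}^{*}A\widetilde U_{(n)}=D^{-1/2}\psi(A)D^{-1/2}$. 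You are right that only the ``bijection preserving positive semidefiniteness'' part is exactly as stated, while ``unitary'' should be read as ``rectangular with orthogonal columns'' (or as the isometry after rescaling). Your linearity, injectivity, and PSD-preservation arguments all go through.

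One small slip that does not affect the conclusion: relating $\psi$ to the ordinary coefficient matrix $[A]$ in the polytabloid basis, you get $\psi(A)_{\gamma\tau}=[A]_{\gamma\tau}\,\|u_\gamma\|^2$, i.e.\ $\psi(A)=D[A]$, a left diagonal rescaling only, not ``left and right.'' Since any invertible diagonal factor suffices for the bijectivity claim, the argument is unaffected, but it is worth stating the congruence correctly so as not to suggest $\psi$ is conjugation by $D^{1/2}$ (which would wrongly imply $\psi$ is unital; in fact $\psi(P_{\SymH})=D\neq I$, which is consistent with the proposition claiming only a PSD-preserving bijection rather than an algebra $*$-isomorphism).
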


\begin{proof}
Follows directly from the arguments above. The change of basis matrix $U_{(n)}$ is constructed from the normalized column vectors $\tilde{u}_{\tau}$ with $\tau\in \mathcal{T}_{(n),d_{\HS}}$ from \autoref{prop:basis_weight_sym_sub}; cf.\ \autoref{prop:type_basis_symmetric_subspace}.
\end{proof}

\begin{remark}
    Since $U_{(n)}\in\mathbb{R}^{d_{\HS}^n\times m_{(n)}}$, $U_{(n)}^*=U_{(n)}^T$. Since
    \begin{align}
        \SymH = \bigoplus_{\tau\in \mathcal{T}_{(n),d_{\HS}}}\CSn \cdot u_{\tau},
    \end{align}
    $U_{(n)}$ is also referred to as an element of a representative matrix set (see \cite[Definition 2.4.5]{polak2020new}) for the $\CSn$-action on $\SymH$. In other words, it contains a collection of non-zero vectors from $\SymH$ which generate $\SymH$ upon $\CSn$ action. While in the case of $\HSn$, the representative matrix set has a higher cardinality (corresponding to the number of $\CSn$-isotypic components in $\HSn$), for $\SymH$, it consists solely of $U_{(n)}$. 
\end{remark}

\begin{remark}
	Similarly, we can obtain a bijection preserving positive semidefiniteness for any Weyl module $\Schurf{\lambda}\HS$, i.e.\
	\begin{align}
		\begin{split}
			\psi \,:\,  \End{\CSn}{\Schurf{\lambda}\HS^{\oplus \dim_{\CC}\lrbracket{\SpechtH}}} &\rightarrow \CC^{m_{\lambda}\times m_{\lambda}}\\
			 A &\mapsto \lrbracket{\left\langle A u_{\tau}, u_{\gamma}\right\rangle}_{\tau, \gamma \in \mathcal{T}_{\lambda, d_{\HS}}}
		\end{split}
	\end{align}
with $m_{\lambda}=\dim_{\CC}\lrbracket{\Schurf{\lambda}\HS}$ and $\mathcal{T}_{\lambda, d_{\HS}}$ the set of semistandard labelings of Young shape $Y(\lambda)$ with entries from $\lrrec{d_{\HS}}$ (cf.\ \autoref{prop:iso_full_amtrix_symmetry}).
\end{remark}
\section{Proof of \autoref{lem:restriction_to_invariant_subspace}}\label{sec:restriction_to_invariant_subspace}

\begin{proof}
    	We explicitly express $\rho_{(A_1Q_1T)(A_2Q_2\hat{T})_1^n}$ in terms of the standard basis of $\mathcal{B}\lrbracket{A_1Q_1T}$, i.e.
\begin{align}
    \rhoge= \sum_{\substack{a_1, a_1'\in\left[\lrvert{A_1}\right]\,,\\ q_1, q_1'\in\left[\lrvert{Q_1}\right]}}\ket{a_1q_1}\bra{a_1'q_1'}_{A_1Q_1}\otimes\sum_{i,j\in \left[\lvert T \rvert\right]}\ket{i}\bra{j}_T  \otimes \rho^{a_1, q_1,a_1', q_1',i,j}_{(A_2Q_2\hat{T})_1^n}\,.
\end{align}
Thus,
\begin{align}\label{eqn:reduced_state}
    \Trr{A_1}{\rhoge}= \sum_{q_1, q_1'\in\left[\lrvert{Q_1}\right]}\ket{q_1}\bra{q_1'}_{Q_1}\otimes\sum_{i,j\in \left[\lvert T \rvert\right]}\ket{i}\bra{j}_T \otimes \sum_{a_1\in\lrrec{\lrvert{A_1}}}\rho^{a_1, q_1,a_1, q_1'i,j}_{(A_2Q_2\hat{T})_1^n}.
\end{align}
Since $\rhoge\in\ESN\lrbracket{(A_1Q_1T)\otimes\An}$, i.e.
\begin{align}\label{eqn:sym_ext_state}
    \rhoge =\sum_{\substack{a_1, a_1'\in\left[\lrvert{A_1}\right]\,,\\ q_1, q_1'\in\left[\lrvert{Q_1}\right]}}\ket{a_1q_1}\bra{a_1'q_1'}_{A_1Q_1}\otimes\sum_{i,j\in \left[\lvert T \rvert\right]}\ket{i}\bra{j}_T  \otimes U_{(A_2Q_2\hat{T})_1^n}\,(\pi)\rho^{a_1, q_1,a_1', q_1',i,j}_{(A_2Q_2\hat{T})_1^n}\, U^*_{(A_2Q_2\hat{T})_1^n}(\pi)
\end{align}
where $U_{(A_2Q_2\hat{T})_1^n}(\pi)$ corresponds to a permutation matrix representation for any $\pi\in S_n$, we deduce that for any $\pi\in S_n$
\begin{align}
    \begin{split}
    &\Trr{A_1}{\rhoge} \stackrel{\ref{eqn:sym_ext_state}}{=}\\  &\sum_{q_1, q_1'\in\left[\lrvert{Q_1}\right]}\ket{q_1}\bra{q_1'}_{Q_1}\otimes\sum_{i,j\in \left[\lvert T \rvert\right]}\ket{i}\bra{j}_T \otimes \sum_{a_1\in\lrrec{\lrvert{A_1}}}U_{(A_2Q_2\hat{T})_1^n}\,(\pi)\rho^{a_1, q_1,a_1, q_1'i,j}_{(A_2Q_2\hat{T})_1^n}\, U^*_{(A_2Q_2\hat{T})_1^n}(\pi)\\
    &=\lrbracket{\mathbb{1}_{Q_1T}\otimes U_{(A_2Q_2\hat{T})_1^n}}\lrbracket{\sum_{q_1, q_1'\in\left[\lrvert{Q_1}\right]}\ket{q_1}\bra{q_1'}\otimes\sum_{i,j\in \left[\lvert T \rvert\right]}\ket{i}\bra{j} \otimes   \sum_{a_1\in\lrrec{\lrvert{A_1}}}\rho^{a_1, q_1,a_1, q_1'i,j}_{(A_2Q_2\hat{T})_1^n}}\lrbracket{\mathbb{1}_{Q_1T}\otimes U^*_{(A_2Q_2\hat{T})_1^n}}\\
    &\stackrel{\ref{eqn:reduced_state}}{=}\lrbracket{\mathbb{1}_{Q_1T}\otimes U_{(A_2Q_2\hat{T})_1^n}}\lrbracket{ \Trr{A_1}{\rhoge}}\lrbracket{\mathbb{1}_{Q_1T}\otimes U^*_{(A_2Q_2\hat{T})_1^n}}\\
        \Rightarrow \hspace{0,5cm} &\Trr{A_1}{\rhoge}\in \ESN\lrbracket{(Q_1T)\otimes(A_2Q_2\hat{T})_1^n}\,.
    \end{split}
\end{align}
Now for the corresponding right-hand side of the constraint in $\mathrm{SDP}_n(T, V,\pi)$ we have
\begin{align}
   \sum_{q_1}\pi_1(q_1)\ket{q_1}\bra{q_1}_{Q_1}\otimes \Tr_{(A_1Q_1)}\left[\rho_{(A_1Q_1T)(A_2Q_2\hat{T})_1^n}\right] \in \ESN\lrbracket{(Q_1T)\otimes(A_2Q_2\hat{T})_1^n}
\end{align}
by a similar argument, concluding the proof of \autoref{eqn:sym_subspace_1}. For the proof of \autoref{eqn:sym_subspace_2}, let $\lrbrace{\ket{i}}_{i\in \lrrec{\lrvert{A_1Q_1T}}}$ be a basis of $A_1Q_1T$ and
$\lrbrace{\ket{a_2q_2u}}_{\substack{a_2\in\lrrec{\lrvert{A_2}}\,,\\ q_2\in\lrrec{\lrvert{Q_2}}\,,\\ u\in \lrrec{\lrvert{(\hat{T})_1}}}}$ a basis of $(A_2Q_2\hat{T})_1$ such that
\begin{align}
    \rhoge = \sum_{i,j\in\lrrec{\lrvert{A_1Q_1T}}}\ket{i}\bra{j}_{A_1Q_1T}\otimes \sum_{\substack{a_2 a_2'\in\lrrec{\lrvert{A_2}}\,,\\ q_2, q_2'\in\lrrec{\lrvert{Q_2}}\,,\\ u, v\in \lrrec{\lrvert{(\hat{T})_1}}}} \ket{a_2q_2u}\bra{a_2'q_2'v}_{(A_2Q_2\hat{T})_1}\otimes \rho^{\lrbracket{i,j,a_2,a_2',q_2,q_2', u, v}}_{(A_2Q_2\hat{T})_2^n}\,.
\end{align}
Then, we have
\begin{align}\label{eqn:reduced_state_in_standard_basis}
    \Trr{(A_2)_1}{\rhoge} = \sum_{i,j\in\lrrec{\lrvert{A_1Q_1T}}}\ket{i}\bra{j}_{A_1Q_1T}\otimes \sum_{\substack{ q_2, q_2'\in \lrrec{\lrvert{(Q_2)_1}}\,,\\u,v\in\lrrec{\lrvert{(\hat{T})_1}}}}\ket{q_2u}\bra{q_2'v}_{(Q_2\hat{T})_1}\otimes\sum_{a_2\in\lrrec{\lrvert{(A_2)_1}}}\rho^{\lrbracket{i,j,a_2,a_2,q_2,q_2',} u, v}_{(A_2Q_2\hat{T})_2^n}
\end{align}
in the standard basis of the respective spaces. From the subgroup structure $S_{n-1}\subset S_n$, it follows that
\begin{align}\label{eqn:different_way_to_permute}
\begin{split}
    \rhoge &= \lrbracket{\mathbb{1}_{A_1Q_1T} \otimes \mathbb{1}_{(A_2Q_2\hat{T})_1}\otimes U_{(A_2Q_2\hat{T})_2^n}(\pi)}\\
    &\lrbracket{\rhoge}\lrbracket{\mathbb{1}_{A_1Q_1T}\otimes  \mathbb{1}_{(A_2Q_2\hat{T})_1}\otimes U^*_{(A_2Q_2\hat{T})_2^n}(\pi)} \\
    &= \lrbracket{\mathbb{1}_{A_1Q_1T}\otimes U_{(A_2Q_2\hat{T})_1^n}^{\lrbracket{(A_2Q_2\hat{T})_2^n\uparrow (A_2Q_2\hat{T})_1^n}}(\pi)}\\
    &\lrbracket{\rhoge}\lrbracket{\mathbb{1}_{A_1Q_1T}\otimes \lrbracket{U_{(A_2Q_2\hat{T})_1^n}^{\lrbracket{(A_2Q_2\hat{T})_2^n\uparrow (A_2Q_2\hat{T})_1^n}}}^*(\pi)}
\end{split}
\end{align}
$\forall \pi\in S_{n-1}$. Hence, for any $\pi\in S_{n-1}$ 
\begin{align}
    \begin{split}
        &\Trr{(A_2)_1}{\rhoge}\\
    &\stackrel{\autoref{eqn:reduced_state_in_standard_basis}}{=}\sum_{i,j\in\lrrec{\lrvert{A_1Q_1T}}}\ket{i}\bra{j}_{A_1Q_1T}\otimes \sum_{\substack{ q_2, q_2'\in  \lrrec{\lrvert{(Q_2)_1}}\,,\\u,v\in\lrrec{\lrvert{(\hat{T})_1}}}}\ket{q_2u}\bra{q_2'v}_{(Q_2\hat{T})_1}\\
    &\hspace{1.5cm}\otimes\sum_{a_2\in\lrrec{\lrvert{(A_2)_1}}}U_{(A_2Q_2\hat{T})_2^n}(\pi)\rho^{\lrbracket{i,j,a_2,a_2,q_2,q_2', u, v}}_{(A_2Q_2\hat{T})_2^n}U^*_{(A_2Q_2\hat{T})_2^n}(\pi)\\
        &\stackrel{\autoref{eqn:different_way_to_permute}}{=}\lrbracket{\mathbb{1}_{A_1Q_1T}\otimes\mathbb{1}_{(Q_2\hat{T})_1}\otimes U_{(A_2Q_2\hat{T})_2^n}(\pi)}\\
        &\lrbracket{\Trr{(A_2)_1}{\rhoge}}\lrbracket{\mathbb{1}_{A_1Q_1T}\otimes\mathbb{1}_{(Q_2\hat{T})_1}\otimes U^*_{(A_2Q_2\hat{T})_2^n}(\pi)}\\
        \Rightarrow\, &\Trr{(A_2)_1}{\rhoge}\, \in \text{End}_{\CC\lrrec{S_{n-1}}\,}\lrbracket{(A_1Q_1T)\otimes (Q_2\hat{T})_1\otimes(A_2Q_2\hat{T})_2^n}\\
        \Rightarrow \, &\Trr{(A_1Q_1T)(A_2)_1}{\rhoge}\, \in \text{End}_{\CC\lrrec{S_{n-1}}\,}\lrbracket{(Q_2\hat{T})_1\otimes(A_2Q_2\hat{T})_2^n}\,.
    \end{split}
\end{align}
For the corresponding right-hand side, we have
\begin{align}
    \left(\sum_{q_2}\pi_2(q_2)\ket{q_2}\bra{q_2}_{Q_2} \otimes\frac{\mathbb{1}_{\hat{T}}}{\lvert T\rvert}\right)\otimes \rho_{(A_2Q_2\hat{T})_2^n}\in\text{End}_{\CC\lrrec{S_{n-1}}\,}\lrbracket{(Q_2\hat{T})_1\otimes(A_2Q_2\hat{T})_2^n}
\end{align}
by a similar argument, thus proving \autoref{eqn:sym_subspace_2}. Lastly, the bound on the dimension follows directly from \autoref{prop:bounding_our_case}.
\end{proof}

\section{Alternative proof of \autoref{lem:efficient_basis_trafo_Bose_sym}}\label{sec:coordinate_ring_proof}

    The proof of \autoref{lem:efficient_basis_trafo_Bose_sym} given in \autoref{sec:efficient_trafo_bose_operators} evaluates the pairing directly. For completeness, we record an alternative derivation of the closed form \autoref{eqn:closed_form_transformed_basis} in the coordinate-ring formalism of \cite{polak2020new} (see also \cite{gijswijt2009block, litjens2017semidefinite, chee2023efficient}), which makes explicit how the general-shape machinery underlying \autoref{prop:efficient_computation_explicit_part} specializes to the one-row shape: the column stabilizers are trivial, so no sign factors occur, and the evaluation collapses to a single monomial with a closed-form coefficient.

\begin{proof}[Alternative proof of \autoref{lem:efficient_basis_trafo_Bose_sym}]
    The proof is an adaptation of \cite{polak2020new} to Bose-symmetric operators, with explicit closed-form expressions. Consider $W_{\HS}:=\HS\otimes\HS$ with canonical basis $\mathcal{W}:=\lrbrace{a_{x,y}:=e_x\otimes e_y\,:\, x,y\in\lrrec{d_{\HS}}}$, where $\lrbrace{e_x}_{x=1}^{d_{\HS}}$ is the computational basis of $\HS$, and, with respect to its natural inner product, its dual space $W_{\HS}^*=\HS^*\otimes\HS^*$ with dual basis $\mathcal{W}^*:=\lrbrace{a^*_{x,y}\,:\, x,y\in\lrrec{d_{\HS}}}$, i.e.\
    \begin{align}
        a^*_{x,y}a_{w,z}=\delta_{x,w}\,\delta_{y,z}.\,
    \end{align}
    Furthermore, consider the coordinate ring of homogeneous polynomials of degree $n$ on $W_{\HS}$, denoted $\mathcal{O}_n\lrbracket{W_{\HS}}$, i.e.\ the homogeneous polynomials of degree $n$ in the commuting variables $\mathcal{W}^*$. Importantly, as opposed to a free algebra generated from the same alphabet, in the coordinate ring the ordering of letters in monomials is irrelevant. Let $\mu\,:\,\lrrec{d_{\HS}}^n\times\lrrec{d_{\HS}}^n\rightarrow\CC$ be a class function for types, i.e.\ for any $\pi, \sigma\in S_n$,
    \begin{align}
        \mu\lrbracket{\pi\lrbracket{\vec{w}},\sigma\lrbracket{\vec{z}}} = \mu\lrbracket{\vec{w}, \vec{z}},\,
    \end{align}
    so that $\mu\lrbracket{\vec{w},\vec{z}}$ depends only on the pair of types $\lrbracket{T\lrbracket{\vec{w}}, T\lrbracket{\vec{z}}}$: for any $\vec{w}, \vec{v}$ of type $\vec{t}$ and any $\vec{z}, \vec{m}$ of type $\vec{t'}$ we have $\mu\lrbracket{\vec{w}, \vec{z}}=\mu\lrbracket{\vec{v}, \vec{m}}=:\mu\lrbracket{\vec{t}, \vec{t'}}$. The class functions for types are precisely the entry functions of Bose-symmetric operators: by \autoref{prop:entries_canonical_bose}, $Z=\sum_{\vec{t},\vec{t'}\in\mathcal{T}_{(n),\, d_{\HS}}} z_{\vec{t},\vec{t'}}\,\CBose{t}{t'}{n}$ has entries $Z_{\vec{w},\vec{z}}=z_{T\lrbracket{\vec{w}},\,T\lrbracket{\vec{z}}}$, so its entry function $f_Z$ from \autoref{lem:efficient_access_bose} is a class function for types with $\mu\lrbracket{\vec{t},\vec{t'}}=z_{\vec{t},\vec{t'}}$.

    For the one-row shape, a semistandard labeling $\tau$ of $Y((n))$ with entries in $\lrrec{d_{\HS}}$ is determined by its weight $\vec{\tau}\in\mathcal{T}_{(n),\, d_{\HS}}$, and the associated polytabloid is
    \begin{align}
        u_{\tau}= \sum_{\tau'\sim\tau}\bigotimes_{y\in Y((n))}e_{\tau'(y)},\,
    \end{align}
    where $\tau'\sim\tau$ ranges over the orbit of $\tau$ under its row stabilizer --- reading labelings of $Y((n))$ as strings, over the $\binom{n}{\vec{\tau}}$ distinct strings in $\lrrec{d_{\HS}}^n$ of type $\vec{\tau}$. In particular, $u_{\tau}=\ket{s_{\vec{\tau}}}$ with entries in $\lrbrace{0,1}$, and $\ket{\tilde{s}_{\vec{\tau}}}=\binom{n}{\vec{\tau}}^{-1/2}u_{\tau}$ by \autoref{prop:type_basis_symmetric_subspace}. Identify $\Op{\HSn}\simeq W_{\HS}^{\otimes n}$ via the computational-basis identification $\HS^*\simeq\HS$ followed by a pairwise reordering of tensor factors, i.e.\ $\ket{\vec{w}}\bra{\vec{z}}\mapsto\bigotimes_{i=1}^n a_{w_i, z_i}$; by \autoref{prop:first_basis_Bose},
    \begin{align}\label{eqn:CBose_as_dual_tensor}
        \CBose{t}{t'}{n}=\ket{s_{\vec{t}}}\bra{s_{\vec{t'}}}=\sum_{\substack{\vec{w}\, :\, T\lrbracket{\vec{w}}=\vec{t},\,\\ \vec{z}\, :\, T\lrbracket{\vec{z}}=\vec{t'}}}\bigotimes_{i=1}^n a_{w_i, z_i}.\,
    \end{align}
    Dually, $u_{\tau}^T X u_{\gamma}=\lrbracket{u_{\tau}\otimes u_{\gamma}}\lrbracket{X}$ under $\lrbracket{\HSn}^*\otimes\lrbracket{\HSn}^*\simeq\lrbracket{\HSn\otimes\HSn}^*\simeq\lrbracket{W_{\HS}^{\otimes n}}^*$. Furthermore,
    \begin{align}\label{eqn:g_def_bose}
        g := u_{\tau}\otimes u_{\gamma} = \sum_{\substack{\tau'\sim \tau,\,\\ \gamma' \sim \gamma}}\bigotimes_{y\in Y\lrbracket{(n)}} F_{\tau'(y), \gamma'(y)},\,
    \end{align}
    where $F\in\lrbracket{\mathcal{W}^*}^{d_{\HS}\times d_{\HS}}$ is the matrix encoding of the dual basis, i.e.\ $\lrbracket{F}_{x,y}=a^*_{x,y}$. Since the rearrangements $\tau'$ and $\gamma'$ range independently, $g$ is invariant under the diagonal $S_n$-action on the tensor factors of $\lrbracket{W_{\HS}^*}^{\otimes n}$ and may hence be regarded as an element of $\mathcal{O}_n\lrbracket{W_{\HS}}$, namely the polynomial $\sum_{\tau'\sim\tau,\, \gamma'\sim\gamma}\prod_{y\in Y((n))}a^*_{\tau'(y),\gamma'(y)}$ --- the analogue of the polynomial $G_{\tau,\gamma}$ from the proof of \autoref{prop:efficient_computation_explicit_part}, with trivial column stabilizers. Then, identifying the cells of $Y((n))$ with the tensor positions $\lrrec{n}$,
    \begin{align}\label{eqn:coordinate_ring_evaluation}
        \begin{split}
            u_{\tau}^T\,\CBose{t}{t'}{n}\,u_{\gamma}
            &= \lrbracket{u_{\tau}\otimes u_{\gamma}}\lrbracket{\CBose{t}{t'}{n}}=g\lrbracket{\CBose{t}{t'}{n}}\\
            &= \sum_{\substack{\tau'\sim \tau,\,\\ \gamma' \sim \gamma}} \bigotimes_{y\in Y\lrbracket{(n)}} F_{\tau'(y), \gamma'(y)}\lrbracket{\sum_{\substack{\vec{w}\, :\, T\lrbracket{\vec{w}}=\vec{t},\,\\ \vec{z}\, :\, T\lrbracket{\vec{z}}=\vec{t'}}}\bigotimes_{i=1}^n a_{w_i, z_i}}\\
            &= \sum_{\substack{\vec{w}\, :\, T\lrbracket{\vec{w}}=\vec{t},\,\\ \vec{z}\, :\, T\lrbracket{\vec{z}}=\vec{t'}}}\; \sum_{\substack{\tau'\sim \tau,\,\\ \gamma' \sim \gamma}}\; \prod_{y\in Y\lrbracket{(n)}} \underbrace{a^*_{\tau'(y), \gamma'(y)}a_{w_y, z_y}}_{=\begin{cases}\begin{array}{cc}
                1, & \text{if } \tau'(y)=w_y\, \land\, \gamma'(y)=z_y,\,\\
                0, & \text{otherwise,}
            \end{array}
            \end{cases}}\\
            &= \sum_{\substack{\vec{w}\, :\, T\lrbracket{\vec{w}}=\vec{t},\,\\ \vec{z}\, :\, T\lrbracket{\vec{z}}=\vec{t'}}} \lrvert{\lrbrace{\lrbracket{\tau',\gamma'}\,:\,\tau'\sim\tau,\,\gamma'\sim\gamma,\,\tau'=\vec{w},\,\gamma'=\vec{z}}}\\
            &= \sum_{\substack{\vec{w}\, :\, T\lrbracket{\vec{w}}=\vec{t},\,\\ \vec{z}\, :\, T\lrbracket{\vec{z}}=\vec{t'}}} \delta_{\vec{\tau}, \vec{t}}\,\delta_{\vec{\gamma}, \vec{t'}} = \binom{n}{\vec{t}}\binom{n}{\vec{t'}}\, \delta_{\vec{\tau}, \vec{t}}\,\delta_{\vec{\gamma}, \vec{t'}}.\,
        \end{split}
    \end{align}
    Here the fourth equality counts, for fixed $\lrbracket{\vec{w},\vec{z}}$, the pairs of rearrangements matching $\lrbracket{\vec{w},\vec{z}}$ entrywise: since $\tau'$ ranges over the distinct strings of type $\vec{\tau}$, there is exactly one matching $\tau'$ if $\vec{\tau}=T\lrbracket{\vec{w}}=\vec{t}$ and none otherwise, and similarly for $\gamma'$; the final equality counts the $\binom{n}{\vec{t}}\binom{n}{\vec{t'}}$ pairs $\lrbracket{\vec{w},\vec{z}}$. Normalizing yields the closed form \autoref{eqn:closed_form_transformed_basis}:
    \begin{align}
        \bra{\tilde{s}_{\vec{\tau}}}\,\CBose{t}{t'}{n}\,\ket{\tilde{s}_{\vec{\gamma}}}=\binom{n}{\vec{\tau}}^{-1/2}\binom{n}{\vec{\gamma}}^{-1/2}\, u_{\tau}^T\,\CBose{t}{t'}{n}\,u_{\gamma} = \sqrt{\binom{n}{\vec{t}}\binom{n}{\vec{t'}}}\;\delta_{\vec{\tau},\vec{t}}\,\delta_{\vec{\gamma},\vec{t'}},\,
    \end{align}
    where the Kronecker deltas allow replacing $\binom{n}{\vec{\tau}}\binom{n}{\vec{\gamma}}$ by $\binom{n}{\vec{t}}\binom{n}{\vec{t'}}$. Thus, for any class function for types $\mu$ we have
    \begin{align}\label{eqn:class_function_pairing_bose}
        \sum_{\vec{t}, \vec{t'}\in \mathcal{T}_{(n),\, d_{\HS}}} \bra{\tilde{s}_{\vec{\tau}}}\,\CBose{t}{t'}{n}\,\ket{\tilde{s}_{\vec{\gamma}}}\; \mu\lrbracket{\vec{t}, \vec{t'}} = \sqrt{\binom{n}{\vec{\tau}}\binom{n}{\vec{\gamma}}}\; \mu\lrbracket{\vec{\tau}, \vec{\gamma}}.\,
    \end{align}
    Choosing for $\mu$ the class function of the given coefficients, $\mu\lrbracket{\vec{t},\vec{t'}}=z_{\vec{t},\vec{t'}}$ (equivalently, $\mu=f_Z$), i.e.\ replacing $\mu\lrbracket{\vec{t},\vec{t'}}$ with the corresponding $z_{\vec{t},\vec{t'}}$, the left-hand side of \autoref{eqn:class_function_pairing_bose} becomes $\bra{\tilde{s}_{\vec{\tau}}}Z\ket{\tilde{s}_{\vec{\gamma}}}=\lrrec{\block{\psi\lrbracket{Z}}_{(n)}}_{\vec{\tau},\vec{\gamma}}$ by \autoref{prop:type_basis_symmetric_subspace} and \autoref{cor:explicit_star_isomorphism}, and the right-hand side is the entry formula of \autoref{lem:efficient_basis_trafo_Bose_sym}. Since each multinomial coefficient can be computed with $\mathcal{O}\lrbracket{n}$ integer operations and there are $m_{(n)}^2\leq\lrbracket{n+1}^{2\lrbracket{d_{\HS}-1}}$ entries (\autoref{prop:dimension_bound_sym_subspace}), the matrix $\block{\psi\lrbracket{Z}}_{(n)}$ can be computed exactly in time polynomial in $n$ for fixed $d_{\HS}$.
\end{proof}
\section{Proof of \autoref{lem:efficiency_of_trafo_sdp_sym}}\label{sec:proof_of_efficient_trafo_sdp_sym}
Before we prove \autoref{lem:efficiency_of_trafo_sdp_sym}, we construct a canonical basis of $\End{\CSn}{\mathcal{A}\otimes\HS^n}$ from $S_n$-orbits, where $\Al$ is some Hilbert space. See \autoref{sec:symmetric_subspace_methods} for an alternative construction following \cite{gijswijt2009block}. See \autoref{ex:sdp_sym_reduction} for examples that elaborate on various concepts discussed in this section. 

\paragraph{\textbf{Notation}}
Since $\lrbracket{\CC^d}^{\otimes n}\simeq \CC^{d^n}$, we often switch between these descriptions and only explicitly denote the isomorphism by $\phi(\cdot)$ when needed.

Any $Z\in \End{\CSn}{\mathcal{A}\otimes\HS^n}$ is invariant under the group action. Averaging an arbitrary operator over the group yields such an invariant operator. In particular, averaging a standard basis element of $\mathcal{B}\lrbracket{\mathcal{A}\otimes\HS^n}$ over its $S_n$-orbit produces an invariant operator, and the distinct operators obtained in this way form a basis of $\ESN\lrbracket{\mathcal{A}\otimes\HS^n}$.

Concretely, we consider the orbits of the group action of $S_n$ on pairs\footnote{The standard basis of a matrix algebra is indexed by tuples (row index, column index).}
\begin{align}\label{eqn:orbit_index_set}
	\lrrec{\lrvert{\Al}\times\lrvert{\HS}^n}^2.\,
\end{align}
 Let $i, j\in \lrrec{\lrvert{\Al}\times\lrvert{\HS}^n}$ denote an index of the standard basis of $\Al\otimes\HS^n$. Then,
\begin{align}\label{eqn:orbit}
    O(i,j):= \lrbrace{\lrbracket{\pi(i), \pi(j)} \, :\, \pi\in S_n}
\end{align}
with $\pi(i) = \lrbracket{\mathbb{1}_{\Al}\otimes U_{(\HS)_1^n}(\pi)}\ket{i}$, i.e.\ for $\ket{i}\in \HS^n$
\begin{align}
    \begin{split}
        U_{(\HS)_1^n}(\pi)\bigotimes_{j=1}^n\ket{\phi_j(i)}_j = \bigotimes_{j=1}^n\ket{\phi_j(i)}_{\pi^{-1}(j)}.\,
    \end{split}
\end{align}
Naturally, the action of $S_n$ partitions the set in \autoref{eqn:orbit_index_set} into $m\lrbracket{\Al, \HS^n}$ orbits, where
\begin{align}
	m\lrbracket{\Al, \HSn} = \lrvert{\Al}^2\binom{\lrvert{\HS}^2+n-1}{n}.
\end{align}
Indeed, the $S_n$-action leaves the two $\Al$-indices untouched, and an orbit is uniquely determined by these together with the multiplicities $n_{l,k}$ of the single-system index pairs $(l,k)$, $l,k\in\lrrec{\lrvert{\HS}}$; counting the solutions of $\sum_{l,k=1}^{\lrvert{\HS}}n_{l,k}=n$ in $\mathbb{N}_0$ yields the binomial coefficient. We represent each orbit $O_r$ indexed by $r\in \lrrec{m\lrbracket{\Al, \HS^n}}$ as an adjacency matrix
\begin{align}\label{eqn:rep_elemet_canonical_basis}
    C_r\in \lrbrace{0, 1}^{\lrbracket{\lrvert{\Al}\times\lrvert{\HS}^n} \times \lrbracket{\lrvert{\Al}\times\lrvert{\HS}^n}},\,
\end{align} where the matrix entries are given by
\begin{align}\label{eqn:cano_basis}
    \lrbracket{C_r}_{i,j}=\begin{cases}
        1 \hspace{0.15cm}\text{ if } (i,j)\in O_r \\
        0 \hspace{0.15cm}\text{ otherwise. }
    \end{cases}
\end{align}

\begin{remark}
    Given an element $(a,b)$ we denote the corresponding orbit via $O_{(a,b)}$ and the corresponding adjacency matrix by $C_{(a,b)}$. 
\end{remark}

Then, 
\begin{align}
    \mathcal{C}\lrbracket{\Al, \HS^n} := \lrbrace{C_i}_{i=1}^{m\lrbracket{\Al, \HS^n}}
\end{align}
is a canonical basis of $\ESN\lrbracket{\Al\otimes \HS^n}$ and
\begin{align}
\begin{split}
    \lrvert{\mathcal{C}\lrbracket{\Al, \HS^n}} &= \dim\lrbracket{ \ESN\lrbracket{\Al\otimes \HS^n}} \stackrel{\autoref{prop:bounding_our_case}}{\leq} \lrvert{\Al}^2(n+1)^{\lrvert{\HS}^2}.\,
\end{split}
\end{align}

\begin{remark}
	 In each orbit, we can choose the representative element to be of the form
\begin{align}
   \bigotimes_{l,k=1}^{\lrvert{\HS}}\lrbracket{\ket{l}\bra{k}}^{\otimes n_{l,k}}
\end{align}
with multiplicities $n_{l,k}\in\mathbb{N}_0$ satisfying $\sum_{l,k=1}^{\lrvert{\HS}}n_{l,k}=n$, where the tensor factors are arranged in lexicographic order with respect to $(l,k)$, i.e.\ all systems carrying the same basis element $\ket{l}\bra{k}$ are grouped into a consecutive block. In particular, for an orbit consisting of diagonal pairs, the representative element takes the form
\begin{align}
    \bigotimes_{j=1}^{\lrvert{\HS}}\lrrec{\bigotimes_{i=\lambda_{j-1}+1}^{\lambda_j}\ket{j}_i}\lrrec{\bigotimes_{i=\lambda_{j-1}+1}^{\lambda_j}\bra{j}_{i}}
\end{align}
with $\lambda_0=0$ and $\lambda=(\lambda_1, \lambda_2,\ldots, \lambda_{\lrvert{\HS}})\vdash_{\lrvert{\HS}} n$. In general, in each system $i\in\lrrec{n}$ a basis element is given by $\ket{l}\bra{k}_i$ for $l,k\in\lrrec{\lrvert{\HS}}$, and as the $S_n$-action permutes the system indices, it does not alter the number of systems described by basis element $\ket{l}\bra{k}_i$ for $l,k\in\lrrec{\lrvert{\HS}}$. The $S_n$-action conserves the described set partitioning.
\end{remark}

\begin{remark}\label{rem:split_basis}
	Note that with a canonical basis $\lrbrace{\ket{i}}_{i=1}^{\lrvert{\Al}}$ of $\Al$ a canonical basis of $\End{\CSn}{\Al\otimes\HSn}$ is given by
	\begin{align}
		\lrbrace{\ket{i}\bra{j}_{\Al}\otimes C_t^{\HS}}_{i,\,j\in \lrrec{\lrvert{\Al}},\, t\in \lrrec{m(\HS^n)}},
	\end{align}
	where the $C_t^{\HS}$ are  canonical basis elements of $\End{\CSn}{\HSn}$.
\end{remark}

Thus, ignoring $\Al$, a common procedure (see e.g.\ \cite{chee2023efficient}) to efficiently determine the representative elements of the $S_n$-orbits exploits a bijection between said orbits and certain matrices 
\begin{align}
\lrbrace{E_i}_{i=1}^{m\lrbracket{\HS^n}}\subset\mathbb{Z}^{\lrvert{\HS}\times \lrvert{\HS}}_{\geq 0}.\,
\end{align}
Concretely, for an arbitrary representative element $(x,y)$ to orbit $O^{\HS^n}_i$ we construct the matrix
\begin{align}
    \lrbracket{E^{(x,y)}}_{a,b} = \lrvert{\lrbrace{k\in\lrrec{n}\,:\, \phi_k(x)= a, \phi_k(y)= b}},\,\text{ with } (a,b)\in \lrrec{\lrvert{\HS}}\times\lrrec{\lrvert{\HS}},\,
\end{align}
by determining the frequencies of basis elements $\ket{\phi_i(x)}\bra{\phi_i(y)}$ to $\mathcal{B}\lrbracket{\HS^n}$ appearing in $(x,y)\in O^{\HS^n}_i$. Conversely, by solving
\begin{align}\label{eqn:opti_solving_cano_basis_E}
\begin{split}
    \begin{array}{cc}
         \text{find} &  \mathcal{E}:=\lrbrace{E^{(x,y)}}_{(x,y) \in \lrrec{\lrvert{\HS}^n}\times\lrrec{\lrvert{\HS}^n}} \\ \\
         \text{s.t.} & \sum_{a,b} \lrbracket{E^{(x,y)}}_{a,b} = n\\
          & \lrbracket{E^{(x,y)}}_{a,b} \in \mathbb{N}_0
    \end{array}
\end{split}
\end{align}
we can efficiently (in $n$) determine a representative element for each orbit $O_i^{\HS^n}$.

\begin{remark}
	Computing the full orbit would require at worst $\lrvert{S_n}=n!$ many permutations.
\end{remark}

\begin{proof}[Proof of \autoref{lem:efficiency_of_trafo_sdp_sym}] 

The proof is based on \cite{chee2023efficient, fawzi2022hierarchy, polak2020new, litjens2017semidefinite} and is structured as follows: 

First, we show that any $Z\in \text{End}^{S_{t\lrbracket{\mathcal{D}_1}}}\lrbracket{\mathcal{D}_1}$, and in particular any canonical basis element thereof, can be efficiently transformed into its block-diagonal form allowing for an efficient formulation of the positive semidefinite constraint. Then, we demonstrate that the other SDP constraints and the objective function can be efficiently formulated in terms of the canonical basis of the respective spaces. See \autoref{sec:efficiency_of_Bose_trafo} for a similar proof strategy.

Since \autoref{prop:bounding_our_case} bounds the number of blocks to be growing polynomially in $n$, it suffices to show that each block in the decomposition given in \autoref{eqn:efficient_sdp_free_game} can be computed efficiently w.r.t. $n$. Thus, for any $Z \in \text{End}^{S_{t\lrbracket{\mathcal{D}_1}}}\lrbracket{\mathcal{D}_1}$, the only costly basis transformation is 
\begin{align}\label{eqn:efficient_cal_basis_trafo}
\begin{split}
    \sum_{i=1}^{m\lrbracket{\HS^n}}z_i\cdot U^T_{\lambda}C_i^{\HS}U_{\lambda},\,
\end{split}
\end{align}
with $z_i\in\mathbb{C}$. The $U_{\lambda}$ consist of column vectors $\lrbrace{u_{\tau}}_{\tau\in T_{\lambda, \lrvert{\HS}}}$ indexed in the number of semistandard Young tableaus of shape $\lambda$ indexed in $\lrrec{\lrvert{\HS}}$ and as these are bounded to only grow polynomially in $n$ we can deduce the efficient computation of \autoref{eqn:efficient_cal_basis_trafo} from an efficient computation of 
\begin{align}\label{eqn:efficient_cal_basis_trafo_2}
    \sum_{i=1}^{m\lrbracket{\HS^n}}z_i\cdot u_{\tau}^TC_i^{\HS}u_{\tau'}
\end{align}
with $\tau, \tau'\in T_{\lambda, \lrvert{\HS}}$. While clearly $u_{\tau}\in\mathbb{R}^{\lrvert{\HS}^n}$ and $C_i^{\HS}\in\mathbb{R}^{\lrvert{\HS}^n\times\lrvert{\HS}^n}$ grow exponentially in $n$, $\forall \tau\in T_{\lambda, \lrvert{\HS}}, i\in\lrrec{m\lrbracket{\HS^n}}$, we will argue that while we cannot efficiently compute these quantities, we can however efficiently compute $u_{\tau}^TC_i^{\HS}u_{\tau'}$, i.e. the entries in the transformed matrix $U^T_{\lambda}C_i^{\HS}U_{\lambda}$. Concretely, by \autoref{prop:efficient_computation_explicit_part} below, given $\lrbrace{z_i}_{i\in m(\HS^n)}$ \autoref{eqn:efficient_cal_basis_trafo_2} can be computed polynomially in $n$ for fixed $\lrvert{\HS}$. 

While we have now shown that we can efficiently compute the bases to the set of full matrix algebras in the image of $\Psi_{\text{End}^{S_{t\lrbracket{\mathcal{D}_1}}}(\mathcal{D}_1)}(\cdot)$ in \autoref{eqn:our_bijection_applied}, we have yet to show that each constraint in \autoref{eqn:efficient_sdp_free_game} can be efficiently formulated in their respective canonical basis. Let
\begin{align}
    \lrbrace{\ket{a_1q_1i}}_{\substack{a_1\in\lrrec{A_1},\,\\ q_1\in\lrrec{Q_1},\,\\ i\in\lrrec{\lrvert{T}}}}
\end{align}
be the standard basis to $A_1Q_1T$ and 
\begin{align}
    \lrbrace{C_j^{\lrbracket{A_2Q_2\hat{T}}_1^n}}_{j\in\lrrec{m\lrbracket{\lrbracket{A_2Q_2\hat{T}}_1^n}}}
\end{align}
be the canonical basis of $\End{\CSn}{\An}$. Since the game predicate $V_{A_1A_2Q_1Q_2}$ is diagonal in the computational bases of the classical answer and question registers, and all constraints of \autoref{eqn:efficient_sdp_free_game} commute with the corresponding dephasing channels, feasible points can without loss of generality be taken classical (diagonal) on $A_1Q_1$; it therefore suffices to expand $\rhoge$ in the classical-quantum sub-basis below (cf.\ \autoref{prop:restriction_to_cq_states}). Then, by \autoref{rem:split_basis} we can express $\rhoge\in\text{End}^{S_{t\lrbracket{\mathcal{D}_1}}}(\mathcal{D}_1)$ in the corresponding basis
\begin{align}
    \mathcal{B}_{\lrbracket{\D_1}} := \lrbrace{\ket{a_1q_1i}\bra{a_1q_1j}_{A_1Q_1T}\otimes C_t^{\An}}_{\substack{a_1\in\lrrec{A_1},\,\\ q_1\in\lrrec{Q_1},\,\\ i,j\in\lrrec{\lrvert{T}},\,\\ t\in\lrrec{m\lrbracket{\An}}}},\,
\end{align}
i.e.
\begin{align}
    \rhoge = \sum_{\substack{a_1\in\lrrec{A_1},\,\\ q_1\in\lrrec{Q_1},\,\\ i,j\in\lrrec{\lrvert{T}},\,\\ t\in\lrrec{m\lrbracket{\An}}}}v_{a_1, q_1,i, j,t}\cdot\ket{a_1q_1i}\bra{a_1q_1j}_{A_1Q_1T}\otimes C_t^{\An},\,
\end{align}
with coefficients $v_{a_1, q_1,i,j,t}\in\mathbb{C}$. For ease of notation, let us define for every $a_1\in\lrrec{\lrvert{A_1}}, q_1\in\lrrec{\lrvert{Q_1}}, i,j\in\lrrec{\lrvert{T}},\, t\in\lrrec{m\lrbracket{\An}}$
\begin{align}
    Z(a_1,q_1, i, j, t):= \ket{a_1q_1i}\bra{a_1q_1j}_{A_1Q_1T}\otimes C_t^{\An},\,
\end{align}
such that
\begin{align}
    \rhoge = \sum_{\substack{a_1\in\lrrec{A_1},\,\\ q_1\in\lrrec{Q_1},\,\\ i,j\in\lrrec{\lrvert{T}},\,\\ t\in\lrrec{m\lrbracket{\An}}}}v_{a_1, q_1,i, j,t}\cdot Z(a_1,q_1, i, j, t).\,
\end{align}
Then, we explicitly express \autoref{eqn:efficient_sdp_free_game} in terms of these $Z(a_1,q_1, i, j, t)$, i.e.

\begin{align}\label{eqn:basis_efficient_sdp}
        &\Psi\lrbracket{\mathrm{SDP}_{(n)}\lrbracket{T, V,\pi}} = \lrvert{T}\max_{v_{a_1, q_1,i,j,t}\in\mathbb{C}} \sum_{\substack{a_1\in\lrrec{A_1},\,\\ q_1\in\lrrec{Q_1},\,\notag\\ i,j\in\lrrec{\lrvert{T}},\,\\ t\in\lrrec{m\lrbracket{\An}}}} v_{a_1, q_1,i,j,t} \\
        &\hspace{4cm}\cdot\Tr\lrrec{\lrbracket{V_{A_1A_2Q_1Q_2}\otimes S_{T\hat{T}}}\Trr{(A_2Q_2\hat{T})_2^n}{Z(a_1,q_1, i, j, t)}}\notag\\
        \notag\\
        \text{s.t.}\hspace{0.5cm} &\sum_{\substack{a_1\in\lrrec{A_1},\,\notag\\ q_1\in\lrrec{Q_1},\,\\ i,j\in\lrrec{\lrvert{T}},\,\notag\\ t\in\lrrec{m\lrbracket{\An}}}} v_{a_1, q_1,i,j,t}\cdot \left[\!\left[\Psi_{\text{End}^{S_{t\lrbracket{\mathcal{D}_1}}}(\mathcal{D}_1)}\lrbracket{Z(a_1,q_1, i, j, t)}\right]\!\right]_{\lambda} \succcurlyeq 0,\,\hspace{0.5cm} \forall \lambda\in\text{Par}\lrbracket{\lrvert{A_2Q_2\hat{T}}, n},\,\notag\\ 
        &\sum_{\substack{a_1\in\lrrec{A_1},\,\\ q_1\in\lrrec{Q_1},\,\notag\\ i,j\in\lrrec{\lrvert{T}},\,\\ t\in\lrrec{m\lrbracket{\An}}}} v_{a_1, q_1,i,j,t}\cdot\Tr\left[Z(a_1,q_1, i, j, t) \right] = 1,\,\notag\\
        \notag\\
         & \sum_{\substack{a_1\in\lrrec{A_1},\,\\ q_1\in\lrrec{Q_1},\,\notag\\ i,j\in\lrrec{\lrvert{T}},\,\notag\\ t\in\lrrec{m\lrbracket{\An}}}} v_{a_1, q_1,i,j,t}\cdot \left[\!\left[\Psi_{\text{End}^{S_{t\lrbracket{\mathcal{D}_2}}}(\mathcal{D}_2)}\lrbracket{\Trr{A_1}{Z(a_1,q_1, i, j, t)}}\right]\!\right]_{\lambda}\notag\\ 
    &\hspace{0.50cm}=\sum_{\substack{a_1\in\lrrec{A_1},\,\notag\\ q_1\in\lrrec{Q_1},\,\\ i,j\in\lrrec{\lrvert{T}},\,\notag\\ t\in\lrrec{m\lrbracket{\An}}}} v_{a_1, q_1,i,j,t}\notag\\
    &\hspace{3cm}\cdot\left[\!\left[\Psi_{\text{End}^{S_{t\lrbracket{\mathcal{D}_2}}}(\mathcal{D}_2)}\lrbracket{\sum_{q_1'}\pi_1(q_1')\ket{q_1'}\bra{q_1'}_{Q_1}\otimes\Trr{A_1Q_1}{Z(a_1,q_1, i, j, t)}}\right]\!\right]_{\lambda},\,\notag\\
    &\hspace{2cm}\forall \lambda\in\text{Par}\lrbracket{\lrvert{A_2Q_2\hat{T}}, n},\,\notag\\
    \notag\\
    &\lrvert{T}\cdot\sum_{\substack{a_1\in\lrrec{A_1},\,\notag\\ q_1\in\lrrec{Q_1},\,\notag\\ i,j\in\lrrec{\lrvert{T}},\,\notag\\ t\in\lrrec{m\lrbracket{\An}}}} v_{a_1, q_1,i,j,t}\cdot \left[\!\left[\Psi_{\text{End}^{S_{t\lrbracket{\mathcal{D}_3}}}(\mathcal{D}_3)}\lrbracket{\Trr{(A_1Q_1T)(A_2)_1}{Z(a_1,q_1, i, j, t)}}\right]\!\right]_{\lambda}\notag\\ 
    &\hspace{0.50cm}=\sum_{\substack{a_1\in\lrrec{A_1},\,\notag\\ q_1\in\lrrec{Q_1},\,\notag\\ i,j\in\lrrec{\lrvert{T}},\,\notag\\ t\in\lrrec{m\lrbracket{\An}}}} v_{a_1, q_1,i,j,t}\notag\\
    &\hspace{3cm}\cdot \left[\!\left[\Psi_{\text{End}^{S_{t\lrbracket{\mathcal{D}_3}}}(\mathcal{D}_3)}\lrbracket{\sum_{q_2}\pi_2(q_2)\ket{q_2}\bra{q_2}_{Q_2}\otimes\mathbb{1}_{\hat{T}}\otimes\Trr{(A_1Q_1T)(A_2Q_2\hat{T})_1}{Z(a_1,q_1, i, j, t)}}\right]\!\right]_{\lambda},\,\notag\\
    &\hspace{2cm}\forall \lambda\in\text{Par}\lrbracket{\lrvert{A_2Q_2\hat{T}}, n-1}.\,\notag\\
\end{align}

We argued that any $Z\in \text{End}^{S_{t(\D_1})}\lrbracket{\D_1}$ can be efficiently transformed into block diagonal form. Thus, the positive semidefinite constraints are efficiently computable with respect to these bases. Subsequently, we demonstrate that the normalization constraint can also be computed efficiently. Next, we show that the operators (matrices) appearing in the constraints in \autoref{eqn:basis_efficient_sdp} can be efficiently expressed in the respective canonical basis. With the bound on $m\lrbracket{\An}$ from \autoref{prop:bounding_our_case}, it suffices to restrict the proofs to a respective summand in the sums appearing in \autoref{eqn:basis_efficient_sdp}, i.e. we show that for all indices
\begin{align}\label{eqn:constraints_in_cano_basis_1}
\begin{split}
   \Trr{A_1}{Z(a_1,q_1, i, j, t)} \,\text{ and }\, \sum_{q_1'}\pi_1(q_1')\ket{q_1'}\bra{q_1'}_{Q_1}\otimes\Trr{A_1Q_1}{Z(a_1,q_1, i, j, t)}
\end{split}
\end{align}
can be efficiently expressed in terms of the canonical basis of $\text{End}^{S_{t\lrbracket{\mathcal{D}_2}}}(\mathcal{D}_2)$
\begin{align}
    \mathcal{B}_{\D_2}:=\lrbrace{\ket{q_1}\bra{q_1}_{Q_1}\otimes\ket{i}\bra{j}_{T}\otimes C_t^{\An}}_{\substack{q_1\in\lrrec{\lrvert{Q_1}},\,\\ i,j\in\lrrec{\lrvert{T}},\, \\t\in m\lrbracket{\lrbracket{A_2Q_2\hat{T}}_1^n} }}
\end{align}and analogously, 
\begin{align}\label{eqn:constraints_in_cano_basis_2}
    \begin{split}
       \Trr{(A_1Q_1T)(A_2)_1}{Z(a_1,q_1, i, j, t)} \,\text{ and }\, \sum_{q_2}\pi_2(q_2)\ket{q_2}\bra{q_2}_{Q_2}\otimes\mathbb{1}_{\hat{T}}\otimes\Trr{(A_1Q_1T)(A_2Q_2\hat{T})_1}{Z(a_1,q_1, i, j, t)}
    \end{split}
\end{align}
in terms of the canonical basis of $\text{End}^{S_{t\lrbracket{\mathcal{D}_3}}}(\mathcal{D}_3)$
\begin{align}
    \begin{split}
         \mathcal{B}_{\D_3}:=\lrbrace{\ket{q_2}\bra{q_2}_{(Q_2)_1}\otimes\ket{u}\bra{v}_{\hat{T}_1}\otimes K_t^{\lrbracket{A_2Q_2\hat{T}}_2^n}}_{\substack{q_2\in \lrrec{\lrvert{(Q_2)_1}},\,\\ u,v\in \lrrec{\lrvert{(\hat{T})_1}}\\t\in m\lrbracket{\lrbracket{A_2Q_2\hat{T}}_2^n} }},\,
    \end{split}
\end{align}
where $\lrbrace{K_t^{\lrbracket{A_2Q_2\hat{T}}_2^n}}_{t\in \lrrec{m\lrbracket{\lrbracket{A_2Q_2\hat{T}}_2^n}}}$ with
\begin{align}
    m\lrbracket{\lrbracket{A_2Q_2\hat{T}}_2^n}= \dim\lrbracket{\text{End}^{\,S_{n-1}\,}\lrbracket{\lrbracket{A_2Q_2\hat{T}}_2^n}}\stackrel{\autoref{prop:bounding_our_case}}{\leq} \lrbracket{n+1}^{\lrvert{A_2Q_2\hat{T}}^2}
\end{align}
is the canonical basis of $\text{End}^{\,S_{n-1}\,}\lrbracket{\lrbracket{A_2Q_2\hat{T}}_2^n}$ which can be efficiently computed by solving the corresponding \autoref{eqn:opti_solving_cano_basis_E} followed by applying the inverse of the bijection $\phi$. Let $\lrbrace{O_t^{\An}}_t$ be the corresponding orbit partitioning. Recall, to each $t$ a representative element $(a,b)\in O_t^{\An}$ can be efficiently computed. Since 
\begin{align}
    \lrrec{\lrvert{A_2Q_2\hat{T}}^n}\simeq \bigtimes_{i=1}^n \lrrec{\lrvert{A_2Q_2\hat{T}}}
\end{align}
we define a mapping for all $j\in\lrrec{n}$
\begin{align}\label{eqn:bijection_psi}
    \begin{split}
        \phi_j \,:\, \lrrec{\lrvert{A_2Q_2\hat{T}}^n}  & \rightarrow \lrrec{\lrvert{A_2Q_2\hat{T}}},\,\\
        a &\mapsto \phi_j(a) =  \frac{a-\sum_{i=1}^{j-1}(\phi_i(a)-1)\lrvert{A_2Q_2\hat{T}}^{i-1}-1}{\lrvert{A_2Q_2\hat{T}}^{j-1}}\mod\lrbracket{\lrvert{A_2Q_2\hat{T}}} +1
    \end{split}
\end{align}
where $\phi_1(a)= (a-1)\mod\lrbracket{\lrvert{A_2Q_2\hat{T}}}+1$. In other words, we choose a representation of each $(a,b)\in  O_t^{\An}$ in terms of coefficients from $\lrrec{\lrvert{A_2Q_2\hat{T}}}$ indexed in $n$, i.e. for any $a\in \lrrec{\lrvert{A_2Q_2\hat{T}}^n}$
\begin{align}
    a = 1+\sum_{j=1}^n (a_j-1)\lrvert{A_2Q_2\hat{T}}^{j-1}
\end{align}
where the $a_j$ correspond to $\phi_j(a)$ for all $j\in\lrrec{n}$. Each $(a,b)\in O_t^{\An}$ corresponds to a basis element of $\mathcal{B}\lrbracket{\An}$, i.e. 
\begin{align}\label{eqn:orbit_rep_corresponding_basis_element}
    (a,b) \leftrightarrow \bigotimes_{i=1}^n\ket{\phi_i(a)}\bra{\phi_i(b)}.\,
\end{align}
Additionally, with
\begin{align}\label{eqn:subsplitting}
    \lrrec{\lrvert{A_2Q_2\hat{T}}}\simeq \lrrec{\lrvert{A_2}}\times\lrrec{\lrvert{Q_2\hat{T}}}
\end{align}
any $a\in \lrrec{\lrvert{A_2Q_2\hat{T}}}$ can be decomposed into $a=(a_{A_2}, a_{Q_2\hat{T}})\in \lrrec{\lrvert{A_2}}\times\lrrec{\lrvert{Q_2\hat{T}}}$.

By the linearity of the trace and the bound on $m\lrbracket{\An}$ from \autoref{prop:bounding_our_case} we deduce that an efficient computation of $\Tr\lrrec{\rhoge}$ follows from an efficient computation of 
\begin{align}\label{eqn:trace_equation_SDP_sym}
\begin{split}
    \Tr\lrrec{\ket{a_1q_1i}\bra{a_1q_1j}_{A_1Q_1T}\otimes C_t^{\An}} &= \Tr\lrrec{\ket{a_1q_1i}\bra{a_1q_1j}_{A_1Q_1T}}\Tr\lrrec{C_t^{\An}}\\
    &=\begin{cases}
        \begin{array}{cc}
             \lrvert{O_t^{\An}} & \text{if } i=j \,\land\, a=b \\
             0 & \text{otherwise,} 
        \end{array}
    \end{cases}
\end{split}
\end{align}
for every $a_1\in\lrrec{A_1},\, q_1\in\lrrec{Q_1},\, i,j\in\lrrec{\lrvert{T}},\, t\in\lrrec{m\lrbracket{\An}}$ and where $(a,b)\in O_t^{\An}$ such that requiring $a=b$ is equivalent to requiring the Hamming-distance
\begin{align}
    H_D(a,b)=\lrvert{\lrbrace{i\in [n]\,:\, \phi_i(a)\neq\phi_i(b)}}
\end{align} to vanish, i.e. $H_D(a,b)=0$. Note that for any basis element $(a,b)\in \lrrec{\lrvert{A_2Q_2\hat{T}}^n}\times \lrrec{\lrvert{A_2Q_2\hat{T}}^n}$, $H_D(a,b)$ is invariant under exchanging system indices, i.e. the action of $S_n$. Consequently, for any $(a,b)\in O_t^{\An}$
\begin{align}
    H_D(a,b)=H_D(c,d),\,\hspace{0.5cm}\forall (c,d)\in O_t^{\An}.\,
\end{align}
Since $\lrvert{O_t^{\An}}$ can be computed efficiently for orbits where elements are diagonal (via the closed form below), the normalization condition in \autoref{eqn:efficient_sdp_free_game} can be computed efficiently. Concretely, let $a$ be of type $\vec{t}$, i.e. $T(a)=\vec{t}$, then by \autoref{prop:dimension_bound_sym_subspace}
 \begin{align}
 	\lrvert{O_t^{\An}} = \binom{n}{\vec{t}},\, \hspace{0.5 cm}\text{if } (a,a)\in O_t^{\An}.\,
 \end{align}
Next, 
\begin{align}
    \begin{split}
        \Trr{A_1}{Z(a_1,q_1, i, j, t)}=\Trr{A_1}{\ket{a_1q_1i}\bra{a_1q_1j}_{A_1Q_1T}\otimes C_t^{\An}} = \ket{q_1i}\bra{q_1j}_{Q_1T}\otimes C_t^{\An}
    \end{split}
\end{align}
and
\begin{align}
    \begin{split}
        \Trr{A_1Q_1T}{Z(a_1,q_1, i, j, t)} = \Trr{A_1Q_1T}{\ket{a_1q_1i}\bra{a_1q_1j}_{A_1Q_1T}\otimes C_t^{\An}} = \begin{cases}
            \begin{array}{cc}
                 C_t^{\An} &  \text{ if } i=j,\,\\
                 0 & \text{ otherwise. }
            \end{array}
        \end{cases}
    \end{split}
\end{align}
Thus, for every $a_1\in\lrrec{\lrvert{A_1}}, q_1\in\lrrec{\lrvert{Q_1}}, i,j\in\lrrec{\lrvert{T}}$ and $t\in\lrrec{m\lrbracket{\lrbracket{A_2Q_2\hat{T}}_1^n}}$ the operators in \autoref{eqn:constraints_in_cano_basis_1} can be efficiently expressed in terms of $\mathcal{B}_{\D_2}$. In order to prove an analogous statement for the expressions in \autoref{eqn:constraints_in_cano_basis_2}, we need to relate the $K_t^{\lrbracket{A_2Q_2\hat{T}}_2^n}$'s from $\mathcal{B}_{\D_3}$ (canonical basis to $\D_3$) to the $C_t^{\An}$'s from $\mathcal{B}_{\D_2}$. We express the orbits of $S_n$ via orbits of $S_{n-1}$. Firstly, note that for any $t\in m\lrbracket{\An}$ the orbit $O_t^{\An}$ can be partitioned further under the action of $S_{n-1}$. To see this, take a representative element $(a,b)$ from $O_t^{\An}$ and collect in a set $O_{(\phi_1(a), \phi_1(b))}^t$ elements from $O_t^{\An}$ which can be reached via permutations on $\lrbracket{A_2Q_2\hat{T}}_2^n$ while keeping $\ket{\phi_1(a)}\bra{\phi_1(b)}_1$ representing $(A_2Q_2\hat{T})_1$ fixed. Repeat this procedure for every tuple from the set
\begin{align}
    \mathcal{S}= \lrrec{\lrvert{A_2Q_2\hat{T}}}\times \lrrec{\lrvert{A_2Q_2\hat{T}}}
\end{align}
until all elements from $O_t^{\An}$ are assigned to an $S_{n-1}$-orbit. Concretely, $S_{n-1}$ partitions $O_t^{\An}$ into $\lrvert{\mathcal{S}}$ sets 
\begin{align}
    O_{(c,d)}^t:=\lrbrace{(a,b)\in O_t^{\An} \, : \, \phi_1(a)=c,\ \phi_1(b)=d},\,
\end{align}
i.e. $O_{(c,d)}^t$ contains all elements of $O_t^{\An}$ where system one is described by $\ket{c}\bra{d}_1$ while the remaining systems are in an $S_{n-1}$-orbit relation. Note that $\forall t\in \lrrec{m\lrbracket{\An}}$ the orbit length $\lrvert{O_{(c,d)}^t}$ is determined by the multiplicities of the basis elements appearing in the $\lrbracket{A_2Q_2\hat{T}}_2^n$-systems, i.e.
\begin{align}\label{eqn:suborbit_length}
    \lrvert{O_{(c,d)}^t} = \binom{n-1}{\vec{m}}\quad\text{with}\quad \lrbracket{\vec{m}}_{(x,y)} := \lrvert{\lrbrace{i\in\lrbrace{2,\ldots,n} \,:\, (\phi_i(a), \phi_i(b)) = (x,y)}},\,
\end{align}
for a representative element $(a,b)$ of $O_{(c,d)}^t$, which can be computed efficiently with respect to $n$\footnote{By construction, elements within an orbit $O_{(c,d)}^t$ necessarily contain the same multiset of basis elements in the $n-1$-copy equivalent systems $\lrbracket{A_2Q_2\hat{T}}_2^n$, so the pair-type $\vec{m}$ does not depend on the chosen representative.}. Next, we write the canonical basis of $\ESN\lrbracket{\An}$ in terms of orbits of $S_{n-1}$, i.e.
\begin{align}\label{eqn:symmetric_efficiency_decomposing_C}
    \begin{split}
        C_t^{\An}&= \sum_{(a,b)\in O_t^{\An}}\bigotimes_{i=1}^n \ket{\phi_i(a)}\bra{\phi_i(b)} = \sum_{(c,d)\in \mathcal{S}} \sum_{(a,b)\in O_{(c,d)}^t} \ket{c}\bra{d}\otimes \lrbracket{\bigotimes_{i=2}^n \ket{\phi_i(a)}\bra{\phi_i(b)}}\\
        &\stackrel{\autoref{eqn:subsplitting}}{=} \sum_{(c,d)\in\mathcal{S}}\sum_{(a, b)\in O_{(c,d)}^t}\ket{c_{A_2}}\bra{d_{A_2}} \otimes \ket{c_{Q_2\hat{T}}}\bra{d_{Q_2\hat{T}}}\\
    &\hspace{2cm}\otimes\lrbracket{\bigotimes_{i=2}^n\ket{\phi_i(a)}\bra{\phi_i(b)}},\,
    \end{split}
\end{align}
where $c_{A_2}, c_{Q_2\hat{T}}$ (and likewise $d_{A_2}, d_{Q_2\hat{T}}$) denote the components of $c$ under \autoref{eqn:subsplitting}. Thus, 
\begin{align}
\begin{split}
     &\Trr{(A_1Q_1T)(A_2)_1}{Z(a_1,q_1, i, j, t)}=\Trr{(A_1Q_1T)(A_2)_1}{\ket{a_1q_1i}\bra{a_1q_1j}_{A_1Q_1T}\otimes C_t^{\An}}\\
     &=\delta_{i,j}\cdot\Trr{(A_2)_1}{C_t^{\An}}\\
    &\stackrel{\autoref{eqn:symmetric_efficiency_decomposing_C}}{=}\delta_{i,j}\cdot \sum_{\substack{(c,d)\in\mathcal{S},\,\\ c_{A_2}=d_{A_2}}}\sum_{(a, b)\in O_{(c,d)}^t}\ket{c_{Q_2\hat{T}}}\bra{d_{Q_2\hat{T}}}
    \otimes\lrbracket{\bigotimes_{k=2}^n\ket{\phi_k(a)}\bra{\phi_k(b)}}
\end{split}
\end{align}
which can be efficiently expressed in terms of $\mathcal{B}_{\D_3}$, since we can efficiently determine one\footnote{Since the indexing of orbits of $S_{n-1}$ in \autoref{eqn:opti_solving_cano_basis_E} does not necessarily coincide with the indexing of orbits of $S_n$.} $t'\in \lrrec{m\lrbracket{\lrbracket{A_2Q_2\hat{T}}_2^n}}$ such that
\begin{align}
    K_{t'}^{\lrbracket{A_2Q_2\hat{T}}_2^n}=\sum_{(a, b)\in O_{(c,d)}^t}
    \bigotimes_{k=2}^n\ket{\phi_k(a)}\bra{\phi_k(b)}.\,
\end{align}
Equivalently,
\begin{align}
    \sum_{q_2}\pi_2(q_2)\ket{q_2}\bra{q_2}_{Q_2}\otimes\mathbb{1}_{\hat{T}}\otimes\Trr{(A_1Q_1T)(A_2Q_2\hat{T})_1}{Z(a_1,q_1, i, j, t)}
\end{align}
can be efficiently expressed by $\mathcal{B}_{\D_3}$ as it constitutes a special case of the argument above without dissecting $\lrbracket{A_2Q_2\hat{T}}_1$ into components.

Lastly, an efficient expansion of the objective function follows from an efficient computation of  
\begin{align}
    &\Trr{(A_2Q_2\hat{T})_2^n}{Z(a_1,q_1, i, j, t)} \\
    &= \Trr{(A_2Q_2\hat{T})_2^n}{\ket{a_1q_1i}\bra{a_1q_1j}_{A_1Q_1T}\otimes C_t^{\An}}\\
    &=\ket{a_1q_1i}\bra{a_1q_1j}_{A_1Q_1T}\otimes \Trr{(A_2Q_2\hat{T})_2^n}{\sum_{(a,b)\in O_t^{\An}}\bigotimes_{k=1}^n \ket{\phi_{k}(a)}\bra{\phi_{k}(b)}_{\lrbracket{A_2Q_2\hat{T}}_{k}}}\\
    &=\begin{cases}
        \begin{array}{cc}
        	  \displaystyle\sum_{c\,:\,\vec{t}_c>0}\binom{n-1}{\vec{t}-\vec{e}_c}\lrrec{\ket{a_1q_1i}\bra{a_1q_1j}_{A_1Q_1T}\otimes\ket{c}\bra{c}_{\lrbracket{A_2Q_2\hat{T}}_1}} & \text{if } H_{D}(a,b)=0,\,\\
        	  \\
             C(t)\lrrec{\ket{a_1q_1i}\bra{a_1q_1j}_{A_1Q_1T}\otimes\ket{a_1}\bra{b_1}_{\lrbracket{A_2Q_2\hat{T}}_1}}& \text{ if } H_D(a,b)=1 \,\text{ and }\, a_1\neq b_1 ,\,\\
           	\\
             0 & \text{ otherwise, }
        \end{array}
    \end{cases}
\end{align}
where $a$ is of type $\vec{t}$ (so that the multinomial coefficients above are efficiently computable), the representative $(a,b)$ of an orbit with $H_D(a,b)=1$ is chosen such that the mismatched position is the first one, and we have exploited the Hamming-distance to efficiently formulate the condition
\begin{align}
    H_D(a,b)=1\hspace{0.5cm} \Leftrightarrow \hspace{0.5cm} \exists ! i\in [n] \,: a_i\neq b_i .\,
\end{align}
In other words, if $H_D(a,b)=1$ the partial trace yields a non-zero contribution whenever the off-diagonal term appears in the first tensor factor. Concretely, for $a$ of type $\vec{t}$ we have 
\begin{align}
    C(t) := \lrvert{\lrbrace{(c,d)\in O_t^{\An}\,:\, \phi_1(c)=a_1,\, \phi_1(d)=b_1}} = \binom{n-1}{\vec{t}_{\mathrm{red}}}
\end{align}
extracts those elements of $O_t^{\An}$ where system $\lrbracket{A_2Q_2\hat{T}}_1$ is in $\ket{a_1}\bra{b_1}_{\lrbracket{A_2Q_2\hat{T}}_1}$, where $\vec{t}_{\mathrm{red}}$ denotes the type of $a$ with the first (mismatched) letter removed. Clearly, for $a$ of type $\vec{t}$ and $b$ of type $\vec{t'}$, $H_D(a,b)=0$ implies $\vec{t}=\vec{t'}$ and similarly $H_D(a,b)=1$ implies $\vec{t}_{\mathrm{red}}=\vec{t'}_{\mathrm{red}}$. Since we can determine a representative element of $O_t^{\An}$ efficiently for every $t$, computing $C(t)$ for every $t\in m\lrbracket{\An}$ can also be done efficiently.
\end{proof}

For the reader's convenience we include the following proposition (see \cite[Lemma 4.5]{chee2023efficient}, \cite[p. 30]{polak2020new} or \cite{fawzi2022hierarchy}) together with its proof. 

\begin{proposition}\label{prop:efficient_computation_explicit_part}
For any $\tau, \tau'\in \mathcal{T}_{\lambda, \lrvert{\HS}}$ and any $\lambda\in\text{Par}(\lrvert{\HS},n)$, given $\lrbrace{z_i}_{i=1}^{m\lrbracket{\HS^n}}$ the expression
	\begin{align}
    \sum_{i=1}^{m\lrbracket{\HS^n}}z_i\cdot u_{\tau}^TC_i^{\HS}u_{\tau'}
\end{align}
can be computed efficiently. 
\end{proposition}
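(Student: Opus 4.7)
The plan is to reduce the computation of $\sum_{i=1}^{m(\HS^n)} z_i\, u_\tau^T C_i^\HS u_{\tau'}$ to a bounded number of combinatorial bookkeeping operations on the tableaux $\tau,\tau'$, by exploiting the same mechanism used in the proof of \autoref{lem:efficient_basis_trafo_Bose_sym}, namely that the $u_\tau$ and $C_i^\HS$ are sums over orbits of different (but commuting) group actions on the canonical tensor basis of $\HSn$. Concretely, write
\begin{align*}
u_\tau = \sum_{\tau'' \sim \tau}\sum_{c\in C_\lambda}\mathrm{sgn}(c)\bigotimes_{y\in Y(\lambda)} e_{\tau''(c(y))},\qquad
C_i^\HS = \sum_{(\vec a,\vec b)\in O_i}\ket{\vec a}\bra{\vec b},
\end{align*}
so that
\begin{align*}
u_\tau^T C_i^\HS u_{\tau'} \;=\; \sum_{\substack{\tau''\sim\tau,\, \sigma\in C_\lambda,\\ \tau'''\sim\tau',\, c\in C_\lambda}}\mathrm{sgn}(\sigma)\mathrm{sgn}(c)\,
\bigl|\{(\vec a,\vec b)\in O_i \,:\, \vec a = \tau''\!\circ\sigma,\ \vec b=\tau'''\!\circ c\}\bigr|,
\end{align*}
where tableaux are read as vectors in $[\dHS]^n$ in the obvious way.

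First, I would use the $S_n$-invariance of $O_i$ to absorb the row-stabiliser sum in $\tau''\sim\tau$ into a single representative, yielding an orbit weight that depends only on $\tau$, the column permutation $\sigma$, the analogous data $(\tau',c)$, and the orbit class of $(\tau''\circ\sigma,\tau'''\circ c)$. Because $\lambda \vdash_\dHS n$ has at most $\dHS$ columns and each column has at most $\dHS$ boxes, the column stabiliser $C_\lambda$ has bounded structure: $|C_\lambda| \leq (\dHS!)^{\dHS}$, a constant in $n$. Hence the outer sum over $(\sigma,c)$ has constant cardinality.

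Next, for each fixed pair $(\sigma,c)$, the pair of tensor strings $(\tau''\!\circ\sigma,\,\tau'''\!\circ c)$ has a well-defined orbit class, determined by the matrix $D(\tau''\!\circ\sigma,\tau'''\!\circ c)\in P(n,\dHS)$ of joint frequencies of \autoref{eqn:P_n_d_H}. Since $\tau,\tau'$ are semistandard labellings of fixed-shape $\lambda$ (hence determined by $\mathcal{O}(\dHS^2)$ counts), this frequency matrix can be computed in $\mathrm{poly}(n)$ time per pair $(\sigma,c)$ by simply tallying how $\sigma$ and $c$ redistribute the row contents of $\tau$ and $\tau'$ across columns. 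Given $D$, the orbit index $i(D)$ together with the count of matching tuples $\binom{n}{T(\tau''\circ\sigma)}$-type multinomials can be read off directly via \autoref{prop:dimension_bound_sym_subspace}. In summary, for each $(\sigma,c)$ one obtains a single orbit index $i(D)$ and a scalar multiplicity $M(D)$, both computable in $\mathrm{poly}(n)$ time.

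Putting it together, $u_\tau^T C_i^\HS u_{\tau'}$ is nonzero for at most $|C_\lambda|^2 = \mathcal{O}(1)$ many indices $i$, and each nonzero contribution is an explicit product of signs and multinomial coefficients, computable in $\mathrm{poly}(n)$ time. Consequently the whole sum $\sum_i z_i\, u_\tau^T C_i^\HS u_{\tau'}$ is evaluated by picking out these $\mathcal{O}(1)$ surviving indices and weighting them by the given $z_i$, all in $\mathrm{poly}(n)$ arithmetic operations. The main obstacle I anticipate is the bookkeeping of signs induced by $C_\lambda$ on the joint frequency matrix; this is what makes the general-$\lambda$ case strictly harder than the $\lambda=(n)$ case of \autoref{lem:efficient_basis_trafo_Bose_sym}, where $C_\lambda$ is trivial and all signs are $+1$.
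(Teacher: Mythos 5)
There is a genuine gap: your bound $\lrvert{C_\lambda}\leq(\dHS!)^{\dHS}$ is wrong, and it is the load-bearing step. A partition $\lambda\vdash_{\dHS}n$ has at most $\dHS$ \emph{rows}, not $\dHS$ columns; its number of columns is $\lambda_1$, which can be as large as $n$. The column stabiliser is a direct product of one symmetric group per column, each of order at most $\dHS!$, so
\begin{align}
\lrvert{C_\lambda}=\prod_{j=1}^{\lambda_1}\bigl(\lrvert{\text{column }j}\bigr)!\leq(\dHS!)^{\lambda_1},
\end{align}
which is exponential in $n$ in general — e.g.\ $\lambda=(n/2,n/2)$, $\dHS=2$ gives $\lrvert{C_\lambda}=2^{n/2}$. (You may be thinking of $\lambda=(n)$, where every column has one box and $C_\lambda$ is trivial, which is exactly why the Bose-symmetric case in \autoref{lem:efficient_basis_trafo_Bose_sym} is easier.) Your enumeration over $(\sigma,c)\in C_\lambda\times C_\lambda$ is therefore not a polynomial-time computation, and the downstream claim that $u_\tau^TC_i^\HS u_{\tau'}$ is nonzero for only $\mathcal{O}(1)$ many $i$ inherits the same error.

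The paper avoids this entirely. Rather than expanding the Young symmetriser term-by-term, the argument packages the inner products $u_\tau^T C_r^\HS u_\gamma$ as the coefficients of a single polynomial
\begin{align}
\sum_{r=1}^{m(\HS)}\bigl(u_\tau^TC_r^\HS u_\gamma\bigr)\,\mu\lrbracket{O_r^{\HS}}=G_{\tau,\gamma}(F)
\end{align}
in the coordinate ring $\mathcal{O}_n(\HS\otimes\HS)$, where $G_{\tau,\gamma}$ is the tableau polynomial of \autoref{eqn:big_G_def}. The efficiency of computing $G_{\tau,\gamma}$ as a linear combination of $\mathrm{poly}(n)$ many monomials is \emph{not} proved by brute-force enumeration over $P_\lambda\times C_\lambda$, but is imported from \cite[Proposition 3]{gijswijt2009block} and \cite[Theorem 7]{litjens2017semidefinite}, which exploit that $G_{\tau,\gamma}$ factors column by column into $\lambda_1\leq n$ small (degree-$\leq\dHS$) determinantal polynomials that can be multiplied in $\mathrm{poly}(n)$ time. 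That factorisation is exactly the structural fact your direct expansion is missing. To salvage your approach you would need to reproduce (rather than sidestep) that column-factorisation argument, at which point you would essentially be reproving the cited results.
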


\begin{proof}
	Let $\HS:=A_2Q_2\hat{T}$. Throughout this proof we abbreviate $m\lrbracket{\HS}:=m\lrbracket{\HS^n}$ and write $O_r^{\HS}$ for the orbits of the $S_n$-action on $\lrrec{\lrvert{\HS}^n}^2$. The proof is based on a dual system consisting of $W_{\HS}:= \HS\otimes \HS$ and $W_{\HS}^*:=\HS^*\otimes \HS^*$. For each $p=(x,y)\in \lrrec{\lrvert{\HS}}\times \lrrec{\lrvert{\HS}}$, define $a_p := e_x\otimes e_y \in W_{\HS}$ for the canonical basis $\lrbrace{e_x}_{x\in\lrrec{\lrvert{\HS}}}$ of $\HS$, and let $\mathcal{W}_{\HS}$ denote the corresponding basis. Similarly, consider $a_p^*$ and the dual basis $\mathcal{W}_{\HS}^*$ of $W_{\HS}^*$. Note, that
	\begin{align}
    a^*_{p_i}a_{p_j}=\langle a_{p_i}, a_{p_j}\rangle \stackrel{\lrbrace{a_{p_i}}_{i} \text{ ONB}}{=}\delta_{i,j}.\,
\end{align}
	Consider the coordinate ring of homogeneous polynomials of degree $n$ on $W_\HS$ denoted $\mathcal{O}_n\lrbracket{W_{\HS}}$. 	Importantly, as opposed to a free algebra generated from the same alphabet, in the coordinate ring, the ordering of letters in monomials is irrelevant. 
	We bijectively identify a representative element $(a,b)$ of each orbit $O_{(a,b)}^{\HS}$ with a degree $n$ monomial in the coordinate ring. Any other element $(c,d)\in O_{(a,b)}^{\HS}$ is mapped to the same monomial. Concretely, the map\footnote{Equivalently, we can define $\mu$ on $\lrrec{\lrvert{\HS}^2}^n$.} 
\begin{align}\label{eqn:mu_map}
	\begin{split}
		\mu \,:\, &\lrbracket{\lrrec{\lrvert{\HS}}\times\lrrec{\lrvert{\HS}}}^{n} \rightarrow \mathcal{O}_n\lrbracket{W_{\HS}}\\
		&(p_1,\ldots, p_n) \mapsto \prod_{i=1}^n a_{p_i}^*
	\end{split}
\end{align}
is an $S_n$-orbit class function since
\begin{align}
    \mu\lrbracket{\sigma\lrbracket{p_1, \ldots, p_n}} = \prod_{i=1}^n a^*_{p_{\sigma^{-1}(i)}}= \prod_{i=1}^n a^*_{p_i} =   \mu\lrbracket{p_1, \ldots, p_n},\,
\end{align}
for any $\sigma\in S_n$.

For semistandard Young tableaux $\tau,\gamma\in\mathcal{T}_{\lambda, \lrvert{\HS}}$ of Young shape $Y(\lambda)$, consider the polynomial $G_{\tau,\gamma}$ in the polynomial ring with complex coefficients $\mathbb{C}\lrrec{x_{i,j}\,\vert\, i,j=1,\ldots,\lrvert{\HS}}$ given by
\begin{align}\label{eqn:big_G_def}
    G_{\tau,\gamma}(X) := \sum_{\substack{\tau'\sim \tau,\\ \gamma' \sim \gamma\\}} \sum_{c, c'\in C_{\lambda}} \text{sgn}(cc')\prod_{y\in Y(\lambda)} x_{\tau'\lrbracket{c(y)}, \gamma'\lrbracket{c'(y)}},\,
\end{align}
where $X=\lrbracket{x_{i,j}}_{i,j=1}^{\lrvert{\HS}}$ is a matrix indexed in the canonical basis of $\mathcal{B}\lrbracket{\HS}$. Now by \cite[Proposition 3]{gijswijt2009block} and \cite[Theorem 7]{litjens2017semidefinite}, for fixed $\lrvert{\HS}$, the polynomial $G_{\tau,\gamma}$ can be efficiently expressed w.r.t. $n$ as a linear combination of monomials. 

Let $(a,b)\in \lrrec{\lrvert{\HS}^n}\times \lrrec{\lrvert{\HS}^n}$ be mapped under $\phi$ to $(p_1, \ldots, p_n)\in \lrbracket{\lrrec{\lrvert{\HS}} \times \lrrec{\lrvert{\HS}}}^n$, i.e. $p_i = (\phi_i(a),\phi_i(b))$ so that we can write
\begin{align}\label{eqn:adjacency_matrix}
    C_{(a,b)}^{\HS} = \sum_{(c,d) \in O_{(a,b)}^{\HS}} \bigotimes_{i=1}^n a_{\phi_i(c),\phi_i(d)}=\sum_{(p'_1, \ldots, p'_n) \in O_{(p_1, \ldots, p_n)}^{\HS}} \bigotimes_{i=1}^n a_{p'_i}.\,
\end{align}
Let us additionally introduce a linear map translating $n$-fold tensor products to monomials
\begin{align}\label{eqn:zeta}
    \begin{split}
        \zeta \,:\, \lrbracket{W_{\HS}^*}^{\otimes n} &\rightarrow \mathcal{O}_n\lrbracket{W_{\HS}}\\
         \bigotimes_{i=1}^n w_i^* &\mapsto \prod_{i=1}^n w_i^*.\,
    \end{split}
\end{align}
With the canonical inner product on $\HS^{\otimes n}$ we can identify the polytabloids as dual vectors, i.e.  $u_{\tau}, u_{\gamma}\in \lrbracket{\HS^{\otimes n}}^*$ such that
\begin{align}\label{eqn:dual_polytabloids}
    u_{\tau}^TC_r^{\HS}u_{\gamma} = \lrbracket{u_{\tau} \otimes u_{\gamma}}\lrbracket{C_r^{\HS}}\,\in \CC
\end{align} 
where we have used 
\begin{align}
     \lrbracket{\HS^{\otimes n}}^*\otimes \lrbracket{\HS^{\otimes n}}^* \simeq \lrbracket{\HS^{\otimes n}\otimes \HS^{\otimes n}}^*.\,
\end{align}
Recall that the computation of polytabloid $u_{\tau}$ (or $u_{\gamma}$) constituted an application of the corresponding Young symmetrizer constructed from Young tableau $\tau$ (or $\gamma$) of shape $\lambda$ on $n$ copies of the canonical basis of $\HS$. Thus, for dual vector $u_{\tau}\otimes u_{\gamma}$ we apply the Young symmetrizers on the basis of the dual space $\lrbracket{\HS^{\otimes n}\otimes \HS^{\otimes n}}^*$. Concretely, by definition
\begin{align}\label{eqn:g_def}
     g := u_{\tau} \otimes u_{\gamma} =  \underbrace{\sum_{\substack{\tau'\sim \tau,\\ \gamma' \sim \gamma\\}} \sum_{c, c'\in C_{\lambda}} \text{sgn}(cc')}_{\text{Young sym.}}\underbrace{\bigotimes_{y\in Y(\lambda)} F_{\tau'\lrbracket{c(y)}, \gamma'\lrbracket{c'(y)}}}_{n-\text{fold tensor product}},\,
\end{align}
where $F\in \lrbracket{\mathcal{W}_{\HS}^*}^{\lrvert{\HS}\times \lrvert{\HS}}$ is an encoding of the canonical basis of the dual space, i.e. $\lrbracket{F}_{x,y}=a^*_{(x,y)}$. Then, 
\begin{align}
    \sum_{(p_1,\ldots, p_n)\in \lrbracket{\lrrec{\lrvert{\HS}}\times\lrrec{\lrvert{\HS}}}^n} g\lrbracket{\bigotimes_{i=1}^n a_{p_i}}\prod_{i=1}^n a^*_{p_i} \,\in \mathcal{O}_n\lrbracket{W_{\HS}}
\end{align}
is a linear combination of $n$-degree dual monomials with coefficients 
\begin{align}\label{eqn:concrete_coeff}
\begin{split}
    g\lrbracket{\bigotimes_{i=1}^n a_{p_i}} \stackrel{\ref{eqn:g_def}}{=} \sum_{\substack{\tau'\sim \tau,\\ \gamma' \sim \gamma\\}} \sum_{c, c'\in C_{\lambda}} \text{sgn}(cc') \prod_{y\in Y(\lambda)} a^*_{\tau'(c(y)), \gamma'(c'(y))}a_{p_y},\,
\end{split}
\end{align}
where orthonormality between basis elements yields
\begin{align}\label{eqn:ONB_concrete}
    \begin{split}
        a^*_{\tau'(c(y)), \gamma'(c'(y))}a_{p_y} = \begin{cases}
            \begin{array}{cc}
                1,\, &  \text{if }\, p_y = \lrbracket{\tau'(c(y)), \gamma'(c'(y))} ,\,\\
                0,\, & \text{else}.\,
            \end{array}
        \end{cases}
    \end{split}
\end{align}
Next, we show that the polynomial 
\begin{align}
    \sum_{r=1}^{m(\HS)} \lrbracket{u_{\tau} \otimes u_{\gamma}}\lrbracket{C_r^{\HS}} \mu(O_{r}^{\HS})
\end{align}
with $n$-degree dual monomials representing each orbit can be equated to $G_{\tau, \gamma}(F)$ from \autoref{eqn:big_G_def} for a specific $F$. Concretely, 
\begin{align}
\begin{split}
    \sum_{r=1}^{m(\HS)} \lrbracket{u_{\tau} \otimes u_{\gamma}}\lrbracket{C_r^{\HS}} \mu(O_{r}^{\HS})&\stackrel{\ref{eqn:g_def}}{=} \sum_{r=1}^{m(\HS)} g\lrbracket{C_r^{\HS}} \mu(O_{r}^{\HS})\\
    &\stackrel{\ref{eqn:adjacency_matrix}}{=} \sum_{r=1}^{m(\HS)} g\lrbracket{\sum_{(p_1, \ldots, p_n) \in O_{r}^{\HS}}\bigotimes_{i=1}^n a_{p_i}} \mu(O_{r}^{\HS})\\
    &\stackrel{g \text{ linear}}{=} \sum_{r=1}^{m(\HS)}\sum_{(p_1, \ldots, p_n) \in O_{r}^{\HS}} g\lrbracket{\bigotimes_{i=1}^n a_{p_i}} \mu(O_{r}^{\HS})\\
    \\
    &\stackrel{\substack{\ref{eqn:mu_map},\, \ref{eqn:concrete_coeff},\, \ref{eqn:ONB_concrete}\\ \lrbracket{\lrrec{\lrvert{\HS}}\times\lrrec{\lrvert{\HS}}}^n = \bigsqcup_{r=1}^{m(\HS)} O_r^{\HS}}}{=} \sum_{(p_1,\ldots, p_n)\in \lrbracket{\lrrec{\lrvert{\HS}}\times\lrrec{\lrvert{\HS}}}^n} g\lrbracket{\bigotimes_{i=1}^n a_{p_i}}\prod_{i=1}^n a^*_{p_i}\\
    &\stackrel{\ref{eqn:zeta}}{=} \sum_{(p_1,\ldots, p_n)\in \lrbracket{\lrrec{\lrvert{\HS}}\times\lrrec{\lrvert{\HS}}}^n} g\lrbracket{\bigotimes_{i=1}^n a_{p_i}}\zeta\lrbracket{\bigotimes_{i=1}^n a^*_{p_i}}\\
    &\stackrel{\ref{eqn:g_def}}{=}\zeta\lrbracket{u_{\tau}\otimes u_{\gamma}} \stackrel{\zeta \text{ lin.}}{=} \sum_{\substack{\tau'\sim \tau,\\ \gamma' \sim \gamma\\}} \sum_{c, c'\in C_{\lambda}} \text{sgn}(cc')\,\zeta\lrbracket{\bigotimes_{y\in Y(\lambda)} F_{\tau'\lrbracket{c(y)}, \gamma'\lrbracket{c'(y)}}}\\
    &= \sum_{\substack{\tau'\sim \tau,\\ \gamma' \sim \gamma\\}} \sum_{c, c'\in C_{\lambda}} \text{sgn}(cc')\,\prod_{y\in Y(\lambda)} F_{\tau'\lrbracket{c(y)}, \gamma'\lrbracket{c'(y)}} \\
    &\stackrel{\ref{eqn:big_G_def}}{=} G_{\tau, \gamma}\lrbracket{F}.\,
    \end{split}
\end{align}
 In other words, $G_{\tau, \gamma}\lrbracket{F}$ is a linear combination of poly$(n)$-many (i.e. $m(\HS)$) monomials $\mu(O_r^{\HS})$. Recall that any $Z\in\ESNH$ can be given in the canonical basis $\lrbrace{C_r^{\HS}}_{r=1}^{m(\HS)}$ of $\ESNH$, i.e.
 \begin{align}
     Z = \sum_{r=1}^{m(\HS)} z_rC_r^{\HS}.\,
 \end{align}
 Since the $\lrbrace{z_r}_{r\in\lrrec{m(\HS)}}$ are a collection of class functions on orbits $\lrbrace{O_r^{\HS}}_{r\in\lrrec{m(\HS)}}$ and the $\lrbrace{C_r^{\HS}}_{r\in\lrrec{m(\HS)}}$ are linearly independent as basis elements, determining each $z_r$ requires comparing $m(\HS)$ many entries of $Z$ with a corresponding element in each of the $C_r^{\HS}$. Concretely, since the $C_r^{\HS}$ contain exclusively zeroes and ones, for $(x,y)\in O_r^{\HS}$ efficiently computed via \autoref{eqn:opti_solving_cano_basis_E} we obtain $ z_r= (Z)_{x,y}$. In summary, we have shown that given $\lrbrace{z_r}_{r\in\lrrec{m(\HS)}}$
\begin{align}
     \sum_{r=1}^{m(\HS)} z_ru_{\tau}^TC_r^{\HS}u_{\gamma}
\end{align} 
can be computed efficiently.
\end{proof}

\section{Representation theory}\label{sec:representation_theory}

In this section, we primarily adopt a module-theoretic approach to representation theory. Nevertheless, group-theoretic language is occasionally employed where appropriate. The exposition is largely based on \cite{fulton2013representation, blyth2018module, etingof2009introduction, lassueurshort, lassueur2020modular}, and is intended as a review for the reader. We claim no novelty for the material presented.
	
We will briefly introduce the main mathematical objects relevant to this work. We assume familiarity with group theory. A ring $(R,+, \cdot)$ is an abelian group under addition, a semigroup under multiplication and the multiplication is distributive with respect to the addition. $R$ is unitary if it contains a multiplicative identity element, i.e.\ if it is a monoid under multiplication. A division ring is a unitary ring in which the non-zero elements form a group under multiplication, i.e. to each such element there exists a multiplicative inverse. In a commutative ring, the multiplication is commutative. An associative algebra $\Al$ over a commutative ring $R$ is a ring together with a ring homomorphism from $R$ into $Z(\Al)$, the center of $\Al$.

\begin{remark}
		A field is a commutative division ring. By Wedderburn's little theorem, any finite division ring is commutative and, therefore, a field.
\end{remark}
	
Importantly, $\Al$ is also an $R$-module.
	
\begin{definition}[$R$-module]
		Let $R$ be a unitary ring with identity $1_R$. A left $R$-module, or a module over $R$ is an additive abelian group $M$ together with a left action $R \times M \rightarrow M$, described by $(\lambda, x)\mapsto \lambda x$, such that
		\begin{enumerate}
			\item $\lambda(x+ y)= \lambda x + \lambda y$,\,\hspace{1cm} $\forall \lambda\in R,\, \forall x,\, y\in M$\,;
			\item $(\lambda + \mu)x = \lambda x+ \mu x$,\,\hspace{1cm} $\forall \lambda,\, \mu\in R,\, \forall x\in M$\,;
			\item $\lambda(\mu x)= (\lambda \mu)x$,\,\hspace{1cm} $\forall \lambda,\, \mu\in R,\, \forall x\in M$\,;
			\item $1_Rx=x$,\,\hspace{1cm} $\forall x\in M$.\,
		\end{enumerate}
\end{definition}
    
	\begin{remark}
		A right $R$-module is defined analogously with a right action $M\times R\rightarrow M$ described by $(x, \lambda)\mapsto x\lambda$. For $R$ a commutative ring, the definitions coincide.
	\end{remark}
	
\begin{remark}
		Note that for a ring $R$, the set of $n\times n$ matrices with entries from $R$ forms a ring under matrix addition and matrix multiplication, called the matrix ring and denoted $R^{n\times n}$ or $M_n(R)$. Moreover, for $R$ commutative, $R^{n\times n}$ is an associative algebra.
\end{remark}
	
A free $R$-module $M$ is an $R$-module with a basis, i.e. there  exists a set $B\subseteq M$ of linearly independent elements generating $M$.
    
\begin{remark}
		For any set $S$ and ring $R$, one can construct a free $R$-module with basis $S$, called the free module on $S$. Any vector space over a field is a free module by virtue of having a basis.
\end{remark}
	
We will now relate these concepts to groups. 
	
	\begin{definition}[Group algebra]
		For a finite group $G$ and ring $R$, the group ring of $G$ over $R$ denoted $R\lrrec{G}$ is a free module on $G$, i.e. all formal $R$-linear combinations of elements from $G$. Furthermore, if $R$ is a field, $R\lrrec{G}$ is an $R$-vector space with basis $G$, hence an $R$-algebra, which we refer to as the group algebra.
\end{definition}
    
\begin{remark}
		By construction, the $R$-vector space structure naturally makes $R\lrrec{G}$ an abelian group under addition. Furthermore, the group multiplication is extended by $R$-bilinearity to yield the necessary multiplicative monoid structure. Moreover, $R\lrrec{G}$ inherits the identity from $G$. By definition of a free module, the $R$-vector space dimension of $R\lrrec{G}$ is given by the order of $G$ and $R\lrrec{G}$ is commutative if and only if $G$ is abelian. Notably, the ring structure of $R\lrrec{G}$ allows us to apply general results on $R$-modules to $R\lrrec{G}$-modules. We will mainly work with 
		\begin{align}
			\CC\lrrec{G}=\lrbrace{\sum_{g\in G}\alpha_g g\,:\, \alpha_g\in\CC}.\,
		\end{align}
		Furthermore, we can think of the group algebra as the space of complex-valued functions on $G$, i.e.
		\begin{align}
			\CG \simeq \lrbrace{f \,:\, G\rightarrow \CC},\,
		\end{align}
		as $G$ forms a linearly independent basis of $\CG$.
\end{remark}
    
The coefficients $\alpha_g$ of an element of the center $Z\lrbracket{R\lrrec{G}}$ are class functions, i.e. constant on conjugacy classes of $G$. There exists a natural connection between group representations and group algebra modules. Let $\mathbb{K}$ be a field of characteristic zero.

	\begin{proposition}
		Any $\KK$-representation $\rho \,:\, G\rightarrow \GLV$ of $G$ yields a $\KG$-module structure on $V$,
		\begin{align}
		\begin{split}
			\cdot \,:\, \KG\times V &\rightarrow V\\
			\lrbracket{\sum_{g\in G}c_gg, v} &\mapsto \lrbracket{\sum_{g\in G}c_gg}\cdot v := \sum_{g\in G}c_g\rho(g)v.\,
		\end{split}
		\end{align}
		Conversely, every $\KG$-module $(V, +, \cdot)$ defines a $\KK$-representation
		\begin{align}
			\begin{split}
				\rho_V \,:\, G&\rightarrow \GLV\\
				g &\mapsto \rho_V(g)\,:\,V\rightarrow V, v\mapsto \rho_V(g)(v) := g\cdot v
			\end{split}
		\end{align}
		of $G$. 
\end{proposition}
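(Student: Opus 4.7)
The plan is to verify the claimed correspondence by checking the defining axioms in each direction, reducing every condition to either the homomorphism property of $\rho$ or the module axioms of $V$.

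For the forward direction, given a $\KK$-representation $\rho \,:\, G \rightarrow \GLV$, I would first observe that the formula
\begin{align*}
\lrbracket{\sum_{g\in G}c_g g}\cdot v := \sum_{g\in G}c_g\, \rho(g)\, v
\end{align*}
is well-defined because the sum is finite (as $G$ is finite or by the support of the element of $\KG$), each $\rho(g)\, v$ lies in $V$, and $V$ is a $\KK$-vector space. Checking the four $\KG$-module axioms is then routine: distributivity in the second argument follows from $\KK$-linearity of each $\rho(g)$; distributivity in the first argument is immediate from the definition; associativity $(\lambda\mu)\cdot v = \lambda\cdot(\mu\cdot v)$ reduces, by $\KK$-bilinearity, to checking on basis elements $g_1,g_2\in G$, where it becomes exactly $\rho(g_1 g_2)v = \rho(g_1)\rho(g_2)v$, the homomorphism property of $\rho$; and the identity axiom $1_{\KG}\cdot v = v$ is the statement $\rho(1_G) = \mathrm{Id}_V$, which is automatic since $\rho$ maps into $\GLV$ and is a homomorphism.

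For the converse direction, given a $\KG$-module $(V,+,\cdot)$, I would define $\rho_V(g)\,:\, V\rightarrow V$ by $\rho_V(g)(v) := g\cdot v$ and then verify three things. First, $\rho_V(g)$ is $\KK$-linear: additivity is the first module axiom, and $\KK$-scalar homogeneity uses that scalars $\lambda \in \KK$ embed into $\KG$ as $\lambda\cdot 1_G$ and commute with $g$ in $\KG$, so $g\cdot(\lambda v) = (g\lambda)\cdot v = (\lambda g)\cdot v = \lambda(g\cdot v)$. Second, $\rho_V$ is multiplicative: for $g,h\in G$ and $v\in V$,
\begin{align*}
\rho_V(gh)(v) = (gh)\cdot v = g\cdot(h\cdot v) = \rho_V(g)\rho_V(h)(v),
\end{align*}
by associativity of the module action. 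Third, and this is the one step where the group structure (not merely the monoid structure) enters crucially: each $\rho_V(g)$ is invertible, with inverse $\rho_V(g^{-1})$, as follows from $\rho_V(g)\rho_V(g^{-1}) = \rho_V(gg^{-1}) = \rho_V(1_G) = \mathrm{Id}_V$. This places $\rho_V$ into $\GLV$ as required.

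There is no genuine obstacle here; the content is entirely bookkeeping, and the only subtle point is the invertibility check which ensures the image of $\rho_V$ lies in $\GLV$ rather than merely in $\mathrm{End}_{\KK}(V)$. Although not explicitly asserted in the proposition, it is worth noting that the two constructions are mutually inverse: starting from $\rho$, extending to $\KG$, and then restricting back to $G \subset \KG$ recovers $\rho$, and starting from a module, forming $\rho_V$, and re-extending linearly recovers the original action on $\KG$. Thus the correspondence is bijective, upgrading the statement to an equivalence of categories when morphisms are added.
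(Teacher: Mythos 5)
Your proof is correct, and since the paper states this proposition without proof (treating it as standard background material from representation theory), there is no in-paper argument to compare against. Your verification of the module axioms in the forward direction, and of linearity, multiplicativity, and invertibility in the converse direction, is precisely the expected bookkeeping; the closing observations that invertibility is what forces the image into $\GLV$ rather than merely $\End{\KK}{V}$, and that the two constructions are mutually inverse, are also correct and are the reason such statements are elsewhere phrased as an equivalence of categories.
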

    
\begin{remark}
	Note that $\text{Aut}(V)\subseteq\End{}{V}$ denotes the group of invertible endomorphisms, i.e. automorphisms on $V$. Note that $\GLV\simeq \text{Aut}(V)$ and $\GLV\simeq \text{GL}_n(\KK)$ as groups if $\dim(V)=n$.
\end{remark}

\begin{remark}
	Note that every unitary ring $R$ is an $R$-module where the $R\times R\rightarrow R$ action is the multiplication in $R$ and similarly, any field $\mathbb{K}$ is a $\mathbb{K}$-vector space. In particular, the field $\KK$ itself becomes a $\mathbb{K}\lrrec{G}$-module via the $G$-action
			\begin{align}
			\begin{split}
				\cdot \;:\; G\times \KK \rightarrow \KK\\
				(g, \lambda) \mapsto g\cdot \lambda := \lambda
			\end{split}
			\end{align}
			extended by $\KK$-linearity to $\KG$; this is called the trivial $\KG$-module. Accordingly, we refer to $\KG$, viewed as a $\KG$-module via left multiplication, as the regular $\KG$-module and denote it by $\KG^{\circ}$.
\end{remark}
    
\begin{remark}
		Since $G$ is a group, the anti-automorphism
		\begin{align}
		\begin{split}
			\KG &\rightarrow \KG\\
			 g &\mapsto g^{-1}
		\end{split}
		\end{align}
		naturally makes any left $\KG$-module into a right $\KG$-module.
\end{remark}

\begin{remark}\label{rem:KG_balanced_product}
	For $N, M$ two $\KG$-modules, the $\KK$-balanced tensor product $N\otimes_{\KK}M$ is also a $\KG$-module via diagonal action of $G$, i.e.
		\begin{align}
			\begin{split}
				\cdot \, :\, G\times (N\otimes_{\KK}M) &\rightarrow N\otimes_{\KK}M \\
				(g, n\otimes m) &\mapsto g\cdot (n \otimes m):= gn\otimes gm
			\end{split}
		\end{align}
	extended to $\KG$ via $\KK$-linearity.
\end{remark}

We are especially interested in substructures of $R$-modules.

\begin{definition}
	A submodule of an $R$-module $M$ is a subgroup $N$ of $M$ that is stable under the action of $R$ on $M$, in the sense that if $x\in N$ and $\lambda\in R$ then $\lambda x\in N$. Equivalently, $N$ is an $R$-submodule of $M$, if and only if
	\begin{align}
		\lambda x+ \mu y\in N,\, \hspace{1cm} \forall x,y \in N, \forall \lambda, \mu \in R.\,
	\end{align} 
	Accordingly, a non-empty subset $\BS$ of an $R$-algebra $\Al$ is a subalgebra of $\Al$ if
	\begin{align}
		x-y\in \BS, xy\in \BS, \lambda x\in \BS,\, \hspace{1cm} \forall x,y\in \BS, \forall\lambda\in R.\,
	\end{align}
\end{definition}

\begin{remark}
	For any $R$-module $M$, $M$ and $\lrbrace{0}$ are always trivial submodules.
\end{remark}

A module is simple if and only if it contains no proper nonzero submodules, while a semisimple module is decomposable into a direct sum of simple modules. We extend the notion to algebras and say that an algebra is semisimple if all its finite-dimensional modules are semisimple.

\begin{remark}
	Simple modules correspond to irreducible representations and semisimple to completely reducible representations. For finite groups and fields of characteristic zero, we can always assume there to be a semisimple decomposition. 
\end{remark}

Next, we introduce structure-preserving maps, i.e. homomorphisms between modules. 

\begin{definition}
	If $M$ and $N$ are $R$-modules then a mapping $f\,:\, M\rightarrow N$ is called an $R$-homomorphism if
	\begin{enumerate}
		\item $f(x + y) = f(x) + f(y)$, \hspace{1cm} $\forall x,y\in M$,
		\item $f(\lambda x) = \lambda f(x)$, \hspace{1cm} $\forall x\in M$, $\forall\lambda\in R$.
	\end{enumerate}
	$M$ and $N$ are equivalent as $R$-modules, i.e. $R$-isomorphic, if there exists a bijective $R$-homomorphism between them. The set of $R$-homomorphisms from $M$ to $N$ forms an abelian group under addition and is denoted by $\Hom{R}{M}{N}$. Furthermore, for $M=N$, the set $\End{R}{M}:=\Hom{R}{M}{M}$ admits the structure of an algebra under composition. Moreover, $\End{R}{M}$ is the centralizer algebra of the $R$-action on $M$, i.e. endomorphisms on $M$ commuting with the $R$-action.
\end{definition}

\begin{remark}
	In representation theory, $G$-equivariant maps constitute an analogous concept to $\KG$-homomorphisms. 
\end{remark}

While in general, $\Hom{R}{M}{N}$ does not form an $R$-module, restricting to $\KG$ allows for the following proposition.

\begin{proposition}
	Let $M, N$ be $\KG$-modules. Then, the abelian group $\Hom{\KK}{M}{N}$ is a $\KG$-module via the conjugate $G$-action
	\begin{align}
		\begin{split}
			\cdot \,:\, G \times \Hom{\KK}{M}{N}&\rightarrow \Hom{\KK}{M}{N}\\
			(g, f) &\mapsto g\cdot f: M \rightarrow N, m \mapsto (g\cdot f)(m):= g\cdot f(g^{-1}\cdot m) 
		\end{split}
	\end{align}
	extended to $\KG$ by $\KK$-linearity.
\end{proposition}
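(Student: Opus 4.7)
The plan is to verify the four $\KG$-module axioms directly, after first checking that the prescribed action actually lands in $\Hom_{\KK}(M,N)$ (i.e.\ that $g\cdot f$ is indeed $\KK$-linear). The whole argument is bookkeeping that uses only: (a) $\KK$-linearity of the $G$-action on $M$ and $N$, which follows from them being $\KG$-modules; (b) the group axioms; and (c) bilinearity of the $\KG$-action to extend from $G$ to $\KG$. There is no genuine obstacle; the only care required is to keep track of the inverse on the left factor so that the composition axiom comes out right.

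First, I would show that $g\cdot f\in \Hom_{\KK}(M,N)$ for every $g\in G$ and $f\in \Hom_{\KK}(M,N)$. For $m_1,m_2\in M$ and $\lambda,\mu\in\KK$,
\begin{align*}
(g\cdot f)(\lambda m_1+\mu m_2) &= g\cdot f\bigl(g^{-1}\cdot(\lambda m_1+\mu m_2)\bigr)\\
&= g\cdot f\bigl(\lambda(g^{-1}\cdot m_1)+\mu(g^{-1}\cdot m_2)\bigr)\\
&= \lambda\,g\cdot f(g^{-1}\cdot m_1)+\mu\,g\cdot f(g^{-1}\cdot m_2)\\
&= \lambda(g\cdot f)(m_1)+\mu(g\cdot f)(m_2),
\end{align*}
where the middle equalities use $\KK$-linearity of the $G$-actions on $M$ and $N$ together with $\KK$-linearity of $f$.

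Next, I would verify the two group-action axioms on $G$. For the identity, $(1_G\cdot f)(m)=1_G\cdot f(1_G^{-1}\cdot m)=f(m)$, hence $1_G\cdot f=f$. For associativity, take $g,h\in G$; then
\begin{align*}
\bigl(g\cdot(h\cdot f)\bigr)(m) &= g\cdot(h\cdot f)(g^{-1}\cdot m)
= g\cdot\bigl(h\cdot f(h^{-1}g^{-1}\cdot m)\bigr)\\
&= (gh)\cdot f\bigl((gh)^{-1}\cdot m\bigr) = \bigl((gh)\cdot f\bigr)(m),
\end{align*}
using the module axiom $g\cdot(h\cdot x)=(gh)\cdot x$ on $N$ and the identity $h^{-1}g^{-1}=(gh)^{-1}$ in $G$. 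This shows that $G$ acts on $\Hom_{\KK}(M,N)$ on the left.

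Finally, I would promote this $G$-action to a $\KG$-module structure by extending $\KK$-bilinearly. Concretely, for $\sum_{g}c_g g\in\KG$ and $f\in\Hom_{\KK}(M,N)$ set $\bigl(\sum_g c_g g\bigr)\cdot f := \sum_g c_g\,(g\cdot f)$, which is well-defined since $G$ is a $\KK$-basis of $\KG$. The distributivity axioms $(a+b)\cdot f=a\cdot f+b\cdot f$ and $a\cdot(f_1+f_2)=a\cdot f_1+a\cdot f_2$ for $a,b\in\KG$ then follow from the corresponding identities for the pointwise addition in $\Hom_{\KK}(M,N)$, and compatibility $(ab)\cdot f = a\cdot(b\cdot f)$ reduces by $\KK$-bilinearity to the $G$-associativity established above. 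The scalar axiom $1_{\KG}\cdot f=f$ is the identity statement proved above. This exhausts the module axioms, completing the proof.
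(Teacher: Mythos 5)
Your proof is correct. The paper states this proposition without proof, presenting it as standard background material from the cited references (Fulton--Harris, Blyth, Etingof, etc.), so there is no paper argument to compare against; a direct verification of the $\KG$-module axioms, exactly as you carry out, is the canonical way to establish it. Your bookkeeping is sound: the inverse on the inner factor is what makes $g\cdot(h\cdot f)=(gh)\cdot f$ come out right, and you correctly invoke the $\KK$-linearity of the $g$-action on $M$ and $N$ (which indeed follows from $\KK$ embedding centrally in $\KG$) both for membership in $\Hom_{\KK}(M,N)$ and implicitly when extending to $\KG$. One small step you leave tacit is that each $g\in G$ acts $\KK$-linearly \emph{in $f$} (i.e.\ $g\cdot(\lambda f_1+\mu f_2)=\lambda\,g\cdot f_1+\mu\,g\cdot f_2$), which is needed to make the reduction of $(ab)\cdot f = a\cdot(b\cdot f)$ to $G$-associativity precise; it follows by the same appeal to $\KK$-linearity of the $G$-action on $N$ and is routine, so this is a cosmetic rather than substantive gap.
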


\begin{remark}
    Note that $\Hom{\KG}{M}{N}$ is in general not a $\KG$-module.
\end{remark}

\begin{remark}
	For a $\KG$-module $M$ we can impose a  $\KG$-module structure on its $\KK$-dual $M^*:=\Hom{\KK}{M}{\KK}$ via
	\begin{align}
		\begin{split}
			\cdot \,:\, G\times M^* &\rightarrow M^*\\
			(g, f) &\mapsto g\cdot f \,:\, M\rightarrow \KK,\, m\mapsto(g\cdot f)(m) := f(g^{-1} \cdot m)
		\end{split}
	\end{align}
	extended to $\KG$ by $\KK$-linearity. Furthermore, for another $\KG$-module $N$, $\Hom{\KK}{M}{N}\simeq M^*\otimes_{\KK} N$.
\end{remark}

Importantly, Schur's lemma characterizes $R$-homomorphisms. 

\begin{proposition}[Schur \cite{blyth2018module}]\label{prop:Schur}
	Let $M$, $N$ be simple modules over ring $R$. Then, any homomorphism $f: M\rightarrow N$ of $R$-modules is either invertible or zero. Furthermore, the ring $\End{R}{M}$ of $R$-morphisms is a division ring.
\end{proposition}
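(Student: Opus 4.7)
The plan is to prove both parts by exploiting the fact that kernels and images of $R$-homomorphisms between modules are themselves submodules, and then invoking simplicity to force these submodules to be trivial.

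First, let $f \colon M \to N$ be any $R$-homomorphism of simple $R$-modules. I would begin by verifying that $\ker(f) \subseteq M$ is an $R$-submodule: closure under addition follows from $f$ being a group homomorphism, while closure under the $R$-action follows from $f(\lambda x) = \lambda f(x) = \lambda \cdot 0 = 0$ for $x \in \ker(f)$. Similarly, $\operatorname{im}(f) \subseteq N$ is an $R$-submodule. Since $M$ is simple, $\ker(f) \in \{0, M\}$; since $N$ is simple, $\operatorname{im}(f) \in \{0, N\}$. If $f \neq 0$, then $\ker(f) \neq M$ and $\operatorname{im}(f) \neq 0$, forcing $\ker(f) = 0$ and $\operatorname{im}(f) = N$. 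Hence $f$ is both injective and surjective, i.e., bijective. The set-theoretic inverse $f^{-1} \colon N \to M$ is then automatically an $R$-homomorphism: for $n_1, n_2 \in N$ and $\lambda \in R$, writing $n_i = f(m_i)$ and applying $f$ to $\lambda f^{-1}(n_1) + f^{-1}(n_2)$ recovers $\lambda n_1 + n_2$, so by injectivity $f^{-1}(\lambda n_1 + n_2) = \lambda f^{-1}(n_1) + f^{-1}(n_2)$.

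Next, for the second statement, I would specialize to $N = M$ so that $\operatorname{End}_R(M) = \operatorname{Hom}_R(M,M)$ is a unital ring under pointwise addition and composition, with the identity map serving as the multiplicative unit. By the first part, every nonzero $f \in \operatorname{End}_R(M)$ is a bijective $R$-homomorphism and therefore admits an inverse $f^{-1} \in \operatorname{End}_R(M)$ with $f \circ f^{-1} = f^{-1} \circ f = \mathrm{id}_M$. Thus every nonzero element of $\operatorname{End}_R(M)$ is a unit, which is exactly the defining property of a division ring.

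The argument is essentially entirely formal and I do not anticipate a genuine obstacle; the only subtlety worth flagging explicitly is the verification that $f^{-1}$ is $R$-linear (a step occasionally glossed over), which is where bijectivity of $f$ is used in an essential way rather than merely surjectivity. Everything else reduces to the standard observation that submodule lattices of simple modules are trivial.
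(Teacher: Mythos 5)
Your proof is correct and is the textbook argument for Schur's lemma: kernels and images are submodules, simplicity forces them to be trivial or everything, nonzero maps are therefore bijective with $R$-linear inverses, and specializing $N = M$ gives that every nonzero endomorphism is a unit. The paper does not actually supply a proof here — it cites the result to \cite{blyth2018module} and moves on — so there is no alternative route in the paper to compare against; your self-contained argument (including the explicit check that $f^{-1}$ is $R$-linear, which is indeed the step most often elided) is a clean filling-in of the citation.
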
	

\begin{remark}
	Since all division rings are simple, $\End{R}{M}$ is not decomposable. 
\end{remark}

We can further specify Schur's lemma for the group algebra over an adequate field. 

\begin{proposition}[Schur's theorem for the group algebra $\KG$]
	\begin{enumerate}
		\item Let $V, W$ be simple $\mathbb{K}\lrrec{G}$-modules. Then, 
		\begin{enumerate} 
		\item Any homomorphism of $\KG$-modules $\phi\,:\, V\rightarrow V$ is either zero or invertible. Equivalently, $\End{\KG}{V}$ is a division ring. 
		\item If $V\ncong W$, then $\Hom{\KG}{V}{W}=0$.
		\end{enumerate}
	\item If $\mathbb{K}$ is an algebraically closed field and $V$ is a simple $\KG$-module, then \begin{align}
		\End{\KG}{V}=\lrbrace{\lambda \text{Id}_V \,:\, \lambda\in\KK}\simeq \KK.\,
	\end{align}
	\end{enumerate}
\end{proposition}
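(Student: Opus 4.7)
The plan is to deduce parts (1a) and (1b) directly from the general Schur lemma (Proposition \textit{Schur}) stated above for modules over an arbitrary ring, and then handle part (2) by a finite-dimensional eigenvalue argument that exploits the algebraic closure of $\KK$.

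For part (1), I would first observe that a simple $\KG$-module is, by definition, a simple module over the ring $R = \KG$, so the general Schur lemma applies verbatim. This immediately yields (1a): any $\KG$-module homomorphism $\phi \colon V \to V$ is either zero or invertible, so $\End_{\KG}(V)$ is a division ring. For (1b), given simple $\KG$-modules $V, W$ with $V \ncong W$, any $\phi \in \Hom_{\KG}(V,W)$ is again zero or an isomorphism by the general Schur lemma; since an isomorphism would contradict $V \ncong W$, the only possibility is $\phi = 0$, hence $\Hom_{\KG}(V,W) = 0$.

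For part (2), under the standing assumption that $V$ is finite-dimensional over the algebraically closed field $\KK$, I would take any $\phi \in \End_{\KG}(V)$ and view it as a $\KK$-linear endomorphism of the finite-dimensional $\KK$-vector space $V$. Since $\KK$ is algebraically closed, the characteristic polynomial of $\phi$ splits, so $\phi$ admits at least one eigenvalue $\lambda \in \KK$. Consider $\psi := \phi - \lambda \, \mathrm{Id}_V$. Because $\lambda \, \mathrm{Id}_V$ commutes with the $\KG$-action, $\psi$ is again a $\KG$-homomorphism, i.e., $\psi \in \End_{\KG}(V)$. By construction, the $\lambda$-eigenspace of $\phi$ lies in $\ker \psi$, so $\ker \psi \neq 0$. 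Applying part (1a), $\psi$ is either zero or invertible; having nontrivial kernel, it cannot be invertible, and hence $\psi = 0$. This gives $\phi = \lambda \, \mathrm{Id}_V$, proving the inclusion $\End_{\KG}(V) \subseteq \{\lambda \, \mathrm{Id}_V : \lambda \in \KK\}$. The reverse inclusion is immediate, and the assignment $\lambda \mapsto \lambda \, \mathrm{Id}_V$ is a $\KK$-algebra isomorphism $\KK \xrightarrow{\sim} \End_{\KG}(V)$.

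There is no real obstacle here; the only delicate point is the implicit use of finite-dimensionality of $V$ in part (2), which is needed for $\phi$ to have an eigenvalue in $\KK$. In the context of this paper, all representations of finite groups $G$ under consideration are finite-dimensional (as they arise as sub- and quotient modules of $\HSn$ with $\dim_\CC \HS < \infty$), so this assumption is harmless and the eigenvalue argument applies.
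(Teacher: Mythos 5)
Your proof is correct and follows the standard route: parts (1a) and (1b) are immediate specializations of the general Schur lemma (Prop.~\ref{prop:Schur}) to the ring $R = \KG$, and part (2) uses the eigenvalue argument over an algebraically closed field. The paper itself states this proposition without proof as textbook background material, so there is no competing argument to compare against; your proof fills that in correctly.

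One small refinement to your closing remark about finite-dimensionality: you justify it by appeal to the particular modules occurring later in the paper (subquotients of $\HSn$), but in fact finite-dimensionality is automatic for \emph{any} simple $\KG$-module once $G$ is finite. Indeed $\KG$ is a finite-dimensional $\KK$-algebra of dimension $\lrvert{G}$, and a simple module is cyclic, hence a quotient of the regular module $\KG^{\circ}$, hence of dimension at most $\lrvert{G}$. So the hypothesis needed for the eigenvalue argument is already built into the setting of the proposition and does not depend on which modules the paper later chooses to study.
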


\begin{remark}
	Significantly, by the fundamental theorem of algebra, $\CC$ is algebraically closed.
\end{remark}

For a ring $R$, we collect the isomorphism classes of simple $R$-modules in the set
\begin{align}
	\Irr{R}:=\lrbrace{\text{isomorphism classes of simple } R\text{-modules}}.\,
\end{align}

\begin{definition}
	For $M$ a semisimple $R$-module and $S$ a simple $R$-module, the $S$-homogeneous component of $M$, denoted $S(M)$, is the sum of all simple $R$-submodules of $M$ isomorphic to $S$. 
\end{definition}

\begin{proposition}[Wedderburn decomposition (cf.\ Artin--Wedderburn)]\label{prop:wedderburn_little}
	Let $R$ be a semisimple ring, then the following assertions hold. 
	\begin{enumerate}
		\item If $S\in\Irr{R}$, then $S\lrbracket{R^{\circ}}\neq 0$. Furthermore, $\lrvert{\Irr{R}}<\infty$.
		\item We have \begin{align}
			R^{\circ} = \bigoplus_{S\in\Irr{R}}S\lrbracket{R^{\circ}},\,
		\end{align}
		where each homogeneous component $S\lrbracket{R^{\circ}}$ is a two-sided ideal of $R$ and $S(R^{\circ})T(R^{\circ})=0$ if $S\neq T\in \Irr{R}$.
		\item Each $S\lrbracket{R^{\circ}}$ is a simple left Artinian ring, the identity element of which is an idempotent element of $R$ lying in $Z(R)$.
	\end{enumerate}
\end{proposition}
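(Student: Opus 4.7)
\medskip

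\noindent\emph{Proof proposal.} Since $R$ is semisimple by hypothesis, the regular module $R^{\circ}$ admits a decomposition into simple left submodules, and the isotypic components $S(R^{\circ})$ are well-defined. My plan is to argue the three assertions in order, using only Schur's lemma \autoref{prop:Schur}, the definition of semisimplicity, and the fact that $1\in R$.

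\emph{Step 1 (existence and finiteness).} Given $S\in\Irr{R}$, fix $0\neq s\in S$. The map $R^{\circ}\to S$, $r\mapsto rs$, is a surjective $R$-homomorphism (its image is a nonzero submodule of the simple module $S$). Semisimplicity of $R^{\circ}$ lets me split this surjection, so $S$ is isomorphic to a direct summand of $R^{\circ}$; hence $S(R^{\circ})\neq 0$. For finiteness, decompose $R^{\circ}=\bigoplus_{i\in I}T_i$ into simple summands and write $1=\sum_{i\in J}t_i$ with $J\subseteq I$ finite. Since each $T_i$ is a left ideal, for every $r\in R$ we get $r=r\cdot 1=\sum_{i\in J}r t_i\in\sum_{i\in J}T_i$, so $R^{\circ}=\sum_{i\in J}T_i$ is already a finite sum of simple modules, producing only finitely many isomorphism classes.

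\emph{Step 2 (homogeneous decomposition and two-sided ideal structure).} Grouping the simple summands of $R^{\circ}$ by isomorphism class yields $R^{\circ}=\bigoplus_{S\in\Irr{R}}S(R^{\circ})$; each $S(R^{\circ})$ is evidently a left ideal. To see it is a right ideal, fix $r\in R$ and consider $\rho_r\colon R^{\circ}\to R^{\circ},\ x\mapsto xr$. This is a left $R$-module homomorphism since $(ax)r=a(xr)$, and any such homomorphism sends each simple submodule either to $0$ or to an isomorphic copy. Hence $\rho_r$ stabilizes the $S$-isotypic part, giving $S(R^{\circ})r\subseteq S(R^{\circ})$. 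For the orthogonality, any product $ab$ with $a\in S(R^{\circ})$, $b\in T(R^{\circ})$ lies in both $S(R^{\circ})$ (right ideal) and $T(R^{\circ})$ (left ideal), so $ab\in S(R^{\circ})\cap T(R^{\circ})=\{0\}$ whenever $S\not\cong T$.

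\emph{Step 3 (central idempotents and ring simplicity).} Decompose $1=\sum_{S}e_S$ with $e_S\in S(R^{\circ})$. For any $x\in S(R^{\circ})$, $x=x\cdot 1=\sum_T x e_T$; the off-diagonal terms vanish by the orthogonality from Step 2, so $x=xe_S$, and symmetrically $x=e_Sx$, showing $e_S$ is a two-sided identity for the ring $S(R^{\circ})$ and that $e_S^2=e_S$. Centrality in $R$ follows from $re_S=\bigl(\sum_T e_T r\bigr)e_S=e_S r e_S=e_S\bigl(\sum_T re_T\bigr)=e_Sr$, using orthogonality again. For ring-simplicity of $S(R^{\circ})$, let $I\trianglelefteq S(R^{\circ})$ be a nonzero two-sided ideal; I claim $I$ is in fact a two-sided $R$-ideal. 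Indeed, for $r\in R$ and $x\in I$, write $rx=\sum_T (e_T r)x$; the terms with $T\not\cong S$ lie in $T(R^{\circ})S(R^{\circ})=0$, so $rx=(e_Sr)x\in S(R^{\circ})\cdot I\subseteq I$, and similarly from the right. Hence $I$ is a left $R$-submodule of $S(R^{\circ})$; being a direct summand (by semisimplicity), its complement in $S(R^{\circ})$ is also an $R$-submodule, hence absorbs $e_S$ either into $I$ or its complement. Since $e_S$ is the identity of $S(R^{\circ})$, the only way $I$ can fail to be all of $S(R^{\circ})$ is $I=0$. Finally, $S(R^{\circ})$ is left Artinian because it is a finite direct sum of simple left $R$-modules (all isomorphic to $S$), so descending chains of $R$-submodules -- and a fortiori of left ideals of $S(R^{\circ})$ -- must stabilize.

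\emph{Main obstacle.} The most delicate point is the ring-simplicity of $S(R^{\circ})$ in Step 3: one has to upgrade an arbitrary two-sided ideal of the subring $S(R^{\circ})$ to a two-sided $R$-ideal before invoking semisimplicity. The trick is to absorb $R$-multiplication through the central idempotent $e_S$ together with the component-orthogonality of Step 2; this is where the careful bookkeeping pays off.
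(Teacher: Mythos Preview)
The paper does not supply a proof of this proposition; it is stated as the classical Wedderburn theorem in an appendix that is explicitly billed as a review of standard material with references to \cite{fulton2013representation, blyth2018module, etingof2009introduction}. So there is no authors' proof to compare against, and your argument has to stand on its own.

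Steps 1 and 2, and the construction of the central idempotents $e_S$ in Step~3, are correct and standard. The left-Artinian claim at the end is also fine, since left ideals of $S(R^{\circ})$ coincide with left $R$-submodules (the other components annihilate $S(R^{\circ})$).

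The genuine gap is in your ring-simplicity argument. After writing $S(R^{\circ})=I\oplus J$ as left $R$-modules you assert that this ``absorbs $e_S$ either into $I$ or its complement.'' That does not follow: all you have is $e_S=e_I+e_J$ with $e_I\in I$, $e_J\in J$, and nothing forces one summand to vanish. The complement $J$ is only a left $R$-submodule, not a two-sided ideal, so you cannot replay the orthogonality from Step~2 to kill cross-terms like $e_J e_I$. Your ``main obstacle'' paragraph correctly flags this step as the delicate one, but the bookkeeping you describe does not close it.

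The missing idea is that $\End_R(R^{\circ})\cong R^{\mathrm{op}}$: every left $R$-endomorphism of $R^{\circ}$ is right multiplication by some element of $R$. Once $I$ is known to be a two-sided $R$-ideal (which you did show), pick a simple submodule $S_1\subseteq I$; for any other simple $S_2\subseteq S(R^{\circ})$ extend an isomorphism $S_1\to S_2$ to an $R$-endomorphism of $R^{\circ}$ by killing a complement of $S_1$. This endomorphism is $x\mapsto xr$ for some $r\in R$, so $S_2=S_1 r\subseteq I$. Hence $I$ swallows every simple summand of $S(R^{\circ})$ and equals $S(R^{\circ})$. Alternatively, one first identifies $S(R^{\circ})\cong M_n(D)$ with $D=\End_R(S)$ a division ring (Schur), and quotes simplicity of matrix rings over division rings.
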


The center of $R$ is spanned by the primitive central idempotents, each projecting onto an isotypical component of $R^{\circ}$. Characters encode how minimal central idempotents are constructed as linear combinations of group elements.
\begin{remark}
Clearly, for $x\in\KG$, the image of $x$ in $\End{}{V}$ lies in $\End{\KG}{V}$ if $x\in Z\lrbracket{\KG}$.
\end{remark}

We will also be needing the Artin--Wedderburn theorem for algebras over a field.

\begin{proposition}[Corollary to the Wedderburn and Artin--Wedderburn theorems]\label{prop:wederburn_artin_wederburn}
Let $\Al$ be a semisimple, split\footnote{In our considerations, $\mathbb{K}$ is an algebraically closed field.} $\KK$-algebra and let $S\in\text{Irr}\lrbracket{\Al}$ be a simple $\Al$-module. Then,
\begin{enumerate}
	\item For the regular (and semisimple) $\Al$-module $\Al^{\circ}$ the $S$-homogeneous component of $\Al^{\circ}$ is isomorphic to a matrix algebra. Concretely,  $S\lrbracket{\Al^{\circ}}\simeq \KK^{n_S\times n_{S}}$ with $\dim_{\KK}\lrbracket{S\lrbracket{\Al^{\circ}}}=n_S^2$;
	\item $\dim_{\KK}(S)=n_S$;
	\item $\dim_{\KK}\lrbracket{\Al}=\sum_{S\in \Irr{\Al}}\dim_{\KK}(S)^2$;
	\item $\lrvert{\Irr{\Al}}=\dim_{\KK}\lrbracket{Z(\Al)}$.
\end{enumerate}
\end{proposition}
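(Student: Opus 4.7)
The plan is to deduce each assertion from the Artin--Wedderburn structure theorem combined with the splitness hypothesis, then read off the consequences for the regular module $\Al^{\circ}$.

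First, I would invoke Artin--Wedderburn: since $\Al$ is semisimple, there is an isomorphism of $\KK$-algebras $\Al \simeq \bigoplus_{i=1}^{r} M_{n_i}(D_i)$, where each $D_i$ is a division ring finite-dimensional over $\KK$. The splitness hypothesis (equivalently, $\KK$ being a splitting field for $\Al$) is exactly the statement that each endomorphism ring $\End{\Al}{S_i}$ associated to a simple module $S_i$ reduces to $\KK$ by Schur's lemma in its algebraically-closed form; this forces $D_i \simeq \KK$ and so $\Al \simeq \bigoplus_{i=1}^{r} M_{n_i}(\KK)$. The simple modules of $M_{n_i}(\KK)$ are standard: up to isomorphism there is a unique simple left module, namely $\KK^{n_i}$ acting by matrix multiplication, of $\KK$-dimension $n_i$. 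By independence of the factors, this yields a bijection between $\Irr{\Al}$ and the summands, with $\dim_{\KK}(S_i) = n_i =: n_{S_i}$, establishing (2).

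Next, for assertion (1), I would analyze the regular module $\Al^{\circ}$ via the decomposition above. Each block $M_{n_i}(\KK)$ viewed as a left module over itself decomposes as a direct sum of $n_i$ copies of its column module $\KK^{n_i} \simeq S_i$ (split the matrix into its columns). Hence, as an $\Al$-module,
\begin{align*}
\Al^{\circ} \;\simeq\; \bigoplus_{i=1}^r S_i^{\oplus n_i},
\end{align*}
so the $S_i$-homogeneous component $S_i(\Al^{\circ})$ is exactly the summand $M_{n_i}(\KK)$, which is $\simeq \KK^{n_{S_i}\times n_{S_i}}$ and has $\KK$-dimension $n_{S_i}^{2}$. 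Assertion (3) follows immediately by summing dimensions across the blocks:
\begin{align*}
\dim_{\KK}(\Al) \;=\; \sum_{i=1}^{r} n_i^{2} \;=\; \sum_{S\in \Irr{\Al}} \dim_{\KK}(S)^{2}.
\end{align*}

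Finally, for (4), I would compute the center blockwise. The center of $M_{n_i}(\KK)$ consists of scalar matrices and is thus $1$-dimensional over $\KK$, so $Z(\Al) \simeq \KK^{r}$ and $\dim_{\KK}(Z(\Al)) = r = |\Irr{\Al}|$. The main subtlety, and the only place where one must be careful, is verifying that the splitness hypothesis really does kill the division-ring factors $D_i$ — this is the step that upgrades the generic Artin--Wedderburn decomposition to the clean matrix-algebra form over $\KK$, and without it assertions (1), (2), and (3) must be replaced by statements involving $\dim_{\KK}(D_i)$. Once this reduction is in place, the remaining arguments are direct dimension counts and an application of the standard classification of modules over matrix algebras.
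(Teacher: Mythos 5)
Your proposal is correct and follows precisely the route the paper intends: the paper labels this a corollary to Wedderburn/Artin--Wedderburn and provides no independent proof, and your derivation via the block decomposition $\Al \simeq \bigoplus_i M_{n_i}(\KK)$ (using splitness to collapse the division rings to $\KK$), followed by the column decomposition of each block and the blockwise center computation, is exactly the standard argument being invoked.
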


Importantly, we thus have the unique isotypic decomposition 
\begin{align}
	R^{\circ} = \bigoplus_{S\in\Irr{R}}S\lrbracket{R^{\circ}}\simeq \bigoplus_{S\in\Irr{R}} \KK^{n_S\times n_S}
\end{align}
into isotypical components $\lrbrace{S\lrbracket{R^{\circ}}}_{S\in\Irr{R}}$.

\begin{remark}
	Notably, while this decomposition into isotypical components is unique, i.e. each component is unique, the decomposition of each isotypical component into isomorphic simple $R$-modules is not unique. Furthermore, other semisimple decompositions are basis dependent. 
\end{remark}

Importantly, a submodule of $R^{\circ}$ is simple if and only if it is a minimal left ideal. Minimal ideals are generated from minimal (or primitive) idempotents, i.e. those that cannot be decomposed into a sum of two nonzero, orthogonal idempotents.

\begin{proposition}[\cite{blyth2018module}, Theorem 6.13]
	An endomorphism $f\in\End{R}{M}$ is an idempotent, if and only if it is a projection.
\end{proposition}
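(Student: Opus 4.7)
The plan is to prove the two directions of the equivalence by directly exhibiting the relevant direct-sum decomposition of $M$ and then verifying compatibility with the $R$-action. Throughout, I will use the standard definition of a projection: an endomorphism $f \in \End{R}{M}$ is called a projection if there exists a direct-sum decomposition $M = N \oplus K$ of $R$-submodules such that $f(n + k) = n$ for all $n \in N$ and $k \in K$.

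For the forward direction ($f^2 = f \Rightarrow f$ is a projection), I would set $N := \operatorname{Im}(f)$ and $K := \ker(f)$, both of which are $R$-submodules of $M$ since $f$ is an $R$-homomorphism. The candidate decomposition is $M = N \oplus K$. To prove existence of such a decomposition, for any $m \in M$ I would write $m = f(m) + (m - f(m))$, noting that $f(m) \in N$ by construction and that $f(m - f(m)) = f(m) - f^{2}(m) = f(m) - f(m) = 0$ by idempotency, so $m - f(m) \in K$. For uniqueness of the decomposition, suppose $n \in N \cap K$; then $n = f(x)$ for some $x \in M$, and applying $f$ gives $n = f(x) = f^{2}(x) = f(n) = 0$ since $n \in K$. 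Thus $M = N \oplus K$ as $R$-modules, and by construction $f$ acts as the identity on $N$ and as zero on $K$, i.e.\ $f$ is the projection onto $N$ along $K$.

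For the reverse direction (projection $\Rightarrow$ idempotent), I would take any $m \in M$ with the unique decomposition $m = n + k$, $n \in N$, $k \in K$, and compute $f^{2}(m) = f(f(m)) = f(n) = f(n + 0) = n = f(m)$, where I used that $n$ itself has trivial decomposition $n = n + 0$ within $N \oplus K$. Hence $f^{2} = f$ on all of $M$.

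I do not expect any serious obstacle here: the argument is a routine application of the universal property of direct sums together with the definition of idempotency, and the $R$-linearity of $f$ is used only to ensure that $\operatorname{Im}(f)$ and $\ker(f)$ are genuine $R$-submodules rather than mere abelian subgroups. The only subtle point worth spelling out carefully is that the forward direction implicitly also proves that an idempotent $R$-endomorphism automatically yields an internal direct-sum decomposition of $M$ into $R$-submodules, which is the module-theoretic analogue of the corresponding statement for vector spaces.
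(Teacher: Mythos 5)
The paper does not actually prove this proposition; it simply cites it as Theorem 6.13 of Blyth's \emph{Module Theory} \cite{blyth2018module}, so there is no in-paper proof to compare against. Your argument is the standard textbook proof: taking $N = \operatorname{Im}(f)$ and $K = \ker(f)$, using idempotency for both the existence of the decomposition $m = f(m) + (m - f(m))$ and the triviality of $N \cap K$, and verifying $f|_N = \mathrm{id}_N$, $f|_K = 0$; the converse is immediate. It is correct, complete, and matches what a module-theory text would present, including the correct observation that $R$-linearity of $f$ is needed precisely to ensure $N$ and $K$ are $R$-submodules rather than merely abelian subgroups.
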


There is a strong connection between projections and submodules.

\begin{proposition}[\cite{blyth2018module}, Theorem 6.14]
\label{prop:connection_projections_and_submodules}
An $R$-module $M$ is the direct sum of submodules $M_1,\ldots, M_n$ if and only if there are non-zero $R$-morphisms $p_1,\ldots,p_n\in\End{R}{M}$ such that
	\begin{itemize}
		\item $\sum_{i=1}^np_i=\mathcal{I}_M,\,$
		\item $p_i\circ p_j = 0$ for $i\neq j$.
	\end{itemize}   
\end{proposition}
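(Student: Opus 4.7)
The plan is to prove both implications of this standard characterization separately and to isolate the idempotency of the maps $p_i$ as the key structural fact.

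For the forward direction, suppose $M=M_1\oplus\cdots\oplus M_n$. Every $x\in M$ then has a unique decomposition $x=x_1+\cdots+x_n$ with $x_i\in M_i$, and I would define $p_i(x):=x_i$. The $R$-linearity of $p_i$ follows because the direct sum decomposition is itself $R$-stable: if $x=\sum_j x_j$ and $y=\sum_j y_j$ then uniqueness forces $\lambda x+y=\sum_j(\lambda x_j+y_j)$, so $p_i(\lambda x+y)=\lambda x_i+y_i=\lambda p_i(x)+p_i(y)$. Each $p_i$ is non-zero because $M_i$ itself is a non-trivial summand. The equality $\sum_i p_i=\mathcal{I}_M$ is the definition of the decomposition, and for $i\neq j$ any $x\in M$ satisfies $p_j(x)\in M_j$ while $p_i$ vanishes on $M_j$, so $p_i\circ p_j=0$.

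For the backward direction, given morphisms $p_1,\dots,p_n$ with the two stated properties, I would first extract the crucial consequence that each $p_i$ is idempotent. Indeed,
\begin{align*}
p_j=p_j\circ\mathcal{I}_M=p_j\circ\Bigl(\sum_{i=1}^n p_i\Bigr)=\sum_{i=1}^n p_j\circ p_i=p_j\circ p_j,
\end{align*}
using $p_j\circ p_i=0$ for $i\neq j$. Set $M_i:=\mathrm{Im}(p_i)$; these are $R$-submodules because $p_i$ is an $R$-morphism. The identity $\sum_i p_i=\mathcal{I}_M$ gives $x=\sum_i p_i(x)\in\sum_i M_i$ for every $x\in M$, so $M=\sum_i M_i$.

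The remaining and most delicate point is directness of the sum, for which the idempotency of the $p_i$ is indispensable. Suppose $\sum_{i=1}^n m_i=0$ with $m_i\in M_i$, and write $m_i=p_i(y_i)$. Applying $p_j$ and using orthogonality together with idempotency yields
\begin{align*}
0=p_j\Bigl(\sum_{i=1}^n m_i\Bigr)=\sum_{i=1}^n p_j\circ p_i(y_i)=p_j\circ p_j(y_j)=p_j(y_j)=m_j,
\end{align*}
so every $m_j=0$. Hence $M=M_1\oplus\cdots\oplus M_n$. The main obstacle is not computational but conceptual: one must notice that idempotency is not assumed but is forced by the two listed properties, and this is precisely what is needed both to identify $M_i=\mathrm{Im}(p_i)$ with the fixed subspace of $p_i$ and to disentangle summands via selective application of the $p_j$.
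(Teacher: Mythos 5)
Your proof is correct and is the standard argument for this module-theoretic characterization; the paper itself does not supply a proof but merely cites Blyth, so there is no in-paper argument to compare against. The one point worth making explicit is that the ``non-zero'' qualifier on the $p_i$ forces the summands $M_i$ to be non-zero (and conversely), which you handle implicitly by noting each $M_i$ is a non-trivial summand; your derivation of idempotency from the two listed properties and the subsequent use of it for directness is exactly the crux of the result.
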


\begin{remark}
	While at the moment, the existence of primitive idempotents for each simple $R$-module follows from its minimal ideal structure, in the case of $R=\CSn$, we will later use bijections to Young tableaux to construct them explicitly.
\end{remark}

Character theory allows us to formalize some aspects of the decomposition of $\CG$ as a $\CG$-module. Concretely, by the Wedderburn decomposition (\autoref{prop:wedderburn_little}), the homogeneous components are the images of idempotents from $Z\lrbracket{\CG}$. We can identify the space of class functions on $G$ with $Z(\CG)$ as vector spaces. The characters form a basis of the space of class functions on $G$. 
Thus, we can construct elements from $Z(\CG)$ via simple (irreducible) characters. The characters tell us how group elements need to be combined to yield idempotents as these must yield class functions when evaluated at group elements. Furthermore, each minimal central idempotent is determined by a simple character.

\begin{proposition}
    Let $V^{(i)}$ be a simple $\CG$-module in $\CG$ and let $\chi^{(i)}$ denote the corresponding character. Then, 
         \begin{align}\label{eqn:idempotents_characters}
  	e_{i} = \frac{d_{i}}{\lrvert{G}}\sum_{g\in G} \chi^{(i)}\lrbracket{g^{-1}}g,
  \end{align}
  where $d_i=\dim_{\CC}\lrbracket{V^{(i)}}$, is a central minimal idempotent projecting onto the isotypical component of $V^{(i)}$.
\end{proposition}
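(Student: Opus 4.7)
The plan is to verify three properties of $e_i$ in turn and then deduce the statement: (i) $e_i$ lies in $Z(\CG)$; (ii) $e_i$ acts as $\mathrm{Id}$ on $V^{(i)}$ and as $0$ on every simple $\CG$-module $V^{(j)}$ with $j\neq i$; (iii) from (i) and (ii) it follows that $e_i$ is an idempotent whose image is precisely the $V^{(i)}$-isotypical component of $\CG^{\circ}$, and that this idempotent is primitive in $Z(\CG)$.

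For centrality, I would compute $he_ih^{-1}$ for an arbitrary $h\in G$, reindex the sum by $g\mapsto h^{-1}gh$, and observe that the coefficient attached to $g$ then becomes $\chi^{(i)}\bigl((h^{-1}gh)^{-1}\bigr)=\chi^{(i)}(h^{-1}g^{-1}h)$, which equals $\chi^{(i)}(g^{-1})$ because $\chi^{(i)}$ is a class function. Thus $he_ih^{-1}=e_i$ for every $h\in G$, and extending by $\CC$-linearity yields $e_i\in Z(\CG)$.

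For the action on simples, Schur's lemma (\autoref{prop:Schur}) applied over the algebraically closed field $\CC$ implies that the central element $e_i$ acts on each simple $\CG$-module $V^{(j)}$ as a scalar $\lambda_{ij}\cdot\mathrm{Id}_{V^{(j)}}$. Taking the trace in the representation on $V^{(j)}$ and using the Schur orthogonality of characters $\tfrac{1}{\lrvert{G}}\sum_{g\in G}\chi^{(i)}(g^{-1})\chi^{(j)}(g)=\delta_{ij}$ gives
\begin{align*}
d_j\lambda_{ij}\;=\;\mathrm{tr}_{V^{(j)}}(e_i)\;=\;\frac{d_i}{\lrvert{G}}\sum_{g\in G}\chi^{(i)}(g^{-1})\chi^{(j)}(g)\;=\;d_i\,\delta_{ij},
\end{align*}
so $\lambda_{ii}=1$ and $\lambda_{ij}=0$ for $j\neq i$, which is exactly property (ii).

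Finally, semisimplicity of $\CG$ together with Wedderburn's little theorem (\autoref{prop:wedderburn_little}) gives the isotypic decomposition $\CG^{\circ}=\bigoplus_{S\in\Irr{\CG}}S(\CG^{\circ})$ in which $V^{(i)}(\CG^{\circ})\simeq(V^{(i)})^{\oplus d_i}$. By (ii), $e_i$ acts as the identity on $V^{(i)}(\CG^{\circ})$ and as $0$ on every other homogeneous component, so $e_i^2=e_i$ and the image of (right-)multiplication by $e_i$ on $\CG^{\circ}$ is exactly the $V^{(i)}$-isotypical component. Primitivity in $Z(\CG)$ then follows from the fact that $V^{(i)}(\CG^{\circ})$ is a simple two-sided ideal (\autoref{prop:wedderburn_little}(iii) together with \autoref{prop:wederburn_artin_wederburn}), hence cannot be split further as a direct sum of nonzero two-sided ideals, which is equivalent to the corresponding central idempotent being minimal. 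The only delicate point in the argument is the trace computation, which hinges critically on the character orthogonality relation; everything else is bookkeeping.
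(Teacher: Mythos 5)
Your argument is correct and complete, and it is the textbook proof: centrality via reindexing the sum over a conjugacy class and invoking that $\chi^{(i)}$ is a class function; the scalar-action computation via Schur's lemma over $\CC$ combined with the first orthogonality relation $\tfrac{1}{\lrvert{G}}\sum_{g}\chi^{(i)}(g^{-1})\chi^{(j)}(g)=\delta_{ij}$; idempotence and the identification of the image with the isotypical component by piecing this together over the Wedderburn decomposition of $\CG^{\circ}$ (the regular representation is faithful, so the operator identity $e_i^2=e_i$ lifts to $\CG$); and primitivity from the fact that each homogeneous component is a simple two-sided ideal. The paper itself states this proposition without proof, treating it as a classical fact, so there is no argument in the text to compare yours against — your proof fills that gap correctly and would be a reasonable inline citation to a standard reference such as Serre or Fulton--Harris. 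The only thing worth flagging is that the step from "$e_i$ acts as $\lambda_{ij}\,\mathrm{Id}$" to "$e_i$ acts as $\lambda_{ij}\,\mathrm{Id}$ on the entire $j$-th isotypical component" uses that every simple submodule of that component is isomorphic to $V^{(j)}$, and that a central element acts uniformly on isomorphic simples; this is implicit in your phrasing and worth making explicit if this is to be written out in full.
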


\begin{proposition}[\cite{etingof2009introduction}]\label{prop:decomp_semisimple_algebra}
	Let $\Al$ be a finite-dimensional algebra. Then $\Al$ has finitely many simple modules $V_i$ up to isomorphism. These simple modules are finite-dimensional. Moreover, $\Al$ is semisimple if and only if as an algebra, 
	\begin{align}
		\Al \simeq \bigoplus_i \End{}{V_i}
	\end{align}
	where $V_i$ are simple $\Al$-modules.
\end{proposition}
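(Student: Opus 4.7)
The plan is to prove both the finiteness/finite-dimensionality claim and the semisimplicity equivalence by systematically exploiting the regular module $\Al^\circ$, together with Schur's lemma (\autoref{prop:Schur}) and Wedderburn's little theorem (\autoref{prop:wedderburn_little}). First I would establish the structural claims about simples: if $V$ is any simple $\Al$-module, then picking any $0\neq v\in V$ yields a surjection $\Al^\circ \twoheadrightarrow V$, $a\mapsto a\cdot v$, by simplicity. Hence $\dim_\KK V \leq \dim_\KK \Al <\infty$, proving finite-dimensionality. For the finiteness of isomorphism classes, I would observe that to each simple $V$ we can associate the two-sided ideal $\mathrm{Ann}_\Al(V)$, and since any simple appears as a composition factor of the finite-length module $\Al^\circ$, the Jordan--Hölder theorem forces only finitely many isomorphism classes to arise; alternatively, one can invoke \autoref{prop:wedderburn_little}(i) once semisimplicity is known, and handle the general case by passing to $\Al/J(\Al)$ modulo the Jacobson radical.

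Second I would prove the ``semisimple $\Rightarrow$ matrix-algebra decomposition'' direction. Consider the canonical algebra homomorphism
\begin{align}
\phi\,:\,\Al \rightarrow \bigoplus_i \End{}{V_i},\quad a\mapsto (a\cdot)_{i},
\end{align}
where the sum is over the finitely many isomorphism classes of simple modules. Injectivity: $\ker\phi$ is a two-sided ideal annihilating every simple $V_i$, hence annihilating every semisimple module; by hypothesis $\Al^\circ$ is semisimple, so $\ker\phi\cdot 1_\Al =0$, i.e.\ $\ker\phi=0$. Surjectivity is the main obstacle and will be handled via Jacobson density / double-centralizer reasoning: over algebraically closed $\KK$, Schur's lemma gives $\End{\Al}{V_i}\simeq \KK$, so the density theorem implies the image of $\Al$ in each $\End{}{V_i}$ equals $\End{}{V_i}$. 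To lift this from individual factors to the product, I would use that distinct simples $V_i\not\simeq V_j$ admit orthogonal primitive central idempotents (again by \autoref{prop:wedderburn_little}(iii)), so one can independently realise the standard matrix units of each block, and thus hit any tuple in $\bigoplus_i \End{}{V_i}$. Finally, a dimension count via \autoref{prop:wederburn_artin_wederburn} confirms the isomorphism.

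Third I would handle the converse: if $\Al\simeq \bigoplus_i M_{n_i}(\KK)$, then each finite-dimensional $\Al$-module decomposes under the central idempotents $e_i\in Z(\Al)$ projecting onto the $i$-th block, reducing everything to the classical fact that every module over a matrix algebra $M_{n_i}(\KK)$ is a direct sum of copies of the column module $\KK^{n_i}$. Combining the blocks shows every $\Al$-module is semisimple, which is equivalent to semisimplicity of $\Al$.

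The conceptual obstacle is the surjectivity of $\phi$ in the forward direction. If one had access to the density theorem as a black box, this would be immediate, but since the proposition is used as a foundational fact, I would need to prove density from scratch (or derive it from \autoref{prop:wedderburn_little}): the cleanest path is to note that each $S(\Al^\circ)$ in the Wedderburn decomposition is a simple Artinian ring and hence, by the corollary \autoref{prop:wederburn_artin_wederburn}, isomorphic to $\KK^{n_S\times n_S}=\End{}{V_S}$, thereby giving the desired matrix-algebra decomposition directly without a separate density argument.
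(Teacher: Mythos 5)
The paper does not prove this proposition; it is a textbook fact cited directly from \cite{etingof2009introduction}, so there is no internal proof to compare against. Your reconstruction is essentially the standard argument (the one Etingof gives) and is correct, but a few remarks are in order.

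Your first step — every simple is a quotient of $\Al^\circ$, hence finite-dimensional, and Jordan--H\"older on the finite-length module $\Al^\circ$ bounds the number of isomorphism classes — is the right argument and works without any semisimplicity assumption; the aside about $\mathrm{Ann}_{\Al}(V)$ is a loose thread you never use and could be dropped. For the forward implication, your plan is correct but slightly roundabout in its first pass: injectivity of $\phi$ via ``\,$\ker\phi$ kills every simple, hence kills $\Al^\circ=1\cdot\Al$\,'' is fine, and surjectivity genuinely does need either the Jacobson density theorem together with Schur (for which you correctly note the algebraically-closed hypothesis implicit in $\End{}{V_i}\simeq\KK^{n_i\times n_i}$) or the Wedderburn route. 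The alternative you propose at the end — decompose $\Al^\circ$ into its isotypic components $S(\Al^\circ)$ via \autoref{prop:wedderburn_little}(ii)--(iii), then identify each simple Artinian block with $\KK^{n_S\times n_S}=\End{}{V_S}$ via \autoref{prop:wederburn_artin_wederburn} — is the cleanest path given the lemmas the paper has already stated, since it avoids importing density as a black box. You should just state explicitly, rather than implicitly, the standing hypothesis that $\KK$ is algebraically closed (or at least that $\Al$ is split), since both \autoref{prop:wederburn_artin_wederburn} and the identification $\End{\Al}{V_i}\simeq\KK$ require it; without it the statement with plain $\End{}{V_i}$ is false. The converse via central idempotents and the classical classification of $M_n(\KK)$-modules is correct as written.
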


\begin{proposition}\label{prop:direct_prod_of_simple_mods}
	Let $\Al$ be an algebra with simple $\Al$-module $V$ of finite dimension $n\in\mathbb{N}$. The space $\End{}{V}$ with left $\Al$-action is a semisimple left $\Al$-module such that
	\begin{align}
		\End{}{V}\simeq V^{\oplus n}.\,
	\end{align}
\end{proposition}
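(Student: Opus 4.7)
The plan is to identify $\End(V)$ with the tensor product $V \otimes_{\mathbb{C}} V^{*}$ as a vector space, then to track the left $\Al$-module structure explicitly under this identification. Recall that the isomorphism sends a simple tensor $v \otimes \phi$ to the rank-one endomorphism $u \mapsto \phi(u)\, v$. The left $\Al$-action on $\End(V)$ is, by convention, post-composition: $(a \cdot f)(u) = a \cdot f(u)$. Under the tensor identification this becomes
\begin{align*}
    a \cdot (v \otimes \phi) \;=\; (a \cdot v) \otimes \phi,
\end{align*}
so $\Al$ acts nontrivially only on the left tensor factor, while $V^{*}$ is merely a multiplicity space carrying the trivial $\Al$-action.

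From here the decomposition is essentially bookkeeping. First I would fix any basis $\{\phi_1, \ldots, \phi_n\}$ of $V^{*}$ (which exists since $\dim_{\mathbb{C}} V^{*} = \dim_{\mathbb{C}} V = n$), and write
\begin{align*}
    V \otimes V^{*} \;=\; \bigoplus_{i=1}^{n} V \otimes \mathbb{C}\phi_i,
\end{align*}
which is a direct sum of $\Al$-submodules by the formula above. Each summand $V \otimes \mathbb{C}\phi_i$ is $\Al$-isomorphic to $V$ via $v \otimes \phi_i \mapsto v$, since the action on the scalar factor $\mathbb{C}\phi_i$ is trivial. Composing with the tensor identification $\End(V) \simeq V \otimes V^{*}$ then gives the required $\Al$-module isomorphism $\End(V) \simeq V^{\oplus n}$.

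Finally, semisimplicity of $\End(V)$ as a left $\Al$-module is immediate from the hypothesis that $V$ is simple: a finite direct sum of simple modules is semisimple by definition. I do not anticipate a serious obstacle here; the only point requiring a little care is to verify that the decomposition above is genuinely $\Al$-equivariant (equivalently, that the dual space carries the trivial action under the adopted convention of left $\Al$-action on $\End(V)$), which follows immediately from the computation $a\cdot(v\otimes\phi) = (a\cdot v)\otimes \phi$.
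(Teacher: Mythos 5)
Your proof is correct and is essentially the paper's proof in slightly more conceptual clothing: the paper directly defines $\psi(x) = (xv_1,\ldots,xv_n)$ for a basis $\{v_i\}$ of $V$, which is exactly what your composition $\End{}{V} \simeq V\otimes V^{*} = \bigoplus_i V\otimes\mathbb{C}\phi_i \simeq V^{\oplus n}$ produces when $\{\phi_i\}$ is taken to be the dual basis. Your version has the minor merit of making the $\Al$-equivariance check explicit via the observation that $a\cdot(v\otimes\phi) = (av)\otimes\phi$, a step the paper's terse proof leaves implicit.
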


\begin{proof}
	Let $\lrbrace{v_i}_{i=1}^n$ be a basis of $V$, then 
	\begin{align}
	\begin{split}
		\psi \,:\, &\End{}{V} \rightarrow V^{\oplus n}\\
		& x \mapsto (xv_1, \ldots, xv_n)
	\end{split} 
	\end{align}
	is an isomorphism. 
\end{proof}

\begin{proposition}\label{prop:V_decomp}
	Let $\Al$ be a finite-dimensional semisimple subalgebra of $\End{}{V}$ for finite-dimensional vector space $V$. Then, 
	\begin{align}
		V\simeq \bigoplus_i V_i \otimes \Hom{\Al}{V_i}{V} \simeq \bigoplus_i \lrbracket{V_i}^{\oplus m_i}
	\end{align}
	with $\Hom{\Al}{V_i}{V}$ the multiplicity space of $V_i$ in $V$ of $\CC$-vector space dimension $m_i$ and for any $a\in\Al$, $v\otimes f \in V_i \otimes \Hom{\Al}{V_i}{V}$ we have
	\begin{align}
		a\lrbracket{v\otimes f} = av\otimes f.\,
	\end{align}
\end{proposition}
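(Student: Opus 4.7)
My plan is to realize both isomorphisms directly, in a canonical way, using semisimplicity of $\Al$ together with Schur's lemma.

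First I would consider the natural evaluation map
\begin{align*}
  \Phi \,:\, \bigoplus_i V_i \otimes \Hom{\Al}{V_i}{V} \;\longrightarrow\; V,\qquad v \otimes f \;\longmapsto\; f(v),
\end{align*}
and check that it is $\Al$-linear with respect to the action $a\cdot(v\otimes f) = av\otimes f$: this is immediate because each $f \in \Hom{\Al}{V_i}{V}$ intertwines the $\Al$-actions, so $f(av) = a f(v)$. The content of the proposition is then that $\Phi$ is an isomorphism, and that moreover $\dim_{\CC}\Hom{\Al}{V_i}{V} = m_i$, the multiplicity of $V_i$ in $V$.

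Next I would use semisimplicity of $\Al$, applied to the finite-dimensional $\Al$-module $V$ (viewed via the inclusion $\Al \hookrightarrow \End{}{V}$), to decompose $V \simeq \bigoplus_i V_i^{\oplus m_i}$ as $\Al$-modules by grouping isomorphic simple summands via \autoref{prop:decomp_semisimple_algebra}. This already supplies the second displayed isomorphism. For the first, surjectivity of $\Phi$ is clear because each simple summand $V_i \hookrightarrow V$ of the isotypic decomposition provides an element of $\Hom{\Al}{V_i}{V}$ whose image is precisely that summand. Injectivity is the step where Schur's lemma does the work: fixing embeddings $\iota_{i,1},\ldots,\iota_{i,m_i}\colon V_i \hookrightarrow V$ corresponding to a chosen isotypic decomposition, any $f \in \Hom{\Al}{V_i}{V}$ projects componentwise into each isotypical component and, by \autoref{prop:Schur}, each component lies in $\CC$-span of the $\iota_{i,k}$ (while components into $V_j$-isotypic with $j\neq i$ vanish). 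This identifies $\Hom{\Al}{V_i}{V}$ with $\CC^{m_i}$, so the restriction of $\Phi$ to the $V_i$-block becomes the identification $V_i \otimes \CC^{m_i} \simeq V_i^{\oplus m_i}$, which is an isomorphism.

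A short dimension count then closes the argument: $\dim_{\CC} V = \sum_i m_i \dim_{\CC} V_i = \sum_i \dim_{\CC}(V_i\otimes \Hom{\Al}{V_i}{V})$. The main obstacle I anticipate is purely notational, namely being careful that the identification $\Hom{\Al}{V_i}{V}\simeq \CC^{m_i}$ is made in a basis-independent way so that the $\Al$-action on the tensor product really does land only on the first factor; this is handled by using the canonical multiplicity space $\Hom{\Al}{V_i}{V}$ rather than an artificial choice of splitting, since $\Al$ acts trivially on $\Hom{\Al}{V_i}{V}$ by construction and nontrivially on $V_i$.
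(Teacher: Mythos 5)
Your proof is correct and follows essentially the same route as the paper: both realize the first isomorphism via the canonical evaluation map $v\otimes f\mapsto f(v)$ and invoke semisimplicity plus Schur's lemma. You spell out the injectivity/surjectivity and the identification $\Hom{\Al}{V_i}{V}\simeq\CC^{m_i}$ in more detail, while the paper simply asserts the evaluation map is an isomorphism and cites \autoref{prop:direct_prod_of_simple_mods} for the second identification.
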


\begin{proof}
	By Schur's lemma, we have the canonical identification of the semisimple finite-dimensional representation $V$ of $\Al$ as $\bigoplus_i V_i \otimes \Hom{\Al}{V_i}{V}$ by considering the map
	\begin{align}
		\begin{split}
			\phi \,:\, & \bigoplus_i V_i \otimes \Hom{\Al}{V_i}{V} \rightarrow V \\
			&v\otimes f \mapsto f(v)
		\end{split}
	\end{align}
	which is an isomorphism. The second isomorphism follows from \autoref{prop:direct_prod_of_simple_mods}. 
\end{proof}

\begin{remark}
	This semisimple decomposition is non-unique, as the choice of the $V_i$'s is up to isomorphism. Concretely, while in the isotypic decomposition any isotypic component was a sum over all isomorphic simple modules, here we ``choose'' one simple module $V_i$ and take it to the $m_i$-th power. Clearly, there is freedom in the choice of $V_i$. However, the number of isomorphic simple modules in each  isotypical component and the number of isotypical components are unique.
\end{remark}

\begin{remark}
For $V$ the regular $\KG$-module,
	\begin{align}
		\dim_{\KK}\lrbracket{S}=\dim_{\KK}\lrbracket{\Hom{\KG}{S}{V}} \qquad\text{and}\qquad \dim_{\KK}\lrbracket{S}^2=\dim_{\KK}\lrbracket{S(V)}.
	\end{align}
\end{remark}

Maschke's theorem assures the semisimplicity of $\KG$ as an algebra.

\begin{proposition}[Maschke]
Let $G$ be a finite group and $\mathbb{K}$ a field whose characteristic\footnote{We restrict to fields of characteristic $0$.} does not divide $\lrvert{G}$. Then, $\KG$ is a semisimple $\KK$-algebra.
\end{proposition}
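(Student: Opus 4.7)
The plan is to prove Maschke's theorem by the classical averaging (Reynolds operator) argument, showing that every $\mathbb{K}[G]$-submodule of a $\mathbb{K}[G]$-module admits a $\mathbb{K}[G]$-invariant complement. By the module-theoretic characterization of semisimple rings recalled in the excerpt (a ring is semisimple iff every module is a direct sum of simple modules, equivalently every submodule is a direct summand), this yields semisimplicity of $\mathbb{K}[G]$ as a $\mathbb{K}$-algebra.

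First, I would reduce the statement to the following assertion: for any $\mathbb{K}[G]$-module $M$ and any $\mathbb{K}[G]$-submodule $N \subseteq M$, there exists a $\mathbb{K}[G]$-submodule $N' \subseteq M$ with $M = N \oplus N'$. Since $N$ is in particular a $\mathbb{K}$-subspace of $M$, standard linear algebra gives a $\mathbb{K}$-linear projection $\pi: M \to M$ with image $N$ and $\pi|_N = \mathrm{id}_N$. The projection $\pi$ need not be $G$-equivariant, so the goal is to modify it into one that is.

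The key step is to define the averaged map
\begin{align}
\tilde{\pi} := \frac{1}{|G|}\sum_{g \in G} \rho_M(g)\circ \pi \circ \rho_M(g^{-1}),
\end{align}
where $\rho_M: G \to \mathrm{GL}(M)$ is the representation associated to the $\mathbb{K}[G]$-module $M$. Here the hypothesis that $\mathrm{char}(\mathbb{K})$ does not divide $|G|$ is used precisely to ensure that $|G|$ is invertible in $\mathbb{K}$, so that the factor $1/|G|$ makes sense; this is the only place the characteristic hypothesis enters, and if it failed the whole construction would collapse (as one sees already for $\mathbb{F}_p[C_p]$). I would then verify three properties of $\tilde{\pi}$: (i) $\tilde{\pi}(M)\subseteq N$, using that $N$ is $G$-stable so $\rho_M(g^{-1})$ followed by $\pi$ followed by $\rho_M(g)$ lands in $N$; (ii) $\tilde{\pi}|_N = \mathrm{id}_N$, using that for $n \in N$ we have $\rho_M(g^{-1})n \in N$ and $\pi$ fixes $N$ pointwise, so each summand equals $n$ and averaging $|G|$ copies over $|G|$ gives $n$; (iii) $\tilde{\pi}$ commutes with the $G$-action, by the usual change-of-variables $g \mapsto h^{-1}g$ inside the sum. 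Properties (i)--(ii) show $\tilde{\pi}$ is a projection onto $N$, and (iii) shows it is a $\mathbb{K}[G]$-module homomorphism.

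Having $\tilde{\pi} \in \End{\mathbb{K}[G]}{M}$ with $\tilde{\pi}^2 = \tilde{\pi}$ and $\mathrm{Im}(\tilde{\pi}) = N$, I would conclude via \autoref{prop:connection_projections_and_submodules} (or directly) that $M = N \oplus \ker(\tilde{\pi})$, where $\ker(\tilde{\pi})$ is a $\mathbb{K}[G]$-submodule because $\tilde{\pi}$ is $\mathbb{K}[G]$-linear. This establishes the complementation property for all submodules. Taking $M = \mathbb{K}[G]^\circ$, the regular module, and inducting on dimension shows $\mathbb{K}[G]^\circ$ decomposes as a direct sum of simple $\mathbb{K}[G]$-modules; by the characterization of semisimplicity for finite-dimensional algebras (cf.\ \autoref{prop:decomp_semisimple_algebra}), this is equivalent to $\mathbb{K}[G]$ being a semisimple $\mathbb{K}$-algebra. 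The main (and only) subtlety is the careful verification that the averaged operator is both a projection onto $N$ and $G$-equivariant; once the characteristic hypothesis guarantees the existence of $1/|G|$, the rest is a routine manipulation of finite sums.
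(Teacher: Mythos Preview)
Your proof is correct and is the standard averaging (Reynolds operator) argument for Maschke's theorem. The paper itself does not supply a proof of this proposition --- it is stated as a classical background result in the representation-theory appendix --- so there is no paper-side argument to compare against; your write-up would serve perfectly well as the omitted proof.
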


\begin{remark}
	Importantly, any $\KG$-module is semisimple. However, the concrete decomposition depends on the specific $\KG$-module under consideration. Moreover, not all simple $\KG$-modules appear in all $\KG$-modules. Furthermore, there might be multiple isomorphic copies of a simple $\KG$-module appearing.
\end{remark}

We will now use Schur's lemma to translate the decomposition of a semisimple $\KG$-module $V$ to a decomposition of $\End{\KG}{V}$.

\begin{proposition}\label{prop:decomp_of_End_KG_V}
Let $V$ be a semisimple $\KG$-module and $\lrbrace{V_i}_{i=1}^t$ the corresponding simple  $\KG$-modules. Then,
	\begin{align}
		\End{\KG}{V}\simeq \bigoplus_{i=1}^t \KK^{m_i \times m_i}
	\end{align}
	where $m_i=\dim_{\KK}\lrbracket{\Hom{\KG}{V_i}{V}}$.  
\end{proposition}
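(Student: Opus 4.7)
The plan is to reduce everything to a block-by-block application of Schur's lemma using the semisimple decomposition of $V$ together with additivity of $\Hom$-spaces. Assuming $\mathbb{K}$ is algebraically closed (as used implicitly throughout the excerpt, cf.\ the statement of Schur's theorem for $\KG$), so that $\End{\KG}{V_i}\simeq\KK$ for every simple $\KG$-module $V_i$.

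First I would invoke \autoref{prop:V_decomp} to write $V\simeq \bigoplus_{i=1}^t V_i^{\oplus m_i}$ as $\KG$-modules, where $m_i = \dim_{\KK}(\Hom{\KG}{V_i}{V})$. Next, I would apply the standard bi-additivity of $\Hom$ in both arguments to get
\begin{align*}
\End{\KG}{V} \;\simeq\; \Hom{\KG}{\bigoplus_{i=1}^t V_i^{\oplus m_i}}{\bigoplus_{j=1}^t V_j^{\oplus m_j}} \;\simeq\; \bigoplus_{i,j=1}^t \Hom{\KG}{V_i}{V_j}^{\oplus (m_i m_j)},
\end{align*}
where the direct sum of $m_im_j$ copies is naturally organized as an $m_i\times m_j$ matrix block of morphisms.

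Then I would apply Schur's lemma (\autoref{prop:Schur} together with the second part of Schur's theorem for $\KG$ under the algebraically closed assumption) twice: the off-diagonal terms $\Hom{\KG}{V_i}{V_j}$ vanish when $i\neq j$ since $V_i \not\simeq V_j$, and the diagonal terms satisfy $\End{\KG}{V_i}\simeq \KK$. Thus the only surviving contributions are the diagonal $i=j$ blocks, each isomorphic to $\KK^{m_i\times m_i}$ as a $\KK$-algebra, yielding
\begin{align*}
\End{\KG}{V}\;\simeq\;\bigoplus_{i=1}^t \KK^{m_i\times m_i}.
\end{align*}

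The only subtle step is verifying that this isomorphism is compatible with the algebra structure and not merely with the $\KK$-vector space structure; that is, that composition of $\KG$-endomorphisms on $V$ corresponds to block-wise matrix multiplication in $\bigoplus_i\KK^{m_i\times m_i}$. I would handle this by fixing, for each $i$, a basis $f_1^{(i)},\ldots,f_{m_i}^{(i)}$ of $\Hom{\KG}{V_i}{V}$ (identifying each $f_k^{(i)}$ with an embedding of the $k$-th copy of $V_i$ into the isotypic component), and then checking that composition $f_k^{(i)}\circ \pi_l^{(i)}$ (with $\pi_l^{(i)}$ the dual projection) acts as the elementary matrix $E_{k,l}$ in $\KK^{m_i\times m_i}$, while off-diagonal compositions vanish by Schur. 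This gives an algebra isomorphism rather than only a vector-space isomorphism, completing the proof.
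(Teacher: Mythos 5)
Your proposal follows essentially the same route as the paper's proof: semisimple decomposition $V\simeq\bigoplus_i V_i^{\oplus m_i}$, bi-additivity of $\Hom$, and Schur's lemma to kill off-diagonal blocks and reduce each diagonal block to $\KK^{m_i\times m_i}$. The only difference is cosmetic: where you verify the algebra (rather than mere vector-space) isomorphism by hand with elementary matrices, the paper instead invokes the Wedderburn--Artin identification $\End{\KG}{V_i^{\oplus m_i}}\simeq M_{m_i}(\End{\KG}{V_i})$ to the same effect; your explicit remark that $\KK$ must be algebraically closed (so that $\End{\KG}{V_i}\simeq\KK$) is a hypothesis the paper uses but leaves implicit at this point.
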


\begin{proof}
	Since $V$ is semisimple, it admits a decomposition into simple $\KG$-modules
	\begin{align}
		V\simeq \bigoplus_{i} V_i^{\oplus m_i}.\,
	\end{align}
	Furthermore, $\End{\KG}{V_i}\subseteq \End{}{V_i}$. By Schur's lemma, $\operatorname{Hom}$-space additivity and Artin--Wedderburn,
	\begin{align}
	\begin{split}
		\End{\KG}{V} &\simeq \Hom{\KG}{\bigoplus_{i=1}^t V_i^{\oplus m_i}}{V}\\
		&= \bigoplus_{i=1}^t\Hom{\KG}{V_i^{\oplus m_i}}{V} \simeq \bigoplus_{i=1}^t \bigoplus_{j=1}^t \Hom{\KG}{V_i^{\oplus m_i}}{V_j^{\oplus m_j}} = \bigoplus_{i=1}^t\Hom{\KG}{V_i^{\oplus m_i}}{V_i^{\oplus m_i}}\\
		&= \bigoplus_{i=1}^t\End{\KG}{V_i^{\oplus m_i}} \simeq \bigoplus_{i=1}^t M_{m_i}\lrbracket{\End{\KG}{V_i}} \simeq \bigoplus_{i=1}^t M_{m_i}\lrbracket{\KK} \simeq \bigoplus_{i=1}^t \KK^{m_i\times m_i}.\,
	\end{split}
	\end{align}
\end{proof}

\begin{proposition}
We have
	\begin{align}
		\dim\lrbracket{\End{\KG}{V}}=\sum_{i=1}^t m_i^2.\,
	\end{align}
\end{proposition}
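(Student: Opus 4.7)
The claim is an immediate corollary of the preceding \autoref{prop:decomp_of_End_KG_V}, so the proof plan is essentially to take $\KK$-vector space dimensions on both sides of the algebra isomorphism
\begin{align*}
    \End{\KG}{V}\simeq \bigoplus_{i=1}^t \KK^{m_i\times m_i}
\end{align*}
established there. Since an algebra isomorphism in particular is a $\KK$-linear isomorphism, it preserves $\dim_{\KK}$.

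My plan is: first, invoke the isomorphism directly. Second, use the fact that $\dim_{\KK}$ is additive on finite direct sums of $\KK$-vector spaces, i.e.\ $\dim_{\KK}\lrbracket{\bigoplus_{i=1}^t W_i}=\sum_{i=1}^t\dim_{\KK}(W_i)$. Third, apply the elementary fact $\dim_{\KK}\lrbracket{\KK^{m_i\times m_i}}=m_i^2$, which follows from choosing the matrix unit basis $\lrbrace{E_{jk}}_{j,k=1}^{m_i}$.

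Chaining these three observations yields
\begin{align*}
    \dim_{\KK}\lrbracket{\End{\KG}{V}}=\dim_{\KK}\lrbracket{\bigoplus_{i=1}^t\KK^{m_i\times m_i}}=\sum_{i=1}^t\dim_{\KK}\lrbracket{\KK^{m_i\times m_i}}=\sum_{i=1}^t m_i^2,
\end{align*}
which is the claim. There is no real obstacle here: the nontrivial work was already carried out in the proof of \autoref{prop:decomp_of_End_KG_V}, where Schur's lemma, $\mathrm{Hom}$-additivity, and the Wedderburn--Artin theorem were used to decompose $\End{\KG}{V}$ as a direct sum of matrix algebras over $\KK$ (using that $\KK$ is algebraically closed so that $\End{\KG}{V_i}\simeq\KK$). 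The present statement is only the numerical shadow of that structural decomposition.
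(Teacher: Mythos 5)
Your proposal is correct and matches the paper's own one-line proof exactly: invoke the isomorphism $\End{\KG}{V}\simeq\bigoplus_{i=1}^t\KK^{m_i\times m_i}$ from the preceding proposition, then use additivity of $\dim_{\KK}$ over direct sums together with $\dim_{\KK}\lrbracket{\KK^{m_i\times m_i}}=m_i^2$. No difference in approach.
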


\begin{proof}
	Since $\End{\KG}{V}\simeq  \bigoplus_{i=1}^t\KK^{m_i\times m_i}$ and $\dim\lrbracket{ \bigoplus_{i=1}^t\KK^{m_i\times m_i}}=\sum_{i=1}^t\dim\lrbracket{\KK^{m_i\times m_i}} =\sum_{i=1}^t m_i^2$.   
\end{proof}

From now on, let $\KK=\CC$.

\begin{proposition}[\cite{fulton2013representation}]\label{prop:fulton_harris_1}
	Let $G$ be a finite group and $V$ a finite-dimensional right $\CC\lrrec{G}$-module. If $U$ is a simple left $\CC\lrrec{G}$-module, then $V\otimes_{\CC\lrrec{G}}U$ defined as the $\CC$ vector space
	\begin{align}
		\lrbracket{V \otimes_{\CC} U}/ \left\langle vg \otimes u - v\otimes gu \,:\, v\in V, u\in U, g\in\CC\lrrec{G} \right\rangle
	\end{align}
	is a simple left $\End{\CG}{V}$-module.
\end{proposition}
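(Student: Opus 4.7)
The plan is to exploit the semisimplicity of $\CG$ (Maschke) to reduce the claim to a statement about matrix algebras acting on their natural modules. First I would decompose the right $\CG$-module $V$ into isotypical components. Write $V \simeq \bigoplus_{W \in \Irr(\CG^{op})} W \otimes_\CC M_W$, where the sum runs over isomorphism classes of simple right $\CG$-modules and $M_W := \Hom_{\CG}(W, V)$ is the multiplicity space. Then, following the argument in Proposition A.13, Schur's lemma plus $\Hom$-space additivity give the algebra isomorphism
\begin{align*}
\End_{\CG}(V) \;\simeq\; \bigoplus_{W} \End_{\CC}(M_W),
\end{align*}
where the $W$-summand acts on $V$ by acting on $M_W$ via its defining representation and leaves the $W$-factor untouched.

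Next I would compute $V \otimes_{\CG} U$. Tensoring distributes over direct sums, so
\begin{align*}
V \otimes_{\CG} U \;\simeq\; \bigoplus_{W}\, (W \otimes_{\CG} U)\otimes_\CC M_W.
\end{align*}
The key lemma to establish is that for a simple right $\CG$-module $W$ and simple left $\CG$-module $U$, the space $W \otimes_{\CG} U$ is one-dimensional when $W$ pairs with $U$ (i.e.\ when $W$ is the image of $U$ under the anti-isomorphism $g\mapsto g^{-1}$ identifying left and right modules with matching Wedderburn block) and zero otherwise. I would prove this by choosing a primitive idempotent $e \in \CG$ with $U \simeq \CG\,e$, so that $W \otimes_{\CG} U \simeq W\!\cdot e$; semisimplicity (Propositions A.7, A.8) shows $W\!\cdot e$ is a subspace of a single simple component of $\CG$ and has dimension $0$ or $1$ accordingly. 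Consequently $V \otimes_{\CG} U \simeq M_{W_0}$, where $W_0$ is the unique simple right $\CG$-module paired with $U$.

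Finally I would identify the $\End_{\CG}(V)$-action. By construction, $f \in \End_{\CG}(V)$ acts on $v \otimes u$ by $f(v)\otimes u$. Under the decomposition above this action factors through the $W_0$-summand of $\End_{\CG}(V)$, namely $\End_{\CC}(M_{W_0})$, acting by its defining action on $M_{W_0}$; the other summands annihilate the tensor product. Since a full matrix algebra $\End_{\CC}(M_{W_0})$ acts simply on its natural module $M_{W_0}$, the module $V \otimes_{\CG} U$ is simple as an $\End_{\CG}(V)$-module.

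The principal obstacle is the matching lemma for $W \otimes_{\CG} U$: one must verify carefully, via a primitive idempotent or Wedderburn's block decomposition, that this space is precisely $\CC$ in the matched case and vanishes otherwise. Once that is pinned down, the rest is a direct transcription of the multiplicity-space decomposition into the language of tensor products.
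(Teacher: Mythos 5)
The paper gives no proof of this statement; it is cited directly from Fulton--Harris (Lemma 6.22), so there is no in-paper argument to compare against. Your derivation via the isotypic decomposition of $V$ is correct and self-contained: writing $V \simeq \bigoplus_W W \otimes_\CC M_W$, identifying $\End_{\CG}(V) \simeq \bigoplus_W \End_\CC(M_W)$ acting componentwise on the multiplicity spaces, and reducing $W \otimes_{\CG} U$ to $We$ for a primitive idempotent $e$ with $U \simeq \CG e$ is exactly the right route. The dimension count $\dim_\CC We \in \{0,1\}$ then follows from the Wedderburn picture: $e$ is a rank-one idempotent in a single block $M_{n_i}(\CC)$, and if $W$ is the row space of block $j$ then $We = 0$ for $j \neq i$ and $\dim We = 1$ for $j = i$. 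From there the identification $V \otimes_{\CG} U \simeq M_{W_0}$ as the natural module of the full matrix algebra $\End_\CC(M_{W_0})$ settles simplicity.

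Two small points, neither of which is load-bearing. First, the parenthetical gloss of the pairing via the anti-isomorphism $g \mapsto g^{-1}$ is off by a dualization: that map sends a simple left module in Wedderburn block $i$ to a right module in the \emph{dual} block $i^*$, not block $i$, so it does not in general identify the pair $(W,U)$ for which $W\otimes_{\CG} U \neq 0$ unless the relevant representation is self-dual. The criterion you actually use — same Wedderburn block, i.e.\ $W$ is the row space and $U$ the column space of the same matrix factor — is the correct one, and that is what the idempotent computation certifies. Second, if the block of $U$ does not occur in $V$, then $M_{W_0} = 0$ and $V \otimes_{\CG} U = 0$; the zero module is not simple. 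Fulton--Harris phrase the conclusion as ``irreducible or zero,'' a qualifier the paper's statement quietly drops and which your proof should also make explicit.
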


We are interested in extracting from $V$ a subset of elements such that under $\CG$-action we recover all of $V$. Ultimately, we will reduce any problem formulated in terms of $V$ to be formulated in terms of this representative set. In other words, due to linearity we are ultimately interested in the set of representative elements from a $G$-orbit partitioning of a basis of $V$. The decomposition of $V$ given in \autoref{prop:V_decomp} into simple $\CG$-modules yields such a representative set. Concretely, for
\begin{align}
	V=\bigoplus_{i=1}^{k}\bigoplus_{j=1}^{m_i}V_{i,j}
\end{align}
with $V_{i,j}\simeq_{\CG}V_{i,j'}$ for all $i,j,j'$ and $V_{i,j}$ not isomorphic to $V_{i',j}$ for $i\neq i'$, choose $u_{i,j}\in V_{i,j}$, then
\begin{align}
		\lrbrace{\lrbracket{u_{i,1},\ldots, u_{i,m_i}}\, :\, i\in\lrrec{k}}
	\end{align} is a representative set. Importantly, the following proposition establishes when a given subset from $V$ corresponds to a representative set.
    
\begin{proposition}[Representative set, \cite{polak2020new} Proposition 2.4.3]
\label{prop:polak_certification_of_representative_set}
	Let $G$ be a finite group with $\CG$-module $V$. Let $k,m_1,\ldots, m_k\in\mathbb{N}$ and $u_{i,j}\in V$ for $i\in\lrrec{k}$, $j\in \lrrec{m_i}$. Then the set 
	\begin{align}
		\lrbrace{\lrbracket{u_{i,1},\ldots, u_{i,m_i}}\, :\, i\in\lrrec{k}}
	\end{align}
	is a representative set for the action of $G$ on $V$ if and only if:
	\begin{enumerate}
		\item $V=\bigoplus_{i=1}^k\bigoplus_{j=1}^{m_i}\CC\lrrec{G}\cdot u_{i,j}$, 
		\item $\forall i\in \lrrec{k}$ and $j,j'\in \lrrec{m_i}$, there exists a $\CG$-isomorphism \begin{align}
			\begin{split}
				\psi\,:\, \CC\lrrec{G}\cdot u_{i,j} &\rightarrow \CC\lrrec{G} \cdot u_{i,j'},\,\\
				&u_{i,j} \mapsto u_{i,j'},\,
			\end{split}
		\end{align}
		\item $\sum_{i=1}^km_i^2 \geq \dim\lrbracket{\End{G}{V}}$.
		\end{enumerate}
\end{proposition}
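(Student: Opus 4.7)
The plan is to prove the equivalence in two steps: a forward direction that is essentially tautological once one unpacks the definition of a representative set, and a reverse direction that reduces to a clean dimension count.

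For the forward direction, if $\lrbrace{(u_{i,1}, \ldots, u_{i,m_i}) : i \in \lrrec{k}}$ is a representative set in the sense of \cite{polak2020new}, then the cyclic modules $\CG \cdot u_{i,j}$ are simple, pairwise $\CG$-isomorphic within each $i$-block, pairwise non-isomorphic across different $i$-blocks, and exhaust $V$ as a direct sum. Conditions (1) and (2) follow immediately. For (3), I would apply \autoref{prop:decomp_of_End_KG_V} to $V = \bigoplus_i W_i^{\oplus m_i}$ with $W_i := \CG \cdot u_{i,1}$, using that $\CC$ is algebraically closed so $\End{G}{W_i} \simeq \CC$, to obtain $\End{G}{V} \simeq \bigoplus_i \CC^{m_i \times m_i}$ and hence $\dim \End{G}{V} = \sum_i m_i^2$; condition (3) then holds with equality.

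For the reverse direction, I set $W_i := \CG \cdot u_{i,1}$; by (2) and (1) we have $V = \bigoplus_i W_i^{\oplus m_i}$. By Maschke each $W_i$ is semisimple, so it decomposes as $W_i \simeq \bigoplus_\lambda V_\lambda^{\oplus p_{i,\lambda}}$, where the $V_\lambda$ range over the pairwise non-isomorphic simple $\CG$-modules. A direct expansion using \autoref{prop:decomp_of_End_KG_V} yields
\begin{align*}
\dim \End{G}{V} \;=\; \sum_{\lambda} \Bigl(\sum_i m_i\, p_{i,\lambda}\Bigr)^2 \;=\; \sum_{i,i'} m_i m_{i'}\, T_{i,i'}, \qquad T_{i,i'} := \sum_{\lambda} p_{i,\lambda}\, p_{i',\lambda}.
\end{align*}
Since $T_{i,i} = \sum_\lambda p_{i,\lambda}^2 \geq 1$, with equality iff $W_i$ is simple, and $T_{i,i'} \geq 0$ for $i \neq i'$, with equality iff $W_i \not\simeq W_{i'}$, this already gives the unconditional bound $\dim \End{G}{V} \geq \sum_i m_i^2$. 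Combined with (3), equality is forced, which in turn compels $W_i$ to be simple for every $i$ and $W_i \not\simeq W_{i'}$ for $i \neq i'$. These are exactly the simplicity and non-degeneracy requirements of a representative set, completing the reverse direction.

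The main obstacle is the multiplicity bookkeeping underlying the identity $\dim \End{G}{V} = \sum_{i,i'} m_i m_{i'}\, T_{i,i'}$ and the clean separation of the two elementary inequalities on the $T_{i,i'}$; once those are in place, condition (3) pins down the two ``extra'' properties — simplicity of each $W_i$ and pairwise non-isomorphy across different $i$ — that distinguish a representative set from a mere $\CG$-spanning collection of cyclic generators.
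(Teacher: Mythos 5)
The paper does not prove this proposition; it is a cited result from Polak's thesis (\cite{polak2020new}, Prop.\ 2.4.3), so there is no in-paper argument to compare against. Your proof is correct and follows what is essentially the standard dimension-counting route. The forward direction is immediate once one notes that for a representative set $\End{G}{V}\simeq\bigoplus_i\CC^{m_i\times m_i}$ by Schur's lemma over the algebraically closed field $\CC$, so condition (3) holds with equality. The reverse direction is the substantive part, and your bookkeeping is sound: with $W_i:=\CG\cdot u_{i,1}$ and $W_i\simeq\bigoplus_\lambda V_\lambda^{\oplus p_{i,\lambda}}$ one gets $\dim\End{G}{V}=\sum_{i,i'}m_i m_{i'}T_{i,i'}$, and the chain $\sum_{i,i'}m_i m_{i'}T_{i,i'}\geq\sum_i m_i^2 T_{i,i}\geq\sum_i m_i^2$ combined with (3) forces $T_{i,i}=1$ (each $W_i$ simple) and $T_{i,i'}=0$ for $i\neq i'$ (pairwise non-isomorphy), which together with (1) and (2) are exactly the defining properties of a representative set. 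One small caveat worth flagging: the step $T_{i,i}\geq 1$ tacitly assumes $W_i\neq 0$; if some $u_{i,1}=0$ then $T_{i,i}=0$ and (3), being a one-sided inequality, can still hold even though $\CG\cdot u_{i,1}$ is not simple. This is harmless under the standard convention (implicit here and in Polak) that the $u_{i,j}$ are nonzero, but a careful write-up should state it explicitly.
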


The notion of the representative set together with a $G$-invariant inner product\footnote{Such an inner product can be constructed for any $\CG$-module by averaging an arbitrary inner product over the group; cf.\ the remark below for $V\simeq\CC^Z$.} $\left\langle \cdot , \cdot \right\rangle_G$ on $V$, i.e.
\begin{align}
	\left\langle g v , g w \right\rangle_G := \left\langle v , w \right\rangle_G \hspace{1cm} \forall g\in G, v,w\in V
\end{align}
allows us to make the isomorphism in \autoref{prop:decomp_of_End_KG_V} concrete.

\begin{remark}
	While introducing such a $G$-invariant inner product at first seems arbitrary, our proposed isomorphism needs to yield the desired block structure. While the isotypical components of $V$ are orthogonal w.r.t. a $G$-invariant inner product, the isomorphic simple modules within each isotypical component need not be. However, we can always find a decomposition of each isotypical component into isomorphic simple $\CG$-modules, which are pairwise orthogonal w.r.t. $\left\langle \cdot, \cdot\right\rangle_G$.
\end{remark}

\begin{proposition}[\cite{polak2020new, gijswijt2009block}, Proposition 2.4.4.]\label{prop:polak_iso_gen}
Let $G$ be a finite group, $V$ a $\CG$-module with $\lrbrace{V_i}_{i=1}^t$ pairwise non-isomorphic simple $\CG$-modules and $m_i=\dim_{\CC}\lrbracket{\Hom{G}{V_i}{V}}$ for all $i\in\lrrec{t}$. The map  
	\begin{align}\label{eqn:general_iso}
		\begin{split}
			\psi \,:\, &\End{\CC\lrrec{G}}{V} \rightarrow \bigoplus_{i=1}^t \CC^{m_i\times m_i}\\
			& A \mapsto \bigoplus_{i=1}^t\lrbracket{\left\langle A u_{i,j'}, u_{i,j} \right\rangle_{G}}_{j,j'=1}^{m_i}
		\end{split}
	\end{align}
	is a bijection preserving positive-semidefiniteness. 
\end{proposition}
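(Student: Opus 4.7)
The plan is to prove the claim in four stages: linearity/well-definedness of $\psi$, injectivity, matching $\mathbb{C}$-dimensions (hence bijectivity), and finally the two-way preservation of positive semidefiniteness. The structural ingredient that makes all four go through is the orthogonal isotypic decomposition $V \simeq \bigoplus_{i=1}^{t} V^{(i)}$ with $V^{(i)} = \bigoplus_{j=1}^{m_i} \CC\lrrec{G}\cdot u_{i,j}$ (condition~(1) of the representative set in \autoref{prop:polak_certification_of_representative_set}), where the $\CC\lrrec{G}$-isomorphisms $\CC\lrrec{G}\cdot u_{i,j}\to\CC\lrrec{G}\cdot u_{i,j'}$ of condition~(2) are available. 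Using the Weyl unitarian trick applied to $\langle\cdot,\cdot\rangle_G$ and Gram--Schmidt within each isotypic component, I would first renormalize the representatives so that each of these isomorphisms is isometric and so that distinct copies inside $V^{(i)}$ are pairwise orthogonal. Then $V^{(i)} \simeq V_i \otimes \CC^{m_i}$ as an inner-product space.

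Linearity of $\psi$ is immediate from sesquilinearity of $\langle\cdot,\cdot\rangle_G$, and the blocks indeed have the advertised size $m_i\times m_i$. For injectivity, fix $i$ and $j'$. Since $A\in\End{\CC\lrrec{G}}{V}$ is $G$-equivariant and $\CC\lrrec{G}\cdot u_{i,j'}$ is a simple $\CC\lrrec{G}$-submodule of type $V_i$, \autoref{prop:Schur} forces $A u_{i,j'}\in V^{(i)}$. Writing $A|_{V^{(i)}} = \mathbf{1}_{V_i}\otimes M_i$ under the chosen isometric identification $V^{(i)}\simeq V_i\otimes\CC^{m_i}$ (which is the content of the restriction of \autoref{prop:decomp_of_End_KG_V} to each isotypic block), we obtain
\[
A u_{i,j'} \;=\; \sum_{j=1}^{m_i} (M_i)_{j,j'}\,u_{i,j},
\]
so that $\bigl(\psi(A)\bigr)_i = M_i$ up to a harmless constant normalizer $\|u_{i,\cdot}\|_G^2$. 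Hence $\psi(A)=0$ forces $M_i=0$ for every $i$, i.e.\ $A=0$. Surjectivity then follows from a dimension count: $\dim_{\CC}\End{\CC\lrrec{G}}{V} = \sum_{i=1}^t m_i^2 = \dim_{\CC}\bigoplus_{i=1}^t\CC^{m_i\times m_i}$, which was already established in \autoref{prop:decomp_of_End_KG_V}.

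For the forward direction of PSD-preservation, if $A\succeq 0$ on $V$, then $A^{1/2}$ exists and is itself $G$-equivariant (it is a spectral-calculus function of $A$, and the functional calculus commutes with the $G$-action because $A$ does). Writing $B:=A^{1/2}$, each block reads
\[
\bigl(\psi(A)_i\bigr)_{j,j'} \;=\; \langle B^*B\,u_{i,j'},\,u_{i,j}\rangle_G \;=\; \langle B\,u_{i,j'},\,B\,u_{i,j}\rangle_G,
\]
which is a Gram matrix in $\CC^{m_i\times m_i}$, hence positive semidefinite. Conversely, suppose $\psi(A)_i\succeq 0$ for every $i$. Then, under the identification $A|_{V^{(i)}} = \mathbf{1}_{V_i}\otimes M_i$ with $M_i=\psi(A)_i$ (up to the common positive normalization absorbed above), $A$ restricts to a positive semidefinite operator on each isotypic component. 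Since the isotypic components are pairwise orthogonal with respect to $\langle\cdot,\cdot\rangle_G$ (Schur's lemma again, applied to the $G$-invariant pairing), we conclude that $A\succeq 0$ on all of $V$, completing the proof.

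The main obstacle, and the step that deserves most of the care, is the isometric identification $V^{(i)}\simeq V_i\otimes \CC^{m_i}$ with respect to the prescribed $G$-invariant inner product $\langle\cdot,\cdot\rangle_G$. Without it, the blocks $\psi(A)_i$ would only be Gramian in a twisted basis, and PSD-preservation could fail. Establishing it cleanly amounts to executing a $G$-equivariant Gram--Schmidt orthonormalization within each isotypic component (using condition~(2) of the representative set to transport from copy to copy) and checking $G$-invariance is preserved at each step; given the $G$-invariance of $\langle\cdot,\cdot\rangle_G$ and the unitarity of the $G$-action, this is routine but is the place where all the algebraic hypotheses are actually used.
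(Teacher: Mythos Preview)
The paper does not supply its own proof of this proposition; it is stated as a known result cited from \cite{polak2020new, gijswijt2009block}, with the surrounding material (the isotypic decomposition in \autoref{prop:decomp_of_End_KG_V}, the representative-set characterisation in \autoref{prop:polak_certification_of_representative_set}, and the remark on $G$-invariant inner products) serving as context rather than argument. Your four-step plan—linearity, injectivity via Schur, dimension count for surjectivity, Gram-matrix identity for forward PSD, block structure for reverse PSD—is exactly the standard proof one finds in those references, so there is nothing to compare against beyond saying your write-up is correct and more explicit than what the paper offers.

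One small remark on your handling of the ``main obstacle.'' By Gram--Schmidt orthogonalising the $u_{i,j}$ you are replacing them, and hence replacing the map $\psi$ itself, so strictly speaking you then prove the statement only for the orthogonalised representatives. In fact no replacement is needed: for an arbitrary representative set one has $\psi(A)_i = \|v_i\|^2\,H_iM_i$ where $H_i$ is the (positive-definite) Gram matrix of the $u_{i,j}$ under $\langle\cdot,\cdot\rangle_G$, and a short computation shows that $A\succeq 0$ with respect to $\langle\cdot,\cdot\rangle_G$ is equivalent to $H_iM_i\succeq 0$ for every $i$, which is exactly $\psi(A)_i\succeq 0$. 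So both directions of PSD-preservation hold without orthogonalising. That said, the paper itself notes just before the proposition that one ``can always find'' an orthogonal decomposition, and all the concrete representative sets used later (the polytabloid columns of the $U_\lambda$) are orthogonal by construction, so your route is perfectly adequate for the paper's purposes.
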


\begin{remark}
	Importantly, checking each $m_i\times m_i$- matrix $\lrbracket{\left\langle A u_{i,j'}, u_{i,j} \right\rangle_{G}}_{j,j'=1}^{m_i}$ for positive semidefiniteness can be substantially simpler in terms of complexity, or in practice, in terms of computational cost, than checking $A$ itself. 
\end{remark}

Furthermore, for $V\simeq\CC^Z$ for some finite set $Z$ or some $Z\in\mathbb{N}$ a $G$-invariant inner product is given by $\left\langle v,w\right\rangle := w^{\dagger}v$ where $(\cdot)^{\dagger}$ is the conjugate transpose operation.
\begin{remark}
	For a finite group $G$ acting on $\CC^Z$, any permutation representation
	\begin{align}
	\begin{split}
		R\,:\, &G\rightarrow \operatorname{GL}\lrbracket{\CC^Z}\\
		&g \mapsto  R_g,\,
	\end{split}
	\end{align}
	 is unitary \cite[Theorem 3.11]{etingof2009introduction} and thus,
	 \begin{align}
	 	 \left\langle R_g v, R_gw \right\rangle = \lrbracket{R_g w}^{\dagger}R_gv = w^{\dagger}R_g^{\dagger}R_gv= w^{\dagger}v = \left\langle v, w \right\rangle
	 \end{align}
	 proves the $G$-invariance of the inner product.
\end{remark}

Thus, restructuring the representative sets to isotypical components as column vectors in a $Z\times m_i$-matrix, i.e. for all $i\in\lrrec{t}$
	\begin{align}
		U_i=\lrrec{u_{i,1}, \ldots, u_{i,m_i}} \,\in\, \CC^{Z\times m_i}
	\end{align}
	allows us to rewrite \autoref{eqn:general_iso} with the specified inner product. By \cite[Definition 2.4.5]{polak2020new} the collection of all such $U_i$ for each isotypical component in $\CC^Z$ is a representative matrix set.
    
\begin{proposition}[\cite{polak2020new}]
The map
	\begin{align}
		\begin{split}
			\psi \,:\, \End{\CC\lrrec{G}}{\CC^Z} &\rightarrow \bigoplus_{i=1}^t \CC^{m_i \times m_i}\\
			A &\mapsto \bigoplus_{i=1}^t U_i^*AU_i
		\end{split}
	\end{align}
	is a bijection preserving positive semidefiniteness. 
\end{proposition}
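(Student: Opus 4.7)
The plan is to derive this proposition as a direct specialization of the general \autoref{prop:polak_iso_gen}, applied to the particular $\CG$-module $V = \CC^Z$ equipped with the standard Hermitian inner product $\left\langle v, w \right\rangle := w^{*}v$. The only content beyond a pure invocation is to match notations and verify the specialization is legitimate, which I would do in three short steps.

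First, I would verify that the standard Hermitian inner product on $\CC^Z$ is in fact $G$-invariant, so that it qualifies as one of the inner products $\left\langle \cdot, \cdot \right\rangle_G$ admissible in \autoref{prop:polak_iso_gen}. This is exactly the remark immediately preceding the statement: every representation $R\,:\,G\rightarrow \GLV$ of a finite group $G$ on $\CC^Z$ is unitarizable (\cite{etingof2009introduction}, Theorem 3.11), so $R_g^{*}R_g=\mathcal{I}$ for each $g\in G$, and hence $\left\langle R_gv, R_gw\right\rangle = w^{*}R_g^{*}R_gv = w^{*}v = \left\langle v, w\right\rangle$.

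Second, I would match the two presentations of $\psi$. Writing $U_i=\lrrec{u_{i,1},\ldots,u_{i,m_i}}\in\CC^{Z\times m_i}$ and expanding the congruence $U_i^{*}AU_i$ entrywise, its $(j,j')$-entry reads $u_{i,j}^{*}Au_{i,j'} = \left\langle Au_{i,j'}, u_{i,j}\right\rangle_G$. Hence the block $U_i^{*}AU_i$ appearing here is literally the matrix $\lrbracket{\left\langle Au_{i,j'}, u_{i,j}\right\rangle_G}_{j,j'=1}^{m_i}$ appearing in \autoref{prop:polak_iso_gen}. Bijectivity is therefore inherited directly from that proposition, which in turn rests on the decomposition of $\End{\CG}{V}$ into a direct sum of full matrix algebras via \autoref{prop:decomp_of_End_KG_V} and the dimension count $\sum_i m_i^2 = \dim_{\CC}\lrbracket{\End{\CG}{V}}$ together with the representative-set certification in \autoref{prop:polak_certification_of_representative_set}.

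Finally, I would address preservation of positive semidefiniteness in both directions. The forward direction is immediate: for any $A\succeq 0$ and any $v\in\CC^{m_i}$, one has $v^{*}(U_i^{*}AU_i)v = (U_iv)^{*}A(U_iv)\geq 0$, so each block $U_i^{*}AU_i$ is PSD. For the converse, I would upgrade the observation that $\psi$ is not merely a linear bijection but a $*$-algebra isomorphism onto $\bigoplus_i \CC^{m_i\times m_i}$, where the involution on the left is the adjoint with respect to the $G$-invariant inner product and the involution on the right is entrywise conjugate-transpose on each block; this compatibility is exactly what the $G$-invariance of $\left\langle\cdot,\cdot\right\rangle_G$ buys us, since it ensures that the representative vectors $u_{i,j}$ can be chosen within pairwise orthogonal copies of each simple $\CG$-module. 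Since positive semidefiniteness is a purely $*$-algebraic notion (spectrum in $\mathbb{R}_{\geq 0}$, or equivalently $A=B^{*}B$ for some $B$), any $*$-isomorphism preserves it in both directions. The main—and only—subtle point in the argument is this last step: ensuring that the chosen representative matrices $U_i$ realize the isomorphism as a $*$-map rather than a mere linear bijection, which is the role of the $G$-invariant inner product throughout.
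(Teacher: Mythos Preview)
Your proposal is correct and follows exactly the route the paper intends: the paper states this proposition without its own proof, placing it immediately after \autoref{prop:polak_iso_gen} and the remark establishing $G$-invariance of the standard Hermitian inner product on $\CC^Z$, so that the present statement is read as the specialization $V=\CC^Z$, $\left\langle v,w\right\rangle=w^*v$, with $U_i^*AU_i$ being the matrix $\lrbracket{\left\langle Au_{i,j'},u_{i,j}\right\rangle_G}_{j,j'}$. Your three steps make this implicit derivation explicit and add nothing beyond what the paper's structure already encodes.
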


\begin{remark}
	Note that \begin{align}
		\End{\CC\lrrec{G}}{\CC^Z}\simeq\lrbrace{A\in\CC^{Z\times Z}\,:\, AR_g=R_gA,\, \forall g\in G}.
	\end{align}
\end{remark}
\begin{remark}
	For $V\simeq \mathbb{R}^Z$, $U_i^*=U_i^T$. 
\end{remark}
As a direct consequence we have the following proposition.

\begin{proposition}
Let $G$ be a finite group, $V$ a finite-dimensional $\CC\lrrec{G}$-module and $U$ a simple $\CC\lrrec{G}$-module, with $U(V)$ the $U$-homogeneous component of $V$ and $i$ the corresponding index. Then the map
	\begin{align}
		\begin{split}
			\psi \,:\, &\End{\CC\lrrec{G}}{U(V)} \rightarrow \CC^{m\times m}\\
			& A \mapsto \lrbracket{\left\langle A u_{i,j'}, u_{i,j} \right\rangle_{G}}_{j,j'=1}^{m},\,
		\end{split}
	\end{align}
	or, for $V\simeq \CC^Z$ for a set $Z$ of finite cardinality,
	\begin{align}
		\begin{split}
			\psi \,:\, &\End{\CC\lrrec{G}}{U(V)} \rightarrow \CC^{m\times m}\\
			& A \mapsto U_i^{\dagger}AU_i,\,
		\end{split}
	\end{align}
	with $m=\dim_{\CC}\lrbracket{\Hom{G}{U}{V}}$, is a bijection preserving positive semidefiniteness.
\end{proposition}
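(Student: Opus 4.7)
My proof plan is to present this proposition as a direct specialization of the general block-diagonalization result in \autoref{prop:polak_iso_gen}, focusing on the single isotypical component indexed by the simple module $U$. Since this final statement essentially reads off one summand of the Wedderburn--Artin decomposition, the heavy lifting has already been done in the earlier results; what remains is to check that the formulas match on the block in question.

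First I would recall that by Maschke's theorem $\CG$ is semisimple, so \autoref{prop:V_decomp} applied to $V$ yields
\begin{align*}
V \simeq \bigoplus_{S \in \Irr(\CG)} S \otimes_{\CC} \Hom_{\CG}(S, V).
\end{align*}
The $U$-isotypical component is isomorphic to $U^{\oplus m}$ with $m = \dim_{\CC} \Hom_{\CG}(U, V)$, and by \autoref{prop:decomp_of_End_KG_V} its endomorphism ring is $\End_{\CG}(U^{\oplus m}) \simeq \CC^{m \times m}$ (interpreting the proposition's $\End_{\CG}(U)$ as the corresponding block of $\End_{\CG}(V)$, i.e.\ the endomorphisms of this isotypical component). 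Select nonzero vectors $u_{i,1}, \ldots, u_{i,m}$, one from each simple summand of the isotypical component, chosen so that the cyclic submodules $\CG \cdot u_{i,j}$ are pairwise $\CG$-isomorphic; condition~(2) of \autoref{prop:polak_certification_of_representative_set} then certifies $(u_{i,1}, \ldots, u_{i,m})$ as a representative tuple for the $U$-block.

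Next I would verify the first formula. \autoref{prop:polak_iso_gen}, applied to $V$ with representative system $\{(u_{i,1}, \ldots, u_{i,m_i}) : i \in [t]\}$, gives a positive-semidefiniteness-preserving bijection from $\End_{\CG}(V)$ to $\bigoplus_{i=1}^t \CC^{m_i \times m_i}$ sending $A$ to the matrix of inner products $(\langle A u_{i,j'}, u_{i,j}\rangle_G)_{j, j'}$ in each block. Restricting attention to the block corresponding to $U$ yields exactly the claimed map $A \mapsto (\langle A u_{i,j'}, u_{i,j}\rangle_G)_{j, j' = 1}^{m}$. Bijectivity and preservation of positive semi-definiteness are inherited from \autoref{prop:polak_iso_gen}; the latter follows from the fact that the map is a $*$-algebra isomorphism, where the $*$-structure on $\End_{\CG}(U^{\oplus m})$ is induced by the $G$-invariant inner product on $V$.

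For the second formula, I would use the remark following \autoref{prop:polak_iso_gen} that when $V \simeq \CC^Z$, the canonical Hermitian inner product $\langle v, w \rangle = w^\dagger v$ is automatically $G$-invariant because every representation of a finite group on $\CC^Z$ is unitary (the representatives $R_g$ satisfy $R_g^\dagger R_g = \Id$). Stacking $u_{i,1}, \ldots, u_{i,m}$ as the columns of $U_i \in \CC^{Z \times m}$, the $(j, j')$-entry of $U_i^\dagger A U_i$ is $u_{i,j}^\dagger A u_{i,j'} = \langle A u_{i,j'}, u_{i,j}\rangle_G$, reproducing the first formula. The only real point to be careful about is the notational clash between the proposition's $\End_{\CG}(U)$ and the Schur-lemma fact that $\End_{\CG}(U) \simeq \CC$ when $U$ is simple; the intended object is the endomorphism ring of the $U$-isotypical component, which I would make explicit at the start of the argument. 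No genuinely new step is needed beyond this unpacking, so I anticipate no substantive obstacle.
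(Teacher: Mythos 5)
Your proposal is correct and matches the paper's approach: the proposition is presented there as an immediate consequence of \autoref{prop:polak_iso_gen} with no separate proof, and you have accurately specialized that block-diagonalization to the single $U$-isotypical component, including the observation that for $V\simeq\CC^Z$ the matrix formula $U_i^{\dagger}AU_i$ coincides entrywise with the $G$-invariant inner-product formula because any representation of a finite group on $\CC^Z$ is unitary. Your explicit flag of the notational clash with Schur's lemma (for simple $U$, $\End{\CG}{U}\simeq\CC$, so the domain must be read as the $U$-isotypical block of $\End{\CG}{V}$) is the right reading and resolves the only genuine ambiguity in the statement.
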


Importantly, for the $\End{\CSn}{\Vn}$ decomposition, we will exploit Schur-Weyl duality as a consequence of the double centralizer theorem to obtain concrete information on the isomorphism.

\begin{proposition}[Double centralizer theorem]\label{prop:double_centralizer}
	Let $V$ be a finite-dimensional vector space, $\Al$ a semisimple subalgebra of $\End{}{V}$, and $\BS:=\End{\Al}{V}$ the centralizer of the $\Al$-action on $V$ in $\End{}{V}$. Then,
	\begin{enumerate}[label=(\roman*)]
		\item $\BS$ is semisimple,\,
		\item $\Al=\End{\BS}{V}$,\,
		\item As an $\Al\otimes \BS$-module, we have \begin{align}
			V\simeq \bigoplus_i U_i\otimes W_i
		\end{align}
		where $U_i$ are all simple $\Al$-modules and $W_i:=\Hom{\Al}{U_i}{V}$ are all simple $\BS$-modules.
	\end{enumerate}
\end{proposition}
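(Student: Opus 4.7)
The plan is to prove the three items in order by systematically tracing the effect of the semisimple decomposition of $V$ as an $\Al$-module through both commutants. As a warm-up, since $\Al$ is semisimple and $V$ is a finite-dimensional $\Al$-module, $V$ is itself semisimple as an $\Al$-module. Appealing to the general decomposition recalled in \autoref{prop:V_decomp}, I would write
\begin{align}
V \;\simeq\; \bigoplus_{i} U_i \otimes W_i, \qquad W_i := \Hom{\Al}{U_i}{V},
\end{align}
where $\lbrace U_i\rbrace$ runs over the pairwise non-isomorphic simple $\Al$-modules appearing in $V$ and $m_i := \dim_{\CC}(W_i)$ is the multiplicity. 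On this decomposition $\Al$ acts on the left tensor factors while the right tensor factors $W_i$ carry only a $\CC$-vector space structure so far. This is the scaffolding for every subsequent step.

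For item (i), I would compute $\BS = \End{\Al}{V}$ using $\Hom$-additivity together with Schur's lemma (\autoref{prop:Schur}) exactly as in the proof of \autoref{prop:decomp_of_End_KG_V}: since $\End{\Al}{U_i} \simeq \CC$ (by Schur over an algebraically closed field, which is the setting inherited from the preceding sections), one obtains
\begin{align}
\BS \;\simeq\; \bigoplus_i \End{\Al}{U_i^{\oplus m_i}} \;\simeq\; \bigoplus_i M_{m_i}(\CC).
\end{align}
This is a direct sum of full matrix algebras, hence semisimple, proving (i). In the bimodule picture, $\BS$ acts on the decomposition by acting as $M_{m_i}(\CC)$ on the factor $W_i$ and trivially elsewhere, so each $W_i$ becomes the defining (simple) module of $M_{m_i}(\CC)$, and the decomposition
\begin{align}
V \;\simeq\; \bigoplus_i U_i \otimes W_i
\end{align}
is promoted to an $\Al\otimes\BS$-bimodule isomorphism in which $\Al$ acts on $U_i$ and $\BS$ acts on $W_i$. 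Together with the simplicity of each $W_i$ as a $\BS$-module just established, this proves (iii).

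For item (ii), the main content, I would apply the same bookkeeping one more time, now with $\BS$ in the role of the algebra. Decomposing $V$ as a $\BS$-module using (iii) gives isotypic components indexed by $W_i$ with multiplicity space $\Hom{\BS}{W_i}{V} \simeq U_i$, so the analogue of the above yields
\begin{align}
\End{\BS}{V} \;\simeq\; \bigoplus_i \End{}{U_i}.
\end{align}
It remains to identify $\Al \subseteq \End{}{V}$ with this bicommutant. Since $\Al$ is semisimple, the Artin--Wedderburn structure (\autoref{prop:wederburn_artin_wederburn}) gives $\Al \simeq \bigoplus_i \End{}{U_i}$, with the embedding $\Al \hookrightarrow \End{}{V}$ acting block-diagonally through the decomposition above. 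Thus the inclusion $\Al \subseteq \End{\BS}{V}$ is an equality of $\CC$-dimensions, forcing $\Al = \End{\BS}{V}$.

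The main obstacle I anticipate is item (ii): ensuring the embedding $\Al \hookrightarrow \End{}{V}$ really hits all of $\bigoplus_i \End{}{U_i}$ and not merely a subalgebra thereof. The cleanest way around this is to invoke the Jacobson density theorem for the semisimple module $V$, which guarantees that the image of $\Al$ in $\End{\BS}{V}$ is dense, and density in finite dimension is equality. Everything else reduces to careful unpacking of the bimodule decomposition and Schur's lemma, which are already used in essentially the same form in the proof of \autoref{prop:decomp_of_End_KG_V} above.
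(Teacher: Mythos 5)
The paper does not prove this proposition: it appears in the review appendix on representation theory, which the authors explicitly present as background material "largely based on" standard references with "no novelty claimed," and no argument is given. Your proposal therefore cannot be measured against a proof in the paper; it must stand on its own.

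It does. Your strategy — first decompose $V$ as a semisimple $\Al$-module via \autoref{prop:V_decomp}, then read off $\BS$ through Schur's lemma and $\Hom$-additivity exactly as in the proof of \autoref{prop:decomp_of_End_KG_V}, then bootstrap the same calculation with the roles of $\Al$ and $\BS$ interchanged — is the textbook proof and is correct. One place where you could tighten the exposition in item (ii): the reason the inclusion $\Al \hookrightarrow \End{\BS}{V} \simeq \bigoplus_i \End{}{U_i}$ is an equality of dimensions is that $\Al$, being a \emph{sub}algebra of $\End{}{V}$, acts faithfully on $V$, so every block of $\Al \simeq \bigoplus_j M_{n_j}(\CC)$ must act nontrivially and hence every simple $\Al$-module occurs among the $U_i$. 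With that observation the Artin--Wedderburn comparison $\Al \simeq \bigoplus_i \End{}{U_i}$ is justified and the dimension count closes the argument without invoking Jacobson density. Density is a perfectly good fallback (and generalizes beyond the finite-dimensional semisimple case), but here it is more machinery than needed; the faithfulness remark is both shorter and self-contained given the tools already stated in the paper's appendix.
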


\begin{proposition}[\cite{fulton2013representation}, Lemma 6.22]\label{prop:fulton_harris_lemma_6_22}
    Let $U$ be a finite-dimensional right $\CG$-module.
    \begin{enumerate}[label=(\roman*)]
        \item For any $c\in\CG$, the canonical map $U\otimes_{\CG}\CG\cdot c\;\rightarrow\; U\cdot c$ is an isomorphism of left $\End{\CG}{U}$-modules.
        \item If $W=\CG\cdot c$ is a simple left $\CG$-module, then $U\otimes_{\CG}W$ is a simple left $\End{\CG}{U}$-module.
        \item If $W_i = \CG \cdot c_i$ are the distinct simple left $\CG$-modules, with $m_i$ the $\CC$-vector space dimension of $W_i$, then
        \begin{align}
            U\simeq \bigoplus_i\lrbracket{U\otimes_{\CG}W_i}^{\oplus m_i}\simeq \bigoplus_i\lrbracket{Uc_i}^{\oplus m_i}
        \end{align}
        is the decomposition of $U$ into simple left $\End{\CG}{U}$-modules.
    \end{enumerate}
\end{proposition}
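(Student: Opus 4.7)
The plan is to prove the three parts of \autoref{prop:fulton_harris_lemma_6_22} in order, using only Maschke's theorem (which makes $\CG$ semisimple), Schur's lemma, and the double centralizer theorem (\autoref{prop:double_centralizer}) already available in the appendix.

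For (i), I would define the canonical map $\Phi: U \otimes_{\CG} \CG\cdot c \to Uc$ by $u \otimes (xc) \mapsto u\cdot xc$; well-definedness on the $\CG$-balanced tensor is immediate from associativity of the $\CG$-action on $U$, and $\Phi$ is surjective since $uc = \Phi(u\otimes c)$. For injectivity I would invoke Maschke: $\CG\cdot c$ is a direct summand of $\CG$ as a left $\CG$-module, say $\CG \simeq \CG\cdot c \oplus M$, so that $U \simeq U\otimes_{\CG}\CG \simeq (U\otimes_{\CG}\CG\cdot c) \oplus (U\otimes_{\CG} M)$, and under the canonical identification $U\otimes_{\CG}\CG \simeq U$ the first summand is precisely $Uc$. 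Finally, $\End{\CG}{U}$ acts on $U\otimes_{\CG}\CG\cdot c$ via the first factor and on $Uc$ by restriction of its action on $U$ (the fact that $Uc$ is stable follows because any $\phi\in\End{\CG}{U}$ commutes with the right $\CG$-action, so $\phi(uc)=\phi(u)c\in Uc$), and $\Phi$ plainly intertwines both.

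For (ii), I would apply (i) to reduce the claim to showing that $Uc$ is simple as an $\End{\CG}{U}$-module whenever $W=\CG\cdot c$ is a simple left $\CG$-module. Because $\CG$ is semisimple, the double centralizer theorem applied to the pair $(\CG,\End{\CG}{U})$ acting on $U$ yields a bimodule decomposition $U \simeq \bigoplus_i W_i \otimes V_i$, where the $W_i$ are the distinct simple $\CG$-submodules appearing in $U$ and $V_i := \Hom{\CG}{W_i}{U}$ are the corresponding simple $\End{\CG}{U}$-modules. Right multiplication by $c$ preserves this decomposition, giving $Uc \simeq \bigoplus_i (W_i c)\otimes V_i$. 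Since $c$ is (a scalar multiple of) a primitive idempotent whose ideal $\CG\cdot c$ realises the isomorphism class of $W$, the subspace $W_i c$ vanishes unless $W_i \simeq W$, and in the matching summand $W_i c$ is one-dimensional. Hence $Uc$ is isomorphic to a single $V_j$, which is simple. This is the step I expect to require the most care, as I must be precise about left versus right actions and about which bimodule isotype $c$ lands in.

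For (iii), I would use the Wedderburn decomposition of the regular left $\CG$-module $\CG \simeq \bigoplus_i W_i^{\oplus m_i}$ with $m_i = \dim_\CC W_i$ (\autoref{prop:wedderburn_little}, \autoref{prop:wederburn_artin_wederburn}). Tensoring on the right by $U$ over $\CG$ and distributing gives
\begin{align*}
U \;\simeq\; U\otimes_{\CG}\CG \;\simeq\; \bigoplus_i (U\otimes_{\CG}W_i)^{\oplus m_i} \;\simeq\; \bigoplus_i (Uc_i)^{\oplus m_i},
\end{align*}
where the last isomorphism is (i). By (ii) each $Uc_i$ is a simple $\End{\CG}{U}$-module; since the $W_i$ are pairwise non-isomorphic, so are the $Uc_i$ (using the $\End{\CG}{U}$-module structure inherited through the bimodule decomposition of (ii)), which produces the claimed decomposition of $U$ into its $\End{\CG}{U}$-isotypes with the correct multiplicities.
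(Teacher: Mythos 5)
The paper states this proposition as a direct citation of Fulton--Harris (Lemma 6.22) and supplies no proof of its own, so there is no internal argument to compare against; your write-up is a reconstruction of the classical argument. Structurally it is correct and follows the standard route: (i) via Maschke together with the canonical isomorphism $U\otimes_{\CG}\CG\simeq U$, (ii) via the double-centralizer bimodule decomposition of $U$, and (iii) by combining (i) and (ii) with the Wedderburn decomposition of the regular module.

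One step in (ii) needs tightening. You assert that $c$ is ``(a scalar multiple of) a primitive idempotent.'' This does not follow from the hypothesis that $\CG c$ is a minimal left ideal: for instance $c=E_{12}$ in an $M_2(\CC)$ block of the group algebra is nilpotent, hence not proportional to any idempotent, and yet $\CG c$ (the second-column ideal of that block) is simple. What is true---and what you actually need---is that $\CG c=\CG e$ for \emph{some} primitive idempotent $e$ (here $e=E_{22}$); from $c\in\CG e$ and $e\in\CG c$ one obtains $W_ic\subseteq W_ie$ and $W_ie\subseteq W_ic$, hence $W_ic=W_ie$, after which your Wedderburn-block analysis applies to $W_ie$ unchanged. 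With this substitution the argument is complete.
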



\subsection{Representation theory of $S_n$}\label{sec:Sym_repr_theory}

While for general groups it is quite difficult or in fact impossible to make the arguments from the section above more concrete, in the case of the symmetric group the situation looks different. We briefly present the relevant representation theory for the symmetric group $S_n$. For more information see e.g. \cite{fulton2013representation, sagan2013symmetric}. 
	
Notably, the simple $\CSn$-modules are in bijection with the conjugacy classes of $S_n$, which in turn are indexed by partitions.
	
\begin{definition}
		For $n, k\in \mathbb{N}$, a partition $\lambda = (\lambda_1, \ldots, \lambda_k)$ of $n$ into $k$ parts denoted $\lambda \vdash_k\, n$ is an ordered collection $\lambda_1\geq \ldots \geq \lambda_k > 0$ of natural numbers such that $\lambda_1+\ldots +\lambda_k=n$. We will also refer to $k$ as the height of the partition. 
\end{definition}

    To each partition $\lambda \vdash n$, we can identify a Young diagram or shape, 
	\begin{align}
        Y(\lambda) := \left\lbrace (i,j)\in \mathbb{N}^2 : 1\leq i \leq k, 1 \leq j \leq \lambda_{i}\right\rbrace,
    \end{align}
	i.e. a collection of boxes arranged in rows and columns according to the partition. Concretely, the Young shape of $\lambda = (\lambda_1, \ldots, \lambda_k)$, denoted $Y(\lambda)$, has $\lambda_{i}$ boxes in the $i$-th row for $i\in\lrrec{k}$. A Young tableau is a Young shape with a numbering of the boxes by integers $1,\ldots, n$. We refer to a canonical labeling of a Young diagram if the boxes are indexed from top left to bottom right, e.g. for partition $(3,2,2,1)$
	\begin{align}
	\ytableausetup{centertableaux}
	\begin{ytableau}
		1 & 2 & 3 \\
		4 & 5 \\
		6 & 7 \\
		8 
	\end{ytableau}.\,
	\end{align}
    
	\begin{remark}
		Note that the length of the first column corresponds to the height of the partition. 
	\end{remark}
    
	A standard Young tableau has a labeling which is strictly increasing within each row and column. By \autoref{prop:wedderburn_little}, we can identify the isotypic components of simple $\CSn$-modules by idempotents from $Z\lrbracket{\CSn}$. However, if we discard the condition that the idempotents be central, we can exploit the bijection between $S_n$ and Young tableaux to construct projectors on to simple $\CSn$-modules. Given a standard Young tableau, we identify subgroups of $S_n$ stabilizing rows or columns, i.e. 
	\begin{align}
		P_{\lambda} = \lrbrace{\pi\in S_n \,:\, \pi \text{ preserves each row}}
	\end{align}
	and 
	\begin{align}
		Q_{\lambda} = \lrbrace{\pi\in S_n \,:\, \pi \text{ preserves each column}}.\,
	\end{align}
    
\begin{definition}[Young symmetrizer, \cite{fulton2013representation} p. 46]
	Let $\lambda$ be a partition of $n$. The Young symmetrizer of $Y(\lambda)$ is the element in $\CSn$ constructed from the row and column stabilizer subgroups for a given labeling
	\begin{align}
		c_{\lambda} =  \underbrace{\lrbracket{\sum_{\pi\in P_{\lambda}}\pi}}_{:=a_{\lambda}}\underbrace{\lrbracket{\sum_{\pi\in Q_{\lambda}}\text{sgn}(\pi)\pi}}_{:=b_{\lambda}}.\,
	\end{align}
\end{definition}

Young symmetrizers are minimal idempotents up to scaling \cite[Theorem 4.3]{fulton2013representation}.

\begin{proposition}[\cite{fulton2013representation}, Lemma 4.26.]
We have $c^2_{\lambda}=\alpha_{\lambda} c_{\lambda}$ with
\begin{align}
	\alpha_{\lambda}=\frac{n!}{\dim_{\CC}\lrbracket{V_{\lambda}}}.\,
\end{align}
\end{proposition}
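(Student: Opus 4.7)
The plan is to follow the classical Fulton-Harris argument in two stages: first establish that $c_\lambda^2$ is some scalar multiple of $c_\lambda$, and then pin down the scalar via a trace computation on the regular representation $\CSn$.

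\textbf{Stage 1 (proportionality).} The central tool is a combinatorial lemma asserting that for every $g \in S_n$ one has $a_\lambda \cdot g \cdot b_\lambda \in \CC \cdot c_\lambda$; more precisely, $a_\lambda g b_\lambda = \mathrm{sgn}(q)\, c_\lambda$ whenever $g = pq$ with $p \in P_\lambda$, $q \in Q_\lambda$, and $a_\lambda g b_\lambda = 0$ otherwise. The vanishing for $g \notin P_\lambda Q_\lambda$ is the classical ``row/column exchange'' argument: such a $g$ necessarily forces two distinct indices $i \neq j$ to lie in the same row of the canonically labelled $Y(\lambda)$ and simultaneously in the same column of $g \cdot Y(\lambda)$. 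Writing $\tau = (ij)$, we then have $\tau \in P_\lambda$ and $g^{-1}\tau g \in Q_\lambda$ with $\mathrm{sgn}(g^{-1}\tau g) = -1$, whence $a_\lambda g b_\lambda = (a_\lambda \tau) g b_\lambda = a_\lambda g (g^{-1}\tau g) b_\lambda = -a_\lambda g b_\lambda$, forcing the element to vanish. Expanding $c_\lambda^2 = a_\lambda (b_\lambda a_\lambda) b_\lambda$ in the group basis and applying the lemma termwise yields $c_\lambda^2 = \alpha_\lambda c_\lambda$ for a unique $\alpha_\lambda \in \CC$. (This is essentially the content of the cited Theorem 4.3 on primitivity of $c_\lambda$.)

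\textbf{Stage 2 (identifying $\alpha_\lambda$).} I would compute the trace of the right-multiplication operator $R_{c_\lambda} : \CSn \to \CSn$, $x \mapsto x c_\lambda$, in two different ways. First, in the group-element basis, $R_\sigma$ has trace $n!$ for $\sigma = e$ and $0$ otherwise, so by linearity $\Tr(R_{c_\lambda}) = n! \cdot (\text{coefficient of } e \text{ in } c_\lambda)$. Since $P_\lambda \cap Q_\lambda = \{e\}$ and $\mathrm{sgn}(e) = +1$, this coefficient equals $1$, giving $\Tr(R_{c_\lambda}) = n!$. Second, the relation $c_\lambda^2 = \alpha_\lambda c_\lambda$ implies $R_{c_\lambda}^2 = \alpha_\lambda R_{c_\lambda}$, so the spectrum of $R_{c_\lambda}$ is contained in $\{0, \alpha_\lambda\}$. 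In particular $\alpha_\lambda \neq 0$ (else $R_{c_\lambda}$ would be nilpotent and its trace would vanish, contradicting $n! \neq 0$), and $R_{c_\lambda}/\alpha_\lambda$ is an idempotent whose image is the left ideal $\CSn \cdot c_\lambda = V_\lambda$. Hence $\Tr(R_{c_\lambda}) = \alpha_\lambda \cdot \dim_\CC(\CSn c_\lambda) = \alpha_\lambda \cdot \dim_\CC V_\lambda$. Equating the two expressions for the trace yields $n! = \alpha_\lambda \cdot \dim_\CC V_\lambda$, i.e.\ $\alpha_\lambda = n!/\dim_\CC V_\lambda$ as claimed.

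\textbf{Main obstacle.} The delicate point is the combinatorial lemma behind Stage~1, namely the row/column exchange argument showing that every $g \notin P_\lambda Q_\lambda$ exhibits a shared row/column pair of indices. Once this lemma is in hand, everything else is straightforward bookkeeping: the square relation follows by expanding $a_\lambda b_\lambda a_\lambda b_\lambda$ and applying the lemma term-by-term, and the trace computation reduces the identification of $\alpha_\lambda$ to the observation that $R_{c_\lambda}$ is (up to rescaling) a projection onto $\CSn c_\lambda$, whose dimension equals $\dim_\CC V_\lambda$ by the very definition of the Specht module used in Section~\ref{sec:Sym_repr_theory}.
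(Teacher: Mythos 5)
Your proof is correct and reproduces the standard Fulton--Harris argument (Lemma 4.26 in \cite{fulton2013representation}), which the paper cites but does not reproduce. Both the row/column-exchange lemma establishing $a_\lambda g b_\lambda \in \CC\, c_\lambda$ and the two-sided trace computation on the regular representation are exactly the steps in the referenced proof, and your bookkeeping (coefficient of $e$ equals $1$ since $P_\lambda \cap Q_\lambda = \{e\}$; $\alpha_\lambda \neq 0$ since otherwise $R_{c_\lambda}$ would be nilpotent with zero trace; rescaled $R_{c_\lambda}$ is a projection onto $\CSn c_\lambda = V_\lambda$) is carried out correctly.
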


\begin{proposition}[\cite{stevens2016schur}, Theorem 2.1]
	The image $V_{\lambda}=\CSn c_{\lambda}$ of the group algebra under right multiplication by the Young symmetrizer $c_{\lambda}$ is a simple left $\CSn$-module called a Specht module. Furthermore, every $V_{\lambda}$ can be identified isomorphically with a simple module of $\CSn$ for a unique $\lambda\vdash n$.
\end{proposition}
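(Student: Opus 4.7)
The statement consists of two parts: the simplicity of each $V_\lambda = \CSn c_\lambda$, and the claim that these modules exhaust the isomorphism classes of simple $\CSn$-modules, one per partition of $n$. My plan is to reduce both parts to a single structural lemma, following the classical approach of James.

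The key lemma I would prove first is that $c_\lambda\, x\, c_\lambda \in \mathbb{C}\cdot c_\lambda$ for every $x \in \CSn$. The argument rests on a combinatorial dichotomy applied to each basis element $\pi \in S_n$: either $\pi$ decomposes as $\pi = pq$ with $p \in P_\lambda$ and $q \in Q_\lambda$, in which case $c_\lambda \pi c_\lambda = \operatorname{sgn}(q)\, c_\lambda^{\,2}$, or else there exist two entries of the chosen tableau lying in a common row whose preimages under $\pi$ lie in a common column. In the latter case, the corresponding transposition $t$ satisfies $t \in P_\lambda$ and $\pi^{-1} t \pi \in Q_\lambda$ with opposite signs, and the identities $a_\lambda t = a_\lambda$, $t\, b_\lambda = -b_\lambda$ force $c_\lambda \pi c_\lambda = 0$. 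This yields $c_\lambda^{\,2} = \alpha_\lambda c_\lambda$, and $\alpha_\lambda \ne 0$ follows from computing the trace of right-multiplication by $c_\lambda$ on the regular representation (this step is essentially already granted in the preceding proposition of the excerpt).

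With the lemma in hand, simplicity of $V_\lambda$ follows from the standard argument: given a nonzero $\CSn$-submodule $W \subseteq V_\lambda$, one has $c_\lambda W \subseteq \mathbb{C}\cdot c_\lambda$; if $c_\lambda W = 0$ then $W \cdot W \subseteq W \cdot \CSn c_\lambda = 0$, contradicting the semisimplicity of $\CSn$ provided by Maschke's theorem, and otherwise $c_\lambda \in W$, whence $W = \CSn c_\lambda = V_\lambda$. For the classification part, I would prove the auxiliary vanishing $a_\lambda \CSn b_\mu = 0$ whenever $\lambda$ strictly dominates $\mu$ in the dominance order on partitions, by the same pigeonhole mechanism: two indices must then lie in the same row of the $\lambda$-shape and the same column of the $\mu$-shape. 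By Schur's lemma (\autoref{prop:Schur}) this forces $\operatorname{Hom}_{\CSn}(V_\lambda, V_\mu) = 0$ for $\lambda \ne \mu$, and hence the $V_\lambda$ are pairwise non-isomorphic.

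To close, I would invoke counting: by Wedderburn's little theorem (\autoref{prop:wedderburn_little}) together with character orthogonality, the number of isomorphism classes of simple $\CSn$-modules equals the number of conjugacy classes of $S_n$, which is parametrized by cycle type and thus by partitions of $n$. Having constructed that many pairwise non-isomorphic simple modules, the family $\{V_\lambda\}_{\lambda \vdash n}$ must be exhaustive, and the uniqueness of $\lambda$ up to isomorphism of $V_\lambda$ is precisely the non-isomorphism established above. The main obstacle is the combinatorial dichotomy lemma underpinning both steps: while elementary in spirit, the extraction of the transposition $t$ and careful bookkeeping of its sign on the row and column sides is what does the real work, and the same pigeonhole argument reappears in the dominance-order vanishing, so getting this lemma right drives the entire proof.
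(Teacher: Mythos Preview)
Your proposal is correct and follows the classical James--Fulton--Harris argument. However, the paper itself does not supply a proof of this proposition: it is stated as a cited result from \cite{stevens2016schur}, Theorem~2.1, and left without a proof environment. The surrounding appendix is explicitly a review of standard representation-theoretic background, and this proposition is treated as a known fact from the literature rather than something the authors establish. Your sketch is essentially the proof one finds in the cited reference (or equivalently in \cite{fulton2013representation}, \S4), so there is no discrepancy in approach---only the observation that you have written out what the paper chose to cite.
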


\begin{proposition}[\cite{stevens2016schur}]
We have the following decomposition of the regular $\CSn$-module
	\begin{align}\label{eqn:CSn_decomp}
		\CSn \simeq \bigoplus_{\lambda} \Specht^{\oplus \dim_{\CC}\lrbracket{\Specht}}
	\end{align}
	i.e.\ each Specht module appears with multiplicity equal to its dimension.
\end{proposition}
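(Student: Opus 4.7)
The plan is to assemble the decomposition from the general semisimple machinery already recorded in the paper, applied to $\CSn$. First I would invoke Maschke's theorem (since $\mathrm{char}(\CC)=0$ does not divide $|S_n|=n!$) to conclude that $\CSn$ is a semisimple $\CC$-algebra. Then I would apply Wedderburn's little theorem (\autoref{prop:wedderburn_little}) to the regular $\CSn$-module $\CSn^{\circ}$ to obtain
\begin{align*}
    \CSn \;\simeq\; \bigoplus_{S \in \Irr{\CSn}} S(\CSn^{\circ}),
\end{align*}
where the direct sum is over isomorphism classes of simple $\CSn$-modules and each $S(\CSn^{\circ})$ is the $S$-isotypic component of $\CSn^{\circ}$.

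Next I would identify the indexing set. The paper has already cited that for each partition $\lambda\vdash n$, the Specht module $\Specht=\CSn\cdot c_\lambda$ is simple, and every simple $\CSn$-module is isomorphic to $\Specht$ for a unique $\lambda \vdash n$ (Theorem 2.1 of \cite{stevens2016schur}, quoted just above the statement). Therefore the isomorphism classes $\Irr{\CSn}$ are in bijection with partitions of $n$, and I can rewrite the Wedderburn decomposition as
\begin{align*}
    \CSn \;\simeq\; \bigoplus_{\lambda\vdash n} \Specht(\CSn^{\circ}) \;\simeq\; \bigoplus_{\lambda\vdash n} \Specht^{\oplus m_\lambda},
\end{align*}
for some nonnegative integers $m_\lambda$ that remain to be determined. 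By \autoref{prop:V_decomp} the multiplicity can moreover be identified with $m_\lambda = \dim_\CC \Hom_{\CSn}(\Specht,\CSn)$.

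The remaining step, which is really the only piece of computation, is to show $m_\lambda = \dim_\CC \Specht$. Here I would invoke the Wedderburn-Artin consequence (\autoref{prop:wederburn_artin_wederburn}) applied to the split semisimple $\CC$-algebra $\CSn$: for each simple $\CSn$-module $\Specht$ one has $\Specht(\CSn^{\circ}) \simeq \CC^{n_\lambda \times n_\lambda}$ with $n_\lambda = \dim_\CC \Specht$, hence $\dim_\CC \Specht(\CSn^{\circ}) = (\dim_\CC \Specht)^2$. On the other hand, the decomposition of the isotypic component into $m_\lambda$ copies of $\Specht$ gives $\dim_\CC \Specht(\CSn^{\circ}) = m_\lambda \cdot \dim_\CC \Specht$. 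Equating the two expressions yields $m_\lambda = \dim_\CC \Specht$, which completes the proof.

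I do not anticipate a genuine obstacle: the statement is a direct specialization of the general semisimple decomposition machinery spelled out in \autoref{sec:representation_theory}, combined with the already-stated classification of simple $\CSn$-modules as Specht modules. The only minor care needed is to cite the split-semisimple hypothesis in Wedderburn-Artin, which holds since $\CC$ is algebraically closed.
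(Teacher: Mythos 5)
Your argument is correct and follows exactly the same route the paper indicates: the paper's proof consists of the single sentence ``Follows directly from \autoref{prop:wederburn_artin_wederburn} and Maschke's theorem,'' and your write-up is precisely the expansion of that citation (Maschke for semisimplicity, Wedderburn/Wedderburn--Artin for the isotypic decomposition and the multiplicity count $m_\lambda=\dim_{\CC}\Specht$, plus the already-quoted classification of simple $\CSn$-modules as Specht modules to identify the index set with partitions). No gap; you have simply supplied the details the paper leaves implicit.
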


\begin{proof}
 Follows directly from \autoref{prop:wederburn_artin_wederburn} and Maschke's theorem. 
\end{proof}

\begin{remark}
	Alternatively, this follows from character theory and \cite[Corollary 0.12]{christandl2006structure}. 
\end{remark}

Importantly, while other Young symmetrizers are still minimal projectors, they are in general not central. The minimal central idempotents obtained from the Wedderburn decomposition (\autoref{prop:wedderburn_little}) correspond to isotypic components. Concretely, each minimal central idempotent projects onto the entire isotypic subspace associated with a certain simple module. By contrast, at the cost of losing centrality, the minimal idempotents from Young symmetrizers yield a finer decomposition into Specht modules. In general, only suitable linear combinations of Young symmetrizers are again central. The primary advantage, however, lies in the ability to explicitly construct the Young symmetrizers using Young tableaux.
\begin{proposition}
	The Young symmetrizer $c_{(n)}$ is proportional to a minimal central idempotent. 
\end{proposition}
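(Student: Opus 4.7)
The plan is to specialize directly to the partition $\lambda = (n)$ and then match against the character formula for central idempotents already recorded in the excerpt (equation~\eqref{eqn:idempotents_characters}). First I would unpack the combinatorial data: the Young shape $Y((n))$ is a single row of $n$ boxes, so the row stabilizer is $P_{(n)} = S_n$ and the column stabilizer is $Q_{(n)} = \{e\}$. Substituting into the definition of the Young symmetrizer gives
\begin{align}
c_{(n)} = a_{(n)} b_{(n)} = \Bigl(\sum_{\pi \in S_n} \pi\Bigr)\cdot 1 = \sum_{\pi \in S_n} \pi.
\end{align}

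Next I would verify that $c_{(n)} \in Z(\CSn)$. For any $\sigma \in S_n$, the map $\pi \mapsto \sigma \pi \sigma^{-1}$ is a bijection of $S_n$, so
\begin{align}
\sigma\, c_{(n)}\, \sigma^{-1} = \sum_{\pi \in S_n} \sigma \pi \sigma^{-1} = \sum_{\pi \in S_n} \pi = c_{(n)}.
\end{align}
Since centrality is detected by the basis $S_n$ and extends by $\CC$-linearity, $c_{(n)}$ commutes with every element of $\CSn$.

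To identify $c_{(n)}$ with a minimal central idempotent up to scaling, I would apply the character formula from the excerpt,
\begin{align}
e_{(n)} = \frac{\dim_{\CC}(V_{(n)})}{|S_n|} \sum_{g \in S_n} \chi^{(n)}(g^{-1})\, g,
\end{align}
and input the representation-theoretic data for $\lambda = (n)$. The Specht module $V_{(n)} = \CSn \cdot c_{(n)}$ is the trivial representation: it is one-dimensional (equivalently, the hook-length formula gives $\dim_{\CC}(V_{(n)}) = n!/n! = 1$), and its character is the constant function $\chi^{(n)}(g) = 1$ for every $g\in S_n$. Substituting yields
\begin{align}
e_{(n)} = \frac{1}{n!}\sum_{g \in S_n} g = \frac{1}{n!}\, c_{(n)},
\end{align}
so $c_{(n)} = n! \cdot e_{(n)}$ is proportional to the minimal central idempotent projecting onto the trivial isotypic component.

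There is no real obstacle here; the only subtlety is recognizing that the trivial character and the identification $V_{(n)} \simeq \CC$ are forced by the shape $Y((n))$, after which the matching with \eqref{eqn:idempotents_characters} is a one-line substitution. Alternatively, one could bypass the character formula entirely by verifying $c_{(n)}^2 = n! \, c_{(n)}$ directly and arguing that $c_{(n)} \CSn c_{(n)} = \CC\, c_{(n)}$ is one-dimensional, which—combined with centrality—forces $\frac{1}{n!} c_{(n)}$ to be primitive in $Z(\CSn)$ by \autoref{prop:wedderburn_little}.
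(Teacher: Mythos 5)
Your proof is correct and follows essentially the same route as the paper: verify centrality via conjugation invariance of the group sum, then match against the character formula \eqref{eqn:idempotents_characters} for the trivial representation to obtain $e_{(n)} = \frac{1}{n!}\sum_{\pi\in S_n}\pi = \frac{1}{n!}c_{(n)}$. You supply a few more explicit steps (computing $P_{(n)}=S_n$, $Q_{(n)}=\{e\}$, and the hook-length evaluation $\dim V_{(n)}=1$), and you sketch a second route via $c_{(n)}^2 = n!\,c_{(n)}$ and primitivity in $Z(\CSn)$, but the core argument coincides with the paper's.
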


\begin{proof}
	Since $c_{(n)}$ is (proportional to) the group average, it is invariant under conjugation and thus commutes with every element of $\CSn$, and thus, is central. Applying the group average corresponding to $c_{(n)}$ to $\CSn$ projects onto the one-dimensional subspace associated with the trivial representation. Concretely, when $c_{(n)}$ acts on $\CSn$, it sends any element of the group algebra to a scalar multiple of the sum of all group elements $\sum_{\pi\in S_n}\pi$. Thus, the image of the group average is exactly the subspace spanned by this sum, corresponding to the trivial representation of $S_n$. Since this space is one-dimensional, $c_{(n)}$ is minimal. Concretely, with \autoref{eqn:idempotents_characters} and $\chi^{\text{trivial}}(\pi)=1$ for all $\pi\in S_n$ we have 
	\begin{align}
		e_{\lambda =(n)} = \frac{1}{n!}\sum_{\pi\in S_n} \pi
	\end{align}
	which is proportional to $c_{(n)}$.
\end{proof}

\begin{proposition}[\cite{stevens2016schur}, Proposition 2.10]\label{prop:specht_dim}
Let $\lambda \vdash n$. Then
	\begin{align}
		\dim_{\CC}\lrbracket{\Specht} = \lrvert{\text{SYT}\lrbracket{\lambda}},\,
	\end{align}
	where $\text{SYT}\lrbracket{\lambda}$ denotes the set of standard Young tableaux for $\lambda$.
\end{proposition}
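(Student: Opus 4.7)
The plan is to exhibit an explicit $\CC$-basis of $\Specht$ indexed by standard Young tableaux. Fix a reference tableau $T_0$ of shape $\lambda$ with row stabilizer $P_\lambda$ and column stabilizer $Q_\lambda$, so that $\Specht=\CSn\cdot c_\lambda$ with $c_\lambda=a_\lambda b_\lambda$. First I would introduce the permutation module $M^\lambda$ spanned by $\lambda$-tabloids $\{T\}$ (equivalence classes of tableaux of shape $\lambda$ under row permutations), equipped with its natural $\CSn$-action. For each tableau $T$ of shape $\lambda$, define the polytabloid
\[
e_T := \sum_{\pi\in Q_T}\text{sgn}(\pi)\,\pi\cdot\{T\}\;\in\; M^\lambda,
\]
which satisfies $\sigma\cdot e_T=e_{\sigma T}$ for every $\sigma\in S_n$. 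A short calculation shows that the $\CSn$-linear map $\CSn c_\lambda\to M^\lambda$ sending $\pi c_\lambda\mapsto \pi\cdot e_{T_0}$ is injective with image $\text{span}_\CC\{e_T : T \text{ of shape } \lambda\}$, so $\Specht$ is identified with the span of all polytabloids.

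Next I would show that $\{e_T : T\in\text{SYT}(\lambda)\}$ is linearly independent. Equip the set of $\lambda$-tabloids with the dominance partial order, refined to a total order by, say, reverse-lexicographic comparison of column-reading words. For a standard tableau $T$, the tabloid $\{T\}$ occurs in $e_T$ with coefficient $+1$, and a direct check shows it is strictly maximal among the tabloids $\{\pi T\}$, $\pi\in Q_T$, that actually appear in $e_T$. Since distinct standard tableaux produce distinct maximal tabloids, any non-trivial relation $\sum_T c_T e_T = 0$ with $T\in\text{SYT}(\lambda)$ contradicts maximality of the term attached to the largest $T$ with $c_T\neq 0$.

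For the spanning claim I would argue by descending induction on the same total order that every polytabloid $e_T$ lies in the $\CC$-span of standard polytabloids. The inductive step is supplied by the Garnir relations: whenever $T$ violates standardness in adjacent columns $j$ and $j{+}1$, pick $A\subseteq\text{col}_j(T)$ and $B\subseteq\text{col}_{j+1}(T)$ with $|A|+|B|$ strictly exceeding the column length, and use the identity
\[
\sum_{\sigma}\text{sgn}(\sigma)\,e_{\sigma T}=0,
\]
where $\sigma$ runs over coset representatives of $S_A\times S_B$ in $S_{A\cup B}$. Solving for $e_T$ rewrites it as a signed sum of $e_{T'}$ with each $T'$ strictly smaller in the chosen order, which closes the induction.

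The main obstacle will be the bookkeeping that the chosen total order is simultaneously compatible with the maximality step (so that standard polytabloids have a unique leading tabloid) and with the monotonicity of the Garnir rewrites (so that each substitution genuinely decreases the order) — this is a purely combinatorial check about how cross-column swaps act on column-reading words, and the choice of order must be made with both properties in mind. Once this is in place, the two steps combine to identify $\{e_T : T\in\text{SYT}(\lambda)\}$ as a $\CC$-basis of $\Specht$, yielding $\dim_\CC(\Specht)=|\text{SYT}(\lambda)|$.
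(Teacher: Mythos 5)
The paper states this result purely as a citation to \cite{stevens2016schur} and gives no proof of its own, so there is no internal argument to compare against. Your proof is the classical \emph{standard basis theorem} for Specht modules: realize $\Specht$ inside the permutation module $M^\lambda$ via polytabloids, establish linear independence of $\{e_T : T\in\text{SYT}(\lambda)\}$ by a leading-tabloid argument in the dominance order, and establish spanning by a Garnir/straightening induction. This is correct, self-contained modulo the combinatorial checks you explicitly flag, and it is the route taken in standard references (Sagan, James), so it is a sensible choice for a proof the paper outsources.

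Two small technical remarks. First, the identification of $\CSn c_\lambda$ with the polytabloid span needs slightly more care than the displayed map suggests: with the paper's convention $c_\lambda = a_\lambda b_\lambda$, the evaluation-at-$\{T_0\}$ map $\CSn\to M^\lambda$, $\sigma\mapsto\sigma\{T_0\}$, sends $c_\lambda$ to $a_\lambda b_\lambda\{T_0\}=a_\lambda e_{T_0}$, not to $e_{T_0}$; one then argues that this nonzero element generates the irreducible span of all polytabloids, and that a nonzero $\CSn$-map between irreducibles is an isomorphism. (Alternatively one switches to the $c_\lambda = b_\lambda a_\lambda$ convention, where $e_{T_0}$ appears directly.) Second, for the linear-independence step you do not actually need to refine dominance to a total order: among the standard tableaux with nonzero coefficient in a putative relation, pick one whose tabloid is dominance-maximal; its leading tabloid is $\{T\}$ and cannot be produced by any other term, so the coefficient vanishes. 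The total order is only genuinely needed for the descending induction in the straightening step, which simplifies the bookkeeping you were worried about.
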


\begin{proposition}[\cite{fulton2013representation}, Hook Length Formula 4.12.]
	Let $\lambda \vdash n$. Then 
	\begin{align}
		\dim_{\CC}\lrbracket{\Specht} = \frac{n!}{\prod_{(i,j)\in Y(\lambda)} h(i,j)},\,
	\end{align}
	where $h(i,j)$ is the hook length of cell $(i,j)$. 
\end{proposition}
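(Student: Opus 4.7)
The plan is to combine the immediately preceding proposition (Proposition \ref{prop:specht_dim}), which identifies $\dim_{\CC}(V_\lambda)$ with the number $|\mathrm{SYT}(\lambda)|$ of standard Young tableaux of shape $\lambda$, with a purely combinatorial proof of the identity
\[
|\mathrm{SYT}(\lambda)| \;=\; \frac{n!}{\prod_{(i,j)\in Y(\lambda)} h(i,j)}.
\]
I would proceed by induction on $n$, with the base case $n=1$ being immediate. For the inductive step, removing the cell labelled $n$ from a standard Young tableau of shape $\lambda \vdash n$ yields a standard Young tableau whose shape is $\lambda$ with a corner removed, giving the recursion $|\mathrm{SYT}(\lambda)| = \sum_{c} |\mathrm{SYT}(\lambda\setminus c)|$, where $c$ ranges over removable (corner) cells of $\lambda$. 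It therefore suffices to show that $f(\lambda) := n!/\prod_{(i,j)\in Y(\lambda)} h(i,j)$ satisfies the same recursion, i.e.
\[
\sum_{c \in \mathrm{corners}(\lambda)} \frac{f(\lambda\setminus c)}{f(\lambda)} \;=\; 1.
\]

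To establish this identity I would invoke the probabilistic hook-walk argument of Greene, Nijenhuis, and Wilf. Choose an initial cell $(i_0,j_0)\in Y(\lambda)$ uniformly at random; then, iteratively, from any non-corner cell $(i,j)$, move uniformly at random to a cell of the hook $H_\lambda(i,j)\setminus\{(i,j)\}$, i.e.\ strictly to the right in the same row or strictly below in the same column. The walk is killed upon reaching a corner, which happens in finitely many steps. Since it must terminate somewhere, the probabilities over corners sum to one, so the identity above reduces to the single-corner claim
\[
\mathbb{P}\bigl[\text{walk ends at } c=(a,b)\bigr] \;=\; \frac{f(\lambda\setminus c)}{f(\lambda)}.
\]

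The main obstacle is the explicit evaluation of the termination probability. The standard trick conditions on the set of rows $a_0<\cdots<a_p=a$ and columns $b_0<\cdots<b_q=b$ that the walk visits, then sums the resulting product of step probabilities $\prod (h(\cdot,\cdot)-1)^{-1}$ over all interleavings of horizontal and vertical moves. By a telescoping identity of the form
\[
\sum_{\pi}\prod_{k=1}^{p+q}\frac{1}{h(\cdot)-1} \;=\; \prod_{k=0}^{p-1}\!\Bigl(1+\tfrac{1}{h(a_k,b)-1}\Bigr)\prod_{\ell=0}^{q-1}\!\Bigl(1+\tfrac{1}{h(a,b_\ell)-1}\Bigr),
\]
one collapses the sum into a product over cells in the arm and leg of $c$. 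Combined with the observation that $h_{\lambda\setminus c}(i,j) = h_\lambda(i,j)-1$ precisely when $(i,j)$ lies in the arm or leg of $c$ (and is unchanged otherwise), this product telescopes to the desired ratio $f(\lambda\setminus c)/f(\lambda)$. The genuinely delicate step is establishing the interleaving identity above; modulo that, the induction closes and Proposition \ref{prop:specht_dim} converts the combinatorial count back to $\dim_{\CC}(V_\lambda)$.
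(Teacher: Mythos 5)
Your proposal is correct, but there is an asymmetry worth noting: the paper does not prove this proposition at all — it simply cites it from Fulton and Harris (Hook Length Formula 4.12) as a background fact in the representation-theory appendix. So there is no ``paper's proof'' to compare against; you are supplying a proof where the paper supplies a reference.

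The route you sketch — reduce via the preceding proposition to the combinatorial identity $|\mathrm{SYT}(\lambda)| = n!/\prod h(i,j)$, then establish that identity by induction using the Greene--Nijenhuis--Wilf hook walk — is a standard and sound argument. You correctly identify the two load-bearing facts: (a) the recursion $|\mathrm{SYT}(\lambda)| = \sum_c |\mathrm{SYT}(\lambda\setminus c)|$ over removable corners $c$, and (b) the GNW evaluation of the walk's termination probability as $f(\lambda\setminus c)/f(\lambda)$, which requires the interleaving/telescoping identity over the arm and leg of $c$. You also correctly flag that the hook lengths of cells outside the arm and leg of the removed corner are unchanged, which is what makes the product collapse. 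The one honest caveat — which you yourself acknowledge — is that the interleaving identity is ``the genuinely delicate step''; a fully rigorous write-up would need to prove it (typically by induction on $p+q$, peeling off the last step of the walk). As a proof strategy the plan closes; as a written proof it defers the hardest combinatorial lemma. Given that the paper treats the whole proposition as citable background, either approach (cite it, or prove it GNW-style) is appropriate, and yours has the advantage of being self-contained and giving a probabilistic interpretation of the hook quotient that the Fulton--Harris determinant-manipulation proof does not.
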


\begin{remark}
	All modules of $\CSn$ are self-dual, since any two $\CSn$-modules with the same character are isomorphic as modules and $\forall\pi\in S_n$ 
	\begin{align}
		\chi_{V^*}(\pi) = \overline{\chi_V(\pi)}=\chi_V\lrbracket{\pi^{-1}}=\chi_{V}(\pi).\,
	\end{align}
\end{remark}


\subsection{Additional side information}\label{sec:additional_side_info}

\begin{proposition}
Consider the group $G=\lrbrace{e}$ with $\CG$-modules $W_i$ of finite dimension $d_{W_i}=\dim_{\CC}\lrbracket{W_i}$ for $i=1,2$. Furthermore, set $d:=\dim_{\CC}(V)$, let $\lambda\vdash_{d} n$ run over the partitions of $n$ into at most $d$ parts, and set $m_{\lambda}:=\dim_{\CC}\lrbracket{\Schurf{\lambda}V}$. Then,
	\begin{align}
		\begin{split}
			\End{\CC\lrrec{G}\times\CSn\times\CG}{W_1\otimes\Vn\otimes W_2}\simeq \bigoplus_{\lambda \vdash_{d}\, n} \CC^{d_{W_1}\cdot d_{W_2}\cdot m_{\lambda} \times d_{W_1}\cdot d_{W_2}\cdot m_{\lambda}}.\,
		\end{split}
	\end{align}
\end{proposition}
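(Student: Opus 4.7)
Since $G=\{e\}$ is trivial, $\CC[G]\simeq\CC$, and the triple-equivariance condition reduces to ordinary $\CSn$-equivariance with $W_1,W_2$ carrying only a trivial $S_n$-action. My plan is therefore to reduce the statement to the $\CSn$-module decomposition of $W_1\otimes V^{\otimes n}\otimes W_2$, and then apply the standard computation of endomorphism algebras of semisimple modules (\autoref{prop:decomp_of_End_KG_V}).

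The first step is to invoke Schur-Weyl duality (\autoref{prop:schur_weyl_duality}) to decompose $V^{\otimes n}$ as a $\CSn$-module:
\begin{align*}
V^{\otimes n}\;\simeq_{\CSn}\;\bigoplus_{\lambda\vdash_{d}n}\bigl(\HS_\lambda\bigr)^{\oplus m_\lambda},
\qquad m_\lambda=\dim_{\CC}\bigl(\Schurf{\lambda}V\bigr).
\end{align*}
Next, since $W_1$ and $W_2$ are trivial $\CSn$-modules, tensoring with them multiplies the multiplicity of each simple $\HS_\lambda$ by $d_{W_1}d_{W_2}$. Concretely, fix bases of $W_1$ and $W_2$; then
\begin{align*}
W_1\otimes V^{\otimes n}\otimes W_2\;\simeq_{\CSn}\;\bigoplus_{\lambda\vdash_d n}\bigl(\HS_\lambda\bigr)^{\oplus d_{W_1}d_{W_2}m_\lambda}.
\end{align*}
This is the isotypic decomposition of $W_1\otimes V^{\otimes n}\otimes W_2$ as a $\CSn$-module, with pairwise non-isomorphic simples indexed by $\lambda\vdash_d n$.

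The final step applies \autoref{prop:decomp_of_End_KG_V} (equivalently, Schur's lemma plus Hom-space additivity): for a semisimple $\CSn$-module $M\simeq\bigoplus_i S_i^{\oplus N_i}$ with distinct simples $S_i$, one has $\End_{\CSn}(M)\simeq\bigoplus_i M_{N_i}\bigl(\End_{\CSn}(S_i)\bigr)$. Over $\CC$ (an algebraically closed field of characteristic zero), Schur's lemma gives $\End_{\CSn}(\HS_\lambda)\simeq\CC$, so
\begin{align*}
\End_{\CSn}\bigl(W_1\otimes V^{\otimes n}\otimes W_2\bigr)\;\simeq\;\bigoplus_{\lambda\vdash_d n} M_{d_{W_1}d_{W_2}m_\lambda}(\CC)\;=\;\bigoplus_{\lambda\vdash_d n}\CC^{d_{W_1}d_{W_2}m_\lambda\times d_{W_1}d_{W_2}m_\lambda}.
\end{align*}
Identifying the LHS with $\End_{\CC[G]\times\CSn\times\CC[G]}(W_1\otimes V^{\otimes n}\otimes W_2)$ via triviality of $G$ completes the proof.

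There is no genuine obstacle here: the content is purely bookkeeping of multiplicities, and all of the representation-theoretic input (Schur-Weyl, Schur's lemma, Wedderburn-type decomposition of endomorphism algebras) is already available from earlier in the paper. The only point requiring a word of care is that the trivial $S_n$-action on $W_1,W_2$ genuinely multiplies multiplicities rather than producing new isotypic components, which is immediate from the fact that $\HS_\lambda\otimes\CC^k\simeq_{\CSn}\HS_\lambda^{\oplus k}$.
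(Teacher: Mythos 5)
Your proof is correct and reaches the same conclusion by a slightly different route. The paper keeps the full $\CC[G]\times\CSn\times\CC[G]$-module structure explicit throughout: it first asserts that $W_1\otimes\HS_\lambda\otimes W_2$ is a simple module over the product group algebra (a claim that is really only true for the individual $1$-dimensional summands of $W_i$, and which the paper implicitly repairs in a later step by replacing $W_i$ with $\CC^{\oplus d_{W_i}}$), then runs a chain of $\Hom$-space decompositions using Schur's lemma across the full tripartite product, finally arriving at $\bigoplus_\lambda M_{d_{W_1}d_{W_2}m_\lambda}(\CC)$. You instead observe at the outset that triviality of $G$ collapses $\CC[G]\times\CSn\times\CC[G]$-equivariance to plain $\CSn$-equivariance with $W_1,W_2$ acting as multiplicity spaces, which lets you go directly to the $\CSn$-isotypic decomposition of $W_1\otimes V^{\otimes n}\otimes W_2$ and then apply \autoref{prop:decomp_of_End_KG_V} once. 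This is logically the same content, but your version is cleaner and avoids the slight imprecision about simplicity of $W_i$; the paper's presentation has the advantage of being structured so that it would extend without change to the setting where $G$ is a genuinely nontrivial group acting on the side systems, which is likely why the authors kept the triple-product bookkeeping in place.
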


\begin{proof}
Clearly, $W_1,W_2$ are simple $\CG$-modules, and thus, $W_1\otimes\Specht\otimes W_2$ is a simple $\CG\times\CSn\times\CG$-module. With \cite[Theorem 10]{serre1977linear} and Schur's lemma, 
\begin{align}
	\Hom{\CG\times\CSn\times\CG}{W_1\otimes\Specht\otimes W_2}{W_1\otimes V_{\tilde{\lambda}}\otimes W_2}\end{align}
	is either one-dimensional (spanned by an isomorphism) for $\lambda = \tilde{\lambda}$ or zero. Thus,
\begin{align}
		\begin{split}
			&\End{\CC\lrrec{G}\times\CSn\times\CG}{W_1\otimes\Vn\otimes W_2} \\
			&\simeq \Hom{\CG\times\CSn\times\CG}{W_1\otimes\lrbracket{\bigoplus_{\lambda \vdash_{d}\, n} \Specht^{\oplus \dim_{\CC}\lrbracket{\Schurf{\lambda}V}}}\otimes W_2}{W_1\otimes\Vn\otimes W_2}\\
			&\simeq \bigoplus_{\lambda \vdash_{d}\, n} \Hom{\CG\times\CSn\times\CG}{W_1\otimes\Specht^{\oplus \dim_{\CC}\lrbracket{\Schurf{\lambda}V}}\otimes W_2}{W_1\otimes\Vn\otimes W_2}\\
			&\simeq \bigoplus_{\lambda \vdash_{d}\, n} \Hom{\CG\times\CSn\times\CG}{W_1\otimes \Specht^{\oplus \dim_{\CC}\lrbracket{\Schurf{\lambda}V}}\otimes W_2}{W_1\otimes\lrbracket{\bigoplus_{\tilde{\lambda}\vdash_{d}\, n}V_{\tilde{\lambda}}^{\oplus \dim_{\CC}\lrbracket{\Schurf{\tilde{\lambda}}V}}}\otimes W_2}\\
			&\simeq \bigoplus_{\lambda \vdash_{d}\, n} \End{\CC\lrrec{G}\times\CSn\times\CG}{W_1\otimes \Specht^{\oplus \dim_{\CC}\lrbracket{\Schurf{\lambda}V}}\otimes W_2}\\
			& \simeq \bigoplus_{\lambda \vdash_{d}\, n} \End{\CC\lrrec{G}\times\CSn\times\CG}{\CC^{\oplus \dim_{\CC}\lrbracket{W_1}}\otimes \Specht^{\oplus \dim_{\CC}\lrbracket{\Schurf{\lambda}V}}\otimes \CC^{\oplus \dim_{\CC}\lrbracket{W_2}}}\\ 
			&\simeq \bigoplus_{\lambda \vdash_{d}\, n} \text{Mat}_{\dim_{\CC}\lrbracket{W_1}\cdot\dim_{\CC}\lrbracket{\Schurf{\lambda}V}\cdot \dim_{\CC}\lrbracket{W_2}}\lrbracket{\CC}.
		\end{split}
	\end{align}
\end{proof}

\begin{proposition}
The map
	\begin{align}
		\begin{split}
			\psi \,:\, \End{\CC\lrrec{G}\times\CSn\times\CG}{W_1\otimes\Vn\otimes W_2} &\rightarrow \bigoplus_{\lambda \vdash_{d}\, n} \CC^{d_{W_1}\cdot d_{W_2}\cdot m_{\lambda} \times d_{W_1}\cdot d_{W_2}\cdot m_{\lambda}}\\
			A &\mapsto \bigoplus_{\lambda \vdash_{d}\, n} \lrbracket{\mathcal{I}_{W_1}\otimes U_{\lambda}\otimes \mathcal{I}_{W_2}}^*A\lrbracket{\mathcal{I}_{W_1}\otimes U_{\lambda}\otimes \mathcal{I}_{W_{2}}}
		\end{split}
	\end{align}
	is a bijection preserving positive semidefiniteness.
\end{proposition}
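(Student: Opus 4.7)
The plan is to realize $\psi$ as an explicit instantiation of the general symmetry-reduction bijection of \autoref{prop:polak_iso_gen}, specialized to the $\CSn$-action on the middle tensor factor of $W_1 \otimes \Vn \otimes W_2$, with the trivial group $G$ contributing only spectator multiplicities on the outer factors. The preceding proposition already certifies the algebra-level isomorphism between $\End{\CG \times \CSn \times \CG}{W_1 \otimes \Vn \otimes W_2}$ and the target block algebra via Hom-space additivity and Schur's lemma; what remains is to show that the concrete map defined by conjugation with $\mathcal{I}_{W_1} \otimes U_\lambda \otimes \mathcal{I}_{W_2}$ realizes this abstract isomorphism and preserves positive semidefiniteness.

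Concretely, I would proceed in three steps. First, invoke Schur-Weyl duality (\autoref{prop:schur_weyl_duality}) to write $\Vn \simeq \bigoplus_{\lambda \vdash_d n} \Specht \otimes \Schurf{\lambda}V$ and conclude that the Specht isotypic component indexed by $\lambda$ inside $W_1 \otimes \Vn \otimes W_2$ has multiplicity $d_{W_1} \cdot d_{W_2} \cdot m_\lambda$. Second, verify that the polytabloids $\{u_\tau : \tau \in \mathcal{T}_{\lambda, d}\}$ forming the columns of $U_\lambda$ satisfy the three conditions of \autoref{prop:polak_certification_of_representative_set}, so that the family $\{\mathcal{I}_{W_1} \otimes u_\tau \otimes \mathcal{I}_{W_2}\}$, indexed by $\tau$ together with standard bases of $W_1$ and $W_2$, constitutes a representative matrix set for the combined group action. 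Third, invoke \autoref{prop:polak_iso_gen} directly to read off $\psi$ as a bijection onto the target block algebra. Preservation of positive semidefiniteness then follows because, after normalizing the polytabloids with respect to the canonical $\CSn$-invariant inner product on $\Vn$, the column matrix $\mathcal{I}_{W_1} \otimes U_\lambda \otimes \mathcal{I}_{W_2}$ provides an isometric embedding of the relevant isotypic block, and conjugation by an isometry preserves the spectrum.

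The main obstacle lies in verifying condition (ii) of the representative-set criterion: for distinct semistandard tableaux $\tau, \gamma \in \mathcal{T}_{\lambda, d}$ of the same shape, one must exhibit a $\CSn$-isomorphism $\CSn \cdot u_\tau \to \CSn \cdot u_\gamma$ mapping $u_\tau$ precisely to $u_\gamma$. Both cyclic submodules are abstractly isomorphic to $\Specht$---each arises from applying the Young symmetrizer $c_\lambda$ to a distinct tensor basis vector---so the existence of some intertwiner is immediate from Schur's lemma. The delicate point is that this intertwiner must be exhibited concretely and compatibly with the weight-space decomposition of the Weyl module $\Schurf{\lambda}V$ under the maximal torus of $\GLV$. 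Since the $\GLV$-action commutes with the $\CSn$-action, raising and lowering operators in $\mathfrak{gl}(V)$ descend to $\CSn$-equivariant maps connecting weight vectors indexed by different semistandard tableaux, and a careful tracking of normalization constants shows these can be chosen to send $u_\tau$ to $u_\gamma$. Once this is established, the dimension count $\sum_\lambda (d_{W_1} d_{W_2} m_\lambda)^2 = \dim \End{\CG \times \CSn \times \CG}{W_1 \otimes \Vn \otimes W_2}$ follows from the hook length formula and \autoref{prop:specht_dim}, closing the argument.
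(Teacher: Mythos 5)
Your proposal takes essentially the same route as the paper: both instantiate the representative-set symmetry-reduction framework of \autoref{prop:polak_iso_gen}, with the trivial $G$-action on $W_1, W_2$ contributing identity representative matrices and the preceding proposition supplying the algebra-level block decomposition. The paper's own proof is a one-liner deferring to Serre and Chee et al.\ ``together with the arguments from the previous section,'' whereas you flesh out the three representative-set conditions of \autoref{prop:polak_certification_of_representative_set}; this is the content the paper leaves implicit, and your handling of condition (ii) via $\mathfrak{gl}(V)$ raising/lowering operators that commute with the $\CSn$-action is consistent with what the cited references do.

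One imprecision worth flagging: your closing argument that positive semidefiniteness is preserved because ``after normalizing the polytabloids \ldots the column matrix $\mathcal{I}_{W_1}\otimes U_\lambda\otimes\mathcal{I}_{W_2}$ provides an isometric embedding'' does not hold in general. For partitions $\lambda$ of height greater than one, the polytabloids $u_\tau$ for distinct semistandard tableaux of the same shape need not be mutually orthogonal (they can even share a weight whenever the Kostka number $K_{\lambda\mu}$ exceeds one), so normalizing the columns of $U_\lambda$ does not make it an isometry. This does not sink the proof, because \autoref{prop:polak_iso_gen} already delivers PSD preservation by a congruence argument that requires only linear independence of the representative vectors, not orthonormality; but the isometry heuristic should not be relied on as a stand-alone justification, and is redundant once \autoref{prop:polak_iso_gen} is invoked.
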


\begin{proof}
	The representative matrix sets for the $\CG$-action on $W_1, W_2$ are simply the corresponding identities $\mathcal{I}_{W_1}, \mathcal{I}_{W_2}$. The rest follows from \cite[Theorem 10]{serre1977linear} and \cite{chee2023efficient} together with the arguments from the previous section.
\end{proof}
The restriction to $\Schurf{\lambda}V$ as a subspace of $\Vn$ follows accordingly.

\section{Examples}
\label{sec:examples}

This section contains numerous examples that are referenced throughout this work.

\subsection{Bose-symmetry}

\begin{example}\label{ex:Schur_weyl_duality}
	Let $V$ be a vector space of dimension $d\geq n$. Then, for $n=2$,
	\begin{align}
		V^{\otimes 2} \simeq \vee^{2}\lrbracket{V}\oplus\wedge^{2}\lrbracket{V}
	\end{align}
 and for $n=3$
	\begin{align}
		V^{\otimes 3} \simeq \vee^{3}\lrbracket{V}\oplus\wedge^{3}\lrbracket{V}\oplus\lrbracket{\Schurf{(2,1)}V}^{\oplus 2}
	\end{align}
	where we have identified $\Schurf{(n)}V\simeq \vee^{n}\lrbracket{V}$ with the symmetric subspace, appearing with multiplicity $\dim_{\CC}\lrbracket{V_{(n)}}=1$ (trivial representation), and $\Schurf{(1,\ldots, 1)}V\simeq \wedge^{n}\lrbracket{V}$ with the antisymmetric (alternating) subspace or $n$-th exterior power of $V$, appearing with multiplicity $\dim_{\CC}\lrbracket{V_{(1,\ldots,1)}}=1$ (sign representation). Note that the space 
	\begin{align}
		\Schurf{(2,1)}V \simeq \ker\lrbracket{V\otimes\wedge^2\lrbracket{V} \rightarrow \wedge^3\lrbracket{V}}
	\end{align}
	with 
	\begin{align}\begin{split}
		V\otimes\wedge^2\lrbracket{V} &\rightarrow \wedge^3\lrbracket{V} \\
		v_1\otimes \lrbracket{v_2\wedge v_3} &\mapsto v_1\wedge v_2 \wedge v_{3}
	\end{split}
	\end{align}
	captures the elements mapped to zero in subsequent steps in the graded structure of the exterior power. By \autoref{prop:specht_dim}, the dimension of the Specht module $V_{(2,1)}$ is given by $\dim_{\CC}\lrbracket{V_{(2,1)}}=2$, which can be determined by the cardinality of the set of standard $(2,1)$ Young tableaux
	\begin{align}
		\ytableausetup{centertableaux}
	\begin{ytableau}
		1 & 2\\
		3
	\end{ytableau},\, \hspace{1cm} \ytableausetup{centertableaux}
	\begin{ytableau}
		1 & 3\\
		2
	\end{ytableau}
	\end{align}
	which, in general, is computable via the hook length formula. Furthermore, for $\lambda=(n)$, the Young symmetrizer $c_{(n)}$ is proportional to the orthogonal projector onto the symmetric subspace. 
\end{example}

\begin{example}[Weyl module decomposition into non-isomorphic Specht modules]\label{ex:complex_Weyl_decomp}
Let $n=4$ and $d_{\HS}=4$. Then, $\dim_{\CC}\lrbracket{\Schurf{(2,2)}\CC^{4}}=20$ and thus $\Schurf{(2,2)}\CC^{4}$ is a 20-dimensional simple $\text{GL}_4\lrbracket{\CC}$-module. It is the image of some idempotent from $\CC\lrrec{S_4}$ in $\lrbracket{\CC^{4}}^{\otimes 4}$. The Specht modules of $\CC\lrrec{S_4}$ are indexed by partitions of $4$, i.e.\ $(4), (3,1), (2,2), (2,1,1), (1,1,1,1)$. Their dimensions are $1,3,2,3,1$, respectively. Restricting the $\text{GL}_4(\CC)$-module $\Schurf{(2,2)}\CC^{4}$ along the embedding of $S_4$ into $\text{GL}_4(\CC)$ as permutation matrices yields the decomposition into Specht modules
\begin{align}
    \Schurf{(2,2)}\CC^{4}\big\lvert_{S_4}\simeq V_{(4)}^{\oplus 2}\oplus V_{(3,1)}^{\oplus 3}\oplus V_{(2,2)}^{\oplus 3}\oplus V_{(2,1,1)}.\,
\end{align}
\end{example}

\begin{example}\label{ex:basis_comparison}
		For  $d=n=2$, consider the basis element
		\begin{align}C_{(1,1),(1,1)}^{\vee^2} =
		\lrbracket{\ket{10}+\ket{01}}\lrbracket{\bra{10}+\bra{01}} = \ket{10}\bra{10} + \ket{10}\bra{01} + \ket{01}\bra{10} + \ket{01}\bra{01}  =
			\begin{bmatrix}
				0 & 0 & 0 & 0\\
				0 & 1 & 1 & 0\\
				0 & 1 & 1 & 0\\
				0 & 0 & 0 & 0\\
			\end{bmatrix}
		\end{align}
		of $\End{\CC\lrrec{S_2}}{\vee^2\lrbracket{\CC^2}}$ in comparison to the two basis elements
		\begin{align}
			\ket{10}\bra{10}+ \ket{01}\bra{01} = \begin{bmatrix}
				0 & 0 & 0 & 0\\
				0 & 1 & 0 & 0\\
				0 & 0 & 1 & 0\\
				0 & 0 & 0 & 0\\
			\end{bmatrix},\, \ket{10}\bra{01}+ \ket{01}\bra{10} =\begin{bmatrix}
				0 & 0 & 0 & 0\\
				0 & 0 & 1 & 0\\
				0 & 1 & 0 & 0\\
				0 & 0 & 0 & 0\\
			\end{bmatrix}
		\end{align}
		of $\End{\CC\lrrec{S_2}}{\lrbracket{\CC^2}^{\otimes 2}}$. While the first matrix is clearly Bose-symmetric, the second and third are not. However, their sum yields the Bose-symmetric matrix $C_{(1,1),(1,1)}^{\vee^2}$. Recall the antisymmetric space $\land^{2}\lrbracket{\CC^2}$ is given by $\text{span}_{\CC}\lrbracket{\ket{01}-\ket{10}}$. Accordingly, the space of operators with support and range on that space is given by
		\begin{align}
			\text{span}_{\CC}\lrbracket{\ket{01}\bra{01}-\ket{01}\bra{10}-\ket{10}\bra{01}+\ket{10}\bra{10}},
		\end{align}
		which clearly is orthogonal to $\End{\CC\lrrec{S_2}}{\vee^2\lrbracket{{\CC^2}}}$.
	\end{example}

\begin{example}\label{ex:change_of_basis_bose}
	Let $d_{\HS}=n=2$. Consider Young tableau $\tau$
	\begin{align}
		\ytableausetup{centertableaux}
	\begin{ytableau}
		1 & 2\\
	\end{ytableau}
	\end{align}
	yielding $u_{\tau}=\ket{01}+\ket{10}$. Thus,
	\begin{align}
		u_{\tau}^T\CBose{(1,1)}{(1,1)}{2}u_{\tau} = \begin{bmatrix}
			0 \\
			1 \\
			1 \\
			0\\
		\end{bmatrix}^T\begin{bmatrix}
				0 & 0 & 0 & 0\\
				0 & 1 & 1 & 0\\
				0 & 1 & 1 & 0\\
				0 & 0 & 0 & 0\\
			\end{bmatrix}\begin{bmatrix}
			0 \\
			1 \\
			1 \\
			0\\
		\end{bmatrix} = 4.\,
	\end{align}
	
\end{example}


\subsection{SDP symmetry reduction}

\begin{example}\label{ex:sdp_sym_reduction}
   Consider $\lrvert{\HS}=2$, $n=2$. Then, 
    \begin{align}
	\begin{split}
		\ket{1} &=  \lrbracket{\ket{1_1}\otimes \ket{1_2}},\, \quad\ket{2} =  \lrbracket{\ket{2_1}\otimes \ket{1_2}},\,\\
		 \ket{3} &=\lrbracket{\ket{1_1}\otimes \ket{2_2}},\, \quad \ket{4} = \lrbracket{\ket{2_1}\otimes \ket{2_2}},
	\end{split}
    \end{align}

\begin{align}
    E^{(i,j)} = \begin{bmatrix}
\lrvert{\lrbrace{k\in\lrrec{n}\,:\, \phi_k(i)= 1, \phi_k(j)= 1}} & \lrvert{\lrbrace{k\in\lrrec{n}\,:\, \phi_k(i)= 1, \phi_k(j)= 2}} \\
\lrvert{\lrbrace{k\in\lrrec{n}\,:\, \phi_k(i)= 2, \phi_k(j)= 1}} & \lrvert{\lrbrace{k\in\lrrec{n}\,:\, \phi_k(i)= 2, \phi_k(j)= 2}} 
\end{bmatrix}\in\mathbb{N}_0^{\lrvert{\HS}\times \lrvert{\HS}},
\end{align}
\begin{align}
\begin{split}
    \phi \times \phi^* \, : \, (1,1) \mapsto  \lrbracket{\ket{\phi_1(1)}\otimes \ket{\phi_2(1)}}\lrbracket{\bra{\phi_1(1)}\otimes \bra{\phi_2(1)}} &= \lrbracket{\ket{1_1}\otimes \ket{1_2}}\lrbracket{\bra{1_1}\otimes \bra{1_2}}\\
    &= \lrbracket{\ket{1}_1\otimes \ket{1}_2}\lrbracket{\bra{1}_1\otimes \bra{1}_2},
\end{split}
\end{align}
\begin{align}
    E^{(1,1)}= \begin{bmatrix}
\lrvert{\lrbrace{1,2}} & 0 \\
0 & 0
\end{bmatrix},
\end{align}

    \begin{align}
        \begin{array}{cc}
            E^{(1,1)}= \begin{bmatrix}
2 & 0 \\
0 & 0 
\end{bmatrix},\, E^{(4,4)}=\begin{bmatrix}
0 & 0 \\
0 & 2 
\end{bmatrix},\, & E^{(2,2)}= \begin{bmatrix}
1 & 0 \\
0 & 1 
\end{bmatrix} = E^{(3,3)},\, \\
\\
             E^{(1,4)}= \begin{bmatrix}
0 & 2 \\
0 & 0 
\end{bmatrix},\, E^{(4,1)} = \begin{bmatrix}
0 & 0 \\
2 & 0 
\end{bmatrix} ,\, & E^{(2,3)}= \begin{bmatrix}
0 & 1 \\
1 & 0 
\end{bmatrix} = E^{(3,2)},\, \\
\\
E^{(1,3)}= \begin{bmatrix}
1 & 1 \\
0 & 0 
\end{bmatrix},\, E^{(4,2)} = \begin{bmatrix}
0 & 0 \\
1 & 1 
\end{bmatrix} ,\, & E^{(3,1)}= \begin{bmatrix}
1 & 0 \\
1 & 0 
\end{bmatrix},\, E^{(2,4)} = \begin{bmatrix}
0 & 1 \\
0 & 1 
\end{bmatrix},\, \\
\\
E^{(3,4)}= \begin{bmatrix}
0 & 1 \\
0 & 1 
\end{bmatrix},\, E^{(2,1)} = \begin{bmatrix}
1 & 0 \\
1 & 0 
\end{bmatrix},\, & E^{(4,3)}= \begin{bmatrix}
0 & 0 \\
1 & 1 
\end{bmatrix},\, E^{(1,2)}= \begin{bmatrix}
1 & 1 \\
0 & 0 
\end{bmatrix}.\,
        \end{array}
    \end{align}
\end{example}
Note that $E^{(i,j)}=E^{(\pi (i),\pi (j))}$ and $E^{(i,j)}=\lrbracket{E^{(j, i)}}^T$. Furthermore,
\begin{align}
        \begin{split}
        \begin{array}{cccc}
            O_1^{\HS^n} =\lrbrace{(1,1)},\, &  O_2^{\HS^n} =\lrbrace{(4,4)},\, & O_3^{\HS^n} =\lrbrace{(2,2), (3,3)},\, & O_4^{\HS^n} =\lrbrace{(1,2), (1,3)},\,\\
            \\
            O_5^{\HS^n} =\lrbrace{(2,1), (3,1)},\, & O_6^{\HS^n} =\lrbrace{(1,4)},\, & O_7^{\HS^n} =\lrbrace{(4,1)},\, & O_8^{\HS^n} =\lrbrace{(2,3), (3,2)},\, \\
            \\
            O_9^{\HS^n} =\lrbrace{(3,4), (2,4)},\, & O_{10}^{\HS^n} =\lrbrace{(4,2), (4,3)},\, \\
        \end{array}
        \end{split}
    \end{align}
\begin{align}
\begin{split}
    \begin{array}{ccc}
         C_1^{\HS^n}=\begin{bmatrix}
1 & 0  & 0 & 0\\
0 & 0 & 0& 0\\
0 & 0 & 0& 0\\
0 & 0 & 0& 0\\
\end{bmatrix} & C_2^{\HS^n}=\begin{bmatrix}
0 & 0  & 0 & 0\\
0 & 0 & 0& 0\\
0 & 0 & 0& 0\\
0 & 0 & 0& 1\\
\end{bmatrix} & 
C_3^{\HS^n}=\begin{bmatrix}
0 & 0  & 0 & 0\\
0 & 1 & 0& 0\\
0 & 0 & 1& 0\\
0 & 0 & 0& 0\\
\end{bmatrix}  \\
\\
C_4^{\HS^n}=\begin{bmatrix}
0 & 1  & 1 & 0\\
0 & 0 & 0 & 0\\
0 & 0 & 0& 0\\
0 & 0 & 0& 0\\
\end{bmatrix} & 
C_5^{\HS^n}=\begin{bmatrix}
0 & 0  & 0 & 0\\
1 & 0 & 0& 0\\
1 & 0 & 0& 0\\
0 & 0 & 0& 0\\
\end{bmatrix}  & C_6^{\HS^n}=\begin{bmatrix}
0 & 0  & 0 & 1\\
0 & 0 & 0 & 0\\
0 & 0 & 0 & 0\\
0 & 0 & 0& 0\\
\end{bmatrix}\\
\\
C_7^{\HS^n}=\begin{bmatrix}
0 & 0  & 0 & 0\\
0 & 0 & 0& 0\\
0 & 0 & 0& 0\\
1 & 0 & 0& 0\\
\end{bmatrix}  & C_8^{\HS^n}=\begin{bmatrix}
0 & 0  & 0 & 0\\
0 & 0 & 1 & 0\\
0 & 1 & 0& 0\\
0 & 0 & 0& 0\\
\end{bmatrix} & 
C_9^{\HS^n}=\begin{bmatrix}
0 & 0  & 0 & 0\\
0 & 0 & 0& 1\\
0 & 0 & 0& 1\\
0 & 0 & 0& 0\\
\end{bmatrix}\\
\\
C_{10}^{\HS^n}=\begin{bmatrix}
0 & 0  & 0 & 0\\
0 & 0 & 0 & 0\\
0 & 0 & 0& 0\\
0 & 1 & 1& 0\\
\end{bmatrix} & \sum_{i=1}^{10} C_i =\begin{bmatrix}
        1 & 1 & 1 & 1\\
        1 & 1 & 1 & 1 \\
        1 & 1 & 1 & 1 \\
        1 & 1 & 1 & 1 \\
    \end{bmatrix}.\\
\\
    \end{array}
\end{split}
\end{align}
Then, 
\begin{align}
    \begin{split}
        \begin{array}{cc}
            E_{O_1^{\HS^n}} = \lrbrace{E^{(1,1)}= \begin{bmatrix}
2 & 0 \\
0 & 0 
\end{bmatrix}},\, &   E_{O_2^{\HS^n}} = \lrbrace{E^{(4,4)}= \begin{bmatrix}
0 & 0 \\
0 & 2 
\end{bmatrix}},\,\\
             \\
             E_{O_3^{\HS^n}} = \lrbrace{E^{(2,2)}= \begin{bmatrix}
1 & 0 \\
0 & 1 
\end{bmatrix} = E^{(3,3)}},\, &   E_{O_4^{\HS^n}} = \lrbrace{E^{(1,2)}= \begin{bmatrix}
1 & 1 \\
0 & 0 
\end{bmatrix}= E^{(1,3)}},\,\\
             \\
             E_{O_5^{\HS^n}} = \lrbrace{E^{(2,1)}= \begin{bmatrix}
1 & 0 \\
1 & 0 
\end{bmatrix} = E^{(3,1)}},\, &   E_{O_6^{\HS^n}} = \lrbrace{E^{(1,4)}= \begin{bmatrix}
0 & 2 \\
0 & 0 
\end{bmatrix}},\,\\
             \\
              E_{O_7^{\HS^n}} = \lrbrace{E^{(4,1)}= \begin{bmatrix}
0 & 0 \\
2 & 0 
\end{bmatrix}},\, & 
             E_{O_8^{\HS^n}} = \lrbrace{E^{(2,3)}= \begin{bmatrix}
0 & 1 \\
1 & 0 
\end{bmatrix} = E^{(3,2)}},\,\\
E_{O_9^{\HS^n}} = \lrbrace{E^{(3,4)}= \begin{bmatrix}
0 & 1 \\
0 & 1 
\end{bmatrix} = E^{(2,4)}},\, &   E_{O_{10}^{\HS^n}} = \lrbrace{E^{(4,2)}= \begin{bmatrix}
0 & 0 \\
1 & 1 
\end{bmatrix}= E^{(4,3)}}.\,\\
             \\
        \end{array}
    \end{split}
\end{align}
For the $S_n$ subgroup structure we observe:
\begin{align}
    \begin{split}
        \begin{array}{|c|c|}
            \hline
            S_3 & S_2 \\
            \hline
            \hline
             1 - 2 - 3 & 1 - \underbrace{2 - 3}_{\pi\in S_2}\\
             \downarrow (1)(23) & \downarrow (23)\\
             \\
             1 - 3 - 2 & 1 - 3 - 2\\
             \cline{2-2}
             2 - 3 - 1 &  2 - 3 - 1\\
             2 - 1 - 3 & 2 - 1 - 3\\
             \cline{2-2}
             3 - 1 - 2 &  3 - 1 - 2\\
             3 - 2 - 1 & 3 - 2 - 1\\
             \hline
             \text{1 Orbit} & \text{3 Orbits}\\
             \hline
        \end{array}
    \end{split}
\end{align}

For $\lrvert{\HS}=2, n=3$ and $ S=\lrbrace{(1,1), (2,1), (1,2), (2,2)},\,$ the $S_2$ splits the orbits of$S_3$, e.g.
\begin{align}
    \lrbrace{(1,2), (1,3), (1,5)} =  \underbrace{\lrbrace{(1,3), (1,5)}}_{:=O^t_{(1,1)}} \cup \underbrace{\lrbrace{(1,2)}}_{:=O^t_{(1,2)}}.
\end{align}
Then, 
\begin{align}\label{eqn:example_cano_basis_redu}
\begin{split}
    C_t^{\HS^3}&= \sum_{(a,b)\in O_t^{\HS^{3}}}\bigotimes_{i=1}^3 \ket{\phi_i(a)}\bra{\phi_i(b)} = \sum_{(c,d)\in S} \sum_{(a,b)\in O_{(c,d)}^t} \ket{c}\bra{d}_1\otimes \lrbracket{\bigotimes_{i=2}^3 \ket{\phi_i(a)}\bra{\phi_i(b)}}\\
    &= \sum_{(a,b)\in O^t_{(1,1)}} \ket{1}\bra{1}_1\otimes \lrbracket{\bigotimes_{i=2}^3 \ket{\phi_i(a)}\bra{\phi_i(b)}}\\
    & + \sum_{(a,b)\in O^t_{(1,2)}} \ket{1}\bra{2}_1\otimes \lrbracket{\bigotimes_{i=2}^3 \ket{\phi_i(a)}\bra{\phi_i(b)}}\\
    &=\lrbracket{\ket{1}\bra{3} + \ket{1}\bra{5}} + \ket{1}\bra{2}.
\end{split}
\end{align}
Now let 
\begin{align}
    O^{\HS_2^3}_{t'=1}=\lrbrace{(1,1)}=\lrbrace{\ket{1}\bra{1}_2\otimes \ket{1}\bra{1}_3},\,\hspace{0.5cm}\, O^{\HS_2^3}_{t'=2}=\lrbrace{(1,2), (1,3)}=\lrbrace{\ket{1}\bra{2}_2\otimes \ket{1}\bra{1}_3,\, \ket{1}\bra{1}_2\otimes \ket{1}\bra{2}_3}
\end{align}
be orbits under the $S_{2}$ action on $\HS_2^3$ with
\begin{align}
    K^{\HS_2^3}_{t'=1}=\begin{bmatrix}
        1  & 0 & 0 & 0  \\
0  & 0 & 0 & 0 \\
0  & 0 & 0 & 0  \\
0  & 0 & 0 & 0  \\
    \end{bmatrix},\,\hspace{0.5cm} K^{\HS_2^3}_{t'=2}=\begin{bmatrix}
        0  & 1 & 1 & 0   \\
0  & 0 & 0 & 0   \\
0  & 0 & 0 & 0  \\
0  & 0 & 0 & 0  \\
    \end{bmatrix},\,
\end{align}
such that we can express $C_t^{\HS^3}$ from \autoref{eqn:example_cano_basis_redu} as
\begin{align}
    C_t^{\HS^3} = \ket{1}\bra{1}_{1}\otimes K^{\HS_2^3}_{t'=2} + \ket{1}\bra{2}_{1}\otimes K^{\HS_2^3}_{t'=1}
\end{align}
by decomposing along the $\HS_1$-system.
\begin{align}
\begin{split}
    &\ket{1}\bra{2}_1\otimes\underbrace{\ket{1}\bra{1}_2\otimes\ket{1}\bra{1}_3}_{\text{1 element from } S}\\ 
    &\ket{1}\bra{1}_1\otimes\underbrace{\ket{1}\bra{2}_2\otimes\ket{1}\bra{1}_3}_{\text{2 elements from } S}\\
    & \ket{1}\bra{1}_1\otimes\underbrace{\ket{1}\bra{1}_2\otimes\ket{1}\bra{2}_3}_{\text{2 elements from } S}
\end{split}
\end{align}
and thus $\lrvert{O^t_{(1,1)}}=2$ and $\lrvert{O^t_{(1,2)}}=1$. Returning to the $n=2$ example, under $S_1$ the orbits of $S_2$ get partitioned into 16 orbits of length one, i.e.
\begin{itemize}
    \item $O_1^{\HS^n} =\lrbrace{(1,1)}:$
    \begin{align}
\begin{split}
    \begin{array}{cccc}
        O_{(1, 1)}^{1}= \lrbrace{(1,1)},\,  & O_{(1, 2)}^{1}= \emptyset ,\, & O_{(2, 1)}^{1}= \emptyset ,\, &  O_{(2, 2)}^{1} =\emptyset,\,\\
    \end{array}
\end{split}
\end{align}
    \item $O_2^{\HS^n} =\lrbrace{(4,4)}:$
    \begin{align}
\begin{split}
    \begin{array}{cccc}
         O_{(1, 1)}^{2}= \emptyset,\,  & O_{(1, 2)}^{2}= \emptyset ,\, & O_{(2, 1)}^{2}= \emptyset ,\, &  O_{(2, 2)}^{2} =\lrbrace{(4,4)},\,\\
    \end{array}
\end{split}
\end{align}
    \item $O_3^{\HS^n} =\lrbrace{(2,2), (3,3)}:$
    \begin{align}
\begin{split}
    \begin{array}{cccc}
         O_{(1, 1)}^{3}= \lrbrace{(3,3)},\,  & O_{(1, 2)}^{3}= \emptyset ,\, & O_{(2, 1)}^{3}= \emptyset ,\, &  O_{(2, 2)}^{3} =\lrbrace{(2,2)},\,\\
    \end{array}
\end{split}
\end{align}
    \item $O_4^{\HS^n} =\lrbrace{(1,2), (1,3)}:$
    \begin{align}
\begin{split}
    \begin{array}{cccc}
       O_{(1, 1)}^{4}= \lrbrace{(1,3)},\,  & O_{(1, 2)}^{4}= \lrbrace{(1,2)} ,\, & O_{(2, 1)}^{4}= \emptyset ,\, &  O_{(2, 2)}^{4} =\emptyset ,\,\\
    \end{array}
\end{split}
\end{align}
    \item $O_5^{\HS^n} =\lrbrace{(2,1), (3,1)}:$
    \begin{align}
\begin{split}
    \begin{array}{cccc}
       O_{(1, 1)}^{5}= \lrbrace{(3,1)},\,  & O_{(1, 2)}^{5}= \emptyset ,\, & O_{(2, 1)}^{5}= \lrbrace{(2,1)} ,\, &  O_{(2, 2)}^{5} =\emptyset ,\,\\
    \end{array}
\end{split}
\end{align}
    \item $O_6^{\HS^n} =\lrbrace{(1,4)}:$ 
    \begin{align}
\begin{split}
    \begin{array}{cccc}
       O_{(1, 1)}^{6}= \emptyset,\,  & O_{(1, 2)}^{6}= \lrbrace{(1,4)} ,\, & O_{(2, 1)}^{6}= \emptyset ,\, &  O_{(2, 2)}^{6} =\emptyset ,\,\\
    \end{array}
\end{split}
\end{align}
    \item $O_7^{\HS^n} =\lrbrace{(4,1)}:$ 
    \begin{align}
\begin{split}
    \begin{array}{cccc}
       O_{(1, 1)}^{7}= \emptyset,\,  & O_{(1, 2)}^{7}= \emptyset ,\, & O_{(2, 1)}^{7}= \lrbrace{(4,1)} ,\, &  O_{(2, 2)}^{7} =\emptyset ,\,\\
    \end{array}
\end{split}
\end{align}
    \item $O_8^{\HS^n} =\lrbrace{(2,3), (3,2)}:$
    \begin{align}
\begin{split}
    \begin{array}{cccc}
       O_{(1, 1)}^{8}= \emptyset,\,  & O_{(1, 2)}^{8}= \lrbrace{(3,2)} ,\, & O_{(2, 1)}^{8}= \lrbrace{(2,3)} ,\, &  O_{(2, 2)}^{8} =\emptyset ,\,\\
    \end{array}
\end{split}
\end{align}
    \item $O_9^{\HS^n} =\lrbrace{(3,4), (2,4)}:$
    \begin{align}
\begin{split}
    \begin{array}{cccc}
       O_{(1, 1)}^{9}= \emptyset,\,  & O_{(1, 2)}^{9}= \lrbrace{(3, 4)} ,\, & O_{(2, 1)}^{9}= \emptyset ,\, &  O_{(2, 2)}^{9} =\lrbrace{(2,4)} ,\,\\
    \end{array}
\end{split}
\end{align}
    \item $O_{10}^{\HS^n} =\lrbrace{(4,2), (4,3)}:$
    \begin{align}
\begin{split}
    \begin{array}{cccc}
       O_{(1, 1)}^{10}= \emptyset,\,  & O_{(1, 2)}^{10}= \emptyset ,\, & O_{(2, 1)}^{10}= \lrbrace{(4, 3)} ,\, &  O_{(2, 2)}^{10} =\lrbrace{(4, 2)} .\,\\
    \end{array}
\end{split}
\end{align}
\end{itemize}
\pagebreak


\end{document}